\newcommand{\stktokens}{\textsc{StkTokens}}
\newcommand\lau[1]{{\color{purple} \sf \footnotesize {LS: #1}}\\}
\newcommand\dominique[1]{{\color{purple} \sf \footnotesize {DD: #1}}\\}
\newcommand\lars[1]{{\color{purple} \sf \footnotesize {LB: #1}}\\}
\renewcommand\lau[1]{}
\renewcommand\dominique[1]{}
\renewcommand\lars[1]{}
\declaretheorem[numbered=yes,name=Lemma,qed=$\blacksquare$]{lemma}
\declaretheorem[numbered=yes,name=Theorem,qed=$\blacksquare$]{theorem}
\declaretheorem[numbered=yes,name=Definition,qed=$\blacksquare$]{definition}
\newcommand{\defeq}{\stackrel{\textit{\tiny{def}}}{=}}
\newcommand{\defbnf}{::=}
\newcommand{\sem}[1]{\left\llbracket #1 \right\rrbracket}
\newcommand{\dom}{\mathrm{dom}}
\newcommand{\powerset}[1]{\mathcal{P}(#1)}
\newcommand{\npair}[2][n]{\left(#1,#2\right)}
\newcommand{\nsubeq}[1][n]{\overset{#1}{\subseteq}}
\newcommand{\nsupeq}[1][n]{\overset{#1}{\supseteq}}
\newcommand{\nequal}[1][n]{\overset{#1}{=}}
\newcommand\subsetsim{\mathrel{\ooalign{\raise.2ex\hbox{$\subset$}\cr
      \hidewidth\lower.8ex\hbox{\scalebox{0.9}{$\sim$}}\hidewidth\cr}}}
\newcommand\supsetsim{\mathrel{\ooalign{\raise.2ex\hbox{$\supset$}\cr
      \hidewidth\lower.8ex\hbox{\scalebox{0.9}{$\sim$}}\hidewidth\cr}}}
\newcommand{\fun}{\rightarrow}
\newcommand{\parfun}{\rightharpoonup}
\newcommand{\monnefun}{\xrightarrow{\textit{\tiny{mon, ne}}}}
\newcommand{\blater}{\mathop{\blacktriangleright}}
\newcommand{\tand}{\text{ and }}
\newcommand{\tor}{\text{ or }}
\newcommand{\totherwise}{\text{otherwise }}
\newcommand{\untrusted}{\mathrm{untrusted}}
\newcommand{\trusted}{\mathrm{trusted}}
\newcommand{\trust}{\var{tst}}
\newcommand{\sconeq}{\mathrel{\src{\approx_{\mathrm{ctx}}}}}
\newcommand{\tconeq}{\mathrel{\trg{\approx_{\mathrm{ctx}}}}}
\newcommand{\typesetlr}[1]{\mathcal{#1}}
\newcommand{\gc}{\var{gc}}
\newcommand{\lre}[1][\square,\gc]{\typesetlr{E}^{#1}}
\newcommand{\lrexj}[1][\square,\gc]{\typesetlr{E}^{#1}_{\mathrm{xjmp}}}
\newcommand{\lrrg}[2][\square,\gc]{\typesetlr{R}^{#1}_{#2}}
\newcommand{\lrr}[1][\square,\gc]{\lrrg[#1]{\untrusted}}
\newcommand{\lrrtrusted}[1][\square,\gc]{\lrrg[#1]{\trusted}}
\newcommand{\lro}[1][\square,\gc]{\typesetlr{O}^{#1}}
\newcommand{\lrol}{\lro[\preceq,(\ta,\stkb,\gsigrets,\gsigcloss)]}
\newcommand{\lror}{\lro[\succeq,(\ta,\stkb,\gsigrets,\gsigcloss)]}
\newcommand{\lrvg}[2][\square,\gc]{\typesetlr{V}^{#1}_{#2}}
\newcommand{\lrv}[1][\square,\gc]{\lrvg[#1]{\untrusted}}
\newcommand{\lrvtrusted}[1][\square,\gc]{\lrvg[#1]{\trusted}}
\newcommand{\lrrs}{\typesetlr{R}}
\newcommand{\lrm}{\typesetlr{M}}
\newcommand{\lrcomp}[1][\square,\gc]{\typesetlr{C}^{#1}}
\newcommand{\lrec}[1][\square,\gc]{\typesetlr{EC}^{#1}}
\newcommand{\lrheap}[1][]{\typesetlr{H}^{#1}}
\newcommand{\stpair}[3][]{
\ifthenelse{\equal{#1}{}}
{\left(\src{#2_S},#3_T\right)}
{\left(\src{#2},#3\right)}}
\newcommand{\lrfree}[1][\gc]{\typesetlr{F}^{#1}}
\newcommand{\lrstk}[1][\gc]{\typesetlr{S}^{#1}}
\newcommand{\memSatGeneric}[4]{#2 :_{#1}^{#4}#3}
\newcommand{\memSat}[3][n]{\memSatGeneric{#1}{#2}{#3}{\gc}}
\newcommand{\memSatStack}[3][n]{\npair[#1]{(#2)} \in \lrstk(#3)}
\newcommand{\memSatFStack}[3][n]{\npair[#1]{(#2)} \in \lrfree(#3)}
\newcommand{\Wor}{\mathrm{Wor}}
\newcommand{\World}{\mathrm{World}}
\newcommand{\Worlds}{\mathrm{World}_\text{private stack}}
\newcommand{\Worldh}{\mathrm{World}_\mathrm{heap}}
\newcommand{\Worldfs}{\mathrm{World}_\text{free stack}}
\newcommand{\RegionName}{\mathrm{RegionName}}
\newcommand{\Regions}{\mathrm{Region}_\mathrm{spatial}}
\newcommand{\Regionh}{\mathrm{Region}_\mathrm{shared}}
\newcommand{\spatial}{\mathrm{spatial}}
\newcommand{\spatialo}{\mathrm{spatial\_owned}}
\newcommand{\pure}{\mathrm{pure}}
\newcommand{\revoked}{\mathrm{revoked}}
\newcommand{\UPred}[1]{\mathrm{UPred}(#1)}
\newcommand{\URel}[1]{\mathrm{URel}(#1)}
\newcommand{\future}{\sqsupseteq}
\newcommand{\pub}{\mathrm{pub}}
\newcommand{\stdreg}[3][\square]{\iota^{\mathrm{std},#3,#1}_{#2}}
\newcommand{\stareg}[2][\stpair{\ms}{\ms}]{\iota^{\mathrm{sta},#2}_{#1}}
\newcommand{\codereg}[2][\mathrm{code},\square]{\iota^{#1}_{#2}}
\newcommand{\spa}{\mathrm{s}}
\newcommand{\spao}{\mathrm{so}}
\newcommand{\pur}{\mathrm{p}}
\newcommand{\sourcecolor}{\color{blue}}
\newcommand{\src}[1]{{\sourcecolor #1}}
\newcommand{\commoncolor}[1]{\color{black}}
\newcommand{\com}[1]{{\commoncolor{} #1}}
\newcommand{\targetcolor}[1]{\color{red}}
\newcommand{\trg}[1]{{\targetcolor{} #1}}
\newcommand{\zinstr}[1]{\texttt{#1}}
\newcommand{\oneinstr}[2]{
  \ifthenelse{\equal{#2}{}}
  {\zinstr{#1}}
  {\zinstr{#1} \; #2}
}
\newcommand{\twoinstr}[3]{
  \ifthenelse{\equal{#2#3}{}}
  {\zinstr{#1}}
  {\zinstr{#1} \; #2 \; #3}
}
\newcommand{\threeinstr}[4]{
  \ifthenelse{\equal{#2#3#4}{}}
  {\zinstr{#1}}
  {\zinstr{#1} \; #2 \; #3 \; #4}
}
\newcommand{\fourinstr}[5]{
  \ifthenelse{\equal{#2#3#4#5}{}}
  {\zinstr{#1}}
  {\zinstr{#1} \; #2 \; #3 \; #4 \; #5}
}
\newcommand{\scall}[4][]{  
\ifthenelse{\equal{#3#4}{}}
  {\ensuremath{\zinstr{\src{call}}_{#1}^{#2}}}
  {\ensuremath{\zinstr{\src{call}}_{#1}^{#2} \; #3 \; #4}}
}
\newcommand{\offpc}{\var{off}_{\mathrm{pc}}}
\newcommand{\offsigma}{\var{off}_\sigma}
\newcommand{\tfail}{\zinstr{\com{fail}}}
\newcommand{\thalt}{\zinstr{\com{halt}}}
\newcommand{\tjmp}[1]{\oneinstr{\com{jmp}}{#1}}
\newcommand{\tsetatob}[1]{\oneinstr{\com{seta2b}}{#1}}
\newcommand{\tjnz}[2]{\twoinstr{\com{jnz}}{#1}{#2}}
\newcommand{\tisptr}[2]{\twoinstr{\com{gettype}}{#1}{#2}}
\newcommand{\tgeta}[2]{\twoinstr{\com{geta}}{#1}{#2}}
\newcommand{\tgetb}[2]{\twoinstr{\com{getb}}{#1}{#2}}
\newcommand{\tgete}[2]{\twoinstr{\com{gete}}{#1}{#2}}
\newcommand{\tgetp}[2]{\twoinstr{\com{getp}}{#1}{#2}}
\newcommand{\tgetlin}[2]{\twoinstr{\com{getl}}{#1}{#2}}
\newcommand{\tmove}[2]{\twoinstr{\com{move}}{#1}{#2}}
\newcommand{\tstore}[2]{\twoinstr{\com{store}}{#1}{#2}}
\newcommand{\tload}[2]{\twoinstr{\com{load}}{#1}{#2}}
\newcommand{\tcca}[2]{\twoinstr{\com{cca}}{#1}{#2}}
\newcommand{\txjmp}[2]{\twoinstr{\com{xjmp}}{#1}{#2}}
\newcommand{\trestrict}[2]{\twoinstr{\com{restrict}}{#1}{#2}}
\newcommand{\tcseal}[2]{\twoinstr{\com{cseal}}{#1}{#2}}
\newcommand{\tsplice}[3]{\threeinstr{\com{splice}}{#1}{#2}{#3}}
\newcommand{\tlt}[3]{\threeinstr{\com{lt}}{#1}{#2}{#3}}
\newcommand{\tplus}[3]{\threeinstr{\com{plus}}{#1}{#2}{#3}}
\newcommand{\tminus}[3]{\threeinstr{\com{minus}}{#1}{#2}{#3}}
\newcommand{\tsplit}[4]{\fourinstr{\com{split}}{#1}{#2}{#3}{#4}}
\newcommand{\nats}{\mathbb{N}}
\newcommand{\ints}{\mathbb{Z}}
\newcommand{\ta}[1][]{T_{A#1}}
\newcommand{\update}[2]{[#1 \mapsto #2]}
\newcommand{\updReg}[2]{\update{\reg.#1}{#2}}
\newcommand{\shareddom}[1]{\mathrm{#1}}
\newcommand{\RegName}{\shareddom{RegisterName}}
\newcommand{\Addr}{\shareddom{Addr}}
\newcommand{\Seal}{\shareddom{Seal}}
\newcommand{\Symbol}{\shareddom{Symbol}}
\newcommand{\Perm}{\shareddom{Perm}}
\newcommand{\Caps}{\shareddom{Cap}}
\newcommand{\SealableCaps}{\shareddom{SealableCap}}
\newcommand{\Word}{\shareddom{Word}}
\newcommand{\Instr}{\shareddom{Instr}}
\newcommand{\Mem}{\shareddom{Memory}}
\newcommand{\Reg}{\shareddom{RegisterFile}}
\newcommand{\Conf}{\shareddom{Conf}}
\newcommand{\ExecConf}{\shareddom{ExecConf}}
\newcommand{\Linear}{\shareddom{Linear}}
\newcommand{\MemSeg}{\shareddom{MemorySegment}}
\newcommand{\StkFrame}{\shareddom{StackFrame}}
\newcommand{\Stack}{\shareddom{Stack}}
\newcommand{\scbnf}{\shareddom{sc}}
\newcommand{\cbnf}{\shareddom{c}}
\newcommand{\permbnf}{\shareddom{perm}}
\newcommand{\addrbnf}{\shareddom{a}}
\newcommand{\basebnf}{\shareddom{base}}
\newcommand{\aendbnf}{\shareddom{end}}
\newcommand{\rbnf}{\shareddom{r}}
\newcommand{\linbnf}{\shareddom{l}}
\newcommand{\sealbasebnf}{\sigma_\shareddom{base}}
\newcommand{\sealendbnf}{\sigma_\shareddom{end}}
\newcommand{\sstk}{\shareddom{stk}}
\newcommand{\sstkframe}{\shareddom{frame}}
\newcommand{\perm}{\var{perm}}
\newcommand{\lin}{\var{l}}
\newcommand{\addr}{\shareddom{a}}
\newcommand{\stkptr}[1]{\mathrm{stack\text{-}ptr}(#1)}
\newcommand{\retptrd}{\mathrm{ret\text{-}ptr\text{-}data}}
\newcommand{\retptrc}{\mathrm{ret\text{-}ptr\text{-}code}}
\newcommand{\seal}[1]{\shareddom{seal}(#1)}
\newcommand{\sealed}[1]{\shareddom{sealed}(#1)}
\newcommand{\failed}{\mathrm{failed}}
\newcommand{\halted}{\mathrm{halted}}
\newcommand{\targetdom}[1]{\mathrm{#1}}
\newcommand{\tRegName}{\targetdom{RegisterName}}
\newcommand{\context}{\mathscr{C}}
\newcommand{\plug}[2]{#1[#2]}
\newcommand{\step}[1][]{\rightarrow^{#1}}
\newcommand{\nstep}[2][n]{\step[#2]_{#1}}
\newcommand{\term}[1][-]{{\Downarrow_{#1}}}
\newcommand{\sterm}[2][-]{{\Downarrow_{#1}^{#2}}}
\newcommand{\var}[1]{\mathit{#1}}
\newcommand{\free}{\var{free}}
\newcommand{\rn}{\var{rn}}
\newcommand{\reg}[1][]{
\ifthenelse{\equal{#1}{}}{
\var{reg}
}{
\var{reg}^{(#1)}}}
\newcommand{\mem}{\var{mem}}
\newcommand{\ms}{\var{ms}}
\newcommand{\pc}{\var{pc}}
\newcommand{\stk}{\var{stk}}
\newcommand{\stkf}{\stk_{\var{frame}}}
\newcommand{\code}{\var{code}}
\newcommand{\priv}{\var{priv}}
\newcommand{\opc}{\var{opc}}
\newcommand{\vsc}{\var{sc}}
\newcommand{\cb}{\vsc}
\newcommand{\baddr}{\var{b}}
\newcommand{\eaddr}{\var{e}}
\newcommand{\aaddr}{\var{a}}
\newcommand{\sigret}[1][]{
  \ifthenelse{\equal{#1}{}}{
  \sigma_{\mathrm{ret}}
  }{
  \sigma_{\mathrm{ret},#1}
  }
}
\newcommand{\sigrets}[1][]{
  \ifthenelse{\equal{#1}{}}{
  \overline{\sigma_{\mathrm{ret}}}
  }{
  \overline{\sigma_{\mathrm{ret},#1}}
  }
}
\newcommand{\sigclos}[1][]{
  \ifthenelse{\equal{#1}{}}{
  \sigma_{\mathrm{clos}}
  }{
  \sigma_{\mathrm{clos},#1}
  }
}
\newcommand{\gsigcloss}{
  \overline{\sigma_{\mathrm{glob\_clos}}}}
\newcommand{\gsigrets}{
  \overline{\sigma_{\mathrm{glob\_ret}}}}
\newcommand{\sigcloss}[1][]{
  \ifthenelse{\equal{#1}{}}{
  \overline{\sigma_{\mathrm{clos}}}
  }{
  \overline{\sigma_{\mathrm{clos},#1}}
  }
}
\newcommand{\mscode}[1][]{
  \ifthenelse{\equal{#1}{}}{\ms_{\mathrm{code}}}
  {\ms_{\mathrm{code},#1}}
}
\newcommand{\mspad}[1][]{
  \ifthenelse{\equal{#1}{}}{\ms_{\mathrm{pad}}}
  {\ms_{\mathrm{pad},#1}}
}
\newcommand{\msdata}[1][]{
  \ifthenelse{\equal{#1}{}}{\ms_{\mathrm{data}}}
  {\ms_{\mathrm{data},#1}}
}
\newcommand{\constant}[1]{\mathrm{#1}}
\newcommand{\calllen}{\constant{call\_len}}
\newcommand{\stkb}{\constant{stk\_base}}
\newcommand{\retoffset}{\constant{ret\_pt\_offset}}
\newcommand{\pcreg}{\mathrm{pc}}
\newcommand{\rstk}{\mathrm{r}_\mathrm{stk}}
\newcommand{\rretc}{\mathrm{r}_\mathrm{ret code}}
\newcommand{\rretd}{\mathrm{r}_\mathrm{ret data}}
\newcommand{\rdata}{\mathrm{r}_\mathrm{data}}
\newcommand{\rtmp}[1]{\mathrm{r}_\mathrm{t#1}}
\newlist{enumproof}{enumerate}{10}
\setlist[enumproof]{label*=\arabic*.}
\newcommand{\plainlinearity}[1]{\mathrm{#1}}
\newcommand{\linear}{\plainlinearity{linear}}
\newcommand{\normal}{\plainlinearity{normal}}
\newcommand{\plainperm}[1]{\textsc{#1}}
\newcommand{\rwx}{\plainperm{rwx}}
\newcommand{\rx}{\plainperm{rx}}
\newcommand{\rw}{\plainperm{rw}}
\newcommand{\readonly}{\plainperm{r}}
\newcommand{\ro}{\readonly}
\newcommand{\noperm}{\plainperm{0}}
\newcommand{\comp}{\var{comp}}
\newcommand{\pwheap}[1][W]{#1.\mathrm{heap}}
\newcommand{\pwfree}[1][W]{#1.\mathrm{free}}
\newcommand{\pwpriv}[1][W]{#1.\mathrm{priv}}
\newcommand{\pregion}[1]{#1.\mathrm{region}}
\newcommand{\prv}[1]{#1.\mathrm{v}}
\newcommand{\erase}[2]{\lfloor #1 \rfloor_{\{#2\}}}
\newcommand{\plainfun}[2]{
  \ifthenelse{\equal{#2}{}}
  {\mathit{#1}}
  {\mathit{#1}(#2)}
}
\newcommand{\activeReg}[1]{\plainfun{active}{#1}}
\newcommand{\addressable}[1]{\plainfun{addressable}{#1}}
\newcommand{\callCond}[1]{\plainfun{callCondition}{#1}}
\newcommand{\decInstr}[1]{\plainfun{decodeInstruction}{#1}}
\newcommand{\decPerm}[1]{\plainfun{decodePerm}{#1}}
\newcommand{\encInstr}[1]{\plainfun{encodeInstruction}{#1}}
\newcommand{\encPerm}[1]{\plainfun{encocePerm}{#1}}
\newcommand{\encLin}[1]{\plainfun{encoceLin}{#1}}
\newcommand{\encType}[1]{\plainfun{encodeType}{#1}}
\newcommand{\exec}[1]{\plainfun{executable}{#1}}
\newcommand{\execCond}[2][\square,\gc]{\plainfun{executeCondition^{#1}}{#2}}
\newcommand{\isLinear}[1]{\plainfun{isLinear}{#1}}
\newcommand{\linCons}[1]{\plainfun{linearityConstraint}{#1}}
\newcommand{\linConsPerm}[2]{\plainfun{linearityConstraintPerm}{#1,#2}}
\newcommand{\nonExec}[1]{\plainfun{nonExecutable}{#1}}
\newcommand{\nonLinear}[1]{\plainfun{nonLinear}{#1}}
\newcommand{\nonZero}[1]{\plainfun{nonZero}{#1}}
\newcommand{\range}[1]{\plainfun{range}{#1}}
\newcommand{\readAllowed}[1]{\plainfun{readAllowed}{#1}}
\newcommand{\readCond}[2][\square,\gc]{\plainfun{readCondition^{#1}}{#2}}
\newcommand{\stackReadCond}[2][\square,\gc]{\plainfun{stackReadCondition^{#1}}{#2}}
\newcommand{\xReadCond}[2][\square,\gc]{\plainfun{readXCondition^{#1}}{#2}}
\newcommand{\writeCond}[2][\square,\gc]{\plainfun{writeCondition^{#1}}{#2}}
\newcommand{\stackWriteCond}[2][\square,\gc]{\plainfun{stackWriteCondition^{#1}}{#2}}
\newcommand{\updPcAddr}[1]{\plainfun{updatePc}{#1}}
\newcommand{\withinBounds}[1]{\plainfun{withinBounds}{#1}}
\newcommand{\writeAllowed}[1]{\plainfun{writeAllowed}{#1}}
\newcommand{\xjumpResult}[3]{\plainfun{xjumpResult}{#1,#2,#3}}
\newcommand{\purePart}[1]{\plainfun{purePart}{#1}}
\title{\stktokens{}: Enforcing Well-bracketed Control Flow and Stack Encapsulation using Linear Capabilities\\
  Technical Report with Proofs and Details}
\author{Lau Skorstengaard \and Dominique Devriese \and Lars Birkedal}
\begin{document}
\maketitle

This document is a technical report accompanying a paper by the same title and authors, published at POPL 2019.
It contains proofs and details that were omitted from the paper for space and presentation reasons.

\tableofcontents
Disclaimer: While the proofs in this technical report are done, the text can be lacklustre and from time to time out of date.
We will make the technical report up to date as soon as possible.
Further, the use of coloring in the later parts of this document has not always been done with as much care as it ought to. This means that not all source specific things are colored blue, but it should be evident from the context that they belong to the source language.

\section{The two capability machines}
\subsection{Domains}
\label{sec:domains}

\[
  \begin{array}{rrcl}
   \addrbnf,\basebnf \in & \Addr & \defeq & \nats \\
    \sealbasebnf, \sigma \in & \Seal & \defeq & \nats \\
    w \in &\Word & \defeq & \ints \uplus \Caps\\
    \permbnf \in& \Perm & \defbnf & \dots \\
    &\linbnf & \defbnf & \linear \mid \normal \\
    &\aendbnf & \in & \Addr \uplus \{\infty \} \\
    &\sealendbnf & \in & \Seal \uplus \{\infty \} \\
    \scbnf \in &\SealableCaps&\defbnf & ((\permbnf,\linbnf),\basebnf,\aendbnf,\addrbnf) \mid \seal{\sealbasebnf,\sealendbnf,\sigma}\\
    & & & {\sourcecolor{} \mid \stkptr{\permbnf,\basebnf,\aendbnf,\addrbnf}}\\ 
    & & & {\sourcecolor{} \; \mid \retptrd(\basebnf,\aendbnf) \mid \retptrc(\basebnf,\aendbnf,\addrbnf)}\\
    \cbnf \in&\Caps& \defbnf &  \scbnf \mid \sealed{\sigma,\scbnf}\\ 
    \rbnf \in & \RegName & \defbnf &\pcreg \mid \rretd \mid \rretc \mid \rstk \mid \rdata \mid \rtmp{1} \mid \rtmp{2} \mid \dots \\
    &\Reg & \defeq & \RegName \fun \Word\\
    &\Mem & \defeq & \Addr \fun \Word \\
    &\MemSeg & \defeq & \Addr \parfun \Word \\
    {\sourcecolor \sstkframe \in} & {\sourcecolor \StkFrame} & {\sourcecolor \defeq} & {\sourcecolor \Addr \times \MemSeg}\\
    \src{\sstk \in}& \src{ \Stack} & \src{ \defeq} & \src{ \StkFrame^*} \\
    \Phi \in & \ExecConf & \defeq & \Mem \times \Reg {\sourcecolor{} \; \times \; \Stack \times \MemSeg }\\
    &\Conf & \defeq & \ExecConf \uplus \{\failed\} \uplus \{\halted\}
  \end{array}
\]
The target language domains are all the non blue parts in the above. The source language domains are the black and blue parts in the above. Further
\begin{itemize}
\item $\linbnf$ defines domain $\Linear$
\item $\scbnf$ defines domain $\SealableCaps$
\item $\cbnf$ defines domain $\Caps$
\item $\rbnf$ defines the finite set $\RegName$. 
\item $\Perm$ is defined as the set of permissions in Figure~\ref{fig:perm-hier}.
\end{itemize}

In the source language, $\src{\Stack}$ is a call stack that contains the data for each call. The call stack consists of a number of $\src{\StkFrame}$'s that contains 1) the old pc and 2) caller's private stack.

 In both languages, the base address of the stack is known (the stack grows downwards in memory, so the base address marks the end of the stack). In the target language, this will be the base address of some linear capability. The base address can be any address on the machine, but it will have to remain the same during all of execution. We write this constant as
\[
  \stkb
\]

\subsubsection{Useful definitions}
\begin{definition}
  For a capability $c=((\_,\_),\baddr,\eaddr,\_)$ we say it has range $[\baddr,\eaddr]$ and we define
  \[
    \range{c} = [\baddr,\eaddr]
  \]
  Similarly for seals and stack pointers:
  \[
    \range{\seal{\sigma_\baddr,\sigma_\eaddr,\_}} = [\sigma_\baddr,\sigma_\eaddr]
  \]
  and
  \[
    \src{\range{\stkptr{\_,\baddr,\eaddr,\_}} = [\baddr,\eaddr]}
  \]
\end{definition}

\subsection{Syntax}
\label{sec:syntax}
The target machine is a simple capability machine with memory capabilities and sealed capabilities\footnote{In previous work, we used enter capabilities but for this the complexity introduced by mixing writable and executable memory is difficult to handle.} (inspired by CHERI). The syntax of the instructions of the target machine is defined as follows:
\[
\begin{array}{rcl}
n &\in & \ints \\
\com{r} &\in &  \tRegName \\
\com{\rn} &\defbnf &  \com{r} \mid n \\
\com{i} &\defbnf & \tfail \mid \thalt \mid \tjmp{\com{r}} \mid \tjnz{\com{r}}{\com{\rn}} \mid \tisptr{\com{r}}{\com{r}} \mid \tgeta{\com{r}}{\com{r}} \mid \tgetb{\com{r}}{\com{r}} \mid \\
      & &  \tgete{\com{r}}{\com{r}}\mid \tgetp{\com{r}}{\com{r}} \mid \tgetlin{\com{r}}{\com{r}} \mid \tmove{\com{r}}{\com{\rn}} \mid \tstore{\com{r}}{\com{r}} \mid\\
      & &  \tload{\com{r}}{\com{r}} \mid \tcca{\com{r}}{\com{\rn}} \mid \trestrict{\com{r}}{\com{\rn}} \mid \tlt{\com{r}}{\com{\rn}}{\com{\rn}} \mid \\
  & & \tplus{\com{r}}{\com{\rn}}{\com{\rn}} \mid \tminus{\com{r}}{\com{\rn}}{\com{\rn}} \mid \tsetatob{\com{r}} \mid \txjmp{\com{r}}{\com{r}} \mid \tcseal{\com{r}}{\com{r}} \mid \\ 
      & &   \tsplit{\com{r}}{\com{r}}{\com{r}}{\com{\rn}} \mid\tsplice{\com{r}}{\com{r}}{\com{r}} 
\end{array}
\]
$i$ defines the set $\Instr$


The source machine is also a capability machine with memory capabilities and sealed capabilities. Unlike the target machine, the source machine has a built in stack along with special stack and return tokens used in place of the actual capabilities. The syntax of the source machine language is as follows:
\[
  \begin{array}{rcl}
    \offpc,\offsigma & \in & \nats \\
    \src{i} & \defbnf &  \com{i} \mid \scall{\offpc,\offsigma}{r}{r}
  \end{array}
\]
There is one syntactic difference between the source language and the target language, namely the target language has an extra instruction in the $\scall{\offpc,\offsigma}{}{}$ instruction (the $o$ refers to the seal it uses, namely the offset in the seals made available by linking). 

The source machine allows for variable length instructions which we utilise for $\scall{\offpc,\offsigma}{}{}$. In other words, when $\scall{\offpc,\offsigma}{}{}$ is decoded, a series of addresses in memory must contain what corresponds to the call instruction (and be within the range of the pc capability). This is described in more detail when we present the decoding function.

\subsection{Permissions}
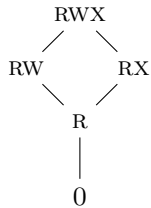
\begin{figure}[!h]
  \centering
  \begin{tikzpicture}[main node/.style={}]
    \node[main node] (rwx) {$\rwx$};
    \node[main node] (rx) [below right of=rwx] {$\rx$};

    \node[main node] (rw) [below left of=rwx,] {$\rw$};
    \node[main node] (r) [below right of=rw] {$\readonly$};
    \node[main node] (0) [below of=r] {$\noperm$};

    \path[every node/.style={font=\sffamily\small}]

    (rw) edge (r)
    (r) edge (0)

    (rwx) edge (rx)

    (rw) edge (rwx)
    (r) edge (rx);
  \end{tikzpicture}

  \caption{Permission hierarchy}
  \label{fig:perm-hier}
\end{figure}
\lau{13-09-2017: Note that most of the discussion below is now captured in Section~\ref{sec:linear-cap}}
\lau{01-09-2017: For now, I will add the linear capabilities such that a linear capability stays linear. It does, however, seem like it would be okay to have linear capabilities become non-linear (I at least don't see anything that would break down completely). One draw back would be that one could ``ruin'' the stack capability by making it non-linear.}
\dominique{7-9-2017: Err... you mean that we could allow non-linear capabilities to be made linear, right?  We definitely do not want to allow the stack or return capability to be made non-linear, as we want to prevent it from being aliased.}
\lau{07-09-2017: I actually did mean linear to non-linear (non-linear to linear does not make sense because the non-linear capability may have aliases). The operation that makes a linear capability non-linear would have to clear said capability. If a linear capability has been made non-linear, then someone who expects a linear capability can observe this difference and behave accordingly, i.e., fail the execution. It will, however, remain an invariant that there are no aliases for linear capabilities.}
\lau{07-09-2017: I had forgotten that return capabilities probably need to be linear. This would call for a ``load from offset'' operation to be able to load an in code stored seal (see the figure I drew about linking/seals).}

We assume functions $\decPerm{}$ and $\encPerm{}$.
\lau{15-09-2017: TODO, write more about these. Should be like in the local cap setting.}

\subsection{Operational Semantics}
The source machine is parameterized with a set of trusted addresses $\ta$. $\ta$ are the only addresses from which the $\scall{}{}{}$ will be interpreted. The source machine represents a virtual intermediate machine which we use to argue well-bracketedness and local state encapsulation. It is not meant to be the machine that the actual code is executed on. Further, it is the well-bracketedness and local state encapsulation of the compiled code, not the context we are concerned with. On the other hand, the target machine is not parameterized with $\ta$ as all the instructions on the target machine is available to the adversary.

\subsubsection{Notes}
Generally:
\begin{itemize}
\item Linear capabilities are cleared when they move around in memory.
\end{itemize}

Source language:
\begin{itemize}
\item Variable length instructions that match the length of the compiled instructions
  \begin{itemize}
  \item This is needed for correctness. 
  \item It is only used for the call instruction.
  \end{itemize}
\end{itemize}

\dominique{27-09-2017: I think we need to do something about tail calls in the source language?}
\lau{27-09-2017: In what sense? Do you want to allow them (I do not think they are allowed at the moment?)}
\dominique{8-11-2017: Yes, I would like to support them.  I think it would be a strength of our calling convention that we can support them (as I expect we can) and it would be good to demonstrate this.}

Target language:
\begin{itemize}
\item 
\end{itemize}

\subsubsection{Helpful functions, sets, and conventions}

\[
  \pi_\baddr(\vsc) =
  \begin{cases}
    \baddr & \text{if $\vsc = ((\_,\_),\baddr,\_,\_)$} \\ 
    \pi_\baddr(c) & \text{if $\vsc = \sealed{\_,c}$} \\ 
    \src{\baddr} & \src{\text{if $\vsc = \stkptr{\_,\baddr,\_,\_}$}} \\ 
    \src{\baddr} & \src{\text{if $\vsc = \retptrd(\baddr,\_)$}} \\ 
    \src{\baddr} & \src{\text{if $\vsc = \retptrc(\baddr,\_,\_)$}}
  \end{cases}
\]
\[
  \pi_\eaddr(\vsc) =
  \begin{cases}
    \eaddr & \text{if $\vsc = ((\_,\_),\_,\eaddr,\_)$} \\ 
    \pi_\eaddr(c) & \text{if $\vsc = \sealed{\_,c}$} \\ 
    \src{\eaddr} & \src{\text{if $\vsc = \stkptr{\_,\_,\eaddr,\_}$}} \\ 
    \src{\eaddr} & \src{\text{if $\vsc = \retptrd(\_,\eaddr)$}} \\ 
    \src{\eaddr} & \src{\text{if $\vsc = \retptrc(\_,\eaddr,\_)$}}
  \end{cases}
\]

\[
  \pi_\lin(\vsc) = 
  \begin{cases}
    \lin & \text{if $\vsc = ((\_,\_),\_,\eaddr,\_)$} \\ 
    \pi_\lin(c) & \text{if $\vsc = \sealed{\_,c}$} \\ 
    \src{\linear} & \src{\text{if $\vsc = \stkptr{\_,\_,\eaddr,\_}$}} \\ 
    \src{\linear} & \src{\text{if $\vsc = \retptrd(\_,\_)$}} \\ 
    \src{\normal} & \src{\text{if $\vsc = \retptrc(\_,\_,\_)$}}
  \end{cases}
\]

\dominique{4-12-2017: what about $\pi_a$?}
\lau{12-12-2017: I have added them as I needed them in the LR. I have not needed $\pi_a$ yet.}
\[
  \updPcAddr{\Phi} \defeq 
  \begin{cases}
    \Phi\update{\pcreg}{((\perm,\lin),\baddr,\eaddr,\aaddr+1)} & \text{if $\Phi(\pcreg) = ((\perm,\lin),\baddr,\eaddr,\aaddr)$}\\
    \failed & \totherwise
  \end{cases}
\]

\[
  \readAllowed{} \defeq \{ \rwx,\rw,\rx,\ro \}
\]

\[
  \writeAllowed{} \defeq \{ \rwx, \rw \}
\]

\[
  \isLinear{c} \defeq
  \begin{cases}
    \top & 
    \arraycolsep=0pt
    \begin{array}[t]{l}
      c = ((\_,\linear),\_,\_,\_) \tor\\
      c = \sealed{\_,\cb'} \tand \isLinear{\cb'} 
    \end{array}\\
    \src{\top} & 
    \sourcecolor\left.
    \arraycolsep=0pt
    \begin{array}[t]{l}
      c = \stkptr{\_,\_,\_,\_}\\
      c = \retptrd(\_,\_)\\
    \end{array}\right.\\
    \bot & \totherwise
  \end{cases}
\]

We now define 

\[
  \nonLinear{\cb} \defeq \neg \isLinear{\cb}
\]

\[
  \linCons{w} \defeq
  \begin{cases}
    0 & \isLinear{w} \\
    w & \totherwise
  \end{cases}
\]
\[
  \linConsPerm{\perm}{w} \defeq
  \begin{cases}
    \perm \in \writeAllowed{} & \isLinear{w} \\
    \mathrm{true} & \totherwise
  \end{cases}
\]

\[
  \exec{\cb} \defeq 
      \cb = ((\perm, \_), \_, \_, \_) \tand \perm \in \{\rwx,\rx\} 
\]

\[
  \nonExec{\cb} \defeq \neg \exec{\cb}
\]

\[
  \withinBounds{\vsc} \defeq 
  \begin{cases}
    \baddr \leq \aaddr \leq \eaddr & 
    \arraycolsep=0pt
    \begin{array}[t]{l}
      \vsc = ((\_,\_),\baddr,\eaddr,\aaddr) \src{\tor}\\
      \src{\vsc = \stkptr{\_,\baddr,\eaddr,\aaddr}}
    \end{array}\\
    \sigma_\baddr \leq \sigma_\aaddr \leq \sigma_\eaddr & \vsc = \seal{\sigma_\baddr,\sigma_\eaddr,\sigma} \\
    \bot & \totherwise
  \end{cases}
\]

\[
  \nonZero{w} \defeq
  \begin{cases}
    \bot & w \in \ints \tand w = 0 \\
    \top & \totherwise
  \end{cases}
\]

For convenience, we introduce the following notation:
\[
  \begin{array}{rcl}
    \Phi(r) & \defeq & \Phi.\reg(r)
  \end{array}
\]
where $r\in \RegName$.

For $\rn$ which can be a register or an integer, we take
\[
  \Phi(\rn) = n
\]
to mean
\[
  \text{either $n = \rn$ or $n = \Phi.\reg(\rn)$ and in either case $n \in \ints$}
\]

\subsubsection{Step relations}
\paragraph{Decode and encode functions}
We assume functions $\decInstr{} : \Word \fun \Instr$ and $\encInstr{} : \Instr \fun \ints$ where $\Instr$ is the set of target level instructions.

The $\decInstr{}$ function must be surjective and injective for all non-$\tfail$ instructions.
For any $c \in \Caps$, we have that $\decInstr{c} = \tfail$.
The $\encInstr{}$ function must be injective.
Further $\decInstr{}$ must be the left inverse of $\encInstr{}$ that is for all $i \in \Instr$
\[
  \decInstr{}(\encInstr{i}) = i
\]

These functions are used for both the target and source level machine. When we write instructions in places where words are required, we will assume that $\encInstr{}$ is implicit.

\paragraph{Step relation}
The instruction $\scall{\offpc,\offsigma}{}{}$ has length $\calllen$ which means that it does not fit in one memory address. In fact, $\scall{\offpc,\offsigma}{}{}$ should be seen as a different way of interpreting a series of instruction rather than an instruction on its own. We therefore introduce $\scall[0]{\offpc,\offsigma}{r_1}{r_2},\dots,\scall[\calllen-1]{\offpc,\offsigma}{r_1}{r_2}$ as aliases for the instructions that constitute $\scall{\offpc,\offsigma}{r_1}{r_2}$ (see Paragraph~\ref{par:call-impl} for details). We define the following condition that indicates that $\aaddr$ is the first address of a $\scall{\offpc,\offsigma}{r_1}{r_2}$ instruction in the configuration $\Phi$:
\[
  \sourcecolor\left.
    \callCond{\Phi,r_1,r_2,\offpc,\offsigma,\aaddr} = \left\{
      \begin{array}{l}
        \Phi.\mem(\aaddr) = \scall[0]{\offpc,\offsigma}{r_1}{r_2} \tand\\
        \vdots \\
        \Phi.\mem(\addr+\calllen-1) = \scall[\calllen-1]{\offpc,\offsigma}{r_1}{r_2}
      \end{array}
      \right.
  \right.
\]

We use the following step relation:
\lau{RESOLVED: Add $\ta$ to the source step.}
\lau{RESOLVED: Add $\stkb$ to source term and everything else where it is necessary. Consider how $\stkb$ and $\ta$ can be added without passing around these values in LR and source semantics (solution to parameterize both with it, so they need to be provided only once.)}
\begin{align*}
  \src{\Phi} & \; \src{\step[\src{\ta,\stkb}] \sem{\scall{\offpc,\offsigma}{r_1}{r_2}}(\Phi)} &  &\sourcecolor\left.\arraycolsep=0pt
                                                  \begin{array}[t]{l}
                                                    \text{if }\Phi(\pcreg) 
= ((\perm,\_),\baddr,\eaddr,\aaddr) \tand \\
                                                    \callCond{\Phi,r_1,r_2,\offpc,\offsigma,\aaddr} \tand\\
                                                    \lbrack \aaddr ,\aaddr + \calllen - 1\rbrack \subseteq \ta \tand \\
                                                    \lbrack\aaddr,\aaddr + \calllen-1\rbrack \subseteq [\baddr,\eaddr] \tand \\
                                                    \exec{\Phi(\pcreg)}
                                                  \end{array}\right.\\
  \Phi & \step[\src{\ta,\stkb}] \sem{\decInstr{\Phi.\mem(\aaddr)}}(\Phi) & &\left.\arraycolsep=0pt
                                                  \begin{array}[t]{l}
                                                    \text{if }\Phi(\pcreg) = ((\perm,\_),\baddr,\eaddr,\aaddr) \tand \\
                                                    \src{(\neg\callCond{\Phi,r_1,r_2,\offpc,\offsigma,\aaddr} \tor}\\
                                                    \src{\eaddr < \aaddr + \calllen-1) \tand} \\
                                                    \withinBounds{\Phi(\pcreg)} \tand \\
                                                    \exec{\Phi(\pcreg)}
                                                  \end{array}\right.\\
  \Phi& \step[\src{\ta,\stkb}] \failed & & \totherwise
\end{align*}
On the source machine, the instruction interpretation also takes $\gc$, the global constants.
It does, however, just pass it around and never makes changes to it, so we will leave it implicit.

\dominique{Note: The above does not allow executing code that is on the stack in the source machine, while this is in principle allowed in the target.
This is probably fine if we make sure a stack pointer can never be executable.}

\paragraph{Terminating computations}
We use the following notation to write that a configuration $\Phi$ successfully terminates:
\[
  \Phi\sterm[i]{\src{\ta,\stkb}} \defeq \Phi \nstep[i]{\src{\ta,\stkb}} \halted
\]
if we just know $\Phi$ terminates in some number of steps, then we write
\[
  \Phi\sterm{\src{\ta,\stkb}} \defeq \exists i \ldotp \sterm[i]{\src{\ta,\stkb}}
\]

\paragraph{Call implementation}
\label{par:call-impl}
This paragraph contains the implementation of $\scall{\offpc,\offsigma}{}{}$. That is each of the instructions in the implementation corresponds to $\scall[1]{\offpc,\offsigma}{r_1}{r_2} \dots \scall[\calllen-1]{\offpc,\offsigma}{r_1}{r_2}$, respectively.

\lau{16-10-2017: Implementation of $\scall{\offpc,\offsigma}{}{}$ may need registers for temporary registers - we should update the semantics accordingly.}
\[
  \begin{array}{l}
    \text{// push 42 on the stack (so it is non-empty).}\\
    \tmove{\rtmp{1}}{42}\\
    \tstore{\rstk}{\rtmp{1}}\\
    \tcca{\rstk}{(-1)}\\
    \text{// split the stack at its current address - rstk done.}\\
    \tgeta{\rtmp{1}}{\rstk}\\
    \tsplit{\rstk}{\rretd}{\rstk}{\rtmp{1}}\\
    \text{// load the seal for the return pointer through the pc capability.}\\
    \tmove{\rtmp{1}}{\pcreg}\\
    \tcca{\rtmp{1}}{(\offpc - 5)} \text{ // $\offpc$ is the offset from pc to the location of the seal for this code. \footnote{-5 is the offset from the previous instruction where the $\pcreg$ was moved to $\rtmp{1}$ to the first instruction of the call implementation.}}\\
    \tload{\rtmp{1}}{\rtmp{1}}\\
    \tcca{\rtmp{1}}{\offsigma}\text{ // $\offsigma$ is the offset for the seal used by this call.}\\
    \text{// seal the used stack frame as the data part of the return pointer pair.}\\
    \tcseal{\rretd}{\rtmp{1}}\\
    \text{// obtain the code part of the return pointer pair and seal it too.}\\
    \tmove{\rretc}{\pcreg}\\
    \tcca{\rretc}{5} \text{ //magic number is offset to return code}\\
    \tcseal{\rretc}{\rtmp{1}}\\
    \text{// now clear temporary register and jump to the adversary.}\\
    \tmove{\rtmp{1}}{0}\\
    \txjmp{r_1}{r_2}\\

    \text{// the following is the return code}\\
    \text{// check that the stack pointer is the same we handed out.}\\
    \tgetb{\rtmp{1}}{\rstk}\\
    \tminus{\rtmp{1}}{\rtmp{1}}{\stkb} \text{ //$\stkb$ is the stack base constant} \\
    \tmove{\rtmp{2}}{\pcreg}\\
    \tcca{\rtmp{2}}{5} \text{ //magic number is the offset to fail} \\
    \tjnz{\rtmp{2}}{\rtmp{1}} \\
    \tcca{\rtmp{2}}{1} \text{ //magic number is the offset to after fail} \\
    \tjmp{\rtmp{2}} \\
    \tfail \\
    \text{// join our stored private stack frame with the rest of the stack (this also finishes the stack pointer check).}\\
    \tsplice{\rstk}{\rstk}{\rdata} \\
    \text{// pop the magic number 42}\\
    \tcca{\rstk}{1}\\
    \text{// clear temporary registers used}\\
    \tmove{\rtmp{2}}{0}\\
    \text{// continue program after invocation.}
  \end{array}
\]
\dominique{Return code could be simplified if we had jz in addition to jnz.}
The call code does the following:
\begin{itemize}
\item Store 42 (it could be anything) to the stack (this ensures the stack is non-empty), and decrement the pointer according to convention.
\item Get the current address of the stack pointer and split according to it.
\item Retrieve the seal of the program.
\item cca the seal, so the seal to be used is active 
\item seal our private part of the stack capability.
\item Move the pc out of the pc register, adjust it to point to the first address of the return code, and seal it.
\item clear the temporary register.
\item cross jump to the two specified registers.
\item Upon return:
  \begin{itemize}
  \item get the base of the stack and check that it matches up with the global base of the stack.If not, fail.
  \item splice the returned stack pointer and the stack pointer for our private stack.
  \item adjust stack pointer to first empty address (the address with the end address is considered free)  
  \item clear the temporary registers. (Note that $\rtmp{1}$ is not cleared with 0 as it already contains 0 after the $\stkb$ check. If it didn't contain 0, then the execution would have failed.)

  \end{itemize}
\end{itemize}

For convenience, we will add the convention that the memory update $\update{\mem.a}{\scall{\offpc,\offsigma}{r_1}{r_2}}$ corresponds to
\[
  \begin{array}{l}
    \update{\mem.a}{\scall[0]{\offpc,\offsigma}{r_1}{r_2}}\\
    \update{\mem.a+1}{\scall[1]{\offpc,\offsigma}{r_1}{r_2}}\\
    \dots\\
    \update{\mem.a+\calllen-1}{\scall[\calllen-1]{\offpc,\offsigma}{r_1}{r_2}}\\
  \end{array}
\]

\subsubsection{Instruction Interpretation}
We have unified the two languages in the below definitions. Everything written in black is common for both source and target language. Everything written in \src{blue} is specific to the source language.

\noindent\textbf{fail and halt}\\
\begin{align*}
  \sem{\tfail}(\Phi) = & \; \failed \\
  \sem{\thalt}(\Phi) = & \; \halted
\end{align*}

\noindent\textbf{jmp and jnz}\\
\begin{align*}
  \sem{\tjmp{r}}(\Phi) = &  
                     \begin{cases}
                       \Phi\updReg{r}{w}\updReg{\pcreg}{\Phi(r)} & w = \linCons{\Phi(r)}
                     \end{cases}
\end{align*}

\begin{align*}
  \sem{\tjnz{r}{\rn}}(\Phi) = &       
                             \begin{cases}
                               \arraycolsep=0pt
                               \begin{array}[t]{rl}
                                 \Phi&\updReg{r}{w}\\
                                     &\updReg{\pcreg}{\Phi(r)}
                               \end{array} 
                                              & w = \linCons{\Phi(r)} \tand \nonZero{\Phi(\rn)}\\
                               \updPcAddr{\Phi} & \totherwise
                             \end{cases}
\end{align*}

\noindent\textbf{gettype}\\
In the definitions of the semantics below, we use a function $\encType{} : \Word \rightarrow \ints$. This is an encoding function for which the specific implementation does not matter. As the words of the two machines differ, we really need two functions which we call $\encType{}_{\var{src}}$ and $\encType{}_{\var{trg}}$. These two functions need to be related in the following way:
\begin{itemize}
\item $\encType{}(((\_,\_),\_,\_,\_))$, $\encType{}(\seal{\_,\_,\_})$, $\encType{}(\sealed{\_,\_})$, and $\encType{}(i)$ where $i\in\ints$ are all distinct.
\item For all $w \in \SealableCaps$ (only the words on the target machine), $\encType{}_{\var{trg}}(w) = \encType{}_{\var{src}}(w)$.
\item Finally, 
\[
\encType{}_{\var{src}}(\src{\stkptr{\_,\_,\_,\_}}) = \encType{}_{\var{trg}}(((\_,\_),\_,\_,\_))
\]
 and 
\[
\encType{}_{\var{src}}(\src{\retptrd(\_,\_)}) = \encType{}_{\var{src}}(\src{\retptrc(\_,\_,\_)}) = \encType{}_{\var{trg}}(((\_,\_),\_,\_,\_))
\]
\end{itemize}
In English this means that each type of word is represented by a distinct value and that the tokens on the source machine has the type of the capability they represent on the target machine.
\begin{align*}
  \sem{\tisptr{r_1}{r_2}}(\Phi) = & \; \updPcAddr{}(\Phi\updReg{r_1}{\encType{\Phi(r_2)}})
\end{align*}

\noindent\textbf{geta, getb, gete, getp, and getl}\\
We assume functions to encode and decode permissions as well as a function to encode linearity. The functions are used implicitly when a permission or linearity is used in a place where they need to be a word.

Specifically for the permission function, $\encPerm{} : \Perm \fun \ints$ and $\decPerm{} : \ints \fun \Perm $ encodes and decodes permissions, respectively. Where $\decPerm{}$ is the left inverse of $\encPerm{}$, $\encPerm{}$ is injective and for all $\perm \in \Perm$ $\encPerm{\perm} \neq -1$ (as this is used as an error value). $\decPerm{}$ is surjective.

For the linearity encoding, we make similar assumptions: $\encLin{} : \Linear \fun \ints$ encodes linearity. $\encLin{}$ is injective and for all $\lin \in \Linear$ $\encPerm{\lin} \neq -1$ (as this is used as an error value). As a capabilities linearity stays the same, we do not need a decoding function for linearity.
\begin{align*}
  \sem{\tgeta{r_1}{r_2}}(\Phi) = & 
                                   \begin{cases}
                                     \updPcAddr{\Phi\updReg{r_1}{\aaddr}} & 
                                     \arraycolsep=0pt
                                     \begin{array}[t]{l}
                                       \Phi(r_2) = ((\_,\_),\_,\_,\aaddr) \\
                                       \quad \tor \Phi(r_2) = \seal{\_,\_,\aaddr} \\ 
                                       \quad \src{\tor \Phi(r_2) = \stkptr{\_,\_,\_,\aaddr} } \\
                                     \end{array} \\
                                     \updPcAddr{\Phi\updReg{r_1}{-1}} & \totherwise
                                   \end{cases}
\end{align*}

\begin{align*}
  \sem{\tgetb{r_1}{r_2}}(\Phi) = & 
                                   \begin{cases}
                                     \updPcAddr{\Phi\updReg{r_1}{\baddr}} & 
                                     \arraycolsep=0pt
                                     \begin{array}[t]{l}
                                       \Phi(r_2) = ((\_,\_),\baddr,\_,\_) \\
                                       \quad \tor \Phi(r_2) = \seal{\baddr,\_,\_} \\
                                       \quad \src{\tor \Phi(r_2) = \stkptr{\_,\baddr,\_,\_} } \\
                                     \end{array} \\
                                     \updPcAddr{\Phi\updReg{r_1}{-1}} & \totherwise
                                   \end{cases}
\end{align*}

\begin{align*}
  \sem{\tgete{r_1}{r_2}}(\Phi) = & 
                                   \begin{cases}
                                     \updPcAddr{\Phi\updReg{r_1}{\eaddr}} & 
                                     \arraycolsep=0pt
                                     \begin{array}[t]{l}
                                       \Phi(r_2) = ((\_,\_),\_,\eaddr,\_) \\
                                       \quad \tor \Phi(r_2) = \seal{\_,\eaddr,\_} \\
                                       \quad \src{\tor \Phi(r_2) = \stkptr{\_,\_,\eaddr,\_} } \\
                                     \end{array} \\
                                     \updPcAddr{\Phi\updReg{r_1}{-1}} & \totherwise
                                   \end{cases}
\end{align*}
\lau{15-09-2017: Should gete and getb do something for seals?}
\dominique{4-12-2017: doesn't matter, I think.}

\begin{align*}
  \sem{\tgetp{r_1}{r_2}}(\Phi) = & 
                                   \begin{cases}
                                     \updPcAddr{\Phi\updReg{r_1}{\perm}} & 
                                     \arraycolsep=0pt
                                     \begin{array}[t]{l}
                                       \Phi(r_2) = ((\perm,\_),\_,\_,\_) \\
                                       \quad \src{\tor \Phi(r_2) = \stkptr{\perm,\_,\_,\_} } \\
                                     \end{array} \\
                                     \updPcAddr{\Phi\updReg{r_1}{-1}} & \totherwise
                                   \end{cases}
\end{align*}

\begin{align*}
  \sem{\tgetlin{r_1}{r_2}}(\Phi) = & 
                                   \begin{cases}
                                     \updPcAddr{\Phi\updReg{r_1}{\linear}} & 
                                     \arraycolsep=0pt
                                     \begin{array}[t]{l}
                                       \isLinear{\Phi(r_2)}
                                     \end{array} \\
                                     \updPcAddr{\Phi\updReg{r_1}{\normal}} & \nonLinear{\Phi(r_2)}
                                   \end{cases}
\end{align*}
\lau{19-12-2017: Resolved the comments below. Seals are all normal and we allow linearity of sealed capabilities to be retrieved as it is observable anyway.}
\lau{15-09-2017: Should getl do something for seals?}
\dominique{4-12-2017: doesn't matter, I think.}
\lau{15-09-2017: Do we want to allow getl and getp to work on sealed capabilities?}
\dominique{Linearity of sealed caps is already observable anyway, so no point in hiding it.  Permissions of the sealed capability should be hidden, I think, mostly out of conservativity (hide as much as possible!).}\\
\dominique{4-12-2017: Couldn't the definitions above use $\pi_b$, $\pi_e$, $\pi_l$ and $\pi_a$ and $\pi_p$ (the latter two to be added) ?}
\lau{12-12-2017: $\pi_\_$ also projects from sealed capabilities, so the case of sealed capabilities would have to be explicitly excluded (with getl as the exception).}

\noindent\textbf{move}\\
\begin{align*}
  \sem{\tmove{r}{\rn}}(\Phi) = & 
                              \begin{cases}
                                \updPcAddr{\Phi\updReg{r}{\rn}} & r \neq \pcreg \wedge \rn \in \ints \\
                                \updPcAddr{\Phi\updReg{\rn}{w}\updReg{r}{\Phi(\rn)}} & r \neq \pcreg \wedge w = \linCons{\Phi(\rn)}\\
                                \failed & \totherwise
                              \end{cases}
\end{align*}
(Notice that in the case where we are moving a linear capability and $r = \rn$ the order of the updates matter.)

\noindent\textbf{store}\\
\begin{align*}
  \sem{\tstore{r_1}{r_2}}(\Phi) = &
                                    \begin{cases}
                                      \updPcAddr{}\left(
                                        \arraycolsep=0pt
                                        \begin{array}{rl}
                                          \Phi&\updReg{r_2}{w_2}\\
                                              &\update{\mem.a}{\Phi(r_2)}
                                        \end{array}
\right) & 
                                      \arraycolsep=0pt
                                      \begin{array}[t]{l}
                                        \Phi(r_1) = ((\perm,\lin),\baddr,\eaddr,\aaddr) \tand \\
                                        \perm \in \writeAllowed{} \tand\\
                                        \withinBounds{\Phi(r_1)} \tand \\
                                        w_2 = \linCons{\Phi(r_2)} \tand r_2 \neq \pcreg\\
                                        \src{\aaddr \in \dom(\Phi.\mem)}
                                      \end{array}
                                      \\
                                      \sourcecolor\left.
                                      \updPcAddr{}\left(
                                      \arraycolsep=0pt
                                      \begin{array}{rl}
                                        \Phi&\updReg{r_2}{w_2}\\
                                            &\update{\ms_\stk.a}{\Phi(r_2)}
                                      \end{array}
                                      \right) \right.& 
                                      \sourcecolor\left.
                                      \arraycolsep=0pt
                                      \begin{array}[t]{l}
                                        \Phi(r_1) = \stkptr{\perm,\baddr,\eaddr,\aaddr} \tand \\
                                        \perm \in \writeAllowed{} \tand \\
                                        \withinBounds{\Phi(r_1)} \tand \\
                                        \aaddr \in \dom(\Phi.\ms_\stk) \tand \\
                                        w_2 = \linCons{\Phi(r_2)} \tand r_2 \neq \pcreg
                                      \end{array}\right.\\
                                      \failed & \totherwise
                                    \end{cases}
\end{align*}

\noindent\textbf{load}\\
\begin{align*}
  \sem{\tload{r_1}{r_2}}(\Phi) = & 
                                  \begin{cases}
                                    \updPcAddr{}\left(
                                      \arraycolsep=0pt
                                      \begin{array}{rl}
                                        \Phi&\update{\mem.a}{w_2}\\
                                            &\updReg{r_1}{w}
                                      \end{array}\right)& 
                                    \arraycolsep=0pt
                                    \begin{array}[t]{l}
                                      \Phi(r_2) = ((\perm,\lin),\baddr,\eaddr,\aaddr) \tand \\
                                      \perm \in \readAllowed{} \tand\\
                                      \withinBounds{\Phi(r_2)} \tand \\
                                      w = \Phi.\mem(a) \tand r_1 \neq \pcreg \tand \\
                                      w_2 = \linCons{w} \tand
                                      \linConsPerm{\perm}{w}
                                    \end{array}
                                    \\
                                    \sourcecolor\left.
                                    \updPcAddr{}\left(
                                      \arraycolsep=0pt
                                      \begin{array}{rl}
                                        \Phi& \update{\ms_\stk.a}{w_2}\\
                                            &\updReg{r_1}{w}
                                      \end{array}\right)\right.
                                    & 
                                    \sourcecolor\left.
                                    \arraycolsep=0pt
                                    \begin{array}[t]{l}
                                      \Phi(r_2) = \stkptr{\perm,\baddr,\eaddr,\aaddr} \tand \\
                                      \perm \in \readAllowed{} \tand \\
                                      \withinBounds{\Phi(r_2)} \tand \\
                                      \aaddr \in \dom(\Phi.\ms_\stk) \tand \\ 
                                      w = \Phi.\ms_\stk(a) \tand r_1 \neq \pcreg \tand \\
                                      w_2 = \linCons{w} \tand
                                      \linConsPerm{\perm}{w}
                                    \end{array}\right.
                                    \\
                                    \failed & \totherwise                                    
                                  \end{cases}
\end{align*} 

\noindent\textbf{cca}\\
\emph{Change Current Address}
\\\lau{15-09-2017: This is the old lea. I changed the name, so it actually reflects what the instruction does. We might also want to add a real lea at some point.}
\begin{align*}
  \sem{\tcca{r}{\rn}}(\Phi) = & 
                                  \begin{cases}
                                    \updPcAddr{\Phi\updReg{r}{c}} &  
                                    \arraycolsep=0pt
                                    \begin{array}[t]{l}
                                      \Phi(\rn) = n \tand \\
                                      \Phi(r) = ((\perm,\lin),\baddr,\eaddr,\aaddr) \tand \\
                                      c = ((\perm,\lin),\baddr,\eaddr,\aaddr + n) \tand\\
                                      r \neq \pcreg
                                    \end{array}
                                    \\
                                    \updPcAddr{\Phi\updReg{r}{s}} &  
                                    \arraycolsep=0pt
                                    \begin{array}[t]{l}
                                      \Phi(\rn) = n \tand \\
                                      \Phi(r) = \seal{\sigma_\baddr,\sigma_\eaddr,\sigma} \tand \\
                                      s = \seal{\sigma_\baddr,\sigma_\eaddr,\sigma + n}
                                    \end{array}
                                    \\
                                    \src{\updPcAddr{\Phi\updReg{r}{c}}} &  
                                    \sourcecolor\left.
                                    \arraycolsep=0pt
                                    \begin{array}[t]{l}
                                      \Phi(\rn) = n\tand \\
                                      \Phi(r) = \stkptr{\perm,\baddr,\eaddr,\aaddr} \tand \\
                                      c = \stkptr{\perm,\baddr,\eaddr,\aaddr + n}
                                    \end{array}\right.
                                    \\
                                    \failed & \totherwise
                                  \end{cases}
\end{align*}

\noindent\textbf{restrict}\\
This instruction uses the $\decPerm{}$ function.
\begin{align*}
  \sem{\trestrict{r_1}{\rn}}(\Phi) = &
                                      \begin{cases}
                                        \updPcAddr{}\left(
                                          \arraycolsep=0pt
                                          \begin{array}{rl}
                                          \Phi &\updReg{r_1}{c}
                                          \end{array} \right)
&
                                        \arraycolsep=0pt
                                        \begin{array}[t]{l}
                                          \Phi(r_1) = ((\perm,\lin),\baddr,\eaddr,\aaddr) \tand \\
                                          \Phi(\rn) = n \tand\\
                                          \decPerm{n} \sqsubseteq \perm \tand \\
                                          c = ((\decPerm{n},\lin),\baddr,\eaddr,\aaddr) \tand\\
                                          r_1 \neq \pcreg
                                        \end{array}
                                        \\
                                        \sourcecolor\left.
                                        \updPcAddr{}\left(
                                          \arraycolsep=0pt
                                          \begin{array}{rl}
                                          \Phi &\updReg{r_1}{c}
                                          \end{array}\right)\right.
                                        &
                                        \sourcecolor\left.
                                        \arraycolsep=0pt
                                        \begin{array}[t]{l}
                                          \Phi(r_1) = \stkptr{\perm,\baddr,\eaddr,\aaddr} \tand \\
                                          \Phi(\rn) = n \tand\\
                                          \decPerm{n} \sqsubseteq \perm \tand \\
                                          c = \stkptr{\decPerm{n},\baddr,\eaddr,\aaddr}
                                        \end{array}\right.
                                        \\
                                        \failed & \totherwise
                                      \end{cases}
\end{align*}

\noindent\textbf{lt}\\
\begin{align*}
  \sem{\tlt{r_0}{\rn_1}{\rn_2}}(\Phi) = &
                                                  \begin{cases}
                                                    \updPcAddr{\Phi\updReg{r_0}{1}} &
                                                    \arraycolsep=0pt
                                                    \begin{array}[t]{l}
                                                      \text{if for $i \in \{1,2\}$}\\
                                                      \quad\Phi(\rn_i) = n_i \tand\\
                                                      \quad n_1 < n_2
                                                    \end{array}\\
                                                    \updPcAddr{\Phi\updReg{r_0}{0}} &
                                                    \arraycolsep=0pt
                                                    \begin{array}[t]{l}
                                                      \text{if for $i \in \{1,2\}$}\\
                                                      \quad \Phi(\rn_i) = n_i \tand \\
                                                      \quad n_1 \not< n_2\\
                                                    \end{array}\\
                                                    \failed & \text{otherwise}
                                                  \end{cases}  
\end{align*}

\noindent\textbf{plus and minus}\\
\begin{align*}
  \sem{\tplus{r_0}{\rn_1}{\rn_2}}(\Phi) = &
                                                  \begin{cases}
                                                    \updPcAddr{\Phi\updReg{r_0}{n_1+n_2}} &
                                                    \arraycolsep=0pt
                                                    \begin{array}[t]{l}
                                                      \text{if for $i \in \{1,2\}$}\\
                                                      \quad \Phi(\rn_i) = n_i\\
                                                    \end{array}\\
                                                    \failed & \text{otherwise}
                                                  \end{cases}  
\end{align*}

\begin{align*}
  \sem{\tminus{r_0}{\rn_1}{\rn_2}}(\Phi) = &
                                                  \begin{cases}
                                                    \updPcAddr{\Phi\updReg{r_0}{n_1-n_2}} &
                                                    \arraycolsep=0pt
                                                    \begin{array}[t]{l}
                                                      \text{if for $i \in \{1,2\}$}\\
                                                      \quad\Phi(\rn_i) = n_i\\
                                                    \end{array}\\
                                                    \failed & \text{otherwise}
                                                  \end{cases}  
\end{align*}

\noindent\textbf{seta2b}\\
\begin{align*}
  \sem{\tsetatob{r_1}}(\Phi) = & 
                                \begin{cases}
                                  \updPcAddr{}\left(
                                    \Phi \updReg{r_1}{c}
                                    \right)
&
                                    \arraycolsep=0pt
                                    \begin{array}[t]{l}
                                      \Phi(r_1) = ((\perm,\lin),\baddr,\eaddr,\_) \tand \\
                                      c = ((\perm,\lin),\baddr,\eaddr,\baddr) \tand\\
                                      r_1 \neq \pcreg
                                    \end{array} \\
                                  \updPcAddr{}\left(
                                    \Phi \updReg{r_1}{c}
                                    \right)
&
                                    \arraycolsep=0pt
                                    \begin{array}[t]{l}
                                      \Phi(r_1) = \seal{\sigma_\baddr,\sigma_\eaddr,\_} \tand \\
                                      c = \seal{\sigma_\baddr,\sigma_\eaddr,\sigma_\baddr}
                                    \end{array} \\
\sourcecolor\left.
                                  \updPcAddr{}\left(
                                    \Phi\updReg{r_1}{c}
                                    \right)
\right.
&
\sourcecolor\left.
                                    \arraycolsep=0pt
                                    \begin{array}[t]{l}
                                      \Phi(r_1) = \stkptr{\perm,\baddr,\eaddr,\_} \tand \\
                                      c = \stkptr{\perm,\baddr,\eaddr,\baddr}
                                    \end{array}\right. \\
                                    \failed & \totherwise
                                \end{cases}
\end{align*}

\noindent\textbf{xjmp}\\
\lau{22-09-2017: Can we allow the callee to return any stack pointer? At the moment, it is not possible to create an empty stack pointer which may be a saving grace. If we at some point change this, so you can split capabilities into empty ones, then we need to consider whether it is safe for callees to return empty stacks (I think it is okay as long as we always have some stack usage).}
\begin{align*}
  \sem{\txjmp{r_1}{r_2}}(\Phi) = &
                                   \begin{cases}
                                       \Phi''
                                     &
                                     \begin{array}[t]{l}
                                       \Phi(r_1) = \sealed{\sigma_1,c_1} \tand
                                       \Phi(r_2) = \sealed{\sigma_2,c_2} \tand\\
                                       \sigma_1 = \sigma_2 \tand \\
                                       w_1 = \linCons{c_1} \tand \\
                                       w_2 = \linCons{c_2} \tand \\
                                       \Phi' = \Phi \updReg{r_1}{w_1} \updReg{r_2}{w_2} \tand\\
                                       \Phi'' = \xjumpResult{c_1}{c_2}{\Phi'}
                                     \end{array}\\
                                     \failed & \totherwise
                                   \end{cases}
\end{align*}
\begin{align*}
  \xjumpResult{c_1}{c_2}{\Phi} = 
  \begin{cases}
    \arraycolsep=0pt
    \begin{array}[t]{rl}
      \Phi & \updReg{\pcreg}{c_1}\\
           & \updReg{\rdata}{c_2}
    \end{array} &
    \begin{array}[t]{l}
      \src{c_1 \neq \retptrc(\_) \tand} \\
      \src{c_2 \neq \retptrd(\_) \tand} \\
      \nonExec{c_2}
    \end{array}
    \\&\\
    \sourcecolor\left.
      \arraycolsep=0pt
      \begin{array}[t]{rl}
        \Phi' & \updReg{\pcreg}{c_\opc}\\
              & \updReg{\rdata}{0}\\
              & \updReg{\rstk}{c_\stk}\\
              & \updReg{\rtmp{1}}{0}\\
              & \updReg{\rtmp{2}}{0}
      \end{array}\right.&
    \sourcecolor\left.
      \begin{array}[t]{l}
        c_1 = \retptrc(\baddr,\eaddr,\aaddr) \tand \\
        c_2 = \retptrd(\aaddr_\stk,\eaddr_{\stk,\priv}) \tand \\
        \Phi(\rstk) = \stkptr{\rw,\stkb,\eaddr_\stk,\_} \tand \\
        \stkb \leq \eaddr_\stk \tand\\
        \Phi = (\mem,\reg,\stkf::\stk,\ms_\stk) \tand \\
        \stkf = (\opc,\ms_{\stk,\priv}) \tand \\
        \opc = \aaddr \tand \\
        c_\opc = ((\rx,\normal),\baddr,\eaddr,\opc) \tand \\
        \dom(\ms_{\stk,\priv}) = [\eaddr_\stk+1,\eaddr_{\stk,\priv}] \tand\\
        e_\stk + 1 = a_\stk \tand\\
        c_\stk = \stkptr{\rw,\stkb,\eaddr_{\stk,\priv},\aaddr_\stk} \tand\\
        \Phi' = (\mem,\reg,\stk,\ms_{\stk,\priv} \uplus \ms_\stk) 
      \end{array}\right.
    \\
    \failed & \totherwise
  \end{cases}
\end{align*}

\noindent\textbf{cseal}\\
\begin{align*}
  \sem{\tcseal{r_1}{r_2}}(\Phi) = &
                                  \begin{cases}
                                    \updPcAddr{}\left(
                                    \arraycolsep=0pt
                                    \begin{array}{rl}
                                      \Phi&\updReg{r_1}{\vsc}
                                    \end{array}\right)
&
                                    \arraycolsep=0pt
                                    \begin{array}[t]{l}
                                      \Phi(r_1) \in \SealableCaps \tand \\
                                      \Phi(r_2) = \seal{\sigma_\baddr, \sigma_\eaddr,\sigma} \tand \\
                                      \sigma_\baddr \le \sigma \le \sigma_\eaddr \tand \\
                                      \vsc = \sealed{\sigma,\Phi(r_1)}
                                    \end{array}
                                    \\
                                    \failed & \totherwise
                                  \end{cases}
\end{align*}

\noindent\textbf{split and splice}\\
We would like splice and split to have following properties
\begin{enumerate}
\item No authority amplification - splitting or splicing capabilities should give you no more authority than you already had.
\item Split should be dual to splice in the sense that a split on a capability followed by a splice of the two resulting capabilities should yield the same capability.
\item Take the addresses governed by a linear capability to be a multiset. If this capability is split, then the union of the two multisets of addresses governed by the resulting capabilities should be the same as the first multiset. In other words, splice and split should not break linearity.
\end{enumerate}
Split cannot create ``empty capabilities'' (a capability that governs no segment of the memory, i.e.\ a capability where the base address is greater than the end address). We partly do not allow this out of convenience as it makes the implementation of call simpler. We do not need empty capabilities as they have no semantic value in the sense that they allow you to do essentially the same as a piece of data.
\begin{align*}
  \sem{\tsplit{r_1}{r_2}{r_3}{\rn_4}}(\Phi) = &
                               \begin{cases}
                                 \updPcAddr{}\left(
                                   \arraycolsep=0pt
                                   \begin{array}{rl}
                                     \Phi&\updReg{r_3}{w}\\
                                         &\updReg{r_1}{c_1}\\
                                         &\updReg{r_2}{c_2}
                                   \end{array}\right)
&
                                 \arraycolsep=0pt
                                 \begin{array}[t]{l}
                                   \Phi(r_3) = ((\perm,\lin),\baddr,\eaddr,\aaddr) \tand \\
                                   \Phi(\rn_4) = n \tand\\
                                   \baddr \leq n \tand n < \eaddr \tand \\
                                   c_1 = ((\perm,\lin),\baddr,n,\aaddr) \tand \\
                                   c_2 = ((\perm,\lin),n+1,\eaddr,\aaddr) \tand \\
                                   w = \linCons{\Phi(r_3)} \tand \\
                                   r_1,r_2,r_3 \neq \pcreg
                                 \end{array}\\
                                 \updPcAddr{} \left(
                                 \arraycolsep=0pt
                                 \begin{array}{rl}
                                   \Phi&\updReg{r_1}{c_1}\\
                                       &\updReg{r_2}{c_2}
                                 \end{array} \right)
&
                                 \arraycolsep=0pt
                                 \begin{array}[t]{l}
                                   \Phi(r_3) = \seal{\sigma_\baddr,\sigma_\eaddr,\sigma} \tand \\
                                   \Phi(\rn_4) = n \tand\\
                                   \sigma_\baddr \leq n \tand n < \sigma_\eaddr \tand \\
                                   c_1 = \seal{\sigma_\baddr,n,\sigma} \tand \\
                                   c_2 = \seal{n+1,\sigma_\eaddr,\sigma}
                                 \end{array}\\
                                   \sourcecolor\left.
                                   \updPcAddr{}\left(
                                   \arraycolsep=0pt
                                   \begin{array}{rl}
                                     \Phi&\updReg{r_3}{0}\\
                                               &\updReg{r_1}{c_1}\\
                                               &\updReg{r_2}{c_2}
                                   \end{array} \right)\right.
&
                                 \sourcecolor\left.
                                 \arraycolsep=0pt
                                 \begin{array}[t]{l}
                                   \Phi(r_3) = \stkptr{\perm,\baddr,\eaddr,\aaddr} \tand \\
                                   \Phi(\rn_4) = n \tand \\
                                   \baddr \leq n \tand n < \eaddr \tand \\
                                   c_1 = \stkptr{\perm,\baddr,n,\aaddr} \tand \\
                                   c_2 = \stkptr{\perm,n+1,\eaddr,\aaddr}
                                 \end{array} \right.\\
                                 \failed & \totherwise
                               \end{cases}
\end{align*}
Two important points about $\tsplice{}{}{}$ related to the calling convention: (1) Splice fails if two capabilities are not adjacent.
This means that if a caller tries to use a return pointer with a stack that is not immediately adjacent to the private stack, then it fails.
(2) Splice prohibit splicing with an empty capability!
This means that a callee cannot return an empty stack (this also means that it is impossible to make a call when all of the stack is used - this may indeed be undesirable, but without this restriction we need to handle other things).
Note: because $\tsplice{}{}{}$ does not allow empty stacks, it is not ``left inverse'' to $\tsplit{}{}{}{}$ (because of the empty case).
Intuitively, it is weird that a $\tsplit{}{}{}{}$ followed by a $\tsplice{}{}{}$ does not yield the same capability.
\begin{align*}
  \sem{\tsplice{r_1}{r_2}{r_3}}(\Phi) = &
                              \begin{cases}
                                \updPcAddr{}\left(
                                \arraycolsep=0pt
                                \begin{array}{r l}
                                  \Phi&\updReg{r_2}{w_2}\\
                                      &\updReg{r_3}{w_3}\\
                                      &\updReg{r_1}{c}
                                \end{array}\right)
&
                                \arraycolsep=0pt
                                \begin{array}[t]{l}
                                  \Phi(r_2) = ((\perm,\lin),\baddr_2,\eaddr_2,\_) \tand \\
                                  \Phi(r_3) = ((\perm,\lin),\baddr_3,\eaddr_3,\aaddr_3) \tand \\
                                  \eaddr_2 + 1 = \baddr_3 \tand \baddr_2 \leq \eaddr_2 \tand \baddr_3 \leq \eaddr_3 \tand \\
                                  c = ((\perm,\lin),\baddr_2,\eaddr_3,\aaddr_3) \tand\\
                                  w_2 = \linCons{\Phi(r_2)} \tand \\
                                  w_3 = \linCons{\Phi(r_3)}  \tand \\
                                  r_1,r_2,r_3 \neq \pcreg\\
                                \end{array}\\
                                \updPcAddr{}\left(
                                \arraycolsep=0pt
                                \begin{array}{r l}
                                  \Phi&\updReg{r_1}{c}
                                \end{array}\right)
&
                                \arraycolsep=0pt
                                \begin{array}[t]{l}
                                  \Phi(r_2) = \seal{\sigma_{\baddr,2},\sigma_{\eaddr,2},\_} \tand \\
                                  \Phi(r_3) = \seal{\sigma_{\baddr,3},\sigma_{\eaddr,3},\sigma} \tand \\
                                  \sigma_{\eaddr,2}+1 = \sigma_{\baddr,3} \tand \sigma_{\baddr,2} \leq \sigma_{\eaddr,2} \tand\\
                                  \sigma_{\baddr,3} \leq \sigma_{\eaddr,3} \tand \\
                                  c = \seal{\sigma_{\baddr,2},\sigma_{\eaddr,3},\sigma}
                                \end{array}\\
                                \sourcecolor\left.
                                \updPcAddr{}\left(
                                \arraycolsep=0pt
                                \begin{array}{r l}
                                  \Phi&\updReg{r_2}{0}\\
                                      &\updReg{r_3}{0}\\
                                      &\updReg{r_1}{c}
                                \end{array}\right)\right.
&
                                \sourcecolor
                                \arraycolsep=0pt
                                \begin{array}[t]{l}
                                  \Phi(r_2) = \stkptr{\perm,\baddr_2,\eaddr_2,\_} \tand \\
                                  \Phi(r_3) = \stkptr{\perm,\baddr_3,\eaddr_3,\aaddr_3} \tand \\
                                  \eaddr_2 + 1 = \baddr_3 \tand \baddr_2 \leq \eaddr_2 \tand \baddr_3 \leq \eaddr_3 \tand \\
                                  c = \stkptr{\perm,\baddr_2,\eaddr_3,\aaddr_3} \\
                                \end{array}\\
                                \failed & \totherwise
                              \end{cases}
\end{align*}

\noindent\textbf{call}\\

\begin{multline*}
  \sem{\scall{\offpc,\offsigma}{\src{r_1}}{\src{r_2}}}(\Phi) = \\
  \begin{cases}
    \mathit{xjumpResult}\left(c_1,c_2, 
     \arraycolsep=0pt
    \array[]{rl}
      \Phi'&\updReg{r_1}{w_1} \\
      &\updReg{r_2}{w_2} \\
      &\updReg{\rstk}{c_\stk}\\
      &\updReg{\rretc}{\sealed{\sigma_?,c_\opc}}\\
      &\updReg{\rretd}{\sealed{\sigma_?,c_\var{priv\_data}}}\\
      &\updReg{\rtmp{1}}{0}
      \endarray
     \right)
    & \arraycolsep=0pt
    \begin{array}[t]{l}
      r_1 \neq \rtmp{1} \tand r_2 \neq \rtmp{1} \tand\\
      \Phi(r_1) = \sealed{\sigma_1,c_1} \tand \\
      \Phi(r_2) = \sealed{\sigma_2,c_2} \tand \\
      \sigma_1 = \sigma_2 \tand \\
      \nonExec{c_2} \tand\\
      \Phi = (\mem,\reg,\stk,\ms_\stk) \tand\\
      \Phi(\rstk) = \stkptr{\rw,\baddr_\stk,\eaddr_\stk,\aaddr_\stk} \tand \\
      \baddr_\stk < \aaddr_\stk \leq \eaddr_\stk \tand \\
      \ms_{\stk,\priv} = \ms_\stk |_{[\aaddr_\stk,\eaddr_\stk]}\update{\aaddr_\stk}{42} \tand\\
      \ms_{\stk,\var{rest}} = \ms_\stk - \ms_\stk |_{[\aaddr_\stk,\eaddr_\stk]} \tand \\
      c_\stk = \stkptr{\rw,\baddr_\stk,\aaddr_\stk-1,\aaddr_\stk-1} \tand \\
      c_\var{priv\_data} = \retptrd(\aaddr_\stk,\eaddr_\stk) \tand \\
      \Phi(\pcreg) = ((\_,\_),\baddr,\eaddr,\aaddr) \tand \\
      \opc = \aaddr + \calllen \tand \\
      c_\opc = \retptrc(\baddr,\eaddr,\aaddr+\calllen) \tand \\
      \stk' = (\opc,\ms_{\stk,\priv}) :: \stk \tand\\
      \Phi' = (\mem,\reg,\stk',\ms_{\stk,\var{rest}})\tand\\
      \mem(\aaddr+\offpc) = \seal{\sigma_\baddr,\sigma_\eaddr,\sigma_\aaddr}\tand\\
      \baddr \leq  \aaddr+\offpc \leq \eaddr \tand \\
      \sigma_? = \sigma_\aaddr + \offsigma \tand\\
      \sigma_\baddr \leq \sigma_? \leq \sigma_\eaddr \tand\\
      w_1 = \linCons{\Phi(r_1)} \tand \\
      w_2 = \linCons{\Phi(r_2)}
    \end{array}\\
    \failed & \totherwise
  \end{cases}
\end{multline*}

Note: the caller may have split part of the stack pointer off and even pass the fragments split off to the callee in registers.
This behavior is in principle fine.
Source semantics will define that only the non-split-off part will be encapsulated.
The parts that were split off and passed to the adversary are not protected, as expected.

\subsection{Components}
\label{sec:components}

A component can be either a component with a main (i.e. a program that still needs to be linked with library implementations) or one without (i.e. a library implementation that will be used by other components). 
It contains code memory, data memory a list of imported symbols, a list of exported symbols, a list of seals used for producing return capability pairs and a list of seals used for producing closures.

We define a component as follows:
\begin{align*}
  s &\in \Symbol\\
  \var{import} &\mathrel{::=} a \mapsfrom s\\
  \var{export} &\mathrel{::=} s \mapsto w\\
  \var{comp}_0 &\mathrel{::=} (\mscode,\msdata,\overline{\var{import}},\overline{\var{export}},\sigrets,\sigcloss,A_\linear)\\
  \var{comp} &\mathrel{::=} \var{comp}_0\\
               &\mid  (\var{comp}_0,c_{\mathrm{main},c}, c_{\mathrm{main},d})
\end{align*}
\clearpage
We define inductively when a component is valid ($\ta \vdash \var{comp}$) by the below inference rules:
\begin{mathpar}
  \inference{
    \mscode(a) = \seal{\sigma_\baddr,\sigma_\eaddr,\sigma_\baddr} & [\sigma_\baddr,\sigma_\eaddr] = (\sigrets \cup \sigcloss)
  }{
    \sigrets,\sigrets[\mathrm{owned}],\sigcloss,\ta \vdash_{\mathrm{comp-code}} \mscode,a
  }
  \and
  \inference{
    \mscode(a) \in \ints\\
{    \begin{multlined}
      ([a \cdots a + \calllen-1] \subseteq \ta \wedge \mscode([a \cdots a + \calllen-1]) = \scall[0..\calllen-1]{\offpc,\offsigma}{r_1}{r_2})
      \Rightarrow\\ (\mscode(a+\offpc) = \seal{\sigma_\baddr,\sigma_\eaddr,\sigma_\baddr} \wedge \sigma_\baddr+\offsigma \in \sigrets[\mathrm{owned}])
    \end{multlined}
}  }{
    \sigrets,\sigrets[\mathrm{owned}],\sigcloss,\ta \vdash_{\mathrm{comp-code}} \mscode,a
  }
  \and
  \inference{
    \text{$\mscode$ has no hidden calls}\\
    \sigrets \mathrel{\#} \sigcloss \\
    \exists d_\sigma : \dom(\mscode) \rightarrow \powerset{\Seal} \ldotp
    \sigrets = \biguplus_{a \in \dom(\mscode)} d_\sigma(a) \tand\\
    \forall a \in \dom(\mscode) \ldotp 
    \sigrets,d_\sigma(a),\sigcloss,\ta \vdash_{\mathrm{comp-code}} \mscode,a\\
    \exists a \ldotp \mscode(a) = \seal{\sigma_\baddr,\sigma_\eaddr,\_} \wedge [\sigma_\baddr,\sigma_\eaddr] \neq \emptyset
  }{
    \sigrets,\sigcloss,\ta\vdash_{\mathrm{comp-code}} \mscode
  }
\end{mathpar}

\begin{mathpar}
  \inference{
  }{
    A_{\mathrm{code}},A_{\mathrm{own}},A_{\mathrm{non-linear}},\sigrets,\sigcloss \vdash_{\mathrm{comp-value}} z
  }
  \and
  \inference{
    \permbnf \sqsubseteq \rw&
    \lin = \linear \Rightarrow \emptyset \subset [\baddr,\eaddr] \subseteq A_{\mathrm{own}} &
    \lin = \normal \Rightarrow [\baddr,\eaddr] \subseteq A_{\mathrm{non-linear}}
  }{
    A_{\mathrm{code}},A_{\mathrm{own}},A_{\mathrm{non-linear}},\sigrets,\sigcloss \vdash_{\mathrm{comp-value}} ((\perm,\lin),\baddr,\eaddr,\aaddr)
  }
  \and
  \inference{
    A_{\mathrm{code}},A_{\mathrm{own}},A_{\mathrm{non-linear}},\sigrets,\sigcloss \vdash_{\mathrm{comp-value}} \vsc \\
    \sigma \in \sigcloss
  }{
    A_{\mathrm{code}},A_{\mathrm{own}},A_{\mathrm{non-linear}},\sigrets,\sigcloss \vdash_{\mathrm{comp-value}} \sealed{\sigma,\vsc}
  }
\end{mathpar}

\begin{mathpar}
  \inference{
    [\baddr,\eaddr] \subseteq A_{\mathrm{code}} & 
    \sigma \in \sigcloss
  }{
    A_{\mathrm{code}},A_{\mathrm{non-linear}},\sigrets,\sigcloss \vdash_{\mathrm{comp-export}} s \mapsto \sealed{\sigma,((\rx,\normal),\baddr,\eaddr,\aaddr)}
  }
  \and
  \inference{
    A_{\mathrm{code}},\emptyset,A_{\mathrm{non-linear}},\sigrets,\sigcloss \vdash_{\mathrm{comp-value}} w
  }{
    A_{\mathrm{code}},A_{\mathrm{non-linear}},\sigrets,\sigcloss \vdash_{\mathrm{comp-export}} s \mapsto w
  }
\end{mathpar}

\begin{mathpar}
  \inference{
    \dom(\mscode) = [\baddr,\eaddr]&
    [\baddr-1,\eaddr+1] \mathrel{\#} \dom(\msdata)\\
    \mspad = [\baddr-1\mapsto 0] \uplus [\eaddr+1 \mapsto 0]\\
    \sigrets,\sigcloss,\ta \vdash_{\mathrm{comp-code}} \mscode \\
    \exists A_\mathrm{own} : \dom(\msdata) \rightarrow \powerset{\dom(\msdata)} & \dom(\msdata) = A_{\mathrm{non-linear}} \uplus A_\linear \\
    A_\linear = \biguplus_{a \in \dom(\msdata)} A_{\mathrm{own}}(a) \\
    \forall a \in \dom(\msdata)\ldotp \dom(\mscode),A_{\mathrm{own}}(a),A_{\mathrm{non-linear}},\sigrets,\sigcloss \vdash_{\mathrm{comp-value}} \msdata(a)\\
    \overline{\var{export}} = \overline{s_{\mathrm{export}} \mapsto w_{\mathrm{export}}} & \overline{\var{import}} = \overline{a_{\mathrm{import}} \mapsfrom s_{\mathrm{import}}}& \{\overline{a_{\mathrm{import}}}\} \subseteq \dom(\msdata)\\
    \overline{\dom(\mscode), A_{\mathrm{non-linear}}, \sigrets,\sigcloss \vdash_{\mathrm{comp-export}} w_{\mathrm{export}}}\\
    \overline{s_{\mathrm{import}}} \mathrel{\#} \overline{s_{\mathrm{export}}} & (\emptyset \neq \dom(\mscode) \subseteq \ta) \vee (\dom(\mscode) \mathrel{\#} \ta \wedge \sigrets = \emptyset)&
    \dom(\msdata) \mathrel{\#} \ta
  }{
    \ta \vdash (\mscode\uplus \mspad,\msdata,\overline{\var{import}},\overline{\var{export}},\sigrets,\sigcloss,A_\linear)
  }
  \and
  \inference{
    \var{comp}_0 = (\mscode,\msdata,\overline{\var{import}},\overline{\var{export}},\sigrets,\sigcloss,A_\linear)\\
    \ta \vdash \var{comp}_0 & (\_ \mapsto c_{\mathrm{main},c}), (\_ \mapsto c_{\mathrm{main},d}) \in \overline{\var{export}}
  }{
    \ta \vdash (\var{comp}_0,c_{\mathrm{main},c}, c_{\mathrm{main},d})
  }
\end{mathpar}
where the following definition is used
\begin{definition}[No hidden calls]
  \label{def:no-hidden-calls}
  We say that a memory segment $\mscode$ has no hidden calls iff
  \[
    \begin{array}{l}
      \forall \aaddr \in \dom(\mscode) \ldotp \\
      \quad \forall i \in [0,\calllen - 1] \\
      \qquad \mscode(a + i) = \scall[i]{\offpc,\offsigma}{r_1}{r_2} \Rightarrow \\
      \qquad \quad (\dom(\mscode) \supseteq [\aaddr-i,\aaddr + \calllen - i - 1]
      \wedge \mscode([\aaddr-i,\aaddr + \calllen - i -1]) =
      \scall[0..\calllen-1]{\offpc,\offsigma}{r_1}{r_2}) \vee\\
      \qquad \quad \exists j \in [\aaddr-i,\aaddr + \calllen - i -1] \cap \dom(\mscode) \ldotp \mscode(j) \neq  \scall[j-\aaddr-i]{\offpc,\offsigma}{r_1}{r_2}
    \end{array}
  \]
\end{definition}

\subsection{Linking}
\label{sec:linking}

\begin{mathpar}
  \inference{
    \var{comp}_1 = (\mscode[1], \msdata[1], \overline{\var{import}_1}, \overline{\var{export}_1}, \sigrets[1], \sigcloss[1],A_{\linear,1})\\
    \var{comp}_2 = (\mscode[2], \msdata[2], \overline{\var{import}_2}, \overline{\var{export}_2}, \sigrets[2], \sigcloss[2],A_{\linear,2})\\
    \var{comp}_3 = (\mscode[3], \msdata[3], \overline{\var{import}_3}, \overline{\var{export}_3}, \sigrets[3], \sigcloss[3],A_{\linear,3})\\
    \mscode[3] = \mscode[1] \uplus \mscode[2] \\
    \msdata[3] = (\msdata[1] \uplus \msdata[2])[a \mapsto w \mid (a \mapsfrom s) \in (\overline{\var{import}_1} \cup \overline{\var{import}_2}), (s \mapsto w) \in \overline{\var{export}}_3] \\
    \overline{\var{export}_3} = \overline{\var{export}_1} \cup \overline{\var{export}_2}&
    \overline{\var{import}_3} = \{ a \mapsfrom s \in (\overline{\var{import}_1} \cup \overline{\var{import}_2}) \mid s \mapsto \_ \not\in \overline{\var{export}_3} \}\\
    \sigrets[3] = \sigrets[1] \uplus \sigrets[2] &
    \sigcloss[3] = \sigcloss[1] \uplus \sigcloss[2] &
    A_{\linear,3} = A_{\linear,1} \uplus A_{\linear,2}\\
    \dom(\mscode[3]) \mathrel{\#} \dom(\msdata[3]) & \sigrets[3] \mathrel{\#} \sigcloss[3]
  } {
    \var{comp}_3 = \var{comp}_1 \bowtie \var{comp}_2
  }
  \and
  \inference{
    \var{comp}_0'' = \var{comp}_0 \bowtie \var{comp}_0'
  }{
    (\var{comp}_0'',c_{\mathrm{main},c}, c_{\mathrm{main},d}) = \var{comp}_0 \bowtie (\var{comp}_0',c_{\mathrm{main},c}, c_{\mathrm{main},d})
  }
  \and
  \inference{
    \var{comp}_0'' = \var{comp}_0 \bowtie \var{comp}_0'
  }{
    (\var{comp}_0'',c_{\mathrm{main},c}, c_{\mathrm{main},d}) = (\var{comp}_0,c_{\mathrm{main},c}, c_{\mathrm{main},d}) \bowtie \var{comp}_0'
  }
\end{mathpar}


\subsection{Programs, contexts, initial execution configuration}
\label{sec:programs}

A program is intuitively a component that is ready to be executed, i.e. it must have an empty import list and a pair of capabilities to be used as main.
A context for a given component is any other component that can be linked with it to produce a program.

\begin{definition}[Programs and Contexts]
  We define a program to be a component $(\var{comp}_0,c_{\mathrm{main},c}, c_{\mathrm{main},d})$ with an empty import list.

  A context for a component $\var{comp}$ is another component $\var{comp}'$ such that $\var{comp} \bowtie \var{comp}'$ is a program.
\end{definition}

\begin{definition}[Initial execution configuration]
  \begin{mathpar}
    \inference{
      c_{\mathrm{main},c} = \sealed{\sigma_1, c_{\mathrm{main},c}'} &
      c_{\mathrm{main},d} = \sealed{\sigma_2, c_{\mathrm{main},d}'} &
      \sigma_1 = \sigma_2\\
      \nonExec{c_{\mathrm{main},d}'}&
      \reg(\pcreg) = c_{\mathrm{main},c}' & 
      \reg(\rdata) = c_{\mathrm{main},d}' \\
      \src{\reg(\rstk) = \stkptr{\rw,\baddr_\stk,\eaddr_\stk,\eaddr_\stk}} & 
      \trg{\reg(\rstk) = ((\rw,\linear),\baddr_\stk,\eaddr_\stk,\eaddr_\stk)} \\
      \reg(\RegName \setminus \{\pcreg,\rdata,\rstk\}) = 0\\
      \range{\ms_\stk} = \{0\}&
      \mem = \mscode \uplus \msdata \trg{\;\uplus\; \ms_\stk} \uplus \ms_\var{frame} \\
      [\baddr_\stk,\eaddr_\stk] = \dom(\ms_\stk ) &
      [\baddr_\stk-1,\eaddr_\stk+1]\mathrel{\#} (\dom(\mscode) \cup \dom(\msdata)) &
      \overline{\var{import}} = \emptyset
    }{
      ((\mscode, \msdata, \overline{\var{import}}, \overline{\var{export}}, \sigrets, \sigcloss,A_\linear),  c_{\mathrm{main},c}, c_{\mathrm{main},d}) \rightsquigarrow (\mem, \reg\src{, \emptyset, \ms_\stk})
    }
  \end{mathpar}
\end{definition}
\lau{07-07-2018: TODO why don't the above definition care about the stack base?}

\begin{definition}[Plugging a component into a context]
  When $\var{comp'}$ is a context for component $\var{comp}$ and $\var{comp}' \bowtie \var{comp} \rightsquigarrow \Phi$, 
  then we write $\plug{\var{comp'}}{\var{comp}}$ for the execution configuration $\Phi$.
\end{definition}

\begin{lemma}
  \label{lem:context-trusted-addr-ampl}
  For components $\context$ and $\comp$,
  if
  \begin{itemize}
  \item $\emptyset \vdash \context$
  \item $\dom(\comp.\mscode) \vdash \comp$
  \item $\plug{\context}{\comp}$ is defined
  \end{itemize}
  Then
  \[
    \dom(\comp.\mscode) \vdash \context
  \]
\end{lemma}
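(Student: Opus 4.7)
The proof proceeds by unwinding the derivation $\emptyset \vdash \context$ and verifying that every premise that mentions $\ta$ either remains vacuous or is discharged by disjointness facts extracted from the linking hypothesis. First I would observe that from $\plug{\context}{\comp}$ being defined, the linking rule forces $\context.\mscode$ and $\comp.\mscode$ to be combined via disjoint union, so $\dom(\context.\mscode) \mathrel{\#} \dom(\comp.\mscode)$; similarly $\dom(\context.\msdata) \mathrel{\#} \dom(\comp.\mscode)$ follows from the top-level well-formedness condition $\dom(\mscode_3) \mathrel{\#} \dom(\msdata_3)$ imposed by linking. These two disjointness facts will do essentially all of the work.

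Next I would rebuild the derivation tree for $\dom(\comp.\mscode) \vdash \context$ by following the shape of the derivation for $\emptyset \vdash \context$. The top-level $\vdash$ rule requires $\dom(\context.\msdata) \mathrel{\#} \ta$, which is now exactly the second disjointness fact, and it requires the disjunctive side condition $(\emptyset \neq \dom(\mscode) \subseteq \ta) \vee (\dom(\mscode) \mathrel{\#} \ta \wedge \sigrets = \emptyset)$. For $\emptyset \vdash \context$ the right disjunct must have been chosen (the left one is impossible when $\ta = \emptyset$ and $\mscode$ is non-empty, and vacuously fine otherwise), so in particular we have $\context.\sigrets = \emptyset$; this is an intrinsic property of $\context$ and carries over unchanged. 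Combined with $\dom(\context.\mscode) \mathrel{\#} \dom(\comp.\mscode)$ this discharges the right disjunct for the new $\ta$.

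For the internal $\vdash_{\mathrm{comp-code}}$ judgments at each code address $a \in \dom(\context.\mscode)$, only the second rule mentions $\ta$, and it does so in the hypothesis $[a, a+\calllen-1] \subseteq \ta$. Since $a \in \dom(\context.\mscode)$ and $\dom(\context.\mscode) \mathrel{\#} \dom(\comp.\mscode) = \ta$, this inclusion is impossible, so the implication is vacuously true. The first comp-code rule (the one for the seal word) and the $\vdash_{\mathrm{comp-value}}$/$\vdash_{\mathrm{comp-export}}$ judgments make no reference to $\ta$ at all, so they transport verbatim. The ``no hidden calls'' side condition is likewise a property of $\context.\mscode$ alone and is preserved.

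The main obstacle, and really the only one that requires care, is making sure I have identified every place in the (large) component validity judgment where $\ta$ actually occurs, so that the argument is genuinely exhaustive; but since $\ta$ appears only in the two conditions above (the call-address hypothesis inside comp-code and the final disjunctive side condition), the disjointness of $\dom(\context.\mscode)$ and $\dom(\comp.\mscode)$ together with $\context.\sigrets = \emptyset$ is enough to conclude $\dom(\comp.\mscode) \vdash \context$.
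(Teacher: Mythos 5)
Your proposal is correct and takes essentially the same route as the paper, whose entire proof is the one-liner ``Follows by definition''; your write-up is simply that definitional check carried out explicitly. You correctly identify the only three occurrences of $\ta$ in the validity judgment (the call-address hypothesis in the second $\vdash_{\mathrm{comp-code}}$ rule, the disjunctive side condition, and $\dom(\msdata)\mathrel{\#}\ta$) and discharge each from the disjointness facts supplied by linking together with $\context.\sigrets=\emptyset$, which is exactly what the paper leaves implicit.
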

\begin{proof}
  Follows by definition.
\end{proof}

\section{Compiler}
\label{sec:compiler}

The compiler is the identity.

\section{Logical Relation}
In the following definitions, \src{blue} is used to indicate values related to the source machine. This is unlike previous definitions, where \src{blue} was used to indicate source language specific parts of definitions.
\subsection{Worlds}
\label{subsec:worlds}
\begin{theorem}
  \label{thm:recursive-domain-eq}
  There exists a c.o.f.e.\ $\Wor$ and preorder $\future$ such that $(\Wor,\future)$ and there exists an isomorphism $\xi$ such that
  \[
    \xi : \Wor \cong \blater (\Worldh \times \Worlds \times \Worldfs)
  \]
  and for $\hat{W},\hat{W}' \in \Wor$
  \[
    \hat{W'} \future \hat{W}\text{ iff }\xi(\hat{W'}) \future \xi(\hat{W})
  \]
\end{theorem}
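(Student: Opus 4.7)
The plan is to solve this guarded recursive domain equation using the standard machinery for constructing solutions to recursive equations in the category of complete ordered families of equivalences (c.o.f.e.s), as developed by Birkedal et al. Since the right-hand side is guarded by the later modality $\blater$, the solution exists by a contractive fixed-point argument rather than requiring any continuity or retract-projection setup.

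First I would make the recursive dependencies explicit by writing the right-hand side as a functor $F : \Wor \to \Wor$, where $F(X) = \blater(\Worldh(X) \times \Worlds(X) \times \Worldfs(X))$. Here the three world components are themselves built from the region types ($\Regionh$, $\Regions$, $\Regionfs$), whose state-transition systems and interpretations take worlds as parameters (since region invariants in this development mention values in $\lrv$, which themselves are world-indexed predicates). The nontrivial content of the construction is to verify that each of $\Worldh(-)$, $\Worlds(-)$, and $\Worldfs(-)$ is a locally non-expansive functor on $\Wor$, and therefore so is their product; composing with $\blater$ then yields a locally contractive functor $F$. Local contractiveness is exactly the hypothesis needed to invoke the general existence theorem for guarded recursive domain equations in c.o.f.e.s, which produces a c.o.f.e.\ $\Wor$ together with an isomorphism $\xi : \Wor \cong F(\Wor)$.

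Next I would define the preorder $\future$ on $\Wor$ by transporting along $\xi$: set $\hat W' \future \hat W$ iff $\xi(\hat W') \future \xi(\hat W)$, where the right-hand $\future$ is the componentwise pointwise future-world order on the later-guarded product of the three world kinds (heap, private stack, free stack). Reflexivity and transitivity of this relation on $\Wor$ follow immediately from reflexivity and transitivity of the future-world orders on each of $\Worldh$, $\Worlds$, and $\Worldfs$ under $\blater$, and the required equivalence $\hat W' \future \hat W \iff \xi(\hat W') \future \xi(\hat W)$ is then true by definition. Compatibility of $\future$ with the c.o.f.e.\ structure (non-expansiveness of $\future$ in both arguments) is inherited from the non-expansiveness of the component orders.

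The main obstacle I expect is the verification of local non-expansiveness for $\Worldh$, $\Worlds$, and $\Worldfs$ as functors of $\Wor$. Each of these world kinds packages region names together with region interpretations, where the interpretation of a shared heap region (for instance) is a world-indexed uniform predicate on memory segments, and these predicates must be interpreted in the value relation $\lrv$ that itself takes a world. The payoff from the $\blater$ on the outside is precisely that all such recursive occurrences of $\Wor$ inside the region interpretations appear under a later, so that non-expansiveness (rather than contractiveness) of the components suffices to make $F$ contractive overall. Once this bookkeeping is carefully laid out, the result is an immediate instance of the general recursive-domain-equation theorem, and the isomorphism $\xi$ together with the transported preorder satisfies the statement.
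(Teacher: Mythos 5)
Your proposal is correct and is exactly the argument the paper relies on (the theorem is stated without proof precisely because it is an instance of the standard existence theorem for guarded recursive domain equations in the category of c.o.f.e.s, with the preorder transported along $\xi$; the region-level future relation does not recurse into worlds, so the transported definition is well-founded). The only technical point you gloss over is that $\Wor$ occurs contravariantly in $\Worldh$, $\Worlds$ and $\Worldfs$ (as the domain of the monotone non-expansive function spaces into $\URel{\cdot}$), so $F$ must be set up as a locally contractive mixed-variance functor and the fixed point taken on the diagonal, which the standard theorem accommodates.
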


Where $\Worlds$, $\Worldh$, and $\Worldfs$ are defined as follows
\[
  \Worldh = \RegionName \parfun (\Regions + \Regionh)
\]
and
\[
  \Worlds = \RegionName \parfun (\Regions \times \Addr)
\]
and
\[
  \Worldfs = \RegionName \parfun \Regions
\]
where $\RegionName = \nats$.
\begin{multline*}
  \Regionh = 
  \{\pure \} \times (\Wor \monnefun \URel{\MemSeg^2}) \times \\
  (\Seal \parfun \Wor \monnefun \URel{\SealableCaps \times \SealableCaps})
\end{multline*}
and
\[
  \Regions = \left\{
  \begin{array}{l}
    \{\spatial \} \times (\Wor \monnefun \URel{\MemSeg^2}) \uplus \\
    \{\spatialo \} \times (\Wor \monnefun \URel{\MemSeg^2})\uplus \\ 
    \{\revoked\}
  \end{array} \right.
\]
where $\spatial$ and $\spatialo$ are regions governing segments of memory addressed by linear capabilities.
$\spatialo$ signifies that this region is addressable.
$\spatial$ signifies that the region is not owned and can thus not be addressed.
At the same time it signifies that if something else addresses it, it is a $\linear$ capability.
Finally, $\pure$ signifies that the region is only addressed by non-linear capabilities.
Notice that no region allows for both linear and non-linear capabilities to address it.
Notice also that $\pure$ regions have an additional component that allows them to claim ownership of part of the address space of seals and impose a relational invariant on everything signed with those seals.

\lau{I suppose you are thinking of $\nequal$ for UPreds?
  The way I have seen it before, it is defined as ``up to $n-1$ (including)''.
  More precisely: Define the $k$-cut of $p \in \UPred{A}$ to be \newcommand{\floor}[1]{\lfloor #1 \rfloor} \newcommand{\kcut}[2][k]{\floor{#2}_{#1}}
  \[
  \kcut{p} = \{ (j,a) \in p | j < k \}.
  \]
  We use this to define a family of equivalences $(\nequal)_{n=0}^{\infty}$ to be, for $p$ and $q$ in $\UPred{A}$
  \[
    p \nequal q \textit{ iff } \kcut[n]{p} = \kcut[n]{q}
  \]
  If we pick a different definition, we should check that it is indeed a well-defined $n$-equality (for instance, it should be the total relation for $\nequal[0]$).
} 


We introduce a bit of notation for projecting out each part of the world:
\[
  \begin{array}{l}
    \pwheap = \pi_1(W)\\
    \pwpriv = \pi_2(W)\\
    \pwfree = \pi_3(W)
  \end{array}
\]
as well as projections for the regions:
\[
  \begin{array}{l}
  \prv{(v,s,\phi_\pub,\phi,H)} = v \\
  \prv{\revoked} = \revoked     
  \end{array}
\]

We define erasure for worlds as follows:
\[
\erase{(W_{\mathrm{heap}},W_{\mathrm{priv}},W_{\mathrm{free}})}{S} =
  \left( \erase{W_{\mathrm{heap}}}{S},\erase{W_{\mathrm{priv}}}{S},\erase{W_{\mathrm{free}}}{S} \right)
\]
where erasure for each part of a world is defined as follows:
\[
\erase{W_{\mathrm{heap}}}{S} = \lambda r \ldotp \left\{
    \begin{array}{l}
      \prv{W_{\mathrm{heap}}(r)} \in S\\
      \bot
    \end{array}
  \right.
\]
\[
\erase{W_{\mathrm{priv}}}{S} = \lambda r \ldotp \left\{
    \begin{array}{l}
      \prv{\pregion{W_{\mathrm{priv}}(r)}} \in S \\
      \bot
    \end{array}
  \right.
\]
\[
\erase{W_{\mathrm{free}}}{S} = \lambda r \ldotp \left\{
    \begin{array}{l}
      \prv{W_{\mathrm{free}}(r)} \in S\\
      \bot
    \end{array}
  \right.
\]

The $\activeReg{}$ function takes a world and filters away all the revoked regions, so
\[
  \activeReg{W} = \erase{W}{\spatial,\spatialo,\pure}
\]

Disjoint union of worlds. Joins together two alike worlds with strictly disjoint ownership over $\spatialo$-regions. Two worlds can be joined together if all of their three parts agree on the region names and each of their regions can be joined together.
\[
  W_1 \oplus W_2 = W
  \text{ iff }
  \begin{array}[t]{l}
    \dom(\pwheap) = \dom(\pwheap[W_1]) = \dom(\pwheap[W_2]) \tand \\
    \dom(\pwfree) = \dom(\pwfree[W_1]) = \dom(\pwfree[W_2]) \tand \\
    \dom(\pwpriv) = \dom(\pwpriv[W_1]) = \dom(\pwpriv[W_2]) \tand \\
    \forall r \in \dom(\pwheap) \ldotp \pwheap(r) = \pwheap[W_1](r) \oplus \pwheap[W_2](r) \tand \\
    \forall r \in \dom(\pwfree) \ldotp \pwfree(r) = \pwfree[W_1](r) \oplus \pwfree[W_2](r) \tand \\
    \forall r \in \dom(\pwpriv) \ldotp \pi_1(\pwpriv(r)) = \pi_1(\pwpriv[W_1](r)) \oplus \pi_1(\pwpriv[W_2](r))
  \end{array}
\]
$\oplus$ on regions is defined as follows
\begin{align*}
  (\pure,H,H_\sigma) \oplus (\pure,H,H_\sigma) =  & \; (\pure,H,H_\sigma) \\
  (\spatial,H) \oplus (\spatial,H) =  & \; (\spatial,H) \\
  (\spatialo,H) \oplus (\spatial,H) = & \; (\spatial,H) \oplus (\spatialo,H)\\
                                           =  & \; (\spatialo,H)
\end{align*}
and for all other cases $\oplus$ is undefined. Specifically, $\oplus$ is not defined when both sides are a $\spatialo$-region. It is further not defined if the two sides do not agree on region type or heap or sealed value relations.

\begin{lemma}
  \label{lem:oplus-assoc-comm}
  $\oplus$ is associative and commutative.
  Also, left-hand sides in the commutativity and associativity laws are defined whenever the right-hand sides are defined and vice versa.
\end{lemma}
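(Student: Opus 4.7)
The proof proceeds by first reducing the claim to the region level and then carrying out an exhaustive case analysis on region types. Since world-level $\oplus$ is defined pointwise on the three components $\pwheap$, $\pwpriv$, $\pwfree$, and at each region name, both the domain conditions and the equality of the joined regions are pointwise conditions, it suffices to establish associativity and commutativity (together with the ``defined iff'' property) for $\oplus$ on individual regions; the world-level statement then follows by applying these facts at every region name and observing that the three domain-agreement conditions are preserved.

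For commutativity on regions, I would just read it off the definition: the two cases $(\pure,H,H_\sigma)\oplus(\pure,H,H_\sigma)$ and $(\spatial,H)\oplus(\spatial,H)$ are obviously symmetric, and the mixed $(\spatialo,H)\oplus(\spatial,H)$ case is defined by the definition itself to equal the symmetric variant. All other cases are undefined on both sides. So $r_1\oplus r_2$ is defined iff $r_2\oplus r_1$ is defined, and they agree when defined.

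For associativity, the plan is a case split on the triple of region types $(r_1,r_2,r_3)$. The only triples where $r_1\oplus r_2$ and $r_2\oplus r_3$ can even start to be defined are those where the three regions share a common underlying type (all three $\pure$ with matching $H,H_\sigma$, or all three drawn from $\{\spatial,\spatialo\}$ with matching $H$). In the homogeneous $\pure$ and all-$\spatial$ cases both bracketings trivially reduce to the same region. The interesting case is when the three regions are drawn from $\{\spatial,\spatialo\}$: here $\oplus$ is undefined whenever two arguments are $\spatialo$, so the bracketed sums on either side are defined exactly when the triple contains at most one $\spatialo$. I would verify, for each of the four admissible patterns $(\spatial,\spatial,\spatial)$, $(\spatialo,\spatial,\spatial)$, $(\spatial,\spatialo,\spatial)$, $(\spatial,\spatial,\spatialo)$, that both $(r_1\oplus r_2)\oplus r_3$ and $r_1\oplus(r_2\oplus r_3)$ are defined and both evaluate to $(\spatialo,H)$ if some $r_i=(\spatialo,H)$ and to $(\spatial,H)$ otherwise. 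The remaining triples (two or three $\spatialo$'s, or a mix of $\pure$ with $\spatial/\spatialo$, or mismatched $H$/$H_\sigma$) are undefined under both bracketings, giving the ``defined iff'' direction.

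The main obstacle, and the only step that requires genuine care rather than mechanical inspection, is tracking the linearity constraint embodied by the fact that $\spatialo\oplus\spatialo$ is undefined: one must check that both bracketings detect a forbidden double-$\spatialo$ triple, which they do because an intermediate $\spatialo$ arising from one $\spatialo$ summand collides with the remaining $\spatialo$ summand in exactly the same way under either association. Once this is done at the region level, lifting to $\Wor$ is routine: the domain conditions on $\pwheap,\pwpriv,\pwfree$ are manifestly symmetric and associative as conditions on sets of region names, and the pointwise equalities transport associativity and commutativity from regions to worlds.
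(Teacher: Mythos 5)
Your proposal is correct and matches the paper's approach: the paper simply states that the lemma ``follows easily from the definitions,'' and your pointwise reduction to regions followed by the case analysis on $\{\pure,\spatial,\spatialo,\revoked\}$ (with the observation that a double-$\spatialo$ triple is rejected under either bracketing) is exactly the expansion of that one-line argument.
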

\begin{proof}
  Follows easily from the definitions.
\end{proof}

\begin{lemma}[$\oplus$ and future worlds]
  \label{lem:oplus-future-distr}
  If $W' \future W_1 \oplus W_2$, then there exist $W_1',W_2'$ such that $W' = W_1' \oplus W_2'$ and $W_1' \future W_1$ and $W_2' \future W_2$.
\end{lemma}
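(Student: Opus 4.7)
The plan is to construct $W_1'$ and $W_2'$ pointwise on region names, since both $\oplus$ and $\future$ are (by inspection of their definitions) defined componentwise across the three parts $\pwheap$, $\pwpriv$, $\pwfree$ and, within each part, regionwise. So it suffices to show: for every region name $r$ and every choice of regions $R_1, R_2$ with $R_1 \oplus R_2$ defined and every $R' \future R_1 \oplus R_2$, there exist $R_1', R_2'$ with $R' = R_1' \oplus R_2'$, $R_1' \future R_1$, and $R_2' \future R_2$. Given such a splitting, we define $W_i'(r)$ to be the $i$-th piece, and appeal to Lemma~\ref{lem:oplus-assoc-comm} to conclude that $W_1' \oplus W_2'$ is everywhere defined and equals $W'$.

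The regionwise splitting proceeds by case analysis on $R_1 \oplus R_2$, mirroring the three defining clauses of $\oplus$. In the $\pure$ case, both summands must equal $R'$'s base $(\pure, H, H_\sigma)$, and any permissible future move of $R'$ (for a $\pure$ region, nothing changes structurally since $H$ and $H_\sigma$ are fixed) is assigned symmetrically: $R_1' = R_2' = R'$. In the $(\spatial, H) \oplus (\spatial, H)$ case both halves are non-owning, and again we duplicate: $R_1' = R_2' = R'$; any allowed transition of a $\spatial$ region (e.g. to $\revoked$, or staying $\spatial$ with the same $H$) is consistent on both sides. The interesting case is $(\spatialo, H) \oplus (\spatial, H)$: here the $\spatialo$ side is the unique ``owner''. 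We let $R_1'$ (the owning side) inherit $R'$'s state and let $R_2' = (\spatial, H)$ stay put — unless $R'$ has become $\revoked$, in which case we set $R_2' = \revoked$ as well, since the non-owning $\spatial$ witness is only meaningful as long as ownership persists. In either subcase the join $R_1' \oplus R_2'$ is defined and equals $R'$, and each $R_i' \future R_i$ by the allowed transitions.

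The main obstacle is the $\spatialo$–$\spatial$ case, precisely because $\oplus$ is asymmetric there and $\spatialo \oplus \spatialo$ is undefined. We must ensure our chosen $R_1', R_2'$ never both end up as $\spatialo$, which follows from the fact that a $\spatial$ region's future is either $\spatial$ (with the same $H$) or $\revoked$ — never $\spatialo$ — so $R_2'$ remains $\spatial$ or $\revoked$ and the join with $R_1'$ stays in the defined part of $\oplus$. A small related subtlety is that when the owning side transitions to $\revoked$, the matching non-owning $\spatial$ witness must also be revoked to preserve the invariant that ownership and witness co-exist; this is what motivates the ``unless $R' = \revoked$'' sub-clause above.

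Finally, the three parts $\pwheap$, $\pwpriv$, and $\pwfree$ are handled independently by the same regionwise construction (with the trivial modification for $\pwpriv$ that each region carries an additional $\Addr$ component which is propagated unchanged, since $\oplus$ on private-stack regions only mixes the region part). Reassembling, $W_1' \defeq (\erase{\ldots}{}{\text{constructed}})$ and $W_2'$ analogously satisfy $W' = W_1' \oplus W_2'$ and $W_i' \future W_i$, which closes the proof.
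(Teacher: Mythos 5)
Your overall strategy---constructing $W_1'$ and $W_2'$ regionwise and checking the three defining clauses of $\oplus$---is the same as the paper's, but two of your case analyses contain genuine gaps. The more serious one is the $(\spatial,H)\oplus(\spatial,H)$ case. You justify the symmetric assignment $R_1'=R_2'=R'$ by claiming that a $\spatial$ region's future is only $\spatial$ or $\revoked$, ``never $\spatialo$''. That is false: the future-region relation includes the rule $(\spatialo,H)\future(\spatial,H)$, and the surrounding text explicitly notes that $\spatial$ regions may become $\spatialo$ (this is what makes the system affine rather than linear). So $R'$ may well be $(\spatialo,H)$, and your duplication then yields $(\spatialo,H)\oplus(\spatialo,H)$, which is undefined---precisely the failure mode this lemma exists to avoid. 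The repair is the asymmetric assignment the paper uses: when $R'$ is $\spatialo$, give $\spatialo$ to one side and $\spatial$ to the other (both are legal futures of $(\spatial,H)$, and their join is $(\spatialo,H)=R'$).

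The second gap is that you never handle regions that are \emph{new} in $W'$. The future-world relation only requires $\dom(W'.i)\supseteq\dom(m_i(W.i))$ (via an injective renaming $m_i$, which you also silently take to be the identity), so $W'$ may contain regions with no counterpart in $W_1\oplus W_2$. Since $\oplus$ on worlds demands that all three worlds have \emph{equal} region domains, $W_1'$ and $W_2'$ must be defined on these new names too, and the same care is needed there: duplicate $\pure$ and $\spatial$ regions, but assign a new $\spatialo$ region as $\spatialo$ on one side and $\spatial$ on the other. Your construction as written leaves $W_1'$ and $W_2'$ undefined on exactly those names, so $W_1'\oplus W_2'$ cannot equal $W'$.
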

\begin{proof}
  We define $W_1'$ and $W_2'$ to have the same regions as $W'$ with a possibly different visibility.
  For regions that are present in $W_1$ and $W_2$, we give them the same visibility in $W_1'$ and $W_2'$ respectively.
  For regions that are new in $W'$, we make them $\pure$ or $\spatial$ in both $W_1'$ and $W_2'$ if they are in $W'$ and we make them $\spatialo$ in $W_1'$ but $\spatial$ in $W_2'$ if they are spatial in $W'$.
  It is then easy to check that the required equations hold.
\end{proof}

We also define a second disjoint union operator of worlds:
\[
  W_1 \uplus W_2 = W
  \text{ iff }
  \begin{array}[t]{l}
    \dom(\pwheap) = \dom(\pwheap[W_1]) \uplus \dom(\pwheap[W_2]) \tand \\
    \dom(\pwfree) = \dom(\pwfree[W_1]) \uplus \dom(\pwfree[W_2]) \tand \\
    \dom(\pwpriv) = \dom(\pwpriv[W_1]) \uplus \dom(\pwpriv[W_2]) \\
  \end{array}
\]
The two operators $\uplus$ and $\oplus$ are quite different.
The difference is most clear in the treatment of pure regions: $\uplus$ allows both worlds to have the same pure region, while $\oplus$ forbids this.
To understand this different treatment ($W_1 \uplus W_2$ and $W_1 \oplus W_2$), you should understand that the two are intended for different usages of worlds.
The $W_1 \oplus W_2$ operator treats the worlds as specifications of authority: taking the disjoint union of worlds specifying non-exclusive ownership of a block of memory is allowed and produces a new world that also specifies non-exclusive ownership of world. 
The $W_1 \uplus W_2$  operator treats worlds as specifications of memory contents: taking the disjoint union of worlds specifying the presence of the same memory range is not allowed.
The latter operator is used in the logical relation for components which specifies (among other things) that the world should specify the presence of the component's data memory.
Linking two components then produces a new component with both components' data memory.
The linked component is then valid in a world that has the combined memory presence specifications, not the combined authority.
In other words, $\oplus$ specifies disjoint authority distribution, while $\uplus$ specifies disjoint memory allocation.

Note also that this picture is further complicated by our usage of non-authority-carrying $\spatial$ regions.
They are really only there in a world $W$ as a shadow copy of a $\spatialo$ region in another world $W'$ that $W$ will be combined with.
The shadow copy is used for specifying when a memory satisfies a world: the memory should contain all memory ranges that anyone has authority over, not just the ones whose authority belongs to the memory itself.
For example, if a register contains a linear pointer to a range of memory, then the register file will be valid in a world where the corresponding region is $\spatialo$, while the memory will be valid in a world with the corresponding region only $\spatial$.
However, for the memory to satisfy the world, the block of memory needs to be there, i.e. the memory should contain blocks of memory satisfying every region that is $\spatialo$, $\pure$, but also just $\spatial$ (because it may be $\spatialo$ in, for example, the register file's world).

\begin{lemma}
  \label{lem:uplus-assoc-comm}
  $\uplus$ is associative and commutative.
\end{lemma}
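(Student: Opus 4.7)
The plan is to reduce both properties to the corresponding well-known properties of disjoint union on sets (or, equivalently, on partial functions with disjoint domains). The world-level $\uplus$ is defined entirely in terms of three constraints on the domains of the heap, priv, and free components, using the standard set-theoretic $\uplus$; so commutativity and associativity lift componentwise from the set level.

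First, for commutativity I would unfold $W_1 \uplus W_2 = W$ and $W_2 \uplus W_1 = W$ side by side. The three defining equations read
\[
  \dom(\pwheap) = \dom(\pwheap[W_1]) \uplus \dom(\pwheap[W_2]), \quad
  \dom(\pwfree) = \dom(\pwfree[W_1]) \uplus \dom(\pwfree[W_2]), \quad
  \dom(\pwpriv) = \dom(\pwpriv[W_1]) \uplus \dom(\pwpriv[W_2]),
\]
and each is symmetric in $W_1, W_2$ because $\uplus$ on sets is commutative (the disjointness side-condition is also symmetric). Hence the two relations coincide. If one reads the definition as also fixing the value of $W$ at each region name to be inherited from the unique argument whose domain contains that name (which is how $\uplus$ on partial functions is normally understood), then this value assignment is likewise independent of the order of the arguments, so the resulting $W$ is the same.

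Second, for associativity I would take $(W_1 \uplus W_2) \uplus W_3$ and $W_1 \uplus (W_2 \uplus W_3)$ and again compare domain by domain. Assuming the left-hand side is defined, the set-level equation
\[
  \dom(\pwheap) = \bigl(\dom(\pwheap[W_1]) \uplus \dom(\pwheap[W_2])\bigr) \uplus \dom(\pwheap[W_3])
\]
holds, and by associativity of $\uplus$ on sets this equals $\dom(\pwheap[W_1]) \uplus \bigl(\dom(\pwheap[W_2]) \uplus \dom(\pwheap[W_3])\bigr)$, which is exactly the heap condition for the right-hand grouping; the priv and free components are handled identically. The disjointness hypotheses needed to form the inner unions on each side are equivalent, since pairwise disjointness of $\dom(\pwheap[W_1]), \dom(\pwheap[W_2]), \dom(\pwheap[W_3])$ is the common underlying requirement. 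So both groupings are defined under exactly the same conditions and yield the same $W$, and the same remark as above handles the values if they are part of the definition.

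There is really no serious obstacle here; the only care needed is to keep the (mild) ambiguity in the definition of $\uplus$ in mind — namely that it is stated as a relation constraining domains — and to argue, in the step above about inherited values, that the implicit per-region assignment is well-defined precisely because of disjointness. Everything else is a direct appeal to Lemma \ref{lem:oplus-assoc-comm}'s set-level analogue (commutativity and associativity of disjoint union of sets), so I would simply write ``follows from the definitions and commutativity/associativity of $\uplus$ on sets'' and dispatch each of the three components uniformly.
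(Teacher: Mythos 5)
Your proposal is correct and matches the paper's proof, which simply states that the result ``follows easily from the definitions'': unfolding $\uplus$ componentwise on the heap, priv, and free parts and appealing to commutativity and associativity of disjoint union of sets is exactly the intended argument. Your extra care about the per-region value assignment being well-defined by disjointness is a reasonable elaboration but not a departure from the paper's route.
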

\begin{proof}
  Follows easily from the definitions.
\end{proof}

\begin{lemma}[Odd distributivity of $\oplus$ and $\uplus$]
  \label{lem:oplus-distr-uplus}
  \begin{equation*}
    (W_1\oplus W_2) \uplus (W_3 \oplus W_4) = (W_1 \uplus W_3) \oplus (W_2 \uplus W_4)
  \end{equation*}
  Also, the left expression is defined iff the right expression is.
\end{lemma}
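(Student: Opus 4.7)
The plan is to prove both the definedness equivalence and the value equality by unfolding the definitions of $\oplus$ and $\uplus$ on worlds and reducing to pointwise (per-region-name) reasoning. Both operators act independently on the three world components $\pwheap$, $\pwpriv$, $\pwfree$, and within each component the definedness and value at a region name depend only on the values (or absence) of the four $W_i$ at that same region name. So it suffices to fix one of the three components, fix a region name $r$, and argue about the four partial functions $W_1(r), W_2(r), W_3(r), W_4(r)$. Associativity/commutativity of $\oplus$ on regions (Lemma~\ref{lem:oplus-assoc-comm}) will not be needed directly, because the disjoint-domain structure of $\uplus$ prevents the ``cross'' cases from arising in the first place.

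For the value equality, once both sides are defined, let $D_{12} = \dom(W_1) = \dom(W_2)$ and $D_{34} = \dom(W_3) = \dom(W_4)$, disjoint by the $\uplus$ in the LHS. For $r \in D_{12}$, the LHS evaluates to $(W_1 \oplus W_2)(r) = W_1(r) \oplus W_2(r)$ (since $r \notin D_{34}$, the outer $\uplus$ picks out the left operand), and the RHS evaluates to $(W_1 \uplus W_3)(r) \oplus (W_2 \uplus W_4)(r) = W_1(r) \oplus W_2(r)$ (since $r \notin \dom(W_3)$ and $r \notin \dom(W_4)$, the inner $\uplus$'s both collapse to the left operand). The case $r \in D_{34}$ is symmetric. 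Hence both sides agree on every region name in every component.

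For the definedness equivalence, the forward direction is immediate: given $D_{12}$ and $D_{34}$ as above with $D_{12} \cap D_{34} = \emptyset$, I get $\dom(W_1) \cap \dom(W_3) = \dom(W_2) \cap \dom(W_4) = \emptyset$ and the common union $\dom(W_1) \cup \dom(W_3) = \dom(W_2) \cup \dom(W_4) = D_{12} \cup D_{34}$ required by the outer $\oplus$ on the RHS, and the pointwise $\oplus$-compatibility required by the RHS reduces, by the unfolding above, to the pointwise $\oplus$-compatibility already given by the LHS. The reverse direction uses the pointwise $\oplus$-compatibility structurally: for $r \in \dom(W_1 \uplus W_3) = \dom(W_2 \uplus W_4)$, exactly one of $W_1(r), W_3(r)$ is defined and exactly one of $W_2(r), W_4(r)$ is defined; by the way $\uplus$ distributes values and by matching the partitions, we recover $\dom(W_1) = \dom(W_2)$ and $\dom(W_3) = \dom(W_4)$, from which the LHS conditions follow.

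The main obstacle is this last step of the iff, where the two partitions $(\dom(W_1), \dom(W_3))$ and $(\dom(W_2), \dom(W_4))$ of the common set $\dom(W_1) \cup \dom(W_3)$ must be shown to coincide. This does not follow from set theory alone; it relies on reading back the structural information transported by the pointwise $\oplus$ on regions, together with the intended disjoint-partition role that $\uplus$ plays on the two sides. Once this is pinned down, the remaining bookkeeping across the three world components is routine, and the theorem follows by combining the per-component, per-region results.
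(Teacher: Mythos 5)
Your unfolding of the definitions is the right approach (the paper itself only says ``follows by definition-chasing''), and your treatment of the value equality and of the forward direction of the definedness claim is correct: with $D_{12}=\dom(\pwheap[W_1])=\dom(\pwheap[W_2])$ and $D_{34}=\dom(\pwheap[W_3])=\dom(\pwheap[W_4])$ disjoint, both sides collapse pointwise to $W_1(r)\oplus W_2(r)$ on $D_{12}$ and to $W_3(r)\oplus W_4(r)$ on $D_{34}$, in each of the three world components.

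However, the step you yourself flag as ``the main obstacle'' --- recovering $\dom(W_1)=\dom(W_2)$ and $\dom(W_3)=\dom(W_4)$ from definedness of the right-hand side --- does not go through, and no amount of ``reading back structural information transported by the pointwise $\oplus$ on regions'' will fix it. The region-level compatibility required by the outer $\oplus$ on the right only tells you that, at each $r$, the value contributed by $W_1\uplus W_3$ is $\oplus$-compatible with the value contributed by $W_2\uplus W_4$; it carries no information about \emph{which} of the two summands supplied that value, so it cannot force the partitions $(\dom(W_1),\dom(W_3))$ and $(\dom(W_2),\dom(W_4))$ to line up. Concretely, take a single region name $r_0$ and a pure region $\iota=(\pure,H,H_\sigma)$, and let $W_1$ and $W_4$ both have heap part $[r_0\mapsto\iota]$ while $W_2$ and $W_3$ are empty (all other components empty everywhere). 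Then $W_1\uplus W_3=W_1$ and $W_2\uplus W_4=W_4$ are defined, and $(W_1\uplus W_3)\oplus(W_2\uplus W_4)=W_1\oplus W_4$ is defined since $\iota\oplus\iota=\iota$; but $W_1\oplus W_2$ is undefined because $\dom(\pwheap[W_1])\neq\dom(\pwheap[W_2])$, so the left-hand side is undefined. Thus the reverse implication of the ``iff'' fails on the literal definitions, and your proof cannot be completed as planned. The honest statement of what is provable by definition-chasing is the value equality together with the implication from left-definedness to right-definedness (which is the only direction the paper actually uses, e.g.\ in Lemma~\ref{lem:compat-initial-ec}); the converse needs an extra hypothesis such as definedness of all four inner expressions $W_1\oplus W_2$, $W_3\oplus W_4$, $W_1\uplus W_3$, $W_2\uplus W_4$.
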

\begin{proof}
  Follows by definition-chasing.
\end{proof}

\subsection{Future world}
The future world relation becomes:
\begin{mathpar}
  \inferrule{ \text{for $i \in \{\mathrm{heap},\mathrm{free},\mathrm{priv} \}$} \\ \exists m_i : \RegionName \fun \RegionName, \text{ injective}\ldotp \dom(W'.i) \supseteq \dom(m_i(W.i)) \wedge \forall r \in \dom(W.i)\ldotp W'.i(m_i(r)) \future W.i(r) }
            { W' \future W }
\end{mathpar}



Future regions allow $\spatial$ regions to become $\revoked$.
Also: the future region relation allows $\spatial$ regions to become $\spatialo$, which expresses that our system is affine, rather than linear.
\lau{22-3-2018: TODO need extra rules for regions with seal interpretations}
\begin{mathpar}
  \inferrule{ }{ \revoked \future (\spatial,\_)}
\and
  \inferrule{ }{ \revoked \future \revoked }
  \and
  \inferrule{ }{(\spatialo,H) \future (\spatial,H)}
  \and
  \inferrule{ }{(\spatialo,H) \future (\spatialo,H)}
  \and
  \inferrule{ }{(\spatial,H) \future (\spatial,H)}
\end{mathpar}

\begin{definition}[The pure part of a world]
  \label{def:purePart}
  For any world $W$, we define
  \begin{align*}
    \purePart{W} &\defeq (\purePart{\pwheap},\purePart{\pwpriv},\purePart{\pwfree})\\
    \purePart{W_\var{heap}} &\defeq
                       \begin{cases}
                         W_{\var{heap}}(r) & \text{if } W_\var{heap}(r) = (pure,\var{sm})\\
                         (\spatial,\var{sm}) & \text{if } W_\var{heap}(r) = (\spatial,\var{sm})\\
                         (\spatial,\var{sm}) & \text{if } W_\var{heap}(r) = (\spatialo,\var{sm})\\
                         \revoked & \text{if } W_\var{heap}(r) = \revoked\\
                       \end{cases}\\
    \purePart{W_\var{priv}} &\defeq
                       \begin{cases}
                         ((\spatial,\var{sm}),\opc) & \text{if } W_\var{priv}(r) = ((\spatial,\var{sm}),\opc)\\
                         ((\spatial,\var{sm}),\opc) & \text{if } W_\var{priv}(r) = ((\spatialo,\var{sm}),\opc)\\
                         (\revoked,\opc) & \text{if } W_\var{priv}(r) = (\revoked,\opc)\\
                       \end{cases}\\
    \purePart{W_\var{free}} &\defeq
                       \begin{cases}
                         W_{\var{free}}(r) & \text{if } W_{\var{free}}(r) = (pure,\var{sm})\\
                         (\spatial,\var{sm}) & \text{if } W_{\var{free}}(r) = (\spatial,\var{sm})\\
                         (\spatial,\var{sm}) & \text{if } W_{\var{free}}(r) = (\spatialo,\var{sm})\\
                         \revoked & \text{if } W_{\var{free}}(r) = \revoked\\
                       \end{cases}
  \end{align*}
\end{definition}

\begin{lemma}[purePart is duplicable]
  \label{lem:purePart-duplicable}
 For all $W$, we have that $W = W \oplus \purePart{W}$.
\end{lemma}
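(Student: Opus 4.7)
The plan is to unfold both sides and check the equation region by region, in each of the three parts (heap, private, free) of the world. By the definition of $\oplus$ on worlds, $W = W \oplus \purePart{W}$ reduces to checking (i) that the domains on the three parts of $\purePart{W}$ agree with those of $W$, which is immediate from the definition of $\purePart{}$ (it is defined pointwise and preserves the domain), and (ii) that for every region name $r$ in each of the three parts, the equation $W.i(r) = W.i(r) \oplus \purePart{W.i}(r)$ holds (with the caveat, in the private part, that only the first projection is required to match by the definition of $\oplus$).

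I would then do a straightforward case analysis on the kind of region at $r$. There are four cases, and each is directly witnessed by one clause in the definition of $\oplus$ on regions combined with the corresponding clause in Definition~\ref{def:purePart}: for a $\pure$ region $(\pure, H, H_\sigma)$, we have $\purePart{}$ leaves it unchanged, and $(\pure, H, H_\sigma) \oplus (\pure, H, H_\sigma) = (\pure, H, H_\sigma)$; for a $\spatial$ region $(\spatial, H)$, $\purePart{}$ leaves it unchanged and $(\spatial, H) \oplus (\spatial, H) = (\spatial, H)$; for a $\spatialo$ region $(\spatialo, H)$, $\purePart{}$ degrades it to $(\spatial, H)$ and we use $(\spatialo, H) \oplus (\spatial, H) = (\spatialo, H)$; finally, for $\revoked$, $\purePart{}$ leaves it $\revoked$ and $\revoked \oplus \revoked = \revoked$ (which is the natural completion of the $\oplus$ clauses for the trivial revoked case).

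The only mildly subtle point, and the place I expect to spend the most care, is the private-stack component: the $\oplus$ on worlds only equates the first projections of $\pwpriv(r)$, while $\purePart{}$ as given preserves the $\opc$ second component. Thus the equation on the private part follows by (a) applying the region case analysis above to the $\Regions$ components, and (b) observing that the $\opc$ field of $\purePart{\pwpriv}(r)$ is definitionally the same as that of $\pwpriv(r)$, so nothing in the $\oplus$ constraint is violated. This is really just bookkeeping, but it is the one place where one must be careful not to mis-read the $\oplus$ definition as requiring the full pair to match.

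Finally, I would invoke the fact (which should be recorded, or is implicit in the definition) that the three pointwise equations on heap, private, and free parts together imply $W = W \oplus \purePart{W}$ under $\oplus$. No induction or use of the future-world relation is needed; the lemma is purely algebraic and follows from idempotence-like properties of $\oplus$ together with the design of $\purePart{}$ as exactly the shadow needed to make $\oplus$ close up on $W$.
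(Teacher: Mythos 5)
Your proof is correct and matches the paper's (the paper simply states that the lemma ``follows from the definition of $\purePart{}$ and $\oplus$,'' which is exactly the pointwise, region-by-region case analysis you spell out, including the observation that $\oplus$ on the private part only constrains the first projection). Your remark about $\revoked \oplus \revoked$ is well taken: as literally written, the paper's $\oplus$ on regions leaves that case undefined, so the clause $\revoked \oplus \revoked = \revoked$ you supply is indeed a needed (and clearly intended) completion for the lemma to hold for all $W$.
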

\begin{proof}
  Follows from the definition of $\purePart{}$ and $\oplus$.
\end{proof}

\begin{lemma}[purePart is idempotent]
  \label{lem:purePart-idempotent}
  For all $W$, we have that $\purePart{W} = \purePart{}(\purePart{W})$ and $\purePart{W} = \purePart{W} \oplus \purePart{W}$.
\end{lemma}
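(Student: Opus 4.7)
The plan is to prove both parts by pointwise case analysis on each region in each of the three components of the world ($\pwheap$, $\pwpriv$, $\pwfree$). The key preliminary observation is that $\purePart{W}$ produces a world in a canonical shape: after one application, no region is $\spatialo$; every region lies in $\{\pure, \spatial, \revoked\}$ (plus an $\opc$ component in the $\priv$ part, which is left unchanged throughout).

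For the first equation $\purePart{W} = \purePart{}(\purePart{W})$, I would simply unfold the definition of $\purePart{}$ from Definition~\ref{def:purePart} and observe that on each of the three canonical forms it acts as the identity: $\pure \mapsto \pure$, $\spatial \mapsto \spatial$, $\revoked \mapsto \revoked$. Since these are the only region values produced by the first application, the second application leaves everything fixed, and idempotence follows pointwise in $r$ for each of the three sub-worlds. The $\opc$ component in the $\priv$-world is carried along unchanged in both applications, so it contributes nothing to worry about.

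For the second equation $\purePart{W} = \purePart{W} \oplus \purePart{W}$, I would unfold the definition of $\oplus$ on regions and check the three cases that can arise from a pure-part region. We get $(\pure,H,H_\sigma) \oplus (\pure,H,H_\sigma) = (\pure,H,H_\sigma)$ and $(\spatial,H) \oplus (\spatial,H) = (\spatial,H)$ directly from the defining equations of $\oplus$, and $\revoked \oplus \revoked = \revoked$ by the same conventional extension that underlies Lemma~\ref{lem:purePart-duplicable}. The crucial point—and the whole reason that $\purePart{}$ coerces $\spatialo$ down to $\spatial$—is that $\spatialo \oplus \spatialo$ is undefined; since $\purePart{}$ never produces $\spatialo$ regions, this problematic case never arises. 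Combining this with the fact that $\oplus$ is defined whenever all region unions are defined gives $\purePart{W} \oplus \purePart{W} = \purePart{W}$.

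The only genuine obstacle is a bookkeeping one: checking that the domains match across the three sub-worlds (which is immediate since $\purePart{}$ preserves $\dom$), and confirming the convention that $\revoked \oplus \revoked = \revoked$ is in force, since this case is not listed explicitly among the defining clauses of $\oplus$ but is needed here and in the preceding duplicability lemma. Once this is settled, the full proof is essentially one line per case.
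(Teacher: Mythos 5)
Your proof is correct, but for the second equation you take a more elementary route than the paper. The paper dispatches $\purePart{W} = \purePart{W}\oplus\purePart{W}$ in one line: instantiate Lemma~\ref{lem:purePart-duplicable} at $\purePart{W}$ to get $\purePart{W} = \purePart{W}\oplus\purePart{}(\purePart{W})$, then rewrite the right-hand factor using the first equation. You instead redo the region-level case analysis ($\pure\oplus\pure$, $\spatial\oplus\spatial$, $\revoked\oplus\revoked$) from scratch. Both arguments are sound; the paper's derivation is shorter and avoids touching the definition of $\oplus$ again, while yours is self-contained and makes explicit \emph{why} the coercion of $\spatialo$ to $\spatial$ is what makes self-combination defined --- a point the paper's proof leaves implicit inside Lemma~\ref{lem:purePart-duplicable}. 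Your observation that $\revoked\oplus\revoked$ is not among the listed clauses of $\oplus$ on regions is a fair catch, but note it is not specific to your approach: the paper's own Lemma~\ref{lem:purePart-duplicable} already requires that case (any world with a revoked region must satisfy $W = W\oplus\purePart{W}$, forcing $\revoked\oplus\revoked=\revoked$), so the convention is indeed in force and your proof does not introduce a new obligation.
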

\begin{proof}
  The first part follows easily from the definition.
  The second statement follows from the first, together with Lemma~\ref{lem:purePart-duplicable}.
\end{proof}

\begin{lemma}[purePart respects $\oplus$]
  \label{lem:purePart-oplus}
  For all $W_1,W_2$, we have that $\purePart{W_1\oplus W_2} = \purePart{W_1} \oplus \purePart{W_2} = \purePart{W_1} = \purePart{W_2}$.
  Also, all worlds in the equations above are defined when $W_1\oplus W_2$ is defined (but not necessarily vice versa).
\end{lemma}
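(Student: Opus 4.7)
The plan is to reduce everything to a pointwise case analysis on region names, since both $\oplus$ and $\purePart{}$ act componentwise across the heap/priv/free parts of a world and pointwise within each component. So I would fix an arbitrary region name $r$ and, in each of the three components, do a finite case analysis on the pair $(W_1.i(r),W_2.i(r))$, using the fact that $W_1 \oplus W_2$ being defined forces these pairs to match one of only a few shapes.

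Concretely, inspecting the definition of $\oplus$ on regions, when $(W_1\oplus W_2).i(r)$ is defined the pair $(W_1.i(r), W_2.i(r))$ must be one of: two identical pure regions $(\pure,H,H_\sigma)$ (heap/free only), two identical spatial regions $(\spatial,H)$, a pair $((\spatialo,H),(\spatial,H))$ or its symmetric, or two revoked regions. In each of these cases I would apply $\purePart{}$ to both sides and to the $\oplus$-combined region and just read off the results from Definition~\ref{def:purePart}: the only effect of $\purePart{}$ is to demote $\spatialo$ to $\spatial$, so in every case $\purePart{W_1}(r) = \purePart{W_2}(r)$, and this common pure-region combined with itself under $\oplus$ returns the same pure-region, which is also exactly $\purePart{(W_1\oplus W_2)}(r)$. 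Chaining these equalities at every $r$ and in every component yields the three equations claimed.

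For the definedness part, the key observation is that after applying $\purePart{}$ the regions at every $r$ are equal and of the form $(\pure,\ldots)$, $(\spatial,\ldots)$, or $\revoked$, and the $\oplus$-rules make $(\spatial,H)\oplus(\spatial,H)$, $(\pure,H,H_\sigma)\oplus(\pure,H,H_\sigma)$, and $\revoked\oplus\revoked$ all defined. Hence $\purePart{W_1}\oplus\purePart{W_2}$ inherits definedness from $W_1\oplus W_2$; and $\purePart{}$ itself preserves definedness trivially. The reverse direction fails because if both $W_1$ and $W_2$ have $\spatialo$ at some $r$ (with the same $H$) then $W_1\oplus W_2$ is undefined while $\purePart{W_1}\oplus\purePart{W_2}$ is still defined, which is why the statement is carefully phrased ``but not necessarily vice versa.''

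The hard part is essentially bookkeeping: there is no ingenuity involved, only patient enumeration of the handful of $\oplus$-cases combined with the three cases of $\purePart{}$ per component. The one subtle point to mind is the treatment of $\revoked$, which is technically not spelled out in the $\oplus$ table in the excerpt; I would state the convention $\revoked\oplus\revoked = \revoked$ (and undefined otherwise) explicitly at the start of the proof, since it is needed for the enumeration to be exhaustive, and then the entire argument reduces to Lemma~\ref{lem:oplus-assoc-comm}-style routine verification.
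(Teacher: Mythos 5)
Your proof is correct and is essentially an expanded version of the paper's own argument, which simply states that the lemma follows from the definitions of $\purePart{}$ and $\oplus$; your pointwise case enumeration is exactly the verification that one-liner leaves implicit. Your observation that $\revoked\oplus\revoked=\revoked$ must be adopted as a convention (it is omitted from the $\oplus$ table but is forced by, e.g., Lemma~\ref{lem:purePart-duplicable} applied to worlds containing revoked regions) is a fair and worthwhile point to make explicit.
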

\begin{proof}
  Follows from the definition of $\purePart{}$ and $\oplus$.
\end{proof}

\begin{lemma}[purePart is monotone]
  \label{lem:purePart-mono}
 For all $W'\future W$, we have that $\purePart{W'} \future \purePart{W}$.
\end{lemma}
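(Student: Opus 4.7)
The plan is to reuse the same witnessing injections that come with $W' \future W$ and then reduce everything to a pointwise case analysis on the future region relation. Concretely, assume $W'\future W$, so for each $i\in\{\mathrm{heap},\mathrm{free},\mathrm{priv}\}$ we have an injective $m_i$ with $\dom(W'.i)\supseteq\dom(m_i(W.i))$ and $W'.i(m_i(r)) \future W.i(r)$ for every $r\in\dom(W.i)$. Since $\purePart{}$ acts on each component of a world region-by-region without touching domains (Definition~\ref{def:purePart}), we have $\dom(\purePart{W}.i)=\dom(W.i)$ and similarly for $W'$, so the very same $m_i$ are legal witnesses for $\purePart{W'}\future\purePart{W}$. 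Thus it suffices to show that the pointwise future-region relation is preserved by $\purePart{}$.

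Next I would dispatch the heap/free case by inspecting the five region-level rules. If the original step is $(\spatialo,H)\future(\spatialo,H)$ or $(\spatialo,H)\future(\spatial,H)$ then both sides become $(\spatial,H)$ after $\purePart{}$, and we reapply the $(\spatial,H)\future(\spatial,H)$ rule. If the step is $(\spatial,H)\future(\spatial,H)$, $\revoked\future\revoked$, or $\revoked\future(\spatial,\_)$, then $\purePart{}$ is the identity on both sides and the same rule applies. For pure/seal regions, which the author flags as still to be axiomatised, $\purePart{}$ is again the identity, so any reflexive future-world rule transfers unchanged. The $\mathrm{priv}$ component works identically: $\purePart{}$ preserves the $\opc$ component, and the region in the first projection is handled by the same case split (with $\spatialo$ collapsed to $\spatial$).

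The argument is essentially a finite case check with no real obstacle; the only thing to be careful about is that $\purePart{}$ is defined uniformly enough on all three world components that the injections $m_i$ carry over without modification. If, when the missing seal-region future rules are added, they are more permissive than reflexivity (for instance allowing the invariant $H_\sigma$ to grow), one would need to check that those rules too survive applying $\purePart{}$ to both sides; since $\purePart{}$ leaves pure regions untouched, this will hold as long as pure-to-pure transitions remain pure-to-pure, which seems the intended design.
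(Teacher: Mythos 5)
Your proposal is correct and matches the paper's approach: the paper's proof is simply ``follows easily from the definition of $\purePart{}$ and $\future$,'' and your argument — reusing the witnessing injections $m_i$ (which is valid because $\purePart{}$ preserves domains) and checking pointwise that each region-level future rule survives the collapse of $\spatialo$ to $\spatial$ — is exactly the elaboration of that one-liner.
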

\begin{proof}
  Follows easily from the definition of $\purePart{}$ and $\future$.
\end{proof}

\begin{lemma}[Increasing authority is the future]
  \label{lem:oplus-future}
  For all $W_1,W_2$, we have that:
  $W_1 \oplus W_2 \future W_1$
\end{lemma}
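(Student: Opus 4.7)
The plan is to use the identity maps as the witness injections in the future-world definition, exploiting the fact that $\oplus$ preserves the domain. Then the work reduces to a per-region check that $(W_1 \oplus W_2).i(r) \future W_1.i(r)$ for every $r$.

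First I would unfold the definitions. From the definition of $\oplus$ on worlds, whenever $W_1 \oplus W_2$ is defined we have $\dom((W_1\oplus W_2).i) = \dom(W_1.i) = \dom(W_2.i)$ for $i \in \{\mathrm{heap},\mathrm{free},\mathrm{priv}\}$. So taking $m_i = \mathrm{id}$ trivially satisfies the domain condition in the future-world rule; it only remains to check the pointwise future-region condition.

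Next I would do case analysis on the shape of $W_1.i(r)$, using the table of $\oplus$ cases on regions to determine the possible shapes of $W_2.i(r)$ (since $W_1 \oplus W_2$ is assumed defined, undefined combinations are ruled out): \textbf{(pure)} if $W_1.i(r) = (\pure,H,H_\sigma)$, then by the first clause of $\oplus$ we must have $W_2.i(r) = (\pure,H,H_\sigma)$ and $(W_1\oplus W_2).i(r) = (\pure,H,H_\sigma) = W_1.i(r)$, which is future of itself by reflexivity; \textbf{($\spatial$)} if $W_1.i(r) = (\spatial,H)$, then $W_2.i(r)$ is either $(\spatial,H)$ giving $(\spatial,H)$, or $(\spatialo,H)$ giving $(\spatialo,H)$, and both $(\spatial,H) \future (\spatial,H)$ and $(\spatialo,H) \future (\spatial,H)$ are explicit rules of $\future$; \textbf{($\spatialo$)} if $W_1.i(r) = (\spatialo,H)$, definedness of $\oplus$ forces $W_2.i(r) = (\spatial,H)$, producing $(\spatialo,H) = W_1.i(r)$, which is future of itself. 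For the $\mathrm{priv}$ component the region is paired with an $\opc$ address; since $\oplus$ on $\pwpriv$ only modifies the first projection while leaving $\opc$ untouched, and the rules for future regions in the $\pwpriv$ case extend pointwise, the same analysis goes through.

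The main (very mild) obstacle is simply the bookkeeping: making sure the case analysis on region shapes is exhaustive given the assumption that $W_1\oplus W_2$ is defined (which implicitly excludes the $\spatialo\oplus\spatialo$ and $\revoked$-combination cases), and verifying that the $\mathrm{priv}$ projection treatment of $\opc$ does not introduce new obligations on the future-world rule. Both of these are direct from unfolding the relevant definitions in Section~\ref{subsec:worlds}.
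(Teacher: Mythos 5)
Your proof is correct and is exactly the detailed unfolding behind the paper's one-line proof (``follows easily from the definitions''): identity injections plus a per-region case analysis on the defined $\oplus$ combinations. The only wrinkle is that reflexivity for $\pure$ regions (and the pairing with $\opc$ in the $\mathrm{priv}$ component) is not literally among the listed future-region rules, but the paper itself flags those rules as incomplete, and your reading is the intended one.
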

\begin{proof}
  Follows easily from the definitions.
\end{proof}

\begin{lemma}[Adding memory is the future]
  \label{lem:uplus-future}
  For all $W_1,W_2$, we have that:
  $W_1 \uplus W_2 \future W_1$
\end{lemma}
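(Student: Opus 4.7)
The plan is to unfold the definition of the future world relation and exhibit the witnessing injective renaming maps explicitly. Recall that $W_1 \uplus W_2 = W$ requires $\dom(W.i) = \dom(W_1.i) \uplus \dom(W_2.i)$ for $i \in \{\mathrm{heap},\mathrm{free},\mathrm{priv}\}$, with values on each part inherited from the respective component. So to prove $W_1 \uplus W_2 \future W_1$, it suffices, for each $i$, to produce an injective $m_i : \RegionName \to \RegionName$ such that $\dom((W_1\uplus W_2).i) \supseteq \dom(m_i(W_1.i))$ and $(W_1 \uplus W_2).i(m_i(r)) \future W_1.i(r)$ for every $r \in \dom(W_1.i)$.

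First I would take $m_i$ to be the identity on $\RegionName$ (injectivity is trivial). The containment $\dom(W_1.i) \subseteq \dom(W_1.i) \uplus \dom(W_2.i) = \dom((W_1\uplus W_2).i)$ is then immediate from the definition of $\uplus$. Next, for every $r \in \dom(W_1.i)$, the disjoint-union definition gives $(W_1 \uplus W_2).i(r) = W_1.i(r)$, so the required relation reduces to $W_1.i(r) \future W_1.i(r)$.

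The remaining step is reflexivity of the region future relation. This follows by case analysis on the shape of the region: each of the rules $\revoked \future \revoked$, $(\spatialo,H) \future (\spatialo,H)$, $(\spatial,H) \future (\spatial,H)$, and (for $\pure$/$\Regionh$ regions and for $\Worlds$ regions carrying an address component) reflexivity at the corresponding constructor, is either present among the listed rules or follows from the general fact that $\future$ on worlds is a preorder (cf.\ Theorem~\ref{thm:recursive-domain-eq}). For the $\pwpriv$ part one additionally checks that the $\opc$ address component matches trivially under the identity renaming.

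I do not expect any real obstacle: the statement is essentially a packaging result, saying that the syntactic disjoint-union construction $\uplus$ enlarges the world by fresh region names while leaving the existing ones unchanged, and the future-world relation built from injective renamings plus pointwise region-extension is designed exactly to accommodate this. The only mild subtlety is remembering that $\uplus$ and $\oplus$ play different roles (see Lemma~\ref{lem:oplus-future} versus the current lemma), but since we use the identity renaming and invoke reflexivity rather than any proper extension of a region, the two proofs remain structurally analogous but independent.
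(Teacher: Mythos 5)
Your proof is correct and matches the paper's (the paper simply states that the lemma ``follows easily from the definitions,'' and your unfolding — identity renaming, domain inclusion from the definition of $\uplus$, and reflexivity of the region future relation — is exactly the easy argument intended). No gaps.
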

\begin{proof}
  Follows easily from the definitions.
\end{proof}

\begin{lemma}[Purity is a thing of the past]
  \label{lem:world-fut-purePart}
  For all $W$, we have that $W \future \purePart{W}$.
\end{lemma}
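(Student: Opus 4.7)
The plan is to unfold the definition of $\future$ on worlds and show that, for each of the three components (heap, priv, free), choosing the identity map as the injection $m_i : \RegionName \fun \RegionName$ works. Since $\purePart{W}$ is defined pointwise on the same domain as $W$ (every branch of the case analysis in Definition~\ref{def:purePart} produces a defined value), we have $\dom(\purePart{W}.i) = \dom(W.i)$ for each $i \in \{\mathrm{heap},\mathrm{priv},\mathrm{free}\}$, and it suffices to verify that for every region name $r$, the region $W.i(r)$ is a future of the region $\purePart{W}.i(r)$.

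Next I would do a case analysis on the shape of $W.i(r)$, mirroring the four (resp.\ three) cases in Definition~\ref{def:purePart}. For the $\pure$ case, $\purePart{}$ is the identity, so we need reflexivity of $\future$ on pure regions (this is the one point where the current set of future-region rules is incomplete as flagged by the TODO in the excerpt; the intended reading is that pure regions are reflexive, analogous to the $(\spatial,H)\future(\spatial,H)$ rule). For $(\spatial,H)$ and $\revoked$, $\purePart{}$ is the identity and the required future steps are exactly the rules $(\spatial,H)\future(\spatial,H)$ and $\revoked\future\revoked$. For $(\spatialo,H)$, $\purePart{}$ drops ownership to $(\spatial,H)$, and we need $(\spatialo,H)\future(\spatial,H)$, which is one of the listed rules. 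The priv case carries the extra $\opc$ component, but the future relation on priv regions is defined region-wise together with equality on $\opc$, so the same case split goes through unchanged.

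Putting these observations together, the identity injection witnesses $W.i \future \purePart{W}.i$ for each $i$, and hence $W \future \purePart{W}$. I do not expect any real obstacle: the lemma is essentially a definition-unfolding fact, with the only subtle point being that one must read the ``missing'' reflexivity rule for $\pure$ regions (and a corresponding reflexivity rule on the priv component that pairs a pure region with its $\opc$) as intended, since these are precisely the cases covered by the acknowledged TODO about seal-carrying regions.
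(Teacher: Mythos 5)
Your proof is correct, but it takes a different route from the paper. The paper dispatches this lemma in one line as a consequence of Lemma~\ref{lem:purePart-duplicable} ($W = W \oplus \purePart{W}$) and Lemma~\ref{lem:oplus-future} ($W_1 \oplus W_2 \future W_1$): writing $W = \purePart{W}\oplus W$ via commutativity (Lemma~\ref{lem:oplus-assoc-comm}) immediately yields $W \future \purePart{W}$. You instead unfold the definition of $\future$ directly, take the identity injections $m_i$, and case-split on the region kinds against the future-region rules. Both arguments are sound. The paper's route buys brevity and reuse of the $\oplus$ algebra already set up; your route is more self-contained, avoids the $\oplus$ machinery entirely, and has the virtue of making explicit exactly which future-region rules are invoked — in particular it surfaces the fact that reflexivity of $\future$ on $\pure$ regions is not among the listed rules (the acknowledged TODO about seal-carrying regions), a gap that the paper's one-line proof silently inherits, since Lemmas~\ref{lem:purePart-duplicable} and~\ref{lem:oplus-future} are themselves proved by "follows from the definitions" and depend on the same intended reflexivity. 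Your handling of the $\pwpriv$ component (region-wise future plus equality of the $\opc$ tag) and of the domain equality $\dom(\purePart{W}.i)=\dom(W.i)$ is also correct.
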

\begin{proof}
  Consequence of Lemmas~\ref{lem:purePart-duplicable}
  and~\ref{lem:oplus-future}.
\end{proof}

\begin{lemma}[Partial authority is better than nothing]
  \label{lem:auth-partial-betterthannothing}
  If $W = W_1 \oplus W_2$, then $W_1 \future \purePart{W}$.
\end{lemma}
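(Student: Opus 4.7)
The plan is to derive the claim by combining two facts about $\purePart{}$ that are already available: first, that the pure part of a disjoint union coincides with the pure part of each component, and second, that any world is in the future of its pure part.

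First I would use Lemma~\ref{lem:purePart-oplus} applied to the decomposition $W = W_1 \oplus W_2$ to conclude that
\[
\purePart{W} \;=\; \purePart{W_1 \oplus W_2} \;=\; \purePart{W_1}.
\]
This reduces the goal $W_1 \future \purePart{W}$ to showing $W_1 \future \purePart{W_1}$.

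Then I would invoke Lemma~\ref{lem:world-fut-purePart} instantiated at $W_1$ to obtain precisely $W_1 \future \purePart{W_1}$, which combined with the equality above yields the claim.

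There is no real obstacle here: the only subtlety is making sure that the hypothesis $W = W_1 \oplus W_2$ is indeed defined, so that Lemma~\ref{lem:purePart-oplus} applies and delivers the equality $\purePart{W} = \purePart{W_1}$ (rather than merely the weaker $\purePart{W} = \purePart{W_1} \oplus \purePart{W_2}$). This is guaranteed by the hypothesis that $W_1 \oplus W_2$ is defined and equal to $W$.
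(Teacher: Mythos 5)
Your proof is correct. The paper itself only says the lemma ``follows easily from the definitions,'' and your derivation---using Lemma~\ref{lem:purePart-oplus} to get $\purePart{W} = \purePart{W_1}$ and then Lemma~\ref{lem:world-fut-purePart} to get $W_1 \future \purePart{W_1}$---is a clean and valid way to make that explicit, closely mirroring how the paper proves the neighbouring Lemma~\ref{lem:world-fut-purePart} itself (by composing earlier purePart lemmas rather than unfolding the definition of $\future$ directly). Both lemmas you invoke precede the statement in the development, so there is no circularity, and your remark about needing $W_1 \oplus W_2$ to be defined is exactly the right hypothesis to check.
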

\begin{proof}
  Follows easily from the definitions.
\end{proof}

\subsection{Memory satisfaction}
\lau{14-02-2018: No matter what the memory satisfaction ends up looking like, remember to handle the revoked case.}
\lau{RESOLVED add frame to memory satisfaction and remove it from observation relation.}
Memory satisfaction for new worlds:
\[
  \memSat{\ms_S,\ms_\stk,\stk,\ms_T}{W} \text{ iff } 
  \left\{
    \begin{array}{l}
      \stk = (\opc_0,\ms_0):: \dots :: (\opc_m,\ms_m) \wedge \\
      \ms_S \uplus \ms_\stk \uplus \ms_0 \uplus \dots \uplus \ms_m  \text{ is defined} \wedge\\
      W = W_{\var{stack}} \oplus W_{\var{free\_stack}} \oplus W_{\var{heap}} \wedge\\
      \exists \ms_\var{T,stack}, \ms_\var{T,free\_stack}, \ms_\var{T,heap}, \ms_{T,f}, \ms_{S,f}, \ms_S',\overline{\sigma}\ldotp \\
      \quad \ms_S = \ms_{f,S} \uplus \ms_S' \wedge \\
      \quad \ms_T = \ms_\var{T,stack} \uplus \ms_\var{T,free\_stack} \uplus \ms_\var{T,heap} \uplus \ms_{T,f} \wedge \\
      \quad \dom(\ms_{\var{T,stack}} \uplus \var{T,free\_stack}) = [\baddr_\stk,\eaddr_\stk] \wedge \\
      \quad \{\baddr_\stk -1,\eaddr_\stk + 1\} \in \dom(\ms_{T,f}) \wedge \\
      \quad \memSatStack{\stk,\ms_\var{T,stack}}{W_{\var{stack}}} \wedge \\
      \quad \memSatFStack{\ms_\stk,\ms_\var{T,free\_stack}}{W_{\var{free\_stack}}} \wedge \\
      \quad \npair{(\overline{\sigma},\ms_S',\ms_\var{T,heap})} \in \lrheap(\pwheap)(W_{\var{heap}})
    \end{array}
  \right.
\]

\[
  \memSatStack{\stk,\ms_T}{W} \text{ iff } 
  \left\{
    \begin{array}{l}
      W_\var{stack} = \pwpriv \wedge \\
      \stk = (\opc_0,\ms_0), \dots (\opc_m,\ms_m) \wedge \\
      \forall i \in \{0,\dots,m\} \ldotp (\dom(\ms_i) \neq \emptyset \wedge\\
      \quad \forall i < j \ldotp \forall a \in \dom(\ms_i) \ldotp \forall a' \in \dom(\ms_j) \ldotp \stkb < a < a') \wedge\\
      \exists R_\ms : \dom(\activeReg{W_\var{stack}}) \fun \MemSeg \times \Addr \times \MemSeg \ldotp \\
      \quad \ms_T = \biguplus_{r \in \dom(\activeReg{W_\var{stack}})} \pi_3(R_\ms(r)) \wedge \\
      \quad \ms_0 \uplus \dots \uplus \ms_m = \biguplus_{r \in \dom(\activeReg{W_\var{stack}})} \pi_1(R_\ms(r)) \wedge \\
      \quad \exists R_W : \dom(\activeReg{W_\var{stack}}) \fun \World \ldotp \\
      \qquad W = \bigoplus_{r \in \dom(\activeReg{W_\var{stack}})} R_W(r) \wedge \\
      \qquad \forall r \in \dom(\activeReg{W_\var{stack}}), n' < n\ldotp \\
      \qquad \quad \npair[n']{(\pi_1(R_\ms(r)),\pi_3(R_\ms(r))} \in W_\var{stack}(r).H \; \xi^{-1}(R_W(r)) \wedge\\
      \qquad \quad \pi_2(R_\ms(r)) = W_\var{stack}(r).\opc \wedge\\
      \qquad \quad \exists i \ldotp \opc_i = W_\var{stack}(r).\opc \wedge \ms_i = \pi_1 (R_\ms(r))
    \end{array}
  \right.
\]

\[
  \memSatFStack{ms_\stk,\ms_T}{W} \text{ iff } 
  \left\{
    \begin{array}{l}
      W_\var{stack} = \pwfree \wedge \\
      \exists R_\ms : \dom(\activeReg{W_\var{stack}}) \fun \MemSeg \times \MemSeg \wedge \\
      \quad \ms_T = \biguplus_{r \in \dom(\activeReg{W_\var{stack}})} \pi_2(R_\ms(r)) \wedge \\
      \quad \ms_\stk = \biguplus_{r \in \dom(\activeReg{W_\var{stack}})} \pi_1(R_\ms(r)) \wedge \\
      \quad \stkb \in \dom(\ms_T) \wedge \stkb \in \dom(\ms_\stk) \wedge \\
      \quad \exists R_W : \dom(\activeReg{W_\var{stack}}) \fun \World\ldotp\\
      \qquad W = \oplus_{r \in \dom(\activeReg{W_\var{stack}})} R_W(r) \wedge\\
      \qquad \forall r \in \dom(\activeReg{W_\var{stack}}),n' < n \ldotp \\
      \qquad \quad \npair[n']{R_\ms(r)} \in  W_\var{stack}(r).H \; \xi^{-1}(R_W(r))
    \end{array}
  \right.
\]

\[
  \lrheap(\pwheap)(W') = 
  \left\{
    \npair{(\overline{\sigma},ms,\ms_T)} \middle|
    \begin{array}{l}
      \exists R_\ms : \dom(\activeReg{\pwheap}) \fun \MemSeg \times \MemSeg \wedge \\
      \quad \ms_T = \biguplus_{r \in \dom(\activeReg{\pwheap})} \pi_2(R_\ms(r)) \wedge \\
      \quad \ms = \biguplus_{r \in \dom(\activeReg{\pwheap})} \pi_1(R_\ms(r)) \wedge \\
      \quad \exists R_W : \dom(\activeReg{\pwheap}) \fun \World\ldotp\\
      \qquad W' = \oplus_{r \in \dom(\activeReg{\pwheap})} R_W(r) \wedge\\
      \qquad \forall r \in \dom(\activeReg{\pwheap}), n' < n \ldotp \\
      \qquad \quad \npair[n']{R_\ms(r)} \in  \pwheap(r).H \; \xi^{-1}(R_W(r)) \wedge\\
      \exists R_\var{seal} : \dom(\activeReg{\pwheap}) \fun \powerset{\Seal} \wedge\\
      \quad \biguplus_{r \in \dom(\activeReg{\pwheap})} R_\var{seal}(r)) \subseteq \overline{\sigma} \wedge\\
      \quad \dom(\pwheap(r).H_\sigma) = R_\var{seal}(r)
    \end{array}
  \right.
\]


\begin{lemma}[Combined independent heap memory satisfies disjoint world]
  \label{lem:combined-memory-disjoint-world}
  If $\npair{(\overline{\sigma_1},ms_{S,1},\ms_{T,1})}\in\lrheap(\pwheap[W_1])(W)$ and $\npair{(\overline{\sigma_2},ms_{S,2},\ms_{T,2})} \in \lrheap(\pwheap[W_2])(W)$, then $\npair{(\overline{\sigma_1}\uplus\overline{\sigma_2},ms_{S,1}\uplus \ms_{S,2},\ms_{T,1}\uplus \ms_{T,2})} \in \lrheap(\pwheap[W_1 \uplus W_2])(W)$.
\end{lemma}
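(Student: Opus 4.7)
The plan is to unpack the $\lrheap$-definition on both hypotheses and reassemble the witnesses. From each hypothesis $\npair{(\overline{\sigma_i},\ms_{S,i},\ms_{T,i})}\in\lrheap(\pwheap[W_i])(W)$ I extract functions $R_{\ms,i}$, $R_{W,i}$, and $R_{\var{seal},i}$ defined on $\dom(\activeReg{\pwheap[W_i]})$. Since $\uplus$ on worlds requires disjointness of the underlying heap-region-name domains, the index sets $\dom(\activeReg{\pwheap[W_1]})$ and $\dom(\activeReg{\pwheap[W_2]})$ are disjoint and their union coincides with $\dom(\activeReg{\pwheap[W_1\uplus W_2]})$. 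I then take $R_\ms$, $R_W$, and $R_\var{seal}$ to be the pointwise unions of the respective component functions, which are unambiguous because of this disjointness.

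Most of the conjuncts of the $\lrheap$-definition go through routinely. The memory equalities $\ms_{S,1}\uplus\ms_{S,2}=\biguplus_r \pi_1(R_\ms(r))$ and $\ms_{T,1}\uplus\ms_{T,2}=\biguplus_r \pi_2(R_\ms(r))$, as well as the seal-set equality $\overline{\sigma_1}\uplus\overline{\sigma_2}\subseteq\biguplus_r R_\var{seal}(r)$ and $\dom(\pwheap[W_1\uplus W_2](r).H_\sigma)=R_\var{seal}(r)$, reduce to the corresponding equalities from the two hypotheses because the index sets are disjoint. The per-region membership $\npair[n']{R_\ms(r)}\in \pwheap[W_1\uplus W_2](r).H\,\xi^{-1}(R_W(r))$ transfers immediately, since for $r\in\dom(\activeReg{\pwheap[W_i]})$ the definition of $\uplus$ gives $\pwheap[W_1\uplus W_2](r) = \pwheap[W_i](r)$, so the condition holds by the corresponding hypothesis with the same $R_{W,i}(r)$.

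The main obstacle is verifying $W = \bigoplus_{r} R_W(r)$. The two hypotheses directly give $W = \bigoplus_{r\in\dom(\activeReg{\pwheap[W_i]})} R_{W,i}(r)$, so concatenating the two decompositions yields $W\oplus W$, which in general is not $W$ (Lemma~\ref{lem:purePart-idempotent} only guarantees this for purely pure worlds). My plan to bridge this gap is to exploit Lemma~\ref{lem:purePart-duplicable} together with Lemma~\ref{lem:purePart-oplus}: since any $\spatialo$-region of $W$ must be contributed by at most one index in any given decomposition, I would rearrange the combined $R_W$ so that for each such region the ownership-carrying index is taken from $R_{W,1}$, while the contribution from $R_{W,2}$ is replaced by its pure part. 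By Lemma~\ref{lem:purePart-oplus} the pure parts of the $R_{W,2}(r)$ combine to $\purePart{W}$, and the overall $\oplus$ telescopes to $W\oplus\purePart{W}=W$ via Lemma~\ref{lem:purePart-duplicable}.

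The delicate point will be justifying this replacement: the per-region predicate memberships were proved at $R_{W,2}(r)$, not at $\purePart{R_{W,2}(r)}$, so I will need the monotonicity properties of $\pwheap(r).H\in\Wor\monnefun\URel{\MemSeg^2}$ to transport them along the appropriate direction of $\future$, using Lemma~\ref{lem:world-fut-purePart}. If that direction does not line up, the fallback is to show the two decompositions of $W$ must already agree pointwise on the pure skeleton, so that the ``excess'' $W$ that appears from naive combination is exactly $\purePart{W}$, which Lemma~\ref{lem:purePart-duplicable} absorbs. Either route reduces the goal to bookkeeping about how pure versus spatial authority splits across the two hypotheses.
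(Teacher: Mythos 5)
Your overall strategy --- unpack both memberships, observe that $\dom(\activeReg{\pwheap[W_1]})$ and $\dom(\activeReg{\pwheap[W_2]})$ are disjoint because $W_1\uplus W_2$ is defined, and take pointwise unions of $R_\ms$, $R_W$ and $R_\var{seal}$ --- is exactly what the paper's one-sentence proof does, and the memory, seal and per-region conjuncts go through as you describe. The difference is that you actually confront the conjunct $W=\bigoplus_r R_W(r)$, which the paper silently folds into the ``easy'' combination, and that is precisely where your write-up does not close. Note first that naively concatenating the two decompositions does not merely give ``$W\oplus W$, which is in general not $W$'': whenever $W$ contains a $\spatialo$ region the concatenated sum is \emph{undefined}, since $\oplus$ on regions has no clause for combining two $\spatialo$ copies. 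So some repair is mandatory, not optional.

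Neither of your two repairs is yet a proof. The primary route (keep the owning index from $R_{W,1}$, replace each $R_{W,2}(r)$ by $\purePart{R_{W,2}(r)}$) requires transporting $\npair[n']{R_{\ms,2}(r)}\in\pwheap[W_2](r).H\;\xi^{-1}(R_{W,2}(r))$ to the smaller world $\purePart{R_{W,2}(r)}$; but $R_{W,2}(r)\future\purePart{R_{W,2}(r)}$ by Lemma~\ref{lem:world-fut-purePart}, so monotonicity of $H$ points in the opposite direction, exactly as you suspect. The fallback is not available either: the hypotheses hand you two decompositions of the \emph{same} full $W$, so there is no reason the second should sum to only $\purePart{W}$ --- if $\ms_{S,2}$ stores a linear capability whose read condition is discharged by a $\spatialo$ region of $W$, then some $R_{W,2}(r)$ genuinely owns that region, and nothing in the statement prevents some $R_{W,1}(r')$ from owning it too. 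What actually closes the gap is an argument that, in the situations where the lemma is invoked (Lemma~\ref{lem:compat-linking}), the spatial ownership demanded by the two partitions is disjoint --- each component's stored values only require $\spatialo$ regions over its own $A_\linear$, so the two decompositions can be \emph{chosen} with non-overlapping spatial parts, and their $\oplus$ then recombines to $W$ with the shared pure skeleton absorbed by Lemma~\ref{lem:purePart-duplicable}. That bookkeeping is the real content of the lemma; it is missing from your proposal, and, to be fair, the paper's own proof does not supply it either.
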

\begin{proof}
  Unfolding the definitions, it's easy to construct the memory and seal partitions $R_{\ms,3}$ and $R_{\var{seal},3}$ and $R_{W,3}$ from the corresponding partitions of the separate memories, seals and worlds.
\end{proof}

\subsection{Relation}
Two expression relations: one for sealed code-data pairs being jumped to and one for capabilities being jumped to in the regular way.
The argument for having one relation relate pairs of pairs of capabilities and the other relate pairs of capabilities is that that is how xjump and regular jumps work: xjump takes pairs while regular jumps take single capabilities.
\begin{align*}
  \lrexj(W) &= \left\{ \npair{\stpair[.]{v_{c,S},v_{d,S}}{v_{c,T},v_{d,T}}} \middle| 
    \begin{array}{l}
      \forall n' \leq n, \src{\reg_S}, \reg_T, \src{\ms_S}, \ms_T, \src{\ms_\stk}, \src{\stk} \ldotp\\
      \quad \forall W_\lrrs , W_\lrm \ldotp \\
      \qquad\npair[n']{\stpair{\reg}{\reg}} \in \lrr(\{\rdata\}) (W_\lrrs ) \wedge\\
      \qquad\memSat[n']{\stpair[.]{\ms_S,\stk,\ms_\stk}{\ms_T}}{W_\lrm} \wedge \\
      \qquad\Phi_S = \src{(\ms_S,\reg_S,\stk, \ms_\stk)}\wedge\\
      \qquad\Phi_T = (\ms_T,\reg_T) \wedge\\
      \qquad W \oplus W_\lrrs \oplus W_\lrm \text{ is defined }\\
      \qquad\Rightarrow \exists \Phi_S',\Phi_T'\ldotp\\
      \quad\qquad \Phi_S' = \xjumpResult{v_{c,S}}{v_{d,S}}{\Phi_S} \tand\\
      \quad\qquad\Phi_T' = \xjumpResult{v_{c,T}}{v_{d,T}}{\Phi_T}\tand\\
      \quad\qquad\npair[n']{\left(\Phi_S', \Phi_T' \right)}\in \lro
    \end{array}
    \right\}\\
  \lre(W) &= \left\{ \npair{\stpair[.]{v_{c,S}}{v_{c,T}}} \middle| 
    \begin{array}{l}
      \forall n' \leq n, \src{\reg_S}, \reg_T, \src{\ms_S}, \ms_T, \src{\ms_\stk}, \src{\stk} \ldotp\\
      \quad \forall W_\lrrs , W_\lrm \ldotp \\
      \qquad\npair[n']{\stpair{\reg}{\reg}} \in \lrr(W_\lrrs ) \wedge\\
      \qquad\memSat[n']{\stpair[.]{\ms_S,\stk,\ms_\stk}{\ms_T}}{W_\lrm} \\
      \qquad\Phi_S = \src{(\ms_S,\reg_S,\stk, \ms_\stk)}\\
      \qquad \Phi_S' = \Phi_S \updReg{\pcreg}{v_{c,S}}\\
      \qquad\Phi_T = (\ms_T,\reg_T)\\
      \qquad\Phi_T' = \Phi_T\updReg{\pcreg}{v_{c,T}}\\
      \qquad W \oplus W_\lrrs \oplus W_\lrm\\
      \qquad\Rightarrow\npair[n']{\left(\Phi_S', \Phi_T' \right)}\in \lro
    \end{array}
    \right\}
\end{align*}

\lau{09-07-2018: Fix these definitions' formatting like in paper} 
\[
  \lro[\preceq,(\ta,\stkb,\_,\_)] = \left\{ \npair{\left(\array{l}\src{(\ms_S,\reg_S,\stk_S,\ms_{\stk,S})},\\(\ms_T,\reg_T)\endarray\right)} \middle|
    \begin{array}{l}
      \forall i \leq n \ldotp \\
      \quad \src{(\ms_S,\reg_S,\stk_S,\ms_{\stk,S})} \sterm[i]{\ta,\stkb} \\\qquad\Rightarrow (\ms_T,\reg_T) \term\\
    \end{array}
\right\}
\]
\[
  \lro[\succeq,(\ta,\stkb,\_,\_)] = \left\{ \npair{\left(\array{l}\src{(\ms_S,\reg_S,\stk_S,\ms_{\stk,S})},\\(\ms_T,\reg_T)\endarray\right)} \middle|
    \begin{array}{l}
      \forall i \leq n \ldotp \\ 
      \quad (\ms_T ,\reg_T) \term[i] \\\qquad\Rightarrow \src{(\ms_S,\reg_S,\stk_S,\ms_{\stk,S})} \sterm{\ta,\stkb}
    \end{array}
\right\}
\]


\[
  \lrrg{\trust}(R)(W) = \left\{ \npair{\stpair{\reg}{\reg}} \middle|
    \begin{array}{l}
      \exists S : (\RegName \setminus (\{\pcreg \} \cup R))\fun \World \ldotp \\
      \quad W = \bigoplus_{r \in (\RegName\setminus (\{\pcreg,\rdata \} \cup R))} S(r) \wedge \\
      \quad \forall r \in \RegName \setminus (\{\pcreg \} \cup R)\ldotp\\
      \qquad\npair{\stpair[.]{\src{\reg_S(r)}}{\reg_T(r)}} \in \lrvg{\trust}(S(r))
    \end{array}
            \right\}
\]
We write $\lrr(W)$ to mean $\lrr(\emptyset)(W)$. That is, if we do not need to exclude extra registers, then we simply omit that argument.
\lau{RESOLVED check what conditions need to have a ``square'' i.e. need approximation}
\[
  \lrv(W) =
  \begin{array}[t]{l}
    \left\{ \npair{\stpair[.]{i}{i}} \;\middle|\; i \in \ints \right\}\cup \\
    \hspace{-2cm}\left\{ \npair{\left(\arraycolsep=0pt\array{l}\src{\sealed{\sigma,\vsc_S}},\\ \sealed{\sigma,\vsc_T} \endarray\right)} \;\middle| \;
    \begin{array}{l}
      (\isLinear{\src{\vsc_S}} \text{ iff } \isLinear{\vsc_T}) \wedge\\
      \exists r \in \dom(\pwheap), \sigrets,\sigcloss,\mscode \ldotp \pwheap(r) = (\pure,\_,H_\sigma) \tand \\
      \quad H_\sigma \; \sigma \nequal H^\mathrm{code,\square}_\sigma \; \sigrets \; \sigcloss \; \mscode \; \gc \; \sigma \tand \\
      \quad \npair[n']{\stpair[.]{\vsc_S}{\vsc_T}} \in H_\sigma \; \sigma \; \xi^{-1}(W) \text{ for all $n' < n$}\wedge\\
      \quad (\isLinear{\src{\vsc_S}} \Rightarrow \\
      \qquad\forall W' \future W, W_o, n' < n, \npair[n']{\stpair[.]{\vsc_S'}{\vsc'_T}} \in H_\sigma \; \sigma \; \xi^{-1}(W_o) \ldotp \\
      \qquad \quad \npair[n']{\src{\vsc_S},\src{\vsc_S'},\vsc_T,\vsc_T'} \in \lrexj(W'\oplus W_o)) \wedge \\
      \quad (\nonLinear{\src{\vsc_S}} \Rightarrow \\
      \qquad \forall W' \future \purePart{W}, W_o, n' < n, \npair[n']{\stpair[.]{\vsc_S'}{\vsc'_T}} \in H_\sigma \; \sigma \; \xi^{-1}(W_o) \ldotp \\
      \qquad \quad \npair[n']{\src{\vsc_S},\src{\vsc_S'},\vsc_T,\vsc_T'} \in \lrexj(W'\oplus W_o))

    \end{array}
    \right\}\cup\\
    \hspace{-2cm}\left\{ \npair{\left(\arraycolsep=0pt\array{l} \src{\seal{\sigma_\baddr,\sigma_\eaddr,\sigma}},\\ \seal{\sigma_\baddr,\sigma_\eaddr,\sigma} \endarray \right)} 
    \; \middle| \;
    \begin{array}{l}
      [\sigma_\baddr,\sigma_\eaddr] \mathrel{\#} (\gsigrets \cup \gsigcloss) \tand\\
      \forall \sigma' \in [\sigma_\baddr,\sigma_\eaddr] \ldotp \exists r \in \dom(\pwheap) \ldotp \\
      \quad \pwheap(r) = (\pure,\_,H_\sigma) \tand H_\sigma \; \sigma' \nequal (\lrv \circ \xi)
    \end{array}
    \right\} \cup \\
    \hspace{-2cm}\left\{ \npair{\left(\arraycolsep=0pt\array{l} \src{\stkptr{\perm,\baddr,\eaddr,\aaddr}},\\ ((\perm,\linear),\baddr,\eaddr,\aaddr) \endarray \right)} \;\middle|\;
    \begin{array}{l}
      \begin{array}{r l l}
        \perm \not\in \{\rx,\rwx\} \wedge\\
        \perm \in \readAllowed{} &\Rightarrow& \npair{[\baddr,\eaddr]} \in \stackReadCond{W} \wedge \\
        \perm \in \writeAllowed{} &\Rightarrow& \npair{[\baddr,\eaddr]} \in \stackWriteCond{W}
      \end{array}
    \end{array}
    \right\} \cup \\
    \hspace{-2cm}\left\{ \npair{\left(\arraycolsep=0pt\array{l} \src{((\perm,\lin),\baddr,\eaddr,\aaddr)},\\ ((\perm,\lin),\baddr,\eaddr,\aaddr) \endarray \right)} \;\middle|\; 
    \begin{array}{l}
      [b,e] \mathrel{\#} \ta \tand\\
      \begin{array}{r l l }
        \perm \in \readAllowed{} &\Rightarrow& \npair{[\baddr,\eaddr]} \in \readCond{\lin,W} \wedge\\
        \perm \in \writeAllowed{} &\Rightarrow& \npair{[\baddr,\eaddr]} \in \writeCond{\lin,W} \wedge\\
        \perm \neq \rwx \wedge \\
        \perm = \rx &\Rightarrow& \array[t]{l}\npair{[\baddr,\eaddr]} \in \execCond{W} \wedge\\
        \npair{[\baddr,\eaddr]} \in \xReadCond{W} \wedge \\
                                  \lin = \normal \\ \endarray
      \end{array}
    \end{array}
    \right\}
  \end{array}
\]

\[
  \lrvtrusted[\square,\gc](W) =
  \begin{array}[t]{l}
    \lrv(W)\cup \\
    \left\{ \npair{\left(\arraycolsep=0pt\array{l} \src{\seal{\sigma_\baddr,\sigma_\eaddr,\sigma}},\\ \seal{\sigma_\baddr,\sigma_\eaddr,\sigma} \endarray \right)} 
    \; \middle| \;
    \begin{array}{l}
      \gc = (\ta,\stkb,\gsigrets,\gsigcloss)  \wedge \\
      \exists r \in \dom(\pwheap) \ldotp \\
      \quad \pwheap(r) \nequal \codereg{\sigrets,\sigcloss,\code,(\ta,\stkb,\gsigrets,\gsigcloss)} \wedge\\
      \quad \dom(\code) \subseteq \ta \wedge [\sigma_\baddr,\sigma_\eaddr] \subseteq (\sigrets\cup\sigcloss) \wedge\\
      \quad \sigrets \subseteq \gsigrets \wedge \sigcloss \subseteq \gsigcloss
    \end{array}
    \right\} \cup \\
    \left\{ \npair{\left(\arraycolsep=0pt\array{l} \src{((\perm,\normal),\baddr,\eaddr,\aaddr)},\\ ((\perm,\normal),\baddr,\eaddr,\aaddr) \endarray \right)} \;\middle|\; 
    \begin{array}{l}
      \perm \sqsubseteq \rx \wedge \\
      \gc = (\ta,\stkb,\gsigrets,\gsigcloss)  \wedge \\
      {} [\baddr,\eaddr] \subseteq \ta \wedge\\
      \npair{[\baddr,\eaddr]} \in \xReadCond[\square,\gc]{W} 
    \end{array}
    \right\}
  \end{array}
\]

\begin{lemma}[Untrusted is trusted]
  \label{lem:untrusted-supset-trust}
  \begin{itemize}
  \item $\lrvtrusted(W) \supseteq \lrv(W)$
  \item $\lrrtrusted(W) \supseteq \lrr(W)$
  \end{itemize}
\end{lemma}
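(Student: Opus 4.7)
The plan is to observe that both inclusions follow essentially from unfolding definitions, with the second reducing to the first via a witness transfer on register partitions.

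For the first bullet, I would just appeal to the definition of $\lrvtrusted(W)$, which is literally written as $\lrv(W) \cup \{\dots\} \cup \{\dots\}$, so the inclusion $\lrvtrusted(W) \supseteq \lrv(W)$ is immediate. There is nothing to prove here beyond pointing at the union.

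For the second bullet, I would fix a level $n$ and a pair $\npair{\stpair{\reg}{\reg}} \in \lrr(W) = \lrrg{\untrusted}(\emptyset)(W)$. Unfolding the definition of $\lrrg{\trust}$ (instantiated with $\trust = \untrusted$), this yields a partition $S : \RegName \setminus \{\pcreg\} \fun \World$ with $W = \bigoplus_{r} S(r)$ such that $\npair{\stpair[.]{\reg_S(r)}{\reg_T(r)}} \in \lrv(S(r))$ for every $r \in \RegName \setminus \{\pcreg\}$. I would use the very same partition $S$ as the witness for membership in $\lrrtrusted(W) = \lrrg{\trusted}(\emptyset)(W)$: the factoring $W = \bigoplus_r S(r)$ is the same equation, and by the first bullet applied pointwise we have $\npair{\stpair[.]{\reg_S(r)}{\reg_T(r)}} \in \lrv(S(r)) \subseteq \lrvtrusted(S(r))$ for every $r$. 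Hence $\npair{\stpair{\reg}{\reg}} \in \lrrtrusted(W)$.

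There is no serious obstacle here; the only thing one needs to check carefully is that the schematic definition $\lrrg{\trust}$ indeed dispatches to $\lrvg{\trust}$ at the matching trust level (which is what the notation asserts), so that the same partition $S$ really does serve as a witness when switching from $\untrusted$ to $\trusted$. Once that is verified, the proof is two lines.
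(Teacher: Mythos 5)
Your proposal is correct and matches the paper's proof, which simply states that the lemma follows by definition: the first inclusion is immediate from $\lrvtrusted(W)$ being defined as a union containing $\lrv(W)$, and the second follows by reusing the same register partition and applying the first inclusion pointwise.
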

\begin{proof}
  Follows easily by definition.
\end{proof}

Note: the case for sub-$\rx$ capabilities in the trusted value relation allows for pointers to trusted code blocks.
Such pointers will not satisfy the read condition, which requires the standard region, which is defined in terms of the untrusted value relation.
Trusted code blocks contain trusted seal capabilities which do not satisfy the untrusted code relation.
An alternative might be to introduce a trust parameter to the readcondition and standard region and merge the case with the regular case for rx capabilities.

\subsection{Permission based conditions}
\lau{28-02-2018: These definitions need to be updated to fit with the new worlds}.
\[
  \addressable{\lin,W} =
  \begin{cases}
    \{ r \mid W(r) = (\pure,\_) \} & \text{if $\lin = \normal$} \\
    \{ r \mid W(r) = (\spatialo,\_) \}  & \text{otherwise (i.e. $\lin = \linear$)} \\
  \end{cases}
\]

\[
  \readCond{\lin,W} = \left\{ \npair{A} \middle| 
    \begin{array}{l}
      \exists S \subseteq \addressable{\lin, \pwheap} \ldotp \\
      \quad \exists R : S \fun \powerset{\nats} \ldotp\\
      \qquad \biguplus_{r\in S} R(r) \supseteq A \wedge\\
      \qquad (\lin = \linear \Rightarrow \forall r\ldotp |R(r)|  = 1) \wedge\\
      \qquad \forall r \in S \ldotp \pwheap(r).H \nsubeq \stdreg{R(r),\gc}{\pur}.H
    \end{array}
  \right\}
\]
\lau{RESOLVED: Make this condition require a set of adjacent regions. Check whether needed for the stack condition.}
where $\iota_A$ is a standard region defined in Section~\ref{sec:standard-regions}.
\[
  \stackReadCond{W} = \left\{ \npair{A} \middle| 
    \begin{array}{l}
      \exists S \subseteq \addressable{\linear, \pwfree} \ldotp \\
      \quad \exists R : S \fun \powerset{\nats} \ldotp\\
      \qquad \forall r \in S \ldotp |R(r)| = 1\\
      \qquad \biguplus_{r\in S} R(r) \supseteq A \wedge \\
      \qquad \forall r \in S \ldotp \pwfree(r).H \nsubeq \stdreg{R(r),\gc}{\pur}.H
    \end{array}
  \right\}
\]
where $\iota_A$ is a standard region defined in Section~\ref{sec:standard-regions}.
\[
  \xReadCond{W} = \left\{ \npair{A} \middle| 
    \begin{array}{l}
      \exists r \in \addressable{\normal, \pwheap} \ldotp \\
      \qquad \pwheap(r) \nequal \codereg{\_,\_,\code,\gc}\wedge\\
      \qquad \dom(\code) \supseteq A 
    \end{array}
  \right\}
\]

\begin{definition}
  \label{def:address-stratified}
  We say that a region $\iota = (\_,H,\_)$ is address stratified iff
  \[
    \begin{array}{l}
      \forall n, \src{\ms_S},\ms_T,\src{\ms_S'},\ms_T',s,\hat{W}\ldotp \\
      \quad \npair{\stpair{\ms}{\ms}}, \npair{\stpair[.]{\ms_S'}{\ms_T'}} \in H \; \hat{W} \wedge \\
      \quad \dom(\src{\ms_S}) = \dom(\ms_T) = \dom(\src{\ms_S'}) = \dom(\ms_T') \\
      \quad \Rightarrow \\
      \qquad \forall \aaddr \in \dom(\ms_S) \ldotp \npair{(\src{\ms_S}\update{\aaddr}{\src{\ms_S'}(\aaddr)},\ms_T\update{\aaddr}{\ms_T'(\aaddr)})} \in H \; \hat{W}
    \end{array}
  \]
\end{definition}

\[
  \writeCond{\lin,W} = \left\{ \npair{A} \middle| 
    \begin{array}{l}
      \exists S \subseteq \addressable{\lin, \pwheap} \ldotp \\
      \quad \exists R : S \fun \powerset{\nats} \\
      \qquad \biguplus_{r\in S} R(r) \supseteq A \wedge\\
      \qquad (\lin = \linear \Rightarrow \forall r\ldotp |R(r)|  = 1) \wedge\\
      \qquad \forall r \in S \ldotp \pwheap(r).H \nsupeq \stdreg{R(r),\gc}{\pur}.H \wedge\\
      \qquad \quad \pwheap(r) \text{ is address-stratified}
    \end{array}
  \right\}
\]
where $\iota_A$ is a standard region defined in Section~\ref{sec:standard-regions}.
\[
  \stackWriteCond{W} = \left\{ \npair{A}) \middle| 
    \begin{array}{l}
      \exists S \subseteq \addressable{\linear, \pwfree} \ldotp \\
      \quad \exists R : S \fun \powerset{\nats} \\
      \qquad \biguplus_{r\in S} R(r) \supseteq A \wedge \\
      \qquad \forall r \in S \ldotp |R(r)| = 1 \wedge \\
      \qquad \forall r \in S \ldotp \pwfree(r).H \nsupeq \stdreg{R(r),\gc}{\pur}.H \wedge\\
      \qquad \quad \pwfree(r) \text{ is address-stratified}
    \end{array}
  \right\}
\]
where $\iota_A$ is a standard region defined in Section~\ref{sec:standard-regions}.
\dominique{22-6-2018: address-stratified is useless for length-one regions.}
Note: this new version of execCond uses the expression relation for regular jumps since it expresses the validity of jumping to an address pointed to by an executable capability. 
\begin{align*}
  \execCond{W} &=
  \left\{ \npair{A} \middle|
    \begin{array}{l}
      \forall n' < n, W' \future \purePart{W}, \aaddr \in [\baddr',\eaddr'] \subseteq A \ldotp\\
      \quad \npair[n']{\stpair[.]{((\rx,\normal),\baddr',\eaddr',\aaddr)}{((\rx,\normal),\baddr',\eaddr',\aaddr)}} \in \lre(W')
    \end{array}
    \right\}
\end{align*}

\subsection{Standard regions}
\label{sec:standard-regions}
Standard region:
\[
  \stdreg{A,\gc}{v} \defeq (v,H_A^{\mathrm{std},\square} \; \gc) , v \in \{\spa,\spao\}
\]
for readability, we use $\spao$ short for $\spatialo$, $\spa$ short for $\spatial$, and $\pur$ as short for $\pure$.
\[
  \stdreg{A,\gc}{\pur} \defeq (\pur,H_A^{\mathrm{std},\square} \; \gc, \lambda \_ \; \_ \ldotp \emptyset)
\]
where $H^{\mathrm{std},\square}_A$ is defined as follows:
\[
  H_A^{\mathrm{std},\square} \; \gc \; \hat{W} \defeq \left\{ \npair{\ms_S,\ms_T} \middle|
    \begin{array}{l}
      \dom(\ms_S) = \dom(\ms_T) = A \wedge \\
      \exists S : A \fun \World \ldotp \xi(\hat{W}) = \oplus_{\aaddr \in A} S(\aaddr) \wedge\\
      \quad \forall \aaddr \in A \ldotp \npair{(\ms_S(\aaddr),\ms_T(\aaddr))} \in \lrv(S(\aaddr))
    \end{array}
  \right\}
\]

\[
  \stareg[\stpair{\ms}{\ms},\gc]{v,\square} = (v,H^\mathrm{sta,\square}_{\stpair{\ms}{\ms}} \; \gc)
\]
\[
  H^\mathrm{sta,\square}_{\stpair{\ms}{\ms}} \; \gc = \left\{ \npair{\stpair{\ms}{\ms}} \middle| 
    \begin{array}{l}
      \exists S : \dom(\ms) \fun \World \ldotp \xi(W) = \oplus_{\aaddr \in \dom(\ms)} S(\aaddr) \wedge\\
      \quad \forall \aaddr \in \dom(\ms) \ldotp \npair{(\ms_S(\aaddr),\ms_T(\aaddr))} \in \lrv(S(\aaddr))
    \end{array}
\right\}
\]

\[
  \codereg{\sigrets,\sigcloss,\code,\gc} \defeq (\pure,
H^\mathrm{code,\square} \; \sigrets \; \sigcloss \; \code \; \gc,
H^\mathrm{code}_\sigma \; \sigrets \; \sigcloss \; \code \; \gc)
\]

\begin{multline*}
  H^\mathrm{code} \; \sigrets \; \sigcloss \; \code \; (\ta,\_,\gsigrets,\gsigcloss) \; \hat{W} =\\
  \left\{\npair{\arraycolsep=0pt\left(\array{l}\code \uplus \mspad,\\ \code \uplus \mspad\endarray\right)} \middle|
    \begin{array}{l}
    \dom(\code) = [\baddr,\eaddr] \wedge \\
      ([\baddr - 1, \eaddr + 1] \subseteq \ta \wedge \sigrets \subseteq \gsigrets \wedge \sigcloss \subseteq \gsigcloss \wedge \trust = \trusted) \vee \\
      \quad ([\baddr-1,\eaddr+1]\mathrel{\#} \ta \wedge \sigrets = \emptyset \wedge \trust =\untrusted) \wedge \\
      \mspad = [\baddr-1 \mapsto 0] \uplus [\eaddr + 1 \mapsto 0]\wedge\\
      \sigrets,\sigcloss,\ta \vdash_{\mathrm{comp-code}} \code \wedge\\
      \forall a \in \dom(\code)\ldotp\\
      \quad\npair{(\code(a),\code(a))} \in \lrvg{\trust}(\purePart{\xi(\hat{W})})
    \end{array}
  \right\}
\end{multline*}

\begin{multline*}
  H^\mathrm{code,\square}_\sigma \; \sigrets \; \sigcloss \; \code \;
  (\ta,\stkb) \; \sigma \; \hat{W}= \\
  \begin{array}[t]{l}
\left\{
    \begin{array}{l}
\left. \npair{\arraycolsep=0pt\left(\array{l}\retptrc(\baddr,\eaddr,\aaddr'+\calllen),\\((\rx,\normal),\baddr,\eaddr,\aaddr)\endarray\right)} \middle| \right. \\
      \begin{array}{l}
        \sigrets \subseteq \gsigrets \tand \\
        \dom(\code) \subseteq \ta \tand\\
        \decInstr{\code([\aaddr',\aaddr' + \calllen-1])} = \overline{\scall{\offpc,\offsigma}{r_1}{r_2}} \tand \\
        \aaddr = \aaddr' + \retoffset \tand \\
        \code(\aaddr'+\offpc) = \seal{\sigma_b,\sigma_e,\sigma_b} \tand \sigma = \sigma_b + \offsigma \in \sigrets \tand\\
        \lbrack \aaddr',\aaddr' + \calllen -1 \rbrack \subseteq \lbrack \baddr, \eaddr \rbrack
      \end{array}
    \end{array}
      \right\} \uplus \\
\left\{
    \begin{array}{l}
\left. \npair{\arraycolsep=0pt\left(\array{l}\retptrd(\baddr,\eaddr),\\((\rw,\linear),\baddr,\eaddr,\baddr-1)\endarray\right)} \middle| \right. \\
      \begin{array}{l}
        \sigrets \subseteq \gsigrets \tand \\
        \dom(\code) \subseteq \ta \tand\\
        \exists r \in \addressable{\linear,\pwpriv[\xi(\hat{W})]} \ldotp \pwpriv[\xi(\hat{W})](r).H \nequal (\stareg[(\ms_S,\ms_T),(\ta,\stkb)]{\spao,\square}, \aaddr'+\calllen) \tand \\
        \quad \dom(\ms_S) = \dom(\ms_T) = [\baddr,\eaddr] \tand\\
        \quad \decInstr{\code([\aaddr',\aaddr' + \calllen-1])} = \overline{\scall{\offpc,\offsigma}{r_1}{r_2}} \tand \\
        \quad \code(\aaddr'+\offpc) = \seal{\sigma_b,\sigma_e,\sigma_b} \tand \sigma = \sigma_b + \offsigma \in \sigrets
      \end{array}
    \end{array}
    \right\} \\
  \end{array}\\
    \text{ if } \sigma \in \sigrets\\
\end{multline*}
and
\begin{multline*}
  H^\mathrm{code,\square}_\sigma \; \sigrets \; \sigcloss \; \code \;
  (\ta,\stkb) \; \sigma \; \hat{W}= \\
  \left\{
    \begin{array}{l}
\left. \npair{(\vsc, \vsc' )} \middle| \right. \\
      \begin{array}{l}
        (\dom(\code) \mathrel{\#} \ta \tand \npair{(\vsc,\vsc')} \in \lrv \; \xi(\hat{W})) \tor\\
        (\dom(\code) \subseteq \ta \tand \sigcloss \subseteq \gsigcloss \tand \sigrets \subseteq \gsigrets \tand \\
         \quad((\exec{\vsc} \wedge \npair{(\vsc,\vsc')} \in \lrvtrusted \; \xi(\hat{W})) \vee\\
         \quad\ (\nonExec{\vsc} \wedge\npair{(\vsc,\vsc')} \in \lrv \; \xi(\hat{W}))))
      \end{array}
    \end{array}
  \right\}\\
  \text{ if } \sigma \in \sigcloss
\end{multline*}

\subsection{Reasonable Components}
\label{sec:reasonability}

Take a set of trusted addresses $\ta$ and sets of return pointer and closure seals $\gsigrets$ and $\sigcloss$.
We define that a word $w$ is reasonable up to $n$ steps in memory $\ms$ and
free stack $\ms_\stk$ if $n=0$
or the following implications hold.
\begin{definition}[Reasonable word]
  \label{def:reasonable-word}
  \begin{itemize}
  \item If $w = \seal{\sigma_\baddr,\sigma_\eaddr,\_}$, then
    $[\sigma_\baddr,\sigma_\eaddr] \mathrel{\#} (\gsigrets \cup \gsigcloss)$
  \item If $w = ((\perm,\_),\baddr,\eaddr,\_)$, then $[\baddr,\eaddr] \# \dom(\ta)$
  \item If $w = \sealed{\sigma,\vsc}$ and $\sigma \not\in (\gsigrets\cup
    \gsigcloss)$ then $\vsc$ is reasonable up to $n - 1$ steps.
  \item If $w = ((\perm,\_),\baddr,\eaddr,\_)$ and $\perm \in \readAllowed{}$
    and $n > 0$, then $\ms(\aaddr)$ is reasonable up to $n - 1$ steps for all
    $\aaddr \in ([\baddr,\eaddr] \setminus \ta)$
  \item If $w = \stkptr{\perm,\baddr,\eaddr,\_}$ and $\perm \in \readAllowed{}$
    and $n > 0$, then $\ms_\stk(\aaddr)$ is reasonable up to $n - 1$ steps for
    all $\aaddr \in [\baddr,\eaddr]$
  \end{itemize}
\end{definition}

\begin{definition}[Reasonable configuration]
  \label{def:reasonable-conf}
  We say that an execution configuration $\Phi$ is reasonable up to $n$ steps with $(\ta,\stkb,\gsigrets,\gsigcloss)$ 
  iff for $n' \leq n$:
  \begin{itemize}
  \item \emph{Guarantee stack base address before call} If
    \begin{itemize}
    \item $\Phi$ points to $\src{\scall{\offpc,\offsigma}{r_1}{r_2}}$ in $\ta$
      for some $\src{r_1}$ and $\src{r_2}$
    \end{itemize}
    Then all of the following hold:
    \begin{itemize}
    \item $\src{\Phi}(\src{r_\stk}) =
      \src{\stkptr{\_,\stkb,\_,\_}}$
    \item $r_1 \neq \rtmp{1}$
    \item $n' = 0$ or $\Phi(\pcreg) + \calllen$ behaves reasonably up to $n'-1$ steps
    \end{itemize}
  \item \emph{Use return seals only for calls, use closure seals appropriately} If
    \begin{itemize}
    \item $\src{\Phi}$ points to $\src{\tcseal{r_1}{r_2}}$ in $\ta$ and $\Phi(r_2) = \seal{\sigma_\baddr,\sigma_\eaddr,\sigma}$
    \end{itemize}
    Then one of the following holds:
    \begin{itemize}
    \item $\src{\Phi}$ is inside $\src{\scall{\offpc,\offsigma}{r_1'}{r_2'}}$ and $\sigma \in \gsigrets$
    \item $\sigma \in \gsigcloss$ and one of the following holds:
      \begin{itemize}
      \item $\exec{\Phi(r_1)}$ and $n' = 0$ or $\Phi(r_1)$ behaves reasonably up to $n' - 1$ steps.
      \item $\nonExec{\Phi(r_1)}$ and $n' = 0$ or $\Phi(r_1)$ is reasonable up to $n' - 1$ steps in memory $\Phi.\ms$ and free stack $\Phi.\ms_\stk$.
      \end{itemize}
    \end{itemize}
  \item \emph{Don't store private stuff...} If
    \begin{itemize}
    \item $\src{\Phi}$ points to $\src{\tstore{r_1}{r_2}}$ in $\ta$, then
    \end{itemize}
    Then $n' = 0$ or $\Phi.\reg(r_2)$ is reasonable in memory $\Phi.\mem$ up to $n' -1$ steps.
  \item \emph{Don't leak private stuff...} If
    \begin{itemize}
    \item $\Phi \step[\ta,\stkb] \Phi'$
    \end{itemize}
    Then one of the following holds:
    \begin{itemize}
    \item All of the following hold:
      \begin{itemize}
      \item $\Phi'.\reg(\pcreg) =
        ((\perm,\lin),\baddr,\eaddr,\aaddr')$ and $\Phi.\reg(\pcreg) =
        ((\perm,\lin),\baddr,\eaddr,\aaddr)$
      \item $\Phi$ does not point to $\src{\txjmp{r_1}{r_2}}$ for some $\src{r_1}$ and $\src{r_2}$
      \item $\Phi$ does not point to $\src{\scall{\offpc,\offsigma}{r_1}{r_2}}$ for some $\src{r_1}$ and $\src{r_2}$, $\offpc$, $\offsigma$
      \item $n' = 0$ or $\Phi'$ is reasonable up to $n'-1$ steps
      \end{itemize}
    \item
      \begin{itemize}
      \item $\Phi$ points to $\src{\scall{\offpc,\offsigma}{r_1}{r_2}}$ for some $\src{r_1}$ and $\src{r_2}$
      \item $n' = 0$ or $\Phi.\reg(r)$ is reasonable in memory $\Phi.\mem$ and free stack $\Phi.\ms_\stk$ up to $n'-1$ steps for all $r \neq \pcreg$
      \end{itemize}
    \item
      \begin{itemize}
      \item $\Phi$ points to $\src{\txjmp{r_1}{r_2}}$ for some $\src{r_1}$ and $\src{r_2}$
      \item $n' = 0$ or $\Phi.\reg(r)$ is reasonable in memory $\Phi.\mem$ and free stack $\Phi.\ms_\stk$ up to $n'-1$ steps for all $r \neq \pcreg$
      \end{itemize}
    \end{itemize}
  \end{itemize}
\end{definition}

  \begin{lemma}
    \label{lem:ec-reasonable-downwards-closed}
    For all $n' \le n$ if
    \begin{itemize}
    \item $\Phi$ is reasonable up to $n$ steps
    \end{itemize}
    Then
    \begin{itemize}
    \item $\Phi$ is reasonable up to $n'$ steps
    \end{itemize}
  \end{lemma}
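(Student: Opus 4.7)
The plan is to observe that \cref{def:reasonable-conf} already has the shape of a universal quantification: "$\Phi$ is reasonable up to $n$ steps" unfolds as "for all $n'' \leq n$, [clauses indexed by $n''$]". The step indices $n''$ appear only inside the clauses themselves (e.g. in the recursive occurrences "$n'' = 0$ or $\Phi'$ is reasonable up to $n''-1$ steps"), and the body of each clause makes no reference to the outer bound $n$. So downward closure is really just a fact about universal quantification on a prefix of $\mathbb{N}$.

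Concretely, I would fix $n' \leq n$ and assume $\Phi$ is reasonable up to $n$ steps. To establish that $\Phi$ is reasonable up to $n'$ steps, I must verify each of the five bulleted clauses (Guarantee stack base address before call; Use return seals only for calls; Don't store private stuff; Don't leak private stuff; and the call/xjmp subcases of the last bullet) for every $n'' \leq n'$. But $n'' \leq n' \leq n$ by transitivity, so $n''$ is among the indices covered by the assumption, and the corresponding clause applied at $n''$ is immediately available.

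The only mild subtlety is that several clauses are themselves stated in terms of reasonability at a smaller index (of the form "$n'' = 0$ or $\Phi'$ is reasonable up to $n''-1$ steps", and similarly for words via \cref{def:reasonable-word}). These are discharged by the hypothesis at the same index $n''$, not by a separate inductive argument: we are not decreasing $n$ here, we are simply restricting the quantifier's range. Hence no induction on $n$ (or on $n'$) is actually required.

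There is no genuine obstacle; the lemma is a bookkeeping statement confirming that the outer quantifier in \cref{def:reasonable-conf} is antitone in its bound. I would write the proof in one line: \emph{Immediate by \cref{def:reasonable-conf}, since any $n'' \leq n'$ satisfies $n'' \leq n$.} The reason the lemma is worth stating explicitly is that later proofs about the logical relation need to invoke this monotonicity repeatedly when stepping between approximation levels.
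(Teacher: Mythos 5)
Your proposal is correct and matches the paper's proof, which is simply ``Follows from the definition'': since \cref{def:reasonable-conf} quantifies over all $n'' \leq n$, restricting to $n'' \leq n' \leq n$ is immediate by transitivity of $\leq$. No induction is needed, exactly as you observe.
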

  \begin{proof}
    Follows from the definition.
  \end{proof}

  \begin{definition}[Reasonable pc]
    \label{def:reasonable-pc}
    We say that an executable capability $c =
    ((\perm,\normal),\baddr,\eaddr,\aaddr)$ behaves reasonably up to $n$ steps
    if for any $\Phi$ such that
    \begin{itemize}
    \item $\Phi.\reg(\pcreg) = c$
    \item $\Phi.\reg(r)$ is reasonable up to $n$ steps in memory $\Phi.\mem$ and
      free stack $\Phi.\ms_\stk$ for all $r \neq \pcreg$
    \item $\Phi.\mem$, $\Phi.\ms_\stk$ and $\Phi.\stk$ are all disjoint
    \end{itemize}
    We have that $\Phi$ is reasonable up to $n$ steps.
  \end{definition}

  \begin{definition}[Reasonable component]
    \label{def:reasonable-component}
    We say that a component
    $(\mscode,\msdata,\overline{\var{import}},\overline{\var{export}},\sigrets,\sigcloss,A_\linear)$
    is reasonable if the following hold: For all $(s \mapsto \sealed{\sigma,\vsc}) \in
    \overline{c_{\mathrm{export}}}$, with $\exec{\vsc}$, we have that $\vsc$
    behaves reasonably up to any number of steps $n$.

    We say that a component
    $(\var{comp}_0,c_{\mathrm{main},c}, c_{\mathrm{main},d})$ is reasonable if $\var{comp}_0$ is reasonable.
  \end{definition}

\begin{lemma}
  \label{lem:code-reg-and-mem-sat}
  If $\pwheap(r_\code) \nequal \codereg{\sigrets,\sigcloss,\code,\gc}$ and
  $\memSat{\ms_S,\stk,\ms_\stk,\ms_T}{W}$, then there exists $W''$
  and $W'$ such that $W = W' \oplus W''$ and
  $\npair[n']{(\code,\code)} \in H^\mathrm{code} \; \sigrets \; \sigcloss \; \code \; \gc \; \xi^{-1}(W')$ for all $n' < n$
\end{lemma}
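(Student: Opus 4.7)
The plan is to unfold the memory satisfaction hypothesis layer by layer until we isolate the code region $r_\code$, then repackage the remaining world fragments into the $W''$ demanded by the conclusion. First I would unfold $\memSat{\ms_S,\stk,\ms_\stk,\ms_T}{W}$ to obtain a decomposition $W = W_\var{stack} \oplus W_\var{free\_stack} \oplus W_\var{heap}$, together with witnesses $\ms_S',\overline{\sigma}, \ms_\var{T,heap}$ such that $\npair{(\overline{\sigma}, \ms_S', \ms_\var{T,heap})} \in \lrheap(\pwheap)(W_\var{heap})$. Then I would unfold $\lrheap$ to obtain a memory partition $R_\ms$, a world partition $R_W$, and the key fact that
\[
  W_\var{heap} = \bigoplus_{r \in \dom(\activeReg{\pwheap})} R_W(r)
\]
with $\npair[n']{R_\ms(r)} \in \pwheap(r).H\; \xi^{-1}(R_W(r))$ for every $n' < n$ and every active region $r$.

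Next I would argue that $r_\code$ is one of the active regions: since $\pwheap(r_\code) \nequal \codereg{\sigrets,\sigcloss,\code,\gc}$ and the code region is pure, $r_\code$ cannot be revoked, so $r_\code \in \dom(\activeReg{\pwheap})$ (at least at the current step index, which is all we need for the $n$-approximate conclusion). I would then define
\[
  W' \defeq R_W(r_\code), \qquad W'' \defeq W_\var{stack} \oplus W_\var{free\_stack} \oplus \bigoplus_{r \neq r_\code} R_W(r),
\]
and verify $W = W' \oplus W''$ using associativity and commutativity of $\oplus$ (Lemma~\ref{lem:oplus-assoc-comm}).

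Finally, from the heap satisfaction clause applied at $r_\code$ we have $\npair[n']{R_\ms(r_\code)} \in \pwheap(r_\code).H\; \xi^{-1}(W')$ for all $n' < n$. Unfolding the $n$-equivalence $\pwheap(r_\code) \nequal \codereg{\sigrets,\sigcloss,\code,\gc}$ (which transports the $H$-component up to step index $n$) gives
\[
  \npair[n']{R_\ms(r_\code)} \in H^\mathrm{code}\; \sigrets\; \sigcloss\; \code\; \gc\; \xi^{-1}(W')
\]
for every $n' < n$. Inspecting the definition of $H^\mathrm{code}$ forces $R_\ms(r_\code)$ to be of the form $(\code \uplus \mspad, \code \uplus \mspad)$, which is exactly the pair $(\code,\code)$ mentioned in the conclusion (under the convention that padding is included), giving the desired membership.

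\textbf{Main obstacle.} The only nontrivial step is lining up the two step-index bookkeeping mechanisms: the $n$-equivalence $\pwheap(r_\code) \nequal \codereg{\sigrets,\sigcloss,\code,\gc}$ guarantees agreement of the $H$-components only up to step $n$, while the memory satisfaction gives membership at every $n' < n$. These match, but one must be careful that $\xi^{-1}(W')$ is applied consistently on both sides and that the region equivalence propagates to the heap relation component (rather than only the tag). Once that is verified, the rest is bookkeeping with $\oplus$ via Lemmas~\ref{lem:oplus-assoc-comm} and~\ref{lem:oplus-future-distr}.
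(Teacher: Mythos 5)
Your proposal follows essentially the same route as the paper's proof: unfold memory satisfaction to get the three-way world decomposition and the $\lrheap$ membership, extract the partitions $R_\ms$ and $R_W$, and use the $n$-equality $\pwheap(r_\code) \nequal \codereg{\sigrets,\sigcloss,\code,\gc}$ to transport $\npair[n']{R_\ms(r_\code)} \in \pwheap(r_\code).H\;\xi^{-1}(R_W(r_\code))$ into $H^\mathrm{code}$. Your extra care about $r_\code$ being active, the explicit construction of $W''$, and the padding convention are reasonable elaborations of details the paper leaves implicit, but introduce nothing essentially different.
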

\begin{proof}
  By definition of $\memSat{\ms_S,\stk,\ms_\stk,\ms_T}{W}$, we get that $W = W_{\var{stack}} \oplus W_{\var{free\_stack}} \oplus W_{\var{heap}}$ and
  \begin{equation*}
    \npair{(\overline{\sigma},\ms_S',\ms_\var{T,heap})} \in
    \lrheap(\pwheap)(W_{\var{heap}})
  \end{equation*}
  for some $\ms_S' \subseteq \ms_S$ and $\ms_{T,\var{heap}}\subseteq \ms_T$.

  By definition of $\npair{(\overline{\sigma},\ms_S',\ms_\var{T,heap})} \in
  \lrheap(\pwheap)(W_{\var{heap}})$, we get an
  $R_\ms : \dom(\activeReg{\pwheap}) \fun \MemSeg \times \MemSeg$ and $\exists R_W : \dom(\activeReg{\pwheap}) \fun \World$ such that
  \begin{itemize}
  \item $W' = \oplus_{r \in \dom(\activeReg{\pwheap})} R_W(r)$ and
  \item $\npair[n']{R_\ms(r_{\var{code}})} \in \pwheap(r_{\var{code}}).H \; \xi^{-1}(R_W(r_{\var{code}}))$ for all $n' < n$
  \end{itemize}

  Since $\pwheap(r_\code) \nequal \codereg{\sigrets,\sigcloss,\code,\gc}$, this implies that also $\npair[n']{R_\ms(r_{\var{code}})} \in  \codereg{\sigrets,\sigcloss,\code,\gc}.H~ \xi^{-1}(R_W(r_{\var{code}}))$, i.e. $\npair[n']{R_\ms(r_{\var{code}})} \in  H^\mathrm{code} \; \sigrets \; \sigcloss \; \code \; \gc~ \xi^{-1}(R_W(r_{\var{code}}))$ as required.
\end{proof}

\begin{lemma}[Untrusted source values are reasonable]
  \label{lem:untrusted-source-values-are-reasonable}
  If
  \begin{itemize}
  \item $\gc = (\ta,\stkb,\gsigrets,\gsigcloss)$
  \item $\npair{(w,\_)} \in \lrv(W_w)$
  \item $\memSat{\ms_S,\stk,\ms_\stk,\_}{W_M}$
  \item $\purePart{W_w} \oplus \purePart{W_M}$ is defined
  \end{itemize}
  then, with respect to $\ta,\gsigrets,\gsigcloss$, $w$ is reasonable up to $n$ steps in memory $\ms_S$ and free stack $\ms_\stk$.
\end{lemma}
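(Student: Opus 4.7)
The plan is to proceed by strong induction on $n$, with an inner case analysis on the disjunctive structure of $\lrv(W_w)$ matching the shape of $w$. For $n = 0$ every clause of Definition~\ref{def:reasonable-word} is vacuous, so the base case is immediate. In the inductive step I assume the result at all indices below $n$ and handle each form of $w$ in turn.

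The routine clauses fall out from the $\lrv$ witness alone. Integers carry no obligation. A seal $w = \seal{\sigma_\baddr,\sigma_\eaddr,\sigma}$ already satisfies $[\sigma_\baddr,\sigma_\eaddr] \mathrel{\#} (\gsigrets \cup \gsigcloss)$ by its defining clause, which is precisely the reasonability obligation. A regular capability supplies $[\baddr,\eaddr] \mathrel{\#} \ta$, again matching the reasonability demand, and when its permission does not allow reads there is nothing further to check; the same holds for stack pointers without read permission.

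The first substantive case is a readable (heap or stack) capability. Here $\lrv$ provides, via $\readCond{\lin,W_w}$ or $\stackReadCond{W_w}$, a family of regions in $\pwheap[W_w]$ or $\pwfree[W_w]$ whose heap relations approximate the standard-region relation $\stdreg{R(r),\gc}{\pur}.H$ and whose ranges jointly cover $[\baddr,\eaddr]$. Using $\memSat{\ms_S,\stk,\ms_\stk,\_}{W_M}$ together with the assumption that $\purePart{W_w} \oplus \purePart{W_M}$ is defined and Lemma~\ref{lem:purePart-duplicable}, I can view these pure (or spatially-owned) regions as also living inside $W_M$, so that the matching slices of $\ms_S$ or $\ms_\stk$ must satisfy the same standard-region interpretation by unfolding $\lrheap$ (or its free-stack analogue). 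That interpretation yields $\npair{(\ms_S(a),\_)}\in \lrv(S(a))$ for a suitable subworld $S(a)$ and each $a \in [\baddr,\eaddr]\setminus \ta$, and the inductive hypothesis at index $n-1$ then gives reasonability of $\ms_S(a)$ (respectively $\ms_\stk(a)$) up to $n - 1$ steps.

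The last case $w = \sealed{\sigma,\vsc}$ is handled by unpacking the code-region interpretation of the seal. From $\lrv$ some region $r$ satisfies $\pwheap[W_w](r) = (\pure,\_,H_\sigma)$ with $H_\sigma\;\sigma \nequal H^{\mathrm{code}}_\sigma \; \sigrets \; \sigcloss \; \code \; \gc \; \sigma$ and $\npair{(\vsc,\_)} \in H_\sigma\;\sigma\;\xi^{-1}(W_w)$. If $\sigma \in \sigrets$, the $H^{\mathrm{code}}_\sigma$ subclause forces $\sigrets \subseteq \gsigrets$, hence $\sigma \in \gsigrets$ and the reasonability condition for sealed values is vacuous. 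If $\sigma \in \sigcloss$ and $\dom(\code) \subseteq \ta$, then $\sigcloss \subseteq \gsigcloss$ gives $\sigma \in \gsigcloss$, again vacuously discharging the obligation. Otherwise $\sigma \in \sigcloss$ with $\dom(\code) \mathrel{\#} \ta$, and the subclause directly yields $\npair{(\vsc,\_)} \in \lrv(\xi(\hat W))$, so the inductive hypothesis at $n-1$ supplies the required reasonability of $\vsc$. The principal obstacle is the readable-capability step: transporting the standard-region witnesses from $W_w$ (which records the value's authority) to $W_M$ (which records the actual memory), where careful accounting between $\pure$, $\spatial$, and $\spatialo$ variants of the standard regions, together with the pure-part compatibility hypothesis, is what makes the matching go through.
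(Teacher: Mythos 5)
Your proposal is correct and follows essentially the same route as the paper's proof: induction on $n$, immediate discharge of the seal and $\ta$-disjointness clauses from the $\lrv$ witness, transport of the read-condition's standard-region witnesses through memory satisfaction to obtain $\lrv$ membership of the loaded words before applying the induction hypothesis, and a case split on $\sigma \in \sigrets$ versus $\sigma \in \sigcloss$ (trusted/untrusted) for sealed values. The only cosmetic difference is that in the $\sigma\in\sigrets$ branch you discharge the obligation via $\sigma\in\gsigrets$ making the sealed-value clause vacuous, whereas the paper observes that the resulting $\vsc$ is a return pointer on which reasonability imposes nothing; both are valid.
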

\begin{proof}
  Induction over $n$.
  For $n=0$ all words are reasonable.
  For $n>0$, we need to prove four implications:
  \begin{itemize}
  \item If $w = \seal{\sigma_\baddr,\sigma_\eaddr,\_}$, then $[\sigma_\baddr,\sigma_\eaddr] \mathrel{\#} (\gsigrets \cup \gsigcloss)$:

    This follows directly from $\npair{(w,\_)} \in \lrv(W_w)$ by definition of $\lrv$.

  \item If $w = ((\perm,\_),\baddr,\eaddr,\_)$, then $[\baddr,\eaddr] \# \ta$:

    This follows directly from $\npair{(w,\_)} \in \lrv(W_w)$ by definition of $\lrv$.

  \item If $w = \sealed{\sigma,\vsc}$ and $\sigma \not\in (\gsigrets\cup \gsigcloss)$ then $\vsc$ is reasonable for $n-1$ steps:

    By $\npair{(w,\_)} \in \lrv(W_w)$, we get some region $r \in \dom(\pwheap)$ and a set of return seals, a set of closure seals and a code memory: $\sigrets,\sigcloss,\mscode$ such that $\pwheap(r) = (\pure,\_,H_\sigma)$ and $H_\sigma \nequal H^\mathrm{code,\square}_\sigma \; \sigrets \; \sigcloss \; \mscode \; \gc)$ and $\npair[n']{(\sealed{\sigma,\vsc},\_)} \in H_\sigma \; \sigma \; \xi^{-1}(W)$ for all $n' < n$, so in particular for $n' = n-1$.
    It follows easily from the above that also $\npair[n-1]{(\vsc,\_)} \in H^\mathrm{code,\square}_\sigma \; \sigrets \; \sigcloss \; \mscode \; \gc \; \sigma \; \xi^{-1}(W)$.

    This gives us three cases for $\vsc$.
    The two first cases $\vsc = \retptrd(\_,\_)$ and $\vsc = \retptrc(\_,\_,\_)$ are easily discharged as the reasonability definition puts no requirements on return pointers.
    For the final case, $\sigma \in \sigcloss$ and either
    \begin{itemize}
    \item $\dom(\mscode) \mathrel{\#} \ta \tand \npair[n-1]{(\vsc,\_)} \in \lrv (W)$; or
    \item $\dom(\mscode) \subseteq \ta$, $\sigcloss \subseteq \gsigcloss$, and $\npair{(\vsc,\_)} \in \lrvg{\trust}(W)$ for $\trust = \untrusted$ iff $\nonExec{\vsc}$.
    \end{itemize}
    In the first case, the result follows from the induction hypothesis.
    In the second case, we have a contradiction with $\sigma \not\in (\gsigrets\cup \gsigcloss)$.

  \item If $w = ((\perm,\_),\baddr,\eaddr,\_)$ and $\perm \in \readAllowed{}$ and $n > 0$, then $\ms_S(\aaddr)$ is reasonable up to $n - 1$ steps for all $\aaddr \in ([\baddr,\eaddr] \setminus \ta)$:

    From $\npair{(w,\_)} \in \lrv(W_w)$, we get $\npair[n]{[b,e]} \in \readCond{\perm,\lin}$.

    From $\npair[n]{[b,e]} \in \readCond{\perm,\lin}$ we get $S\subseteq \addressable{\lin,\pwheap}$ and $R : S \fun \powerset{\nats}$ such that
    \begin{itemize}
    \item $\biguplus_{r\in S} R(r) \supseteq [b,e]$
    \item $(\lin = \linear \Rightarrow \forall r\ldotp |R(r)|  = 1)$
    \item $\forall r \in S \ldotp \pwheap(r).H \nsubeq \stdreg{R(r),\gc}{\pur}.H$
    \end{itemize}
    given $a \in [b,e]$ we know by the above that there exists $r$ such that $a
    \in R(r)$.

    By $\memSat{\ms_S,\stk,\ms_\stk,\_}{W_M}$, we get $R_\ms : \dom(\activeReg{\pwheap}) \fun \MemSeg \times \MemSeg$ such that $\ms_S = \biguplus_{r \in \dom(\activeReg{\pwheap})} \pi_1(R_\ms(r))$.
    
    Further, we get $R_W : \dom(\activeReg{\pwheap}) \fun \World$ such that $W' = \oplus_{r \in \dom(\activeReg{\pwheap})} R_W(r)$ and $\forall r \in \dom(\activeReg{\pwheap}) \ldotp \npair{R_\ms(r)} \in \pwheap(r).H \; \xi^{-1}(R_W(r))$.
    For $W_M = W' \oplus W''$ for some $W''$.

    In particular, we have $\npair{R_\ms(r)} \in \pwheap(r).H \; \xi^{-1}(R_W(r))$.
    
    We know $\pwheap(r).H \nsubeq \stdreg{R(r),\gc}{\pur}.H$, so $\npair[n']{R_\ms(r)} \in \stdreg{R(r),\gc}{\pur}.H(\xi^{-1}(R_W(r)))$ for $n' < n$.
    This gives us $\npair[n']{(\pi_1(R_\ms(r))(a),\_)} \in \lrv(R_W'(a))$ where $R_W' : [b,e] \fun \World$ such that $\biguplus_{a \in \dom(\pi_1(R_\ms(r)))} R_W'(a) = R_W(r)$.

    At this point we apply the induction hypothesis which is possible as we have
    the following:
    \begin{itemize}
    \item $\npair[n-1]{(\pi_1(R_\ms(r))(a),\_)} \in \lrv(R_W'(a))$

      We get this by the above.
    \item $\memSat[n-1]{\ms_S,\stk,\ms_\stk,\_}{W_M}$

      This follows by assumption and Lemma~\ref{lem:downwards-closed}
    \item $\purePart{R_W'(a)} \oplus \purePart{W_M}$

      Which follows by definition of $\purePart{}$ and the fact that $R_W'(a)$ is $W_M$ with part of its ownership, but $\purePart{}$ strips away the ownership making the two compatible.
    \end{itemize}
    which gives us that, with respect to $\ta,\gsigrets,\gsigcloss$, $\pi_1(R_\ms(r))(a)$ is reasonable up to $n-1$ steps in memory $\ms_S$ and free stack $\ms_\stk$.
    Which is what we wanted as $\pi_1(R_\ms(r))(a) = \ms_S(a)$.

  \item If $w = \stkptr{\perm,\baddr,\eaddr,\_}$ and $\perm \in \readAllowed{}$ and $n > 0$, then $\ms_\stk(\aaddr)$ is reasonable up to $n - 1$ steps for all $\aaddr \in [\baddr,\eaddr]$:

    This case is proven in the same way the previous case was.
    The main difference is that the free part of the world is used instead of the heap part:

    From $\npair{(w,\_)} \in \lrv(W_w)$ and $\perm \in \readAllowed{}$, we get $\npair{[b,e]} \in \stackReadCond{\perm}$.

    From $\npair{[b,e]} \in \stackReadCond{\perm}$ we get $S\subseteq
    \addressable{\linear,\pwfree}$ and $R : S \fun \powerset{\nats}$ such that
    \begin{itemize}
    \item $\biguplus_{r\in S} R(r) \supseteq [b,e]$
    \item $\forall r\ldotp |R(r)|  = 1$
    \item $\forall r \in S \ldotp \pwheap(r).H \nsubeq \stdreg{R(r),\gc}{\pur}.H$
    \end{itemize}
    given $a \in [b,e]$ we know by the above that there exists $r$ such that $a \in R(r)$.

    By $\memSat{\ms_S,\stk,\ms_\stk,\_}{W_M}$, we get $R_\ms : \dom(\activeReg{\pwfree}) \fun \MemSeg \times \MemSeg$ such that $\ms_\stk = \biguplus_{r \in \dom(\activeReg{\pwfree})} \pi_1(R_\ms(r))$.
    
    Further, we get $R_W : \dom(\activeReg{\pwfree}) \fun \World$ such that $W' = \oplus_{r \in \dom(\activeReg{\pwfree})} R_W(r)$ and $\forall r \in \dom(\activeReg{\pwfree}) \ldotp \npair{R_\ms(r)} \in \pwfree(r).H \; \xi^{-1}(R_W(r))$.
    For $W_M = W' \oplus W''$ for some $W''$.

    In particular, we have $\npair{R_\ms(r)} \in \pwfree(r).H \; \xi^{-1}(R_W(r))$.
    
    We know $\pwfree(r).H \nsubeq \stdreg{R(r),\gc}{\pur}.H$, so $\npair[n']{R_\ms(r)} \in \stdreg{R(r),\gc}{\pur}.H(\xi^{-1}(R_W(r)))$ for $n' < n$.
    This gives us $\npair[n']{(\pi_1(R_\ms(r))(a),\_)} \in \lrv(R_W'(a))$ where $R_W' : [b,e] \fun \World$ such that $\biguplus_{a \in \dom(\pi_1(R_\ms(r)))} R_W'(a) = R_W(r)$.

    At this point we apply the induction hypothesis which is possible as we have
    the following:
    \begin{itemize}
    \item $\npair[n-1]{(\pi_1(R_\ms(r))(a),\_)} \in \lrv(R_W'(a))$

      We get this by the above.
    \item $\memSat[n-1]{\ms_S,\stk,\ms_\stk,\_}{W_M}$

      This follows by assumption and Lemma~\ref{lem:downwards-closed}
    \item $\purePart{R_W'(a)} \oplus \purePart{W_M}$

      Which follows by definition of $\purePart{}$ and the fact that $R_W'(a)$ is $W_M$ with part of its ownership, but $\purePart{}$ strips away the ownership making the two compatible.
    \end{itemize}
    which gives us that, with respect to $\ta,\gsigrets,\gsigcloss$, $\pi_1(R_\ms(r))(a)$ is reasonable up to $n-1$ steps in memory $\ms_S$ and free stack $\ms_\stk$.
    Which is what we wanted as $\pi_1(R_\ms(r))(a) = \ms_\stk(a)$.
  \end{itemize}
\end{proof}

\begin{lemma}[Untrusted register files are reasonable]
  If
  \begin{itemize}
  \item $\gc = (\ta,\stkb,\gsigrets,\gsigcloss)$
  \item $\npair{(\reg_S,\_)} \in \lrr(W_w)$
  \item $\memSat{\ms_S,\stk,\ms_\stk,\_}{W_M}$
  \item $\purePart{W_w} \oplus \purePart{W_M}$ is defined
  \end{itemize}
  then, with respect to $\ta,\gsigrets,\gsigcloss$,
  \begin{itemize}
  \item $\reg_S(r)$ is reasonable up to $n$ steps in memory $\ms_S$ and free stack $\ms_\stk$ for all $r \not\in \{\pcreg\}$
  \item $\ms_S$, $\ms_\stk$ and $\stk$ are all disjoint
  \end{itemize}
\end{lemma}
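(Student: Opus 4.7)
The plan is to reduce the register-file claim to a per-register claim and invoke the preceding lemma on each register, then read the disjointness claim straight off the definition of memory satisfaction.

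First I would unfold $\npair{(\reg_S,\_)} \in \lrr(W_w) = \lrr(\emptyset)(W_w)$ to extract a map $S : \RegName \setminus \{\pcreg\} \fun \World$ with
\[
  W_w \;=\; \bigoplus_{r \in \RegName \setminus \{\pcreg,\rdata\}} S(r),
  \qquad
  \forall r \neq \pcreg \ldotp \npair{\stpair[.]{\reg_S(r)}{\reg_T(r)}} \in \lrv(S(r)).
\]
Next, fix an arbitrary $r \neq \pcreg$ and apply the previous lemma (Untrusted source values are reasonable) with witness world $S(r)$. The hypotheses on $\gc$, the value relation, and memory satisfaction transfer directly. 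The only nontrivial hypothesis is that $\purePart{S(r)} \oplus \purePart{W_M}$ is defined. For $r$ appearing in the $\oplus$-decomposition of $W_w$, iterated application of Lemma~\ref{lem:purePart-oplus} to the decomposition yields $\purePart{S(r)} = \purePart{W_w}$, so the required compatibility is precisely the standing assumption $\purePart{W_w} \oplus \purePart{W_M}$ is defined. Concluding this step delivers reasonability of $\reg_S(r)$ up to $n$ steps in $\ms_S$ and $\ms_\stk$, as required.

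For the disjointness claim, I would just unfold $\memSat{\ms_S,\stk,\ms_\stk,\_}{W_M}$: the definition explicitly requires $\stk = (\opc_0,\ms_0) :: \dots :: (\opc_m,\ms_m)$ with $\ms_S \uplus \ms_\stk \uplus \ms_0 \uplus \dots \uplus \ms_m$ defined, which is exactly the pairwise disjointness of $\ms_S$, $\ms_\stk$, and the frame memories inside $\stk$.

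The subtle point — and the place I expect to spend some care — is the case $r = \rdata$, which is deliberately excluded from the $\oplus$-decomposition of $W_w$ in the definition of $\lrr$. Here the argument $\purePart{S(\rdata)} = \purePart{W_w}$ no longer follows by Lemma~\ref{lem:purePart-oplus}, so the compatibility hypothesis for applying the previous lemma needs to be re-established. The natural way is to observe that the existential in $\lrr$ leaves $S(\rdata)$ unconstrained by $W_w$, so without loss of generality we may replace the witness $S$ by one whose $\rdata$-component is compatible with $\purePart{W_M}$ (e.g.\ $\purePart{W_w}$ itself, using that $\lrv$ is closed under moving to a pure world that still satisfies the required projections). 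With that tweak in hand, the argument for $\rdata$ parallels the generic case and the lemma follows.
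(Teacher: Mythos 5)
Your main argument is exactly the paper's: unfold $\lrr(W_w)$ to obtain the per-register witness $S$, use Lemmas~\ref{lem:purePart-oplus} and~\ref{lem:oplus-assoc-comm} to get that $\purePart{S(r)} \oplus \purePart{W_M}$ is defined from the standing assumption, apply Lemma~\ref{lem:untrusted-source-values-are-reasonable} registerwise, and read the disjointness claim off the first two conjuncts of $\memSat{\ms_S,\stk,\ms_\stk,\_}{W_M}$. That part is correct and needs no further comment.

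Where you go beyond the paper is the $r = \rdata$ case, and your observation is legitimate: the definition of $\lrrg{\trust}$ requires $\npair{(\reg_S(\rdata),\reg_T(\rdata))} \in \lrv(S(\rdata))$ while excluding $\rdata$ from the $\bigoplus$-decomposition of $W_w$, so $S(\rdata)$ is a priori unrelated to $W_w$ and $W_M$; the paper's own proof silently writes the decomposition as ranging over all of $\RegName\setminus\{\pcreg\}$ and does not confront this. However, your repair does not go through as stated. Replacing the witness world for $\rdata$ by $\purePart{W_w}$ requires $\npair{(\reg_S(\rdata),\_)} \in \lrv(\purePart{W_w})$, and the only tool for moving a value into a pure world is Lemma~\ref{lem:non-linear-pure}, which applies \emph{only to non-linear words}; monotonicity (Lemma~\ref{lem:monotonicity}) goes the wrong way, since $W \future \purePart{W}$ and not conversely. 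If $\reg_S(\rdata)$ is a stack pointer or a linear capability, its membership in $\lrv(S(\rdata))$ rests on $\spatialo$ ownership in $S(\rdata)$ that $\purePart{}$ demotes to $\spatial$, so the proposed substitution fails precisely in the cases where Lemma~\ref{lem:untrusted-source-values-are-reasonable} has real work to do (reading $\ms_\stk$ through the stack pointer requires $S(\rdata)$ to share its region structure with $W_M$). The honest fix is not a WLOG on the witness but a strengthening at the level of the definitions or hypotheses: either tie $S(\rdata)$ into the $\oplus$-decomposition in the definition of $\lrrg{\trust}$, or add the assumption that $\purePart{S(\rdata)} \oplus \purePart{W_M}$ is defined. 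As it stands, neither your proof nor the paper's establishes the conclusion for $\rdata$ without one of these amendments.
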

\begin{proof}
  We prove the two results separately:
  \begin{itemize}
  \item $\reg_S(r)$ is reasonable up to $n$ steps in memory $\ms_S$ and free stack $\ms_\stk$ for all $r \not\in \{\pcreg\}$:

    We have that $W_w = \bigoplus_{r \in \RegName \setminus \{\pcreg\}} S(r)$ and $\npair{(\reg_S(r),\reg_T(r))} \in \lrv(S(r))$ for all $r \not\in \{\pcreg\}$.

    For all $r \neq \pcreg$, we have that $\purePart{S(r)} \oplus \purePart{W_M}$ is defined by Lemmas~\ref{lem:purePart-oplus} and~\ref{lem:oplus-assoc-comm}.

    The result then follows directly from Lemma~\ref{lem:untrusted-source-values-are-reasonable}.

  \item $\ms_S$, $\ms_\stk$ and $\stk$ are all disjoint:
    
    This follows from $\memSat{\ms_S,\stk,\ms_\stk,\_}{W_M}$.
  \end{itemize}
\end{proof}

\begin{lemma}[Reasonable things don't need to be trusted]
  \label{lem:trusted-and-reasonable-is-untrusted}
  If
  \begin{itemize}
  \item $\npair{(w,w')} \in \lrvg{\trusted}(W_w)$
  \item $w$ is reasonable up to $n$ steps in memory $\ms_S$ and free stack $\ms_\stk$, with respect to $\code,\ta,\gsigrets,\gsigcloss$.
  \item $n > 0$
  \item Theorem~\ref{thm:ftlr} holds up to $n$ steps.
    \dominique{Note: I mention this assumption explicitly to avoid cyclic reasoning in the FTLR proof.}
  \end{itemize}
  Then
  \begin{itemize}
  \item $\npair{(w,w')} \in \lrv(W_w)$
  \end{itemize}
\end{lemma}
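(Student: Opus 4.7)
The plan is a case analysis on which disjunct in the definition of $\lrvtrusted(W_w)$ witnesses membership of $(w,w')$. By inspection, there are exactly three possibilities: (i) $(w,w')$ is already in $\lrv(W_w)$, (ii) $w = \seal{\sigma_\baddr,\sigma_\eaddr,\sigma}$ coming from the trusted seal clause with $[\sigma_\baddr,\sigma_\eaddr] \subseteq (\sigrets \cup \sigcloss) \subseteq (\gsigrets \cup \gsigcloss)$, or (iii) $w = ((\perm,\normal),\baddr,\eaddr,\aaddr)$ with $\perm \sqsubseteq \rx$ and $[\baddr,\eaddr] \subseteq \ta$. Case (i) is immediate. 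The strategy in the remaining two cases is to play the trusted-case inclusion against the reasonability disjointness and force the relevant range to be empty, so that the extra obligations of the $\lrv$ clause collapse to vacuous quantifications.

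In case (ii), Definition~\ref{def:reasonable-word} applied at $n > 0$ yields $[\sigma_\baddr,\sigma_\eaddr] \mathrel{\#} (\gsigrets \cup \gsigcloss)$; combined with $[\sigma_\baddr,\sigma_\eaddr] \subseteq (\gsigrets \cup \gsigcloss)$ from (ii) this forces $[\sigma_\baddr,\sigma_\eaddr] = \emptyset$. The two conditions of the untrusted seal clause—disjointness from $\gsigrets \cup \gsigcloss$ and, for each $\sigma' \in [\sigma_\baddr,\sigma_\eaddr]$, existence of a pure region whose sealed interpretation is $\lrv \circ \xi$—are then trivially satisfied, so $(w,w') \in \lrv(W_w)$.

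In case (iii), reasonability of a memory capability similarly gives $[\baddr,\eaddr] \mathrel{\#} \ta$, which against $[\baddr,\eaddr] \subseteq \ta$ forces $[\baddr,\eaddr] = \emptyset$. The hypothesis already supplies $\lin = \normal$ and $\perm \sqsubseteq \rx$ (in particular $\perm \neq \rwx$ and $\perm \not\in \writeAllowed{}$). The remaining obligations of the $\lrv$ capability clause are $\npair{[\baddr,\eaddr]} \in \readCond{\lin,W}$ if $\perm \in \readAllowed{}$, and $\npair{[\baddr,\eaddr]} \in \execCond{W} \wedge \xReadCond{W}$ if $\perm = \rx$; all of these are quantifications indexed by subsets of the empty set (for $\readCond$ take $S = \emptyset$; for $\execCond$ no $[\baddr',\eaddr']\subseteq\emptyset$ admits an $\aaddr$ inside it; for $\xReadCond$ any choice of code region with $\dom(\code) \supseteq \emptyset$ will do, and one was already supplied by the trusted-case hypothesis). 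Hence $(w,w') \in \lrv(W_w)$.

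The main obstacle is essentially bookkeeping: one must verify, permission by permission, that each sub-condition in the $\lrv$ capability and seal clauses really does become vacuous once the range is forced empty, and in particular that the permission-keyed implications on linearity and readability match what the trusted clause already provides. The explicit FTLR assumption is included conservatively to forestall circularity, but in the empty-range reduction it plays no active role—it would only be invoked if some subcase of $\lrvtrusted$ led to a genuinely nonempty trusted range, which reasonability precludes.
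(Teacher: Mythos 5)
Your proof is correct, but it takes a genuinely different route from the paper's in both non-trivial cases. The paper dispatches the trusted-seal case as an outright contradiction between $[\sigma_\baddr,\sigma_\eaddr]\subseteq(\sigrets\cup\sigcloss)\subseteq(\gsigrets\cup\gsigcloss)$ and the disjointness given by reasonability — an argument that tacitly assumes the seal range is nonempty — whereas you let the two facts force $[\sigma_\baddr,\sigma_\eaddr]=\emptyset$ and check that the untrusted seal clause is then vacuously satisfied; your version actually covers the degenerate empty-range seal that the paper's ``contradiction'' glosses over. In the capability case the paper does \emph{not} collapse the range: it derives $[\baddr,\eaddr]\mathrel{\#}\ta$ from reasonability and then establishes $\readCond{\normal,W}$ via Lemma~\ref{lem:xReadCond-outside-ta-implies-readCond} and $\execCond{W}$ via Theorem~\ref{thm:ftlr} together with Lemmas~\ref{lem:conds-shrinkable} and~\ref{lem:monotonicity}; you instead intersect $[\baddr,\eaddr]\subseteq\ta$ with $[\baddr,\eaddr]\mathrel{\#}\ta$ to get an empty range and discharge every permission-keyed obligation vacuously (correctly noting that $\xReadCond{}$ on $\emptyset$ is already witnessed by the trusted-clause hypothesis). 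What each buys: your argument is more elementary — it never invokes the FTLR, so the explicit non-circularity assumption is genuinely unused — while the paper's capability-case argument is insensitive to whether the range is empty and so would survive any relaxation of the trusted clause or of reasonability that left the ranges nonempty. Both are sound derivations of the same conclusion.
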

\begin{proof}
  Assume that $\npair{(w,w_T)} \in \lrvg{\trusted}(W_w)$

  By definition of $\lrvg{\trusted}$, it is clear that also $\npair{(w,w_T)} \in \lrv(W_w)$, except in the following two cases:
  \begin{itemize}
  \item
    \begin{itemize}
    \item $w = \src{\seal{\sigma,\sigma_\baddr,\sigma_\eaddr}}$
    \item $w_T = \seal{\sigma,\sigma_\baddr,\sigma_\eaddr}$ 
    \item $r \in \dom(\pwheap)$
    \item $\pwheap(r) \nequal \codereg{\sigrets,\sigcloss,\code,(\ta,\stkb,\gsigrets,\gsigcloss)}$
    \item $\dom(\code) \subseteq \ta$ 
    \item $[\sigma_\baddr,\sigma_\eaddr] \subseteq (\sigrets\cup\sigcloss)$ 
    \item $\sigrets \subseteq \gsigrets$
    \item $\sigcloss \subseteq \gsigcloss$
    \end{itemize}

    By definition of reasonability, we get immediately that $[\sigma_b,\sigma_e]\mathrel{\#} (\gsigrets\cup \gsigcloss)$, contradicting the last three facts above.
  \item 
    \begin{itemize}
    \item $w = \src{((\perm,\normal),\baddr,\eaddr,\aaddr)}$, 
    \item $w_T =
      ((\perm,\normal),\baddr,\eaddr,\aaddr)$
    \item $\perm \sqsubseteq \rx$
    \item $\gc = (\ta,\stkb,\gsigrets,\gsigcloss)$
    \item $\npair{[\baddr,\eaddr]} \in \xReadCond[\square,\gc]{W}$
    \end{itemize}

    If $\perm = \noperm$, then $\npair{(w,w_T)} \in \lrv(W_w)$ follows directly.
    
    If $\perm \neq \noperm$, then by definition of reasonability, we get immediately that $[b,e] \mathrel{\#} \ta$.

    By definition of $\lrv$, it suffices to show that $\npair{[b,e]} \in \readCond{\normal,W}$ and $\npair{[b,e]} \in \execCond{\normal,W}$.
    The former follows directly from Lemma~\ref{lem:xReadCond-outside-ta-implies-readCond}.
    The latter follows by Theorem~\ref{thm:ftlr}, Lemmas~\ref{lem:conds-shrinkable} and~\ref{lem:monotonicity} and definition of $\execCond{}$.
  \end{itemize}
\end{proof}

\subsection{Fundamental Theorem of Logical Relations}
\begin{theorem}[FTLR]
  \label{thm:ftlr}
  For all $n,W,\lin,\baddr,\eaddr,\aaddr$,
  If
  \begin{itemize}
  \item $\npair{[\baddr,\eaddr]} \in \xReadCond{W}$
  \end{itemize}
  and one of the following sets of requirements holds:
  \begin{itemize}
  \item \begin{itemize}
    \item $[\baddr,\eaddr] \subseteq \ta$
    \item $({((\rx,\normal),\baddr,\eaddr,\aaddr)}$ behaves reasonably up to $n$ steps.
    \end{itemize}
  \item
    \begin{itemize}
    \item $[\baddr,\eaddr] \mathrel{\#} \ta$
    \end{itemize}
  \end{itemize}
  Then
  \[
    \npair{\left(\arraycolsep=0pt\array{l}((\rx,\normal),\baddr,\eaddr,\aaddr),\\
      ((\rx,\normal),\baddr,\eaddr,\aaddr)\endarray\right)} \in \lre(W)
  \]
\end{theorem}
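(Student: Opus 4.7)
\medskip

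\noindent\textbf{Proof plan.} The plan is to proceed by step-indexed (Löb) induction on $n$. After unfolding $\lre(W)$, we fix some $n' \le n$ and some witness register file $\reg$, memories $\ms_S,\ms_T$, stack contents $\stk,\ms_\stk$, and auxiliary worlds $W_\lrrs, W_\lrm$ satisfying the preconditions of the expression relation, set the $\pcreg$ of both configurations to $((\rx,\normal),\baddr,\eaddr,\aaddr)$, and must show the resulting pair lies in $\lro$. Since $\lro$ unfolds to a simulation up to $n'$ steps, it suffices to describe what one source/target step produces and invoke the induction hypothesis at $n'-1$ on the resulting configuration.

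\smallskip

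\noindent The first sub-step is to extract the code at address $\aaddr$ from the world. In the untrusted branch $[\baddr,\eaddr]\mathrel{\#}\ta$, the hypothesis $\npair{[\baddr,\eaddr]} \in \xReadCond{W}$ combined with Lemma~\ref{lem:code-reg-and-mem-sat} locates a code region $\codereg{\sigrets,\sigcloss,\code,\gc}$ in $\pwheap$ with $\dom(\code)\supseteq[\baddr,\eaddr]$, so the target executes exactly the instruction $\decInstr{\code(\aaddr)}$; in the trusted branch $[\baddr,\eaddr]\subseteq\ta$, the same lemma yields the code, but we additionally use the reasonability of the pc (Definition~\ref{def:reasonable-pc}) together with Lemma~\ref{lem:untrusted-register-files-are-reasonable-style-arguments} (the untrusted-register-files/untrusted-values lemmas above) to get that $\Phi$ itself is reasonable up to $n'$ steps, so that the reasonability side-conditions of the call/xjmp clauses are available when we need them. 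In both branches the non-fail condition of the target step is then matched by the corresponding source step (using the definition of the source step relation and the $\callCond{}$ predicate for the call case).

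\smallskip

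\noindent The core of the proof is then a case analysis on $\decInstr{\code(\aaddr)}$. The purely arithmetic, logical, bookkeeping and capability-manipulation instructions (\tmove, \tplus, \tminus, \tlt, \tgeta--\tgetp, \tgetlin, \tisptr, \trestrict, \tcca, \tcseal, \tsplit, \tsplice, \tsetatob, \tjmp, \tjnz) are handled uniformly: using that $\npair{\stpair{\reg}{\reg}}\in\lrr(W_\lrrs)$, one shows that the updated register file still lies in $\lrr$ for the same (or the same after revocation-free future) world, using monotonicity and the closure of $\lrv$ under the relevant constructors; then the updated pc (via $\updPcAddr{}$) still satisfies the hypotheses of the theorem for the same $[\baddr,\eaddr]$ and $W$, so the induction hypothesis at $n'-1$ closes the goal via Lemma~\ref{lem:downwards-closed}-style step-index weakening. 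For \tload{}/\tstore{}, we unfold $\readCond{}$/$\writeCond{}$ and $\stackReadCond{}$/$\stackWriteCond{}$ from the definition of $\lrv$ on the capability in the source register, use $\memSat{}{}$ to decompose $\ms_T$ according to $\pwheap[W_\lrm]$ (resp.\ $\pwfree[W_\lrm]$), read/write the corresponding standard-region cell, and re-establish memory satisfaction; the address-stratified clause of $\writeCond{}$ is exactly what allows writing one stored value without disturbing the others. For \txjmp{}, we unfold $\lrv$ on the two sealed operands, observe that the seals must come from a single pure region with $H_\sigma \nequal H^\mathrm{code,\square}_\sigma\dots$, and directly feed the resulting $\lrexj$-obligation with the current register file and memory.

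\smallskip

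\noindent The main obstacle, as expected, is the $\scall{\offpc,\offsigma}{r_1}{r_2}$ case, which mediates the source/target discrepancy in the calling convention. Here the source step pushes a new frame $(\opc,\ms_{\stk,\priv})$ onto $\stk$ and hands out the source tokens $\retptrc(\baddr,\eaddr,\aaddr+\calllen)$ and $\retptrd(\aaddr_\stk,\eaddr_\stk)$ (sealed), while the target must simulate this in several micro-steps using the call implementation in Paragraph~\ref{par:call-impl}. The plan is to (i)~use the code-region invariant $H^\mathrm{code,\square}_\sigma$ on the seal at $\aaddr+\offpc$ to exhibit that the seal $\sigma_?$ belongs to $\sigrets$ and therefore sealing the two return tokens produces values that already lie in $\lrexj$ by definition of $H^\mathrm{code,\square}_\sigma$ on $\sigrets$; (ii)~shift authority over the suffix $[\aaddr_\stk,\eaddr_\stk]$ of the linear stack out of $\rstk$'s world and into a fresh $\spatialo$ stack-region hosting the private frame at $\opc=\aaddr+\calllen$, using Lemma~\ref{lem:oplus-future-distr} and the $\stareg{}$ definition to exhibit the future world; and (iii)~package the truncated stack pointer $c_\stk$ and the remaining register file back into $\lrr$ with the updated world, after which the $\lrexj$-clause applied to the sealed adversary pair $(c_1,c_2)$ discharges the remainder. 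Verifying that the various memory/region/world manipulations in (ii) really produce a valid $\oplus$-decomposition and a valid future world (for which Lemmas~\ref{lem:oplus-assoc-comm}, \ref{lem:oplus-distr-uplus} and~\ref{lem:purePart-oplus} are indispensable) is the most technical part; all other instructions reduce either to straightforward monotonicity arguments or to the analogous bookkeeping with one fewer moving part.
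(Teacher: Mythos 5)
Your overall architecture matches the paper's: complete induction on $n$, unfold $\lre(W)$ against arbitrary $n'\le n$, register files, memories and worlds, then a large case analysis on the decoded instruction closed by anti-reduction (Lemma~\ref{lem:lro-anti-red-gen}) and the induction hypothesis, with the trusted/untrusted split driven by $[\baddr,\eaddr]\subseteq\ta$ versus $[\baddr,\eaddr]\mathrel{\#}\ta$ and reasonability supplying the extra hypotheses in the trusted branch. The only structural difference is that the paper factors the instruction case analysis into a separate workhorse (Lemma~\ref{lem:ftlr-internal-lemma}) while you inline it; that is cosmetic. Your treatment of the arithmetic, capability-manipulation, load/store and xjmp cases is essentially the paper's.

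The genuine gap is in step (i) of your $\scall{\offpc,\offsigma}{r_1}{r_2}$ case. You claim that sealing the return tokens with $\sigma_? \in \sigrets$ "produces values that already lie in $\lrexj$ by definition of $H^\mathrm{code,\square}_\sigma$". It does not: the return-seal clause of $H^\mathrm{code,\square}_\sigma$ only fixes \emph{which} pairs may be sealed with $\sigma_?$ (the specific $\retptrc(\baddr,\eaddr,\aaddr+\calllen)$ / $\retptrd(\aaddr_\stk,\eaddr_\stk)$ and their target counterparts, pinned down via Lemma~\ref{lem:unique-ret-seals}). To place $\sealed{\sigma_?,\cdot}$ into the register file relation you need membership in $\lrv$, and $\lrv$ for sealed values \emph{additionally demands} the $\lrexj$ obligation for the unsealed pair against every $H_\sigma$-related partner in every future world. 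For $\rretc$ this obligation is exactly the well-bracketedness-at-return argument, which is the longest and most delicate part of the paper's proof: one must show that when the adversary later xjumps the return pair, the target's return code ($\stkb$ check, splice with the stored private frame, popping the sentinel) succeeds precisely when the source's $\xjumpResult{}{}{}$ on $\retptrc/\retptrd$ succeeds (a four-way case analysis on the returned $\rstk$ value, using memory satisfaction to show the private frame must be the topmost source stack frame); one must then revoke the private-stack region, re-expose its addresses as free-stack regions, re-establish $\memSat{\cdot}{\cdot}$, and re-invoke the theorem at $\aaddr+\calllen$ using the "Guarantee stack base address before call" clause of Definition~\ref{def:reasonable-conf} (which is also what gives $\baddr_\stk=\stkb$) and Lemma~\ref{lem:trusted-and-reasonable-is-untrusted} to downgrade the trusted register file before handing it to the adversary. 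None of this appears in your plan, and without it the call case — the reason the theorem is interesting — does not go through.
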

Note: we don't require the readcondition in the trusted case because trusted code pointers point to trusted code blocks which may require seals for trusted seals which are not in the untrusted value relation.

\begin{lemma}[Untrusted environments produce safe closures]
  \label{lem:lre-implies-lrexj}
  If
  \begin{itemize}
  \item $\npair{(w_1,w_1')} \in \lre(W_1)$ or ($\nonExec{w_1}$ and $\nonExec{w_1'}$)
  \item $w_1 \neq \retptrc(\_)$
  \item $\npair{(w_2,w_2')} \in \lrv(W_2)$ or ($\exec{w_2}$ and $\exec{w_2'}$)
  \end{itemize}
  Then
  \begin{itemize}
  \item $\npair{((w_1,w_2),(w_1',w_2'))} \in \lrexj(W_1 \oplus W_2)$
  \end{itemize}
\end{lemma}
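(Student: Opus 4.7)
The plan is to unfold $\lrexj(W_1 \oplus W_2)$: fix $n' \le n$, register files $\reg_S, \reg_T$, memories $\ms_S, \ms_T, \ms_\stk, \stk$, and worlds $W_\lrrs, W_\lrm$ satisfying the premises---so $\npair[n']{\stpair{\reg}{\reg}} \in \lrr(\{\rdata\})(W_\lrrs)$, memory satisfaction at $W_\lrm$, and $(W_1 \oplus W_2) \oplus W_\lrrs \oplus W_\lrm$ defined---and exhibit configurations $\Phi_S', \Phi_T'$ produced by $\xjumpResult$ that lie in $\lro$. Writing $\Phi_S = (\ms_S,\reg_S,\stk,\ms_\stk)$ and $\Phi_T = (\ms_T,\reg_T)$, I would split along the two disjunctions in the hypotheses.

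The degenerate branches are quick. If $\exec{w_2}$ and $\exec{w_2'}$, both $\xjumpResult$ calls land in the ``otherwise'' clause and return $\failed$, since the first clause requires $\nonExec{c_2}$ and the second requires $c_1 = \retptrc(\cdot)$, which is excluded by $w_1 \neq \retptrc(\cdot)$. If $\nonExec{w_1}$ and $\nonExec{w_1'}$, then $\xjumpResult$ either fails outright or takes its first clause and produces configurations whose $\pcreg$ is non-executable, so their very next step is $\failed$. In either sub-case neither $\Phi_S'$ nor $\Phi_T'$ ever reaches $\halted$, so both directions of $\lro$ hold vacuously.

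The substantive case is $\npair[n']{(w_1,w_1')} \in \lre(W_1)$ together with $\npair[n']{(w_2,w_2')} \in \lrv(W_2)$ and $\nonExec{w_2}$. A quick inspection of the clauses of $\lrv$ shows that $\nonExec{w_2}$ forces $\nonExec{w_2'}$ and rules out $w_2 = \retptrd(\cdot)$ (no such shape appears in $\lrv$). Together with $w_1 \neq \retptrc(\cdot)$, both $\xjumpResult$ calls take their first clause and give $\Phi_S' = \Phi_S\updReg{\pcreg}{w_1}\updReg{\rdata}{w_2}$ and analogously for the target. The conclusion then follows by applying the $\lre(W_1)$-property of $(w_1,w_1')$ with register files $\reg_S'' = \reg_S\updReg{\rdata}{w_2}$ and $\reg_T'' = \reg_T\updReg{\rdata}{w_2'}$, the same memory data, and worlds $W_\lrrs, W_\lrm$; the configurations it produces after prepending $\updReg{\pcreg}{w_1}$ coincide exactly with $\Phi_S'$ and $\Phi_T'$.

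Two bookkeeping checks guard this invocation. (i) $\npair[n']{\stpair{\reg''}{\reg''}} \in \lrr(W_\lrrs)$: take the witness $S$ of $\npair[n']{\stpair{\reg}{\reg}} \in \lrr(\{\rdata\})(W_\lrrs)$ and extend it by $S(\rdata) := W_2$. The $\lrv$-check at $\rdata$ is exactly $\npair[n']{(w_2,w_2')} \in \lrv(W_2)$, and the partition $W_\lrrs = \bigoplus_{r \ne \pcreg,\rdata} S(r)$ is unchanged because the sum in the definition of $\lrr$ skips $\rdata$. (ii) $W_1 \oplus W_\lrrs \oplus W_\lrm$ is defined: this follows from $(W_1 \oplus W_2) \oplus W_\lrrs \oplus W_\lrm$ defined by a regionwise inspection of $\oplus$, since dropping a summand of a defined $\oplus$ preserves definedness. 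The main conceptual point to keep in mind---and where a careless proof would go wrong---is precisely this asymmetric treatment of $\rdata$ in $\lrr$: the world $W_2$ governing the updated data register does not reappear in the world argument of $\lrr$, which is what allows the $\lre$-invocation to happen at $W_1 \oplus W_\lrrs \oplus W_\lrm$ rather than at $W_1 \oplus W_2 \oplus W_\lrrs \oplus W_\lrm$.
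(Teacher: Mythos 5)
Your proof is correct and follows essentially the same route as the paper's: the same case split on the disjunctions, the same observation that $w_1 \neq \retptrc(\_)$ together with $\nonExec{w_2}$ (or its failure) pins down which clause of $\xjumpResult{}{}{}$ fires, and the same reduction of the main case to the $\lre(W_1)$ hypothesis after updating $\rdata$. The only divergence is bookkeeping at the register file: the paper folds $W_2$ into the register-file world, deriving $\npair[n']{(\reg_S\update{\rdata}{w_2},\reg_T\update{\rdata}{w_2'})} \in \lrr(W_\lrrs \oplus W_2)$ via monotonicity and then checking definedness by associativity, whereas you keep $W_\lrrs$ and exploit the fact that the world decomposition in $\lrr$ omits $\rdata$ from the $\bigoplus$ while still demanding the $\lrv$-check there --- both readings satisfy the definition as written, and your definedness argument (dropping a summand of a defined $\oplus$) goes through.
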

\begin{proof}
  Take $n' \leq n$, $\src{\reg_S}, \reg_T, \src{\ms_S}, \ms_T, \src{\ms_\stk}, \src{\stk}$, $W_\lrrs , W_\lrm$ and assume
  \begin{itemize}
  \item $\npair[n']{\stpair{\reg}{\reg}} \in \lrr(\{\rdata\}) (W_\lrrs )$
  \item $\memSat[n']{\stpair[.]{\ms_S,\stk,\ms_\stk}{\ms_T}}{W_\lrm}$
  \item $\Phi_S = \src{(\ms_S,\reg_S,\stk, \ms_\stk)}$
  \item $\Phi_T = (\ms_T,\reg_T)$
  \item $W_1 \oplus W_2 \oplus W_\lrrs \oplus W_\lrm$ is defined 
  \end{itemize}
  Then we need to prove that for $\Phi_S'$,$\Phi_T'$
  \begin{itemize}
  \item $\Phi_S' = \xjumpResult{w_1}{w_2}{\Phi_S}$
  \item $\Phi_T' = \xjumpResult{w_1'}{w_2'}{\Phi_T}$
  \item $\npair[n']{\left(\Phi_S', \Phi_T' \right)}\in \lro$
  \end{itemize}

  Using the fact that $w_1 \neq \retptrc(\_)$, we know by definition of $\xjumpResult{w_1}{w_2}{\Phi_S}$ that we must be in one of the following two cases:
  \begin{itemize}
  \item $w_1 \neq \retptrc(\_)$, $w_2 \neq \retptrd(\_)$, $\nonExec{w_2}$ and
    $\Phi_S' = \Phi_S \updReg{\pcreg}{w_1}\updReg{\rdata}{w_2}$.

    From $\nonExec{w_2}$ and $\npair{(w_2,w_2')} \in \lrv(W_2)$ or $\exec{w_2}$, it follows that also $\nonExec{w_2'}$ and $\Phi_T' = \Phi_T \updReg{\pcreg}{w_1'}\updReg{\rdata}{w_2'}$

    We can now combine $\npair[n']{\stpair{\reg}{\reg}} \in \lrr(\{\rdata\}) (W_\lrrs )$ and $\npair{(w_2,w_2')} \in \lrv(W_2)$ into $\npair[n']{(\reg_S\update{\rdata}{w_2},\reg_T\update{\rdata}{w_2'})} \in \lrr (W_\lrrs \oplus W_2)$, using Lemmas~\ref{lem:monotonicity},~\ref{lem:oplus-future},~\ref{lem:oplus-assoc-comm}, and~\ref{lem:downwards-closed}.

    With our other assumptions above, $\npair{(w_1,w_1')} \in \lre(W_1)$ then gives us that $\npair[n']{(\Phi_S',\Phi_T')}\in \lro$, as required.
    On the other hand, if $\nonExec{w_1}$ and $\nonExec{w_1'}$, then $\Phi_S' \step[\gc] \failed$ and $\Phi_T' \step \failed$ and the result follows by definition of $\lrol$ and $\lror$.

  \item Otherwise: $\Phi_S' = \failed$.
    
    From $w_1 \neq \retptrc(\_)$ and ($\npair{(w_2,w_2')} \in \lrv(W_2)$ or ($\exec{w_2}$ and $\exec{w_2'}$)), the only way we can get to this case is that $\exec{w_2}$ and $\exec{w_2'}$.

    It follows that $\Phi_T' = \failed$.
    
    The result follows by definition of $\lrol$ and $\lror$.
  \end{itemize}
\end{proof}

\begin{lemma}[Safe values are safe to execute]
  \label{lem:safe-vals-safe-exec}
  If
  \begin{itemize}
  \item $\npair{(w_1,w_1')} \in \lrv(W_1)$
  \item $\npair[n']{(w_2,w_2')} \in \lrv(W_2)$ or ($\exec{w_2}$ and $\exec{w_2'}$)
  \item $n' < n$
  \end{itemize}
  Then
  \begin{itemize}
  \item $\npair[n']{((w_1,w_2),(w_1',w_2'))} \in \lrexj(W_1 \oplus W_2)$
  \end{itemize}
\end{lemma}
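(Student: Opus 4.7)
The plan is to unfold $\lrexj(W_1 \oplus W_2)$ at index $n'$: fix an arbitrary $n'' \leq n'$, register files $(\reg_S,\reg_T)$ related by $\lrr(\{\rdata\})(W_R)$, memories satisfying $\memSat[n'']{(\ms_S,\stk,\ms_\stk,\ms_T)}{W_M}$, and assume $W_1 \oplus W_2 \oplus W_R \oplus W_M$ is defined. The task is then to exhibit the $\xjumpResult{}{}{}$ configurations $\Phi_S',\Phi_T'$ and prove $\npair[n'']{(\Phi_S',\Phi_T')} \in \lro$.

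First I would case split on whether $w_2$ and $w_2'$ are executable. In the both-executable case both $\xjumpResult{}{}{}$ calls fail, because the first clause demands $\nonExec{c_2}$ while the second demands $c_2 = \retptrd(\_)$, and neither matches an executable capability; hence $\Phi_S' = \Phi_T' = \failed$ and $\lro$ holds vacuously in both directions. Otherwise the first disjunct of the hypothesis must hold, i.e.\ $\npair[n'']{(w_2,w_2')} \in \lrv(W_2)$ with $\nonExec{w_2}$ and $\nonExec{w_2'}$ (the structural matching inside $\lrv$ keeps executability in sync between the two sides). Since top-level $\lrv$ contains no $\retptrc$ or $\retptrd$ clause, we also have $w_1 \neq \retptrc(\_)$ and $w_2 \neq \retptrd(\_)$, so the first clause of $\xjumpResult{}{}{}$ applies and produces $\Phi_S' = \Phi_S\updReg{\pcreg}{w_1}\updReg{\rdata}{w_2}$ (and analogously for $\Phi_T'$).

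I would then case split on whether $w_1,w_1'$ are executable. If not, both resulting configurations step to $\failed$ (the step relation refuses to execute from a non-executable pc) and $\lro$ again holds vacuously. If they are, inspection of $\lrv$ forces the common shape $((\rx,\normal),\baddr,\eaddr,\aaddr)$ on both sides together with $\npair{[\baddr,\eaddr]} \in \execCond{W_1}$. Applying $\execCond{}$ at $W' \defeq W_1 \oplus W_2$, which is a future of $\purePart{W_1}$ by Lemmas~\ref{lem:oplus-future} and~\ref{lem:world-fut-purePart}, and using $n'' \leq n' < n$, yields $\npair[n'']{(w_1,w_1')} \in \lre(W_1 \oplus W_2)$.

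Finally I would construct $\reg_S^\dagger \defeq \reg_S\update{\rdata}{w_2}$ and analogously $\reg_T^\dagger$, and prove $\npair[n'']{(\reg_S^\dagger,\reg_T^\dagger)} \in \lrr(W_R)$ by extending the register-to-world map given by the hypothesis with $\rdata \mapsto W_2$ and using $\npair[n'']{(w_2,w_2')} \in \lrv(W_2)$. Feeding this updated register file, the original memory, $W_R$ and $W_M$ into the $\lre(W_1 \oplus W_2)$-membership now produces exactly $\npair[n'']{(\Phi_S',\Phi_T')} \in \lro$, closing the proof. The main obstacle is this last bookkeeping step: the authority $W_2$ belonging to $w_2$ has to be routed through the pc's world $W_1 \oplus W_2$ rather than through $W_R$, exploiting the asymmetry in the definition of $\lrr$ whereby the $\rdata$ component does not contribute to the register-file world, so no new summand needs to appear in $W_R$ when $\rdata$ is updated.
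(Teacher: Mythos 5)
Your proof is correct, and the underlying case analysis (executable $w_2$ forces both $\xjumpResult{}{}{}$'s to $\failed$; otherwise $\lrv$ rules out $\retptrc$/$\retptrd$ and syncs executability; non-executable $w_1$ makes both sides fail; executable $w_1$ goes through $\execCond{}$) is exactly the paper's. The difference is one of decomposition and bookkeeping. The paper's proof of this lemma is two lines: it invokes Lemma~\ref{lem:lre-implies-lrexj}, which already contains the unfolding of $\lrexj$, the $\xjumpResult{}{}{}$ case split, and the register-file update; the only thing checked locally is that $\npair{(w_1,w_1')} \in \lrv(W_1)$ gives $w_1 \neq \retptrc(\_)$ and either ($\nonExec{w_1}$ and $\nonExec{w_1'}$) or $\npair[n']{(w_1,w_1')} \in \lre(W_1)$. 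You inline all of that, which costs length but loses nothing. The one substantive divergence is where $W_2$'s authority is routed: the paper instantiates $\lre$ at $W_1$ and pushes $W_2$ into the register-file world, proving $\npair[n']{(\reg_S\update{\rdata}{w_2},\reg_T\update{\rdata}{w_2'})} \in \lrr(W_\lrrs \oplus W_2)$ via Lemmas~\ref{lem:monotonicity} and~\ref{lem:oplus-future}, whereas you instantiate $\lre$ at $W_1 \oplus W_2$ (valid since $W_1 \oplus W_2 \future \purePart{W_1}$ by Lemmas~\ref{lem:oplus-future} and~\ref{lem:world-fut-purePart}) and keep the register world at $W_R$, leaning on the fact that the $\rdata$ summand is omitted from the $\bigoplus$ in the definition of $\lrr$. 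Both choices make the total $W \oplus W_\lrrs \oplus W_\lrm$ come out the same, so either works; the paper's routing is slightly more robust in that it does not depend on that omission. Two points you gloss over — downward closure to move from index $n$ to $n''$, and the sub-case where $w_1$ is executable but its current address lies outside $[\baddr,\eaddr]$, where $\execCond{}$ is silent and one must argue failure directly — are glossed over to the same degree in the paper's own proofs, so I do not count them against you.
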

\begin{proof}
  By Lemma~\ref{lem:lre-implies-lrexj}, it suffices to prove the following: 
  \begin{itemize}
  \item $\npair[n']{(w_1,w_1')} \in \lre(W_1)$ or ($\nonExec{w_1}$ and $\nonExec{w_1'}$)
  \item $w_1 \neq \retptrc(\_)$.
  \end{itemize}

  The latter follows immediately from $\npair{(w_1,w_1')} \in \lrv(W_1)$ by definition.
  It also follows that either ($\nonExec{w_1}$ and $\nonExec{w_1'}$) or $w_1 = w_1' = ((\perm,\normal),\baddr,\eaddr,\aaddr)$ and $\npair{[\baddr,\eaddr]} \in \execCond{W_1}$.
  In the latter case, it follows by definition of $\execCond{}$ that $\npair[n']{(w_1,w_1')} \in \lre(W_1)$ as required.
\end{proof}

\subsection{Related components}
\label{sec:related-components}

\begin{align*}
  \lrcomp(W) &=
  \left\{\begin{aligned}
     &\npair{\var{comp},\var{comp}} \;\mid \;\\
      &\qquad\var{comp} = (\mscode,\msdata,\overline{a_{\mathrm{import}} \mapsfrom s_{\mathrm{import}}},\overline{s_{\mathrm{export}} \mapsto w_{\mathrm{export}}},\sigrets,\sigcloss) \tand \\
      &\qquad\text{For all } W' \future W \ldotp \\
      &\qquad\quad\text{If } \overline{\npair[n']{(w_{\mathrm{import}},w_{\mathrm{import}})}} \in \lrv(\purePart{W'}) \text{ for all $n' < n$}\\
      &\qquad\quad\text{and } \msdata' = \msdata{}[\overline{a_{\mathrm{import}} \mapsto w_{\mathrm{import}}}] \\
      &\qquad\quad\text{then } \npair{(\sigrets\uplus\sigcloss,\mscode\uplus \msdata', \mscode\uplus\msdata')} \in \lrheap(\pwheap)(W') \tand\\
      &\qquad\quad\overline{\npair{(w_{\mathrm{export}},w_{\mathrm{export}})}} \in \lrv(\purePart{W'})
  \end{aligned}
    \right\}\\
  &\cup \left\{
    \begin{multlined}
     \npair{(\var{comp}_0,c_{\mathrm{main},c}, c_{\mathrm{main},d}),(\var{comp}_0,c_{\mathrm{main},c}, c_{\mathrm{main},d})} \;\mid \;\\
     \npair{(\var{comp}_0,\var{comp}_0)} \in \lrcomp(W) \tand\\
     \{(\_ \mapsto c_{\mathrm{main},c}),(\_ \mapsto c_{\mathrm{main},d})\} \subseteq \overline{w_{\mathrm{export}}}
       \end{multlined}
    \right\} 
\end{align*}

\begin{lemma}[Compatibility lemma for linking]
  \label{lem:compat-linking}
  If
  \begin{itemize}
  \item $\npair{(\var{comp}_1,\var{comp}_1)} \in \lrcomp(W_1)$
  \item $\npair{(\var{comp}_2,\var{comp}_2)} \in \lrcomp(W_2)$
  \end{itemize}
  then
  \begin{itemize}
  \item $\npair{(\var{comp}_1 \bowtie \var{comp}_2,\var{comp}_1\bowtie\var{comp}_2)} \in \lrcomp(W_1 \uplus W_2)$.
  \end{itemize}
\end{lemma}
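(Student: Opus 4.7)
The plan is to unfold the definition of $\lrcomp$ for the linked component and reduce each obligation to the corresponding fact for $\var{comp}_1$ or $\var{comp}_2$, proceeding by strong induction on the step index $n$ at which the pair is required to lie in $\lrcomp(W_1 \uplus W_2)$. The base case $n = 0$ is vacuous, so I would fix $n > 0$ and assume the lemma at all smaller indices.

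I first fix $W' \future W_1 \uplus W_2$ and suppose the imports $\overline{\var{import}_3}$ of the linked component are $(n{-}1)$-valid at $\purePart{W'}$. The first technical step is a splitting lemma for $\uplus$ analogous to Lemma~\ref{lem:oplus-future-distr}: because $\uplus$ requires disjoint region-name domains, one may partition $W'$ as $W_1' \uplus W_2'$ by sending each pre-existing region to the side from which it originated and distributing freshly-allocated region names arbitrarily, so that $W_i' \future W_i$ for $i = 1,2$ and $\purePart{W'} \future \purePart{W_i'}$ by Lemma~\ref{lem:purePart-mono}.

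Next I would establish that the full import lists of $\var{comp}_1$ and $\var{comp}_2$ are $(n{-}1)$-valid at $\purePart{W_i'}$. Each import is either still unresolved in the linked component, in which case it is $(n{-}1)$-valid at $\purePart{W'}$ by assumption and transfers to $\purePart{W_i'}$ by monotonicity together with Lemma~\ref{lem:purePart-oplus}; or it is resolved by an export of $\var{comp}_{3-i}$. For the latter I apply the hypothesis $\npair{(\var{comp}_{3-i},\var{comp}_{3-i})} \in \lrcomp(W_{3-i})$ at $W_{3-i}'$ and at the smaller step index $n-1$, which by definition yields $(n{-}1)$-valid exports provided $(n{-}2)$-valid imports for $\var{comp}_{3-i}$ are supplied. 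The strong inductive hypothesis, combined with downwards-closure for $\lrv$ and $\lrcomp$, supplies those $(n{-}2)$-valid resolved imports by descending one more step; this recursion terminates at step $0$ where import validity is vacuous.

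With $(n{-}1)$-valid imports in hand, I apply $\npair{(\var{comp}_i,\var{comp}_i)} \in \lrcomp(W_i)$ at $W_i'$ to obtain, for $i = 1,2$, that the resolved data memory of $\var{comp}_i$ lies in $\lrheap(\pwheap[W_i'])(W_i')$ and that the exports of $\var{comp}_i$ are $n$-valid at $\purePart{W_i'}$, hence at $\purePart{W'}$. The union of the two export sets is exactly $\overline{\var{export}_3}$, discharging the export obligation, and Lemma~\ref{lem:combined-memory-disjoint-world} combines the two heap facts into $\lrheap(\pwheap[W_1'] \uplus \pwheap[W_2'])(W') = \lrheap(\pwheap[W'])(W')$ using the componentwise definition of $\uplus$ on worlds. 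The variant of $\lrcomp$ for components carrying a main pair reduces to this case because $\bowtie$ passes the main capabilities through unchanged. The main obstacle is the circular dependency between each component's resolved imports and the other's exports, which is precisely what the step-indexed strong induction is designed to unwind; the secondary technical point is the $\uplus$-splitting lemma and the monotone transfer of import validity between $\purePart{W'}$ and $\purePart{W_i'}$.
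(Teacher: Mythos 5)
Your overall architecture matches the paper's: handle the ``main''-capability case by reduction to the base case, break the import/export circularity by complete induction on the step index, discharge the export obligation as the union of the two export lists, and combine the two heap facts with Lemma~\ref{lem:combined-memory-disjoint-world}. The circularity-breaking mechanism you describe is the right one, even if the paper phrases it more directly (the induction hypothesis is applied to $\var{comp}_3$ itself, whose export list contains the exports of $\var{comp}_1$ and $\var{comp}_2$, so no explicit ``descending recursion'' is needed).

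However, your $\uplus$-splitting step introduces a genuine gap. After partitioning $W' = W_1' \uplus W_2'$ with $W_i' \future W_i$, you claim that import validity at $\purePart{W'}$ ``transfers to $\purePart{W_i'}$ by monotonicity together with Lemma~\ref{lem:purePart-oplus}.'' This runs monotonicity in the wrong direction: since $W' = W_1'\uplus W_2' \future W_i'$ (Lemma~\ref{lem:uplus-future}), Lemma~\ref{lem:purePart-mono} gives $\purePart{W'} \future \purePart{W_i'}$, so Lemma~\ref{lem:monotonicity} lets you pass \emph{from} $\purePart{W_i'}$ \emph{to} $\purePart{W'}$, not the other way. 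A word valid in $\purePart{W'}$ may owe its validity (e.g.\ its read condition) to a region that lives only in the $W_{3-i}'$ half, so it need not be valid in the smaller world $\purePart{W_i'}$. The same splitting also leaves you with $\lrheap(\cdot)(W_1')$ and $\lrheap(\cdot)(W_2')$ at two different second arguments, whereas Lemma~\ref{lem:combined-memory-disjoint-world} requires both memberships at the \emph{same} world. The fix is simply to drop the splitting: instantiate each hypothesis $\npair{(\var{comp}_i,\var{comp}_i)} \in \lrcomp(W_i)$ at the single common future world $W'$, which is legitimate because $W' \future W_1\uplus W_2 \future W_i$ by Lemma~\ref{lem:uplus-future} and transitivity. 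Then the import-validity assumption at $\purePart{W'}$ is used verbatim, both heap facts land at second argument $W'$, and Lemma~\ref{lem:combined-memory-disjoint-world} applies directly; this is exactly the paper's route. (Minor: the heap obligation for $\lrcomp(W_1\uplus W_2)$ is $\lrheap(\pwheap[(W_1\uplus W_2)])(W')$, not $\lrheap(\pwheap[W'])(W')$ as you wrote.)
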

\begin{proof}
  First consider the case where one of the two pairs of components have a pair of ``main'' capabilities.
  By the definition of $\lrcomp$, these capabilities have to be part of that components' export list.
  It is then easy to see that the other component cannot have ``main'' capabilities (otherwise linking is not defined) and it is sufficient to prove the result for the underlying components (without the ``main'' capabilities).
  Therefore, we can restrict ourselves to the case where both pairs of components are of the form $\var{comp}_0$, i.e. no ``main'' capabilities.

  By definition of $\bowtie$, we have that the following hold:
  \begin{itemize}
  \item $\var{comp}_1 = (\mscode[1], \msdata[1], \overline{\var{import}_1}, \overline{\var{export}_1}, \sigrets[1], \sigcloss[1],A_{\linear,1})$
  \item $\var{comp}_2 = (\mscode[2], \msdata[2], \overline{\var{import}_2}, \overline{\var{export}_2}, \sigrets[2], \sigcloss[2],A_{\linear,2})$
  \item $\var{comp}_3 = (\mscode[3], \msdata[3], \overline{\var{import}_3}, \overline{\var{export}_3}, \sigrets[3], \sigcloss[3],A_{\linear,3})$
  \item $\mscode[3] = \mscode[1] \uplus \mscode[2]$
  \item $\msdata[3] = (\msdata[1] \uplus \msdata[2])[a \mapsto w \mid (a \mapsfrom s) \in (\overline{\var{import}_1} \cup \overline{\var{import}_2}), (s \mapsto w) \in \overline{\var{export}}_3]$
  \item $\overline{\var{export}_3} = \overline{\var{export}_1} \cup \overline{\var{export}_2}$
  \item $\overline{\var{import}_3} = \{ a \mapsfrom s \in (\overline{\var{import}_1} \cup \overline{\var{import}_2}) \mid s \mapsto \_ \not\in \overline{\var{export}_3} \}$
  \item $\sigrets[3] = \sigrets[1] \uplus \sigrets[2]$
  \item $\sigcloss[3] = \sigcloss[1] \uplus \sigcloss[2]$
  \item $A_{\linear,3} = A_{\linear,1} \uplus A_{\linear,2}$
  \item $\dom(\mscode[3]) \mathrel{\#} \dom(\msdata[3])$
  \item $\sigrets[3] \mathrel{\#} \sigcloss[3]$
  \end{itemize}

  Now take $W' \future (W_1\uplus W_2)$ and assume that $\overline{\npair[n']{(w_{\mathrm{import},3},w_{\mathrm{import},3})}} \in \lrv(\purePart{W'})$ for all $n' < n$.
  Take $\overline{\var{import}_3} = \overline{a_{\mathrm{import},3} \mapsfrom s_{\mathrm{import},3}}$.
  Take $\msdata[3]' = \msdata[3]{}[\overline{a_{\mathrm{import},3} \mapsto w_{\mathrm{import}}}]$
  Then it remains to show that
  \begin{equation*}
    \npair{(\sigrets[3]\uplus\sigcloss[3],\mscode[3]\uplus \msdata[3]', \mscode[3]\uplus\msdata[3]')} \in \lrheap(\pwheap[(W_1\uplus W_2)])(W')
  \end{equation*}
  and
  \begin{equation*}
    \overline{\npair{(c_{\mathrm{export},3},c_{\mathrm{export},3})}} \in \lrv(\purePart{W'})
  \end{equation*}
  We do this by complete induction on $n$, so that we can assume that $\overline{\npair[n']{(c_{\mathrm{export},3},c_{\mathrm{export},3})}} \in \lrv(\purePart{W'})$ for all $n' < n$.

  First, note that
  \begin{align*}
    \msdata[3]' &= \msdata[3]{}[\overline{a_{\mathrm{import},3} \mapsto w_{\mathrm{import}}}]\\
                &=(\msdata[1] \uplus \msdata[2])[a \mapsto w \mid (a \mapsfrom s) \in (\overline{\var{import}_1} \cup \overline{\var{import}_2}), (s \mapsto w) \in \overline{\var{export}}_3]\\
                  &\hspace{4cm}[\overline{a_{\mathrm{import},3} \mapsto w_{\mathrm{import}}}]\\
                &= \msdata[1][a \mapsto w \mid (a \mapsfrom s) \in \overline{\var{import}_1}, (s \mapsto w) \in \overline{\var{export}}_2][\overline{a_{\mathrm{import},3} \mapsto w_{\mathrm{import},3}} \mid \overline{a_{\mathrm{import},3}\mapsfrom \_} \in \overline{\var{import}_1}]\\
                &\uplus \msdata[2][a \mapsto w \mid (a \mapsfrom s) \in \overline{\var{import}_2}, (s \mapsto w) \in \overline{\var{export}}_1][\overline{a_{\mathrm{import},3} \mapsto w_{\mathrm{import},3}} \mid \overline{a_{\mathrm{import},3}\mapsfrom \_} \in \overline{\var{import}_2}]\\
                &= \msdata[1]' \uplus \msdata[2]'
  \end{align*}
  
  First, we prove that for all substituted values $w$ in the equations above, we have that
  \begin{equation*}
    \overline{\npair[n']{(w,w)}} \in \lrv(\purePart{W'})
  \end{equation*}
  for all $n' < n$.
  We know by assumption that this is true for the $\overline{w_{\mathrm{import}}}$ such that $\overline{a_{\mathrm{import},3}\mapsfrom \_} \in \overline{\var{import}_1}$ or $\overline{a_{\mathrm{import},3}\mapsfrom \_} \in \overline{\var{import}_2}$.
  On the other hand, we know by induction that this is true for the $w$ such that $(s \mapsto w) \in \overline{\var{export}}_1$ or $(s \mapsto w) \in \overline{\var{export}}_2$.
  
  Then, it follows from our assumptions ($\npair{(\var{comp}_1,\var{comp}_1)} \in \lrcomp(W_1)$ and $\npair{(\var{comp}_2,\var{comp}_2)} \in \lrcomp(W_2)$), and using Lemma~\ref{lem:uplus-future} that
  \begin{align*}
    \npair{(\sigrets[1]\uplus\sigcloss[1],\mscode[1]\uplus \msdata[1]', \mscode[1]\uplus\msdata[1]')} &\in \lrheap(\pwheap[W_1])(W')\\
    \npair{(\sigrets[2]\uplus\sigcloss[2],\mscode[2]\uplus \msdata[2]', \mscode[2]\uplus\msdata[2]')} &\in \lrheap(\pwheap[W_2])(W')
  \end{align*}
  It follows by Lemma~\ref{lem:combined-memory-disjoint-world} that 
  \begin{equation*}
    \npair{(\sigrets[3]\uplus\sigcloss[3],\mscode[3]\uplus \msdata[3]', \mscode[3]\uplus\msdata[3]')} \in \lrheap(\pwheap[(W_1\uplus W_2)])(W')
  \end{equation*}
  
  Finally, for each
  \begin{equation*}
    \npair{(c_{\mathrm{export}},c_{\mathrm{export}})} \in \overline{\npair{(c_{\mathrm{export},3},c_{\mathrm{export},3})}}
  \end{equation*}
  we have that $\npair{(c_{\mathrm{export}},c_{\mathrm{export}})}$ is either in $\overline{\npair{(c_{\mathrm{export},1},c_{\mathrm{export},1})}}$ or $\overline{\npair{(c_{\mathrm{export},2},c_{\mathrm{export},2})}}$.
  By unfolding $\npair{(\var{comp}_1,\var{comp}_1)} \in \lrcomp(W_1)$ and $\npair{(\var{comp}_2,\var{comp}_2)} \in \lrcomp(W_2)$ and using Lemma~\ref{lem:uplus-future}, this follows from the results above.
\end{proof}

\subsection{FTLR for components}
\label{sec:ftlr-for-comps}

\begin{lemma}[Untrusted code regions' seals specify value safety]
  \label{lem:untrusted-codereg-sealed-vals-safe}
  If
  \begin{itemize}
  \item $\dom(\code) \mathrel{\#} \ta$
  \item $\sigma \in \sigcloss$
  \end{itemize}
  Then
  \begin{equation*}
    H^\mathrm{code,\square}_\sigma \; \sigrets \; \sigcloss \; \code \; (\ta,\stkb) \; \sigma \nequal \lrv
  \end{equation*}
\end{lemma}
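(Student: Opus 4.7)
The goal is an $n$-equality between two relations on sealed capability pairs, so the plan is to unfold the definition of $H^\mathrm{code,\square}_\sigma$ in the branch where $\sigma \in \sigcloss$ and verify each inclusion pointwise in $\hat W$ at every approximation level $n' \le n$. First I would record the unfolded definition: a pair $(\vsc,\vsc')$ lies in $H^\mathrm{code,\square}_\sigma\; \sigrets\; \sigcloss\; \code\; (\ta,\stkb)\; \sigma\; \hat W$ exactly when either (a) $\dom(\code) \mathrel{\#} \ta$ and $(\vsc,\vsc') \in \lrv\; \xi(\hat W)$, or (b) $\dom(\code) \subseteq \ta$ together with $\sigrets \subseteq \gsigrets$, $\sigcloss \subseteq \gsigcloss$ and one of $\exec{\vsc} \wedge (\vsc,\vsc') \in \lrvtrusted\; \xi(\hat W)$ or $\nonExec{\vsc} \wedge (\vsc,\vsc') \in \lrv\; \xi(\hat W)$.

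The $\supseteq$ direction is immediate. Given any $(\vsc,\vsc') \in \lrv(\xi(\hat W))$, the hypothesis $\dom(\code) \mathrel{\#} \ta$ witnesses branch (a) of the disjunction, so the pair lies in the set with no additional work.

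For the $\subseteq$ direction I would case-split on which branch a pair witnesses. Branch (a) delivers the desired membership in $\lrv$ directly. Branch (b) is where the real work sits: combining $\dom(\code) \subseteq \ta$ with our hypothesis $\dom(\code) \mathrel{\#} \ta$ forces $\dom(\code) = \emptyset$. The main obstacle is dispatching this empty-code edge case. The $\vdash_{\mathrm{comp-code}}$ rules earlier in the excerpt explicitly demand an address of $\code$ mapping to a seal with non-empty range, so whenever $H^\mathrm{code,\square}_\sigma$ is instantiated from a validly-formed code region the second branch can never actually arise. I would therefore either rule it out by appealing to that validity assumption at the lemma's call sites, or, as a belt-and-braces argument, show that the extras of $\lrvtrusted$ over $\lrv$ require either a seal pointing at a trusted code region in $\pwheap$ or an $\rx$-or-weaker capability with $[b,e] \subseteq \ta$; both alternatives are incompatible with the hypothesis $\dom(\code) \mathrel{\#} \ta$ for the trusted closure-seal being reasoned about, yielding the needed contradiction and completing the $\subseteq$ direction.
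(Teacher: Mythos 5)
Your proposal takes the same route as the paper, whose entire proof is ``follows easily by definition'': unfold $H^\mathrm{code,\square}_\sigma$ in the $\sigma \in \sigcloss$ branch and observe that the hypothesis $\dom(\code) \mathrel{\#} \ta$ selects the first disjunct, which is literally $\lrv\;\xi(\hat W)$. Your extra attention to the degenerate case $\dom(\code) = \emptyset$ (where both $\dom(\code) \mathrel{\#} \ta$ and $\dom(\code) \subseteq \ta$ hold and the second disjunct could in principle contribute an $\exec{}$ pair from $\lrvtrusted \setminus \lrv$) is a legitimate observation that the paper silently skips; the right way to dispatch it is indeed your first option, since $\vdash_{\mathrm{comp-code}}$ forces $\dom(\code) \neq \emptyset$ for every code region actually used to instantiate this lemma (compare Lemma~\ref{lem:codereg-sealing-safety}, which makes the non-emptiness hypothesis explicit). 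Your ``belt-and-braces'' fallback, however, does not go through as stated: the extra clause of $\lrvtrusted$ for sub-$\rx$ capabilities requires $[\baddr,\eaddr] \subseteq \ta$ where $[\baddr,\eaddr]$ is the range of the \emph{sealed capability} $\vsc$, which is unrelated to $\dom(\code)$ of the code region whose $H_\sigma$ you are analysing, so no contradiction with $\dom(\code) \mathrel{\#} \ta$ arises from it. Rely on the non-emptiness argument alone.
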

\begin{proof}
  Follows easily by definition.
\end{proof}

\begin{lemma}[Code region for untrusted components stronger than standard safe memory region]
  \label{lem:codereg-untrusted-stdreg}
  If
  \begin{itemize}
  \item $\dom(\mscode) \mathrel{\#} \ta$
  \end{itemize}
  then $\codereg{\sigrets,\sigcloss,\mscode,\gc}.H \nsubeq \stdreg{\dom(\mscode),\gc}{\pur}.H$ 
\end{lemma}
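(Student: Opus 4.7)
The plan is to unfold both sides of the $\nsubeq$ and walk an arbitrary element of $H^{\mathrm{code}}\;\sigrets\;\sigcloss\;\mscode\;\gc\;\hat{W}$ across. Since $\dom(\mscode)\mathrel{\#}\ta$, only the untrusted branch of $H^{\mathrm{code}}$ is available, which gives us a pair of the form $(\mscode\uplus\mspad,\mscode\uplus\mspad)$ together with the pointwise hypothesis $\npair[n']{(\mscode(a),\mscode(a))}\in\lrv(\purePart{\xi(\hat{W})})$ for every $a\in\dom(\mscode)$ and every $n'<n$. The goal is then to show that this same pair lies in $H^{\mathrm{std}}_{\dom(\mscode)}\;\gc\;\hat{W}$, i.e. produce a world decomposition and verify pointwise safety.

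For the world decomposition demanded by $H^{\mathrm{std}}_A$, I would fix some distinguished address $a_0\in A$, set $S(a_0)\defeq\xi(\hat{W})$, and set $S(a)\defeq\purePart{\xi(\hat{W})}$ for all other $a\in A$. Then $\bigoplus_{a\in A}S(a)=\xi(\hat{W})\oplus\purePart{\xi(\hat{W})}\oplus\cdots\oplus\purePart{\xi(\hat{W})}$, which collapses to $\xi(\hat{W})$ by Lemma~\ref{lem:purePart-idempotent} and Lemma~\ref{lem:purePart-duplicable}. Both $\oplus$-expressions are defined throughout by Lemma~\ref{lem:oplus-assoc-comm}.

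For the pointwise safety obligations, I split by address. For $a_0$ (which I would choose to be one of the padding addresses $\baddr-1$ or $\eaddr+1$), the stored word is the integer $0$ which is trivially in $\lrv(\xi(\hat{W}))=\lrv(S(a_0))$ by the integer clause of $\lrv$. For the remaining padding address, the same integer clause applies against $S(a)=\purePart{\xi(\hat{W})}$. For every remaining $a\in\dom(\mscode)$, the hypothesis from $H^{\mathrm{code}}$ gives exactly $\npair[n']{(\mscode(a),\mscode(a))}\in\lrv(\purePart{\xi(\hat{W})})=\lrv(S(a))$, which is what we need. Since these all hold at each step index $n'<n$, the $\nsubeq$ subset follows.

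The main subtlety is the domain bookkeeping around padding: inside $H^{\mathrm{code}}$ the memory carried is $\mscode\uplus\mspad$ with domain $[\baddr-1,\eaddr+1]$, while $\stdreg{\dom(\mscode),\gc}{\pur}$ expects domain exactly $\dom(\mscode)$, so one has to be careful that $\mscode$ in the lemma statement is meant to include the two pad cells (equivalently, that $\dom(\mscode)=[\baddr-1,\eaddr+1]$ in scope here); once this is pinned down, the argument above goes through. The only other nontrivial point is raising $\lrv(\purePart{\xi(\hat{W})})$ to $\lrv(\xi(\hat{W}))$ at the distinguished address, which I would do via monotonicity of $\lrv$ (Lemma~\ref{lem:monotonicity}) together with Lemma~\ref{lem:world-fut-purePart}; choosing $a_0$ to be a padding address avoids even having to invoke this, so I expect no real obstacle.
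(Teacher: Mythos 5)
Your proof takes essentially the same route as the paper's: the paper uses exactly the decomposition you propose (every address mapped to $\purePart{\hat{W}}$ except one distinguished address mapped to $\hat{W}$), collapses it with Lemmas~\ref{lem:purePart-duplicable} and~\ref{lem:oplus-future}, and discharges the pointwise obligations from the $H^{\mathrm{code}}$ hypothesis via Lemma~\ref{lem:monotonicity}. The $\mspad$/$\dom(\mscode)$ domain mismatch you flag is a real imprecision that the paper's proof silently glosses over, and your resolution of it is the intended reading.
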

\begin{proof}
  Take $\hat{W}$, then we need to prove that
  \begin{equation*}
    H^\mathrm{code}\;\sigrets\;\sigcloss\;\mscode\;\gc \; \hat{W} \nsubeq H_{\dom(\mscode)}^{\mathrm{std},\square}\;\hat{W}
  \end{equation*}
  So take $\npair{\ms_S,\ms_T} \in H^\mathrm{code}\;\sigrets\;\sigcloss\;\mscode\;\gc \; \hat{W}$, then we know that
  \begin{itemize}
  \item $\ms_S = \ms_T = \mscode\uplus\mspad$
  \item $\dom(\mscode) =[\baddr,\eaddr]$
  \item $[\baddr-1,\eaddr+1] \subseteq \ta \vee ([\baddr-1,\eaddr+1]\mathrel{\#} \ta \wedge \sigrets = \emptyset)$, but we know that $\dom(\mscode) \mathrel{\#} \ta$, so we get that $\sigrets = \emptyset$.
  \item For all $a \in \dom(\code)$, $\npair{(\code(a),\code(a))} \in \lrv(\purePart{\xi(\hat{W})})$
  \end{itemize}

  By definition of $H_{\dom(\mscode)}^{\mathrm{std},\square}\;\hat{W}$, we need to prove that
  \begin{itemize}
  \item $\dom(\mscode) = \dom(\mscode) = \dom(\mscode)$
  \item there exists a $S : \dom(\mscode) \fun \World$ with $\xi(\hat{W}) = \oplus_{\aaddr\in\dom(\mscode)} S(\aaddr)$ and for all $\aaddr \in \dom(\mscode)$, we have that $\npair{(\ms_S(\aaddr),\ms_T(\aaddr))} \in \lrv(S(\aaddr))$
  \end{itemize}

  We take $S$ to map every address to $\purePart{\hat{W}}$, except for a single one that we map to $\hat{W}$.
  The result then follows by Lemmas~\ref{lem:purePart-duplicable}, \ref{lem:oplus-future} and \ref{lem:monotonicity} and the above fact that for all $a \in \dom(\code)$, $\npair{(\code(a),\code(a))} \in \lrv(\purePart{\xi(\hat{W})})$.
\end{proof}

\begin{lemma}[Code memory for untrusted components safely readable]
  \label{lem:code-memory-untrusted-readable}
  If
  \begin{itemize}
  \item $\dom(\mscode) \mathrel{\#} \ta$
  \item $\pwheap(r_{\mathrm{code}}) = \codereg{\sigrets,\sigcloss,\mscode,\gc}$
  \item $A \subseteq \dom(\mscode)$,
  \end{itemize}
  we have that $\npair{A} \in \readCond{\normal,W}$.
\end{lemma}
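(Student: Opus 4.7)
The plan is to unfold the definition of $\readCond{\normal, W}$ and choose the set of regions $S$ and assignment $R$ explicitly so that the side conditions reduce to an instance of Lemma~\ref{lem:codereg-untrusted-stdreg}.

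Concretely, I would set $S \defeq \{ r_{\mathrm{code}} \}$ and $R(r_{\mathrm{code}}) \defeq \dom(\mscode)$. To justify that $S \subseteq \addressable{\normal, \pwheap}$, I use the hypothesis $\pwheap(r_{\mathrm{code}}) = \codereg{\sigrets, \sigcloss, \mscode, \gc}$ together with the definition of $\codereg{\cdot}{\cdot}$, which begins with the tag $\pure$, and then the definition of $\addressable{\normal, \cdot}$ which singles out exactly the pure regions. The covering condition $\biguplus_{r \in S} R(r) = \dom(\mscode) \supseteq A$ is immediate from the assumption $A \subseteq \dom(\mscode)$. The linearity side condition is vacuous because $\lin = \normal$.

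It then remains to check $\pwheap(r_{\mathrm{code}}).H \nsubeq \stdreg{\dom(\mscode), \gc}{\pur}.H$. Unfolding $\pwheap(r_{\mathrm{code}})$ using the assumption gives exactly $\codereg{\sigrets, \sigcloss, \mscode, \gc}.H \nsubeq \stdreg{\dom(\mscode), \gc}{\pur}.H$, which is precisely the conclusion of Lemma~\ref{lem:codereg-untrusted-stdreg} applied to the hypothesis $\dom(\mscode) \mathrel{\#} \ta$.

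I do not expect any serious obstacle here: the statement is essentially the readability content of the code region's standard refinement, and the structural work has been done already in Lemma~\ref{lem:codereg-untrusted-stdreg}. The only point requiring a little care is to notice that the covering condition in $\readCond{}$ uses $\supseteq$ (not equality), which lets us take $R(r_{\mathrm{code}}) = \dom(\mscode)$ rather than the narrower $A$, thereby matching the range for which Lemma~\ref{lem:codereg-untrusted-stdreg} is formulated.
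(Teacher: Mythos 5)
Your proposal is correct and matches the paper's proof, which is simply the one-line remark that the result ``follows by definition of $\readCond{}$ from Lemma~\ref{lem:codereg-untrusted-stdreg}''; your explicit choice of $S = \{r_{\mathrm{code}}\}$ and $R(r_{\mathrm{code}}) = \dom(\mscode)$, the addressability check via the $\pure$ tag of $\codereg{}{}$, and the observation that the covering condition only needs $\supseteq$ are exactly the details the paper leaves implicit.
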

\begin{proof}
  Follows by defintion of $\readCond{}$ from Lemma~\ref{lem:codereg-untrusted-stdreg}.
\end{proof}

\begin{lemma}[Code memory for untrusted components safely readable]
  \label{lem:xReadCond-outside-ta-implies-readCond}
  If
  \begin{itemize}
  \item $A \mathrel{\#} \ta$
  \item $\npair{A} \in \xReadCond{W}$
  \end{itemize}
  we have that $\npair{A} \in \readCond{\normal,W}$.
\end{lemma}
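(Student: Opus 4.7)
My plan is to directly exhibit witnesses for the existential quantifiers in $\readCond{\normal,W}$ starting from the single region supplied by $\xReadCond{W}$.

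First I would dispatch the degenerate case $A = \emptyset$: choosing $S = \emptyset$ and $R$ the empty map makes every clause of $\readCond{\normal,W}$ hold vacuously. For the main case $A \neq \emptyset$, unfolding $\npair{A} \in \xReadCond{W}$ yields a region name $r \in \addressable{\normal,\pwheap}$ together with $\sigrets,\sigcloss,\code$ such that $\pwheap(r) \nequal \codereg{\sigrets,\sigcloss,\code,\gc}$ and $\dom(\code) \supseteq A$. I would take $S = \{r\}$ and $R(r) = \dom(\code)$; then $S \subseteq \addressable{\normal,\pwheap}$ and the inclusion $\biguplus_{r'\in S} R(r') = \dom(\code) \supseteq A$ is immediate, while the linearity side-condition on $|R(r)|$ is vacuous since $\lin = \normal$.

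The only nontrivial obligation that remains is $\pwheap(r).H \nsubeq \stdreg{\dom(\code),\gc}{\pur}.H$. Because $\pwheap(r) \nequal \codereg{\sigrets,\sigcloss,\code,\gc}$, this reduces to showing $\codereg{\sigrets,\sigcloss,\code,\gc}.H \nsubeq \stdreg{\dom(\code),\gc}{\pur}.H$. Here I would exploit the hypothesis $A \mathrel{\#} \ta$: any $a \in A$ witnesses $a \in \dom(\code) \setminus \ta$, so $\dom(\code) \not\subseteq \ta$. Inspecting the two disjuncts that guard the definition of $H^{\mathrm{code}}$ (the trusted one demands $[\baddr-1,\eaddr+1] \subseteq \ta$, the untrusted one demands $[\baddr-1,\eaddr+1] \mathrel{\#} \ta$), the trusted disjunct is ruled out, so either $\dom(\code) \mathrel{\#} \ta$, and the required $\nsubeq$ is exactly Lemma~\ref{lem:codereg-untrusted-stdreg}; or $[\baddr-1,\eaddr+1]$ is partially inside $\ta$, both disjuncts fail, $H^{\mathrm{code}}\sigrets\sigcloss\code\gc\,\hat W$ is empty for every $\hat W$, and $\nsubeq$ holds vacuously.

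The main obstacle I anticipate is the case analysis on how $\dom(\code)$ sits relative to $\ta$, together with the care needed to route the $n$-approximate equality $\nequal$ through to the $n$-approximate containment $\nsubeq$. Once the observation that $A$ being nonempty and disjoint from $\ta$ rules out the trusted regime is made, Lemma~\ref{lem:codereg-untrusted-stdreg} does the actual work in the only interesting case, and the remaining bookkeeping (monotonicity of the $n$-cut under the equality $\pwheap(r) \nequal \codereg{\sigrets,\sigcloss,\code,\gc}$) is routine.
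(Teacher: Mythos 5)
Your proposal is correct and follows essentially the same route as the paper, whose proof of this lemma is literally ``Follows by definition of $\readCond{}$ and $\xReadCond{}$ from Lemma~\ref{lem:codereg-untrusted-stdreg}'': you unfold $\xReadCond{W}$ to obtain the code region, take $S=\{r\}$ with $R(r)=\dom(\code)$, and discharge the $\nsubeq$ obligation via Lemma~\ref{lem:codereg-untrusted-stdreg}. Your extra case analysis (empty $A$, and the regime where neither disjunct of $H^{\mathrm{code}}$ holds so the region interpretation is empty) is careful bookkeeping that the paper elides but does not change the argument.
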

\begin{proof}
  Follows by defintion of $\readCond{}$ and $\xReadCond{}$ from Lemma~\ref{lem:codereg-untrusted-stdreg}.
\end{proof}

\begin{lemma}[Code region sealing invariant for untrusted components implies untrusted safety]
  \label{lem:codereg-sealing-safety}
  If
  \begin{itemize}
  \item $\emptyset\neq\dom(\mscode) \mathrel{\#} \ta$
  \item $\sigma \in \sigcloss$
  \end{itemize}
  then $\codereg{\sigrets,\sigcloss,\mscode,\gc}.H_\sigma \sigma \nequal \lrv$ for all $n$
\end{lemma}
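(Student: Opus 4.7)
The plan is to unfold the definition of $\codereg{\sigrets,\sigcloss,\mscode,\gc}.H_\sigma$ and reduce the statement to the already-proved Lemma~\ref{lem:untrusted-codereg-sealed-vals-safe}, using the extra hypothesis $\dom(\mscode)\neq\emptyset$ to kill the trusted-code disjunct in the definition of $H^{\mathrm{code},\square}_\sigma$.

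Concretely, I would first observe that by the definition of $\codereg{\sigrets,\sigcloss,\mscode,\gc}$ we have
\[
  \codereg{\sigrets,\sigcloss,\mscode,\gc}.H_\sigma \;=\; H^{\mathrm{code},\square}_\sigma\;\sigrets\;\sigcloss\;\mscode\;\gc,
\]
so it suffices to show $H^{\mathrm{code},\square}_\sigma\;\sigrets\;\sigcloss\;\mscode\;\gc\;\sigma \nequal \lrv$ for every $n$. Since $\sigma\in\sigcloss$, the applicable clause in the definition of $H^{\mathrm{code},\square}_\sigma$ is the one giving, for any $\hat W$, the set of pairs $(\vsc,\vsc')$ satisfying either (a) $\dom(\mscode)\mathrel{\#}\ta \tand \npair{(\vsc,\vsc')}\in\lrv(\xi(\hat W))$, or (b) $\dom(\mscode)\subseteq\ta$ together with the trusted-value conditions.

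The key step is to rule out disjunct (b). From the hypotheses $\dom(\mscode)\neq\emptyset$ and $\dom(\mscode)\mathrel{\#}\ta$, we cannot have $\dom(\mscode)\subseteq\ta$ (any element of $\dom(\mscode)$ would have to lie in $\ta$, contradicting disjointness). Thus only disjunct (a) can contribute, and its premise $\dom(\mscode)\mathrel{\#}\ta$ holds by assumption, so membership in $H^{\mathrm{code},\square}_\sigma\;\sigrets\;\sigcloss\;\mscode\;\gc\;\sigma\;\hat W$ is equivalent to $\npair{(\vsc,\vsc')}\in\lrv(\xi(\hat W))$. This is precisely the content of Lemma~\ref{lem:untrusted-codereg-sealed-vals-safe}, from which the $\nequal$ for every $n$ follows.

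There is essentially no obstacle here: the statement is a direct corollary of Lemma~\ref{lem:untrusted-codereg-sealed-vals-safe}, with the only subtlety being the observation that nonemptiness of $\dom(\mscode)$ is what forbids the trusted branch of the sealing-invariant definition from being active simultaneously with the untrusted one. I would therefore present the proof in two short sentences rather than as a long calculation.
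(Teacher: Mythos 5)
Your proposal is correct and matches the paper's own argument, which simply unfolds the definitions of $\codereg{}$ and $H^{\mathrm{code},\square}_\sigma$; your observation that $\dom(\mscode)\neq\emptyset$ together with $\dom(\mscode)\mathrel{\#}\ta$ rules out the trusted disjunct is exactly the intended content. Routing the final step through Lemma~\ref{lem:untrusted-codereg-sealed-vals-safe} is a clean way to package it, since that lemma's conclusion is literally the one required here.
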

\begin{proof}
  The result follows easily by definition of $\codereg{}$ and $H^\mathrm{code,\square}_\sigma$.
\end{proof}

\begin{lemma}[non-linear words are pure]
  \label{lem:non-linear-pure}
  \hfill
  \begin{itemize}
  \item If $\npair{A} \in \readCond{\normal,W}$, then $\npair{A} \in \readCond{}(\normal, \purePart{W} )$.
  \item If $\npair{A} \in \writeCond{\normal,W}$, then $\npair{A} \in \writeCond{}(\normal,\purePart{W})$
  \item If $\npair{A} \in \execCond{W}$, then $\npair{A} \in \execCond{}(\purePart{W})$
  \item If $\npair{A} \in \xReadCond{W}$, then $\npair{A} \in \xReadCond{}(\purePart{W} )$.
  \item If $\npair{(w_1,w_2)} \in \lrv(W)$ and ($\nonLinear{w_1}$ or
    $\nonLinear{w_2}$), then
    $\npair{(w_1,w_2)} \in \lrv(\purePart{W})$.
  \end{itemize}
\end{lemma}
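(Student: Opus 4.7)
The plan is to handle each of the five items in turn by unfolding definitions and exploiting two facts: (i) pure regions in $W$ are identical to the corresponding regions in $\purePart{W}$, and (ii) $\purePart{}$ is idempotent (Lemma~\ref{lem:purePart-idempotent}). For the readCond and writeCond items at linearity $\normal$, I would unfold the definitions and observe that $\addressable{\normal, \pwheap} = \{r \mid \pwheap(r) = (\pure,\_)\}$ is preserved under $\purePart{}$, since that operator sends pure regions to themselves. The same witnesses $S$ and $R$ therefore carry over, and the subset/supset comparisons against $\stdreg{R(r),\gc}{\pur}.H$ are preserved verbatim because $\pwheap(r)$ is unchanged on the regions in $S$; address-stratification for the writeCond variant is a property of the region alone, which is again unchanged. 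The execCond item is almost immediate: membership unfolds to a universal quantifier over $W' \future \purePart{W}$, and replacing $W$ by $\purePart{W}$ yields $W' \future \purePart{\purePart{W}}$, which is the same quantifier by idempotence. For xReadCond, the witnessing region $r$ must satisfy $\pwheap(r) \nequal \codereg{\_,\_,\code,\gc}$, and code regions are pure by construction, so $r$ survives in $\addressable{\normal, \pwheap[\purePart{W}]}$ with the identical interpretation.

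For the final item I would case-analyse on the common shape of $(w_1, w_2)$ dictated by $\lrv$. Integers are trivial. Seal capabilities are always non-linear and the witnessing regions with $\pwheap(r) = (\pure,\_,H_\sigma)$ are pure, hence preserved. Stack pointers are always linear and so are ruled out by the hypothesis (which by the iff inside the sealed-value clause, and by the shape-matching structure of $\lrv$, forces both sides to be non-linear jointly). Non-linear memory capabilities $((\perm,\normal),\baddr,\eaddr,\aaddr)$ reduce to the first four items applied to whichever of $\readCond{\normal,W}$, $\writeCond{\normal,W}$, $\execCond{W}$, and $\xReadCond{W}$ are triggered. The remaining case, sealed values, is the one that requires care.

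The sealed-value case is the main obstacle, because the definition of $\lrv$ for sealed words splits on whether the inner capability is linear, and the two branches quantify respectively over $W' \future W$ and $W' \future \purePart{W}$. Under the non-linearity hypothesis, the iff in the definition forces both inner capabilities to be non-linear, so the branch that fires is the one already quantifying over $W' \future \purePart{W}$. Replacing $W$ by $\purePart{W}$ then asks for $W' \future \purePart{\purePart{W}}$, which collapses to the same quantifier by Lemma~\ref{lem:purePart-idempotent}. The pure region $r$ with $\pwheap(r) = (\pure,\_,H_\sigma)$ that witnesses the seal interpretation is preserved by $\purePart{}$, and the $\lrexj$ obligations inside the branch are already formulated relative to $\purePart{W}$-futures, so no additional monotonicity step is needed beyond Lemma~\ref{lem:purePart-mono}. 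Once that branch is lined up, everything else in the proof is bookkeeping.
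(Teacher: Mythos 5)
Your proposal is correct and takes essentially the same route as the paper, whose entire proof is to unfold the definitions of $\lrv$, $\readCond[]{}$, $\addressable{}$, $\writeCond[]{}$, $\execCond[]{}$ and $\xReadCond[]{}$ and invoke idempotence of $\purePart{}$ (Lemma~\ref{lem:purePart-idempotent}); your two key observations --- that pure regions are fixed points of $\purePart{}$ so the witnessing $S$, $R$ and seal-interpretation regions carry over verbatim, and that the non-linear branches already quantify over $W' \future \purePart{W}$ so idempotence collapses the obligation --- are precisely what the paper's one-line argument relies on. You in fact supply more detail than the paper does.
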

\begin{proof}
  Follows easily by inspecting the definitions of $\lrv$, $\readCond[]{}$, $\addressable{}$, $\writeCond[]{}$, $\execCond[]{}$ and $\xReadCond[]{}$ and using Lemma~\ref{lem:purePart-idempotent}.
\end{proof}

\begin{lemma}[permission-based conditions shrinkable]
  \label{lem:conds-shrinkable}
  \hfill
  \begin{itemize}
  \item If $\npair{A} \in \readCond{\lin,W}$  and $\emptyset \neq A' \subseteq A$, then
    $\npair{A'} \in \readCond{\lin,W}$.
  \item If $\npair{A} \in \writeCond{\lin,W}$  and $\emptyset \neq A' \subseteq A$, then
    $\npair{A'} \in \writeCond{\lin,W}$.
  \item If $\npair{A} \in \xReadCond{W}$  and $A' \subseteq A$, then
    $\npair{A'} \in \xReadCond{W}$.
  \item If $\npair{A} \in \execCond{W}$  and $\emptyset \neq A' \subseteq A$, then
    $\npair{A'} \in \execCond{W}$.
  \end{itemize}
\end{lemma}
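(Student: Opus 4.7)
The plan is to prove each of the four items by exhibiting, for the smaller set $A'$, exactly the same witnesses that are used to establish the condition for the larger set $A$. In each case the defining clauses are essentially monotone in the address set, so shrinking $A$ to $A'$ preserves them.

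For $\readCond{\lin,W}$, I would unfold the definition to obtain $S\subseteq \addressable{\lin,\pwheap}$ and $R : S \fun \powerset{\nats}$ with $\biguplus_{r\in S} R(r) \supseteq A$, the linearity side-condition ($|R(r)|=1$ when $\lin = \linear$), and $\pwheap(r).H \nsubeq \stdreg{R(r),\gc}{\pur}.H$ for all $r\in S$. I then reuse the very same $S$ and $R$ as witnesses for $A'$: since $\biguplus_{r\in S} R(r) \supseteq A \supseteq A'$, the covering condition still holds, and the remaining clauses do not mention $A$ at all. The argument for $\writeCond{\lin,W}$ is identical, noting additionally that the address-stratified requirement on each $\pwheap(r)$ is independent of the covered set.

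For $\xReadCond{W}$, I would unfold to obtain a region $r \in \addressable{\normal,\pwheap}$ with $\pwheap(r) \nequal \codereg{\_,\_,\code,\gc}$ and $\dom(\code) \supseteq A$; the same $r$ and $\code$ witness $\npair{A'} \in \xReadCond{W}$ because $\dom(\code) \supseteq A \supseteq A'$. For $\execCond{W}$, I would observe that the defining clause universally quantifies over triples $(\baddr',\eaddr',\aaddr)$ with $\aaddr \in [\baddr',\eaddr'] \subseteq A$; any triple witnessing a would-be failure for $A'$ satisfies $[\baddr',\eaddr'] \subseteq A' \subseteq A$, so it is already handled by the membership in $\execCond{W}$ for $A$, and we conclude the required safe-execution property for the capability $((\rx,\normal),\baddr',\eaddr',\aaddr)$.

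The main obstacle is essentially that there is no obstacle: each condition is a $\supseteq$-style covering or a universal quantification over sub-ranges, both of which shrink monotonically. The only things to be careful about are the non-emptiness side conditions (needed so that $\addressable{\lin,\pwheap}$ is nontrivially inhabited and the partition $R$ actually covers something), which are guaranteed by the hypothesis $A'\neq \emptyset$ in the three cases where it is assumed; for $\xReadCond{W}$ no such hypothesis is needed because the witnessing $r$ does not depend on $A$ being inhabited.
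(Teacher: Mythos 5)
Your proposal is correct and takes essentially the same approach as the paper's proof, which simply observes that the result ``follows easily from the definitions'' by reusing the same witnesses ($S$, $R$, the code region $r$) since every clause involving the address set is a $\supseteq$-covering or a universal quantification over subranges. The paper's proof additionally remarks on having to repartition a world $W = \oplus_{a\in A} W_a(a)$ in an $\execCond{}$ case with $\lin=\linear$, but that remark refers to a form of the execute condition not matching the one actually defined in the text (which has no linearity parameter and no world partition), so your direct argument for $\execCond{}$ via the universal quantification is the one that applies to the stated definitions.
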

\begin{proof}
  Follows easily from the definitions.
  In the case for $\execCond{}$ where $\lin = \linear$, we have a partition of the world into $W = \oplus_{a \in A} W_a(a)$ that we need to convert into a partition $W = \oplus_{a \in A'} W_a'(a)$.
  It suffices construct new $W_a'$ by taking $W_a$ but adding $\oplus_{a\in(A\setminus A')}$ to an $W_a(a)$ for an arbitrary $a \in A'$ to make this work.
\end{proof}

\begin{lemma}[permission-based conditions splittable]
  \label{lem:conds-splittable}
  If $A = [\baddr_1,\eaddr_1] \uplus [\baddr_2,\eaddr_2]$, then
  \begin{itemize}
  \item If $\npair{A} \in \readCond{\lin,W}$, then there exists $W_1, W_2$ such that $W = W_1\oplus W_2$ such that $\npair{[\baddr_1,\eaddr_1]} \in \readCond{\lin,W_1}$ and $\npair{[\baddr_2,\eaddr_2]} \in \readCond{\lin,W_2}$.
  \item If $\npair{A} \in \writeCond{\lin,W}$, then there exists $W_1, W_2$ such that $W = W_1\oplus W_2$ such that $\npair{[\baddr_1,\eaddr_1]} \in \writeCond{\lin,W_1}$ and $\npair{[\baddr_2,\eaddr_2]} \in \writeCond{\lin,W_2}$.
  \item If $\npair{A} \in \execCond{W}$, then there exists $W_1, W_2$ such that $W = W_1\oplus W_2$ such that $\npair{[\baddr_1,\eaddr_1]} \in \execCond{W_1}$ and $\npair{[\baddr_2,\eaddr_2]} \in \execCond{W_2}$.
  \end{itemize}
\end{lemma}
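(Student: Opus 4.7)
The plan is to construct $W_1, W_2$ componentwise, using $\purePart{}$ as a duplicable ``ghost copy'' whenever the authority involved sits in pure (or merely non-linear) regions, and performing an explicit partition of authority only when linear addressability forces it.

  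For $\execCond$ the entire condition depends on $W$ only through $\purePart{W}$, so I take $W_1 \defeq W$ and $W_2 \defeq \purePart{W}$: Lemma~\ref{lem:purePart-duplicable} gives $W = W_1 \oplus W_2$, and by Lemma~\ref{lem:purePart-idempotent} we have $\purePart{W_1} = \purePart{W_2} = \purePart{W}$, so each $\execCond{W_i}$ unfolds to the same universal statement as $\execCond{W}$ but restricted to $[\baddr', \eaddr'] \subseteq A_i \subseteq A$, which is an immediate consequence of $\npair{A} \in \execCond{W}$ (and is also captured by Lemma~\ref{lem:conds-shrinkable}).  The same choice handles $\readCond$ and $\writeCond$ in the normal case, since the witness set $S \subseteq \addressable{\normal, \pwheap}$ lives on pure regions and these are preserved by $\purePart{}$: I will simply reuse the original $(S, R)$ as a witness on both sides, appealing to $\biguplus_{r \in S} R(r) \supseteq A \supseteq A_i$ and to the fact that both the standard-region containment and address-stratifiedness are properties of the (unchanged) heap functions.

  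The real work is the linear case of $\readCond$ and $\writeCond$, because $\purePart{}$ converts every $\spatialo$ to $\spatial$ and therefore destroys all linear addressability.  Unfolding $\npair{A} \in \readCond{\linear, W}$ yields a set $S \subseteq \addressable{\linear, \pwheap}$ together with singletons $R(r) = \{a_r\}$ whose $a_r$ are pairwise distinct and cover $A$.  I will partition $S = S_1 \uplus S_2 \uplus S_{\mathrm{rest}}$ by whether $a_r \in A_1$, $a_r \in A_2$, or $a_r \notin A$, and build $W_1, W_2$ componentwise on the heap: for $r \in S_i$ put $\pwheap[W_i](r) = (\spatialo, H_r)$ and $\pwheap[W_{3-i}](r) = (\spatial, H_r)$; on the remaining $\spatialo$ regions (including those indexed by $S_{\mathrm{rest}}$) route the $\spatialo$ copy to $W_1$ and a $\spatial$ copy to $W_2$; duplicate $\pure$, $\spatial$ and $\revoked$ regions to both sides; and handle the priv and free components of $W$ in exactly the same way.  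By the defining cases of $\oplus$ on regions this yields $W = W_1 \oplus W_2$, and the restricted witness $(S_i, R|_{S_i})$ certifies $\npair{A_i} \in \readCond{\linear, W_i}$; the $\writeCond$ case is analogous because the heap functions $H_r$ are unchanged and so is address-stratifiedness.

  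The hard part will be the bookkeeping in this linear sub-case: I need to verify that the leftover regions (those outside $S$, together with the priv and free components of $W$, which play no role in the condition itself) are distributed in a $\oplus$-compatible way so that $W_1 \oplus W_2$ reconstructs $W$ exactly, while simultaneously keeping each $W_i$'s half of the addressable $\spatialo$ regions genuinely addressable.  All other steps are essentially transcriptions of the definitions.
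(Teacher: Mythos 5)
Your proof is correct and, on the essential (linear) case, coincides with the paper's argument: both distribute the singleton $\spatialo$ islands covering $A$ according to whether their address lies in $[\baddr_1,\eaddr_1]$ or $[\baddr_2,\eaddr_2]$, making each island $\spatialo$ in one half and $\spatial$ in the other, and duplicating everything else in a $\oplus$-compatible way. The only divergence is in the non-linear and $\execCond{}$ cases: the paper reduces to pure worlds via Lemma~\ref{lem:non-linear-pure} and then uses $\purePart{W} = \purePart{W}\oplus\purePart{W}$, and for $\execCond{}$ it invokes an address-indexed partition $W = \oplus_{a\in A} W_a(a)$ that does not actually appear in the printed definition of $\execCond{}$ (which depends on $W$ only through $\purePart{W}$); your asymmetric choice $W_1 = W$, $W_2 = \purePart{W}$ together with Lemmas~\ref{lem:purePart-duplicable}, \ref{lem:purePart-idempotent} and \ref{lem:conds-shrinkable} is more faithful to the definitions as stated and avoids that mismatch. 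Note only that the pairwise distinctness of the $a_r$ you rely on is exactly the disjointness built into $\biguplus_{r\in S} R(r)$, so nothing further is needed there.
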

\begin{proof}
  If $\lin = \normal$, then Lemma~\ref{lem:non-linear-pure} tells us that it suffices to consider pure worlds $W = \purePart{W}$ and Lemma~\ref{lem:purePart-duplicable} tells us that then $W = W \oplus W$.
  The results then follow from the previous Lemma~\ref{lem:conds-shrinkable}.
  
  As such, we can limit ourselves to the case where $\lin = \linear$.

  The read- and write-conditions then require separate islands for each individual address in $A$ and we can distribute ownership of those islands according to whether those addresses are in $[\baddr_1,\eaddr_1]$ or $[\baddr_2,\eaddr_2]$, i.e. make the island $\pwheap(r)$ with $R(r) = \{a\}$ $\spatialo$ in $W_1$ and $\spatial$ in $W_2$ if $a \in [\baddr_1,\eaddr_1]$ and vice versa.
  It is then easy to check that our results hold.

  For the execute condition, we get a partition of the world into $W = \oplus_{a \in A} W_a(a)$ and we can define $W_1 = \oplus_{a\in[\baddr_1,\eaddr_1]} W_a(a)$ and likewise for $W_2$.
  The results then follow easily from the definitions.
\end{proof}

\begin{lemma}[permission-based conditions splicable]
  \label{lem:conds-splicable}
  If
  \begin{itemize}
  \item $[\baddr,\eaddr] = [\baddr_1,\eaddr_1] \uplus [\baddr_2,\eaddr_2]$
  \item $[\baddr,\eaddr]\mathrel{\#} \ta$
  \item $W_1\oplus W_2$ is defined.
  \end{itemize}
  then
  \begin{itemize}
  \item If $\npair{[\baddr_1,\eaddr_1]} \in \readCond{\lin,W_1}$ and $\npair{[\baddr_2,\eaddr_2]} \in \readCond{\lin,W_2}$, then $\npair{[\baddr,\eaddr]} \in \readCond{\lin,W_1 \oplus W_2}$
  \item If $\npair{[\baddr_1,\eaddr_1]} \in \writeCond{\lin,W_1}$ and $\npair{[\baddr_2,\eaddr_2]} \in \writeCond{\lin,W_2}$, then $\npair{[\baddr,\eaddr]} \in \writeCond{\lin,W_1 \oplus W_2}$.
  \item If $\npair{[\baddr_1,\eaddr_1]} \in \execCond{W_1}$ and $\npair{[\baddr_2,\eaddr_2]} \in \execCond{W_2}$, then $\npair{[\baddr,\eaddr]} \in \execCond{W_1 \oplus W_2}$.
  \end{itemize}

  If
  \begin{itemize}
  \item $[\baddr,\eaddr] = [\baddr_1,\eaddr_1] \uplus [\baddr_2,\eaddr_2]$
  \item $W_1\oplus W_2 \oplus W_M$ is defined.
  \item $\memSat{\ms_S,\stk,\ms_\stk,\ms_T}{W_M}$
  \end{itemize}
  then
  \begin{itemize}
  \item If $\npair{[\baddr_1,\eaddr_1]} \in \xReadCond{\lin,W_1}$ and
    $\npair{[\baddr_2,\eaddr_2]} \in \xReadCond{\lin,W_2}$, then
    $\npair{[\baddr,\eaddr]} \in \xReadCond{\lin,W_1 \oplus W_2}$
  \end{itemize}

  If $[\baddr,\eaddr] = [\baddr_1,\eaddr_1] \uplus [\baddr_2,\eaddr_2]$, then
  \begin{itemize}
  \item If $\npair{[\baddr_1,\eaddr_1]} \in \stackReadCond{W_1}$ and $\npair{[\baddr_2,\eaddr_2]} \in \stackReadCond{W_2}$, then $\npair{[\baddr,\eaddr]} \in \stackReadCond{W_1 \oplus W_2}$
  \item If $\npair{[\baddr_1,\eaddr_1]} \in \stackWriteCond{W_1}$ and $\npair{[\baddr_2,\eaddr_2]} \in \stackWriteCond{W_2}$, then $\npair{[\baddr,\eaddr]} \in \stackWriteCond{W_1 \oplus W_2}$.
  \end{itemize}
\end{lemma}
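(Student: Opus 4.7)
The plan is to unfold each permission-based condition and reconstruct the witness structure on the combined world, handling the three blocks (heap conditions, xRead, and stack conditions) by essentially the same bookkeeping but leveraging different parts of the world.

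For $\readCond{\lin,W}$ and $\writeCond{\lin,W}$, I would start by unfolding the hypotheses to obtain witness partitions $(S_1, R_1)$ and $(S_2, R_2)$ of $[\baddr_1,\eaddr_1]$ and $[\baddr_2,\eaddr_2]$ in $\pwheap[W_1]$ and $\pwheap[W_2]$ respectively. Since $W_1 \oplus W_2$ is defined, any region name occurring in both $\dom(\pwheap[W_1])$ and $\dom(\pwheap[W_2])$ has a region type preserved under $\oplus$, so $\pwheap[W_1\oplus W_2](r).H$ agrees with $\pwheap[W_i](r).H$ for $r \in S_i$. I would then form $S = S_1 \cup S_2$ and $R(r) = R_1(r) \cup R_2(r)$ (using only $R_i(r)$ when $r$ appears on one side only), verify the singleton property for linear regions (noting that shared names never occur for $\spatialo$ regions, so $R_1$ and $R_2$ cannot both be nonempty on the same $r$ when $\lin = \linear$), and use the fact that $\stdreg{R_1(r) \cup R_2(r),\gc}{\pur}.H = \stdreg{R_1(r),\gc}{\pur}.H \cup \stdreg{R_2(r),\gc}{\pur}.H$ pointwise to lift the $\nsubeq$/$\nsupeq$ relations. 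For the address-stratified clause of $\writeCond$, the joint condition holds on each region separately, which is all that is required.

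For $\execCond{W}$, I would similarly extract the address-level world partitions $W_i = \bigoplus_{a \in [\baddr_i,\eaddr_i]} W_{i,a}$ implicit in the hypothesis, producing $W_1 \oplus W_2 = \bigoplus_{a \in [\baddr,\eaddr]} W_a$. For a subrange $[\baddr', \eaddr'] \subseteq [\baddr,\eaddr]$ entirely contained in one of $[\baddr_i,\eaddr_i]$, the conclusion follows directly from the corresponding hypothesis together with the equality $\purePart{W_1 \oplus W_2} = \purePart{W_1} = \purePart{W_2}$ from Lemma~\ref{lem:purePart-oplus}. The main obstacle is the \emph{spanning} case, when $[\baddr', \eaddr']$ crosses the boundary between the two ranges: neither $\execCond{W_1}$ nor $\execCond{W_2}$ applies directly. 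Here I would invoke the untrusted branch of Theorem~\ref{thm:ftlr}, which is available because the lemma's hypothesis $[\baddr,\eaddr]\mathrel{\#}\ta$ gives $[\baddr',\eaddr']\mathrel{\#}\ta$; this branch produces the required membership in $\lre(W')$ without any read-condition hypothesis on $[\baddr',\eaddr']$.

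For $\xReadCond$ under the memory-satisfaction hypothesis, I would unfold to obtain code regions $\pwheap[W_i](r_i) \nequal \codereg{\_,\_,\code_i,\gc}$ with $\dom(\code_i) \supseteq [\baddr_i,\eaddr_i]$. The critical observation is that $W_1 \oplus W_2 \oplus W_M$ being defined and satisfied by the same underlying memory forces the code regions at $r_1$ and $r_2$ to describe consistent code memory, so either $r_1 = r_2$ (in which case $\code_1 = \code_2$ covers $[\baddr,\eaddr]$ directly) or the memSat witness partition combined with adjacency of $[\baddr_1,\eaddr_1]$ and $[\baddr_2,\eaddr_2]$ gives a single enclosing region; I would use Lemma~\ref{lem:code-reg-and-mem-sat} applied to the combined memory to extract this. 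Finally, the $\stackReadCond$ and $\stackWriteCond$ cases are routine, proceeding exactly as the heap-condition cases but against $\pwfree$ instead of $\pwheap$ with the simpler uniform $|R(r)| = 1$ constraint. The principal source of difficulty throughout is the spanning case for $\execCond$ and the region-merging question for $\xReadCond$; everything else is bookkeeping combining two witness partitions using Lemmas~\ref{lem:purePart-oplus} and~\ref{lem:oplus-assoc-comm}.
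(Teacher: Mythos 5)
Most of your proposal matches the paper's proof: the four read/write conditions (heap and stack) are handled exactly as the paper does, by taking the union of the witness sets $S$ and of the maps $R$ (your claim that $\stdreg{R_1(r)\cup R_2(r),\gc}{\pur}.H$ equals the union of the two standard regions is not literally true --- a standard region for $A$ only relates segments with domain exactly $A$ --- but the needed $\nsubeq$/$\nsupeq$ facts still transfer, since two distinct domain constraints on the same region force it to be $n$-empty). Your $\xReadCond{}$ argument is also the paper's: memory satisfaction forces the two code regions to have disjoint padded domains if distinct, which contradicts adjacency of $[\baddr_1,\eaddr_1]$ and $[\baddr_2,\eaddr_2]$, hence $r_1=r_2$.

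The gap is in your treatment of the spanning case for $\execCond{}$. You are right that a subrange $[\baddr',\eaddr']\subseteq[\baddr,\eaddr]$ crossing the boundary is covered by neither hypothesis (the paper's own proof says only that this case ``follows easily by definition'' and is silent on it, so you have spotted a real wrinkle). But your repair does not go through as stated: Theorem~\ref{thm:ftlr} requires $\npair{[\baddr',\eaddr']}\in\xReadCond{W'}$ as an \emph{unconditional} hypothesis, in the untrusted branch as well as the trusted one --- the note after the theorem only says the \emph{read} condition is dropped, not the \emph{xRead} condition --- and no such hypothesis is available in the $\execCond{}$ clause of this lemma, whose assumptions are only the two $\execCond{}$ memberships, adjacency, $[\baddr,\eaddr]\mathrel{\#}\ta$ and definedness of $W_1\oplus W_2$. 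To make your route work you would have to strengthen the clause to additionally assume $\npair{[\baddr_i,\eaddr_i]}\in\xReadCond{W_i}$ together with the memory-satisfaction premise, so that the $\xReadCond{}$ clause of this same lemma yields $\npair{[\baddr,\eaddr]}\in\xReadCond{W_1\oplus W_2}$ (and then $\xReadCond{W'}$ for $W'\future\purePart{W_1\oplus W_2}$ by Lemmas~\ref{lem:non-linear-pure} and~\ref{lem:conds-shrinkable}). You would also need to be explicit that the appeal to Theorem~\ref{thm:ftlr} happens only at indices $n'<n$ (which the $\forall n'<n$ quantifier in the definition of $\execCond{}$ permits), since the proof of the FTLR itself relies on this lemma through the splice case, and an appeal at index $n$ would be circular.
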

\begin{proof}
  The results for $\readCond{}$, $\writeCond{}$, $\stackReadCond{}$ and $\stackWriteCond{}$ follow by taking the union of the two sets $S$, and the union of $R$ for every $r$.

  The result for $\execCond{}$ follows easily by definition and using Lemmas~\ref{lem:purePart-oplus} and~\ref{lem:oplus-future}.

  For $\xReadCond{}$, we get $r_i \in \addressable{\normal,W_i}$ such that $\pwheap[W_i](r_i) \nequal \codereg{\_,\_,\code_i,\gc}$ such that $\dom(\code_i)\supseteq [\baddr_i,\eaddr_i]$.
  From $\memSat{\ms_S,\stk,\ms_\stk,\ms_T}{W_M}$, it follows that $\code_i = [\baddr_i',\eaddr_i']$ and if $r_1 \neq r_2$, then $[\baddr_1'-1,\eaddr_1+1] \mathrel{\#} [\baddr_2'-1,\eaddr_2'+1]$.
  Because $[\baddr,\eaddr] = [\baddr_1,\eaddr_1] \uplus [\baddr_2,\eaddr_2]$ and $[\baddr_i',\eaddr_i'] = \dom(\code_i)\supseteq [\baddr_i,\eaddr_i]$, we can derive that $r_1= r_2$.
  The result then follows by definition of $\xReadCond{}$.
\end{proof}

\begin{lemma}[FTLR for component code capabilities]
  \label{lem:ftlr-component-code-ptrs}
  If
  \begin{itemize}
  \item $\npair{\dom(\mscode)}\in\xReadCond{W}$
  \item ($\trust = \untrusted$ and $\dom(\mscode \mathrel{\#} \ta)$) or ($\trust = \trusted$ and $\dom(\mscode) \subseteq \ta$)
  \item $w = ((\rx,\normal),\baddr,\eaddr,\aaddr)$
  \item $[\baddr,\eaddr]\subseteq \dom(\mscode)$
  \end{itemize}

  then $\npair{(w,w)} \in \lrvg{\trust}(W)$.
\end{lemma}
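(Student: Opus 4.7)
The plan is to split on the two cases of the disjunction about $\trust$ and in each case show that $w$ satisfies the clause of the appropriate value relation for executable non-linear capabilities.

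First assume $\trust = \untrusted$ and $\dom(\mscode) \mathrel{\#} \ta$. Since $[\baddr,\eaddr] \subseteq \dom(\mscode)$, we immediately get $[\baddr,\eaddr] \mathrel{\#} \ta$. To land in the last clause of $\lrv$ for an $\rx$, $\normal$ capability, I must produce $\npair{[\baddr,\eaddr]} \in \readCond{\normal,W}$, $\npair{[\baddr,\eaddr]} \in \xReadCond{W}$ and $\npair{[\baddr,\eaddr]} \in \execCond{W}$ (the $\writeCond{}$ is vacuous as $\rx \notin \writeAllowed{}$, and $\rx \neq \rwx$). I would apply Lemma~\ref{lem:conds-shrinkable} to the hypothesis $\npair{\dom(\mscode)} \in \xReadCond{W}$ to get $\npair{[\baddr,\eaddr]} \in \xReadCond{W}$. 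From this plus $[\baddr,\eaddr] \mathrel{\#} \ta$, Lemma~\ref{lem:xReadCond-outside-ta-implies-readCond} supplies $\npair{[\baddr,\eaddr]} \in \readCond{\normal,W}$. The execute condition is the delicate one: unfolding $\execCond{}$, I must show that for every $n' < n$, every $W' \future \purePart{W}$ and every $\aaddr' \in [\baddr',\eaddr'] \subseteq [\baddr,\eaddr]$, the pair of $\rx$-capabilities with bounds $[\baddr',\eaddr']$ and current address $\aaddr'$ lies in $\lre(W')$. Here I would invoke Theorem~\ref{thm:ftlr} (FTLR) in its second case ($[\baddr',\eaddr']\mathrel{\#}\ta$, which follows from shrinking and disjointness), after first passing the hypothesis through Lemma~\ref{lem:conds-shrinkable} and Lemma~\ref{lem:non-linear-pure} to recover the $\xReadCond{}$ on $W'$ (since we only know $W' \future \purePart{W}$, not $W' \future W$).

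In the second case, $\trust = \trusted$ and $\dom(\mscode) \subseteq \ta$, so $[\baddr,\eaddr] \subseteq \ta$. Now I aim for the third clause in the definition of $\lrvtrusted$ which treats sub-$\rx$ capabilities with $[\baddr,\eaddr] \subseteq \ta$ and only requires $\npair{[\baddr,\eaddr]} \in \xReadCond{W}$. This is immediate from the hypothesis together with Lemma~\ref{lem:conds-shrinkable}.

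The only real obstacle is the execute-condition step in the untrusted case, which appeals to the FTLR. To keep the reasoning well-founded, I would note that Theorem~\ref{thm:ftlr} is proved by step-indexed induction on $n$, and the invocation here is at strictly smaller $n' < n$ as demanded by $\execCond{}$, so no circularity arises. A minor bookkeeping step is the transition from $W$ to $W' \future \purePart{W}$: since $\normal$ capabilities live in the pure part of the world, Lemma~\ref{lem:non-linear-pure} (together with the monotonicity of the permission conditions, Lemma~\ref{lem:monotonicity}, which is used implicitly elsewhere in the file) transfers $\npair{[\baddr,\eaddr]} \in \xReadCond{W}$ to $\npair{[\baddr',\eaddr']} \in \xReadCond{W'}$, supplying the premise of FTLR's second case.
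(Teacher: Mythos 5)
Your proposal is correct and follows essentially the same route as the paper's proof: case split on $\trust$, discharge the trusted case via the sub-$\rx$ clause of $\lrvtrusted$ with Lemma~\ref{lem:conds-shrinkable}, and in the untrusted case establish disjointness from $\ta$, the read and execute-read conditions, and the execute condition by appealing to Theorem~\ref{thm:ftlr} at strictly smaller step index. The only (immaterial) difference is that you obtain $\readCond{\normal,W}$ via Lemma~\ref{lem:xReadCond-outside-ta-implies-readCond} where the paper cites Lemma~\ref{lem:code-memory-untrusted-readable}; both are corollaries of Lemma~\ref{lem:codereg-untrusted-stdreg}, and your careful handling of the $W' \future \purePart{W}$ transfer in the execute condition is if anything more explicit than the paper's.
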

\begin{proof}
  We distinguish two cases:
  \begin{itemize}
  \item $\trust = \untrusted$ and $\dom(\mscode) \mathrel{\#} \ta$:
    By definition of $\lrv(W)$, it suffices to show that:
    \begin{itemize}
    \item $[b,e] \mathrel{\#} \ta$: follows directly from $[\baddr,\eaddr]\subseteq \dom(\mscode)$ and $\dom(\mscode) \mathrel{\#} \ta$.
    \item $\npair{[b,e]}\in\readCond{\normal,W}$:
      this follows from Lemma~\ref{lem:code-memory-untrusted-readable}.
    \item $\npair{[b,e]}\in\xReadCond{W}$:
      by assumption and Lemma~\ref{lem:conds-shrinkable}.
    \item $\npair{[b,e]}\in\execCond{W}$:
      take $n' < n$, $W' \future \purePart{W}$, $\aaddr' \in [\baddr',\eaddr'] \subseteq [\baddr,\eaddr]$ and $w' = ((\rx,\normal),\baddr',\eaddr',\aaddr')$.
      Then we need to show that $\npair[n']{(w',w')} \in \lre(W')$.
      This now follows immediately from the (regular) FTLR (Theorem~\ref{thm:ftlr}), using the two above points, Lemma~\ref{lem:downwards-closed} and Lemma~\ref{lem:conds-shrinkable}.
    \end{itemize}

  \item $\trust = \trusted$ and $\dom(\mscode) \subseteq \ta$:
    By definition of $\lrvg{\trusted}(W)$ and Lemma~\ref{lem:conds-shrinkable} with the assumption that $\npair{\dom(\mscode)}\in\xReadCond{W}$.
  \end{itemize}
\end{proof}

\begin{lemma}[FTLR for component code-values]
  \label{lem:ftlr-comp-code}
  If
  \begin{equation*}
    \sigrets,\sigrets[\mathrm{owned}],\sigcloss,\ta\vdash_{\mathrm{comp-code}} w
  \end{equation*}
  and
  \begin{itemize}
  \item $\sigrets \subseteq \gsigrets$ and $\sigcloss \subseteq \gsigcloss$
  \item $\pwheap(r_{\mathrm{code}}) = \codereg{\sigrets,\sigcloss,\mscode,\gc}$
  \item ($\dom(\mscode) \mathrel{\#} \ta$ and $\trust = \untrusted$) or ($\dom(\mscode) \subseteq \ta$ and $\trust = \trusted$)
  \end{itemize}

  then $\npair{(w,w)} \in \lrvg{\trust}(W)$
\end{lemma}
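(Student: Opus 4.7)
The proof will proceed by case analysis on the derivation of $\sigrets,\sigrets[\mathrm{owned}],\sigcloss,\ta \vdash_{\mathrm{comp-code}} w$ (read as applying to the word $w = \mscode(a)$ at some address), which has two rules: one covering the case $w = \seal{\sigma_\baddr,\sigma_\eaddr,\sigma_\baddr}$ with $[\sigma_\baddr,\sigma_\eaddr] = (\sigrets[\mathrm{owned}]\cup\sigcloss) \subseteq (\sigrets\cup\sigcloss)$, and one covering $w \in \ints$ (together with the well-formedness of any embedded call instructions).

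The integer case is immediate: the integer clause of $\lrv$ puts $\npair{(w,w)}\in\lrv(W)$, and by Lemma~\ref{lem:untrusted-supset-trust} this also lives in $\lrvtrusted(W)$. Hence this case discharges both trust levels uniformly.

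The seal case splits on the trust assumption. If $\trust = \trusted$ (so $\dom(\mscode)\subseteq \ta$), I invoke the extra seal clause of $\lrvtrusted$ with the witness region $r_{\mathrm{code}}$: the assumption $\pwheap(r_{\mathrm{code}}) = \codereg{\sigrets,\sigcloss,\mscode,\gc}$ gives $\pwheap(r_{\mathrm{code}}) \nequal \codereg{\sigrets,\sigcloss,\mscode,(\ta,\stkb,\gsigrets,\gsigcloss)}$, and the remaining requirements $\dom(\mscode)\subseteq \ta$, $[\sigma_\baddr,\sigma_\eaddr]\subseteq (\sigrets\cup\sigcloss)$, $\sigrets\subseteq\gsigrets$, $\sigcloss\subseteq\gsigcloss$ are all either direct hypotheses or follow from the derivation's premise. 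If $\trust = \untrusted$ (so $\dom(\mscode)\mathrel{\#}\ta$), I fall back on the seal clause of $\lrv$. For each $\sigma' \in [\sigma_\baddr,\sigma_\eaddr]$, the witness region is again $r_{\mathrm{code}}$: its $(\pure, H, H_\sigma)$ shape gives $H_\sigma\,\sigma' \nequal \lrv\circ\xi$ via Lemma~\ref{lem:untrusted-codereg-sealed-vals-safe} (applied to $\dom(\mscode)\mathrel{\#}\ta$ and $\sigma'\in\sigcloss$).

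The delicate point, and the expected main obstacle, is the disjointness clause $[\sigma_\baddr,\sigma_\eaddr]\mathrel{\#}(\gsigrets\cup\gsigcloss)$ demanded by the untrusted seal clause of $\lrv$. Here I expect to exploit the validity premise for untrusted components (the disjunct $\dom(\mscode)\mathrel{\#}\ta \wedge \sigrets = \emptyset$ in the comp definition), which collapses $[\sigma_\baddr,\sigma_\eaddr]$ to a subset of $\sigcloss$; the intended reading is that in the untrusted case the ambient $\gsigcloss$ in the global constants $\gc$ is set so that the component's own closure seals are not registered in $\gsigcloss$ (otherwise the disjointness in $\lrv$ could never be met by any untrusted component exporting a seal). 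Verifying that the derivation only actually fires on words satisfying this disjointness—possibly by appeal to the analogous hypotheses on $\gsigrets,\gsigcloss$ used elsewhere in the setup of $H^{\mathrm{code}}$ and $H^{\mathrm{code,\square}}_\sigma$—is the step I expect to require the most care, while the trusted sub-case and the integer case are essentially bookkeeping against the definitions.
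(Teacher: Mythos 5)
Your proposal follows essentially the same route as the paper's proof: case analysis on the two derivation rules, the integer case discharged by the integer clause of $\lrv$, the trusted seal case by the extra seal clause of $\lrvtrusted$ witnessed by $r_{\mathrm{code}}$, and the untrusted seal case by taking $r_{\mathrm{code}}$ as the witness and appealing to the fact that $\codereg{\sigrets,\sigcloss,\mscode,\gc}.H_\sigma\,\sigma'\nequal\lrv$ (the paper cites Lemma~\ref{lem:codereg-sealing-safety}, which is your Lemma~\ref{lem:untrusted-codereg-sealed-vals-safe} one definition-unfolding later). The ``delicate point'' you flag --- the clause $[\sigma_\baddr,\sigma_\eaddr]\mathrel{\#}(\gsigrets\cup\gsigcloss)$ in the untrusted seal case of $\lrv$ --- is simply not addressed in the paper's proof, which only discharges the $H_\sigma$ condition; your instinct that something must be said there is sound, but be aware that your proposed resolution (reading $\gsigcloss$ as excluding the untrusted component's own closure seals) sits in tension with the lemma's stated hypothesis $\sigcloss\subseteq\gsigcloss$, so this point is a gap in the paper rather than in your argument specifically.
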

\begin{proof}
  By induction on
  \begin{equation*}
    \sigrets,\sigrets[\mathrm{owned}],\sigcloss,\ta\vdash_{\mathrm{comp-code}} w
  \end{equation*}
  There are two cases to consider:
  \begin{itemize}
  \item
    $\mscode(a) = \seal{\sigma_\baddr,\sigma_\eaddr,\sigma_\baddr}$ and $[\sigma_\baddr,\sigma_\eaddr] \subseteq (\sigrets \cup \sigcloss)$:

    We distinguish two cases:
    \begin{itemize}
    \item $\trust = \trusted$:

      By definition of $\lrvg{\trusted}$, it suffices to prove that
      $\pwheap(r_{\mathrm{code}}) = \codereg{\sigrets,\sigcloss,\mscode,\gc}$, $\dom(\code)\subseteq \ta$ and $[\sigma_\baddr,\sigma_\eaddr] \subseteq (\sigrets \cup \sigcloss)$, all of which follow by assumption.

    \item $\trust = \untrusted$:

      In this case, we know that $\sigrets=\emptyset$.

      By definition of $\lrv$, it suffices to prove that
      \begin{equation*}
        \forall \sigma' \in [\sigma_\baddr,\sigma_\eaddr] \ldotp \exists r \in \dom(\pwheap) \ldotp \pwheap(r) = (\pure,\_,H_\sigma) \tand H_\sigma \; \sigma' \nequal \lrv
      \end{equation*}
      We take $r = r_{\var{code}}$ and the result then follows from Lemma~\ref{lem:codereg-sealing-safety}.
    \end{itemize}

  \item $\mscode(a) \in \ints$ and
    \begin{multline*}
      ([a \cdots a + \calllen-1] \subseteq \ta \wedge\mscode([a \cdots a + \calllen-1]) = \scall[0..\calllen-1]{\offpc,\offsigma}{r_1}{r_2}) \Rightarrow\\ (\mscode(a+\offpc) = \seal{\sigma_\baddr,\sigma_\eaddr,\sigma_\baddr} \wedge \sigma_\baddr+\offsigma \in \sigrets[\mathrm{owned}])
    \end{multline*}:

    Using $\mscode(a) \in \ints$, the result follows easily by definition.
  \end{itemize}
\end{proof}

\begin{lemma}[FTLR for component data-values]
  \label{lem:ftlr-comp-value}
  If
  \begin{equation*}
    \dom(\mscode), A_{\mathrm{own}}, A_{\mathrm{non-linear}}, \sigrets,\sigcloss \vdash_{\mathrm{comp-value}} w
  \end{equation*}
  and
  \begin{itemize}
  \item $\npair{A_{\mathrm{non-linear}}} \in \readCond{\normal,W}$
  \item $\npair{A_{\mathrm{non-linear}}} \in \writeCond{\normal,W}$
  \item $\npair{A_{\mathrm{own}}} \in \readCond{\linear,W}$
  \item $\npair{A_{\mathrm{own}}} \in \writeCond{\linear,W}$
  \item $\pwheap(r_{\mathrm{code}}) = \codereg{\sigrets,\sigcloss,\mscode,\gc}$
  \item $\dom(\mscode) \mathrel{\#} \ta$ or $\dom(\mscode) \subseteq \ta$
  \item $(A_{\mathrm{own}} \cup A_{\mathrm{non-linear}}) \mathrel{\#} \ta$
  \end{itemize}

  then $\npair{(w,w)} \in \lrv(W)$
\end{lemma}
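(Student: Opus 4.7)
The plan is to proceed by induction on the derivation of
$\dom(\mscode), A_{\mathrm{own}}, A_{\mathrm{non-linear}}, \sigrets,\sigcloss \vdash_{\mathrm{comp-value}} w$, which has three rules. The integer case is immediate from the first clause of the definition of $\lrv$. The main work is in the capability and sealed cases.

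For the capability case $w = ((\perm,\lin),\baddr,\eaddr,\aaddr)$ with $\perm \sqsubseteq \rw$, I would verify the capability clause of $\lrv(W)$. First, $\perm \sqsubseteq \rw$ gives $\perm \not\in \{\rwx,\rx\}$, so the executable sub-case is trivially satisfied. Disjointness $[\baddr,\eaddr] \mathrel{\#} \ta$ follows from the premise $(A_{\mathrm{own}} \cup A_{\mathrm{non-linear}}) \mathrel{\#} \ta$ using whichever of $[\baddr,\eaddr] \subseteq A_{\mathrm{own}}$ or $[\baddr,\eaddr] \subseteq A_{\mathrm{non-linear}}$ is available according to the derivation. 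For the read and write conditions, I would split on $\lin$: if $\lin = \linear$, use the $A_{\mathrm{own}}$ assumptions and Lemma~\ref{lem:conds-shrinkable} to shrink to $[\baddr,\eaddr]$; if $\lin = \normal$, use the $A_{\mathrm{non-linear}}$ assumptions and Lemma~\ref{lem:conds-shrinkable} similarly. The case $\perm = \noperm$ is immediate.

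For the sealed case $w = \sealed{\sigma,\vsc}$ with $\sigma \in \sigcloss$, I would use $\pwheap(r_{\mathrm{code}}) = \codereg{\sigrets,\sigcloss,\mscode,\gc}$ to pick $r = r_{\mathrm{code}}$, so that $\pwheap(r) = (\pure,\_,H_\sigma)$ with $H_\sigma \; \sigma \nequal H^\mathrm{code,\square}_\sigma \; \sigrets \; \sigcloss \; \mscode \; \gc \; \sigma$. The linearity constraint (iff source and target) is trivial since the two values are equal. The induction hypothesis gives $\npair{(\vsc,\vsc)} \in \lrv(W)$. Looking up $H^\mathrm{code,\square}_\sigma$ for $\sigma \in \sigcloss$: if $\dom(\mscode) \mathrel{\#} \ta$ we directly need $\npair{(\vsc,\vsc)} \in \lrv \; \xi(\hat{W})$, supplied by the IH (combined with monotonicity to move between $W$ and $\xi^{-1}(W)$ as needed); if $\dom(\mscode) \subseteq \ta$ we need either the executable or non-executable branch, and in either case $\lrv \subseteq \lrvtrusted$ (Lemma~\ref{lem:untrusted-supset-trust}) makes the IH sufficient.

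The main obstacle will be discharging the cross-jump obligation in the sealed clause of $\lrv$: for every $W'$, $W_o$ and every $\npair[n']{(\vsc_S',\vsc_T')} \in H_\sigma \; \sigma \; \xi^{-1}(W_o)$, one must show $\npair[n']{(\vsc,\vsc_S',\vsc,\vsc_T')} \in \lrexj(W' \oplus W_o)$. Since $H_\sigma \; \sigma$ is forced to store $\lrv$-safe values (directly in the untrusted case, or through the executable/non-executable split in the trusted case, using $\lrvtrusted \supseteq \lrv$ and a use of Lemma~\ref{lem:trusted-and-reasonable-is-untrusted} in the trusted-executable branch if reasonability is needed), the pair of $(\vsc,\vsc)$ (from the IH, made valid in $W'$ by monotonicity Lemma~\ref{lem:monotonicity} via $W' \oplus W_o \future W$ where the purity of the relevant region lets us re-export from $W$ to $W'$) and $(\vsc_S',\vsc_T')$ can be fed into Lemma~\ref{lem:safe-vals-safe-exec}, which yields exactly the required $\lrexj$ membership. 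One must be careful to split the linear versus non-linear branches symmetrically, but in both branches the same two ingredients (IH plus the sealing-region invariant) suffice.
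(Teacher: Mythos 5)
Your proposal is correct and follows essentially the same route as the paper's proof: induction on the judgement, discharging the capability case via Lemma~\ref{lem:conds-shrinkable} and the assumed read/write conditions, and discharging the sealed case by choosing the island $r_{\mathrm{code}}$, using the induction hypothesis together with $\lrv \subseteq \lrvtrusted$ for the $H_\sigma$ membership, and closing the cross-jump obligations with Lemma~\ref{lem:safe-vals-safe-exec}, Lemma~\ref{lem:monotonicity}, and Lemma~\ref{lem:non-linear-pure} in the non-linear branch. The only superfluous element is the tentative appeal to Lemma~\ref{lem:trusted-and-reasonable-is-untrusted}, which is not needed because Lemma~\ref{lem:safe-vals-safe-exec} already accepts the executable case of the second argument without an $\lrv$ membership.
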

\begin{proof}
  By induction on the judgement 
  \begin{equation*}
    \dom(\mscode), A_{\mathrm{own}}, A_{\mathrm{non-linear}}, \sigrets,\sigcloss \vdash_{\mathrm{comp-value}} w
  \end{equation*}
  We have the following cases:
  \begin{itemize}
  \item $w = z$: result follows trivially
  \item $w = ((\perm,\lin),\baddr,\eaddr,\aaddr)$,
    $\permbnf \sqsubseteq \rw$,
    $\lin = \linear \Rightarrow \emptyset \subset [\baddr,\eaddr] \subseteq A_{\mathrm{own}}$ and
    $\lin = \normal \Rightarrow [\baddr,\eaddr] \subseteq A_{\mathrm{non-linear}}$:
    
    By definition of $\lrv$, it suffices to prove that $[b,e] \mathrel{\#} \ta$, $\npair{[\baddr,\eaddr]} \in \readCond{\lin,W}$ and $\npair{[\baddr,\eaddr]}\in\writeCond{\lin,W}$.
    The result then follows easily from the assumptions, with Lemma~\ref{lem:conds-shrinkable}.
    
  \item $w = \sealed{\sigma,\vsc}$ and
    \begin{equation*}
      \dom(\mscode),A_{\mathrm{own}},A_{\mathrm{non-linear}},\sigrets,\sigcloss \vdash_{\mathrm{comp-value}} \vsc
    \end{equation*}
    and $\sigma \in \sigcloss$:

    First, we have by induction that $\npair{(\vsc,\vsc)} \in \lrv (W)$.

    By definition of $\lrv$ and by choosing island $r = r_{\mathrm{code}}$, it suffices to prove that
    \begin{itemize}
    \item $\npair[n']{(\vsc,\vsc)} \in H_\sigma \; \sigma \; \xi^{-1}(W)$ for all $n' < n$:

     By definition of $\codereg{\sigrets,\sigcloss,\mscode,\gc}$ and $H^\mathrm{code,\square}_\sigma$ and because we know that $\sigma\in\sigcloss$, it suffices to prove that
     one of the following holds: 
     \begin{itemize}
     \item $(\dom(\code) \mathrel{\#} \ta$ and $\npair[n']{(\vsc,\vsc)} \in \lrv \; \xi(\xi^{-1}(W))$
     \item $(\dom(\code) \subseteq \ta$ and $\npair[n']{(\vsc,\vsc)} \in \lrvtrusted \; \xi(\xi^{-1}(W))$
     \end{itemize}
     Both follow since $\xi(\xi^{-1}(W)) = W$ and $\lrv \; W \subseteq \lrvtrusted \; W$, $\dom(\mscode \mathrel{\#} \ta)$ or $\dom(\mscode) \subseteq \ta$, and the above fact that $\npair{(\vsc,\vsc)} \in \lrv (W)$ and Lemma~\ref{lem:downwards-closed}.

   \item $(\isLinear{\vsc} \text{ iff } \isLinear{\vsc})$: trivially fine
      
   \item If $\isLinear{\vsc}$, then 
     for all $W' \future W$, $W_o$, $n' < n$ and $\npair[n']{\stpair[.]{\vsc_S'}{\vsc'_T}} \in H_\sigma \; \sigma \; \xi^{-1}(W_o)$, we have that 
     \begin{equation*}
       \npair[n']{(\vsc,\vsc_S',\vsc,\vsc_T')} \in \lrexj(W'\oplus W_o))
     \end{equation*}

     By definition of $\codereg{\sigrets,\sigcloss,\mscode,\gc}$ and $H^\mathrm{code,\square}_\sigma$ and because we know that $\sigma\in\sigcloss$, we know that
     one of the following holds:
     \begin{itemize}
     \item $(\dom(\code) \mathrel{\#} \ta$ and $\npair[n']{(\vsc_S',\vsc_T')} \in \lrv \; \xi(\xi^{-1}(W_o))$
     \item $(\dom(\code) \subseteq \ta$ and $\sigcloss\subseteq \gsigcloss$ and $\sigrets \subseteq \gsigrets$ and $\npair[n']{(\vsc_S',\vsc_T')} \in \lrvg{\trust} \; \xi(\xi^{-1}(W_o))$ with $\trust = \trusted$ iff $\exec{\vsc_S'}$.
     \end{itemize}

     Lemma~\ref{lem:safe-vals-safe-exec} now allows us to conclude (using Lemma~\ref{lem:monotonicity}) that:
     \begin{equation*}
       \npair[n']{(\vsc,\vsc_S',\vsc,\vsc_T')} \in \lrexj(W'\oplus W_o))
     \end{equation*}

   \item 
       If $\nonLinear{\src{\vsc_S}}$ then for all $W' \future \purePart{W}$, $W_o$, $n' < n$, $\npair[n']{(\vsc_S',\vsc'_T)} \in H_\sigma \; \sigma \; \xi^{-1}(W_o)$, we have that
       \begin{equation*}
         \npair[n']{(\vsc,\src{\vsc_S'},\vsc,\vsc_T')} \in \lrexj(W'\oplus W_o))
       \end{equation*}

     By definition of $\codereg{\sigrets,\sigcloss,\mscode,\gc}$ and $H^\mathrm{code,\square}_\sigma$ and because we know that $\sigma\in\sigcloss$, we know that
     one of the following holds:
     \begin{itemize}
     \item $(\dom(\code) \mathrel{\#} \ta$ and $\npair[n']{(\vsc_S',\vsc_T')} \in \lrv \; \xi(\xi^{-1}(W_o))$
     \item $(\dom(\code) \subseteq \ta$ and $\sigcloss\subseteq \gsigcloss$ and $\sigrets \subseteq \gsigrets$ and $\npair[n']{(\vsc_S',\vsc_T')} \in \lrvg{\trust} \; \xi(\xi^{-1}(W_o))$ with $\trust = \trusted$ iff $\exec{\vsc_S'}$.
     \end{itemize}

     First, Lemma~\ref{lem:non-linear-pure} allows us to conclude that also $\npair{(\vsc,\vsc)} \in \lrv (\purePart{W})$.

     Next, Lemma~\ref{lem:safe-vals-safe-exec} now allows us to conclude (using Lemma~\ref{lem:monotonicity}) that:
     \begin{equation*}
       \npair[n']{(\vsc,\vsc_S',\vsc,\vsc_T')} \in \lrexj(W'\oplus W_o))
     \end{equation*}
    \end{itemize}
  \end{itemize}
\end{proof}

\begin{lemma}[FTLR for component exports]
  \label{lem:ftlr-comp-export}
  If
  \begin{equation*}
    \dom(\mscode), A_{\mathrm{non-linear}}, \sigrets,\sigcloss \vdash_{\mathrm{comp-export}} w
  \end{equation*}
  and
  \begin{itemize}
  \item $\npair{A_{\mathrm{non-linear}}} \in \readCond{\normal,W}$
  \item $\npair{A_{\mathrm{non-linear}}} \in \writeCond{\normal,W}$
  \item $\pwheap(r_{\mathrm{code}}) = \codereg{\sigrets,\sigcloss,\mscode,\gc}$
  \item $\dom(\mscode) \mathrel{\#} \ta$ or $\dom(\mscode) \subseteq \ta$
  \item If $w = \sealed{\sigma,\vsc}$ with $\dom(\mscode) \subseteq \ta$, $\sigma \in \sigcloss$ and $\exec{\vsc}$, then
    $\vsc$ behaves reasonably up to $n$ steps.
  \item $A_{\mathrm{non-linear}}\mathrel{\#} \ta$
  \end{itemize}

  then $\npair{(w,w)} \in \lrv(W)$
\end{lemma}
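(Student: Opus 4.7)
My plan is to invert the derivation of $\dom(\mscode),A_\mathrm{non-linear},\sigrets,\sigcloss\vdash_\mathrm{comp-export} w$, which has two rules. The second rule gives $\dom(\mscode),\emptyset,A_\mathrm{non-linear},\sigrets,\sigcloss\vdash_\mathrm{comp-value} w$, so I would invoke Lemma~\ref{lem:ftlr-comp-value} directly. The only obligations beyond what we already have are $\npair{\emptyset}\in\readCond{\linear,W}$ and $\npair{\emptyset}\in\writeCond{\linear,W}$, which hold trivially by picking the empty partition ($S=\emptyset$, $R=\emptyset$) in their definitions. So the real work is in the other rule, where $w = \sealed{\sigma,\vsc_0}$ with $\vsc_0 = ((\rx,\normal),\baddr,\eaddr,\aaddr)$, $[\baddr,\eaddr]\subseteq\dom(\mscode)$ and $\sigma\in\sigcloss$.

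For that case, I would first derive $\npair{[\baddr,\eaddr]}\in\xReadCond{W}$ from the assumption $\pwheap(r_\mathrm{code}) = \codereg{\sigrets,\sigcloss,\mscode,\gc}$ (taking the witnessing region to be $r_\mathrm{code}$). Lemma~\ref{lem:ftlr-component-code-ptrs} then yields $\npair{(\vsc_0,\vsc_0)}\in\lrvg{\trust}(W)$, with $\trust=\untrusted$ when $\dom(\mscode)\mathrel{\#}\ta$ and $\trust=\trusted$ when $\dom(\mscode)\subseteq\ta$. In the trusted case I apply Lemma~\ref{lem:trusted-and-reasonable-is-untrusted} using the reasonability hypothesis on the (executable) inner capability $\vsc_0$ to promote the trusted membership to $\npair{(\vsc_0,\vsc_0)}\in\lrv(W)$ (handling $n=0$ trivially up-front). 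With that in hand, I unfold the sealed clause of $\lrv(W)$, witnessing the region with $r_\mathrm{code}$: the approximation $H_\sigma\;\sigma \nequal H^\mathrm{code,\square}_\sigma\;\sigrets\;\sigcloss\;\mscode\;\gc\;\sigma$ is immediate from $\codereg{}$, and membership $\npair[n']{(\vsc_0,\vsc_0)}\in H_\sigma\;\sigma\;\xi^{-1}(W)$ is obtained by plugging $\lrv$- (respectively $\lrvtrusted$-) safety of $\vsc_0$ into the disjunctive definition of $H^\mathrm{code,\square}_\sigma$ for $\sigma\in\sigcloss$.

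Since $\vsc_0$ is non-linear, only the non-linear xjump clause of the sealed case remains: for arbitrary $W'\future\purePart{W}$, $W_o$, and $(\vsc'_S,\vsc'_T)\in H_\sigma\;\sigma\;\xi^{-1}(W_o)$, I must produce $\npair{(\vsc_0,\vsc'_S,\vsc_0,\vsc'_T)}\in\lrexj(W'\oplus W_o)$. Unfolding $H^\mathrm{code,\square}_\sigma$, either $\vsc'_S$ is non-executable, in which case the definition delivers $(\vsc'_S,\vsc'_T)\in\lrv$, or $\vsc'_S$ is executable, in which case xjump's first clause rejects an executable data argument and both reductions fail, making $\lrexj$-membership vacuous. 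In the non-executable subcase I would use Lemma~\ref{lem:non-linear-pure} to pass $(\vsc_0,\vsc_0)\in\lrv(W)$ to $\lrv(\purePart{W})$, then Lemmas~\ref{lem:monotonicity} and~\ref{lem:safe-vals-safe-exec} to conclude xjump safety at $W'\oplus W_o$. The hard step is really the trusted sub-case: the sealed clause of $\lrv$ ultimately needs an \emph{untrusted} code value, but Lemma~\ref{lem:ftlr-component-code-ptrs} only produces a trusted one. Bridging that gap is precisely where the reasonability hypothesis and Lemma~\ref{lem:trusted-and-reasonable-is-untrusted} are indispensable, and also where one must be careful not to try to coerce executable trusted \emph{data} values (for which no such reasonability is assumed) to untrusted ones --- instead leaning on the failure behaviour of xjump as above.
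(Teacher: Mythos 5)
Your handling of the second inference rule (via Lemma~\ref{lem:ftlr-comp-value} with the trivially-satisfied empty read/write conditions) and your derivation of $\npair[n']{(\vsc_0,\vsc_0)}\in H^\mathrm{code,\square}_\sigma\;\ldots\;\xi^{-1}(W)$ through the disjunctive definition match the paper. The gap is in the step you yourself flag as the hard one: coercing the trusted code pointer into $\lrv(W)$ via Lemma~\ref{lem:trusted-and-reasonable-is-untrusted} so that Lemma~\ref{lem:safe-vals-safe-exec} applies. That lemma requires $\vsc_0$ to be \emph{reasonable as a word} in the sense of Definition~\ref{def:reasonable-word}, whereas the hypothesis of the export lemma only gives you that $\vsc_0$ \emph{behaves reasonably} in the sense of Definition~\ref{def:reasonable-pc} --- a property of executable capabilities used as program counters, not of words. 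These notions do not coincide, and the mismatch is fatal here: word-reasonability of $((\perm,\_),\baddr,\eaddr,\_)$ demands $[\baddr,\eaddr]\mathrel{\#}\ta$, which directly contradicts $[\baddr,\eaddr]\subseteq\dom(\mscode)\subseteq\ta$ in the trusted case. Indeed the paper's remark after the definition of $\lrvtrusted$ explains that trusted code pointers are in general \emph{not} in $\lrv$ (they fail the read condition because trusted code blocks contain trusted seals), so no coercion of the form you propose can exist.

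The paper's proof avoids ever placing $\vsc_0$ in $\lrv(W)$. For the $\lrexj$ obligation it invokes Theorem~\ref{thm:ftlr} directly: the FTLR's trusted branch accepts exactly the ``behaves reasonably up to $n$ steps'' hypothesis you have (together with $[\baddr,\eaddr]\subseteq\ta$ and the $\xReadCond{}$ obtained from $r_{\mathrm{code}}$) and yields $\npair{(\vsc_0,\vsc_0)}\in\lre(\purePart{W})$; Lemma~\ref{lem:lre-implies-lrexj} then discharges the xjump clause, with the executable-data subcase handled by the mutual-failure argument you already sketched. Your untrusted branch is fine as written, since there Lemma~\ref{lem:ftlr-component-code-ptrs} genuinely lands in $\lrv(W)$; it is only the trusted branch that needs to be rerouted through Theorem~\ref{thm:ftlr} and Lemma~\ref{lem:lre-implies-lrexj} instead of Lemmas~\ref{lem:trusted-and-reasonable-is-untrusted} and~\ref{lem:safe-vals-safe-exec}.
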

\begin{proof}
  By induction on the judgement
  \begin{equation*}
    \dom(\mscode), A_{\mathrm{non-linear}}, \sigrets,\sigcloss \vdash_{\mathrm{comp-export}} w
  \end{equation*}
  We have the following cases:
  \begin{itemize}
  \item $w = \sealed{\sigma,\vsc}$, $\vsc = ((\rx,\normal),\baddr,\eaddr,\aaddr)$, $[\baddr,\eaddr]\subseteq \dom(\mscode)$, $\sigma \in \sigcloss$:

    By definition of $\lrv$ and by choosing island $r = r_{\mathrm{code}}$, it suffices to prove that
    \begin{itemize}
    \item $\npair[n']{(\vsc,\vsc)} \in H^\mathrm{code,\square}_\sigma \; \sigrets \; \sigcloss \; \code \; \gc \; \sigma \; \xi^{-1}(W)$ for all $n' < n$:

      By definition of $\codereg{\sigrets,\sigcloss,\mscode,\gc}$ and $H^\mathrm{code,\square}_\sigma$ and because we know that $\sigma\in\sigcloss$, it suffices to prove that one of the following holds: 
      \begin{itemize}
      \item $(\dom(\code) \mathrel{\#} \ta$ and $\npair[n']{(\vsc,\vsc)} \in \lrv \; \xi(\xi^{-1}(W))$
      \item $(\dom(\code) \subseteq \ta$ and $\npair[n']{(\vsc,\vsc)} \in \lrvtrusted \; \xi(\xi^{-1}(W))$ and $\exec{\vsc}$
      \end{itemize}
      Take $\trust = \untrusted$ iff $\dom(\mscode \mathrel{\#} \ta)$ and $\trust = \trusted$ iff $\dom(\mscode) \subseteq \ta$.
      Since $\xi(\xi^{-1}(W)) = W$ and because we know by assumption that $\dom(\mscode \mathrel{\#} \ta)$ or $\dom(\mscode) \subseteq \ta$, it suffices to prove that $\npair[n']{(\vsc,\vsc)} \in \lrvg{\trust}(W)$.

      This last fact follows from Lemma~\ref{lem:ftlr-component-code-ptrs}, using Lemma~\ref{lem:downwards-closed} and the definition of $\xReadCond{}$ with the assumption that $\pwheap(r_{\mathrm{code}}) = \codereg{\sigrets,\sigcloss,\mscode,\gc}$.

    \item $(\isLinear{\vsc} \text{ iff } \isLinear{\vsc})$: trivially fine

    \item
      For all $W' \future \purePart{W}$, $W_o$, $n' < n$, $\npair[n']{(\vsc_S',\vsc'_T)} \in H^\mathrm{code,\square}_\sigma \; \sigrets \; \sigcloss \; \code \; \gc \; \sigma \; \xi^{-1}(W_o)$, we have that
      \begin{equation*}
        \npair[n']{(\vsc,\src{\vsc_S'},\vsc,\vsc_T')} \in \lrexj(W'\oplus W_o))
      \end{equation*}


      From one of our assumptions, we know that $\vsc$ behaves reasonably up to $n$ steps if $\dom(\mscode)\subseteq \ta$.

      Theorem~\ref{thm:ftlr} now tells us that $\npair{(\vsc,\vsc)} \in \lre(\purePart{W})$, using the definition of $\xReadCond{}$ and the assumption that $\pwheap(r_{\mathrm{code}}) = \codereg{\sigrets,\sigcloss,\mscode,\gc}$.

      Since $\sigma\in\sigcloss$, it follows from $\npair[n']{(\vsc_S',\vsc'_T)} \in H^\mathrm{code,\square}_\sigma \; \sigrets \; \sigcloss \; \code \; \gc \; \sigma \; \xi^{-1}(W_o)$ that
      \begin{equation*}
        \npair[n']{(\vsc_S',\vsc'_T)} \in \lrvg{\trust}(W_o)
      \end{equation*}
      with $\trust = \untrusted$ iff ($\dom(\code \mathrel{\#} \ta)$ or $\nonExec{\vsc_S'}$) and $\trust = \trusted$ iff ($\dom(\code) \subseteq \ta$ and $\exec{\vsc_S'}$).

      The result now follows from Lemma~\ref{lem:lre-implies-lrexj}.
    \end{itemize}

  \item
    \begin{equation*}
      \dom(\mscode),\emptyset,A_{\mathrm{non-linear}},\sigrets,\sigcloss \vdash_{\mathrm{comp-value}} w
    \end{equation*}

    In this case, the result follows from Lemma~\ref{lem:ftlr-comp-value}, using the fact that $\readCond$ and $\writeCond$ follow trivially for the empty $A_{\mathrm{own}}$ and a similar observation about the empty imports.
  \end{itemize}
\end{proof}

\begin{lemma}[FTLR for components]
  \label{lem:ftlr-comps}
  If
  \begin{itemize}
  \item $\gc = (\ta,\stkb,\gsigrets,\gsigcloss)$
  \item $\var{comp}$ is a well-formed component, i.e. $\ta \vdash \var{comp}$
  \item One of the following holds:
    \begin{itemize}
    \item $\dom(\var{comp}.\mscode) \subseteq \ta$ and $\var{comp}$ is a reasonable component (see Section~\ref{sec:reasonability})
    \item $\dom(\var{comp}.\mscode) \mathrel{\#} \ta$
    \end{itemize}
  \item $\sigrets \subseteq \gsigrets$ and $\sigcloss \subseteq \gsigcloss$
  \end{itemize}
  Then there exists a $W$ such that
  \begin{itemize}
  \item $\npair{(\var{comp},\var{comp})} \in \lrcomp(W)$
  \item $\dom(\pwheap)$ can be chosen to not include any finite set of region names 
  \item $\dom(\pwfree) =
    \dom(\pwpriv)=\emptyset$
  \end{itemize}
\end{lemma}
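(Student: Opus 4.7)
The plan proceeds in three phases: reduce the main-bearing case to the non-main case, construct a suitable world $W$ explicitly from the component's structure, and then verify the two conditions required by the $\lrcomp$ definition using the per-piece FTLR lemmas already proved (\Cref{lem:ftlr-comp-code,lem:ftlr-comp-value,lem:ftlr-comp-export}).

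For the first phase, if $\var{comp} = (\var{comp}_0,c_{\mathrm{main},c},c_{\mathrm{main},d})$, the well-formedness judgement $\ta \vdash \var{comp}$ forces both main capabilities to appear in the export list, so the second clause of $\lrcomp$ reduces the goal to $\npair{(\var{comp}_0,\var{comp}_0)} \in \lrcomp(W)$. Henceforth assume $\var{comp} = \var{comp}_0$. For the second phase, extract from $\ta \vdash \var{comp}$ the partition $\dom(\msdata) = A_{\mathrm{non-linear}} \uplus A_\linear$ and the ownership function $A_{\mathrm{own}} : \dom(\msdata) \rightarrow \powerset{\dom(\msdata)}$ with $A_\linear = \biguplus_a A_{\mathrm{own}}(a)$. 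Pick fresh region names (possible by the freshness slack in the conclusion) and define $\pwheap$ to contain: one pure region $r_{\mathrm{code}}$ set to $\codereg{\sigrets,\sigcloss,\mscode,\gc}$; for each $a \in A_{\mathrm{non-linear}}$ a pure standard region $\stdreg{\{a\},\gc}{\pur}$; and for each $a$ with $A_{\mathrm{own}}(a) \neq \emptyset$ a spatial-owned standard region $\stdreg{A_{\mathrm{own}}(a),\gc}{\spao}$. Set $\pwpriv = \pwfree = \emptyset$ and take $\trust = \trusted$ when $\dom(\mscode) \subseteq \ta$ and $\trust = \untrusted$ when $\dom(\mscode) \mathrel{\#} \ta$.

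For the third phase, fix $W' \future W$ and assume $\overline{\npair[n']{(w_{\mathrm{import}},w_{\mathrm{import}})}} \in \lrv(\purePart{W'})$ for all $n' < n$, and let $\msdata' = \msdata[\overline{a_{\mathrm{import}} \mapsto w_{\mathrm{import}}}]$. The heap obligation $\npair{(\sigrets \uplus \sigcloss,\mscode \uplus \msdata',\mscode \uplus \msdata')} \in \lrheap(\pwheap)(W')$ is discharged by choosing $R_\ms$ that assigns the padded code to $r_{\mathrm{code}}$, each non-linear data cell to its pure region, and each linear owned group to its spatial-owned region, and by splitting $W'$ along $\oplus$ to witness $R_W$. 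The code region invariant $H^{\mathrm{code}}$ follows from \Cref{lem:ftlr-comp-code} applied at each address using the per-address judgement $\sigrets, d_\sigma(a), \sigcloss, \ta \vdash_{\mathrm{comp-code}} \mscode,a$ obtained from well-formedness; the sealed-closure invariant $H^{\mathrm{code}}_\sigma$ follows from \Cref{lem:ftlr-comp-export} on the corresponding sealed code exports, reducing to $\lrv$ via \Cref{lem:untrusted-codereg-sealed-vals-safe} in the untrusted branch. The standard-region invariants reduce to $\npair{(\msdata'(a),\msdata'(a))} \in \lrv$ of a sub-world of $\purePart{W'}$, which is the imports assumption on imported slots and \Cref{lem:ftlr-comp-value} on declared slots. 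To discharge the $\readCond$/$\writeCond$ premises of \Cref{lem:ftlr-comp-value} I read them off from the regions placed in $\pwheap$ (pure standard regions on $A_{\mathrm{non-linear}}$, spatial-owned standard regions on $A_\linear$) and invoke \Cref{lem:conds-shrinkable} for sub-ranges. The exports obligation is then a direct application of \Cref{lem:ftlr-comp-export}, supplying the reasonability hypothesis on executable sealed exports from the component's reasonability in the trusted branch. Monotonicity and $n$-index adjustments use \Cref{lem:purePart-mono} and the downwards-closure of the relations.

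The main obstacle is the bookkeeping between linear data and the owned-group regions. An owned group $A_{\mathrm{own}}(a)$ is indexed by the slot $a$ storing a linear capability but names the addresses that capability governs; the world must be shaped so that (i) the slot's $\lrv$ witness for a linear capability consumes exactly the spatial-owned authority of the region over $A_{\mathrm{own}}(a)$, and (ii) the composition of all per-slot sub-worlds via $\oplus$ reconstructs $W'$ so that the heap partition $R_\ms$ is simultaneously the slot-indexed partition (for $\lrheap$) and the address-indexed partition (for the standard-region invariant). A secondary subtlety is that the permission-based premises of the per-piece FTLR lemmas are stated for the full aggregate sets $A_{\mathrm{non-linear}}$ and $A_{\mathrm{own}}$; these must be derived once from the global shape of $W$ and then reused across all the per-address verifications, relying crucially on the fact that the same world $W$ supports both directions of the $\stdreg{\cdot}{\pur}$ and $\stdreg{\cdot}{\spao}$ inclusions demanded by $\readCond$ and $\writeCond$.
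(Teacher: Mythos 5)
Your overall strategy matches the paper's: reduce the main-bearing case to $\var{comp}_0$, build $W$ with one code region plus standard regions over $\dom(\msdata)$, and discharge the two $\lrcomp$ obligations via Lemmas~\ref{lem:ftlr-comp-code}, \ref{lem:ftlr-comp-value} and~\ref{lem:ftlr-comp-export}. However, there is a genuine gap in your world construction for the linear data. You allocate \emph{one} spatial-owned region $\stdreg{A_{\mathrm{own}}(a),\gc}{\spao}$ per ownership group, but the premises $\npair{A_{\mathrm{own}}(a)} \in \readCond{\linear,W}$ and $\npair{A_{\mathrm{own}}(a)} \in \writeCond{\linear,W}$ of Lemma~\ref{lem:ftlr-comp-value} cannot be met with such regions when $|A_{\mathrm{own}}(a)| > 1$: both conditions require, in the linear case, that every region $r$ in the covering be assigned a \emph{singleton} $R(r)$ and satisfy $\pwheap(r).H \nsubeq \stdreg{R(r),\gc}{\pur}.H$ (resp.\ $\nsupeq$). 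Since the $H$ of $\stdreg{A_{\mathrm{own}}(a),\gc}{\spao}$ only contains memory-segment pairs with domain $A_{\mathrm{own}}(a)$, it cannot be $n$-included in (or $n$-include) the $H$ of a singleton standard region, so the conditions fail. The same granularity mismatch also blocks the address-stratification premise of $\writeCond{}$, which the paper obtains for free from Lemma~\ref{lem:stdreg-singleton-addr-strat}.

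The paper avoids this by allocating one singleton region $\stdreg{\{\aaddr'\},\gc}{\spatialo}$ for \emph{every} address $\aaddr' \in \dom(\msdata)$ (marked $\spatialo$ iff $\aaddr' \in A_\linear$, $\pure$ otherwise), and then threading ownership at the level of the memory-satisfaction witness: the sub-world $R_W'(r_{\aaddr})$ assigned to the slot $\aaddr$ is $\purePart{W}$ with the singleton regions of all $\aaddr' \in A_{\mathrm{own}}(\aaddr)$ flipped to $\spatialo$. This realizes exactly the invariant you describe in your ``main obstacle'' paragraph — the linear capability stored at $\aaddr$ consumes the spatial-owned authority over $A_{\mathrm{own}}(\aaddr)$ — but it requires the per-address region granularity, not the per-group one. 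The rest of your argument (use of Lemma~\ref{lem:oplus-future-distr} to push the $\oplus$-decomposition to $W'$, the import/declared-value case split, and the reasonability hypothesis for executable sealed exports) is in line with the paper's proof.
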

\begin{proof}
  If $\var{comp}$ is of the form $(\var{comp}_0,c_{\mathrm{main},c}, c_{\mathrm{main},d})$, then we have (by definition of component well-formedness) that $c_{\mathrm{main},c}, c_{\mathrm{main},d} \subseteq \var{comp}_0.\overline{c_{\mathrm{export}}}$, as required by $\lrcomp(W)$.
  Hence, we can restrict our attention to components of the form $\var{comp}_0$.

  Take
  \begin{equation*}
    \var{comp} = (\mscode\uplus\mspad,\msdata,\overline{\var{import}},\overline{\var{export}},\sigrets,\sigcloss,A_\linear)
  \end{equation*}
  Note that confusingly, $\var{comp}.\mscode = \mscode\uplus\mspad$, so the assumption about $\dom(\var{comp}.\mscode)$ should be interpreted properly.
  We then know from $\ta \vdash \var{comp}$ that 
  \begin{itemize}
  \item $\dom(\mscode) = [\baddr,\eaddr]$
  \item $[\baddr-1,\eaddr+1] \mathrel{\#} \dom(\msdata)$
  \item $\mspad = [\baddr-1\mapsto 0] \uplus [\eaddr+1 \mapsto 0]$
  \item $\sigrets,\sigcloss,\ta\vdash_{\mathrm{comp-code}} \mscode$
  \item $\exists A_\mathrm{own} : \dom(\msdata) \rightarrow \powerset{\dom(\msdata)}$
  \item $\dom(\msdata) = A_{\mathrm{non-linear}} \uplus A_\linear$
  \item $A_\linear = \biguplus_{a \in \dom(\msdata)} A_{\mathrm{own}}(a)$
  \item
    \begin{equation*}
      \forall a \in \dom(\msdata)\ldotp \dom(\mscode),A_{\mathrm{own}}(a),A_{\mathrm{non-linear}},\sigrets,\sigcloss \vdash_{\mathrm{comp-value}} \msdata(a)
    \end{equation*}
  \item
    \begin{equation*}
      \overline{\dom(\mscode), A_{\mathrm{non-linear}}, \sigrets,\sigcloss \vdash_{\mathrm{comp-export}} c_{\mathrm{export}}}
    \end{equation*}
  \item $(\dom(\mscode) \subseteq \ta) \vee (\dom(\mscode) \mathrel{\#} \ta \wedge \sigrets = \emptyset)$
  \item $\dom(\msdata) \mathrel{\#} \ta$
  \end{itemize}

  Take $\overline{\var{import}} = \overline{a_{\mathrm{import}} \mapsfrom s_{\mathrm{import}}}$.

  Now take $W$ such that
  \begin{itemize}
  \item $\dom(\pwfree) = \dom(\pwpriv) = \emptyset$
  \item $\pwheap(r_\code) =  \codereg[\mathrm{code},\square]{\sigrets,\sigcloss,\mscode,\gc}$
  \item there exist $r_{\mathrm{addr}} : \dom(\msdata) \mapsto \dom(\pwheap)$ such that
    \begin{itemize}
    \item For all $\aaddr\in\dom(\msdata)$, $\pwheap(r_{\mathrm{addr}}(\aaddr)) = \stdreg{\{\aaddr\},\gc}{\lin}$ with ($\lin = \pure$ if $\aaddr\not\in A_\linear$) and ($\lin = \spatialo$ if $\aaddr\in A_\linear$)
    \end{itemize}
  \end{itemize}
  with $r_{\mathrm{code}}$ and $r_a$ chosen according to the given restriction.
  
  It now remains to prove that $\npair{(\var{comp},\var{comp})} \in \lrcomp{W}$.
  Take $W' \future W$ and assume that
  \begin{equation*}
    \overline{\npair[n']{(w_{\mathrm{import}},w_{\mathrm{import}})}} \in \lrv(\purePart{W'}) \text{ for all $n' < n$}\\
  \end{equation*}
  and
  \begin{equation*}
    \msdata' = \msdata{}[\overline{a_{\mathrm{import}} \mapsto w_{\mathrm{import}}}] \\
  \end{equation*}
  We need to show that
  \begin{itemize}
  \item
    \begin{equation*}
      \npair{(\sigrets\uplus\sigcloss,\mspad\uplus\mscode\uplus \msdata', \mspad\uplus\mscode\uplus\msdata')} \in \lrheap(\pwheap)(W')
    \end{equation*}

   Take $R_\ms : \dom(\activeReg{\pwheap}) \fun \MemSeg \times \MemSeg$ such that
    \begin{itemize}
    \item $R_\ms(r_\addr) = (\msdata|_{\{a\}}, \msdata|_{\{a\}})$.
    \item $R_\ms(r_\var{code}) = (\mspad\uplus\mscode,\mspad\uplus\mscode)$
    \end{itemize}
    Take $R_W' : \dom(\activeReg{\pwheap}) \fun \World$ such that
    \begin{itemize}
    \item $R_W'(r_\addr) = \purePart{W}[\var{heap}.r_{\addr'}\mapsto \stdreg{\{\aaddr\},\gc}{\spatialo}]_{\addr' \in A_\var{own}(\addr)}$ 
    \item $R_W'(r_\var{code}) = \purePart{W}$
    \end{itemize}
    Since then $W = \oplus_{r \in \dom(\activeReg{\pwheap})} R_W'(r)$, we can use Lemma~\ref{lem:oplus-future-distr} to construct an $R_W$ with
    \begin{equation*}
      W' = \oplus_{r \in \dom(\activeReg{\pwheap})} R_W(r)
    \end{equation*}
    and $R_W(r) \future R_W'(r)$ for all $r$.
    Finally, take $R_\var{seal} : \dom(\activeReg{\pwheap}) \fun \powerset{\Seal}$ to satisfy
    \begin{itemize}
    \item $R_\var{seal}(r_\addr) = \emptyset$
    \item $R_\var{seal}(r_\var{code}) = \sigrets\uplus \sigcloss$.
    \end{itemize}

    We then need to prove that 
    \begin{itemize}
    \item $\npair[n']{(\msdata'|_{\{\addr\}},\msdata'|_{\{\addr\}})} \in \stdreg{\{\aaddr\},\gc}{\lin}.H~ \xi^{-1}(R_W(r_\addr))$ for all $n' < n$:
      Take $n' < n$.
      By definition, it suffices to prove that,
      $\npair[n']{(\msdata'(\addr),\msdata'(\addr))} \in \lrv(R_W(r_\addr))$.

      If $(a \mapsfrom \_) \in \overline{\var{import}}$, then we know that $\msdata'(a)$ is the corresponding $w_{\mathrm{import}}$ and the result is fine by the assumption that
      \begin{equation*}
        \overline{\npair[n']{(w_{\mathrm{import}},w_{\mathrm{import}})}} \in \lrv(\purePart{W'})
      \end{equation*}
      together with \ref{lem:monotonicity} and the fact that $R_W(\addr) \future \purePart{W'}$ (by Lemma~\ref{lem:auth-partial-betterthannothing}).

      Otherwise, $\msdata'(a) = \msdata(a)$ and we know from our assumptions that
      \begin{equation*}
        \dom(\mscode),A_{\mathrm{own}}(a),A_{\mathrm{non-linear}},\sigrets,\sigcloss \vdash_{\mathrm{comp-value}} \msdata(a)
      \end{equation*}
      
      By Lemma~\ref{lem:ftlr-comp-value}, it then suffices to prove that
      \begin{itemize}
      \item $\npair[n']{A_{\mathrm{non-linear}}} \in \readCond{\normal,R_W(\addr)}$: follows by Lemma~\ref{lem:monotonicity} using the fact that $R_W(a) \future R_W'(a)$ and by definition, using the choice of $W$, $R_W'(\addr)$ 
      \item $\npair[n']{A_{\mathrm{non-linear}}} \in \writeCond{\normal,R_W(\addr)}$: follows by Lemma~\ref{lem:monotonicity} using the fact that $R_W(a) \future R_W'(a)$ and by definition, using the choice of $W$, $R_W'(\addr)$ 
      \item $\npair[n']{A_{\mathrm{own}}(\addr)} \in \readCond{\linear,R_W(\addr)}$: follows by Lemma~\ref{lem:monotonicity} using the fact that $R_W(a) \future R_W'(a)$ and by definition, using the choice of $W$, $R_W'(\addr)$ 
      \item $\npair[n']{A_{\mathrm{own}}(\addr)} \in \writeCond{\linear,R_W(\addr)}$: follows by Lemma~\ref{lem:monotonicity} using the fact that $R_W(a) \future R_W'(a)$ and by definition, using the choice of $W$, $R_W'(\addr)$ 
      \item $\pwheap[R_W(\addr)](r_{\mathrm{code}}) = \codereg{\sigrets,\sigcloss,\mscode,\gc}$: follows from the choice of $W$ and $R_W'(\addr)$ and the fact that $R_W(a) \future R_W'(a)$
      \item $\dom(\mscode) \mathrel{\#} \ta$ or $\dom(\mscode) \subseteq \ta$: by assumption.
      \end{itemize}
    \item $\npair[n']{(\mspad\uplus\mscode,\mspad\uplus\mscode)} \in  \codereg[\mathrm{code},\square]{\sigrets,\sigcloss,\mscode,\gc}.H~ \xi^{-1}(\purePart{W'})$ for all $n' < n$:
      Take $n' < n$.
      We take $\trust = \trusted$ if $[\baddr-1,\eaddr+1] \subseteq \ta$ and $\trust =\untrusted$ otherwise.
      By definition, we need to prove that:
      
      \begin{itemize}
      \item $\dom(\code) = [\baddr,\eaddr]$: by assumption.
      \item $([\baddr-1,\eaddr+1]\subseteq \ta \wedge \trust = \trusted) \vee ([\baddr-1,\eaddr+1]\mathrel{\#} \ta \wedge \sigrets = \emptyset \wedge \trust =\untrusted)$: by assumption and choice of $\trust$.
      \item $\sigrets,\sigcloss,\ta\vdash_{\mathrm{comp-code}} \code$: by assumption.
      \item $\mspad = [\baddr-1 \mapsto 0] \uplus [\eaddr + 1 \mapsto 0]$:
        by assumption
      \item $\forall a \in \dom(\code) \ldotp \npair[n']{(\code(a),\code(a))} \in \lrvg{\trust}(\purePart{}(\xi(\xi^{-1}(\purePart{W'}))))$:
        Note first that $\purePart{}(\xi(\xi^{-1}(\purePart{W'}))) =\purePart{W'}$.

        By Lemma~\ref{lem:ftlr-comp-code}, it suffices to prove that:
        \begin{itemize}
        \item $\pwheap[W'](r_{\mathrm{code}}) = \codereg{\sigrets,\sigcloss,\mscode,\gc}$: follows by choice of $W$ and the fact that $W' \future W$.
        \item ($\dom(\mscode) \mathrel{\#} \ta$ and $\trust = \untrusted$) or ($\dom(\mscode) \subseteq \ta$ and $\trust = \trusted$): by assumption and choice of $\trust$
        \end{itemize}
      \end{itemize}
    \item
      \begin{equation*}
        \dom(\codereg[\mathrm{code},\square]{\sigrets,\sigcloss,\mscode,\gc}.H_\sigma) = \sigrets\uplus \sigcloss
      \end{equation*}
      This follows easily from the definition.
  \end{itemize}
  \item $\overline{\npair{(w_{\mathrm{export}},w_{\mathrm{export}})}} \in \lrv(\purePart{W'})$:

    We know from our assumptions that
    \begin{equation*}
      \overline{\dom(\mscode), A_{\mathrm{non-linear}}, \sigrets,\sigcloss \vdash_{\mathrm{comp-export}} w_{\mathrm{export}}}
    \end{equation*}

    By Lemma~\ref{lem:ftlr-comp-export}, it then suffices to prove that
    \begin{itemize}
    \item $\npair{A_{\mathrm{non-linear}}} \in \readCond{}(\normal,\purePart{W'})$: follows by Lemma~\ref{lem:monotonicity} using the fact that $R_W(a) \future R_W'(a)$ and by definition, using the choice of $W$, $R_W'(\addr)$ 
    \item $\npair{A_{\mathrm{non-linear}}} \in \writeCond{}(\normal,\purePart{W'})$: follows by Lemma~\ref{lem:monotonicity} using the fact that $R_W(a) \future R_W'(a)$ and by definition, using the choice of $W$, $R_W'(\addr)$ 
    \item $\pwheap[\purePart{W'}](r_{\mathrm{code}}) = \codereg{\sigrets,\sigcloss,\mscode,\gc}$: follows from the choice of $W$ and the fact that $W' \future W$
    \item $\dom(\mscode) \mathrel{\#} \ta$ or $\dom(\mscode) \subseteq \ta$: by assumption.
    \item If $w_{\mathrm{export}} = \sealed{\sigma,\vsc}$ with $\dom(\mscode) \subseteq \ta$, $\sigma \in \sigcloss$ and $\exec{\vsc}$, then $\vsc$ behaves reasonably up to $n$ steps:

      This follows directly from the fact that $\var{comp}$ is a reasonable component.

    \item $A_{\mathrm{non-linear}}\mathrel{\#} \ta$:
      This follows from the facts that $A_{\mathrm{non-linear}}\subseteq \dom(\msdata) \mathrel{\#} \ta$.
    \end{itemize}
  \end{itemize}
\end{proof}

Note: the trusted case of the above lemma can be considered as a compiler correctness result.
The untrusted case can be considered as a back-translation correctness result.

\subsection{Related execution configurations}
\label{sec:related-exec-confs}

\[
  \lrec(W) = \left\{ \npair{\left(
        \arraycolsep=0pt\array{l}
        (\ms_S,\reg_S,\stk,\ms_\stk),\\
        (\ms_T,\reg_T)\endarray\right)} \middle|
    \begin{array}{l}
      \gc = (\ta,\stkb) \tand\\
      \exists W_M,W_R,W_\pcreg \ldotp W = W_M \oplus W_R \oplus W_\pcreg \tand\\
      \npair{( (\reg_S(\pcreg),\reg_S(\rdata)), (\reg_T(\pcreg),\reg_T(\rdata)) )} \in \lrexj(W_\pcreg) \tand \\
      \reg_S(\pcreg) \neq \retptrc(\_) \wedge 
      \reg_S(\rdata) \neq \retptrd(\_) \wedge \\
      \nonExec{\reg_S(\rdata)} \wedge
      \nonExec{\reg_T(\rdata)}\\
      \memSat{\ms_S,\ms_\stk,\stk,\ms_T}{W_M} \tand\\
      \npair{\stpair{\reg}{\reg}} \in \lrr(\{\rdata\})(W_R)
    \end{array}
            \right\}
\]

\lau{TODO 18-07-2018: gc should probably be specified here, so they are not just picked at random.}
\lau{TODO 18-07-2018: Where is this lemma even used?}
\begin{lemma}[Compatibility lemma for initial execution configuration construction]
  \label{lem:compat-initial-ec}
  If
  \begin{itemize}
  \item $\npair{(\var{comp},\var{comp})} \in \lrcomp(W)$
  \item $\dom(\pwfree) = \dom(\pwpriv) = \emptyset$
  \item $\var{comp} \rightsquigarrow \Phi$,
  \end{itemize}
  then $\exists W' \future W$ such that for all $n' < n$
  \begin{itemize}
  \item $\npair[n']{(\Phi,\Phi)}\in\lrec(W')$
  \end{itemize}
\end{lemma}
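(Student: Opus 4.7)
The plan is to extend $W$ to $W'$ by adjoining a fresh free-stack region for the initial stack memory, then to exhibit the decomposition $W' = W_M \oplus W_R \oplus W_\pcreg$ required by $\lrec$. First I would unfold $\var{comp} \rightsquigarrow \Phi$ to expose that $c_{\mathrm{main},c} = \sealed{\sigma, c_{\mathrm{main},c}'}$ and $c_{\mathrm{main},d} = \sealed{\sigma, c_{\mathrm{main},d}'}$ both belong to the component's export list (with common seal $\sigma$), that $\nonExec{c_{\mathrm{main},d}'}$, that $\reg(\pcreg) = c_{\mathrm{main},c}'$, $\reg(\rdata) = c_{\mathrm{main},d}'$, $\reg(\rstk)$ is a fresh linear stack pointer covering $[\baddr_\stk,\eaddr_\stk]$, and all remaining registers are zero. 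Because the import list of a program is empty, I can instantiate the $\lrcomp(W)$ assumption at $W' \future W$ without any substitution, obtaining $\npair{(w_{\mathrm{export}}, w_{\mathrm{export}})} \in \lrv(\purePart{W'})$ for each export and $\npair{(\sigrets\uplus\sigcloss, \mscode\uplus\msdata, \mscode\uplus\msdata)} \in \lrheap(\pwheap[W'])(W')$.

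Next I would build $W'$ by adding (using Lemma~\ref{lem:uplus-future}) a single fresh $\spatialo$ region $r_\stk$ to $\pwfree$ whose invariant matches $\ms_\stk$. The three-way decomposition is then assembled by assigning to $W_R$ the $\spatialo$ stack region together with pure copies sufficient to justify the zero registers, to $W_\pcreg$ the pure ownership of the seal region witnessing safety of $c_{\mathrm{main},c}, c_{\mathrm{main},d}$ plus the code region, and to $W_M$ the heap ownership of code/data memory together with a $\spatial$ shadow of the stack region. Duplication of pure islands is handled by Lemmas~\ref{lem:purePart-duplicable} and~\ref{lem:purePart-oplus}. Three of the four conjuncts of $\lrec$ are then essentially routine: $\memSat{\ms_S, \ms_\stk, \emptyset, \ms_T}{W_M}$ is obtained by combining the $\lrheap$ relation from $\lrcomp$ with the fresh free-stack island (the private stack component is vacuously satisfied since $\stk = \emptyset$); the register relation $\npair{(\reg, \reg)} \in \lrr(\{\rdata\})(W_R)$ follows because $\rstk$ holds a linear stack pointer that is safe by virtue of the $\spatialo$ region in $W_R$ (through $\stackReadCond{}$/$\stackWriteCond{}$) and every other non-$\pcreg$, non-$\rdata$ register contains $0$; and the non-return-pointer and $\nonExec$ side-conditions read off the construction of $\Phi$ directly.

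The main obstacle is establishing $\npair{((c_{\mathrm{main},c}', c_{\mathrm{main},d}'), (c_{\mathrm{main},c}', c_{\mathrm{main},d}'))} \in \lrexj(W_\pcreg)$ on the unsealed main capabilities. The idea is to unfold the sealed-capability clause of $\lrv$ applied to $c_{\mathrm{main},c} \in \lrv(\purePart{W'})$, which yields a pure seal region with relational invariant $H_\sigma$ such that $H_\sigma \; \sigma \nequal H^{\mathrm{code},\square}_\sigma \; \sigrets \; \sigcloss \; \mscode \; \gc \; \sigma$, together with a cross-jump closure: for every $(v_S', v_T') \in H_\sigma \; \sigma \; \xi^{-1}(W_o)$, the pair $((c_{\mathrm{main},c}', v_S'), (c_{\mathrm{main},c}', v_T'))$ lies in $\lrexj(W'' \oplus W_o)$ whenever $W''$ is an appropriate future of (the pure part of) $W_\pcreg$. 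I would then instantiate $v_S' = v_T' = c_{\mathrm{main},d}'$ using the analogous unfolding of $c_{\mathrm{main},d} \in \lrv(\purePart{W'})$ to witness $\npair[n']{(c_{\mathrm{main},d}', c_{\mathrm{main},d}')} \in H_\sigma \; \sigma \; \xi^{-1}(W_o)$ for $W_o$ a suitable copy of $\purePart{W_\pcreg}$. The delicate point is reconciling the two $H_\sigma$'s extracted from the two exports: both govern the same seal $\sigma$, and because $H^{\mathrm{code},\square}_\sigma$ is pinned down up to $\nequal$ by $\sigma$ together with the component-level closure/return seal sets and code memory, the two sealing regions agree at $\sigma$. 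Defining-ness of the various $\oplus$ operations is then handled via Lemmas~\ref{lem:purePart-duplicable}, \ref{lem:purePart-oplus}, and \ref{lem:oplus-assoc-comm}, and the required future-world steps come from Lemmas~\ref{lem:oplus-future} and \ref{lem:world-fut-purePart}.
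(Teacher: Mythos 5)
Your overall architecture matches the paper's: unfold the initialization judgement, exploit the empty import list to instantiate $\lrcomp(W)$ at any $W' \future W$, extend $W$ with fresh free-stack region(s), and decompose $W'$ as $W_M \oplus W_R \oplus W_\pcreg$ with the pure parts duplicated via Lemmas~\ref{lem:purePart-duplicable} and~\ref{lem:purePart-oplus}. However, there is one concrete step that fails as written: you add \emph{a single} fresh $\spatialo$ region $r_\stk$ covering all of $[\baddr_\stk,\eaddr_\stk]$. The register condition for $\rstk$ requires $\npair{[\baddr_\stk,\eaddr_\stk]} \in \stackReadCond{W_R}$ and $\stackWriteCond{W_R}$, and both of these demand a family of regions with $|R(r)| = 1$ for every $r$, i.e.\ each free-stack region must govern exactly one address. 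Since the initial stack may span many addresses (only its \emph{contents} are constrained to be $0$), a single region cannot discharge these conditions. The paper instead allocates one singleton region $\pwfree[W_\stk](r_a) = \stdreg{\{a\},\gc}{\spatialo}$ per address $a \in [\baddr_\stk,\eaddr_\stk]$; with that change the rest of your memory-satisfaction and register-file argument goes through.

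On the $\lrexj$ obligation your route genuinely differs from the paper's. The paper invokes Lemma~\ref{lem:safe-vals-safe-exec} on the two exported (sealed) main capabilities, whereas you unfold the sealed-capability clause of $\lrv$ for $c_{\mathrm{main},c}$ and instantiate its cross-jump closure with the witness extracted from $c_{\mathrm{main},d}$ --- which is essentially an inline re-proof of the paper's Lemma~\ref{lem:sealed-lrv-lrexj}, and has the advantage of directly delivering the \emph{unsealed} pair $(c_{\mathrm{main},c}',c_{\mathrm{main},d}')$ in $\lrexj(W_\pcreg)$, which is what $\lrec$ literally asks for. The one weak point is your justification for why the two $H_\sigma$'s extracted from the two exports coincide: it is not true in an arbitrary world that $H_\sigma\;\sigma$ is pinned down by $\sigma$ alone, since two distinct pure regions could each claim $\sigma$ with different $\sigrets,\sigcloss,\mscode$. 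What rules this out is memory satisfaction, whose $R_{\var{seal}}$ component partitions seal ownership disjointly across regions (this is exactly Lemma~\ref{lem:unique-h-sigma}). Since you establish $\memSat{\ms_S,\ms_\stk,\emptyset,\ms_T}{W_M}$ in the same proof, you should appeal to that rather than to uniqueness of $H^{\mathrm{code},\square}_\sigma$.
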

\begin{proof}
  Take $\var{comp} = ((\mscode, \msdata, \overline{\var{import}}, \overline{\var{export}}, \sigrets, \sigcloss,A_\linear),  c_{\mathrm{main},c}, c_{\mathrm{main},d})$.
  Then we know from $\var{comp} \rightsquigarrow \Phi$ that
  \begin{itemize}
  \item $c_{\mathrm{main},c} = \sealed{\sigma_1, c_{\mathrm{main},c}'}$ 
  \item $c_{\mathrm{main},d} = \sealed{\sigma_2, c_{\mathrm{main},d}'}$
  \item $\sigma_1 = \sigma_2$
  \item $\reg(\pcreg) = c_{\mathrm{main},c}'$
  \item $\reg(\rdata) = c_{\mathrm{main},d}'$ 
  \item $\nonExec{c_{\mathrm{main},d}'}$
  \item $\src{\reg(\rstk) = \stkptr{\rw,\baddr_\stk,\eaddr_\stk,\eaddr_\stk}}$ 
  \item $\trg{\reg(\rstk) = ((\rw,\linear),\baddr_\stk,\eaddr_\stk,\eaddr_\stk)}$ 
  \item $\reg(\RegName \setminus \{\pcreg,\rdata,\rstk\}) = 0$
  \item $\range{\ms_\stk} = \{0\}$
  \item $\mem = \mscode \uplus \msdata \trg{\;\uplus\; \ms_\stk}$
  \item $[\baddr_\stk,\eaddr_\stk] = \dom(\ms_\stk )$
  \item $[\baddr_\stk-1,\eaddr_\stk+1] \mathrel{\#} (\dom(\mscode) \cup \dom(\msdata))$
  \item $\overline{\var{import}} = \emptyset$
  \item $\Phi = (\mem, \reg\src{, \emptyset, \ms_\stk})$
  \end{itemize}

  From $\npair{(\var{comp},\var{comp})} \in \lrcomp(W)$, we know that
  \begin{itemize}
  \item $\{(\_ \mapsto c_{\mathrm{main},c}), (\_ \mapsto c_{\mathrm{main},d})\}\subseteq \overline{\var{export}}$ and
  \item $\npair{(\var{comp}_0,\var{comp}_0)} \in \lrcomp(W)$ with
  \item  $\var{comp}_0 = (\mscode, \msdata, \overline{a_{\mathrm{import}}\mapsfrom s_{\mathrm{import}}}, \overline{s_{\mathrm{export}}\mapsto w_{\mathrm{export}}}, \sigrets, \sigcloss,A_\linear)$
  \end{itemize}
  From this, it follows that for all  $W' \future W$, if
  \begin{equation*}
    \overline{\npair[n']{(w_{\mathrm{import}},w_{\mathrm{import}})}} \in \lrv(\purePart{W'}) \text{ for all $n' < n$}
  \end{equation*}
  and $\msdata' = \msdata{}[\overline{a_{\mathrm{import}} \mapsto w_{\mathrm{import}}}]$
  then
  \begin{equation*}
    \npair{(\sigrets\uplus\sigcloss,\mscode\uplus \msdata', \mscode\uplus\msdata')} \in \lrheap(\pwheap)(W')
  \end{equation*}
  and
  \begin{equation*}
    \overline{\npair{(w_{\mathrm{export}},w_{\mathrm{export}})}} \in \lrv(\purePart{W'})
  \end{equation*}
  Since $\overline{w_{\mathrm{import}}} = \emptyset$, the former holds vacuously, $\msdata' = \msdata$ and the latter two results follow for every $W' \future W$.
  
  Now take $\overline{r_a}$ for $a \in [\baddr_\stk,\eaddr_\stk]$ arbitrary and 
  \begin{align*}
    \pwheap[W_\stk] &= \emptyset\\
    \pwpriv[W_\stk] &= \emptyset\\
    \dom(\pwfree[W_\stk]) &= \{r_a \mid a \in [\baddr_\stk,\eaddr_\stk]\}\\
    \pwfree[W_\stk](r_a) &= \stdreg{\{a\},\gc}{\spatialo}\\
    W' &= W \uplus W_\stk\\
    W_\pcreg &= \purePart{W'}\\
    W_R &= \purePart{W} \uplus W_\stk\\
    W_M &= W \uplus \purePart{W_\stk}
  \end{align*}
  Then $W' = W_M \oplus W_R \oplus W_\pcreg$ (by Lemmas~\ref{lem:purePart-duplicable},~\ref{lem:oplus-distr-uplus},~\ref{lem:oplus-assoc-comm} and~\ref{lem:uplus-assoc-comm}), $W_M \future W$ (by Lemma~\ref{lem:uplus-future}) and $W_\pcreg \future \purePart{W}$ (by Lemma~\ref{lem:uplus-future} and~\ref{lem:purePart-oplus}).

  By definition of $\lrec$, it suffices to prove that 
  \begin{itemize}
  \item 
    $\npair[n']{( (\reg(\pcreg),\reg(\rdata)), (\reg(\pcreg),\reg(\rdata)) )} \in \lrexj(W_\pcreg)$:
    Since $W' \future W$, we know from above that:
    \begin{equation*}
      \overline{\npair{(w_{\mathrm{export}},w_{\mathrm{export}})}} \in \lrv(\purePart{W'})
    \end{equation*}
    and we have defined $W_\pc = \purePart{W'}$ and we know that $\{c_{\mathrm{main},c}, c_{\mathrm{main},d}\}\subseteq
    \overline{c_{\mathrm{export}}}$.
    It follows that 
    \begin{align*}
      \npair{(c_{\mathrm{main},c},c_{\mathrm{main},c})} &\in \lrv(W_\pc)\\
      \npair{(c_{\mathrm{main},d},c_{\mathrm{main},d})} &\in \lrv(W_\pc)
    \end{align*}

    Since $W_\pc = \purePart{W'}$, Lemma~\ref{lem:safe-vals-safe-exec} tells us that for $n' < n$,
    \begin{equation*}
      \npair[n']{( (c_{\mathrm{main},c},c_{\mathrm{main},d}), (c_{\mathrm{main},c},c_{\mathrm{main},d}) )} \in \lrexj(W_\pcreg)
    \end{equation*}
    Since $\reg(\pcreg) = c_{\mathrm{main},c}'$ and $\reg(\rdata) = c_{\mathrm{main},d}'$, this is what we set out to prove.

  \item
    \begin{itemize}
    \item $\reg_S(\pcreg) \neq \retptrc(\_)$, 
    \item $\reg_S(\rdata) \neq \retptrd(\_)$, 
    \item $\nonExec{\reg_S(\rdata)}$ and 
    \item $\nonExec{\reg_T(\rdata)}$
    \end{itemize}
    This follows from the facts that $\reg(\pcreg) = c_{\mathrm{main},c}'$, $\reg(\rdata) = c_{\mathrm{main},d}'$, 
    $\nonExec{c_{\mathrm{main},d}'}$, $\{c_{\mathrm{main},c}, c_{\mathrm{main},d}\}\subseteq \overline{c_{\mathrm{export}}}$ and the fact that $\overline{c_{\mathrm{export}}}$ are also valid 

  \item $\memSat[n']{\ms_S,\ms_\stk,\stk,\ms_T}{W_M}$:
    Since $W_M \future W$, we have seen above that 
    \begin{equation*}
      \npair{(\sigrets\uplus\sigcloss,\mscode\uplus \msdata, \mscode\uplus\msdata)} \in \lrheap(\pwheap[W_M])(W_M)
    \end{equation*}
    and by Lemma~\ref{lem:downwards-closed} also
    \begin{equation*}
      \npair[n']{(\sigrets\uplus\sigcloss,\mscode\uplus \msdata, \mscode\uplus\msdata)} \in \lrheap(\pwheap[W_M])(W_M)
    \end{equation*}
    It suffices to prove that also
    \begin{itemize}
    \item $\memSatStack[n']{\emptyset,\emptyset}{\purePart{W_M}}$
    \item $\memSatFStack[n']{\ms_\stk,\ms_\stk}{\purePart{W_M}}$
    \end{itemize}
    The former follows vacuously.
    The latter follows by taking $R_\ms(r_a) = (\ms_\stk|_{\{a\}},\ms_\stk|_{\{a\}})$, $R_W(r_a) = \purePart{W_M}$ for $a \in [\baddr_\stk,\eaddr_\stk]$ by Lemma~\ref{lem:purePart-duplicable} and~\ref{lem:purePart-idempotent} if we can show that:
    \begin{equation*}
      \npair[n'']{(\ms_\stk|_{\{a\}},\ms_\stk|_{\{a\}})} \in \stdreg{\{a\},\gc}{\spatialo}.H~ \xi^{-1}(\purePart{W_M})
    \end{equation*}
    for all $n'' < n'$.
    By definition, it suffices to show that for all $n'' < n'$, we have that:
    \begin{equation*}
      \npair[n'']{(\ms_\stk(a),\ms_\stk(a)) \in \lrv(\purePart{W_M})}
    \end{equation*}
    But $\ms_\stk(a) = 0$, so this follows easily by definition.

  \item $\npair[n']{(\reg,\reg)} \in \lrr(\{\rdata\})(W_R)$:
    
    We have that $\reg(\RegName \setminus \{\pcreg,\rdata,\rstk\}) = 0$, so by definition, it suffices to prove that
    \begin{equation*}
      \npair[n']{(\reg(r_\stk),\reg(r_\stk))} \in \lrv(W_R)
    \end{equation*}

    Since $\src{\reg(\rstk) = \stkptr{\rw,\baddr_\stk,\eaddr_\stk,\eaddr_\stk}}$ and $\trg{\reg(\rstk) = ((\rw,\linear),\baddr_\stk,\eaddr_\stk,\eaddr_\stk)}$, it suffices to prove (by definition) that
    \begin{align*}
      \npair[n']{[b_\stk,e_\stk]} &\in \stackReadCond{W_R}\\
      \npair[n']{[b_\stk,e_\stk]} &\in \stackWriteCond{W_R}
    \end{align*}
    For both, we can take $S = \dom(\pwfree[W_\stk]) = \{r_a \mid a \in [\baddr_\stk,\eaddr_\stk]\}$, $R(r_\addr) = \{\addr\}$, and then it suffices to prove that for all $r_\addr$, $\pwfree[W_R](r_\addr).H \nsubeq \stdreg{\{\addr\},\gc}{\spatialo}$ resp. $\pwfree[W_R](r_\addr).H \nsupeq \stdreg{\{\addr\},\gc}{\spatialo}$.
    Both follow easily since 
    $\pwfree[W_R](r_a) = \pwfree[W_\stk](r_a) = \stdreg{\{a\},\gc}{\spatialo}$.
  \end{itemize}
\end{proof}

\begin{lemma}[Adequacy of execution configuration LR]
  \label{lem:adequacy}
  If
  \begin{itemize}
  \item $\npair{\stpair{\Phi}{\Phi}}\in\lrec[\preceq,\gc](W)$
  \item $i \leq n$
  \item $\Phi_S \sterm[i]{\gc}$
  \end{itemize}
  then $\Phi_T\term$.

  Also, if
  \begin{itemize}
  \item $\npair{\stpair{\Phi}{\Phi}}\in\lrec[\succeq,\gc](W)$
  \item $i \leq n$
  \item $\Phi_T \term[i]$,
  \end{itemize}
  then $\Phi_S\sterm{\gc}$.
\end{lemma}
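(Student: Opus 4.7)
The plan is to unfold $\lrec[\preceq,\gc](W)$ to extract worlds $W_M, W_R, W_\pcreg$ with $W = W_M \oplus W_R \oplus W_\pcreg$, together with the facts that $\npair{((\reg_S(\pcreg),\reg_S(\rdata)), (\reg_T(\pcreg),\reg_T(\rdata)))} \in \lrexj(W_\pcreg)$, the memory satisfaction $\memSat{\ms_S,\ms_\stk,\stk,\ms_T}{W_M}$, the register safety $\npair{\stpair{\reg}{\reg}} \in \lrr(\{\rdata\})(W_R)$, and the four side conditions: $\reg_S(\pcreg) \neq \retptrc(\_)$, $\reg_S(\rdata) \neq \retptrd(\_)$, $\nonExec{\reg_S(\rdata)}$, $\nonExec{\reg_T(\rdata)}$.

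Next I would apply the $\lrexj(W_\pcreg)$ membership by instantiating $W_\lrrs \defeq W_R$, $W_\lrm \defeq W_M$, $n' \defeq n$, and the source/target register files, memories, stacks and free stacks taken from $\Phi_S, \Phi_T$. The two premises ($\lrr$ on the registers and $\memSat$ on the memory) come directly from the $\lrec$ unpacking; the side condition that $W_\pcreg \oplus W_R \oplus W_M$ is defined follows from $W = W_M \oplus W_R \oplus W_\pcreg$ via Lemma~\ref{lem:oplus-assoc-comm}.

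The crucial observation is that the four side conditions extracted from $\lrec$ exactly select the first (identity-update) branch of $\xjumpResult{c_1}{c_2}{\Phi}$, both in the source and in the target (in the target only the $\nonExec{c_2}$ guard matters, the retptr guards being source-specific). Hence
\[
  \Phi_S' = \Phi_S \updReg{\pcreg}{\reg_S(\pcreg)} \updReg{\rdata}{\reg_S(\rdata)} = \Phi_S,
\]
and similarly $\Phi_T' = \Phi_T$. The conclusion of $\lrexj$ therefore specialises to $\npair[n]{(\Phi_S, \Phi_T)} \in \lro[\preceq,\gc]$.

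Finally, unfolding $\lro[\preceq,\gc]$ and instantiating its universally-quantified $i \leq n$ with the $i$ provided by the lemma statement, together with the hypothesis $\Phi_S \sterm[i]{\gc}$, yields $\Phi_T \term$, as required. The second clause is proved symmetrically, extracting the $\lrexj$ witness from $\lrec[\succeq,\gc](W)$ and then unfolding $\lro[\succeq,\gc]$. The whole proof is essentially definition-chasing; the one place where care is needed is verifying that the non-return-pointer and non-executable conditions packaged into $\lrec$ really do make $\xjumpResult$ act as the identity update, which is the main (and only) substantive step.
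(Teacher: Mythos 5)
Your proposal is correct and follows essentially the same route as the paper's own proof: unfold $\lrec$, instantiate the $\lrexj$ membership with the register-safety and memory-satisfaction facts packaged inside $\lrec$, use the non-return-pointer and non-executable side conditions to show $\xjumpResult$ acts as the identity so that $\npair{(\Phi_S,\Phi_T)}\in\lrol$ (resp.\ $\lror$), and conclude by unfolding the observation relation. No gaps.
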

\begin{proof}
  First, assume that 
  \begin{itemize}
  \item $\npair{\stpair{\Phi}{\Phi}}\in\lrec[\preceq](W)$
  \item $i \leq n$
  \item $\Phi_S \sterm[i]{\gc}$
  \end{itemize}

  Assume w.l.o.g. that $\Phi_S = (\ms_S,\reg_S,\stk,\ms_\stk)$, $\Phi_T = (\ms_T,\reg_T)$ and $\gc = (\ta,\stkb)$.
  Then it follows from $\npair{\stpair{\Phi}{\Phi}}\in\lrec[\preceq](W)$ that there exist $W_M,W_R,W_\pcreg$ such that
  \begin{itemize}
  \item $W = W_M \oplus W_R \oplus W_\pcreg$
  \item $\npair{( (\reg_S(\pcreg),\reg_S(\rdata)), (\reg_T(\pcreg),\reg_T(\rdata)) )} \in \lrexj[\preceq,\gc](W_\pcreg)$
  \item
    \begin{itemize}
    \item $\reg_S(\pcreg) \neq \retptrc(\_)$,
    \item $\reg_S(\rdata) \neq \retptrd(\_)$, 
    \item $\nonExec{\reg_S(\rdata)}$ and 
    \item $\nonExec{\reg_T(\rdata)}$
    \end{itemize}
  \item $\memSat{\ms_S,\ms_\stk,\stk,\ms_T}{W_M}$
  \item $\npair{\stpair{\reg}{\reg}} \in \lrr[\preceq,\gc](\{\rdata\})(W_R)$
  \end{itemize}

  We can then instantiate the conditions from $\lrexj$ with the other conditions from this list to obtain $\Phi_S',\Phi_T'$ such that
  \begin{itemize}
  \item $\Phi_S' = \xjumpResult{\reg_S(\pcreg)}{\reg_S(\rdata)}{\Phi_S}$ 
  \item $\Phi_T' = \xjumpResult{\reg_T(\pcreg)}{\reg_T(\rdata)}{\Phi_T}$
  \item $\npair{\left(\Phi_S', \Phi_T' \right)}\in \lrol$
  \end{itemize}

  Using the facts that $\reg_S(\pcreg) \neq \retptrc(\_)$, $\reg_S(\rdata) \neq \retptrd(\_)$ and $\nonExec{\reg_S(\rdata)}$, we know (by definition of $\xjumpResult{\_}{\_}{\_}$) that  
  \begin{itemize}
  \item $\Phi_S' = \xjumpResult{\reg_S(\pcreg)}{\reg_S(\rdata)}{\Phi_S} = \Phi_S\updReg{\pcreg}{\reg_S(\pcreg)}\updReg{\rdata}{\reg_S(\rdata)} = \Phi_S$ 
  \item $\Phi_T' = \xjumpResult{\reg_T(\pcreg)}{\reg_T(\rdata)}{\Phi_T} = \Phi_T\updReg{\pcreg}{\reg_T(\pcreg)}\updReg{\rdata}{\reg_T(\rdata)} = \Phi_T$ 
  \end{itemize}

  From $\npair{\left(\Phi_S, \Phi_T \right)}\in \lrol$, it follows immediately that if $\Phi_S \sterm[i]{\ta,\stkb}$ with $i \le n$, then $\Phi_T \term$ as required.

  The proof in the other direction is directly analogous.
\end{proof}

\begin{lemma}[Compatibility lemma for context plugging]
  \label{lem:compat-context-plug}
  If $\npair{\stpair{C}{C}}\in\lrcomp(W_1)$ and $\npair{\stpair{P}{P}} \in \lrcomp(W_2)$, then
  $\npair[n']{(\plug{C_S}{P_S},\plug{C_T}{P_T})} \in \lrec(W_1 \uplus W_2)$, for all $n' < n$.
\end{lemma}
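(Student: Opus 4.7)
The plan is to combine the two compatibility lemmas already proved for components: Lemma~\ref{lem:compat-linking} for linking, and Lemma~\ref{lem:compat-initial-ec} for producing the initial execution configuration from a related component. Since $\plug{C}{P}$ is defined as the execution configuration $\Phi$ obtained from the linked component via $(C \bowtie P) \rightsquigarrow \Phi$, the composition of these two steps is exactly what we need.

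Concretely, I would first apply Lemma~\ref{lem:compat-linking} to the assumptions $\npair{(C,C)} \in \lrcomp(W_1)$ and $\npair{(P,P)} \in \lrcomp(W_2)$ to obtain $\npair{(C \bowtie P, C \bowtie P)} \in \lrcomp(W_1 \uplus W_2)$. This step requires knowing that the linking is defined on both sides, which follows from the fact that plugging is assumed to produce a valid execution configuration (so the import/export and seal disjointness side-conditions of $\bowtie$ are satisfied). Note that, by the definition of $\lrcomp$, the ``main'' capabilities on both sides coincide, and linking a context and its program preserves them in the export list, so the result lies in the second clause of $\lrcomp$.

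Next, I would unfold the definition of $\plug{\cdot}{\cdot}$, which gives $(C \bowtie P) \rightsquigarrow \plug{C}{P}$ on both the source and target side. By Lemma~\ref{lem:compat-initial-ec}, applied to the linked component in $\lrcomp(W_1 \uplus W_2)$, there exists a future world $W' \future (W_1 \uplus W_2)$ such that, for all $n' < n$, $\npair[n']{(\plug{C_S}{P_S},\plug{C_T}{P_T})} \in \lrec(W')$. Invoking Lemma~\ref{lem:compat-initial-ec} also requires $\dom(\pwfree[W_1 \uplus W_2]) = \dom(\pwpriv[W_1 \uplus W_2]) = \emptyset$, which is justified because the world produced by Lemma~\ref{lem:ftlr-comps} for each component has empty free and private parts, and $\uplus$ on these parts is a disjoint union of empty maps.

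The main obstacle is the mismatch between the statement's conclusion $\lrec(W_1 \uplus W_2)$ and what Lemma~\ref{lem:compat-initial-ec} gives us, namely $\lrec(W')$ for some $W' \future (W_1 \uplus W_2)$; one or the other needs to be adjusted. The cleanest fix is to read the lemma existentially (``there exists $W' \future W_1 \uplus W_2$ such that \dots'') and carry this witness through; alternatively, one can prove a weakening lemma showing that $\lrec$ is downward-closed in the future-world order in the sense that membership at $W'$ implies membership at any past $W \sqsubseteq W'$ after suitably adjusting the $W_M,W_R,W_\pcreg$ decomposition. Either way, the argument is essentially just a chaining of the two existing compatibility results, with no new reasoning about memory, register files, or the value relation needed.
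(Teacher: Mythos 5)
Your proof follows exactly the same route as the paper's, which simply chains Lemma~\ref{lem:compat-linking} with Lemma~\ref{lem:compat-initial-ec} after unfolding the definition of plugging. The world mismatch you flag is real --- the paper's one-line proof glosses over the fact that Lemma~\ref{lem:compat-initial-ec} only yields $\lrec(W')$ for some $W' \future W_1 \uplus W_2$ rather than $\lrec(W_1 \uplus W_2)$ itself --- so your suggested existential reading (or a weakening argument) is a fair and necessary repair of the statement.
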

\begin{proof}
  By definition, we have that $C_S \bowtie P_S \rightsquigarrow C_S[P_S]$ and $C_T \bowtie P_T \rightsquigarrow C_T[P_T]$.

  The result follows directly from Lemmas~\ref{lem:compat-linking} and~\ref{lem:compat-initial-ec}.
\end{proof}

\section{Full Abstraction}
\label{sec:full-abstraction}
\begin{definition}[Source language contextual equivalence]
  In the source language, we define that $\src{\var{comp}_1 \sconeq \var{comp}_2}$ iff
  \begin{equation*}
    \forall \src{\context} \ldotp \emptyset \vdash \src{\context} \Rightarrow \src{\plug{\context}{\var{comp}_1} \sterm{\ta[,1],\stkb_1}} \Leftrightarrow \src{\plug{\context}{\var{comp}_2} \sterm{\ta[,2],\stkb_2}}
  \end{equation*}
  with $\src{\ta[,i]} = \src{\dom(\var{comp}_i.\mscode)}$
  \lau{06-07-2018: TODO: Dominique, did you resolve this?}
  \dominique{and stkbase?}
\end{definition}

Note that we define source language contextual equivalence with respect to contexts that are not in $\ta$.
This means that they are unable to perform calls.
We believe this fits with the goal of this work: allow programmers (or better: authors of previous compiler passes) to reason about their target language programs under a special perspective, where all calls can be interpreted as calls that actually behave in a well-bracketed way by the operational semantics.
This special perspective is defined by overlaying a different operational semantics on the existing code: the source language semantics.
The fact that source contexts cannot make calls themselves is no problem: authors of previous compiler passes should only be able to take the perspective that their own calls are guaranteed to be well-bracketed.
It does not matter for them whether calls in the rest of the system are guaranteed to be well-bracketed.
Note also that if the trusted code hands out closures, the context can still invoke them with an xjmp, rather than a call.
That xjmp can even be the one in the implementation of call if the context uses that implementation. 
This works perfectly fine, except that the context does not get any well-bracketedness guarantees, but that doesn't matter.

\begin{definition}[Target language contextual equivalence]
  In the target language, we define that $\trg{\var{comp}_1 \tconeq \var{comp}_2}$ iff
  \begin{equation*}
    \forall \trg{\context} \ldotp \emptyset \vdash \trg{\context} \Rightarrow \trg{\plug{\context}{\var{comp}_1} \term} \Leftrightarrow \trg{\plug{\context}{\var{comp}_2} \term}
  \end{equation*}
\end{definition}

\begin{theorem}
  \label{thm:full-abstraction}
  For reasonable, well-formed components $\var{comp}_1$ and $\var{comp}_2$ (with respect to $\ta[,i] = \dom(\var{comp}_i.\mscode)$, respectively), we have
  \begin{gather*}
    \src{\var{comp}_1} \sconeq \src{\var{comp}_2}\\
    \Updownarrow\\
    \src{\var{comp}_1} \tconeq \src{\var{comp}_2}
  \end{gather*}
\end{theorem}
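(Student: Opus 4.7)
The plan is to derive both directions from three main ingredients already established in the paper: the Fundamental Theorem for components (Lemma~\ref{lem:ftlr-comps}), the compatibility lemma for context plugging (Lemma~\ref{lem:compat-context-plug}), and adequacy (Lemma~\ref{lem:adequacy}). Because the compiler is the identity, any context $\context$ that is valid without trust (i.e.\ $\emptyset \vdash \context$) can be used unchanged in both languages, and the logical relation $\lrec$ will ferry termination behaviour between the source and target execution models.

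For the forward direction, I assume source contextual equivalence and let $\context$ be any well-formed target context. For each $i \in \{1,2\}$, I fix global constants $\gc_i$ consisting of $\ta[,i] = \dom(\var{comp}_i.\mscode)$, the appropriate $\stkb$, and global seal sets $\gsigrets, \gsigcloss$ large enough to cover the seal allocations of both $\context$ and $\var{comp}_i$. Lemma~\ref{lem:context-trusted-addr-ampl} amplifies $\context$'s validity to $\ta[,i] \vdash \context$. I then apply Lemma~\ref{lem:ftlr-comps} in its untrusted case to $\context$ (using $\dom(\context.\mscode) \mathrel{\#} \ta[,i]$) and in its trusted case to $\var{comp}_i$ (using reasonability) to obtain worlds $W_{C,i}$ and $W_i$ with $\npair{\stpair[.]{\context}{\context}} \in \lrcomp(W_{C,i})$ and $\npair{\stpair[.]{\var{comp}_i}{\var{comp}_i}} \in \lrcomp(W_i)$. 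Since Lemma~\ref{lem:ftlr-comps} permits choosing region names outside any fixed finite set, I arrange the two worlds to be $\uplus$-compatible. Lemma~\ref{lem:compat-context-plug} then yields $\npair{\stpair[.]{\plug{\context}{\var{comp}_i}}{\plug{\context}{\var{comp}_i}}} \in \lrec(W_{C,i} \uplus W_i)$, and adequacy delivers the equivalence $\plug{\context}{\var{comp}_i} \sterm{\gc_i} \Leftrightarrow \plug{\context}{\var{comp}_i} \term$. Chaining these two equivalences through the assumed source equivalence of $\var{comp}_1$ and $\var{comp}_2$ gives target equivalence with respect to $\context$, proving the forward direction.

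The backward direction is essentially symmetric. A source context $\context$ with $\emptyset \vdash \context$ consists only of target-level instructions—the $\scall$ form is merely an alternative atomic interpretation of a sequence of ordinary instructions, active only when the PC lies in $\ta$, which for $\context$'s own code it does not—so $\context$ simultaneously serves as a target context, and the same three-step chain (FTLR, compatibility, adequacy) transports any source distinction to the target side. The main obstacle I anticipate is the bookkeeping around global constants and the interplay of $\uplus$ and $\oplus$: one must select a single $\gc_i$ per instance that is simultaneously compatible with $\context$ and $\var{comp}_i$, ensure that FTLR is instantiated with disjoint region names so that $W_{C,i} \uplus W_i$ is defined, and verify that the untrusted/trusted splits inside the component LR line up correctly with $\dom(\var{comp}_i.\mscode) \subseteq \ta[,i]$ and $\dom(\context.\mscode) \mathrel{\#} \ta[,i]$ in both uses. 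A secondary subtlety is checking that the same syntactic $\context$ legitimately serves as context in both semantics; this reduces to the observation that source semantics collapses to target semantics on addresses outside $\ta$.
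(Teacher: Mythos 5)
Your plan follows the paper's proof in its skeleton: apply Lemma~\ref{lem:ftlr-comps} to the component (trusted case, using reasonability) and to the context (untrusted case, after amplifying its validity via Lemma~\ref{lem:context-trusted-addr-ampl}), choose region names so that the two worlds are $\uplus$-compatible, combine with Lemma~\ref{lem:compat-context-plug}, and finish with Lemma~\ref{lem:adequacy}. Your observations about the identity compiler and about $\emptyset\vdash\context$ making the same syntactic context usable in both semantics also match what the paper does.

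The one genuine gap is the claim that adequacy ``delivers the equivalence $\plug{\context}{\var{comp}_i}\sterm{\gc_i}\Leftrightarrow \plug{\context}{\var{comp}_i}\term$'' from a single instantiation. It does not: the whole development is step-indexed and parameterized by a direction, $\preceq$ or $\succeq$, and the observation relations $\lrol$ and $\lror$ each give only one implication, and only for terminating runs of length at most the chosen index. The argument therefore has to be sequenced as in the paper: first fix which side terminates and extract the number of steps $n$ of that particular run; then instantiate Lemma~\ref{lem:ftlr-comps} and the rest of the chain at index $n+1$ with the matching direction ($\preceq$ to push source termination to the target, $\succeq$ for the converse); and then repeat the whole chain, with a freshly extracted step count $n'$ and the opposite direction, to transport the termination of $\plug{\context}{\var{comp}_2}$ back to the other language. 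Without knowing the step count in advance you cannot choose the index at which to instantiate the relations, so the ``equivalence in one shot'' step as written would not go through. This is repairable entirely within your framework, but it reverses the order of quantification (termination first, then index, then relation) and requires two FTLR instantiations per direction of the theorem rather than one.
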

\begin{proof}
  \begin{itemize}
  \item Consider first the upward arrow.
    Assume $\src{\var{comp}_1} \tconeq \src{\var{comp}_2}$.

    Take a $\src{\context}$ such that $\emptyset \vdash \src{\context}$, take $\src{\ta[,i]}
    = \src{\dom(\var{comp}_i.\mscode)}$, $\gsigrets_i = \var{comp}_i.\sigrets$ and
    $\gsigcloss_i = \var{comp}_i.\sigcloss$, $\gc_i = (\ta[,i],\stkb_i,\gsigrets_i,\gsigcloss_i)$ and we will prove that
    $\src{\plug{\context}{\var{comp}_1} \sterm{\gc_1}} \Leftrightarrow
    \src{\plug{\context}{\var{comp}_2} \sterm{\gc_2}}$.

    By symmetry, we can assume w.l.o.g. that $\src{\plug{\context}{\var{comp}_1} \sterm{\gc_1}}$ and prove that $\src{\plug{\context}{\var{comp}_2} \sterm{\gc_2}}$.
    Note that this implies that $\src{\context}$ is a valid context for both $\src{\var{comp}_1}$ and $\src{\var{comp}_2}$.

    First, we show that also $\plug{\context}{\var{comp}_1} \trg{\term}$.
    Take $n$ the amount of steps in the termination of $\src{\plug{\context}{\var{comp}_1} \sterm{\gc_1}}$.
    It follows from Lemma~\ref{lem:ftlr-comps} that $\npair[n+1]{(\var{comp}_1,\var{comp}_1)} \in \lrcomp[\preceq,\gc_1](W_1)$ for some $W_1$ with $\dom(\pwfree) = \dom(\pwpriv) = \emptyset$.
    It also follows from the same Lemma~\ref{lem:ftlr-comps} and Lemma~\ref{lem:context-trusted-addr-ampl} that $\npair[n+1]{(\context,\context)} \in \lrcomp[\preceq,\gc_1](W_1')$ for some $W_1'$ that we can choose such that $W_1 \uplus W_1'$ is defined.
    Lemma~\ref{lem:compat-context-plug} then tells us that $\npair{(\plug{\context}{\var{comp}_1}, \plug{\context}{\var{comp}_1})} \in \lrec[\preceq,\gc_1](W_1\uplus W_1')$
    Together with $\src{\plug{\context}{\var{comp}_1} \sterm[n]{\gc_1}}$, Lemma~\ref{lem:adequacy} then tells us that $\plug{\context}{\var{comp}_1} \trg{\term}$.

    It follows from $\src{\var{comp}_1} \tconeq \src{\var{comp}_2}$ that also $\plug{\context}{\var{comp}_2} \trg{\term}$.

    It now remains to show that also $\src{\plug{\context}{\var{comp}_2} \sterm{\gc_2}}$.
    Take $n'$ the amount of steps in the termination of $\plug{\context}{\var{comp}_2} \trg{\term}$.
    It follows from Lemma~\ref{lem:ftlr-comps} that $\npair[n'+1]{(\var{comp}_2,\var{comp}_2)} \in \lrcomp[\succeq,\gc_2](W_2)$ for some $W_2$ with $\dom(\pwfree) = \dom(\pwpriv) = \emptyset$.
    It also follows from the same Lemma~\ref{lem:ftlr-comps} and Lemma~\ref{lem:context-trusted-addr-ampl} that $\npair[n'+1]{(\context,\context)} \in \lrcomp[\succeq,\gc_2](W_2')$ for some $W_2'$ that we can choose such that $W_2 \uplus W_2'$ is defined.
    Lemma~\ref{lem:compat-context-plug} then tells us that $\npair[n']{(\plug{\context}{\var{comp}_2}, \plug{\context}{\var{comp}_2})} \in \lrec[\succeq,\gc_2](W_2\uplus W_2')$
    Together with $\plug{\context}{\var{comp}_2} \trg{\term[n']}$, Lemma~\ref{lem:adequacy} then tells us that $\src{\plug{\context}{\var{comp}_2} \sterm{\gc_2}}$, concluding this direction of the proof.

  \item The downward arrow is similar.

    Assume $\src{\var{comp}_1} \sconeq \src{\var{comp}_2}$.
    Take $\src{\ta[,i]} = \src{\dom(\var{comp}_i.\mscode)}$, $\gsigrets_i = \var{comp}_i.\sigrets$ and $\gsigcloss_i = \var{comp}_i.\sigcloss$, $\gc_i = (\ta[,i],\stkb_i,\gsigrets_i,\gsigcloss_i)$.

    Take a $\trg{\context}$ such that $\emptyset \vdash \trg{\context}$ and we will prove that
    $\trg{\plug{\context}{\var{comp}_1} \term} \Leftrightarrow
    \trg{\plug{\context}{\var{comp}_2} \term}$.

    By symmetry, we can assume w.l.o.g. that $\trg{\plug{\context}{\var{comp}_1} \term}$ and prove that $\trg{\plug{\context}{\var{comp}_2} \term}$.
    Note that this implies that $\trg{\context}$ is a valid context for both $\trg{\var{comp}_1}$ and $\trg{\var{comp}_2}$.

    First, we show that also $\plug{\context}{\var{comp}_1} \src{\sterm{\gc_1}}$.
    Take $n$ the amount of steps in the termination of $\plug{\context}{\var{comp}_1} \trg{\term}$.
    It follows from Lemma~\ref{lem:ftlr-comps} that $\npair[n+1]{(\var{comp}_1,\var{comp}_1)} \in \lrcomp[\succeq,\gc_1](W_1)$ for some $W_1$ with $\dom(\pwfree) = \dom(\pwpriv) = \emptyset$.
    It also follows from the same Lemma~\ref{lem:ftlr-comps} and Lemma~\ref{lem:context-trusted-addr-ampl} that $\npair[n+1]{(\context,\context)} \in \lrcomp[\succeq,\gc_1](W_1')$ for some $W_1'$ that we can choose such that $W_1 \uplus W_1'$ is defined.
    Lemma~\ref{lem:compat-context-plug} then tells us that $\npair{(\plug{\context}{\var{comp}_1}, \plug{\context}{\var{comp}_1})} \in \lrec[\succeq,\gc_1](W_1\uplus W_1')$
    Together with $\plug{\context}{\var{comp}_1} \trg{\term[n]}$, Lemma~\ref{lem:adequacy} then tells us that $\plug{\context}{\var{comp}_1} \src{\sterm{\gc_1}}$.

    It follows from $\src{\var{comp}_1} \sconeq \src{\var{comp}_2}$ that also $\plug{\context}{\var{comp}_2} \src{\sterm{\gc_2}}$.

    It now remains to show that also $\plug{\context}{\var{comp}_2} \trg{\term}$.
    Take $n'$ the amount of steps in the termination of $\plug{\context}{\var{comp}_2} \src{\sterm{\gc_2}}$.
    It follows from Lemma~\ref{lem:ftlr-comps} that $\npair[n'+1]{(\var{comp}_2,\var{comp}_2)} \in \lrcomp[\preceq,\gc_2](W_2)$ for some $W_2$ with $\dom(\pwfree) = \dom(\pwpriv) = \emptyset$.
    It also follows from the same Lemma~\ref{lem:ftlr-comps} and Lemma~\ref{lem:context-trusted-addr-ampl} that $\npair[n'+1]{(\context,\context)} \in \lrcomp[\preceq,\gc_2](W_2')$ for some $W_2'$ that we can choose such that $W_2 \uplus W_2'$ is defined.
    Lemma~\ref{lem:compat-context-plug} then tells us that
    \begin{equation*}
      \npair[n']{(\plug{\context}{\var{comp}_2}, \plug{\context}{\var{comp}_2})} \in \lrec[\preceq,\gc_2](W_2\uplus W_2')
    \end{equation*}
    Together with $\plug{\context}{\var{comp}_2} \trg{\term[n']}$, Lemma~\ref{lem:adequacy} then tells us that $\plug{\context}{\var{comp}_2} \trg{\term}$, concluding the second direction of the proof.
  \end{itemize}
\end{proof}

\section{Lemmas}
\begin{lemma}
  \label{lem:unique-h-sigma}
  If $r,r' \in \dom(\pwheap)$ and $\pwheap(r) = (\pure,\_,H_\sigma)$
  and $\pwheap(r') = (\pure,\_,H_\sigma')$ and $\sigma \in \dom(H_\sigma)$
  and $\sigma \in \dom(H_\sigma')$ and $\memSat{\ms_S,\stk,\ms_\stk,\ms_T}{W}$,
  then
  \[
    H_\sigma = H_\sigma' \text{ and } r = r'
  \]
\end{lemma}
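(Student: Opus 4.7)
The plan is to unfold the definition of $\memSat{\ms_S,\stk,\ms_\stk,\ms_T}{W}$ to extract its heap part. From the definition, we obtain a world decomposition $W = W_{\var{stack}} \oplus W_{\var{free\_stack}} \oplus W_{\var{heap}}$ together with some $\overline{\sigma}, \ms_S', \ms_{T,\var{heap}}$ such that
\[
  \npair{(\overline{\sigma},\ms_S',\ms_{T,\var{heap}})} \in \lrheap(\pwheap)(W_{\var{heap}}).
\]
Unfolding the definition of $\lrheap$ in turn yields a function $R_{\var{seal}} : \dom(\activeReg{\pwheap}) \fun \powerset{\Seal}$ satisfying
\[
  \biguplus_{r'' \in \dom(\activeReg{\pwheap})} R_{\var{seal}}(r'') \subseteq \overline{\sigma}
  \quad\text{and}\quad
  \dom(\pwheap(r'').H_\sigma) = R_{\var{seal}}(r'')
\]
for every $r'' \in \dom(\activeReg{\pwheap})$.

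Next I would observe that because both $\pwheap(r) = (\pure,\_,H_\sigma)$ and $\pwheap(r') = (\pure,\_,H_\sigma')$ are pure (in particular, not $\revoked$), both $r$ and $r'$ lie in $\dom(\activeReg{\pwheap})$ by definition of $\activeReg{}$. The hypotheses $\sigma \in \dom(H_\sigma)$ and $\sigma \in \dom(H_\sigma')$ then translate, via the equality $\dom(\pwheap(r'').H_\sigma) = R_{\var{seal}}(r'')$, into $\sigma \in R_{\var{seal}}(r) \cap R_{\var{seal}}(r')$. Since the union above is a \emph{disjoint} union ($\biguplus$), this forces $r = r'$, and therefore $\pwheap(r) = \pwheap(r')$, which in particular gives $H_\sigma = H_\sigma'$.

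The whole argument is really bookkeeping: the substantive content is built into the $\lrheap$ definition, which bakes in the invariant that distinct active pure regions own disjoint sets of seals. The only mildly subtle step — and what I would expect to be the main obstacle if any — is making sure that the two regions under consideration actually appear in $\dom(\activeReg{\pwheap})$; this is immediate here because purity excludes being $\revoked$, so no case analysis on the third value $\spatial$/$\spatialo$ is needed.
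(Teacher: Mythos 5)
Your proof is correct and is exactly the argument the paper intends: its own proof is just the one-liner ``follows easily by definition of $\memSat{\ms_S,\stk,\ms_\stk,\ms_T}{W}$ and $\lrheap$,'' and you have simply spelled out that unfolding — extracting $R_{\var{seal}}$ from $\lrheap$, noting both pure regions survive $\activeReg{}$, and using disjointness of $\biguplus_{r}R_{\var{seal}}(r)$ to force $r=r'$. No gaps.
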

\begin{proof}
  This follows easily by definition of $\memSat{\ms_S,\stk,\ms_\stk,\ms_T}{W}$ and $\lrheap$.
\end{proof}

\dominique{Why is the below lemma needed?}
\begin{lemma}
  if $\npair{[\baddr,\eaddr]} \in \stackReadCond{W}$, and
  $\npair{[\baddr,\eaddr]} \in \stackWriteCond{W}$, and\\
  $\memSat{\ms_S,\stk,\ms_\stk,\ms_T}{W}$, then
  \[
    \begin{array}{l}
      \exists S \subseteq \addressable{\linear,\pwfree} \ldotp \\
      \quad \exists R : S \fun \powerset{\nats} \ldotp \\
      \qquad \biguplus R(r) = [\baddr,\eaddr] \wedge \\
      \qquad \forall r \in S \ldotp \pwfree(r).H \nequal \stdreg{R(r),\gc}{\pur}.H \wedge \\
      \qquad \quad |R(r)| = 1 \wedge \\
      \qquad \quad \pwfree(r) \text{ is address-stratified}
    \end{array}
  \]
\end{lemma}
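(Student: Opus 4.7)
The plan is to combine the data extracted from the read and write conditions, using memory satisfaction to identify the two resulting region assignments. I would start by unfolding $\stackReadCond{W}$ to obtain $S_r \subseteq \addressable{\linear,\pwfree}$ and $R_r : S_r \fun \powerset{\nats}$ with $|R_r(r)| = 1$ for all $r \in S_r$, $\biguplus_{r\in S_r} R_r(r) \supseteq [\baddr,\eaddr]$, and $\pwfree(r).H \nsubeq \stdreg{R_r(r),\gc}{\pur}.H$. Analogously, from $\stackWriteCond{W}$ I obtain $S_w, R_w$ with the dual $\nsupeq$ inclusion and with each $\pwfree(r)$ for $r\in S_w$ address-stratified.

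The crux is to show that for every $a \in [\baddr,\eaddr]$, the unique $r \in S_r$ with $R_r(r) = \{a\}$ coincides with the unique $r' \in S_w$ with $R_w(r') = \{a\}$. To establish this, I would unfold $\memSat{\ms_S,\stk,\ms_\stk,\ms_T}{W}$ to get $\memSatFStack{\ms_\stk,\ms_{T,\mathrm{free\_stack}}}{W_{\mathrm{free\_stack}}}$ with $W = W_{\mathrm{stack}}\oplus W_{\mathrm{free\_stack}}\oplus W_{\mathrm{heap}}$. This produces a family $R_\ms$ partitioning $\ms_\stk$ over $\dom(\activeReg{\pwfree})$ together with $\npair{R_\ms(r)} \in \pwfree(r).H\;\xi^{-1}(R_W(r))$. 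For each $r \in S_r$, the inclusion $\pwfree(r).H \nsubeq \stdreg{R_r(r),\gc}{\pur}.H$ forces the domain of $\pi_1(R_\ms(r))$ to be exactly $R_r(r)$, and analogously for $S_w$. Since the $R_\ms$-pieces are pairwise disjoint, the region containing a given address $a$ is uniquely determined, so $r = r'$.

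With this alignment, I take $S = \{r \in S_r \mid R_r(r) \subseteq [\baddr,\eaddr]\}$ and $R(r) = R_r(r)$. Because each $R_r(r)$ is a singleton, any $r \in S_r$ either contributes its one address to $[\baddr,\eaddr]$ or is disjoint from it, so $\biguplus_{r\in S} R(r) = [\baddr,\eaddr]$ (the $\supseteq$ direction from the read condition, the $\subseteq$ direction by construction). The alignment lemma above places each such $r$ in $S_w$ as well, so we simultaneously have $\pwfree(r).H \nsubeq \stdreg{R(r),\gc}{\pur}.H$ and $\pwfree(r).H \nsupeq \stdreg{R(r),\gc}{\pur}.H$, which gives the desired $\nequal$, while address-stratification of $\pwfree(r)$ is inherited from the write condition.

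The main obstacle is the alignment step: convincing ourselves that the witnesses produced by $\stackReadCond$ and $\stackWriteCond$ pick out the same regions of $\pwfree$. This relies essentially on the singleton constraint $|R_r(r)| = |R_w(r)| = 1$ — without it the two partitions could genuinely differ — together with the disjointness of the $\memSatFStack$ partition, which forces every stack address to live in at most one active free-stack region.
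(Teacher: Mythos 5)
Your proof is correct and follows essentially the same route as the paper's: extract the two singleton region assignments from the read and write conditions, use the disjointness of the free-stack partition supplied by memory satisfaction to align the read- and write-regions address by address, and combine $\nsubeq$ with $\nsupeq$ to obtain $\nequal$ while inheriting address-stratification from the write side. Your choice of $S$ as the read-regions whose singletons lie in $[\baddr,\eaddr]$ is in fact slightly cleaner than the paper's $S = S_R \cup S_W$, since it directly yields the required equality $\biguplus R(r) = [\baddr,\eaddr]$ rather than only the $\supseteq$ provided by the conditions.
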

\begin{proof}
  By assumption we get
  \begin{enumproof}
    \item $S_R \subseteq \addressable{\linear,\pwfree}$
    \item $R_R : S_R \fun \powerset{\nats}$
    \item $S_W \subseteq \addressable{\linear,\pwfree}$
    \item $R_W : S_W \fun \powerset{\nats}$
  \end{enumproof}
  such that
  \begin{enumproof}[resume]
    \item $\biguplus R_R(r) = [\baddr,\eaddr]$
    \item $\forall r \in S_R \ldotp \pwfree(r).H \nsubeq
      \stdreg{R_R(r),\gc}{\pur}.H \text{ and } |R_R(r)|=1$
    \item $\biguplus R_W(r) = [\baddr,\eaddr]$
    \item $\forall r \in S_W \ldotp \pwfree(r).H \nsupeq
      \stdreg{R_W(r),\gc}{\pur}.H \text{ and } |R_R(r)|=1 \text{ and } \pwfree(r)
      \text{ is address-stratified}$
  \end{enumproof}
  Now pick $S = S_R \cup S_W$ and show
  \[
    \forall r \in S \ldotp R_R(r) = R_W(r)
  \]
  Assuming for contradiction $\exists r \in S \ldotp R_R(r) \neq R_W(r)$, we have $\pwfree(r).H \nsubeq \stdreg{*,\gc}{\pur}.H$ and $\pwfree(r).H \nsupeq \stdreg{-,\gc}{\pur}.H$ for two distinct singleton sets $*$ and $-$.
  Due to the memory satisfaction assumption $\pwfree(r)$ is non-empty and it is trivial to show that so is the standard regions.
  From the above, we can conclude that $\pwfree(r)$ should contain some $\ms$
  with $\dom(\ms) = *$ and $\dom(\ms) = -$ which contradicts the address
  stratification assumption.

  Now pick
  \[
    R = R_R(r) \text{ for $r \in S$}
  \]
  We first show $\biguplus R(r) \supseteq [\baddr,\eaddr]$.
  For $\aaddr \in [\baddr,\eaddr]$ we know that there exists $r_1$ and $r_2$ such that $R_W(r_1) = R_R(r_2) = \{\aaddr\}$.
  It suffices to show $r_1 = r_2$.
  To this end assume the contrary.
  This, the address-stratification assumption on all write regions, the $n$-subset of a standard region for read regions, and memory satisfaction lead to a contradiction as any part of memory only can be governed by one region.
  Address-stratification and the singleton requirement follows trivially from
  the assumptions from the write region. Finally the $n$-equality follows from Lemma~\ref{lem:nsub-and-nsup-std}.
\end{proof}

\begin{lemma}
  \label{lem:nsub-and-nsup-std}
  If $\iota.H \nsubeq \stdreg{A,\gc}{v}.H$ and $\iota.H \nsupeq \stdreg{A,\gc}{v}.H$, then $\iota.H \nequal  \stdreg{A,\gc}{v}.H$
\end{lemma}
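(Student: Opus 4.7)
The claim is that the two relational approximations combine to an exact approximation. My plan is to just unfold the relevant definitions and observe that $n$-equality is, by construction, the conjunction of $n$-subset-containment and $n$-superset-containment.

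Concretely, I would first unfold both $\nsubeq$ and $\nsupeq$ on $H$-components. Since $\iota.H$ and $\stdreg{A,\gc}{v}.H$ are elements of $\Wor \monnefun \URel{\MemSeg^2}$, these relations are defined pointwise in the world argument $\hat W$. So the two assumptions give: for every $\hat W$ and every $n' < n$ and every $(\ms_S,\ms_T)$,
\[
  (n',(\ms_S,\ms_T)) \in \iota.H\;\hat W \iff (n',(\ms_S,\ms_T)) \in \stdreg{A,\gc}{v}.H\;\hat W.
\]
This is exactly the definition of the $n$-cut equality used for $\nequal$ on uniform relations, lifted pointwise to $H$-components.

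I would then conclude $\iota.H \nequal \stdreg{A,\gc}{v}.H$ directly. No induction, no choice of partitions, no world-future reasoning is needed; the lemma is a pure repackaging of definitions. The only thing to be slightly careful about is to state the direction convention consistently: $\nsubeq$ gives the forward inclusion of $n$-cuts and $\nsupeq$ gives the backward inclusion, so together they yield set equality of $n$-cuts for each $\hat W$, which is exactly $\nequal$.

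There is no real obstacle here; if anything, the only pitfall is making sure that the step-index bound in the definition of $\nequal$ matches the bound used in $\nsubeq$ and $\nsupeq$ (both are ``up to $n$'' in the same sense), which is true by the uniform convention adopted earlier in the paper.
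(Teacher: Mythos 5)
Your proposal is correct and matches the paper's proof, which is simply ``By definition'': mutual $n$-cut inclusion of the $H$-components, pointwise in the world argument, is exactly $n$-cut equality. No further argument is needed.
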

\begin{proof}
  By definition.
\end{proof}

\section{Proofs}
\subsection{Lemmas}
In this section, I have listed lemmas that seem to be necessary for the FTLR proof.
\begin{lemma}[Downwards closure of relations]
  \label{lem:downwards-closed}
  If $n' \leq n$, then
  \begin{itemize}
  \item If $\npair{A} \in \readCond{\lin,W}$, then $\npair[n']{A} \in \readCond{\lin,W}$.
  \item If $\npair{A} \in \stackReadCond{\lin,W}$, then $\npair[n']{A} \in \stackReadCond{\lin,W}$.
  \item If $\npair{A} \in \writeCond{\lin,W}$, then $\npair[n']{A} \in \writeCond{\lin,W}$.
  \item If $\npair{A} \in \stackWriteCond{\lin,W}$, then $\npair[n']{A} \in \stackWriteCond{\lin,W}$.
  \item If $\npair{A} \in \execCond{\lin,W}$, then $\npair[n']{A} \in \execCond{\lin,W}$.
  \item If $\npair{A} \in \xReadCond{W}$, then $\npair[n']{A} \in \xReadCond{W}$.
  \item If $\npair{(w,w')} \in \lrvg{\trust}(W)$, then $\npair[n']{(w,w')} \in \lrvg{\trust}(W)$.
  \item If $\npair{(w,w')} \in \lrrg{\trust}(W)$, then $\npair[n']{(w,w')} \in \lrrg{\trust}(W)$.
  \item If $\memSatFStack{ms_\stk,\ms_T}{W}$, then $\memSatFStack[n']{ms_\stk,\ms_T}{W}$.
  \item If $\memSatStack{ms_\stk,\ms_T}{W}$, then $\memSatStack[n']{ms_\stk,\ms_T}{W}$.
  \item If $\npair{(\overline{\sigma},\ms_S,\ms_T)} \in \lrheap{\pwheap}(W)$, then $\npair[n']{(\overline{\sigma},\ms_S,\ms_T)} \in \lrheap{\pwheap}(W)$.
  \item $\memSat{(\ms_S,\stk,\ms_\stk,\ms_T)}{W}$, then also $\memSat[n']{(\ms_S,\stk,\ms_\stk,\ms_T)}{W}$.
  \end{itemize}
\end{lemma}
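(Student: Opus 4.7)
All the predicates involved are step-indexed, and the definitions use quantifiers that are themselves already downward-closed (either through explicit ``for all $n' < n$'' clauses or via the $n$-indexed equalities/inclusions $\nequal$, $\nsubeq$, $\nsupeq$). So the plan is a simultaneous structural inspection of the definitions, proving each clause by observing that every occurrence of $n$ in the defining body is either (a) in a position that is monotone in $n$ (like ``for all $n'' < n$''), or (b) replaced by an application of induction to a smaller index. Because the definitions for $\lrv$, $\readCond$, $\writeCond$, $\execCond$, $\xReadCond$, $\stackReadCond$, $\stackWriteCond$ and the memory-satisfaction predicates are mutually recursive, I would actually phrase the whole lemma as a single statement indexed by $n$ and prove it by strong induction on $n$, treating all the bullets as one conjunction.

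\textbf{Key steps.} First I would handle the permission-based conditions ($\readCond$, $\writeCond$, $\stackReadCond$, $\stackWriteCond$, $\xReadCond$): these are defined using $\nsubeq$, $\nsupeq$ or $\nequal$ against standard/code regions, and these $n$-relations on $\UPred{\cdot}$ are monotone in $n$ by construction (this follows from the underlying definition of $k$-cut on uniform predicates sketched in the text). Hence for $n' \le n$, replacing $n$ by $n'$ everywhere preserves the condition. Next, for $\execCond$, I would observe that its body is an explicit quantification over $n'' < n$ and $W' \future \purePart{W}$, so downward closure from $n$ to any $n' \le n$ is immediate. For $\lrv$, I would go case-by-case through the disjuncts of its definition: the integer case is trivial; the sealed-value case requires downward closure of the conditions $H_\sigma \; \sigma \nequal H^\mathrm{code,\square}_\sigma\;\ldots$ (already handled above) together with downward closure of the nested $\lrexj$ quantifier, which is itself under ``for all $n'' < n$'' and so automatic; the seal/stack-pointer/capability cases reduce to the already-handled conditions on $\readCond$, $\writeCond$, $\xReadCond$, $\execCond$. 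The trusted extension $\lrvtrusted$ adds only clauses whose $n$-content is again $\nequal$ between region invariants plus $\xReadCond$, so it inherits the property. From $\lrv$, downward closure of $\lrrg{\trust}(R)(W)$ is immediate because it is a pointwise condition on registers.

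Next, for the memory-satisfaction predicates, $\lrheap$, $\memSatFStack$ and $\memSatStack$ are each defined as existence of partitions $R_\ms$, $R_W$ plus conditions of the form ``for all $n'' < n$, $\npair[n'']{\cdot} \in \pi.H\;\xi^{-1}(\cdot)$''. Given $n' \le n$, reusing the same witnesses $R_\ms$, $R_W$ works because the quantifier over $n'' < n$ is replaced by a smaller set $n'' < n'$. The only thing to check is that $\npair[n'']{\cdot} \in H_\sigma\;\sigma\;\ldots$ survives, which again reduces to monotonicity of $\nequal$ on $\UPred{\cdot}$ in the index. Finally, $\memSat{\ms_S,\stk,\ms_\stk,\ms_T}{W}$ is a conjunction of an $\lrheap$-assertion with a $\memSatStack$- and $\memSatFStack$-assertion, so its downward closure follows from the previous bullets.

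\textbf{Main obstacle.} The only subtlety is the interaction between the mutually-recursive definitions when unfolding $\lrv$ on sealed values: the inner condition involves $H_\sigma\;\sigma$ which may be equated via $\nequal$ to a code-region invariant that itself refers back to $\lrvg{\trust}$ on the pure part of a world. Making sure the simultaneous induction is well-founded (i.e., that every recursive appeal to $\lrv$/$\lrvg{\trust}$ inside $H^\mathrm{code}$ and $H^\mathrm{code,\square}_\sigma$ occurs guarded by ``for all $n'' < n$'' or via a later application $\xi^{-1}$) is the only real bookkeeping needed. Once that is in place, everything else is definition-chasing.
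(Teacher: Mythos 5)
Your proposal is correct and follows essentially the same route as the paper's own (very terse) proof: both arguments are definition-chasing, relying on the downward closure of $\nequal$, $\nsubeq$, $\nsupeq$ for the permission-based conditions, deriving $\lrvg{\trust}$ and then $\lrrg{\trust}$ from those, and observing that the memory-satisfaction predicates only use the index through quantifications over $n' < n$. Your explicit framing as a simultaneous strong induction on $n$ is just the formal scaffolding the paper leaves implicit, so there is nothing to add.
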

\begin{proof}
  The properties for $\readCond{}$, $\stackReadCond{}$, $\writeCond{}$, $\stackWriteCond{}$, $\execCond{}$, $\xReadCond{}$ follow easily by definition.
  The property for $\lrv$ follows from the others and by definition.
  The property for $\lrr$ follows directly from the one for $\lrv$.
  
  The property for $\lrheap$, $\memSatFStack{\_,\_}{\_}$, $\memSatStack{\_,\_}{\_}$ follows by their definition from the quantifications over $n' < n$, and the property for $(\_,\_,\_,\_) :_n W$ follows from those.
\end{proof}

\begin{lemma}[Properties of n-equality of worlds]
  \label{lem:n-equality-props}
  If $W_1 \nequal W_2$ then
  \begin{itemize}
  \item $\purePart{W_1} \nequal \purePart{W_2}$
  \item If $W_1' \future W_1$ then there exists a $W_2'$ such that $W_2' \nequal W_1'$ and $W_2' \future W_2$.
  \item If $W_1' \oplus W_1$ is defined, then there exists a $W_2' \nequal W_1'$ and $W_2' \oplus W_2$ is defined.
  \item If $W_1 = W_1' \oplus W_1''$ then there exist $W_2' \nequal W_1'$ and $W_2'' \nequal W_1''$ such that $W_2 = W_2' \oplus W_2''$.
  \item If $W_1' \nequal W_2'$ then $W_1\oplus W_1' \nequal W_2 \oplus W_2'$.
  \item $\xi^{-1}(W_1) \nequal[n-1] \xi^{-1}(W_2)$
  \end{itemize}
\end{lemma}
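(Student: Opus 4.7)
The plan is to dispatch each of the six bullets by induction on / case analysis over the structure of worlds, exploiting the fact that $n$-equality induced by the recursive domain equation of Theorem~\ref{thm:recursive-domain-eq} forces equality of the discrete skeleton (region-name domains, region visibilities $\pure/\spatial/\spatialo/\revoked$) and $n$-equality of every embedded $\UPred{}$ (the $H$-predicates and the $H_\sigma$ maps). Once this is recognised, every operation in the statement is pointwise on regions and either preserves or shifts the $n$-cut in a purely mechanical way.

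First I would handle the structural/closure properties, namely bullets~1 and~5. For $\purePart{W_1} \nequal \purePart{W_2}$, observe from Definition~\ref{def:purePart} that $\purePart{}$ only rewrites the visibility tag (sending $\spatialo$ to $\spatial$ and leaving $\pure$, $\spatial$, $\revoked$ fixed) while copying the $H$-component verbatim. Since $W_1 \nequal W_2$ already forces the discrete skeleton to coincide and the $H$'s to be $n$-equal, the same holds after applying $\purePart{}$ pointwise. For bullet~5, $\oplus$ is defined componentwise on regions and, in every defined case, either repeats a common $H$-predicate or combines visibility tags in a purely discrete way; so if both arguments agree $n$-equally componentwise, the result does too, and definedness of $\oplus$ depends only on the discrete skeleton and on equality of $H$'s, which is preserved modulo $n$.

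Next I would handle the transport properties (bullets~2, 3, 4) by a uniform construction: given a witness over $W_1$, build the corresponding $W_2$-witness by copying the skeleton of the $W_1$-witness but substituting $W_2$'s $H$-predicates whenever the original witness inherited them from $W_1$. Concretely: for bullet~2, unfold $W_1' \future W_1$ to get injective region renamings $m_i$ and pointwise region-future steps. Define $W_2'$ to have the same region-name domains as $W_1'$, with: on regions in the image of $m_i$, the $H$-predicate taken from $W_2$ (transported through $m_i$) and the same visibility as in $W_1'$; on fresh regions of $W_1'$, an identical copy. The region-level future rules mention only the visibility tags (Lemmas on $\future$), so $W_2' \future W_2$ holds, and $W_2' \nequal W_1'$ holds by construction. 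Bullet~3 is similar: replay the visibility combinations that made $W_1' \oplus W_1$ defined, using $W_2$'s $H$-components; the disjointness conditions on $\spatialo$ are skeleton-level, so they transfer. Bullet~4 is the dual — use Lemma~\ref{lem:oplus-future-distr}-style pointwise decomposition on $W_2$ along the same skeleton as $W_1 = W_1' \oplus W_1''$.

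Finally, $\xi^{-1}(W_1) \nequal[n-1] \xi^{-1}(W_2)$ is a one-step shift: by Theorem~\ref{thm:recursive-domain-eq}, $\xi$ is an isomorphism with codomain $\blater(\dots)$, and $\blater$ strictly contracts the metric, so an $n$-equality at the $\blater$ level corresponds to an $(n-1)$-equality on $\Wor$. The main obstacle I expect is the transport step in bullets~2--4, specifically the bookkeeping needed for regions that appear in a witness (say $W_1'$) but not in the base world $W_1$: they must be copied into $W_2'$ unchanged, and one must carefully verify that the resulting $W_2'$ satisfies all the side-conditions of $\future$ and $\oplus$ (disjointness of $\spatialo$ ownership, agreement on $H$'s in $\pure$ regions, and the region-renaming conditions for $\future$). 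Once that construction is in place, every remaining check is a direct appeal to the region-level definitions and preservation of $n$-equality of the $H$-components.
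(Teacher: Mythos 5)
Your proposal is correct and follows the same route as the paper, whose entire proof is the remark that the lemma is ``easy to prove by unfolding definitions and making unsurprising choices for existentially quantified worlds''; your transport construction for bullets 2--4 (copying the skeleton of the $W_1$-side witness and substituting $W_2$'s $n$-equal region components) is precisely the ``unsurprising choice'' the paper has in mind, and the $\blater$-contraction argument for the last bullet is the intended one. No gaps.
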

\begin{proof}
  Easy to prove by unfolding definitions and making unsurprising choices for existentially quantified worlds.
\end{proof}

\begin{lemma}[Non-expansiveness of relations]
  \label{lem:non-expansiveness}
  If $W \nequal W'$, then
  \begin{itemize}
  \item $\readCond{W} \nequal \readCond{W'}$.
  \item If $\stackReadCond{\lin,W} \nequal \stackReadCond{\lin,W'}$.
  \item If $\writeCond{\lin,W} \nequal \writeCond{\lin,W'}$.
  \item If $\stackWriteCond{\lin,W} \nequal \stackWriteCond{\lin,W'}$
  \item If $\xReadCond{W} \nequal \xReadCond{W'}$.
  \item $\lre(W) \nequal \lre(W')$.
  \item $\lrexj(W) \nequal \lrexj(W')$.
  \item If $\execCond{\lin,W} \nequal \execCond{\lin,W'}$.
  \item If $\lrvg{\trust}(W) \nequal \lrvg{\trust}(W')$.
  \item If $\lrrg{\trust}(W) \nequal \lrrg{\trust}(W')$.
  \item If $\memSatFStack{ms_\stk,\ms_T}{W}$, then $\memSatFStack{ms_\stk,\ms_T}{W'}$.
  \item If $\memSatStack{ms_\stk,\ms_T}{W}$, then $\memSatStack{ms_\stk,\ms_T}{W'}$.
  \item If $\npair{(\overline{\sigma},\ms_S,\ms_T)} \in \lrheap{\pwheap}(W)$, then $\npair{(\overline{\sigma},\ms_S,\ms_T)} \in \lrheap{\pwheap[W']}(W')$.
  \item $\memSat{(\ms_S,\stk,\ms_\stk,\ms_T)}{W}$ iff $\memSat{(\ms_S,\stk,\ms_\stk,\ms_T)}{W'}$.
  \end{itemize}
\end{lemma}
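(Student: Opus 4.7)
The plan is to prove all the bullets simultaneously by induction on $n$, leveraging Lemma~\ref{lem:n-equality-props} to transport the existentially quantified witnesses (subworlds, partitions, regions) from one world to the other. For $n=0$ every step-indexed set is the total relation, so the base case is immediate. For the step case, I would handle the bullets roughly in the order they are listed, since each later relation is built syntactically out of earlier ones.

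First I would treat the permission-based side conditions $\readCond{}$, $\stackReadCond{}$, $\writeCond{}$, $\stackWriteCond{}$, and $\xReadCond{}$. These are defined by existentially quantifying a set $S$ of region names, a partition $R$, and imposing $n$-(sub/sup)-set relations of the form $\pwheap(r).H \nsubeq \stdreg{R(r),\gc}{\pur}.H$. Since $W \nequal W'$ implies that $\pwheap$ and $\pwheap[W']$ agree on region visibilities and that their $H$-components are $n$-equal at every name in the domain (use Lemma~\ref{lem:n-equality-props}), the same witnesses $S$ and $R$ work on both sides. The execute condition $\execCond{W}$ quantifies over $W' \future \purePart{W}$ and asserts membership in $\lre(W')$; here I would use the $\purePart{}$ and future-world clauses of Lemma~\ref{lem:n-equality-props} to transport the future world, and then invoke the $\lre$ bullet at the smaller step-index $n'<n$ (so this use is justified by the outer induction on $n$).

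The genuinely delicate bullet is the value relation $\lrvg{\trust}$ because of the clause for sealed capabilities $\sealed{\sigma,\vsc}$, which mentions both $H_\sigma \; \sigma \; \xi^{-1}(W)$ and membership in $\lrexj(W'\oplus W_o)$. The $H_\sigma$ component is accessed through a region $\pwheap(r)$ that is $n$-equal in $W$ and $W'$; apply $\xi^{-1}$ and invoke the last clause of Lemma~\ref{lem:n-equality-props} to get $\xi^{-1}(W) \nequal[n-1] \xi^{-1}(W')$, which is enough since the membership condition is only required for $n'<n$. The $\lrexj$ condition is then handled using the $\lrexj$-bullet at step-indices strictly less than $n$, i.e.\ from the outer induction hypothesis. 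The sealing-range and stack-pointer clauses of $\lrv$ are dispatched similarly. The register relation $\lrrg{\trust}$ then follows pointwise from $\lrvg{\trust}$ by splitting the world according to the partition given in $W$ and transporting that partition to $W'$ via Lemma~\ref{lem:n-equality-props}, and both expression relations $\lre$ and $\lrexj$ reduce (after transporting the various parameter worlds $W_\lrrs, W_\lrm, W_o$ and their $\oplus$-compositions) to the previously handled bullets on $\lrr$, $\memSat{}{}$, and $\lro$.

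Finally, the memory-satisfaction bullets $\memSatFStack{}{}$, $\memSatStack{}{}$, $\lrheap$, and $\memSat{}{}$ are all defined by an outer $\exists$ over partitions $R_\ms$, $R_W$, $R_\var{seal}$ together with a clause of the form $\npair[n']{\ldots} \in \pwheap(r).H \; \xi^{-1}(R_W(r))$ for $n'<n$. I would keep the same $R_\ms$, $R_\var{seal}$ and transport $R_W$ via Lemma~\ref{lem:n-equality-props} to obtain a partition of $W'$; the required membership at $n'<n$ follows from the $n$-equality of the $H$-components and of $\xi^{-1}$ of the worlds. The expected main obstacle is the sealed-capability case of $\lrv$, where the impredicative quantification over $W_o$ interacts with the mutually recursive use of $\lrexj$; this is the point where one must be careful that every recursive appeal lands at a strictly smaller step-index so that the outer induction on $n$ is well-founded.
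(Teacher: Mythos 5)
Your plan is correct and follows essentially the same route as the paper's (much terser) proof: handle the permission-based conditions by reusing the same witnesses since the world is only consulted via $\nequal$/$\nsubeq$/$\nsupeq$ on regions, transport worlds and decompositions with Lemma~\ref{lem:n-equality-props}, and resolve the mutual recursion through $\lrexj$ and $\xi^{-1}$ by dropping to strictly smaller step-indices. Your explicit outer induction on $n$ just makes precise the well-foundedness that the paper leaves implicit; the only minor over-engineering is that $\lre$/$\lrexj$ need only the $\oplus$-definedness clause of Lemma~\ref{lem:n-equality-props}, not non-expansiveness of $\lrr$ or memory satisfaction, since those appear only at universally quantified worlds that are unchanged.
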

\begin{proof}
  The properties for $\readCond{}$, $\stackReadCond{}$, $\writeCond{}$, $\stackWriteCond{}$, $\xReadCond{}$ follow from the fact that these are defined to use the world only for comparing regions using $\nequal$, $\nsubeq$ and $\nsupeq$. 

  The property for $\lre$ and $\lrexj$ follow from Lemma~\ref{lem:n-equality-props}.

  The property for $\execCond{}$ follows from Lemma~\ref{lem:n-equality-props} and the property for $\lre$.

  The property for $\lrv$ follows from the other properties, by definition, from Lemma~\ref{lem:n-equality-props}, and non-expansiveness of regions.

  The property of $\lrr$ follows from the one for $\lrv$.

  The property for $\lrheap$, $\memSatFStack{\_,\_}{\_}$, $\memSatStack{\_,\_}{\_}$ and $(\_,\_,\_,\_) :_n W$ follows from the non-expansiveness of regions in the world and Lemma~\ref{lem:n-equality-props}.
\end{proof}

\begin{lemma}[World monotonicity of relations]
  \label{lem:monotonicity}
  For all $n$, $W' \future W$, we have that
  \begin{itemize}
  \item If $(w_1,w_2) \in H_\sigma~\sigma~W$, then $(w_1,w_2) \in H_\sigma~\sigma~W'$.
    \lau{This should hold for any valid $H_\sigma$, but we need to prove it for specific instantiations.}
  \item If $\npair{(w_1,w_2)} \in \lrv(W)$, then $\npair{(w_1,w_2)} \in
    \lrv(W')$.
  \item If $\npair{(\overline{\sigma},\ms_S,\ms_T)} \in \lrheap(\pwheap[W_1])(W)$, then $\npair{(\overline{\sigma},\ms_S,\ms_T)} \in
    \lrheap(\pwheap[W_1])(W')$.
\end{itemize}
\end{lemma}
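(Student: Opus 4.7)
The three statements are mutually entangled: $\lrv$ can produce safe sealed capabilities whose invariant is carried by an $H_\sigma$, and every concrete $H_\sigma$ we consider (notably $H^{\mathrm{code,\square}}_\sigma$) is built out of $\lrv$ and $\lrvtrusted$; meanwhile $\lrheap$ simply glues together island interpretations drawn from $\pwheap[W_1]$. The plan is therefore to prove all three claims simultaneously by induction on the step index $n$, relying on the previously established downward closure (Lemma~\ref{lem:downwards-closed}) and non-expansiveness (Lemma~\ref{lem:non-expansiveness}), and on the fact that each $\pwheap(r).H$, $\pwfree(r).H$ and $\pwpriv(r).H$ is by construction a monotone non-expansive map $\Wor \monnefun \URel{\MemSeg^2}$.

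For the $\lrheap$ clause I would unfold the definition to obtain the partition $W = \bigoplus_{r} R_W(r)$ witnessing membership, and then use Lemma~\ref{lem:oplus-future-distr} (iteratively, or via induction on the domain) to lift this to a partition $W' = \bigoplus_{r} R_W'(r)$ with $R_W'(r) \future R_W(r)$. Since $R_\ms$ and $R_{\var{seal}}$ live outside the world, they are reused verbatim, and monotonicity of each island's $H$ transports $\npair[n']{R_\ms(r)} \in \pwheap[W_1](r).H \; \xi^{-1}(R_W(r))$ to the corresponding statement with $R_W'(r)$. For $H_\sigma$ the argument is the same applied pointwise: each concrete instantiation — in particular $H^{\mathrm{code,\square}}_\sigma$ — is monotone because its body only uses the world via appeals to $\lrv$, $\lrvtrusted$, and standard regions, all of which are monotone by the induction hypothesis.

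The main work is the $\lrv$ clause, which I would dispatch by case analysis on $(w_1, w_2)$. Integer pairs are immediate. For the two kinds of sealed-value witnesses and the seal-capability witnesses, the required region $r \in \dom(\pwheap)$ is tracked through the injection $m_{\mathrm{heap}}$ furnished by $W' \future W$; since $\pure \future \pure$ is the only future rule applicable to pure regions, both the sealing-invariant component $H_\sigma$ and the heap-invariant component are preserved literally, so the $\nequal$ comparisons to $H^{\mathrm{code,\square}}$ and $\lrv$ still hold. The inner quantifiers in the sealed cases ask for membership in $\lrexj$ for all $W'' \future W$ (or $W'' \future \purePart{W}$ in the non-linear case) and all compatible $W_o$; transitivity of $\future$, together with Lemma~\ref{lem:purePart-mono} for the non-linear case, reduces these to the original assumption. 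For the stack-pointer, memory-capability, executable and $\xReadCond$ cases, the witnessing set $S$ of region names and assignment $R$ transport along $m_{\mathrm{free}}$ or $m_{\mathrm{heap}}$; because $(\spatialo,H) \future (\spatialo,H)$ and $\pure \future \pure$ are the only futures applicable to the regions in $\addressable{\linear,\cdot}$ and $\addressable{\normal,\cdot}$ respectively, the equalities $\pwheap[W'](m(r)).H = \pwheap[W](r).H$ and $\pwfree[W'](m(r)).H = \pwfree[W](r).H$ hold on the nose, so the $\nsubeq$/$\nsupeq$ conditions survive; $\execCond$ and $\xReadCond$ again use Lemma~\ref{lem:purePart-mono} plus Lemma~\ref{lem:conds-shrinkable} to move from $\purePart{W}$ to $\purePart{W'}$.

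The hard part, and the real reason this lemma must be packaged with downward closure and non-expansiveness, is the sealed-linear case: its inner clause mentions $\lrexj$, whose unfolding loops back through $\lrr$ (hence $\lrv$) at strictly smaller step indices, and simultaneously opens an island interpretation $H_\sigma$ at a fresh world $W_o$. The induction hypothesis on $n' < n$ is exactly what is needed to discharge this circularity, but I will need to be attentive that every appeal to monotonicity inside this case is at an index strictly below $n$, and that the reindexing interacts correctly with $\xi^{-1}$ via the last bullet of Lemma~\ref{lem:n-equality-props}.
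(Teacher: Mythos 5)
Your proposal is correct and is essentially the paper's proof spelled out in full: the paper dismisses the lemma with ``follows from the definitions,'' noting only that the $\lrheap$ clause relies on Lemma~\ref{lem:oplus-future-distr}, which is exactly the partition-lifting step you identify. One small simplification: in the sealed cases the inner quantification over all $W'' \future W$ is closed under passing to $W'$ by transitivity of $\future$ alone, so the only place the induction hypothesis on $n$ is genuinely needed is in transporting the $H_\sigma$-membership conjunct, since the concrete $H^{\mathrm{code},\square}_\sigma$ invokes $\lrv$ at strictly smaller indices through $\xi^{-1}$.
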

\begin{proof}
  Follows from the definitions.
  Note: the proof for $\lrheap$ relies on Lemma~\ref{lem:oplus-future-distr}.
\end{proof}

\begin{lemma}[$\lror$ closed under target language antireduction]
  For all $\Phi_S$, $\Phi_T$, $\Phi_T'$, $j$, $n$, if
\[
  \Phi_T \nstep[j]{} \Phi_T' \text{ and } \npair[n-j]{(\Phi_S,\Phi_T')} \in \lror,
\]
then
\[
  \npair{(\Phi_S,\Phi_T)} \in \lror
\]
\end{lemma}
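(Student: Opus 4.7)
The plan is to prove this by unfolding the definition of $\lror$, which for $(n, (\Phi_S, \Phi_T))$ requires showing: for every $i \le n$, if $\Phi_T \term[i]$ then $\Phi_S \sterm{\gc}$. So fix any such $i$ with $\Phi_T \term[i]$, meaning $\Phi_T \nstep[i]{} \halted$. The goal reduces to producing $\Phi_S \sterm{\gc}$ using the hypothesis $\npair[n-j]{(\Phi_S, \Phi_T')} \in \lror$.

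The key move is exploiting determinism of the target step relation together with the fact that $\halted$ is a terminal configuration (it has no outgoing transitions). From $\Phi_T \nstep[j]{} \Phi_T'$ and $\Phi_T \nstep[i]{} \halted$, determinism forces exactly one of two cases: either $i \ge j$, in which case $\Phi_T' \nstep[i - j]{} \halted$, i.e.\ $\Phi_T' \term[i-j]$; or $i < j$, which is only possible when the $j$-step reduction must itself pass through $\halted$, and since $\halted$ does not step further, this collapses to $i = j$ and $\Phi_T' = \halted$, so $\Phi_T' \term[0]$.

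In either case, the resulting step count $i'$ (equal to $i - j$ or $0$) satisfies $i' \le n - j$, since $i \le n$. Instantiating the hypothesis $\npair[n-j]{(\Phi_S, \Phi_T')} \in \lror$ at this $i'$ then yields $\Phi_S \sterm{\gc}$, which is what we needed.

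The only subtlety worth flagging is the edge case $j > n$, where $n - j$ is nonpositive; the lemma statement implicitly handles this because either the hypothesis $\npair[n-j]{\cdot} \in \lror$ is vacuous at nonpositive indices or, more operationally, if $i \le n < j$ then $\Phi_T \nstep[i] \halted$ would have to extend to a $j$-step reduction, which is impossible unless $\Phi_T' = \halted$, putting us back in the second case above. No heavy machinery (world manipulations, monotonicity, memory satisfaction) is needed: the whole argument is a determinism/case-split on the relative magnitudes of $i$ and $j$, and I expect it to be a handful of lines. The main (mild) obstacle is simply writing the two cases precisely enough to make the appeal to determinism rigorous.
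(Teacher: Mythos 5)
Your proof is correct and matches the substance of the paper's argument: the paper derives this as a special case of a generalized antireduction lemma (Lemma~\ref{lem:lro-anti-red-gen}), whose proof is exactly the determinism-plus-step-counting argument you give, just stated for simultaneous source and target reduction. Your direct unfolding of $\lror$ and the case split on $i$ versus $j$ is the same idea applied immediately to the special case, so nothing is missing.
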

\begin{proof}
  Special case of Lemma~\ref{lem:lro-anti-red-gen}.
\end{proof}

\begin{lemma}[$\lrol$ closed under source language antireduction]
  For all $\Phi_S$, $\Phi_T$, $\Phi_T'$, $j$, $n$, if
\[
  \Phi_S \nstep[j]{\gc} \Phi_S' \text{ and } \npair[n-j]{(\Phi_S',\Phi_T)} \in \lrol,
\]
then
\[
  \npair{(\Phi_S,\Phi_T)} \in \lrol
\]
\end{lemma}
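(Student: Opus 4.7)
The plan is to unfold the definition of $\lrol$ and rely on two facts: determinism of the source-level step relation, and that $\halted$ is a terminal state. Fix $\Phi_S$, $\Phi_T$, $\Phi_S'$, $j$, $n$ with $\Phi_S \nstep[j]{\gc} \Phi_S'$ and $\npair[n-j]{(\Phi_S',\Phi_T)} \in \lrol$. To show $\npair{(\Phi_S,\Phi_T)} \in \lrol$, pick an arbitrary $i \leq n$ and assume $\Phi_S \sterm[i]{\gc}$; the goal is to derive $\Phi_T \term$.

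The key observation is that if $\Phi_S \nstep[i]{\gc} \halted$ and $\Phi_S \nstep[j]{\gc} \Phi_S'$, then by determinism of the source step relation we must have $j \le i$ (since $\halted \in \Conf \setminus \ExecConf$ has no successor, the two reduction paths cannot diverge before reaching $\halted$). Therefore $\Phi_S' \nstep[i-j]{\gc} \halted$, i.e., $\Phi_S' \sterm[i-j]{\gc}$. Because $i - j \le n - j$, the assumption $\npair[n-j]{(\Phi_S',\Phi_T)} \in \lrol$ instantiated at $i - j$ yields $\Phi_T \term$, which is exactly what we needed.

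I do not anticipate any real obstacle: the lemma is structurally a direct special case of the more general anti-reduction lemma referenced just above (the proof of the target-language counterpart is attributed to \textsc{Lemma~\ref{lem:lro-anti-red-gen}}), so in a cleaner write-up this lemma would likewise reduce to that general fact. The only subtlety is making sure that the source machine's step relation is deterministic and that $\halted$ (and $\failed$) are terminal; both follow by inspection of the operational semantics given in Section~\ref{sec:syntax}, where each non-failed, non-halted configuration has a unique successor under $\step[\ta,\stkb]$. Once these two properties are invoked, the proof is a one-line calculation on step indices.
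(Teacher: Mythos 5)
Your proof is correct and is essentially the paper's argument: the paper discharges this lemma as a "special case" of the general antireduction Lemma~\ref{lem:lro-anti-red-gen}, whose own proof is exactly your combination of determinism of the step relation, terminality of $\halted$, and the resulting shift $i \mapsto i-j$ in the index before appealing to the definition of $\lrol$. If anything, your direct unfolding is slightly sharper than the paper's route, since instantiating the general lemma with $j_T = 0$ only yields the conclusion at index $n-j+\min(j,0) = n-j$ rather than $n$, so spelling out the one-sided index accounting as you do is the cleaner way to get the stated bound.
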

\begin{proof}
  Special case of Lemma~\ref{lem:lro-anti-red-gen}.
\end{proof}

\begin{lemma}[$\lro$ closed under antireduction (generalised previous lemma)]
  \label{lem:lro-anti-red-gen}
  For all $\Phi_S$, $\Phi_S'$, $\Phi_T$, $\Phi_T'$, $j_S,j_T$, $n$, if
  \begin{itemize}
  \item $\Phi_S \nstep[j_S]{\gc} \Phi_S'$
  \item $\Phi_T \nstep[j_T]{} \Phi_T'$
  \item $\npair{(\Phi_S',\Phi_T')} \in \lro$
  \end{itemize}
  then
  \[
    \npair[n+\min(j_S,j_T)]{(\Phi_S,\Phi_T)} \in \lro
  \]
\end{lemma}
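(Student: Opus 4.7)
Since $\lro$ abbreviates both $\lrol$ and $\lror$ depending on the direction, my plan is to handle the two orderings symmetrically by unfolding the definition and invoking determinism of the step relations. The key quantitative observation is simply that $\min(j_S,j_T) \le j_S$ and $\min(j_S,j_T) \le j_T$, which is what makes the step budget work out.

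For the $\lrol$ case, I would fix an arbitrary $i \le n + \min(j_S,j_T)$ and assume $\Phi_S \sterm[i]{\gc}$, with the goal of showing $\Phi_T \term$. Since $\halted$ cannot be further reduced, the hypothesis $\Phi_S \nstep[j_S]{\gc} \Phi_S'$ together with determinism of the source operational semantics forces $i \ge j_S$ and $\Phi_S' \sterm[i - j_S]{\gc}$. Then $i - j_S \le n + \min(j_S,j_T) - j_S \le n$, so the hypothesis $\npair{(\Phi_S',\Phi_T')} \in \lrol$ applies and yields $\Phi_T' \term$. Concatenating with $\Phi_T \nstep[j_T]{} \Phi_T'$ gives $\Phi_T \term$ as required.

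The $\lror$ case is entirely symmetric: assume $\Phi_T \term[i]$ with $i \le n + \min(j_S,j_T)$, use determinism of the target semantics together with $\Phi_T \nstep[j_T]{} \Phi_T'$ to obtain $\Phi_T' \term[i - j_T]$ with $i - j_T \le n$ (using $\min(j_S,j_T) \le j_T$), apply $\npair{(\Phi_S',\Phi_T')} \in \lror$ to get $\Phi_S' \sterm{\gc}$, and prepend the $j_S$ source steps to conclude $\Phi_S \sterm{\gc}$.

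There is no real obstacle here: the whole argument is bookkeeping on step counts, modulo the determinism of both machines. The only subtlety worth flagging is the edge case where $i < j_S$ (respectively $i < j_T$); as noted above, this is ruled out by the very existence of the reduction $\Phi_S \nstep[j_S]{\gc} \Phi_S'$ (respectively $\Phi_T \nstep[j_T]{} \Phi_T'$), because $\halted$ is terminal and the trace to $\Phi_S'$ (respectively $\Phi_T'$) cannot pass through $\halted$. The two previous lemmas in the file are then immediate specialisations: taking $j_T = 0$ gives the source-side antireduction lemma, and taking $j_S = 0$ gives the target-side one.
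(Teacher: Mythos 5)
Your proof is correct and follows essentially the same route as the paper's: unfold the two observation relations, use determinism of both machines to transport terminating runs across the given reductions, and check that $\min(j_S,j_T)\le j_S$ and $\min(j_S,j_T)\le j_T$ make the step budgets work out. You are in fact slightly more explicit than the paper about why $i \ge j_S$ (resp.\ $i \ge j_T$) is forced, which is a welcome detail.
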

\begin{proof}
  Our two languages are deterministic\dominique{Lemma needed for this, I guess..}, so we have that $\Phi_S \sterm{\ta,\stkb}$ iff $\Phi_S \sterm{\ta,\stkb}$ and $\Phi_T \term$ iff $\Phi_T' \term$.
  It is also easy to show that if $\Phi_S \sterm[i + \min(j_S,j_T)]{\ta,\stkb}$, then $\Phi_S' \sterm[i]{\ta,\stkb}$ and if $\Phi_T \term[i+\min(j_S,j_T)]$, then $\Phi_T' \term[i]$.
  The result then follows easily by definition of $\lro[\preceq,\gc]$ and $\lro[\preceq,\gc]$.
\end{proof}

\begin{lemma}[Capability safety doesn't depend on address]
  \label{lem:cap-in-lrv-regardless-of-addr}
  If
  \[
    \npair{((\perm,\lin),\baddr,\eaddr,\aaddr),((\perm,\lin),\baddr,\eaddr,\aaddr)} \in \lrvg{\trust}(W)
  \]
  then
  \[
    \npair{((\perm,\lin),\baddr,\eaddr,\aaddr'),((\perm,\lin),\baddr,\eaddr,\aaddr')} \in \lrvg{\trust}(W)
  \]
\end{lemma}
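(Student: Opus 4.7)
The plan is to proceed by direct inspection of the definition of $\lrvg{\trust}$ on capabilities. The key observation is that in the definition of $\lrv(W)$ (and the extra clauses for $\lrvtrusted(W)$), the clause governing values of the form $((\perm,\lin),\baddr,\eaddr,\aaddr)$ imposes conditions only on $\perm$, $\lin$, $\baddr$, $\eaddr$ and the world $W$, but never references the current address $\aaddr$ itself. So the proof is purely syntactic: the predicate is constant in the $\aaddr$-component.

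More concretely, I would split on whether $\trust = \untrusted$ or $\trust = \trusted$. In the untrusted case, the relevant clause in $\lrv(W)$ requires $[\baddr,\eaddr]\mathrel{\#}\ta$, and then (depending on $\perm$) the predicates $\npair{[\baddr,\eaddr]}\in\readCond{\lin,W}$, $\npair{[\baddr,\eaddr]}\in\writeCond{\lin,W}$, $\npair{[\baddr,\eaddr]}\in\execCond{W}$, and $\npair{[\baddr,\eaddr]}\in\xReadCond{W}$, along with side conditions like $\perm\neq\rwx$ and $\lin = \normal$ when $\perm = \rx$. None of these depends on the address field. In the trusted case, the additional clause adds the constraints $\perm\sqsubseteq\rx$, $\lin=\normal$, $[\baddr,\eaddr]\subseteq \ta$, and $\npair{[\baddr,\eaddr]}\in\xReadCond[\square,\gc]{W}$, which again mention only $\perm$, $\lin$, $\baddr$, $\eaddr$, and $W$. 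Hence each side condition that holds for $\aaddr$ also holds verbatim for any $\aaddr'$.

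There is essentially no obstacle here; the statement is true by construction of the value relation on capabilities. The only subtlety is to make sure that the permission-based side conditions really are insensitive to the current address, which is immediate from inspecting $\readCond{}$, $\writeCond{}$, $\execCond{}$ and $\xReadCond{}$ in Section~\ref{sec:standard-regions}: all of them quantify over $[\baddr,\eaddr]$ and the world, but never consume an address pointer. Therefore the final proof will be a single line stating that the conditions in the definition of $\lrvg{\trust}$ do not mention $\aaddr$, so the claim follows by definition.
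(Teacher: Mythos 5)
Your proposal is correct and matches the paper's proof, which is simply ``direct from the definition of $\lrvg{\trust}$'': the clauses of the value relation for capabilities of the form $((\perm,\lin),\baddr,\eaddr,\aaddr)$ constrain only $\perm$, $\lin$, $[\baddr,\eaddr]$ and $W$, never the current-address field. Your case split on $\trust$ and the check that the permission-based conditions are address-insensitive is exactly the intended (one-line) argument, just spelled out in more detail.
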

\begin{proof}
  Direct form the definition of $\lrvg{\trust}$.
\end{proof}

\begin{lemma}[Stack capability safety doesn't depend on address]
\label{lem:stkptr-in-lrv-regardless-of-addr}
  If
  \[
    \npair{((\stkptr{\perm,\baddr,\eaddr,\aaddr},(\perm,\lin),\baddr,\eaddr,\aaddr))} \in \lrvg{\trust}(W)
  \]
  then
  \[
    \npair{(\stkptr{\perm,\baddr,\eaddr,\aaddr'},((\perm,\lin),\baddr,\eaddr,\aaddr'))} \in \lrvg{\trust}(W)
  \]
\end{lemma}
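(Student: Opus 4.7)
The plan is to proceed exactly as in Lemma~\ref{lem:cap-in-lrv-regardless-of-addr}, by direct unfolding of the definition of $\lrvg{\trust}$. Looking at the clause in $\lrv(W)$ that handles stack pointers, the membership condition for a pair $(\src{\stkptr{\perm,\baddr,\eaddr,\aaddr}},((\perm,\linear),\baddr,\eaddr,\aaddr))$ consists only of (i) a restriction on $\perm$ (namely $\perm \not\in \{\rx,\rwx\}$) and (ii) the implications that $\perm \in \readAllowed{}$ entails $\npair{[\baddr,\eaddr]} \in \stackReadCond{W}$ and $\perm \in \writeAllowed{}$ entails $\npair{[\baddr,\eaddr]} \in \stackWriteCond{W}$. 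None of these conditions mentions the current address component $\aaddr$ at all; they speak only of $\perm$, the bounds $[\baddr,\eaddr]$, and the world $W$.

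Consequently, since the assumption forces the target-side capability to be linear (the stack-pointer clause in $\lrv$ only admits a linear target), the same clause applies verbatim to the pair with $\aaddr'$ in place of $\aaddr$, giving $\npair{(\stkptr{\perm,\baddr,\eaddr,\aaddr'},((\perm,\linear),\baddr,\eaddr,\aaddr'))} \in \lrv(W)$. For the trusted case, note that $\lrvtrusted(W)$ extends $\lrv(W)$ only with extra clauses for (non-stack) seals and non-linear executable capabilities, so if the pair is already in $\lrv(W)$, it is in $\lrvtrusted(W)$ as well; conversely no additional clause in $\lrvtrusted$ matches a stack pointer, so $\npair{(\src{\stkptr{\dots}},\_)} \in \lrvtrusted(W)$ is in fact equivalent to membership in $\lrv(W)$.

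I do not expect any obstacle: the proof is a one-line definitional unfolding, mirroring Lemma~\ref{lem:cap-in-lrv-regardless-of-addr}. The only thing to double-check is that the stack-pointer case of the definition indeed fixes the target's linearity to $\linear$, so that the free variable $\lin$ in the statement is forced to be $\linear$ and need not be separately preserved when changing the address.
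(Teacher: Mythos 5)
Your proposal is correct and matches the paper's own proof, which simply states that the result follows directly from the definition of $\lrvg{\trust}$: the stack-pointer clause of $\lrv$ constrains only $\perm$, the bounds $[\baddr,\eaddr]$, and $W$, never the current address, and no extra clause of $\lrvtrusted$ matches a stack pointer. Your additional observation that the target's linearity is forced to be $\linear$ is also accurate and harmless.
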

\begin{proof}
  Direct from the definition of $\lrvg{\trust}$.
\end{proof}

\begin{lemma}[Seal safety doesn't depend on current seal]
  \label{lem:seal-in-lrv-regardless-of-cur-seal}
  If
  \[
    \npair{(\seal{\sigma_\baddr,\sigma_\eaddr,\sigma},\seal{\sigma_\baddr,\sigma_\eaddr,\sigma})} \in \lrvg{\trust}(W)
  \]
  then
  \[
    \npair{(\seal{\sigma_\baddr,\sigma_\eaddr,\sigma'},\seal{\sigma_\baddr,\sigma_\eaddr,\sigma'})} \in \lrvg{\trust}(W)
  \]
\end{lemma}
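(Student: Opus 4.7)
The plan is to prove this by direct inspection of the definition of $\lrvg{\trust}$ on sealed-range values. The statement asserts that membership of $\seal{\sigma_\baddr,\sigma_\eaddr,\sigma}$ in $\lrvg{\trust}(W)$ does not depend on the current seal offset $\sigma$, only on the range $[\sigma_\baddr,\sigma_\eaddr]$.

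First, I would case-split on whether $\trust = \untrusted$ or $\trust = \trusted$, since $\lrvg{\trust}$ is defined as the union of $\lrv$ and (in the trusted case) an additional clause. In the untrusted case, I would unfold the seal-clause of $\lrv$: the conditions are $[\sigma_\baddr,\sigma_\eaddr] \mathrel{\#} (\gsigrets \cup \gsigcloss)$ and the existence, for every $\sigma'' \in [\sigma_\baddr,\sigma_\eaddr]$, of a region $r \in \dom(\pwheap)$ with $\pwheap(r) = (\pure,\_,H_\sigma)$ and $H_\sigma\; \sigma'' \nequal (\lrv \circ \xi)$. None of these conditions mentions the current seal $\sigma$, so they continue to hold when $\sigma$ is replaced by $\sigma'$, giving the conclusion immediately.

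For the trusted case, I would inspect the extra clause in $\lrvtrusted$: it requires a region $r$ with $\pwheap(r) \nequal \codereg{\sigrets,\sigcloss,\code,(\ta,\stkb,\gsigrets,\gsigcloss)}$, $\dom(\code) \subseteq \ta$, $[\sigma_\baddr,\sigma_\eaddr] \subseteq (\sigrets \cup \sigcloss)$, and $\sigrets \subseteq \gsigrets$, $\sigcloss \subseteq \gsigcloss$. Again the current seal offset $\sigma$ does not appear, so if $\seal{\sigma_\baddr,\sigma_\eaddr,\sigma}$ satisfies these conditions, so does $\seal{\sigma_\baddr,\sigma_\eaddr,\sigma'}$. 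By Lemma~\ref{lem:untrusted-supset-trust}, the untrusted case additionally subsumes into the trusted relation through the same reasoning.

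There is no real obstacle here: the only subtlety is noting that the definition was deliberately crafted so that the current seal offset is irrelevant to safety, which mirrors the analogous Lemmas~\ref{lem:cap-in-lrv-regardless-of-addr} and~\ref{lem:stkptr-in-lrv-regardless-of-addr} for capabilities and stack pointers. I would simply conclude by stating that the proof follows directly from the definition of $\lrvg{\trust}$.
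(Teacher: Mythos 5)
Your proposal is correct and matches the paper's own argument, which simply states that the result follows directly from the definition of $\lrvg{\trust}$: the seal clauses of both the untrusted and trusted relations only constrain the range $[\sigma_\baddr,\sigma_\eaddr]$ and never mention the current seal offset. Spelling out the two cases as you do is a faithful elaboration of that one-line proof.
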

\begin{proof}
  Direct from the definition of $\lrvg{\trust}$.
\end{proof}

\begin{lemma}[Capability safety monotone w.r.t. permission]
  \label{lem:cap-in-lrv-mono-perm}
  If
  \[
    \npair{((\perm,\lin),\baddr,\eaddr,\aaddr),((\perm,\lin),\baddr,\eaddr,\aaddr)} \in \lrvg{\trust}(W)
  \]
  and
  \[
    \perm' \sqsubseteq \perm
  \]
  then
  \[
    \npair{((\perm',\lin),\baddr,\eaddr,\aaddr),((\perm',\lin),\baddr,\eaddr,\aaddr)} \in \lrvg{\trust}(W)
  \]
\end{lemma}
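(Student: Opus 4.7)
The plan is to proceed by case analysis on which clause of $\lrvg{\trust}(W)$ witnesses membership of the original capability. Since the word is of memory-capability form (not an integer, seal, sealed capability, or stack pointer), only two clauses can possibly apply: in the untrusted case $\lrv(W)$ the final clause for $((\perm,\lin),\baddr,\eaddr,\aaddr)$; in the trusted case $\lrvtrusted(W)$ additionally the extra clause for sub-$\rx$ trusted capabilities.

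Before the case analysis I would first record a handful of elementary monotonicity facts about the permission lattice from Figure~\ref{fig:perm-hier}: $\readAllowed{}$ and $\writeAllowed{}$ are upward closed in $\sqsubseteq$, so $\perm' \sqsubseteq \perm$ and $\perm' \in \readAllowed{}$ imply $\perm \in \readAllowed{}$, and likewise for $\writeAllowed{}$; since $\rwx$ is the top of the hierarchy, $\perm \neq \rwx$ forces $\perm' \neq \rwx$; and since the only strict upper bound of $\rx$ is $\rwx$, the combination $\perm' = \rx$ and $\perm \neq \rwx$ forces $\perm = \rx$.

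With these in hand, the $\lrv$-clause case is an item-by-item check against the five conjuncts in the definition. The $\ta$-disjointness of $[\baddr,\eaddr]$ is independent of $\perm$. The conditional $\readCond{\lin,W}$ and $\writeCond{\lin,W}$ requirements for $\perm'$ reduce to the same requirements for $\perm$ by upward closure, which hold by hypothesis. The non-$\rwx$-ness of $\perm'$ is immediate from the top-element observation. Finally, the $\perm' = \rx$ clause reduces by the third monotonicity fact to the $\perm = \rx$ clause of the hypothesis, delivering $\execCond{W}$, $\xReadCond{W}$ and $\lin = \normal$ verbatim. The trusted-only clause is handled even more easily: $\perm' \sqsubseteq \perm \sqsubseteq \rx$ gives $\perm' \sqsubseteq \rx$, and the remaining side-conditions $\gc = (\ta,\stkb,\gsigrets,\gsigcloss)$, $[\baddr,\eaddr]\subseteq \ta$, $\lin = \normal$, and $\npair{[\baddr,\eaddr]}\in\xReadCond[\square,\gc]{W}$ do not mention $\perm$.

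The only real hazard is keeping the direction of $\sqsubseteq$ consistent throughout: one must always use that a smaller permission carries less authority, so that preconditions on the weaker $\perm'$ imply the corresponding preconditions on the stronger $\perm$ (and therefore the hypothesis is applicable), never the other way around. Beyond that bookkeeping, the argument is a pure unfolding of definitions and does not need any of the more substantive lemmas about worlds or memory satisfaction.
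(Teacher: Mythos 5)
Your proposal is correct and matches the paper's argument, which simply states that the result is ``direct from the definition of $\sqsubseteq$ and $\lrvg{\trust}$''; your write-up is an explicit unfolding of exactly that definitional check. The monotonicity facts you isolate (upward closure of $\readAllowed{}$ and $\writeAllowed{}$, $\rwx$ being the top element, and $\perm'=\rx \wedge \perm\neq\rwx \Rightarrow \perm=\rx$) are precisely what makes each conjunct of the relevant clauses carry over, so nothing further is needed.
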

\begin{proof}
  Direct from the definition of $\sqsubseteq$ and $\lrvg{\trust}$.
\end{proof}

\begin{lemma}[Capability splitting retains safety for normal capabilities]
  \label{lem:splitting-safety-normal}
  If
  \begin{itemize}
  \item $\npair{(c,c)} \in \lrvg{\trust}(W)$
  \item $c = ((\perm,\lin),\baddr,\eaddr,\aaddr)$
  \item $b \le s < e$
  \item $c_1 = ((\perm,\lin),\baddr,s,\aaddr)$
  \item $c_2 = ((\perm,\lin),s+1,\eaddr,\aaddr)$
  \item $c_3 = \linCons{c}$
  \item $\memSat{\ms_S,\stk,\ms_\stk,\ms_T}{W \oplus \purePart{W_1 \oplus W_2 \oplus W_3}}$
  \end{itemize}
  then there exist $W_1,W_2$,$W_3$ such that
  \begin{itemize}
  \item $W = W_1 \oplus W_2 \oplus W_3$
  \item $\npair{(c_1,c_1)} \in \lrvg{\trust}(W_1)$
  \item $\npair{(c_2,c_2)} \in \lrvg{\trust}(W_2)$
  \item $\npair{(c_3,c_3)} \in \lrvg{\trust}(W_3)$
  \end{itemize}
\end{lemma}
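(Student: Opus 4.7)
The plan is to unfold $\npair{(c,c)} \in \lrvg{\trust}(W)$ into its constituent permission-based conditions ($\readCond$, $\writeCond$, $\execCond$, $\xReadCond$) depending on $\perm$ and $\trust$, split each of these using Lemma~\ref{lem:conds-splittable}, and then handle $c_3$ separately depending on whether $\lin$ is $\linear$ or $\normal$.

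First I would case-split on $\perm$ (and, in the trusted case, on whether $c$ falls in the sub-$\rx$ trusted branch with $[\baddr,\eaddr] \subseteq \ta$). For each active permission-based condition on $[\baddr,\eaddr]$, Lemma~\ref{lem:conds-splittable} provides a decomposition $W = W_1^{\circ} \oplus W_2^{\circ}$ separating the authority over $[\baddr,s]$ from that over $[s+1,\eaddr]$. The crucial point is that when several such conditions are simultaneously active (e.g.\ both $\readCond$ and $\writeCond$ for $\rw$, or both $\execCond$ and $\xReadCond$ for $\rx$), the splittings must be compatible. I would arrange this by unfolding the conditions to a single partition $S$ of addressable regions of $\pwheap$ covering the whole range, then uniformly assigning each region to $W_1$ or $W_2$ according to whether its address lies in $[\baddr,s]$ or $[s+1,\eaddr]$. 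This gives a common $W = W_1 \oplus W_2$ that simultaneously witnesses every splitted condition, and the value-relation memberships $\npair{(c_1,c_1)} \in \lrvg{\trust}(W_1)$ and $\npair{(c_2,c_2)} \in \lrvg{\trust}(W_2)$ follow by repacking, using $[b,e] \mathrel{\#} \ta$ (which is preserved by subranges) and Lemma~\ref{lem:conds-shrinkable} if needed for permissions whose ranges are defined on the whole $[\baddr,\eaddr]$.

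For $c_3$ I would case on $\lin$. If $\lin = \linear$, then $c_3 = \linCons{c} = 0$, which is an integer and thus in $\lrvg{\trust}(W_3)$ for any $W_3$; take $W_3 = \purePart{W}$. If $\lin = \normal$, then $c_3 = c$ is non-linear, and by Lemma~\ref{lem:non-linear-pure} we have $\npair{(c,c)} \in \lrv(\purePart{W}) \subseteq \lrvg{\trust}(\purePart{W})$ (using Lemma~\ref{lem:untrusted-supset-trust}), so again $W_3 = \purePart{W}$ works. The global decomposition $W = W_1 \oplus W_2 \oplus W_3$ then follows by combining $W = W_1 \oplus W_2$ with $W = W \oplus \purePart{W}$ (Lemma~\ref{lem:purePart-duplicable}) and associativity/commutativity of $\oplus$ (Lemma~\ref{lem:oplus-assoc-comm}).

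The main obstacle will be harmonizing the partitions coming from the different conditions: Lemma~\ref{lem:conds-splittable} states splittability for each condition in isolation, but its proof reveals that the partitions are determined by the addressable regions of $\pwheap$, so for fixed $W$ a common partition exists. The memory-satisfaction hypothesis is what guarantees this rigidity: it pins down the regions' memory footprints, so the same $S, R$ work for both the read and write views of a given range. I would either strengthen the proof of Lemma~\ref{lem:conds-splittable} with a joint version, or inline its proof here on the unified partition.
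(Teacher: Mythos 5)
Your proposal is correct and follows essentially the same route as the paper's proof: the linear case is settled by unfolding the read and write conditions to singleton-island partitions and using memory satisfaction plus address-stratification to force the two partitions to coincide (exactly the harmonization issue you flag and resolve), while the normal case reduces to pure, duplicable worlds via Lemma~\ref{lem:non-linear-pure} and Lemma~\ref{lem:conds-shrinkable}, and $c_3$ is handled by the same linearity case split ($0$ vs.\ the original non-linear capability in a pure world). The only cosmetic difference is that in the normal case the paper assigns all of $W$ to $W_3$ and $\purePart{W}$ to $W_1,W_2$, whereas you split $W$ between $W_1,W_2$; since everything there is pure and duplicable, the two choices are interchangeable.
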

\begin{proof}

  If $\lin = \normal$, then we pick $W_3 = W$ and $W_1 = W_2 = \purePart{W_3}$ which easily satisfies $W = W_1 \oplus W_2 \oplus W_3$.

  Assuming $\gc = (\ta,\gc,\gsigrets,\gsigcloss)$, we know by assumption $\npair{(c,c)} \in \lrvg{\trust}(W)$ that either
\begin{itemize}
\item $\ta \subseteq [\baddr,\eaddr]$; or
\item $\ta \# [\addr,\eaddr]$
\end{itemize}
In the former case, the result follows from Lemma~\ref{lem:non-linear-pure} and Lemma~\ref{lem:conds-shrinkable}.

In the latter case, we know that $[\baddr,s],[s+1,\eaddr] \neq \emptyset$ because $b \le s < e$, so the result follows from Lemma~\ref{lem:non-linear-pure} and Lemma~\ref{lem:conds-shrinkable}.

If $\lin = \linear$, then 
we know from $\npair{(c,c)} \in \lrvg{\trust}(W)$ and w.l.o.g we can assume
\begin{itemize}
\item $\npair{[\baddr,\eaddr]} \in \readCond{\linear,W}$
\item $\npair{[\baddr,\eaddr]} \in \writeCond{\linear,W}$
\end{itemize}
from this, we get $S_{\var{read}} \subseteq \addressable{\linear,\pwheap{W}}$ and
$R_{\var{read}} : S_{\var{read}} \fun \Worlds$ such that
\begin{itemize}
\item $\biguplus_{r\in S_{\var{read}}} R_{\var{read}}(r) \supseteq A$
\item $\forall r\ldotp |R_{\var{read}}(r)|  = 1$
\item $\forall r \in S_{\var{read}} \ldotp \pwheap(r).H \nsubeq \stdreg{R_{\var{read}}(r),\gc}{\pur}.H$
\end{itemize}
and
$S_{\var{write}} \subseteq \addressable{\linear,\pwheap{W}}$ and
$R_{\var{write}} : S_{\var{write}} \fun \Worlds$ such that
\begin{itemize}
\item $\biguplus_{r\in S_{\var{write}}} R_{\var{write}}(r) \supseteq A$
\item $\forall r\ldotp |R_{\var{write}}(r)|  = 1$
\item $\forall r \in S_{\var{write}} \ldotp \pwheap(r).H \nsupeq
  \stdreg{R_{\var{write}}(r),\gc}{\pur}.H \wedge \pwheap(r) $ is address-stratified
\end{itemize}

Now we would like to show $R_{\var{read}}^{-1}([\baddr,\eaddr]) = R_{\var{write}}^{-1}([\baddr,\eaddr])$.
We know that the two sets are the same size as both $R$'s map to singleton sets.
This means that there exists $r \neq r'$ s.t.\ $R_{\var{read}}(r) =
R_{\var{write}}(r') = \{\aaddr'\}$ for $\aaddr' \in [\baddr,\eaddr]$.

By definition of the standard region, we know that for any $\hat{W}$ and any $\ms$ and $\ms'$ where $\npair[n-1]{(\ms,\ms')} \in \pwheap(r)$ we have $\dom(\ms)= \dom(\ms') = \{a'\}$.

By assumption $\pwheap(r)$ is address-stratified which means that for any $\hat{W}$ and any $\ms$ and $\ms'$ where $\npair[n-1]{(\ms,\ms')} \in \pwheap(r)$ we have $\dom(\ms)e= \dom(\ms') = \{a'\}$.

By the memory satisfaction, the memory must be split into disjointed parts that
each satisfy a region. With two regions that require a memory segment pair with
the same domain, we cannot satisfy all the regions, so we must have

Now pick $W_1$ as the world that owns $R_{\var{read}}^{-1}([\baddr,s])$, $W_2$
the world that owns $R_{\var{read}}^{-1}([s+1,\eaddr])$, and $W_3$ as the world
that owns the remaining regions of $W$.

It is clearly the case that $W = W_1 \oplus W_2 \oplus W_3$.

In this case, we need to show
\[
  \npair{(0,0)} \in \lrvg{\trust}(W_3)
\]
which is trivially the case.

For the remaining, it suffices to show
\begin{itemize}
\item $\npair{[\baddr,s]} \in \readCond{\linear,W_1}$
\item $\npair{[\baddr,s]} \in \writeCond{\linear,W_1}$
\item $\npair{[s+1,\eaddr]} \in \readCond{\linear,W_2}$
\item $\npair{[s+1,\eaddr]} \in \writeCond{\linear,W_2}$
\end{itemize}
which follows by assumption.
\end{proof}

\begin{lemma}[Capability splitting retains safety for stack capabilities]
  \label{lem:splitting-safety-stack}
  If
  \begin{itemize}
  \item $\npair{(c_S,c_T)} \in \lrvg{\trust}(W)$
  \item $c_S = \stkptr{\perm,\baddr,\eaddr,\aaddr}$
  \item $c_T = ((\perm,\linear),\baddr,\eaddr,\aaddr)$
  \item $b \le n \le e$
  \item $c_{1,S} = \stkptr{\perm,\baddr,n,\aaddr}$
  \item $c_{1,T} = ((\perm,\linear),\baddr,n,\aaddr)$
  \item $c_{2,S} = \stkptr{\perm,n,\eaddr,\aaddr}$
  \item $c_{2,T} = ((\perm,\linear),n,\eaddr,\aaddr)$
  \item $c_3 = 0$
  \end{itemize}
  then there exist $W_1,W_2$,$W_3$ such that
  \begin{itemize}
  \item $W = W_1 \oplus W_2 \oplus W_3$
  \item $\npair{(c_{1,S},c_{1,T})} \in \lrvg{\trust}(W_1)$
  \item $\npair{(c_{2,S},c_{2,T})} \in \lrvg{\trust}(W_2)$
  \item $\npair{(c_3,c_3)} \in \lrvg{\trust}(W_3)$
  \end{itemize}
\end{lemma}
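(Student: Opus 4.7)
The plan is to mirror the structure of Lemma~\ref{lem:splitting-safety-normal}, replacing the heap-based $\readCond{}$ and $\writeCond{}$ by their free-stack analogues $\stackReadCond{}$ and $\stackWriteCond{}$, and distributing ownership inside $\pwfree$ rather than $\pwheap$. The situation is in fact somewhat simpler than the normal-capability case because stack capabilities are always linear, so there is no ``normal / pure'' sub-case that has to be treated separately via Lemma~\ref{lem:non-linear-pure}.

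First I would unfold $\npair{(c_S,c_T)} \in \lrvg{\trust}(W)$ using the stack-pointer clause of $\lrv$: this gives $\perm \notin \{\rx,\rwx\}$ and, according to $\perm$, either $\npair{[\baddr,\eaddr]} \in \stackReadCond{W}$ and/or $\npair{[\baddr,\eaddr]} \in \stackWriteCond{W}$. Each of these conditions supplies a subset $S \subseteq \addressable{\linear,\pwfree}$ together with a singleton-valued map $R : S \fun \powerset{\nats}$ whose image covers $[\baddr,\eaddr]$, and for every $r \in S$ the region $\pwfree(r).H$ is $n$-sub (resp.\ $n$-super) equal to $\stdreg{R(r),\gc}{\pur}.H$.

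Next I would show that when both witnesses are present their region assignments must coincide on $[\baddr,\eaddr]$: if two distinct regions $r \neq r'$ both claimed the same address $a$, one as a read witness and the other as a write witness, then the address-stratification of the write region combined with the disjointness of $\pwfree$'s realisation in any satisfying free-stack memory forces a contradiction, just as in the linear branch of Lemma~\ref{lem:splitting-safety-normal}. This lets me work with a single family $\{r_a\}_{a \in [\baddr,\eaddr]} \subseteq \addressable{\linear,\pwfree}$. I then define $W_1$ to hold $\spatialo$-ownership of $\{r_a : a \in [\baddr,n]\}$, $W_2$ to hold $\spatialo$-ownership of $\{r_a : a \in [n+1,\eaddr]\}$ (with each of $W_1, W_2$ carrying the other's regions as $\spatial$ shadows), and $W_3$ to own everything else in $\pwfree$ plus the full $\pwheap$ and $\pwpriv$ (with $W_1, W_2$ carrying purified shadows of these). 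Associativity/commutativity of $\oplus$ (Lemma~\ref{lem:oplus-assoc-comm}) together with the construction gives $W = W_1 \oplus W_2 \oplus W_3$. Membership $(0,0) \in \lrvg{\trust}(W_3)$ is immediate since $0 \in \ints$, and membership of $(c_{i,S}, c_{i,T})$ in $\lrvg{\trust}(W_i)$ follows by re-assembling $\stackReadCond{}$ and $\stackWriteCond{}$ on the halved partition, using Lemma~\ref{lem:stkptr-in-lrv-regardless-of-addr} to move the current address if required and inheriting $\perm \notin \{\rx,\rwx\}$ unchanged.

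The main obstacle is the region-coincidence step. As in Lemma~\ref{lem:splitting-safety-normal}, the argument that the read and write witnesses must pick the same linear region for each stack address requires address-stratification together with a memory-satisfaction assumption describing how $\pwfree$ is realised. The present statement does not carry such a hypothesis explicitly, so the most natural completion is to add a $\memSatFStack{\ms_\stk,\ms_T}{W \oplus \purePart{W_1 \oplus W_2 \oplus W_3}}$ assumption mirroring the one in Lemma~\ref{lem:splitting-safety-normal}, or equivalently to rely on the fact that every caller of this lemma invokes it in a context where such a free-stack memory satisfaction is already in scope and can be threaded through.
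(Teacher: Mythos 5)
Your proposal matches the paper's proof, which is given only as ``Similar to the proof of Lemma~\ref{lem:splitting-safety-normal}'': the intended argument is exactly the transposition you describe, replacing $\readCond{}$/$\writeCond{}$ on $\pwheap$ by $\stackReadCond{}$/$\stackWriteCond{}$ on $\pwfree$ and redistributing $\spatialo$-ownership of the per-address regions across the split. Your observation that the stated hypotheses omit the free-stack memory-satisfaction assumption needed for the region-coincidence step (present in the normal-capability version) is correct, and adding or threading through that assumption is the right repair.
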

\begin{proof}
  Similar to the proof of Lemma~\ref{lem:splitting-safety-normal}.
\end{proof}

\begin{lemma}[Capability splicing retains safety for normal capabilities]
  \label{lem:splicing-safety-normal}
  If
  \begin{itemize}
  \item $\npair{(c_1,c_1)} \in \lrvg{\trust}(W_1)$
  \item $\npair{(c_2,c_2)} \in \lrvg{\trust}(W_2)$
  \item $c_1 = ((\perm,\lin),\baddr,m,\aaddr)$
  \item $c_2 = ((\perm,\lin),m+1,\eaddr,\aaddr)$
  \item $c = ((\perm,\lin),\baddr,\eaddr,\aaddr)$
  \item $b \le m \le e$
  \item $c_1' = \linCons{c_1}$
  \item $c_2' = \linCons{c_2}$
  \item $W_1\oplus W_2 \oplus W_M$ is defined
  \item $\memSat{\ms_S,\stk,\ms_\stk,\ms_T}{W_M}$
  \end{itemize}
  then there exist $W_1',W_2',W_3'$ such that
  \begin{itemize}
  \item $W_1 \oplus W_2 = W_1' \oplus W_2' \oplus W_3'$
  \item $\npair{(c,c)} \in \lrvg{\trust}(W_3')$
  \item $\npair{(c_1',c_1')} \in \lrvg{\trust}(W_1')$
  \item $\npair{(c_2',c_2')} \in \lrvg{\trust}(W_2')$
  \end{itemize}
\end{lemma}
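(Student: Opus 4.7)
The plan is to proceed by case analysis on whether $\lin = \normal$ or $\lin = \linear$, since the two cases differ in what $c_1', c_2'$ are (the originals versus $0$) and in which combinations of worlds are available through duplication. Either way, I would take $W_3' := W_1 \oplus W_2$ and carry the bulk of the work through $\lrvg{\trust}$ membership for the spliced capability $c$. The overall structure parallels the dual lemma \ref{lem:splitting-safety-normal}, but with Lemma~\ref{lem:conds-splicable} used in place of Lemma~\ref{lem:conds-shrinkable}.

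In the case $\lin = \linear$, we have $c_1' = c_2' = 0$ so $\npair{(c_i',c_i')} \in \lrvg{\trust}(W)$ holds trivially for any $W$. I would then take $W_1' = W_2' := \purePart{W_1 \oplus W_2}$, so that by Lemmas~\ref{lem:purePart-duplicable} and~\ref{lem:purePart-idempotent} one gets $W_1 \oplus W_2 = W_3' \oplus W_1' \oplus W_2'$. In the case $\lin = \normal$, I would instead take $W_1' := \purePart{W_1}$ and $W_2' := \purePart{W_2}$; the decomposition then follows from Lemmas~\ref{lem:purePart-duplicable}, \ref{lem:purePart-oplus}, and \ref{lem:oplus-assoc-comm}, and Lemma~\ref{lem:non-linear-pure} yields $\npair{(c_1,c_1)} \in \lrv(W_1') = \lrv(\purePart{W_1})$ (and likewise for $c_2$), which upgrades to $\lrvg{\trust}$ via Lemma~\ref{lem:untrusted-supset-trust}.

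What remains in both cases is to show $\npair{(c,c)} \in \lrvg{\trust}(W_3')$. I would split this according to which clause of $\lrvg{\trust}$ each of $c_1,c_2$ satisfies. When both lie in the untrusted clause of $\lrv$, I would read off from the hypotheses the corresponding $\readCond{\lin,W_i}$, $\writeCond{\lin,W_i}$, and (when $\perm = \rx$) $\execCond{W_i}$ and $\xReadCond{W_i}$ for $[\baddr,m]$ and $[m+1,\eaddr]$ together with $[\baddr,m],[m+1,\eaddr] \mathrel{\#} \ta$, and then invoke Lemma~\ref{lem:conds-splicable} to obtain the same conditions on $[\baddr,\eaddr]$ at world $W_1 \oplus W_2 = W_3'$, concluding $\npair{(c,c)} \in \lrv(W_3')$. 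The invocation for $\xReadCond$ uses the memory-satisfaction hypothesis $\memSat{\ms_S,\stk,\ms_\stk,\ms_T}{W_M}$ together with $W_1 \oplus W_2 \oplus W_M$ being defined. When instead both $c_1,c_2$ live in the trusted code clause of $\lrvtrusted$ (so $\lin = \normal$, $\perm \sqsubseteq \rx$, and $[\baddr,m],[m+1,\eaddr] \subseteq \ta$), I would again apply Lemma~\ref{lem:conds-splicable} for $\xReadCond$ and land $c$ in the trusted clause directly.

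The main obstacle I anticipate is the ``mixed'' configuration for $\trust = \trusted$ where one of $c_1,c_2$ sits in the untrusted clause (forcing $[\_,\_] \mathrel{\#} \ta$) and the other in the trusted code-capability clause (forcing $[\_,\_] \subseteq \ta$); the spliced range would then straddle the $\ta$-boundary and fit neither clause of $\lrvtrusted$ cleanly. I expect this mixed case to be ruled out by the memory-satisfaction hypothesis and the $\xReadCond$ combination argument inside Lemma~\ref{lem:conds-splicable}: the code region that witnesses $\xReadCond$ must be a single $\codereg{}$-region whose domain is either entirely in $\ta$ or entirely disjoint from it (by well-formedness of the code-region invariant), which forces the two halves to be in the same clause. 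Verifying this reduction carefully---possibly via Lemma~\ref{lem:xReadCond-outside-ta-implies-readCond} to push a mixed configuration back into the untrusted clause when the non-$\ta$ half dominates---is the step I expect to require the most care.
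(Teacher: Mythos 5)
Your overall plan is essentially the paper's: the same key lemmas (Lemma~\ref{lem:conds-splicable}, Lemma~\ref{lem:non-linear-pure}, and the $\purePart{}$ lemmas) dispatch the unmixed cases, and your world assignments (e.g.\ $W_3' = W_1\oplus W_2$ with $W_1',W_2'$ pure, or the symmetric variant) are interchangeable with the paper's choices. You also correctly isolate the crux: the mixed configuration where one of $[\baddr,m]$, $[m+1,\eaddr]$ is contained in $\ta$ and the other is disjoint from it. But neither of your two proposed mechanisms for eliminating that configuration works. The first relies on both halves having $\xReadCond{}$ witnesses that can be forced to coincide via the combination argument of Lemma~\ref{lem:conds-splicable}; however, the half sitting in the untrusted clause of $\lrv$ only carries an $\xReadCond{}$ obligation when $\perm = \rx$, while the trusted clause admits any $\perm \sqsubseteq \rx$. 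Since $c_1$ and $c_2$ share the same $\perm$, for $\perm \in \{\noperm,\readonly\}$ the untrusted half contributes no code-region witness to compare against, and the argument stalls. The fallback via Lemma~\ref{lem:xReadCond-outside-ta-implies-readCond} cannot rescue it either: if the mixed case were realizable, the spliced range $[\baddr,\eaddr]$ would be neither contained in nor disjoint from $\ta$, so $c$ would satisfy no clause of $\lrvg{\trusted}$ and the lemma would simply be false. The mixed case must be shown impossible, not accommodated.

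The paper's elimination uses only the trusted half's witness and the padding built into code regions. Suppose $[\baddr,m]\subseteq\ta$. The trusted clause gives $\npair{[\baddr,m]}\in\xReadCond{W_1}$, whose witnessing region is $n$-equal to some $\codereg{\_,\_,\code,\gc}$ with $\dom(\code) = [\baddr_\code,\eaddr_\code] \supseteq[\baddr,m]$. The definition of $H^\mathrm{code}$ (whose inhabitation is supplied by the memory-satisfaction hypothesis) requires either $[\baddr_\code-1,\eaddr_\code+1]\subseteq\ta$ or $[\baddr_\code-1,\eaddr_\code+1]\mathrel{\#}\ta$; the second disjunct is excluded because $[\baddr,m]$ is non-empty and lies in $\ta$. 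Hence $[\baddr_\code-1,\eaddr_\code+1]\subseteq\ta$, and since $m\in\dom(\code)$ the adjacent address $m+1$ is in $\ta$, contradicting $[m+1,\eaddr]\mathrel{\#}\ta$ (symmetrically for the other orientation). This padding argument is the one idea your plan is missing; the rest goes through as you describe.
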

\begin{proof}
  From $\npair{(c_1,c_1)} \in \lrvg{\trust}(W_1)$ and $\npair{(c_2,c_2)} \in \lrvg{\trust}(W_2)$, it follows that either ($[b,m] \mathrel{\#} \ta$ or $[b,m] \subseteq \ta$) and also either ($[m+1,e] \mathrel{\#} \ta$ or $[m+1,e] \subseteq \ta$).

  Consider first the case where either $[b,m] \subseteq \ta$ or $[m+1,e]\subseteq \ta$.
  Then by definition of $\lrvg{\trust}$, we have that $\lin = \normal$, $\trust = \trusted$ and $\npair{[b,m]} \in \xReadCond{W_1}$ or $\npair{[m+1,e]} \in \xReadCond{W_2}$, respectively.
  It follows by definition of $\xReadCond{}$ and $\codereg{}$ that $[\baddr-1,m+1]$ or $[m,\eaddr+1] \subseteq \ta$, respectively, so that it is impossible that $[m+1,\eaddr]\mathrel{\#} \ta$ or $[\baddr,m] \mathrel{\#} \ta$ respectively.
  In other words, we must have that both $[b,m] \subseteq \ta$ and $[m+1,e]\subseteq \ta$.

  From Lemma~\ref{lem:conds-splicable}, it follows that also $\npair{[\baddr,\eaddr]}\in\xReadCond{W_1\oplus W_2}$ and from Lemma~\ref{lem:non-linear-pure}, it follows that $\npair{[\baddr,\eaddr]}\in\xReadCond{}(\purePart{W_1\oplus W_2})$.
  Finally, we have that $\npair{(c,c)}\in\lrvg{\trust}(\purePart{W_1\oplus W_2})$.
  We can take $W_3' = \purePart{W_1\oplus W_2}$, $W_1' = W_1$ and $W_2' = W_2$ and the remaining proof obligations follow by assumption and by Lemma~\ref{lem:purePart-duplicable} and~\ref{lem:oplus-assoc-comm}.
  
  Now consider the case that both $[b,m] \mathrel{\#} \ta$ and $[m+1,e]\mathrel{\#} \ta$.
  The results now follow by definition of $\lrv$, using Lemma~\ref{lem:conds-splicable} and~\ref{lem:non-linear-pure}, taking $W_3' = W_1 \oplus W_2$ and $W_1' = \purePart{W_1}$ and $W_2' = \purePart{W_2}$.
\end{proof}

\begin{lemma}[Capability splicing retains safety for stack capabilities]
  \label{lem:splicing-safety-stack}
  If
  \begin{itemize}
  \item $\npair{(c_{1,S},c_{1,T})} \in \lrvg{\trust}(W_1)$
  \item $\npair{(c_{2,S},c_{2,T})} \in \lrvg{\trust}(W_2)$
  \item $b \le m \le e$
  \item $c_{1,S} = \stkptr{\perm,\baddr,m,\aaddr}$
  \item $c_{1,T} = ((\perm,\linear),\baddr,m,\aaddr)$
  \item $c_{2,S} = \stkptr{\perm,m+1,\eaddr,\aaddr}$
  \item $c_{2,T} = ((\perm,\linear),m+1,\eaddr,\aaddr)$
  \item $c_S = \stkptr{\perm,\baddr,\eaddr,\aaddr}$
  \item $c_T = ((\perm,\linear),\baddr,\eaddr,\aaddr)$
  \end{itemize}
  then we have that
  \begin{itemize}
  \item $\npair{(c_S,c_T)} \in \lrvg{\trust}(W_1 \oplus W_2)$
  \end{itemize}
\end{lemma}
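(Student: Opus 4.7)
The plan is to reduce to the previously proven splicing lemma for the permission-based conditions. By Lemma~\ref{lem:untrusted-supset-trust}, it suffices to establish $\npair{(c_S,c_T)} \in \lrv(W_1 \oplus W_2)$, since the stack-pointer case of the value relation does not depend on $\trust$; the trusted extension $\lrvtrusted$ only adds clauses for (non-stack) seals and executable normal capabilities. So the hypotheses $\npair{(c_{i,S},c_{i,T})} \in \lrvg{\trust}(W_i)$ must come from the stack-pointer clause of $\lrv$, giving $\perm \not\in \{\rx,\rwx\}$ together with $\npair{[\baddr,m]} \in \stackReadCond{W_1}$ and $\npair{[m+1,\eaddr]} \in \stackReadCond{W_2}$ when $\perm \in \readAllowed{}$, and analogously for $\stackWriteCond{}$ when $\perm \in \writeAllowed{}$.

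First I would observe that $[\baddr,\eaddr] = [\baddr,m] \uplus [m+1,\eaddr]$ since $b \le m < m+1 \le e$, so the relevant splicing hypothesis of Lemma~\ref{lem:conds-splicable} is satisfied. Applying the stack-read and stack-write cases of that lemma directly yields $\npair{[\baddr,\eaddr]} \in \stackReadCond{W_1\oplus W_2}$ and $\npair{[\baddr,\eaddr]} \in \stackWriteCond{W_1\oplus W_2}$, respectively, under the matching permission assumptions.

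Then I would reassemble: the pair $\stpair{\stkptr{\perm,\baddr,\eaddr,\aaddr}}{((\perm,\linear),\baddr,\eaddr,\aaddr)}$ fits the stack-pointer clause of $\lrv(W_1 \oplus W_2)$ exactly, since $\perm\not\in\{\rx,\rwx\}$ is inherited verbatim and the spliced conditions supply the remaining obligations. Containment in $\lrvg{\trust}(W_1\oplus W_2)$ then follows by Lemma~\ref{lem:untrusted-supset-trust}. The statement implicitly requires $W_1 \oplus W_2$ to be defined for the conclusion to make sense; unlike the normal-capability version (Lemma~\ref{lem:splicing-safety-normal}), we need no separate memory-satisfaction assumption, because the stack clauses of Lemma~\ref{lem:conds-splicable} are formulated without such a precondition.

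The main (mild) obstacle is checking that the splicing lemma's stack cases indeed cope with the fact that $\oplus$ forbids two $\spatialo$ regions at the same name: this is fine because the singleton requirement $|R(r)|=1$ on both sides and the disjointness $[b,m]\cap[m+1,e]=\emptyset$ ensure that the witnessing regions from $W_1$ and $W_2$ address disjoint singletons, so the union of the two $S$'s and $R$'s produces the required witness in $\pwfree[W_1\oplus W_2]$. Beyond this, the proof is essentially a one-line appeal to Lemma~\ref{lem:conds-splicable}.
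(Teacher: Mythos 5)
Your proof is correct and follows essentially the same route as the paper, whose entire argument is ``follows easily by definition of $\lrv$ and using Lemma~\ref{lem:conds-splicable}''; you have merely made explicit the unfolding of the stack-pointer clause, the observation that $\lrvtrusted$ adds nothing for stack pointers, and the recombination of the region witnesses. No gaps worth noting.
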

\begin{proof}
  This follows easily by definition of $\lrv$ and using Lemma~\ref{lem:conds-splicable}.
\end{proof}

\begin{lemma}[Stack capability safety monotone w.r.t. permission]
\label{lem:stkptr-in-lrv-mono-perm}
  If
  \[
    \npair{(\stkptr{\perm,\baddr,\eaddr,\aaddr},((\perm,\lin),\baddr,\eaddr,\aaddr))} \in \lrvg{\trust}(W)
  \]
  and
  \[
    \perm' \sqsubseteq \perm
  \]
  then
  \[
    \npair{(\stkptr{\perm',\baddr,\eaddr,\aaddr},((\perm',\lin),\baddr,\eaddr,\aaddr))} \in \lrvg{\trust}(W)
  \]
\end{lemma}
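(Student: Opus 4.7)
The plan is to mirror the proof of Lemma~\ref{lem:cap-in-lrv-mono-perm} almost verbatim, exploiting the fact that the stack-pointer case of $\lrvg{\trust}$ is structured analogously to the untrusted linear-capability case. First I would unfold the assumption $\npair{(\stkptr{\perm,\baddr,\eaddr,\aaddr},((\perm,\lin),\baddr,\eaddr,\aaddr))} \in \lrvg{\trust}(W)$ using the stack-pointer clause of $\lrv$ (stack pointers never appear in the extra trusted clauses, so $\trust$ is irrelevant), obtaining the three facts $\perm \notin \{\rx,\rwx\}$, and the conditional memberships $\npair{[\baddr,\eaddr]} \in \stackReadCond{W}$ if $\perm \in \readAllowed{}$, and $\npair{[\baddr,\eaddr]} \in \stackWriteCond{W}$ if $\perm \in \writeAllowed{}$.

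Then I would re-derive the same three facts for $\perm'$. The condition $\perm' \notin \{\rx,\rwx\}$ follows because the permission hierarchy of Figure~\ref{fig:perm-hier} has $\{\rx,\rwx\}$ as an upward-closed set, so $\perm' \sqsubseteq \perm \notin \{\rx,\rwx\}$ forces $\perm' \notin \{\rx,\rwx\}$. For the read/write implications, both $\readAllowed{}$ and $\writeAllowed{}$ are upward-closed subsets of $\Perm$, so if $\perm' \in \readAllowed{}$ (respectively $\writeAllowed{}$) then $\perm \in \readAllowed{}$ (respectively $\writeAllowed{}$), and the corresponding stackReadCond/stackWriteCond membership is already known from the hypothesis. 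Since the $W$ on either side of the implication is the same, no world manipulation is required.

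There is essentially no main obstacle here; the only thing to be slightly careful about is that the stack-pointer clause excludes $\rx$ and $\rwx$ outright, so I do not need to reason about $\execCond{}$ or $\xReadCond{}$ as in the general heap-capability case. I would conclude by folding the three derived facts back into the definition of $\lrvg{\trust}$ to obtain the desired membership
\[
  \npair{(\stkptr{\perm',\baddr,\eaddr,\aaddr},((\perm',\lin),\baddr,\eaddr,\aaddr))} \in \lrvg{\trust}(W).
\]
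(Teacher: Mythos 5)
Your proposal is correct and matches the paper's approach: the paper's proof is simply ``Follows directly from the definition,'' and your unfolding of the stack-pointer clause of $\lrv$ together with the upward-closure of $\readAllowed{}$, $\writeAllowed{}$, and $\{\rx,\rwx\}$ in the permission lattice is exactly the detail that justification elides. The observation that the trusted clauses never mention stack pointers, so $\trust$ is irrelevant, is also the right reason no further case analysis is needed.
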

\begin{proof}
  Follows directly from the definition.
\end{proof}

\begin{lemma}[readCondition works]
  \label{lem:readcond-writecond-work}
  If
  \begin{itemize}
  \item $\memSat{(\ms_S,\stk,\ms_\stk,\ms_T)}{W_M}$
  \item $\npair{(b,e)}\in\readCond{l,W}$
  \item $a \in [b,e]$
  \item $W \oplus W_M$ is defined
  \item $n' < n$
  \end{itemize}
  Then $\npair[n']{(\ms_S(a),\ms_T(a))} \in \lrv(W')$ for some $W'$ such that $W_M = W' \oplus W_M'$.
  Additionally, if
  \begin{itemize}
  \item $\npair{(b,e)}\in\writeCond{l,W}$
  \end{itemize}
  Then $\memSat{(\ms_S[a\mapsto 0],\stk,\ms_\stk,\ms_T[a\mapsto 0])}{W_M'}$.
\end{lemma}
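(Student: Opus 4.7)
The plan is to combine the witnesses supplied by $\readCond{}$ (and $\writeCond{}$) with the partition provided by memory satisfaction in order to carve out the authority for the single address $a$ from the heap component of $W_M$. First I would unfold $\npair{(b,e)} \in \readCond{l,W}$ to obtain a set $S \subseteq \addressable{l,\pwheap[W]}$ and $R : S \fun \powerset{\nats}$ covering $[b,e]$, together with $\pwheap[W](r).H \nsubeq \stdreg{R(r),\gc}{\pur}.H$ for every $r \in S$; pick $r_0 \in S$ with $a \in R(r_0)$ (in the linear case $R(r_0) = \{a\}$). The assumption that $W \oplus W_M$ is defined forces the region interpretations at $r_0$ to agree, so $\pwheap[W_M](r_0).H = \pwheap[W](r_0).H$, and $r_0$ is active in $W_M$ (as $\pure$ if $l=\normal$, as $\spatial$ if $l=\linear$).

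Next I would unfold $\memSat{\ms_S,\stk,\ms_\stk,\ms_T}{W_M}$ to obtain $W_M = W_\var{stack} \oplus W_\var{free\_stack} \oplus W_\var{heap}$, and via $\lrheap(\pwheap[W_M])(W_\var{heap})$ partitions $R_\ms, R_W$ with $W_\var{heap} = \oplus_r R_W(r)$ and $\npair{R_\ms(r_0)} \in \pwheap[W_M](r_0).H~\xi^{-1}(R_W(r_0))$. Applying the $\nsubeq$ at step count $n' < n$ lands me in $\stdreg{R(r_0),\gc}{\pur}.H~\xi^{-1}(R_W(r_0))$; unfolding the standard region gives a decomposition $R_W(r_0) = \oplus_{a' \in R(r_0)} S'(a')$ and, for each $a'$, $\npair[n']{(\pi_1 R_\ms(r_0)(a'), \pi_2 R_\ms(r_0)(a'))} \in \lrv(S'(a'))$. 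Setting $W' = S'(a)$ and observing that this slice of $R_\ms(r_0)$ is exactly the $a$-entries of $\ms_S'$ and $\ms_T$, I obtain $\npair[n']{(\ms_S(a),\ms_T(a))} \in \lrv(W')$. The decomposition $W_M = W' \oplus W_M'$ is then assembled by pulling $S'(a)$ out of the $\oplus$-sum for $R_W(r_0)$ and recombining the remaining factors into $W_M'$, using Lemma~\ref{lem:oplus-assoc-comm}.

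For the $\writeCond{}$ extension I additionally use the $\nsupeq$ and the address-stratification of $\pwheap[W](r_0)$. Since $\npair[n']{(0,0)} \in \lrv(\hat{W})$ trivially, the all-zero memory on $R(r_0)$ lies in $\stdreg{R(r_0),\gc}{\pur}.H$ at any appropriate world; address-stratification then lets me pointwise overwrite only address $a$, so that $(\pi_1 R_\ms(r_0)[a\mapsto 0], \pi_2 R_\ms(r_0)[a\mapsto 0])$ still lies in $\stdreg{R(r_0),\gc}{\pur}.H$, hence by $\nsupeq$ in $\pwheap[W_M](r_0).H$, at a new partition witness whose $a$-slot is redirected to a duplicable replacement (such as $\purePart{S'(a)}$ or a $\spatial$-variant of it). Reassembling with the unchanged $W_\var{stack}$, $W_\var{free\_stack}$ and the other heap regions yields $\memSat{}$ at $W_M'$. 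The main obstacle I anticipate is the $\oplus$-bookkeeping in the linear case, where $r_0$ is $\spatialo$ in $W$ and $\spatial$ in $W_M$: one must verify that peeling $S'(a)$ off $R_W(r_0)$ and, in the write case, plugging a duplicable replacement back in, leaves every $\oplus$-combination defined. This is supported by Lemma~\ref{lem:oplus-future-distr} together with associativity and commutativity of $\oplus$.
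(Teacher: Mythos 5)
Your proposal is correct and follows essentially the same route as the paper's proof: unfold $\readCond{}$ to locate the region $r_0$ covering $a$, use definedness of $W \oplus W_M$ to transfer the $\nsubeq$ (resp.\ $\nsupeq$ and address-stratification) to $\pwheap[W_M](r_0)$, unfold memory satisfaction and $\lrheap$ to land in the standard region, and peel the per-address world $S'(a)$ out of the $\oplus$-decomposition as $W'$. The write case via the all-zero segment plus address-stratification and the duplicable replacement for the $a$-slot is exactly the paper's (terser) argument, so no gap remains.
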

\begin{proof}
  From $\npair{(b,e)}\in\readCond{l,W}$, we get an $S \subseteq \addressable{\lin,
    \pwheap}$, an $R : S \rightarrow \powerset{\nats}$ with $\biguplus_{r \in S} R(r) \supseteq [\baddr,\eaddr]$ and $\pwheap(r).H
  \nsubeq \stdreg{R(r),\gc}{\pur}.H$ for all $r \in S$.

  Since $a \in [b,e]$, there is a unique $r \in S$ such that $a \in R(r)$.

  Since $W \oplus W_M$ is defined, we have that $r \in \dom(\pwheap) =
  \dom(\pwheap[W_M])$ and $\pwheap(r) \oplus \pwheap[W_M](r)$ is defined.

  From $\memSat{(\ms_S,\stk,\ms_\stk,\ms_T)}{W_M}$, we get that
  $\stk = (\opc_0,\ms_0):: \dots :: (\opc_m,\ms_m)$,
  $\ms_S \uplus \ms_\stk \uplus \ms_0 \uplus \dots \uplus \ms_m$ is defined,
  $W_M = W_{\var{stack}} \oplus W_{\var{free\_stack}} \oplus W_{\var{heap}}$ and
  $\exists \ms_\var{T,stack}, \ms_\var{T,free\_stack}, \ms_\var{T,heap}$, $\ms_{T,f}$, $\ms_{S,f}$, $\ms_S'$ such that
  \begin{itemize}
  \item $\ms_S =\ms_{S,f} \uplus \ms_S'$
  \item $\ms_T = \ms_\var{T,stack} \uplus \ms_\var{T,free\_stack} \uplus
    \ms_\var{T,heap} \uplus \ms_{T,f}$
  \item $\memSatStack{\stk,\ms_\var{T,stack}}{W_{\var{stack}}}$
  \item $\memSatFStack{\ms_\stk,\ms_\var{T,free\_stack}}{W_{\var{free\_stack}}}$
  \item $\npair{(\overline{\sigma},\ms_S',\ms_\var{T,heap})} \in \lrheap(W_M)(W_\var{heap})$.
  \end{itemize}

  From $\npair{(\overline{\sigma},\ms_S',\ms_\var{T,heap})} \in \lrheap(W_M)(W_\var{heap})$, we get an
  $R_\ms : \dom(\activeReg{\pwheap[W_\var{heap}]}) \fun \MemSeg \times \MemSeg$,
  $\ms_\var{T,heap} = \biguplus_{r \in \dom(\activeReg{\pwheap[W_\var{heap}]})} \pi_2(R_\ms(r))$,
  $\ms_S = \biguplus_{r \in \dom(\activeReg{\pwheap[W_\var{heap}]})} \pi_1(R_\ms(r))$,
  $\exists R_W : \dom(\activeReg{\pwheap[W_\var{heap}]}) \fun \World\ldotp$
  $W_\var{heap} = \oplus_{r \in \dom(\activeReg{\pwheap[W_\var{heap}]})} R_W(r)$,
  $\forall r \in \activeReg{\pwheap[W_M]}$, we have that
  $\npair[n']{R_\ms(r)} \in  \pwheap[W_\var{heap}](r).H \; \xi^{-1}(R_W(r))$ for all $n' < n$.
  We also get an $R_\var{seal} : \dom(\activeReg{\pwheap}) \fun \powerset{\Seal}$ such that  $\biguplus_{r \in \dom(\activeReg{\pwheap})} R_\var{seal}(r)) \subseteq \overline{\sigma}$ and $\dom(\pwheap(r).H_\sigma) = R_\var{seal}(r)$.

  We have that $r \in \addressable{\lin, \pwheap} \subseteq
  \activeReg{\pwheap[W_\var{heap}]}$, so $\npair[n']{R_\ms(r)} \in
  \pwheap[W_M](r).H \; \xi^{-1}(R_W(r))$.
  Because $\pwheap(r).H \nsubeq \stdreg{R(r),\gc}{\pur}.H$ and $W\oplus W_M$ is defined, it follows that
  also $\pwheap[W_M](r).H \nsubeq \stdreg{R(r),\gc}{\pur}.H$. This
  means that also $\npair[n']{R_\ms(r)} \in
  H^{\mathrm{std}}_{R(r)}\; \xi^{-1}(R_W(r))$.

  From this, it follows that $\dom(R_\ms(r).1) = \dom(R_\ms(r).2) = R(r)$ and we get a $S : R(r) \fun \World$ with $\xi(\xi^{-1}(R_W(r))) = \oplus_{\aaddr \in R(r)} S(\aaddr)$ and $\forall \aaddr \in R(r) \ldotp \npair[n']{(\ms_S(\aaddr),\ms_T(\aaddr))} \in \lrv(S(\aaddr))$.

  Since $\aaddr \in R(r)$, we have that $\npair[n']{(\ms_S(\aaddr),\ms_T(\aaddr))} \in \lrv(S(\aaddr))$ and
  we can take $W_M' = W_r' \oplus W_{\var{heap}}' \oplus (W_{\var{stack}} \oplus W_{\var{free\_stack}})$ with $W_r' = \oplus_{\aaddr \in (R(r)\setminus \{\aaddr\})} S(\aaddr)$ and $W_\var{heap}' = \oplus_{r' \in (\dom(\activeReg{\pwheap[W_\var{heap}]})\setminus \{r\})} R_W(r')$, and get
  \begin{align*}
    S(\aaddr) \oplus W_M'
    &=S(\aaddr) \oplus (W_r' \oplus W_\var{heap}'\oplus (W_{\var{stack}} \oplus W_{\var{free\_stack}}))\\
    &=
    \oplus_{\aaddr \in R(r)} S(\aaddr) \oplus W_{\var{heap}}'\oplus (W_{\var{stack}} \oplus W_{\var{free\_stack}})\\
    &=
    \xi(\xi^{-1}(R_W(r))) \oplus W_\var{heap}'\oplus (W_{\var{stack}} \oplus W_{\var{free\_stack}})\\
    &=
    R_W(r) \oplus W_\var{heap}'\oplus (W_{\var{stack}} \oplus W_{\var{free\_stack}})\\
    &=
    \oplus_{r \in \dom(\activeReg{\pwheap[W_\var{heap}]})} R_W(r)\oplus (W_{\var{stack}} \oplus W_{\var{free\_stack}})\\
    &=
      W_\var{heap} \oplus (W_{\var{stack}} \oplus W_{\var{free\_stack}})\\
      &= W_M
  \end{align*}

  Additionally, if $\npair{(b,e)}\in\writeCond{l,W}$, then
  we get an $S'\subseteq \addressable{\lin, \pwheap} \subseteq \activeReg{\pwheap[W_M]}$, an $R' : S' \fun \powerset{\nats}$ such that $\biguplus_{r \in S'} R'(r) \supseteq [\baddr,\eaddr]$ and 
  for all $r \in S'$, $\pwheap(r).H \nsupeq \stdreg{R'(r),\gc}{\pur}.H$ and
  $\pwheap(r)$ is address-stratified.

  Since $a \in [b,e]$, there is an $r' \in S'$ such that $a \in R'(r')$ .

  Because $W \oplus W_M$ is defined, it follows that also
  $\pwheap[W_M](r').H \nsupeq \stdreg{R'(r'),\gc}{\pur}.H$ and
  $\pwheap[W_M](r') $ is address-stratified.

  It follows that $r = r'$ because $\dom(R_\ms(r).1) = \dom(R_\ms(r).2) = R(r) \ni a$ and $\dom(R_\ms(r').2) = \dom(R_\ms(r').1) = R'(r') \ni a$ and all the $R_\ms(r).1$ and $R_\ms(r).2$ are disjoint.

  We have that $\npair[n']{(R_\ms(r).1\update{a}{0},R_\ms(r).2\update{a}{0})} \in \pwheap[W_\var{heap}](r).H \; \xi^{-1}(W_r')$ for all $n' < n$ because $\pwheap[W_M](r)$ is address-stratified and $\pwheap[W_M](r).H \nsupeq \stdreg{R'(r),\gc}{\pur}.H$.

  From this, it follows that $\npair{(\overline{\sigma},\ms_S\update{a}{0},\ms_\var{T,heap}\update{a}{0})} \in \lrheap(W_M)(W_r' \oplus W_\var{heap}')$ and finally
  \begin{equation*}
    \npair{(\overline{\sigma},\ms_S\update{a}{0},\stk,\ms_\stk,\ms_T\update{a}{0})} \in \lrheap(W_M')(W_r' \oplus W_{\var{heap}}')
  \end{equation*}
\end{proof}

\begin{lemma}[stackReadCondition works]
  \label{lem:stackreadcond-stackwritecond-work}
  If
  \begin{itemize}
  \item $\memSat{(\ms_S,\stk,\ms_\stk,\ms_T)}{W_M}$
  \item $\npair{(b,e)}\in\stackReadCond{W}$
  \item $a \in [b,e]$
  \item $W \oplus W_M$ is defined
  \item $n' < n$
  \end{itemize}
  Then $\npair[n']{(\ms_\stk(a),\ms_T(a))} \in \lrv(W')$ for some $W'$ such that $W_M = W' \oplus W_M'$.
  Additionally, if
  \begin{itemize}
  \item $\npair{(b,e)}\in\stackWriteCond{W}$
  \end{itemize}
  Then $\memSat{(\ms_S,\stk,\ms_\stk[a\mapsto 0],\ms_T[a\mapsto 0])}{W_M'}$.
\end{lemma}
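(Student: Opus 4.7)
The plan is to mimic the proof of Lemma~\ref{lem:readcond-writecond-work} (the heap analogue), but working with $\pwfree$ and $\memSatFStack{}{}$ instead of $\pwheap$ and $\lrheap$. First, I would unfold $\npair{(b,e)}\in\stackReadCond{W}$ to obtain a set $S \subseteq \addressable{\linear, \pwfree}$, a singleton-valued map $R : S \fun \powerset{\nats}$ with $\biguplus_{r\in S} R(r) \supseteq [b,e]$, and regions satisfying $\pwfree(r).H \nsubeq \stdreg{R(r),\gc}{\pur}.H$. Since $a \in [b,e]$, pick the unique $r \in S$ with $a \in R(r) = \{a\}$. Because $W \oplus W_M$ is defined, we have $r \in \dom(\pwfree[W_M])$ and $\pwfree(r) \oplus \pwfree[W_M](r)$ is defined, so the $\nsubeq$ also holds for $\pwfree[W_M](r).H$.

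Next, unfold $\memSat{(\ms_S,\stk,\ms_\stk,\ms_T)}{W_M}$ to split $W_M = W_\var{stack} \oplus W_\var{free\_stack} \oplus W_\var{heap}$ and extract $\memSatFStack{\ms_\stk,\ms_\var{T,free\_stack}}{W_\var{free\_stack}}$. From this I obtain the partitions $R_\ms$ and $R_W$ with $W_\var{free\_stack} = \oplus_{r' \in \dom(\activeReg{\pwfree[W_\var{free\_stack}]})} R_W(r')$ and $\npair[n']{R_\ms(r')} \in \pwfree[W_\var{free\_stack}](r').H\;\xi^{-1}(R_W(r'))$. Since $r \in \activeReg{\pwfree[W_\var{free\_stack}]}$, the pair $R_\ms(r)$ lies in the standard region $H^{\mathrm{std}}_{\{a\}}\;\xi^{-1}(R_W(r))$, which by definition of the standard region gives a splitting $\xi(\xi^{-1}(R_W(r))) = \oplus_{a' \in \{a\}} S(a') = S(a)$ with $\npair[n']{(\ms_\stk(a),\ms_T(a))} \in \lrv(S(a))$. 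Take $W' = S(a)$ and
\[
  W_M' = \Bigl(\oplus_{r'' \in (\dom(\activeReg{\pwfree[W_\var{free\_stack}]})\setminus\{r\})} R_W(r'')\Bigr) \oplus W_\var{stack} \oplus W_\var{heap};
\]
a short computation using Lemma~\ref{lem:oplus-assoc-comm} then confirms $W_M = W' \oplus W_M'$.

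For the second part, further unfold $\npair{(b,e)}\in\stackWriteCond{W}$ to obtain some $S' \subseteq \addressable{\linear,\pwfree}$, a singleton-valued $R'$, and for each $r' \in S'$ an address-stratified $\pwfree(r')$ with $\pwfree(r').H \nsupeq \stdreg{R'(r'),\gc}{\pur}.H$. The unique $r' \in S'$ with $a \in R'(r')$ must coincide with the $r$ from above, since the $R_\ms(r'').1$ are disjoint and both $R_\ms(r).1$ and $R_\ms(r').1$ must contain $a$. Address-stratification of $\pwfree(r)$ together with $\pwfree[W_M](r).H \nsupeq \stdreg{\{a\},\gc}{\pur}.H$ implies that updating $R_\ms(r)$ at $a$ to $(0,0)$ keeps it in the region's relation, so the rebuilt $\memSatFStack{}{}$ still holds, and combining with the unchanged $W_\var{stack}$ and $W_\var{heap}$ parts yields $\memSat{(\ms_S,\stk,\ms_\stk[a\mapsto 0],\ms_T[a\mapsto 0])}{W_M'}$.

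The main obstacle is the uniqueness argument establishing $r = r'$ between the read- and write-partitions and the careful tracking of the decomposition of $W_M$ across the three stack/free-stack/heap regions so that the reconstructed $W_M'$ both satisfies $W_M = W' \oplus W_M'$ and continues to satisfy memory-satisfaction after the pointwise update; the latter crucially relies on address-stratification, which is exactly what the write condition supplies and which fails for arbitrary read-only regions.
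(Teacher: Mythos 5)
Your proposal is correct and follows essentially the same route as the paper's own proof: unfold the stack read condition to locate the unique singleton region $r$ governing $a$, transfer the $\nsubeq$ fact to $W_M$'s free-stack part via definedness of $W\oplus W_M$, extract the value from the standard region's pointwise splitting, and for the write part identify $r'=r$ by disjointness and use address-stratification to justify the update to $0$. The only cosmetic issue is reusing the letter $S$ both for the region set from the read condition and for the per-address world-splitting function, which you may want to rename for clarity.
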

\begin{proof}
  From $\npair{(b,e)}\in\stackReadCond{W}$, we get an $S \subseteq \addressable{\lin,
    \pwfree}$, an $R : S \rightarrow \powerset{\nats}$ with $\biguplus_{r \in S} R(r) \supseteq [\baddr,\eaddr]$ and $\pwfree(r).H
  \nsubeq \stdreg{R(r),\gc}{\pur}.H$ for all $r \in S$.

  Since $a \in [b,e]$, there is a unique $r \in S$ such that $a \in R(r)$.

  Since $W \oplus W_M$ is defined, we have that $r \in \dom(\pwfree) =
  \dom(\pwfree[W_M])$ and $\pwfree(r) \oplus \pwfree[W_M](r)$ is defined.

  From $\memSat{(\ms_S,\stk,\ms_\stk,\ms_T)}{W_M}$, we get that
  $\stk = (\opc_0,\ms_0):: \dots :: (\opc_m,\ms_m)$,
  $\ms_S \uplus \ms_\stk \uplus \ms_0 \uplus \dots \uplus \ms_m$ is defined,
  $W_M = W_{\var{stack}} \oplus W_{\var{free\_stack}} \oplus W_{\var{heap}}$ and
  $\exists \ms_\var{T,stack}, \ms_\var{T,free\_stack}, \ms_\var{T,heap}$, $\ms_{T,f}$, $\ms_{S,f}$, $\ms_S'$ such that
  \begin{itemize}
  \item $\ms_S =\ms_{S,f} \uplus \ms_S'$
  \item $\ms_T = \ms_\var{T,stack} \uplus \ms_\var{T,free\_stack} \uplus
    \ms_\var{T,heap} \uplus \ms_{T,f}$
  \item $\memSatStack{\stk,\ms_\var{T,stack}}{W_{\var{stack}}}$
  \item $\memSatFStack{\ms_\stk,\ms_\var{T,free\_stack}}{W_{\var{free\_stack}}}$
  \item $\npair{(\overline{\sigma},\ms_S',\ms_\var{T,heap})}\in\lrheap(\pwheap[W_M])(W_\var{heap})$.
  \end{itemize}

  From $\memSatFStack{\ms_\stk,\ms_\var{T,free\_stack}}{W_{\var{free\_stack}}}$, we get an
  \begin{itemize}
  \item $R_\ms : \dom(\activeReg{\pwfree[W_\var{free\_stack}]}) \fun \MemSeg \times \MemSeg$, 
  \item $\ms_\var{T,free\_stack} = \biguplus_{r \in \dom(\activeReg{\pwfree[W_\var{free\_stack}]})} \pi_2(R_\ms(r))$, 
  \item $\ms_\stk = \biguplus_{r \in \dom(\activeReg{\pwfree[W_\var{free\_stack}]})} \pi_1(R_\ms(r))$, 
  \item $\stkb \in \dom(\ms_\var{T,free\_stack}) \wedge \stkb \in \dom(\ms_\stk)$, 
  \item $\exists R_W : \dom(\activeReg{\pwfree[W_\var{free\_stack}]}) \fun \World\ldotp$ 
  \item $W_\var{free\_stack} = \oplus_{r \in \dom(\activeReg{\pwfree[W_\var{free\_stack}]})} R_W(r)$
  \end{itemize}
  and for all
  $r \in \activeReg{\pwfree[W_\var{free\_stack}]}$, we have that
  $\npair[n']{R_\ms(r)} \in  \pwfree[W_\var{free\_stack}](r).H \; \xi^{-1}(R_W(r))$ for all $n' < n$.

  We have that $r \in \addressable{\lin, \pwfree} \subseteq \activeReg{\pwfree[W_\var{free\_stack}]}$, so $\npair[n']{R_\ms(r)} \in \pwfree[W_\var{free\_stack}](r).H \; \xi^{-1}(R_W(r))$.
  Because $\pwfree(r).H \nsubeq \stdreg{R(r),\gc}{\pur}.H$ and $W\oplus W_M = W \oplus (W_\var{free\_stack} \oplus W_\var{free\_stack} \oplus W_\var{stack})$ is defined, it follows that also $\pwfree[W_\var{free\_stack}](r).H \nsubeq \stdreg{R(r),\gc}{\pur}.H$.
  This means that also $\npair[n']{R_\ms(r)} \in H^{\mathrm{std}}_{R(r)}\; \xi^{-1}(R_W(r))$.

  From this, it follows that $\dom(R_\ms(r).1) = \dom(R_\ms(r).2) = R(r)$ and we get a $S : R(r) \fun \World$ with $\xi(\xi^{-1}(R_W(r))) = \oplus_{\aaddr \in R(r)} S(\aaddr)$ and $\forall \aaddr \in R(r) \ldotp \npair[n']{(\ms_\stk(\aaddr),\ms_T(\aaddr))} \in \lrv(S(\aaddr))$.

  Since $\aaddr \in R(r)$, we have that $\npair[n']{(\ms_\stk(\aaddr),\ms_T(\aaddr))} \in \lrv(S(\aaddr))$ and
  we can take $W_M' = W_r' \oplus W_{\var{heap}} \oplus (W_{\var{stack}} \oplus W_{\var{free\_stack}}')$ with $W_r' = \oplus_{\aaddr \in (R(r)\setminus \{\aaddr\})} S(\aaddr)$ and $W_\var{free\_stack}' = \oplus_{r' \in (\dom(\activeReg{\pwheap[W_\var{heap}]})\setminus \{r\})} R_W(r')$, and get
  \begin{align*}
    S(\aaddr) \oplus W_M'
    &=S(\aaddr) \oplus (W_r' \oplus W_{\var{free\_stack}}' \oplus (W_{\var{stack}}\oplus W_\var{heap}))\\
    &=
    \oplus_{\aaddr \in R(r)} S(\aaddr) \oplus W_{\var{free\_stack}}' \oplus (W_{\var{stack}}\oplus W_{\var{heap}})\\
    &=
    \xi(\xi^{-1}(R_W(r))) \oplus W_{\var{free\_stack}}' \oplus (W_{\var{stack}} \oplus W_\var{heap})\\
    &=
    R_W(r) \oplus W_{\var{free\_stack}}' \oplus (W_{\var{stack}} \oplus W_\var{heap})\\
    &=
    \oplus_{r \in \dom(\activeReg{\pwheap[W_\var{free\_stack}]})} R_W(r)\oplus (W_{\var{stack}} \oplus W_{\var{heap}})\\
    &=
      W_\var{free\_stack} \oplus (W_{\var{stack}} \oplus W_{\var{heap}})\\
      &= W_M
  \end{align*}

  Additionally, if $\npair{(b,e)}\in\stackWriteCond{W}$, then
  we get an 
  \begin{equation*}
    S'\subseteq \addressable{\lin, \pwfree} \subseteq \activeReg{\pwheap[W_\var{free\_stack}]} \text{,}
  \end{equation*}
  and an $R' : S' \fun \powerset{\nats}$ such that $\biguplus_{r \in S'} R'(r) \supseteq [\baddr,\eaddr]$ and
  for all $r \in S'$, $\pwfree(r).H \nsupeq \stdreg{R'(r),\gc}{\pur}.H$ and
  $\pwfree(r)$ is address-stratified.

  Since $a \in [b,e]$, there is an $r' \in S'$ such that $a \in R'(r')$ .

  Because $W \oplus W_M$ is defined, it follows that also
  $\pwfree[W_\var{free\_stack}](r').H \nsupeq \stdreg{R'(r'),\gc}{\pur}.H$ and
  $\pwfree[W_\var{free\_stack}](r')$ is address-stratified.

  It follows that $r = r'$ because $\dom(R_\ms(r).1) = \dom(R_\ms(r).2) = R(r) \ni a$ and $ \dom(R_\ms(r').2) = \dom(R_\ms(r').1) = R'(r') \ni a$ and all the $R_\ms(r).1$ and $R_\ms(r).2$ are disjoint.

  We have that $\npair[n']{(R_\ms(r).1\update{a}{0},R_\ms(r).2\update{a}{0})} \in \pwfree[W_\var{free\_stack}](r).H \; \xi^{-1}(W_r')$ for all $n' < n$ because $\pwfree[W_\var{free\_stack}](r)$ is address-stratified and $\pwfree[W_\var{free\_stack}](r).H \nsupeq \stdreg{R'(r),\gc}{\pur}.H$.

  From this, it follows that $\memSatFStack{\ms_\stk\update{a}{0},\ms_\var{T,heap}\update{a}{0}}{W_r' \oplus W_\var{free\_stack}'}$ and finally $\memSat{(\ms_S,\stk,\ms_\stk\update{a}{0},\ms_T\update{a}{0})}{W_M'}$.
\end{proof}

\begin{lemma}[load from regular capability works]
  \label{lem:readcond-cap-works}
  If
  \begin{itemize}
  \item $\memSat{(\ms_S,\stk,\ms_\stk,\ms_T)}{W_M}$
  \item $c = ((\perm,\lin),\baddr,\eaddr,\aaddr)$
  \item $c' = ((\perm',\lin'),\baddr',\eaddr',\aaddr')$
  \item $\perm \in \readAllowed{}$, $\perm' \in \readAllowed{}$
  \item $\npair{(c,c')}\in\lrvg{\trust}(H_\sigma,W)$
  \item $W \oplus W_M$ is defined
  \item $n' < n$
  \item $w_S = \linCons{\ms_S(\aaddr)}$, $w_T = \linCons{\ms_T(\aaddr')}$
  \item $\linConsPerm{\perm}{\ms_S(\aaddr)}$, $\linConsPerm{\perm'}{\ms_T(\aaddr')}$
  \item $\aaddr \in [\baddr,\eaddr]$
  \item $\aaddr' \in [\baddr',\eaddr']$
  \end{itemize}
  Then $\exists W', W_M'\ldotp$
  \begin{itemize}
  \item $W_M = W' \oplus W_M'$
  \item $\memSat[n']{(\ms_S\update{\aaddr}{w_S},\stk,\ms_\stk,\ms_T\update{\aaddr'}{w_T})}{W_M'}$
  \item $\npair[n']{(\ms_S(\aaddr),\ms_T(\aaddr'))} \in \lrvg{\trust}(W')$
  \end{itemize}
\end{lemma}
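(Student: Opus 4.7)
The plan is to case-analyse on which clause of $\lrvg{\trust}(W)$ is used to relate the pair $(c,c')$. Since $c$ has $\perm \in \readAllowed{}$ and the pair appears in the value relation, the structural constraints force $\perm = \perm'$, $\lin = \lin'$, $\baddr = \baddr'$, $\eaddr = \eaddr'$, and $\aaddr = \aaddr'$. Only two clauses can produce this shape: the regular memory-capability clause of $\lrv$ (which may be inherited when $\trust = \trusted$), and — when $\trust = \trusted$ and $\perm \sqsubseteq \rx$ — the trusted code-pointer clause.

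First, I would handle the untrusted clause. Here we immediately get $\npair{[\baddr,\eaddr]} \in \readCond{\lin,W}$, so Lemma~\ref{lem:readcond-writecond-work} applied with $a=\aaddr$ yields a decomposition $W_M = W'' \oplus W_M''$ together with $\npair[n']{(\ms_S(\aaddr),\ms_T(\aaddr))} \in \lrv(W'')$. I split on the linearity of $\ms_S(\aaddr)$. If the loaded value is linear, then $\linConsPerm{\perm}{\ms_S(\aaddr)}$ forces $\perm \in \writeAllowed{}$, so $\npair{[\baddr,\eaddr]} \in \writeCond{\lin,W}$ holds as well; the write-part of Lemma~\ref{lem:readcond-writecond-work} then gives $\memSat[n']{(\ms_S\update{\aaddr}{0},\stk,\ms_\stk,\ms_T\update{\aaddr}{0})}{W_M''}$, and I take $W' = W''$ and $W_M' = W_M''$. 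If the loaded value is non-linear, then $w_S = \ms_S(\aaddr)$ and $w_T = \ms_T(\aaddr)$, so the memories are unchanged; I use Lemma~\ref{lem:non-linear-pure} to promote the safety witness to $\lrv(\purePart{W''})$, then take $W' = \purePart{W''}$ and $W_M' = W_M$, with the decomposition $W_M = \purePart{W''} \oplus W_M$ obtained from $W_M = W'' \oplus W_M''$ together with Lemmas~\ref{lem:purePart-duplicable} and~\ref{lem:oplus-assoc-comm}. In either case, when $\trust = \trusted$, the conclusion lifts to $\lrvtrusted$ by Lemma~\ref{lem:untrusted-supset-trust}.

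For the trusted code-pointer clause ($\trust = \trusted$, $\perm \sqsubseteq \rx$, $[\baddr,\eaddr] \subseteq \ta$, $\npair{[\baddr,\eaddr]} \in \xReadCond{W}$), I extract from $\xReadCond{}$ a region $r$ with $\pwheap(r) \nequal \codereg{\sigrets,\sigcloss,\code,\gc}$ and $\dom(\code) \supseteq [\baddr,\eaddr]$. From $W \oplus W_M$ defined I get $r \in \dom(\pwheap[W_M])$ with a compatible region, and Lemma~\ref{lem:code-reg-and-mem-sat} (together with definition-chasing on $\memSat{}{}$) lets me locate $\code$ inside $\ms_T$ and conclude $\ms_S(\aaddr) = \ms_T(\aaddr) = \code(\aaddr)$. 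Since code-region contents are always in $\lrvg{\trust}(\purePart{...})$ (by the definition of $H^{\mathrm{code}}$), and such values can never be linear (linear words would break the pure-region invariant), we have $w_S = w_T = \code(\aaddr)$, the memories are unchanged, and I take $W' = \purePart{W_M}$, $W_M' = W_M$. The safety of $\code(\aaddr)$ in $\lrvg{\trust}(\purePart{W_M})$ follows from Lemma~\ref{lem:ftlr-comp-code} applied to the code well-formedness witness packaged with the code region, then lifted along $\purePart{W_M} \future \purePart{W''}$ by Lemma~\ref{lem:monotonicity}.

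The main obstacle I expect is the bookkeeping for the non-linear and trusted-code sub-cases, where Lemma~\ref{lem:readcond-writecond-work} returns a proper split $W_M = W'' \oplus W_M''$ but the lemma's conclusion insists that the (unchanged) memory satisfy the residue $W_M'$. Resolving this cleanly hinges on the fact that non-linear value safety is preserved under $\purePart{}$ (Lemma~\ref{lem:non-linear-pure}) and that $\purePart{}$ duplicates freely (Lemma~\ref{lem:purePart-duplicable}), allowing the authority for the loaded value to be shifted into a zero-authority world while leaving $W_M$ intact for memory satisfaction. The trusted-code case further requires careful use of $n$-equality propagation (Lemma~\ref{lem:n-equality-props}) to shuffle between $\pwheap(r)$ and its $\codereg{}$ presentation at the right step-index.
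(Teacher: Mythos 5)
Your proposal is correct and follows essentially the same route as the paper's proof: a case split on whether $(c,c')$ is related by the ordinary (untrusted) capability clause — handled via $\readCond{\lin,W}$ and Lemma~\ref{lem:readcond-writecond-work}, with the linear/non-linear sub-split driven by $\linConsPerm{\perm}{\ms_S(\aaddr)}$ exactly as in the paper — or by the trusted code-pointer clause, handled via $\xReadCond{W}$, the code region, and non-linearity of code contents. The only differences are cosmetic: your $\purePart{}$-based choices of $W'$ in the non-linear and trusted sub-cases are, if anything, tidier than the paper's, and the appeal to Lemma~\ref{lem:ftlr-comp-code} in the trusted case is unnecessary, since the safety of $\code(\aaddr)$ in $\lrvg{\trusted}(\purePart{W_M})$ is already baked into the definition of $H^{\mathrm{code}}$ and extracted directly from memory satisfaction.
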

\begin{proof}
  Consider first the case that $\npair{(c,c')}\in\lrv(W)$.

  From $\npair{(c,c')}\in\lrv(H_\sigma,W)$ with $c = ((\perm,\lin),\baddr,\eaddr,\aaddr)$, $c' = ((\perm',\lin'),\baddr',\eaddr',\aaddr')$, $\perm \in \readAllowed{}$ and $\perm' \in \readAllowed{}$, we get that $\baddr = \baddr'$, $\eaddr = \eaddr'$ and $\aaddr = \aaddr'$ and $\npair{(\baddr,\eaddr)} \in \readCond{\lin,W}$.

  Lemma~\ref{lem:readcond-writecond-work} then gives us a $W'$ and $W_M'$ such that $W_M = W' \oplus W_M'$ and $\npair[n']{(\ms_S(\aaddr),\ms_T(\aaddr'))} \in \lrv(W')$.

  it remains to prove that $\memSat[n']{(\ms_S\update{\aaddr}{w_S},\stk,\ms_\stk,\ms_T\update{\aaddr'}{w_T})}{W_M'}$.
  We have to distinguish the case that $\isLinear{\ms_S(\aaddr)}$ and the opposite case.
  \begin{itemize}
  \item case $\isLinear{\ms_S(\aaddr)}$: then $\linCons{\ms_S(\aaddr)} = 0$ and it follows from $\npair[n']{(\ms_S(\aaddr),\ms_T(\aaddr'))}\in\lrv(W)$ that also $\isLinear{\ms_T(\aaddr')}$ and $\linCons{\ms_T(\aaddr')} = 0$.

    From $\linConsPerm{\perm}{\ms_S(\aaddr)}$ and $\linConsPerm{\perm'}{\ms_T(\aaddr')}$, we then also get that $\perm,\perm' \in \writeAllowed{}$ and from $\npair{(c,c')}\in\lrv(W)$, it then follows that $\npair{(\baddr,eaddr)} \in \writeCond{\lin,W}$.
    From the ``Additionally, if..'' case in Lemma~\ref{lem:readcond-writecond-work} with Lemma~\ref{lem:downwards-closed}, we then get that $\memSat[n']{(\ms_S\update{\aaddr}{w_S},\stk,\ms_\stk,\ms_T\update{\aaddr'}{w_T})}{W_M'}$.

  \item case $\neg\isLinear{\ms_S(\aaddr)}$: then $\linCons{\ms_S(\aaddr)} = \ms_S(\aaddr)$ and it follows from $\npair[n']{(\ms_S(\aaddr),\ms_T(\aaddr'))}\in\lrv(W)$ that also $\neg\isLinear{\ms_T(\aaddr')}$ \dominique{Do we need a lemma for this?}
    and $\linCons{\ms_T(\aaddr')} = \ms_T(\aaddr')$.
    The fact that $\memSat[n']{(\ms_S\update{\aaddr}{w_S},\stk,\ms_\stk,\ms_T\update{\aaddr'}{w_T})}{W_M'}$ then follows simply by downwards closure of memory satisfaction, i.e.~Lemma~\ref{lem:downwards-closed}.
  \end{itemize}

  Now consider the case that $\npair{(c,c')}\in(\lrvg{\trusted}(W) \setminus \lrv(W))$. then we have that $[b,e] \subseteq \ta$ and $\npair{[b,e]} \in \xReadCond{W}$.
  By definition of $\xReadCond{}$, there is an $r_a$ such that $\pwheap(r) \nequal \codereg{\_,\_,\code,\gc}$, $a \in \dom(\code)$.
  By definition of $\codereg{}$ and using the fact that $\dom(\code) \ni a \in [b,e] \subseteq \ta$, we know that $\dom(\code) \subseteq \ta$ and $\_,\_\vdash_{\mathrm{comp-code}} \code$.
  From the fact that $\memSat{(\ms_S,\stk,\ms_\stk,\ms_T)}{W_M}$, together with the definition of $\codereg{}$, we know that $\npair[n']{(\Phi.\mem(a),\Phi.\mem(a))} \in \lrvg{\trusted}(\purePart{W_M})$ and we have that $\purePart{W_M} = \purePart{W}$ by Lemma~\ref{lem:purePart-oplus}. 
  From the fact that $\_,\_\vdash_{\mathrm{comp-code}} \code$, we get that $\neg \isLinear{\code(a)}$.
  
  We can then take $W_M' = W_M$, $W' = W$ and get the required results from what we have proven above and using Lemma~\ref{lem:downwards-closed}.
\end{proof}

\begin{lemma}[load from stack capability works]
  \label{lem:load-stack-cap-works}
  If
  \begin{itemize}
  \item $\memSat{(\ms_S,\stk,\ms_\stk,\ms_T)}{W_M}$
  \item $c = \stkptr{\perm,\baddr,\eaddr,\aaddr}$
  \item $c' = ((\perm',\lin'),\baddr',\eaddr',\aaddr')$
  \item $\perm \in \readAllowed{}$ or $\perm' \in \readAllowed{}$
  \item $\npair{(c,c')}\in\lrvg{\trust}(H_\sigma,W)$
  \item $W \oplus W_M$ is defined
  \item $n' < n$
  \item ($\aaddr \in [\baddr,\eaddr]$ or $\aaddr' \in [\baddr',\eaddr']$)
  \item $w_S = \linCons{\ms_S(\aaddr)}$, $w_T = \linCons{\ms_T(\aaddr')}$
  \item $\linConsPerm{\perm}{\ms_S(\aaddr)}$, $\linConsPerm{\perm'}{\ms_T(\aaddr')}$
  \end{itemize}
  Then $\exists W', W_M'\ldotp$
  \begin{itemize}
  \item $W_M = W' \oplus W_M'$
  \item $\memSat[n']{(\ms_S,\stk,\ms_\stk\update{\aaddr}{w_S},\ms_T\update{\aaddr'}{w_T})}{W_M'}$
  \item $\npair[n']{(\ms_\stk(a),\ms_T(a))} \in \lrv(W')$
  \end{itemize}
\end{lemma}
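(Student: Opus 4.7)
The plan is to mirror the proof of Lemma~\ref{lem:readcond-cap-works} but with $\stackReadCond$/$\stackWriteCond$ in place of $\readCond$/$\writeCond$, and using Lemma~\ref{lem:stackreadcond-stackwritecond-work} instead of Lemma~\ref{lem:readcond-writecond-work}. The first step is to use the hypothesis $\npair{(c,c')} \in \lrvg{\trust}(W)$ with $c = \stkptr{\perm,\baddr,\eaddr,\aaddr}$ to pin down the shape of $c'$: inspecting the value relation, the only clause that admits a stack pointer on the source side is the stack-pointer case, which forces $c' = ((\perm,\linear),\baddr,\eaddr,\aaddr)$, $\perm \not\in \{\rx,\rwx\}$, and (crucially) the trusted extension $\lrvtrusted \setminus \lrv$ adds no further stack-pointer cases, so we are actually in $\lrv(W)$. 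In particular, $\aaddr = \aaddr'$, so the two sides of each disjunction in the hypotheses collapse to the same statement, and $\perm \in \readAllowed{}$ gives us $\npair{[\baddr,\eaddr]} \in \stackReadCond{W}$.

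Next, I would invoke Lemma~\ref{lem:stackreadcond-stackwritecond-work} on this $\stackReadCond$ witness together with $\memSat{(\ms_S,\stk,\ms_\stk,\ms_T)}{W_M}$, $a = \aaddr \in [\baddr,\eaddr]$, and the assumption that $W \oplus W_M$ is defined. This directly yields $W', W_M'$ with $W_M = W' \oplus W_M'$ and $\npair[n']{(\ms_\stk(\aaddr),\ms_T(\aaddr))} \in \lrv(W')$, which is the third conclusion (after applying downward closure via Lemma~\ref{lem:downwards-closed} if needed).

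For the second conclusion, I would case-split on $\isLinear{\ms_\stk(\aaddr)}$. If it is linear, then from $\npair[n']{(\ms_\stk(\aaddr),\ms_T(\aaddr))} \in \lrv(W')$ we also get $\isLinear{\ms_T(\aaddr))}$, so $w_S = w_T = 0$; moreover the $\linConsPerm{\perm}{\ms_\stk(\aaddr)}$ hypotheses (noting the apparent typo where $\ms_S$ should be $\ms_\stk$, since stack pointers address $\ms_\stk$ in the source semantics) force $\perm \in \writeAllowed{}$, and hence $\npair{[\baddr,\eaddr]} \in \stackWriteCond{W}$ by the stack-pointer case of $\lrv$. The ``Additionally'' clause of Lemma~\ref{lem:stackreadcond-stackwritecond-work} then gives $\memSat{(\ms_S,\stk,\ms_\stk\update{\aaddr}{0},\ms_T\update{\aaddr}{0})}{W_M'}$, and downward closure yields the $n'$-approximate form. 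If it is non-linear, then $w_S = \ms_\stk(\aaddr)$ and symmetrically $w_T = \ms_T(\aaddr)$, so the updates are no-ops and the required memory satisfaction follows from the original $\memSat$ and Lemma~\ref{lem:downwards-closed}.

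The routine bookkeeping is the only thing to watch: I expect no serious obstacle, because the stack analogues of the read/write lemmas already do the heavy lifting (extracting the relevant free-stack region from the memory-satisfaction partition, matching its heap-invariant against the standard region via $\nsubeq$/$\nsupeq$, and using address-stratification when we need to update a single address). The main subtlety is just to notice that the trusted value relation does not extend the stack-pointer case, so we never have to handle a ``trusted but non-$\lrv$'' branch analogous to the executable-code case in Lemma~\ref{lem:readcond-cap-works}, which actually makes this proof slightly simpler than its heap counterpart.
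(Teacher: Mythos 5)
Your proposal matches the paper's own proof essentially step for step: extract $\perm=\perm'$, $\lin'=\linear$, matching bounds and the $\stackReadCond{}$ witness from the stack-pointer case of the value relation, apply Lemma~\ref{lem:stackreadcond-stackwritecond-work} for the world split and the safety of the loaded word, then case-split on $\isLinear{\ms_\stk(\aaddr)}$, using $\linConsPerm{}{}$ to obtain $\stackWriteCond{}$ and the ``Additionally'' clause in the linear case and downward closure in the non-linear case. Your observations that the trusted relation adds no stack-pointer cases and that the lemma statement's $\ms_S$ should read $\ms_\stk$ are both correct (the paper even miscites the heap lemma~\ref{lem:readcond-writecond-work} where you correctly cite the stack version), so there is nothing to fix.
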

\begin{proof}
  From $\npair{(c,c')}\in\lrv(H_\sigma,W)$ with $c = \stkptr{\perm,\baddr,\eaddr,\aaddr}$, $c' = ((\perm',\lin'),\baddr',\eaddr',\aaddr')$, ($\perm \in \readAllowed{}$ or $\perm' \in \readAllowed{}$), we get that $\perm = \perm'$, $\lin' = \linear$, $\baddr = \baddr'$, $\eaddr = \eaddr'$ and $\aaddr = \aaddr'$ and $\npair{(\baddr,\eaddr)} \in \stackReadCond{W}$.

  From Lemma~\ref{lem:stackreadcond-stackwritecond-work}, we then get that $\npair[n']{(\ms_\stk(a),\ms_T(a))}\in\lrv(W')$ for some $W'$ such that $W_M = W' \oplus W_M'$.

  It remains to prove that $\memSat{(\ms_S,\stk,\ms_\stk\update{\aaddr}{w_S},\ms_T\update{\aaddr'}{w_T})}{W_M'}$.
  We have to distinguish the case that $\isLinear{\ms_\stk(\aaddr)}$ and the opposite case.
  \begin{itemize}
  \item case $\isLinear{\ms_\stk(\aaddr)}$: then $\linCons{\ms_\stk(\aaddr)} = 0$ and it follows from
    \begin{equation*}
      \npair[n']{(\ms_\stk(a),\ms_T(a))}\in\lrv(H_\sigma,W)
    \end{equation*}
    that also $\isLinear{\ms_T(\aaddr')}$ \dominique{Do we need a lemma for this?} and $\linCons{\ms_T(\aaddr')} = 0$.
    From $\linConsPerm{\perm}{\ms_S(\aaddr)}$ and $\linConsPerm{\perm'}{\ms_T(\aaddr')}$, we then also get that $\perm = \perm' \in \writeAllowed{}$ and from $\npair{(c,c')}\in\lrvg{\trust}(H_\sigma,W)$, it then follows that $\npair{(\baddr,eaddr)} \in \stackWriteCond{W}$.
    From the ``Additionally, if..'' case in Lemma~\ref{lem:readcond-writecond-work} with Lemma~\ref{lem:downwards-closed}, we then get that $\memSat[n']{(\ms_S,\stk,\ms_\stk\update{\aaddr}{0},\ms_T\update{\aaddr'}{0})}{W_M'}$.

  \item case $\neg\isLinear{\ms_\stk(\aaddr)}$: then $\linCons{\ms_\stk(\aaddr)} = \ms_\stk(\aaddr)$ and it follows from
    \begin{equation*}
      \npair[n']{(\ms_\stk(a),\ms_T(a))}\in\lrv(H_\sigma,W)
    \end{equation*}
    that also $\neg\isLinear{\ms_T(\aaddr')}$ \dominique{Do we need a lemma for this?}
    and $\linCons{\ms_T(\aaddr')} = \ms_T(\aaddr')$.
    The fact that $\memSat[n']{(\ms_S\update{\aaddr}{w_S},\stk,\ms_\stk,\ms_T\update{\aaddr'}{w_T})}{W_M'}$ then follows simply by downwards closure of memory satisfaction, i.e.\ Lemma~\ref{lem:downwards-closed}.
  \end{itemize}
\end{proof}

\begin{lemma}[Store to regular capability works]
  \label{lem:store-reg-works}
  If
  \begin{itemize}
  \item $\memSat{(\ms_S,\stk,\ms_\stk,\ms_T)}{W_M}$
  \item $c_1 = ((\perm,\lin),\baddr,\eaddr,\aaddr)$
  \item $c_1' = ((\perm',\lin'),\baddr',\eaddr',\aaddr')$
  \item $\perm \in \writeAllowed{}$, $\perm' \in \writeAllowed{}$
  \item $\aaddr \in [\baddr,\eaddr]$
  \item $\aaddr' \in [\baddr',\eaddr']$
  \item $\npair{(c_1,c_1')}\in\lrvg{\trusted}(H_\sigma,W_1)$
  \item $\npair{(c_2,c_2')}\in\lrv(H_\sigma,W_2)$
  \item $W_1 \oplus W_2 \oplus W_M$ is defined
  \item $n' < n$
  \item $c_3 = \linCons{c_2}$, $c_3' = \linCons{c_2'}$
  \end{itemize}
  Then $\exists W_2', W_M'\ldotp$
  \begin{itemize}
  \item $W_2 \oplus W_M = W_2' \oplus W_M'$
  \item $\memSat[n']{(\ms_S\update{\aaddr}{c_2},\stk,\ms_\stk,\ms_T\update{\aaddr'}{c_2'})}{W_M'}$
  \item $\npair[n']{(c_3,c_3')} \in \lrv(W_2')$
  \end{itemize}
\end{lemma}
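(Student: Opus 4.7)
The plan is to mirror the argument from Lemma~\ref{lem:readcond-cap-works}, but using the ``write'' half of Lemma~\ref{lem:readcond-writecond-work}. First I would argue that $\npair{(c_1,c_1')} \in \lrvg{\trusted}(W_1)$ collapses to $\npair{(c_1,c_1')} \in \lrv(W_1)$ in this context, since the extra cases added by $\lrvtrusted$ for non-sealed capabilities require $\perm \sqsubseteq \rx$, incompatible with $\perm \in \writeAllowed{}$. From $\npair{(c_1,c_1')} \in \lrv(W_1)$ with the two write permissions I extract $(\perm,\lin,\baddr,\eaddr) = (\perm',\lin',\baddr',\eaddr')$, $[\baddr,\eaddr] \mathrel{\#} \ta$, and crucially $\npair{[\baddr,\eaddr]} \in \writeCond{\lin,W_1}$.

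Unfolding the write condition yields $S \subseteq \addressable{\lin,\pwheap[W_1]}$ and $R : S \fun \powerset{\nats}$ with $\biguplus R(r) \supseteq [\baddr,\eaddr]$ and, for each $r \in S$, $\pwheap[W_1](r).H \nsupeq \stdreg{R(r),\gc}{\pur}.H$ together with address-stratification of $\pwheap[W_1](r)$. I pick the unique $r$ with $\aaddr \in R(r)$; compatibility of $W_1 \oplus W_2 \oplus W_M$ transfers both properties to $\pwheap[W_M](r)$. I then case-split on $\isLinear{c_2}$ (equivalent to $\isLinear{c_2'}$ from $\npair{(c_2,c_2')} \in \lrv(W_2)$). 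In the linear case $c_3 = c_3' = 0$, so I take $W_2' = \purePart{W_2}$ and $W_M' = W_M \oplus W_2$. In the non-linear case, Lemma~\ref{lem:non-linear-pure} gives $\npair{(c_2,c_2')} \in \lrv(\purePart{W_2})$, so I take $W_2' = W_2$ and $W_M' = W_M \oplus \purePart{W_2}$. In both cases, the required equality $W_2' \oplus W_M' = W_2 \oplus W_M$ follows from Lemmas~\ref{lem:purePart-duplicable} and~\ref{lem:oplus-assoc-comm}, and $\npair[n']{(c_3,c_3')} \in \lrv(W_2')$ is immediate (using Lemma~\ref{lem:downwards-closed} in the non-linear case).

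The remaining obligation is $\memSat[n']{(\ms_S\update{\aaddr}{c_2},\stk,\ms_\stk,\ms_T\update{\aaddr'}{c_2'})}{W_M'}$. I would reuse the original memSat decomposition for the stack and free-stack parts unchanged, and modify only the heap part at region $r$. Taking the $R_\ms$, $R_W$ witnesses from the original $\lrheap$ membership, I update $R_\ms(r)$ pointwise at $\aaddr$ to $(c_2,c_2')$ and enlarge $R_W(r)$ to absorb whichever of $W_2$ or $\purePart{W_2}$ was shifted into $W_M'$. That the modified $R_\ms(r)$ still satisfies $\pwheap[W_M'](r).H~\xi^{-1}(R_W'(r))$ follows from address-stratification combined with $\pwheap[W_M'](r).H \nsupeq \stdreg{R(r),\gc}{\pur}.H$: the standard-region obligation at $\aaddr$ reduces to $\npair[n'-1]{(c_2,c_2')} \in \lrv(\text{sub-world at } \aaddr)$, which holds by construction and Lemma~\ref{lem:downwards-closed}.

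The main obstacle I anticipate is the bookkeeping of the world decompositions in the last step: the sub-world assigned to $\aaddr$ inside $R_W'(r)$ must coincide with the extra authority extracted from $W_2$, and the per-address decomposition of $R_W(r)$ used by the standard region's $H$ must be cleanly re-partitioned while respecting the address-stratification invariant. I expect to need Lemma~\ref{lem:oplus-future-distr} (or a similar repartitioning argument) together with careful application of Lemmas~\ref{lem:oplus-assoc-comm} and~\ref{lem:oplus-distr-uplus} to make the old and new decompositions line up coherently; the rest of the proof should be routine definition-chasing once that alignment is pinned down.
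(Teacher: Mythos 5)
Your proposal matches the paper's proof, which likewise extracts $c_1 = c_1'$ (so in particular $\aaddr = \aaddr'$) and the write condition from $\lrvg{\trusted}$, locates the address-stratified region $r$ with $\pwheap[W_M](r).H \nsupeq \stdreg{R(r),\gc}{\pur}.H$, and sets $W_2' = \purePart{W_2}$ and $W_M' = W_M \oplus W_2$; the paper is in fact terser than your sketch on the final re-partitioning step. The world alignment you flag as the main obstacle is resolved exactly along the lines you suggest: lift the original region pair along $R_W(r)\oplus W_2 \future R_W(r)$ by monotonicity of $H$, place the auxiliary pair ($c_2$ at $\aaddr$, values in $\lrv$ of pure sub-worlds elsewhere) into the standard region at that same world, and then apply address-stratification.
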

\begin{proof}
  From $\npair{(c_1,c_1')}\in\lrvg{\trusted}(W_1)$ and $\perm \in \writeAllowed{}$, it follows that $c_1 = c_1'$ and $\npair{[b,e]} \in \writeCond{\lin,W_1}$. 

  We then get a $r$ and $A$, such that $a \in A$, $\pwheap[W_1](r).H \nsupeq \stdreg{A,\gc}{v}.H$ and $\pwheap[W_1](r)$ is address-stratified.

  If we decompose the judgement that $\memSat{(\ms_S,\stk,\ms_\stk,\ms_T)}{W_M}$, then we get some $\npair[n-1]{\ms_S|_A, \ms_T|_A} \in \stdreg{A,\gc}{v}.H~W_{M,A}$ for some $W_M = W_{M,A} \oplus W_{M,R}$.
  If we define $W_M'$ as $W_M \oplus W_2$ and $W_2' = \purePart{W_2}$, then we can use the properties about $\pwheap[W_1](r)$ above to show that  $\memSat[n']{(\ms_S\update{\aaddr}{c_2},\stk,\ms_\stk,\ms_T\update{\aaddr'}{c_2'})}{W_M'}$.
  The fact that $\npair[n']{(c_3,c_3')} \in \lrv(W_2')$ follows from Lemma~\ref{lem:non-linear-pure}.
\end{proof}

\begin{lemma}[Store to stack capability works]
  \label{lem:store-stack-works}
  If
  \begin{itemize}
  \item $\memSat{(\ms_S,\stk,\ms_\stk,\ms_T)}{W_M}$
  \item $c_1 = \stkptr{\perm,\baddr,\eaddr,\aaddr}$
  \item $c_1' = ((\perm',\linear),\baddr',\eaddr',\aaddr')$
  \item $\perm \in \writeAllowed{}$, $\perm' \in \writeAllowed{}$
  \item $\aaddr \in [\baddr,\eaddr]$
  \item $\aaddr' \in [\baddr',\eaddr']$
  \item $\npair{(c_1,c_1')}\in\lrvg{\trusted}(H_\sigma,W_1)$
  \item $\npair{(c_2,c_2')}\in\lrv(H_\sigma,W_2)$
  \item $W_1 \oplus W_2 \oplus W_M$ is defined
  \item $n' < n$
  \item $c_3 = \linCons{c_2}$, $c_3' = \linCons{c_2'}$
  \end{itemize}
  Then $\exists W_2', W_M'\ldotp$
  \begin{itemize}
  \item $W_2 \oplus W_M = W_2' \oplus W_M'$
  \item $\memSat[n']{(\ms_S,\stk,\ms_\stk\update{\aaddr}{c_2},\ms_T\update{\aaddr'}{c_2'})}{W_M'}$
  \item $\npair[n']{(c_3,c_3')} \in \lrv(W_2')$
  \end{itemize}
\end{lemma}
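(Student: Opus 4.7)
The plan is to mirror closely the proof of Lemma~\ref{lem:store-reg-works} (storing through a regular capability), adapting it to the free-stack component of the world and memory. The structural analogy is quite tight: where the regular case reads its authority out of $\pwheap[W_1]$ via $\writeCond{\lin,W_1}$, here we read it out of $\pwfree[W_1]$ via $\stackWriteCond{W_1}$, and the affected memory component is $\ms_\stk$ (and its target counterpart inside $\ms_T$) rather than $\ms_S$.

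First I would unfold $\npair{(c_1,c_1')}\in\lrvg{\trusted}(H_\sigma,W_1)$. Since $c_1$ is a stack pointer, only the stack-pointer clause of $\lrv$ (inside $\lrvg{\trusted}$) can apply, forcing $c_1' = ((\perm,\linear),\baddr,\eaddr,\aaddr)$ (so in particular $\perm = \perm'$, the bounds match, and $\aaddr = \aaddr'$), and giving $\npair{[\baddr,\eaddr]} \in \stackWriteCond{W_1}$ from the $\writeAllowed{}$ assumption. From $\stackWriteCond{}$ I extract a set $S \subseteq \addressable{\linear,\pwfree[W_1]}$ and a singleton-valued $R : S \fun \powerset{\nats}$ with $\biguplus_r R(r) \supseteq [\baddr,\eaddr]$, each $\pwfree[W_1](r).H \nsupeq \stdreg{R(r),\gc}{\pur}.H$, and each $\pwfree[W_1](r)$ address-stratified. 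Pick the unique $r$ with $\aaddr \in R(r)$ (a singleton $\{\aaddr\}$).

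Next I would decompose $\memSat{(\ms_S,\stk,\ms_\stk,\ms_T)}{W_M}$ to isolate $W_{\var{free\_stack}}$ and then further decompose $\memSatFStack{}$ through the partition $R_\ms, R_W$ to locate the per-region piece $\npair[n-1]{R_\ms(r)} \in \pwfree[W_{\var{free\_stack}}](r).H~\xi^{-1}(R_W(r))$. Since $W_1 \oplus W_2 \oplus W_M$ is defined, we have $\pwfree[W_{\var{free\_stack}}](r) \oplus \pwfree[W_1](r)$ defined, so the $\nsupeq$ standard-region inclusion and the address-stratification transfer to this composite region. Together with $\npair[n']{(c_2,c_2')} \in \lrv(W_2)$ (applying Lemmas~\ref{lem:monotonicity} and~\ref{lem:downwards-closed} to move to $R_W(r) \oplus W_2$), this is exactly what is needed to swap in $c_2, c_2'$ at address $\aaddr$ while preserving the standard-region membership: I would re-partition so that $R_W(r)$ becomes $R_W(r) \oplus W_2$ and the singleton memory at $\aaddr$ now satisfies $\lrv$ via $W_2$.

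Finally I would set $W_M' \defeq W_M \oplus W_2$ and $W_2' \defeq \purePart{W_2}$. Then $W_2 \oplus W_M = W_2' \oplus W_M'$ follows from Lemma~\ref{lem:purePart-duplicable} (which gives $W_2 = W_2 \oplus \purePart{W_2}$) together with associativity/commutativity of $\oplus$ (Lemma~\ref{lem:oplus-assoc-comm}). The updated $\memSat[n']{\ldots}{W_M'}$ follows from the rebuilt partition above and Lemma~\ref{lem:downwards-closed}. For $\npair[n']{(c_3,c_3')} \in \lrv(W_2')$: by construction $c_3 = \linCons{c_2}$ and $c_3' = \linCons{c_2'}$ are either both $0$ (if $c_2$ was linear, in which case the result is immediate) or they coincide with $c_2,c_2'$ and are non-linear, in which case Lemma~\ref{lem:non-linear-pure} gives $\npair{(c_2,c_2')} \in \lrv(\purePart{W_2}) = \lrv(W_2')$, followed by Lemma~\ref{lem:downwards-closed}. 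The main obstacle I anticipate is bookkeeping at step three: cleanly re-partitioning $W_{\var{free\_stack}}$ so that the singleton region $r$ absorbs $W_2$'s authority while the rest of the free-stack partition is untouched, and checking that the resulting joint world matches up under $\oplus$ with what $W_M'$ requires; this is the analog of the address-stratified rewriting in the regular-capability case, but with the extra care that $W_2$'s linear authority is genuinely transferred into the memory rather than duplicated.
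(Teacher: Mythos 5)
Your proposal follows essentially the same route as the paper's own proof: the same unfolding of the trusted value relation for a stack pointer to obtain $\stackWriteCond{W_1}$, the same extraction of an address-stratified singleton region from $\pwfree[W_1]$, and the identical choice of witnesses $W_M' = W_M \oplus W_2$ and $W_2' = \purePart{W_2}$, with Lemma~\ref{lem:non-linear-pure} closing the last obligation. Your write-up is in fact more explicit than the paper's (which compresses the re-partitioning bookkeeping into one sentence), but there is no substantive difference in approach.
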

\begin{proof}
  From $\npair{(c_1,c_1')}\in\lrvg{\trusted}(W_1)$ and $\perm \in \writeAllowed{}$, it follows that $\perm = \perm'$, $\baddr = \baddr'$, $\eaddr = \eaddr'$, $\aaddr = \aaddr'$ and $\npair{[b,e]} \in \stackWriteCond{W_1}$. 

  We then get a $r$ and $A$, such that $a \in A$, $\pwfree[W_1](r).H \nsupeq \stdreg{A,\gc}{v}.H$ and $\pwfree[W_1](r)$ is address-stratified.

  If we decompose the judgement that $\memSat{(\ms_S,\stk,\ms_\stk,\ms_T)}{W_M}$, then we get some $\npair[n-1]{\ms_\stk|_A, \ms_T|_A} \in \stdreg{A,\gc}{v}.H~W_{M,A}$ for some $W_M = W_{M,A} \oplus W_{M,R}$.
  If we define $W_M'$ as $W_M \oplus W_2$ and $W_2' = \purePart{W_2}$, then we can use the properties about $\pwfree[W_1](r)$ above to show that  $\memSat[n']{(\ms_S,\stk,\ms_\stk\update{\aaddr}{c_2},\ms_T\update{\aaddr'}{c_2'})}{W_M'}$.
  The fact that $\npair[n']{(c_3,c_3')} \in \lrv(W_2')$ follows from Lemma~\ref{lem:non-linear-pure}.
\end{proof}

\begin{lemma}
  \label{lem:sealed-lrv-lrexj}
  If
  \begin{itemize}
  \item $\npair{(\sealed{\sigma,\vsc_1},\sealed{\sigma,\vsc_1'})} \in \lrvg{\trust}(W_{R,1})$
  \item $\npair{(\sealed{\sigma,\vsc_2},\sealed{\sigma,\vsc_2'})} \in \lrvg{\trust}(W_{R,2})$ 
  \item $\nonExec{\vsc_1'}$ and $\nonExec{\vsc_2'}$
  \item $W_{R,1} \oplus W_{R,2} \oplus W_M$ is defined
  \item $\memSat{\ms_S,\stk,\ms_\stk,\ms_T}{W_M}$
  \end{itemize}
  Then
  \begin{itemize}
  \item $\npair[n-1]{(\vsc_1,\vsc_2,\vsc_1',\vsc_2')} \in \lrexj(W_{R,1}\oplus W_{R,2})$
  \end{itemize}
\end{lemma}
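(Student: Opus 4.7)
The plan is to unpack each sealed-value hypothesis using the $\sealed{\sigma,-}$ case of $\lrv$ --- noting that $\lrvtrusted$ extends $\lrv$ only with clauses for seal capabilities and trusted non-linear code pointers, so for sealed values the two relations coincide --- and then to instantiate the xjmp-safety implication obtained from the first hypothesis with the unsealed values delivered by the second.

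First I would extract from hypothesis $i \in \{1,2\}$ a region $r_i \in \dom(\pwheap[W_{R,i}])$ with $\pwheap[W_{R,i}](r_i) = (\pure,\_,H_{\sigma,i})$, together with the pointwise membership $\npair[n-1]{(\vsc_i,\vsc_i')} \in H_{\sigma,i}\;\sigma\;\xi^{-1}(W_{R,i})$, and for $i=1$ additionally the xjmp-safety implication for extending $W_{R,1}$ (or $\purePart{W_{R,1}}$, if $\vsc_1$ is non-linear) with any $W_o$ containing a suitable pair in $H_{\sigma,1}\;\sigma$.

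The main obstacle is reconciling the two a priori distinct interpretations $H_{\sigma,1}$ and $H_{\sigma,2}$: the implication extracted from hypothesis 1 wants its ``other'' value to live in $H_{\sigma,1}\;\sigma$, while hypothesis 2 only places $(\vsc_2,\vsc_2')$ in $H_{\sigma,2}\;\sigma$. To bridge this, I would invoke Lemma~\ref{lem:unique-h-sigma}. Using commutativity and associativity of $\oplus$ (Lemma~\ref{lem:oplus-assoc-comm}) together with the assumption that $W_{R,1}\oplus W_{R,2}\oplus W_M$ is defined, each of $W_{R,i}\oplus W_M$ is in turn defined; Lemma~\ref{lem:purePart-oplus} then forces $\purePart{W_M} = \purePart{W_{R,i}}$, so the pure regions $r_i$ also live in $W_M$ with identical interpretation $(\pure,\_,H_{\sigma,i})$. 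Memory satisfaction of $W_M$, combined with Lemma~\ref{lem:unique-h-sigma}, therefore yields $r_1=r_2$ and $H_{\sigma,1}=H_{\sigma,2}$.

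Finally I would instantiate the xjmp-safety clause from hypothesis 1 with $W' = W_{R,1}$, $W_o = W_{R,2}$, $n' = n-1$, $\vsc_S' = \vsc_2$ and $\vsc_T' = \vsc_2'$. Its precondition $\npair[n-1]{(\vsc_2,\vsc_2')} \in H_{\sigma,1}\;\sigma\;\xi^{-1}(W_{R,2})$ is just hypothesis 2 rewritten via $H_{\sigma,1}=H_{\sigma,2}$; the future-world side condition holds either trivially (linear case, since $W' \future W_{R,1}$ is reflexive) or by Lemma~\ref{lem:world-fut-purePart} (non-linear case, via $W_{R,1} \future \purePart{W_{R,1}}$). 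The conclusion $\npair[n-1]{(\vsc_1,\vsc_2,\vsc_1',\vsc_2')} \in \lrexj(W_{R,1}\oplus W_{R,2})$ then falls out directly. The remaining care is bookkeeping around $\nequal$ --- hypothesis 1 only supplies the $H_\sigma$ shape up to step-indexed equality --- which Lemma~\ref{lem:non-expansiveness} handles routinely.
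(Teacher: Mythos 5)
Your proposal is correct and follows essentially the same route as the paper's proof: unfold both sealed-value hypotheses, use memory satisfaction (the paper invokes the same fact you cite as Lemma~\ref{lem:unique-h-sigma}) to identify the two regions and hence the two $H_\sigma$ interpretations, and then instantiate the xjmp-safety implication from the first hypothesis with $W' = W_{R,1}$, $W_o = W_{R,2}$, $n' = n-1$, handling the linear and non-linear cases via reflexivity of $\future$ and Lemma~\ref{lem:world-fut-purePart} respectively. Your extra care in transferring the pure regions into $W_M$ before applying Lemma~\ref{lem:unique-h-sigma} just makes explicit a step the paper leaves implicit.
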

\begin{proof}
  From $\npair{(\sealed{\sigma,\vsc_1},\sealed{\sigma,\vsc_1'})} \in \lrvg{\trust}(W_{R,1})$, we know that
  \begin{itemize}
  \item $(\isLinear{\src{\vsc_1}} \text{ iff } \isLinear{\vsc_1'})$ 
  \item $\exists r \in \dom(\pwheap), \sigrets,\sigcloss,\mscode \ldotp$ 
  \item $\pwheap[W_{R,1}](r) = (\pure,\_,H_\sigma)$ 
  \item $H_\sigma \nequal H^\mathrm{code,\square}_\sigma \; \sigrets \; \sigcloss \; \mscode \; \gc$
  \item $\npair[n']{\stpair[.]{\vsc_1}{\vsc_1'}} \in H_\sigma \; \sigma \; \xi^{-1}(W_{R,1})$ for all $n' < n$
  \item If $(\isLinear{\src{\vsc_1}}$ then for all $W' \future W_{R,1}$, $W_o$, $n' < n$, $\npair[n']{\stpair[.]{\vsc_2}{\vsc_2'}} \in H_\sigma \; \sigma \; \xi^{-1}(W_o)$, we have that
    \begin{equation*}
      \npair[n']{\src{\vsc_1},\src{\vsc_2},\vsc_1',\vsc_2'} \in \lrexj(W'\oplus W_o))
    \end{equation*}
  \item If $(\nonLinear{\src{\vsc_1}}$ then for all $W' \future \purePart{W_{R,1}}$, $W_o$, $n' < n$, $\npair[n']{\stpair[.]{\vsc_2}{\vsc_2'}} \in H_\sigma \; \sigma \; \xi^{-1}(W_o)$, we have that
    \begin{equation*}
      \npair[n']{\src{\vsc_1},\src{\vsc_2},\vsc_1',\vsc_2'} \in \lrexj(W'\oplus W_o))
    \end{equation*}
  \end{itemize}

  From $\npair{(\sealed{\sigma,\vsc_2},\sealed{\sigma,\vsc_2'})} \in \lrvg{\trust}(W_{R,2})$, we know that
  \begin{itemize}
  \item $(\isLinear{\src{\vsc_2}} \text{ iff } \isLinear{\vsc_2'})$ 
  \item $\exists r' \in \dom(\pwheap), \sigrets,\sigcloss,\mscode \ldotp$ 
  \item $\pwheap[W_{R,2}](r') = (\pure,\_,H_\sigma)$ 
  \item $H_\sigma \nequal H^\mathrm{code,\square}_\sigma \; \sigrets \; \sigcloss \; \mscode \; \gc$
  \item $\npair[n']{\stpair[.]{\vsc_2}{\vsc_2'}} \in H_\sigma \; \sigma \; \xi^{-1}(W_{R,2})$ for all $n' < n$
  \end{itemize}

  From $\memSat{\ms_S,\stk,\ms_\stk,\ms_T}{W_M}$, we know that two different regions cannot have $H_\sigma$ defined for the same $\sigma$, so that $r = r'$.
  Both when $\isLinear{\vsc_1}$ and when $\nonLinear{\vsc_1}$ (using Lemma~\ref{lem:world-fut-purePart}), we know that for all $W' \future W_{R,1}$, $W_o$, $n' < n$, $\vsc_2$, $\vsc_2'$, $\npair[n']{\stpair[.]{\vsc_2}{\vsc_2'}} \in H_\sigma \; \sigma \; \xi^{-1}(W_o)$, we have that
  \begin{equation*}
    \npair[n']{\src{\vsc_1},\src{\vsc_2},\vsc_1',\vsc_2'} \in \lrexj(W'\oplus W_o))
  \end{equation*}
  We can now instantiate this fact with $W' = W_{R,1}$, $W_o = W_{R,2}$, $n' = n-1$, and the fact that $\npair[n-1]{\stpair[.]{\vsc_2}{\vsc_2'}} \in H_\sigma \; \sigma \; \xi^{-1}(W_{R,2})$ (see above) to conclude
  \begin{equation*}
    \npair[n-1]{\src{\vsc_1},\src{\vsc_2},\vsc_1',\vsc_2'} \in \lrexj(W_{R,1}\oplus W_{R,2})\text{.}
  \end{equation*}
\end{proof}

\subsection{FTLR proof}

\begin{lemma}
  \label{lem:ftlr-internal-lemma}
  If
  \begin{itemize}
  \item One of the following sets of requirements holds:
    \begin{itemize}
    \item $\trust = \trusted$, $\Phi_S$ is reasonable up to $n$ steps and $[\baddr,\eaddr] \subseteq \ta$
    \item $\trust = \untrusted$ and $[b,e] \mathrel{\#} \ta$ and $\npair{[\baddr,\eaddr]} \in \readCond{\normal,W_\pcreg}$
    \end{itemize}
  \item $\Phi_S(\pcreg) = \Phi_T(\pcreg) = ((\rx,\normal),\baddr,\eaddr,a)$
  \item $\npair{[\baddr,\eaddr]} \in \xReadCond{W_\pcreg}$
  \item $\npair{(\Phi_S.\reg,\Phi_T.\reg)} \in \lrrg{\trust}(W_R)$
  \item $\memSat{\Phi_S.\mem,\Phi_S.\stk,\Phi_S.\ms_\stk,\Phi_T.\mem}{W_M}$
  \item $W_\pcreg \oplus W_R \oplus W_M$ is defined.
  \item Theorem~\ref{thm:ftlr} holds for all $n' < n$.
  \end{itemize}
  Then
  \[
    \npair{(\Phi_S,\Phi_T)} \in \lro
  \]
\end{lemma}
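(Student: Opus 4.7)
The plan is to proceed by strong induction on the step index $n$ (using the last hypothesis that Theorem~\ref{thm:ftlr} is available for all $n' < n$ as a Löb-style hypothesis). The base case $n = 0$ is immediate by definition of $\lro$. For the inductive step, I first extract the instruction to be executed: using $\npair{[\baddr,\eaddr]} \in \xReadCond{W_\pcreg}$ together with $\memSat{\ldots}{W_M}$ and Lemma~\ref{lem:code-reg-and-mem-sat}, I obtain a code region with $\pwheap[W_\pcreg](r_{\mathrm{code}}) \nequal \codereg{\sigrets,\sigcloss,\code,\gc}$ whose content $\code(a)$ coincides with both $\Phi_S.\mem(a)$ and $\Phi_T.\mem(a)$, and in addition $\npair{(\code(a),\code(a))} \in \lrvg{\trust}(\purePart{W})$. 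This ensures the two machines decode the same instruction and so will take analogous steps.

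The bulk of the proof is a case split on $\decInstr{\code(a)}$, and (in the trusted sub-case) additionally on whether $a$ begins a $\scall{\offpc,\offsigma}{r_1}{r_2}$ instruction as per $\callCond{}$. For the many routine instruction cases (arithmetic, move, gettype, geta/b/e/p/l, restrict, cca, seta2b, cseal, split, splice, jmp, jnz, load, store), the argument follows a common template: (i) split $W_R$ using $\lrrg{\trust}$ to extract source/target-related values for the operand registers; (ii) compute $\Phi_S'$ and $\Phi_T'$, which agree up to source-only $\stkptr{}$ / $\retptrd$ / $\retptrc$ decorations; (iii) re-establish the invariants of the statement for the updated pc at address $a+1$, appealing to Lemmas~\ref{lem:readcond-cap-works}--\ref{lem:store-stack-works} for load/store, Lemmas~\ref{lem:splitting-safety-normal}--\ref{lem:splicing-safety-stack} for split/splice, and Lemmas~\ref{lem:cap-in-lrv-regardless-of-addr}--\ref{lem:cap-in-lrv-mono-perm} and~\ref{lem:non-linear-pure} for address/permission/purity adjustments; (iv) invoke Theorem~\ref{thm:ftlr} at step $n-1$ on the unchanged pc range $[\baddr,\eaddr]$ (its hypotheses persist by Lemmas~\ref{lem:conds-shrinkable} and~\ref{lem:monotonicity} and reasonability downward closure, Lemma~\ref{lem:ec-reasonable-downwards-closed}) to derive $\npair[n-1]{(\Phi_S',\Phi_T')} \in \lro$; (v) conclude by the anti-reduction Lemma~\ref{lem:lro-anti-red-gen}. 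For jmp/jnz specifically, the new pc is a register value already in $\lrvg{\trust}$, and if it is executable the definition of $\lrv$ supplies $\execCond$, which directly gives $\lre$ and hence membership in $\lro$ by anti-reduction.

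The two genuinely non-routine cases are xjmp and call. For xjmp, I extract the two sealed operand values as $\lrvg{\trust}$-related pairs with matching $\sigma$, and apply Lemma~\ref{lem:sealed-lrv-lrexj} to obtain a $\lrexj$-related quadruple; then unfolding $\lrexj$ with the current register-file and memory-satisfaction witnesses, and using Lemma~\ref{lem:oplus-future-distr} to re-partition the worlds, yields $\lro$ for the result of $\xjumpResult{}{}{}$. The call case arises only in the trusted branch, since outside $\ta$ the call bytes decode as ordinary instructions and fall into the routine template above.

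The main obstacle is the trusted call case. Here the source takes a single step that pushes a frame $(\opc, \ms_{\stk,\priv})$ onto $\stk$ with $\opc = a + \calllen$, while the target simulates this by the $\calllen$-instruction open-coded call implementation that constructs $\sealed{\sigma_?,\retptrc(\baddr,\eaddr,a+\calllen)}$ and $\sealed{\sigma_?,\retptrd(\aaddr_\stk,\eaddr_\stk)}$ and xjumps. Reasonability of $\Phi_S$ supplies that $\rstk$ begins at $\stkb$, that the seal used lies in $\sigrets$, and that the continuation address behaves reasonably. I must: (a) carve $W_M$ into a ``private'' sub-world $W_\priv$ for $[\aaddr_\stk, \eaddr_\stk]$ that will be stashed inside a freshly allocated $\pwpriv$-island $\stareg[(\ms_{\stk,\priv},\ms_{\stk,\priv}),(\ta,\stkb)]{\spao,\square}$ with $\opc$-component equal to $a + \calllen$, and a ``public'' sub-world left with the callee; (b) verify that the constructed sealed pair sits in $H^\mathrm{code,\square}_\sigma \, \sigrets \, \sigcloss \, \code \, \gc$ for $\sigma \in \sigrets$, using precisely the two $\retptrc$/$\retptrd$ clauses of that definition and the freshly allocated $\pwpriv$-island as the witness; (c) show via Lemma~\ref{lem:safe-vals-safe-exec} and Lemma~\ref{lem:sealed-lrv-lrexj} that the sealed return pair together with the other outbound operands form a $\lrexj$-related quadruple in the combined world, so the overall call step reduces to an xjmp whose safety has already been handled. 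The delicate accounting is that when (and if) the callee eventually invokes the return pair, the $\lrexj$-clause in $H^\mathrm{code,\square}_\sigma$ for returns must re-combine the stashed $W_\priv$ with whatever world the callee returns, after which Theorem~\ref{thm:ftlr} at a strictly smaller step index on the continuation capability at $a + \calllen$ closes the loop; this is the step that makes essential use of reasonability-based guarantees about $\stkb$ and of the $\stareg$ invariant recording $\opc = a + \calllen$.
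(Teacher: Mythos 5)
Your overall architecture --- induction on $n$, a case analysis on the decoded instruction, anti-reduction via Lemma~\ref{lem:lro-anti-red-gen}, the treatment of \texttt{xjmp} through Lemma~\ref{lem:sealed-lrv-lrexj}, and the call case with a freshly allocated private-stack island in $\pwpriv$ whose $\opc$-component is $a+\calllen$ --- is essentially the paper's proof. There is, however, one concrete gap in step (iv) of your routine template. For the sequential instructions you propose to close the loop by invoking Theorem~\ref{thm:ftlr} at step $n-1$ on the unchanged pc range. In the trusted branch ($[\baddr,\eaddr]\subseteq\ta$), Theorem~\ref{thm:ftlr} requires that the capability $((\rx,\normal),\baddr,\eaddr,\aaddr)$ \emph{behaves reasonably} up to $n-1$ steps, which by Definition~\ref{def:reasonable-pc} is a statement quantified over \emph{all} configurations with that pc and reasonable register files. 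What your hypotheses actually give you is only that the single configuration $\Phi_S$ is reasonable up to $n$ steps, and hence (by the ``Don't leak private stuff'' clause of Definition~\ref{def:reasonable-conf}) that the specific successor $\Phi_S'$ is reasonable up to $n-1$ steps; Lemma~\ref{lem:ec-reasonable-downwards-closed}, which you cite here, is likewise only about configuration reasonability. Neither implies the universally quantified capability-level property that Theorem~\ref{thm:ftlr} demands, so that invocation does not go through in the trusted case.

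The paper avoids this by proving the present lemma by complete induction on $n$ and re-applying \emph{the lemma itself} at $n-1$ for the sequential cases --- its hypothesis is exactly configuration reasonability, which is what propagates across a single step. Theorem~\ref{thm:ftlr} is reserved for the places where a genuinely new pc capability is installed: jumps justified by $\execCond{}$, and the return continuation at $a+\calllen$ after a call, where the ``Guarantee stack base address before call'' clause of reasonability does supply the capability-level statement. If you restate your step (iv) as an appeal to the induction hypothesis of this lemma rather than to Theorem~\ref{thm:ftlr}, the rest of your outline matches the paper's argument.
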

\begin{proof}
  In the following we will use the following: $\reg_S = \Phi_S.\reg$, $\reg_T = \Phi_T.\reg$, $\ms_S = \Phi_S.\mem$, $\ms_\stk = \Phi_S.\ms_\stk$, and $\stk = \Phi_S.\stk$.

  By complete induction on $n$, i.e. we can assume that the lemma already holds for $n' < n$.

  In order to prove this, we first, we prove that one of the following holds:
  \begin{enumproof}
  \item \label{case:ftlr:failed} $\Phi_S \nstep[j]{\gc} \failed$ and $\Phi_T
    \step[i] \failed$ for some $i$,$j$.
  \item \label{case:ftlr:halted} $\Phi_S \step[\gc] \halted$ and $\Phi_T \step\halted$
  \item \label{case:ftlr:get-and-arith-op} All of the following hold: (includes simple cases: gettype, geta, getb,
    gete, getp, getl, lt, plus, minus, one case of move that can be handled
    uniformly)
    \begin{itemize}
    \item $\Phi_S \step[\gc] \Phi_S'$
    \item $\Phi_T \step \Phi_T'$
    \item $\Phi_S$ does not point to $\scall{\offpc,\offsigma}{r_1}{r_2}$ or $\txjmp{r_1}{r_2}$
    \item $\Phi_S' = \updPcAddr{\Phi_S\updReg{r}{z}} \neq \failed$
    \item $\Phi_T' = \updPcAddr{\Phi_T\updReg{r}{z}} \neq \failed$
    \item $r \neq \pcreg$
    \item $z \in \ints$
    \end{itemize}
  \item \label{case:ftlr:call} All of the following hold:
    \begin{itemize}
    \item $\callCond{\Phi_S,r_1,r_2,\offpc,\offsigma,\aaddr}$
    \item $r_1 \neq \rtmp{1}$ and $r_2 \neq \rtmp{1}$
    \item $\baddr \leq \aaddr \tand \aaddr + \calllen-1 \leq \eaddr$
    \item $\exec{\Phi_S(\pcreg)}$
    \item for all $i = 0..\calllen-1$, $\ms_T(\aaddr+i) = \ms_S(\aaddr+i) \in \ints$.
    \item $[\baddr,\eaddr] \subseteq \ta$
    \item $\Phi_S \step[\gc] \Phi_S' \neq \failed$
    \item $\Phi_T \step \Phi_T' \neq \failed$
    \item $\Phi_S(r_1) = \sealed{\sigma,c_1}$
    \item $\Phi_S(r_2) = \sealed{\sigma,c_2}$
    \item $\nonExec{c_2}$
    \item $\Phi_S(\rstk) = \stkptr{\rw, \baddr_\stk, \eaddr_\stk, \aaddr_\stk}$
    \item $\baddr_\stk < \aaddr_\stk \leq \eaddr_\stk$
    \item $\Phi_S(\pcreg) = ((\perm,\normal),\baddr,\eaddr,\aaddr)$
    \item $w_1 = \linCons{c_1}$ and $w_2 = \linCons{c_2}$
    \item \[
        \begin{split}
          \Phi_S''.\reg = \Phi_S.reg      &[\rstk \mapsto \stkptr{\rw,\baddr_\stk,\aaddr_\stk-1,\aaddr_\stk-1}] \\
          &[\rretc \mapsto \sealed{\sigma',\retptrc(\baddr,\eaddr,\aaddr+\calllen)}]\\
          &[\rretd \mapsto \sealed{\sigma',\retptrd(\aaddr_\stk,\eaddr_\stk)}] \\
          &[r_1,r_2 \mapsto w_1,w_2] \\
          &[\rtmp{1} \mapsto 0]
        \end{split}
      \]
    \item $\Phi_S''.\mem = \Phi_S.\mem$
    \item $\ms_{\stk\_\priv,S} = \Phi_S.\ms_\stk |_{[\aaddr_\stk,\eaddr_\stk]}\update{\aaddr_\stk}{42}$
    \item $\Phi_S''.\stk = ((\aaddr+\calllen),\ms_{\stk\_\priv,S}) :: \Phi_S.\stk$
    \item $\Phi_S''.\ms_\stk = \Phi_S.\ms_\stk -
      \Phi_S.\ms_\stk|_{[\aaddr_\stk,\eaddr_\stk]}$
    \item $\baddr \leq \aaddr+\offpc \leq \eaddr$
    \item $\mem(\aaddr+\offpc) = \seal{\sigma_\baddr,\sigma_\eaddr,\sigma_\aaddr}$
    \item $\sigma' = \sigma_\aaddr + \offsigma$
    \item $\sigma_\aaddr \leq \sigma' \leq \sigma_\eaddr$

    \item $\Phi_S' =\var{xjumpResult}\left(c_1,c_2, \Phi_S'' \right)$

    \end{itemize}
  \item \label{case:ftlr:move-cca-etc} All of the following hold: (includes cap-manipulation cases: move, cca,
    restrict, seta2b, cseal, split, splice, that can be handled mostly
    uniformly)
    \begin{itemize}
    \item $\Phi_S \step[\gc] \Phi_S'$
    \item $\Phi_T \step \Phi_T'$
    \item $\Phi_S$ does not point to $\scall{\offpc,\offsigma}{r_1}{r_2}$ or $\txjmp{r_1}{r_2}$
    \item $\Phi_S' = \updPcAddr{\Phi_S\updReg{r_1\cdots r_k}{w_1\cdots w_k}}\neq
      \failed$
    \item $\Phi_T' = \updPcAddr{\Phi_T\updReg{r_1\cdots r_k}{w_1'\cdots w_k'}}
      \neq \failed$
    \item $r_i \neq \pcreg$ for all $i$
    \item One of the following holds:
      \begin{enumproof}
      \item \label{case:ftlr:move-cca-etc:cca-normal} (restrict,cca,seta2b) $w_1 = ((\perm',\lin),\baddr,\eaddr,\aaddr')$,
        $w_1' = ((\perm',\lin),\baddr,\eaddr,\aaddr')$, $\Phi_S(r_1) =
        ((\perm,\lin),\baddr,\eaddr,\aaddr)$ and $\Phi_T(r_1) =
        ((\perm,\lin),\baddr,\eaddr,\aaddr)$ and $\perm' \sqsubseteq\perm$, $k =
        1$
      \item \label{case:ftlr:move-cca-etc:cca-stack}(restrict,cca,seta2b)
        \begin{itemize}
        \item $w_1 = \stkptr{\perm',\baddr,\eaddr,\aaddr'}$, 
        \item $w_1' = ((\perm',\linear),\baddr,\eaddr,\aaddr')$, 
        \item $\Phi_S(r_1) = \stkptr{\perm,\baddr,\eaddr,\aaddr}$ and 
        \item $\Phi_T(r_1) = ((\perm,\linear),\baddr,\eaddr,\aaddr)$ and 
        \item $\perm' \sqsubseteq\perm$, 
        \item $k = 1$
      \end{itemize}

      \item \label{case:ftlr:move-cca-etc:cca-seal}(cca,seta2b) $w_1 = \seal{\sigma_\baddr,\sigma_\eaddr,\sigma'}$,
        $w_1' = \seal{\sigma_\baddr,\sigma_\eaddr,\sigma'}$, $\Phi_S(r_1) =
        \seal{\sigma_\baddr,\sigma_\eaddr,\sigma}$ and $\Phi_T(r_1) =
        \seal{\sigma_\baddr,\sigma_\eaddr,\sigma}$, $k = 1$
      \item \label{case:ftlr:move-cca-etc:move}(move) $w_1 = \Phi_S(r_2)$, $w_1' = \Phi_T(r_2)$, and $w_2 = \linCons{w_1}$, $w_2' = \linCons{w_1'}$ and $r_1 \neq r_2$ and $k = 2$.
      \item \label{case:ftlr:move-cca-etc:cseal}(cseal) $w_1 = \sealed{\sigma,\Phi_S(r_1)}$, $w_1' = \sealed{\sigma,\Phi_T(r_1)}$, $\Phi_S(r_2) = \Phi_T(r_2) = \seal{\sigma_\baddr,\sigma_\eaddr,\sigma}$, $\sigma_\baddr \le \sigma \le \sigma_\eaddr$, and $k=1$, and $\Phi_S$ points to $\src{\tcseal{r_1}{r_2}}$.
      \item \label{case:ftlr:move-cca-etc:split-normal} (split) $\Phi_T(r_3) = \Phi_S(r_3) =
        ((\perm,\lin),\baddr,\eaddr,\aaddr)$, $\baddr \leq s$, $s < \eaddr$,
        $w_1 = w_1' = ((\perm,\lin),\baddr,n,\aaddr)$, $w_2 = w_2' =
        ((\perm,\lin),n+1,\eaddr,\aaddr)$, $w_3 = w_3' = \linCons{\Phi_S(r_3)}$, $k=3$.
      \item \label{case:ftlr:move-cca-etc:split-seal} (split) $\Phi_T(r_3) = \Phi_S(r_3) =
        \seal{\sigma_\baddr,\sigma_\eaddr,\sigma}$, $\sigma_\baddr \leq s$, $s <
        \sigma_\eaddr$, $w_1 = w_1' = \seal{\sigma_\baddr,s,\sigma}$, $w_2 =
        w_2' = \seal{s+1,\sigma_\eaddr,\sigma}$, $k=2$.
      \item \label{case:ftlr:move-cca-etc:split-stack} (split) $\Phi_S(r_3) = \stkptr{\perm,\baddr,\eaddr,\aaddr}$,
        $\Phi_T(r_3) = ((\perm,\linear),\baddr,\eaddr,\aaddr)$, $\baddr \leq n$,
        $n < \eaddr$, $w_1 = \stkptr{\perm,\baddr,n,\aaddr}$, $w_1' =
        ((\perm,\linear),\baddr,n,\aaddr)$, $w_2 =
        \stkptr{\perm,n+1,\eaddr,\aaddr}$,

        $w_2' = ((\perm,\linear),n+1,\eaddr,\aaddr)$, $k=2$.
      \item \label{case:ftlr:move-cca-etc:splice-stack}(splice) $\Phi_T(r_2) = ((\perm,\linear),\baddr_2,\eaddr_2,\_)$,
        $\Phi_S(r_2) = \stkptr{\perm,\baddr_2,\eaddr_2,\_}$, and

        $\Phi_T(r_3) =
        ((\perm,\linear),\eaddr_2+1,\eaddr_3,\aaddr_3)$, $\Phi_S(r_3) =
        \stkptr{\perm,\eaddr_2+1,\eaddr_3,\aaddr_3}$, and $\baddr_2 \leq
        \eaddr_2$, and $\eaddr_2+1 \leq \eaddr_3$, and
        $w_1=((\perm,\linear),\baddr_2,\eaddr_3,\aaddr_3)$, $w_1'=
        \stkptr{\perm,\baddr_2,\eaddr_3,\aaddr_3}$, and $w_2 = w_2' = w_3 = w_3'
        = 0$, and $k=3$
      \item \label{case:ftlr:move-cca-etc:splice-normal}(splice)
        \begin{itemize}
        \item $\Phi_T(r_2) = \Phi_S(r_2) = ((\perm,\lin),\baddr_2,\eaddr_2,\_)$,
          and 
        \item $\Phi_T(r_3) = \Phi_S(r_3) =
          ((\perm,\lin),\eaddr_2+1,\eaddr_3,\aaddr_3)$, and 
        \item $\baddr_2 \leq
          \eaddr_2$, and 
        \item $\eaddr_2+1 \leq \eaddr_3$, and
        \item $w_1=w_1'=((\perm,\lin),\baddr_2,\eaddr_3,\aaddr_3)$, and 
        \item $w_2 = w_2'
          = \linCons{\Phi_S(r_2)}$ and 
        \item $w_3 = w_3' = \linCons{\Phi_S(r_3)}$, and
        \item $k=3$
        \end{itemize}

      \item \label{case:ftlr:move-cca-etc:splice-seal} (splice) $\Phi_T(r_2) = \Phi_S(r_2) =
        \seal{\sigma_{\baddr,2},\sigma_{\eaddr,2},\_}$, and $\Phi_T(r_3) =
        \Phi_S(r_3) =
        \seal{\sigma_{\eaddr,2}+1,\sigma_{\eaddr,3},\sigma_{\aaddr,3}}$, and
        $\sigma_{\baddr,2} \leq \sigma_{\eaddr,2}$, and $\sigma_{\eaddr,2} + 1
        \leq \sigma_{\eaddr,3}$and $w_1 = w_1' = \seal{\sigma_{\baddr,2},
          \sigma_{\eaddr,3}, \sigma}$
      \item \label{case:ftlr:move-cca-etc:jnz-z} (jnz zero case, noop move) $k = 0$
      \end{enumproof}
    \end{itemize}
  \item \label{case:ftlr:store-load-mem} All of the following hold: (includes memory-manipulation cases: store,
    load, that can be handled mostly uniformly)
    \begin{itemize}
    \item $\Phi_S \step[\gc] \Phi_S'$
    \item $\Phi_T \step \Phi_T'$
    \item $\Phi_S$ does not point to $\scall{\offpc,\offsigma}{r_1}{r_2}$ or $\txjmp{r_1}{r_2}$
    \item $\Phi_S' =
      \updPcAddr{\Phi_S\updReg{r_1,r_2}{w_1,w_2}\update{\mem.\aaddr}{w}}$
    \item $\Phi_T' =
      \updPcAddr{\Phi_T\updReg{r_1',r_2'}{w_1',w_2'}\update{\mem.\aaddr}{w'}}$
    \item $r_i \neq \pcreg$ for all $i$
    \item One of the following hold:
      \begin{enumproof}
      \item (store) $w_1 = w_1' = \Phi_S(r_1) = \Phi_T(r_1) =
        ((\perm,\lin),\baddr,\eaddr,\aaddr)$, and $\perm \in \writeAllowed{}$,
        and
        $\withinBounds{w_1}$, and\\
        $w = \Phi_S(r_2)$, and $w' = \Phi_T(r_2)$, and $w_2 = \linCons{\Phi_S(r_2)}$, $w_2' = \linCons{\Phi_T(r_2)}$.
      \item (load) $w_2 = w_2' = \Phi_T(r_2) = \Phi_S(r_2) =
        ((\perm,\lin),\baddr,\eaddr,\aaddr)$, and $\perm \in \readAllowed{}$,
        $\withinBounds{((\perm,\lin),\baddr,\eaddr,\aaddr)}$, and \\
        $w_1 = \Phi_S.\mem(\aaddr)$, and $w_1' = \Phi_T.\mem(\aaddr)$, and \\
        $w = \linCons{w_1}$, $w' = \linCons{w_1'}$, $\linConsPerm{\perm}{w_1}$, $\linConsPerm{\perm}{w_1'}$
      \end{enumproof}
    \end{itemize}
  \item \label{case:ftlr:store-load-stack} All of the following hold: (includes memory-manipulation cases: store,
    load, that can be handled mostly uniformly (stack))
    \begin{itemize}
    \item $\Phi_S \step[\gc]\Phi_S'$
    \item $\Phi_T \step \Phi_T'$
    \item $\Phi_S$ does not point to $\scall{\offpc,\offsigma}{r_1}{r_2}$ or $\txjmp{r_1}{r_2}$
    \item $\Phi_S' =
      \updPcAddr{\Phi_S\updReg{r_1,r_2}{w_1,w_2}\update{\ms_\stk.\aaddr}{w}}$
    \item $\Phi_T' =
      \updPcAddr{\Phi_T\updReg{r_1',r_2'}{w_1',w_2'}\update{\ms_\stk.\aaddr}{w'}}$
    \item $r_i \neq \pcreg$ for all $i$
    \item One of the following hold:
      \begin{enumproof}
      \item (store) $w_1 = \Phi_T(r_1) =
        ((\perm,\linear),\baddr,\eaddr,\aaddr)$, $w_1' = \Phi_S(r_1) =
        \stkptr{\perm,\baddr,\eaddr,\aaddr}$, and $\perm \in \writeAllowed{}$,
        and
        $\withinBounds{w_1}$, and\\
        $w = \Phi_S(r_2)$, and $w' = \Phi_T(r_2)$, and $w_2 = \linCons{\Phi_S(r_2)}$, $w_2' = \linCons{\Phi_T(r_2)}$.
      \item (load) $w_2' = \Phi_T(r_2) =
        ((\perm,\linear),\baddr,\eaddr,\aaddr)$, and $w_2 = \Phi_S(r_2) =
        \stkptr{\perm,\baddr,\eaddr,\aaddr}$, and $\perm \in \readAllowed{}$,
        $\withinBounds{((\perm,\lin),\baddr,\eaddr,\aaddr)}$, and $\aaddr \in
        \dom(\Phi.\ms_\stk)$, and $\aaddr \in \dom(\Phi.\ms_\stk)$, and\\
        $w_1 = \Phi_S.\ms_\stk(\aaddr)$, and $w_1' = \Phi_T.\ms_\stk(\aaddr)$, and $w = \linCons{w_1}$, $w' = \linCons{w_1'}$, $\linConsPerm{\perm}{w_1}$ and $\linConsPerm{\perm}{w_1'}$.
      \end{enumproof}
    \end{itemize}
  \item \label{case:ftlr:jump} All of the following hold: (includes control-flow manipulation cases:
    jmp, jnz, xjmp, that can be handled mostly uniformly)
    \begin{itemize}
    \item $\Phi_S \step[\gc] \Phi_S'$
    \item $\Phi_T \step \Phi_T'$
    \item $\Phi_S$ does not point to $\scall{\offpc,\offsigma}{r_1}{r_2}$ or $\txjmp{r_1}{r_2}$
    \item One of the following holds
      \begin{enumproof}
      \item(jmp,jnz)
        \begin{itemize}
        \item $\Phi_S' = \Phi_S\updReg{\pcreg,r_1}{\Phi_S(r_1),w_1}$ and
        \item $\Phi_T' = \Phi_T\updReg{\pcreg,r_1'}{\Phi_T(r_1),w_1'}$ and
        \item $\Phi_S(r_1) = \Phi_T(r_1) = ((\perm_1,\lin_1),\baddr_1,\eaddr_1,\aaddr_1)$,
        \item $\exec{\Phi_S(r_1)}$, 
        \item $\withinBounds{\Phi_S(r_1)}$,
        \item $w_1 = \linCons{\Phi_S(r_1)}$ and 
        \item $w_1' = \linCons{\Phi_T(r_1)}$
      \end{itemize}

      \item(xjmp)
        \begin{itemize}
        \item $\Phi_S(r_1) = \sealed{\sigma,c_1}$ and 
        \item $\Phi_S(r_2) = \sealed{\sigma,c_2}$ and 
        \item $\Phi_T(r_1) = \sealed{\sigma,c_1'}$ and 
        \item $\Phi_T(r_2) = \sealed{\sigma,c_2'}$ and 
        \item $c_1' \neq \retptrc(\_)$ and 
        \item $c_2' \neq \retptrd(\_)$ and 
        \item $\nonExec{\Phi_S(r_2)}$ and 
        \item $\nonExec{\Phi_T(r_2)}$ and 
        \item $\Phi_S'' = \Phi_S\updReg{r_1,r_2}{\linCons{c_1},\linCons{c_2}}$ and 
        \item $\Phi_S' = \xjumpResult{c_1}{c_2}{\Phi_S''}$ and 
        \item $\Phi_T'' = \Phi_T\updReg{r_1,r_2}{\linCons{c_1'},\linCons{c_2'}}$ and 
        \item $\Phi_T' = \xjumpResult{c_1'}{c_2'}{\Phi_T''}$
        \end{itemize}

      \end{enumproof}
    \end{itemize}
  \end{enumproof}
  The above follows from a careful analysis of the cases of the operational semantics, using the following facts:
  \begin{itemize}
  \item $[b,e] \subseteq \ta$ or $[b,e] \mathrel{\#} \ta$ (by assumption of this lemma)
  \item $\Phi_S.\reg(\pcreg) = \Phi_T.\reg(\pcreg)$ (by assumption of this lemma)
  \item $\Phi_S.\mem(a) = \Phi_T.\mem(a)$ if $a \in [b,e]$: follows from
    $\npair{[\baddr,\eaddr]} \in \xReadCond{W_\pcreg}$, the fact that $W_\pcreg
    \oplus W_R \oplus W_M$ is defined and
    $\memSat{\Phi_S.\mem,\Phi_S.\stk,\Phi_S.\ms_\stk,\Phi_T.\mem}{W_M}$
  \item $\Phi_S.\mem(a\cdots a+\calllen-1) = \Phi_T.\mem(a \cdots a+\calllen -1)$ if $[a\cdots a+\calllen-1] \subseteq [b,e]$ (follows similarly).
  \item $\Phi_S.\reg(r) = ((\perm_1,\lin_1),\baddr_1,\eaddr_1,\aaddr_1)$ implies that $\Phi_S.\reg(r) = \Phi_T.\reg(r)$ and $a_1 \in \dom(\Phi_S.\mem)$ (follows from $\npair{(\Phi_S.\reg,\Phi_T.\reg)} \in \lrrg{\trust}(W_R)$).
  \item $\Phi_S.\reg(r) = z$ implies that $\Phi_S.\reg(r) = \Phi_T.\reg(r)$ (follows from $\npair{(\Phi_S.\reg,\Phi_T.\reg)} \in \lrrg{\trust}(W_R)$).
  \item $\Phi_S.\reg(r) = \seal{\sigma_b,\sigma_e,\sigma}$ implies that $\Phi_S.\reg(r) = \Phi_T.\reg(r)$ (follows from $\npair{(\Phi_S.\reg,\Phi_T.\reg)} \in \lrrg{\trust}(W_R)$).
  \item $\Phi_S.\reg(r) = \sealed{\sigma,\vsc}$ implies that $\Phi_T.\reg(r) = \sealed{\sigma,\vsc'}$ (follows from $\npair{(\Phi_S.\reg,\Phi_T.\reg)} \in \lrrg{\trust}(W_R)$).
  \item $\Phi_S.\reg(r) \neq \retptrd(\_,\_)$ (follows from $\npair{(\Phi_S.\reg,\Phi_T.\reg)} \in \lrrg{\trust}(W_R)$).
  \item $\Phi_S.\reg(r) \neq \retptrc(\_,\_,\_)$ (follows from $\npair{(\Phi_S.\reg,\Phi_T.\reg)} \in \lrrg{\trust}(W_R)$).
  \item $\Phi_S.\reg(r) = \stkptr{\perm_1,\baddr_1,\eaddr_1,\aaddr_1}$ implies that $\Phi_T.\reg(r) = ((\perm_1,\linear),b_1,e_1,a_1)$ and $a \in \dom(\Phi_S.\ms_\stk)$ (follows from $\npair{(\Phi_S.\reg,\Phi_T.\reg)} \in \lrrg{\trust}(W_R)$).
  \item Similar facts about the current address, base and end pointer, permissions and linearity of all register capabilities being equal (follows from $\npair{(\Phi_S.\reg,\Phi_T.\reg)} \in \lrrg{\trust}(W_R)$).
  \end{itemize}

  By the above observation, we know that $\Phi_S \step[\gc] \Phi_S'$ and $\Phi_T \step
  \Phi_T'$ for some $\Phi_S'$ and $\Phi_T'$.
  According to Lemma~\ref{lem:lro-anti-red-gen}, it suffices to show:
  \[
    \npair[n-1]{(\Phi_S',\Phi_T')} \in \lro
  \]
  Consider each of the possible cases:

  In case \ref{case:ftlr:failed}, both executions go to $\failed$.
  In this case, the result follows vacuously by definition of $\lro[\preceq,\gc]$ and $\lro[\preceq,\gc]$.
  \\\\
  In case \ref{case:ftlr:halted}, both source and target configuration halts in 0
  steps, so both directions of $\lro$ are trivially satisfied.
  \\\\
  In case \ref{case:ftlr:get-and-arith-op}, we use the induction hypothesis to conclude
  \[
    \npair[n-1]{(\Phi_S',\Phi_T')} \in \lro
  \]
  from the following facts:
  \begin{itemize}
  \item One of the following sets of requirements holds:
    \begin{itemize}
    \item $\trust = \trusted$, $\Phi_S$ is reasonable up to $n-1$ steps and $[\baddr,\eaddr] \subseteq \ta$
    \item $\trust = \untrusted$ and $[b,e] \mathrel{\#} \ta$ and $\npair[n-1]{[\baddr,\eaddr]} \in \readCond{\normal,W_\pcreg}$
    \end{itemize}
    This follows from the corresponding assumption of this lemma, the fact that $\Phi_S \step[\gc] \Phi_S'$ and $\Phi_S$ does not point to $\scall{\offpc,\offsigma}{r_1}{r_2}$ or $\txjmp{r_1}{r_2}$ and \ref{lem:downwards-closed}.
  \item $\Phi_S'(\pcreg) = \Phi_T'(\pcreg) = ((\rx,\normal),\baddr,\eaddr,\_)$:
    Follows from the definition of $\updPcAddr{}$ and the assumptions of this case.
  \item $\npair[n-1]{[\baddr,\eaddr]} \in \xReadCond{W_\pcreg}$:
    Follows from the corresponding assumption of this lemma using Lemma~\ref{lem:downwards-closed}.
  \item $\npair[n-1]{(\Phi_S.\reg,\Phi_T.\reg)} \in \lrrg{\trust}(W_R)$:
    Follows from the corresponding assumption of this lemma using Lemma~\ref{lem:downwards-closed}, the definition of $\lrr$ and the fact that integers are always in $\lrvg{\trust}$ by definition.
  \item $\memSat[n-1]{\Phi_S.\mem,\Phi_S.\ms_\stk,\Phi_S.\stk,\Phi_T.\mem}{W_M}$:
    Follows from the corresponding assumption of this lemma using Lemma~\ref{lem:downwards-closed}.
  \item Theorem~\ref{thm:ftlr} holds for all $n'' < n-1$:
    follows from the corresponding assumption of this lemma, since $n - 1 < n$.
  \end{itemize}

  In case \ref{case:ftlr:call}, we may assume $r_1 \neq r_2$\footnote{If the register contains a data capability, then the execution fails in the step after the jump. If it is an executable capability, then the xjump fails as it does not permit executable capabilities for the data part.} and $r_i \neq \pcreg$ for $i \in \{1,2\}$\footnote{The pc is executable which causes the xjump to fail.} as this will cause the execution to fail.
  We need to let the target execution catch up.
  That is $\Phi_T' \nstep[15]{} \Phi_T''$ for
  \[
    \Phi_T'' = \Phi_T
    \begin{array}[t]{l}
      \update{\mem.\aaddr_\stk}{42} \\
      \updReg{\rstk}{((\rw,\linear),\baddr_\stk,\aaddr_\stk-1,\aaddr_\stk-1)} \\
      \updReg{\rretd}{\sealed{\sigma',((\rw,\linear),\aaddr_\stk,\eaddr_\stk,\aaddr_\stk-1)}} \\
      \updReg{\rretc}{\sealed{\sigma',((\perm,\normal),\baddr,\eaddr,\aaddr+\retoffset)}} \\
      \updReg{\rtmp{1}}{0}\\
      \updReg{r_1,r_2}{\linCons{\Phi_T(r_1)},\linCons{\Phi_T(r_2)}}\\
      \updReg{\pcreg}{c_1'}\\
      \updReg{\rdata}{c_2'}
    \end{array}
  \]
  where $\Phi_T(r_i) = \sealed{\sigma,c_i'} \text{ for $i\in\{1,2\}$}$ and $\retoffset = 15$ which is the offset to the return code.
  Now using Lemma~\ref{lem:lro-anti-red-gen} again, it suffices to show
  \[
    \npair[n-1]{(\Phi_S',\Phi_T'')} \in \lro
  \]
  By assumption, we have $\npair[n]{(\reg_S(r_i),\reg_T(r_i))} \in \lrvg{\trust}(W_{R,i})$ for some $W_{R,i}$ with $i \in \{1,2\}$.
  We know the capabilities in $r_1$ and $r_2$ are sealed capabilities, and by
  Lemma~\ref{lem:unique-h-sigma} and the definition of $\lrvg{\trust}$ we get
  $\npair[n-1]{(c_2,c_2')} \in H_\sigma \; \sigma \; \xi^{-1}(W_{R,2})$
  and w.l.o.g.\

  \begin{equation}
    \forall W' \future W_{R,1}, W_o, n' < n, \npair[n']{(c_2,c_2')} \in H_\sigma \; \sigma \; \xi^{-1}(W_o) \ldotp \npair[n']{c_1,c_2,c_1',c_2'} \in \lrexj(W'\oplus W_o)\label{eq:ftlr:call:sealed-code-assump}
  \end{equation}
  Now take $W_o = W_{R,2}$, take $n' = n-1$ and construct $W_{R,1}'$ as follows:

  By Lemma~\ref{lem:nsub-and-nsup-std} and the safety assumption on the register-file, there exists $S \supseteq [\baddr_\stk, \eaddr_\stk]$ such that for some $R : S \fun \powerset{\nats}$ we have $\biguplus_{r \in S} R(r) \supseteq [\baddr_\stk,\eaddr_\stk]$ and for all $r \in S$, $\pwfree[W_{R,1}](r).H \nequal \stdreg{R(r),\gc}{\pur}.H$ and $|R(r)| = 1$ and $\pwfree[W_{R,1}](r)$ is address-stratified.
  Now take $r_\var{priv\_stk}$ fresh and define
  \[
    W_{R,1}' = W_{R,1}[\free.R^{-1}([\aaddr_\stk,\eaddr_\stk]) \mapsto \revoked][\priv.r_\var{priv\_stk} \mapsto (\stareg[(\ms_{\stk\_\priv,S},\Phi_T''.\mem|_{[\aaddr_\stk,\eaddr_\stk]}),\gc]{\spa,\square},\aaddr+\calllen)]
  \]
  We know $W_{R,1}' \future W_{R,1}$ as the revoked regions must have been $\spatial$ in $W_{R,q}$ (as they are owned by the part of the world assigned to the stack-register in the register-file relation).
  The static region for the private stack is an extension of the old world.

  Pick this world as $W'$ in Eq~\ref{eq:ftlr:call:sealed-code-assump}.
  Let $W_{R,2}'$ be the same world but with the ownership of $W_{R,2}$ and pick it for $W_o$.
  Now observe that also $W_{R,2}' \future W_{R,2}$ and use monotonicity of $H_\sigma$ with the above facts to get
  \[
    \npair[n-1]{c_1,c_2,c_1',c_2'} \in \lrexj(W_{R,1}'\oplus W_{R,2}')
  \]

  Now pick register files and memories such that they form $\Phi_S''$ (defined in
  the assumptions) and $\Phi_T''$ and for $W_R'$ and $W_M'$ (we define them below) show
  \begin{enumerate}[label=\roman*.]
  \item \label{case:ftlr:scall:world} $W_{R,1}'\oplus W_{R,2}' \oplus W_R' \oplus W_M'$ is defined.
  \item \label{case:ftlr:scall:reg} $\npair[n-1]{(\Phi_S''.\reg,\Phi_T''.\reg)} \in \lrr(\{\rdata\})(W_R')$
  \item \label{case:ftlr:scall:mem} $\memSat[n-1]{\Phi_S''.\mem,\Phi_S''.\stk,\Phi_S''.\ms_\stk,\Phi_T''.\mem}{W_M'}$
  \end{enumerate}
  to get
  \[
    \npair[n-1]{(\Phi_S',\Phi_T'')} \in \lro
  \]
  as desired.

  It remains to show \ref{case:ftlr:scall:world}-\ref{case:ftlr:scall:mem}, but first we note that we can deduce the following: From the assumption $\npair{[\baddr,\eaddr]} \in \xReadCond{W}$ we get $r \in \addressable{\normal, \pwheap}$ and $\mscode$ such that $\dom(\mscode) \supseteq [\baddr,\eaddr]$ and $\pwheap(r) \nequal \codereg{\sigrets',\_,\mscode,\gc}$.
  Further by $\memSat{\ms_S,\stk,\ms_\stk,\ms_T}{W}$, we know that
  \begin{equation}
    \npair{(\mscode\uplus\mspad,\mscode\uplus\mspad)} \in H^{\var{code}} \; \sigrets' \; \sigcloss' \; \mscode \; (\ta,\_,\gsigrets,\gsigcloss)~ W_r\label{eq:ftlr:rretd:Hcode}
  \end{equation}
  where $\sigma' \in \sigrets'$ and $\dom(\mscode) \supseteq [\baddr,\eaddr]$ and $\aaddr + \offpc \in \dom(\mscode)$ and $W = W_r \oplus \_$.
  That is: $\mscode$ contains the call we are considering.
  This also entails
  \begin{equation}
    \label{eq:ftlr:code-comp-jud}
    \sigrets',\sigcloss' \vdash_\var{code\_comp} \mscode
  \end{equation}
  from \ref{eq:ftlr:rretd:Hcode} we also get
\begin{itemize}
\item $\dom(\mscode\uplus\mspad) \subseteq \ta$

  Otherwise we would have $\dom(\mscode\uplus\mspad) \# \ta$ which would contradict $\ta\supseteq [b,e] \subseteq\dom(\mscode\uplus\mspad)$

\item $\sigrets' \subseteq \gsigrets$

  Follows from the above.
\end{itemize}
  \lau{TODO: Make the above more precise and possibly put into a lemma (also needed in the next case).}

  Case \ref{case:ftlr:scall:world}: Pick $W_R'$ and $W_M'$ to have the regions of $W_{R,1}'$, but where $W_R'$ owns $r_\var{priv\_stk}$ and otherwise has the ownership of $W_R$ and $W$ with the exception of the regions owned by $W_{R,1}'$ and $W_{R,2}'$.
  $W_M'$ has the ownership of $W_M$.
  Case \ref{case:ftlr:scall:world} follows from assumption $W \oplus W_R \oplus W_M$.
  The only changes to the worlds is that some ownership has been shifted from $W_R$ to $W_{R,1}$ and $W_{R,2}$ and the ownership for $W$ now belongs to $W_R'$.
  In other words, no ownership has been duplicated.

  Case \ref{case:ftlr:scall:reg}:

  First, from reasonability of $\Phi_S$, we get that $\Phi.\reg(r)$ is reasonable in memory $\Phi.\mem$ and free stack $\Phi.\ms_\stk$ up to $n-1$ steps for all $r \neq \pcreg$.
  Lemma~\ref{lem:trusted-and-reasonable-is-untrusted} then tells us that
  $\npair[n-1]{(\Phi_S.\reg,\Phi_T.\reg)} \in \lrr(W_R)$.

  We then need to split the ownership of $W_R'$.
  From assumption $\npair[n]{(\Phi_S.reg,\Phi_T.\reg)} \in \lrr(W_R)$, we get a way to split the ownership of $W_R$.
  We take this as the starting point, but with the following changes: regions $r_1$ and $r_2$ maps to worlds with no ownership (i.e.\ $\purePart{W_{R,1}'}$).
  region $\rstk$ maps to a world with the same ownership, but of course without the now revoked regions.
  Region $\rretd$ maps to a world that owns private $r_\var{priv\_stk}$ region.
  Finally, $\rretc$ maps to a world with the ownership of $W$.

  we split the world in the same way for the registers that remain unchanged, and we get from Lemma \ref{lem:downwards-closed} and \ref{lem:monotonicity} that
  $\npair[n-1]{(\Phi_S''.\reg,\Phi_T''.\reg)} \in \lrrg{\trust}(\{\rdata,\rstk,\rtmp{1},r_1,r_2,\rretd,\rretc\})(W_R'')$ for the appropriate $W_R''$.

  To obtain $\npair[n-1]{(\Phi_S''.\reg,\Phi_T''.\reg)} \in \lrr(\{\rdata\})(W_R')$, it remains to prove the following cases:
  \begin{description}
  \item[Case $\rstk$:] Show:
    \[
      \npair[n-1]{(\stkptr{\rw,\baddr_\stk,\aaddr_\stk-1,\aaddr_\stk-1},((\rw,\linear),\baddr_\stk,\aaddr_\stk-1,\aaddr_\stk-1))} \in \lrv(W_{R,\rstk}')
    \]
    We know by Assumption $\npair[n]{(\Phi_S.reg,\Phi_T.\reg)} \in \lrrg{\trust}(W_R)$, we know
    \[
      \npair[n]{(\stkptr{\rw,\baddr_\stk,\eaddr_\stk,\aaddr_\stk},((\rw, \linear), \baddr_\stk ,\eaddr_\stk, \aaddr_\stk)} \in \lrvg{\trust}(W_{R,\stk})
    \]
    which by the $\mathit{stackReadCondition}$ gives $S \subseteq \addressable{\linear,\pwfree}$ and $R : S \fun \powerset{\nats}$.
    We need to pick an $S'$ to argue
    \[
      \npair[n-1]{[\baddr_\stk,\aaddr_\stk-1]} \in \stackReadCond{W_{R,\rstk}'}
    \]
    To this end pick $S' = R^{-1}(\left(\bigcup_{r\in S} R(r) \right) \setminus [\aaddr_\stk,\eaddr_\stk])$ and $R'$ to be $R$ limited to $S'$.
    As we exclude all the revoked regions and the ownership otherwise remains the
    same in $W_{R,\rstk}'$ as in $W_{R,\rstk}$, the regions in $S'$ are exactly the same in the new world as they were in $W_{R,\rstk}$.
    So what we need to show follows immediately from
    \[
      \npair[n]{[\baddr_\stk,\eaddr_\stk]} \in \stackReadCond{W_{R,\rstk}}
    \]
    and by Lemma~\ref{lem:downwards-closed}.

    We show
    \[
      \npair[n-1]{[\baddr_\stk,\aaddr_\stk-1]} \in \stackWriteCond{W_{R,\rstk}'}
    \]
    in the same way.

  \item[Case $\rtmp{1}$:] Show:
    \[
      \npair[n-1]{(0,0)} \in \lrv(W_{R,\rtmp{1}}')
    \]
    Follows immediately from the definition.

  \item[Case $r_1$,$r_2$:] The two cases are symmetric, so we just show the $r_1$ case:
    \[
      \npair[n-1]{(\linCons{c_1},\linCons{\Phi_T(r_1)})} \in \lrv(W_{R,r_1}')
    \]
    Follows from the related register files assumption and Lemma~\ref{lem:lincons-lrv} and the fact that for sealed capabilities if they are in $\lrvg{\trusted}$, then they are in the $\lrv$ part.

  \item[Case $\rretd$:] We have to show
    \[
      \npair[n-1]{(\sealed{\sigma',\retptrd(\aaddr_\stk,\eaddr_\stk)},\sealed{\sigma',((\rw,\linear),\aaddr_\stk,\eaddr_\stk,\aaddr_\stk-1)})} \in \lrv(W_{R,\rretd}')
    \]
    where $W_{R,\rretd}'$ is the part of $W_R'$ with ownership over $r_\var{priv\_stk}$.

    Use the $r=r_{\mscode}$ as the witness.
    The $\xReadCond{}$ gives us $\pwheap(r) \nequal \codereg{\_,\_,\mscode,\gc}$.
    As it is a pure region, it is also present in the future world we consider.
    We now have to show:
    \begin{enumerate}[label=\alph*)]
    \item for $n'' < n-1$ we have
      \[\npair[n'']{\left(\array{l}\retptrd(\aaddr_\stk,\eaddr_\stk),\\((\rw,\linear),\aaddr_\stk,\eaddr_\stk,\aaddr_\stk-1)\endarray\right)}
        \in H^\mathrm{code,\square}_\sigma \; \sigrets' \; \sigcloss' \; \mscode'
        \; (\ta,\stkb) \; \sigma' \; \xi^{-1}(W_{R,\rretd}')
      \]
      First, we already know $\dom(\mscode) \subseteq \ta$ and $\sigrets' \subseteq \gsigrets$.
      Now pick $r_\var{priv\_stk}$ as the witness.
      We immediately get $\dom(\ms_{\stk\_\priv,S}) = \dom(\Phi_T''.\mem|_{[\aaddr_\stk,\eaddr_\stk]}) = [\aaddr_\stk,\eaddr_\stk]$.
      Next, $\decInstr{\mscode'(\aaddr,\aaddr+\calllen-1)} = \overline{\scall{\offpc,\offsigma}{r_1}{r_2}}$ follows from $\callCond{\Phi_S,r_1,r_2,\offpc,\offsigma,\aaddr}$.
      Finally, $\ms_\code(\aaddr+\offpc) = \seal{\sigma_\baddr,\sigma_\eaddr}$ with $\sigma' = \sigma_\baddr + \offpc \in \sigrets'$ which follows from $\sigrets',\sigcloss' \vdash_{\mathrm{comp-code}} \mscode$ and the fact that the call is there.
    \item $\isLinear{\sealed{\sigma',\retptrd(\aaddr_\stk,\eaddr_\stk)}}$ iff
      $\isLinear{\sealed{\sigma',((\rw,\linear),\aaddr_\stk,\eaddr_\stk,\aaddr_\stk)}}$\\
      Trivial, both are linear.
    \item
      \[
        \begin{array}{l}
          \forall W'' \future \purePart{W_{R,\rretd}'}, W_o, n'' < n-1,\\
          \quad\npair[n'']{(\vsc_S',\vsc_T')}
          \in  H^\mathrm{code,\square}_\sigma \; \sigrets' \; \sigcloss' \; \mscode'
          \; (\ta,\stkb) \; \sigma' \; \xi^{-1}(W_o) \ldotp \\
          \qquad\npair[n'']{\retptrd(\aaddr_\stk,\eaddr_\stk),\vsc_S',((\rw,\linear),\aaddr_\stk,\eaddr_\stk,\aaddr_\stk-1),\vsc_T'} \in
          \lrexj(W'\oplus W_o))
        \end{array}
      \]
      Trivial as both configurations fails.
    \end{enumerate}

  \item[Case $\rretc$:]
    We have to show
    \[
      \npair[n-1]{\left(\array{l}\sealed{\sigma',\retptrc(\baddr,\eaddr,\aaddr +
            \calllen)},\\\sealed{\sigma',((\rx,\normal),\baddr,\eaddr,\aaddr + \retoffset)}\endarray\right)} \in \lrv(W_{R,\rretc}')
    \]
    where $W_{R,\rretc}'$ has the ownership of $W$.
    Just as in the previous case, we know that for some region $r$ there exists an $\mscode$ such that: $\pwheap(r) \nequal \codereg{\sigrets',\_,\mscode,\gc}$ where $\sigma' \in \sigrets'$ and $\dom(\mscode) \supseteq [\baddr,\eaddr]$ and $\aaddr + \offpc \in \dom(\mscode)$ and $\sigrets' \subseteq \gsigrets$.
    It follows easily from the definition of $H_\sigma^\var{code}$ that
    \begin{equation}
      \label{eq:code-caps-in-h-sigma}
      \npair[n'']{\left(\array{l}\retptrc(\baddr,\eaddr,\aaddr + \calllen),\\((\rx,\normal),\baddr,\eaddr,\aaddr + \retoffset)\endarray\right)} \in H^\mathrm{code,\square}_\sigma \; \sigrets' \; \sigcloss' \; \mscode \; (\ta,\stkb) \; \sigma' \; \xi^{-1}(W_{R,\rretc}')
    \end{equation}
    for $n'' < n-1$.
    Both capabilities are non-linear, so
    \[
      \isLinear{\retptrc(\baddr,\eaddr,\aaddr + \calllen)}\text{ iff } \isLinear{((\rx,\normal),\baddr,\eaddr,\aaddr + \retoffset)}
    \]
    is indeed the case.

    Finally we need to show:
    \[
      \begin{array}{l}
        \forall W'' \future \purePart{W_{R,\rretc}'}, W_o, n'' < n-1,\\
        \quad  \npair[n'']{(\vsc_S',\vsc'_T)} \in H^\mathrm{code,\square}_\sigma \; \sigrets' \; \sigcloss' \; \mscode \; (\ta,\stkb) \; \sigma' \; \xi^{-1}(W_o) \ldotp \\
        \qquad \npair[n'']{\retptrc(\baddr,\eaddr,\aaddr + \calllen),\vsc_S',((\rx,\normal),\baddr,\eaddr,\aaddr + \retoffset),\vsc_T'} \in \lrexj(W''\oplus W_o))
      \end{array}
    \]

    To this end let $W'' \future \purePart{W_{R,\rretc}'}$ and $W_o$ be given s.t. $W'' \oplus W_o$ is defined.
    Further, let $\npair[n'']{(\vsc_S',\vsc'_T)} \in H^\mathrm{code,\square}_\sigma \; \sigrets' \; \sigcloss' \; \mscode \; (\ta,\stkb) \; \sigma' \; \xi^{-1}(W_o)$ be given and show
    \[
      \npair[n'']{\retptrc(\baddr,\eaddr,\aaddr + \calllen),\vsc_S',((\rx,\normal),\baddr,\eaddr,\aaddr + \retoffset),\vsc_T'} \in \lrexj(W''\oplus W_o))
    \]

    Now let $n''' \leq n''$ be given along with $\reg[3]_S$, $\reg[3]_T$, $\ms^{(3)}_S$, $\ms^{(3)}_T$,$\ms^{(3)}_\stk$, $\stk^{(3)}$, $W_R''$, and $W_M''$ such that
    \begin{itemize}
    \item $W'' \oplus W_o \oplus W_R'' \oplus W_M''$ is defined
    \item $\memSat[n''']{\ms^{(3)}_S,\ms^{(3)}_\stk, \stk^{(3)}, \ms^{(3)}_T}{W_M''}$
    \item $\npair[n''']{(\reg[3]_S,\reg[3]_T)} \in \lrr(W_R'')$
    \end{itemize}

    Based on $H_\sigma^\var{code}$, there are three possible values for $\vsc_S'$ and $\vsc_T'$.
    In the first case, $\vsc_S'$ is a $\retptrc$ and $\vsc_T'$ is a capability with permission $\rx$.
    In this case, $\var{xjumpResult}$ will produce failed configurations which are trivially in the observation relation.
    In the next case, it is required that $\sigma' \in \sigcloss'$, but this
    cannot be the case as $\sigma' \in \sigrets'$ and we have $\sigrets',\sigcloss' \vdash_{\mathrm{comp-code}} \mscode$, which implies that $\sigcloss' \mathrel{\#} \sigrets'$.

    This leaves us with one final case, namely $\vsc_S'= \retptrd(\baddr_\stk',\eaddr_\stk')$ and $\vsc_T'=((\rw,\linear),\baddr_\stk',\eaddr_\stk',\baddr_\stk'-1)$.
    Further we know
    \[
      \begin{array}{l}
        \exists r \in \addressable{\linear,\pwpriv[W_o]} \ldotp \pwpriv[W_o](r).H \nequal (\stareg[(\ms_{\priv,S}^{(3)},\ms_{\priv,T}^{(3)}),\gc]{\spao,\square}, \aaddr'+\calllen) \tand \\
        \quad \dom(\ms_{\priv,S}^{(3)}) = \dom(\ms_{\priv,T}^{(3)}) = [\baddr_\stk',\eaddr_\stk'] \tand\\
        \quad \decInstr{\code([\aaddr',\aaddr' + \calllen-1])} = \overline{\scall{\offpc,\offsigma}{r_1}{r_2}} \tand \\
        \quad \code(\aaddr'+\offpc) = \seal{\sigma_b,\sigma_e,\sigma_b} \tand \sigma = \sigma_b + \offsigma \in \sigrets
      \end{array}
    \]
    Call this region $r'$.
    By the above, the fact that the code capability pair is in $H_\sigma^\var{code}$ (\ref{eq:code-caps-in-h-sigma}) and Lemma~\ref{lem:unique-ret-seals}, we get $\aaddr' = \aaddr$.
    This means that $W_M''$ and $W_R''$ have this region.

    We know that the two register-files are related which in particular means that the values in register $\rstk$ are related.
    Now consider the following cases: 
    \begin{itemize}
    \item $\reg_S^{(3)}(\rstk) \neq \stkptr{\_,\_,\_,\_}$\\
      In this case due to $\reg[3]_S$ being related to $\reg[3]_T$, there are three cases we need to consider.
      In all cases, the source configuration will fail because the value in the stack register is not a stack capability.
      If we can argue that the target configuration will also fail, then the two are in the observation relation.
      First, if $\reg[3]_T = \sealed{\sigma_\var{ret\_\stk},\vsc_T''}$, then the return code will fail when the base address (a sealed capability has no base address, so the instruction returns $-1$) is compared to $\stkb$.
      Second, if $\reg[3]_T = \seal{\_,\_,\_}$, then the target execution fails when it attempts to splice this seal with $\vsc_T'$ (which we know is not a $\seal{}$ capability).
      Finally, $\reg[3]_T = ((\perm_\var{ret\_\stk},\lin_\var{ret\_\stk}),\baddr_\var{ret\_\stk},\eaddr_\var{ret\_\stk},\_)$ and $\npair[n''']{[\baddr_\var{ret\_\stk},\eaddr_\var{ret\_\stk}]} \in \readCond{\lin_\var{ret\_\stk},W_{R,\rstk}''}$ is satisfied.
      This means that it is satisfied by some heap region, but by the memory satisfaction assumption $\stkb$ must be in the free stack part of the world.
      This means that the execution will fail that $\stkb$ check.

    \item $\reg_S^{(3)}(\rstk) = \stkptr{\perm_\var{ret\_\stk},\baddr_\var{ret\_\stk},\_,\_}$ and $\baddr_\var{ret\_\stk} \neq \stkb$\\
      Here the source side will fail the xjmp as the base address is not $\stkb$.
      Similarly on the target side, the return code will fail the $\stkb$ check.

    \item $\reg_S^{(3)}(\rstk) = \stkptr{\_,\baddr_\var{ret\_\stk},\eaddr_\var{ret\_\stk},\_}$ and $\baddr_\var{ret\_\stk} = \stkb$ and either $\eaddr_\var{ret\_\stk} +1 \neq \baddr_\stk' $ or $\perm_\var{ret\_\stk} \neq \rw$ or
      $\baddr_\var{ret\_\stk} > \eaddr_\var{ret\_\stk}$.\\
      In this case, the source configuration will fail as one of the conditions in $\var{xjumpResult}$ will not be met.
      On the target side, the splice will fail as either the two capabilities being spliced don't line up, the permissions don't match, or the range of authority is empty, respectively.

    \item $\reg_S^{(3)}(\rstk) = \stkptr{\perm_\var{ret\_\stk},\baddr_\var{ret\_\stk},\eaddr_\var{ret\_\stk},\_}$ and $\baddr_\var{ret\_\stk} = \stkb$ and $\eaddr_\var{ret\_\stk} +1 = \baddr_\stk' $ and $\perm_\var{ret\_\stk} = \rw$ and $\baddr_\var{ret\_\stk} \leq \eaddr_\var{ret\_\stk}$.
      \\
      We would like to show that $\ms_{\priv,S}^{(3)}$ is the top most stack frame and that $r'$ governs it.
      By the memory satisfaction assumption and the presence of $r'$ in $W_M''$ we know that $\stk^{(3)}$ is non-empty.
      By the memory satisfaction on the private stack, the following must be the case:
      \[
        \begin{array}{l}
          \stk^{(3)} = (\opc_0,\ms_0), \dots (\opc_m,\ms_m) \wedge \\
          \forall i \in \{0,\dots,m\} \ldotp (\dom(\ms_i) \neq \emptyset \wedge\\
          \quad \forall i < j \ldotp \forall a \in \dom(\ms_i) \ldotp \forall a' \in \dom(\ms_j) \ldotp \stkb < a < a')\\
        \end{array}
      \]

      Assume for contradiction $\ms_{\priv,S}^{(3)}$ is not the top frame. In that
      case $\dom(\ms_0) \neq \emptyset$ and $\forall a \in \dom(\ms_0) \ldotp
      \ldotp \stkb < a < b_\stk$ at the same time, we know
      \[
        \npair[n'']{(\stkptr{\rw,\stkb,\eaddr_\var{ret\_\stk},\_},((\rw,\linear),\stkb,\eaddr_\var{ret\_\stk},\_))}
        \in \lrv(W_{R,\rstk}'')
      \]
      which means that the free stack part of the world contains a region that at least governs $[\stkb,\eaddr_\var{ret\_\stk}]$.
      Combine this with $\eaddr_\var{ret\_\stk} +1 = \baddr_\stk' $, we can conclude that no such address can exist in $\ms_0$, so it must be empty, but this cannot be the case either.
      Therefore, the top stack frame must contain $\ms_{\priv,S}^{(3)}$.

      Further, due to the disjointedness required by memory satisfaction, it must be $r'$ that governs this stack frame.
      This also means that we have $\opc_0 =  \aaddr+\calllen$.
      With this, we have all the requirements for $\var{xjumpResult}$ satisfied on
      both sides which allows us to pick the necessary configurations:
      \begin{multline*}
        \Phi^{(4)}_S=\xjumpResult{r_1}{r_2}{(\ms_S^{(3)},\reg[3]_S,(\ms_{\priv,S}^{(3)},\aaddr+\calllen)
          :: \stk^{(3)}_\var{rest},\ms_\stk^{(3)})} =\\
        (\ms_S^{(3)},\reg[3]_S
        \arraycolsep=0pt
        \begin{array}[t]{l}
          \updReg{\pcreg}{((\rx,\normal),\baddr,\eaddr,\aaddr+\calllen)} \\
          \updReg{\rdata}{0} \\
          \updReg{\rstk}{\stkptr{\rw,\stkb,e_\stk',\eaddr_\var{ret\_stk}+1}} \\
          \updReg{\rtmp{1}}{0} \\
          \updReg{\rtmp{2}}{0}
        \end{array}
        ,\stk^{(3)}_\var{rest},\ms_{\priv,S}^{(3)} \uplus \ms_\stk^{(3)})
      \end{multline*}
      and
      \begin{multline*}
        \Phi^{(4)}_T=\xjumpResult{r_1}{r_2}{(\ms_T^{(3)},\reg[3]_T)} =\\
        (\ms_T^{(3)},\reg[3]_S
        \arraycolsep=0pt
        \begin{array}[t]{l}
          \updReg{\pcreg}{((\rx,\normal),\baddr,\eaddr,\aaddr+\retoffset)} \\
          \updReg{\rdata}{((\rw,\linear),\stkb,\eaddr_\var{ret\_stk})}
        \end{array})
      \end{multline*}
      It now remains to show
      \[
        \npair[n''']{(\Phi^{(4)}_S,\Phi^{(4)}_T)} \in \lro
      \]
      Use Lemma~\ref{lem:lro-anti-red-gen} by which it suffices to show the
      following two things:
      \begin{itemize}
      \item
        \[
          \Phi^{(4)}_T \step[l] \Phi^{(5)}_T = (\ms_T^{(3)},\reg[3]_S
          \arraycolsep=0pt
          \begin{array}[t]{l}
            \updReg{\pcreg}{((\rx,\normal),\baddr,\eaddr,\aaddr+\calllen)} \\
            \updReg{\rdata}{0} \\
            \updReg{\rstk}{((\rw,\linear),\stkb,e_\stk',\eaddr_\var{ret\_stk}+1)} \\
            \updReg{\rtmp{1}}{0} \\
            \updReg{\rtmp{2}}{0}
          \end{array})
        \]
        for some number of steps $l$. This follows immediately from the
        operational semantics.
      \item
        \[
          \npair[n''']{(\Phi^{(4)}_S,\Phi^{(5)}_T)} \in \lro
        \]
        For fresh $r_{b_\stk'}\dots r_{e_\stk'}$ and $W'''$ defined as
        \[
          W''' = W''[\priv.r' \mapsto \revoked,\free.r_{b_\stk'}\dots r_{e_\stk'}\mapsto \stdreg{\{b_\stk'\},\gc}{\spa} \dots \stdreg{\{e_\stk'\},\gc}{\spa}]
        \]
        and $W'''_R$ the same as $W'''$, but with the ownership of $W''_R$ as well as for the regions $r_{b_\stk'}\dots r_{e_\stk'}$,
        and $W'''_M$ the same as $W'''$ but with the ownership of $W''_M$,
        we show the following:
        \begin{enumproof}
        \item $\npair[n''']{(\Phi_S^{(4)}.\reg,\Phi_T^{(5)}.\reg)} \in \lrr(W'''_R)$
        \item $\memSat[n''']{\Phi_S^{(4)}.\mem,\Phi_S^{(4)}.\stk,\Phi_S^{(4)}.\ms_\stk,\Phi_T^{(5)}.\mem}{W'''_M}$ 
        \item $W''' \oplus W'''_R \oplus W'''_M$ defined
        \end{enumproof}

        Assuming the above, we use our assumption that Theorem~\ref{thm:ftlr} holds for all $n' < n$ to get
        \[
          \npair[n''']{(((\rx,\normal),\baddr,\eaddr,\aaddr+\calllen),((\rx,\normal),\baddr,\eaddr,\aaddr+\calllen))} \in \lre(W''')
        \]
        Note that from assumption ``$\Phi_S$ reasonable up to $n$ steps'' and Lemma~\ref{lem:ec-reasonable-downwards-closed} and $n''' \le n'' < n - 1$ we get ``$\Phi_S$ reasonable up to $n'''+1$ steps'' from which it follows that ``$\Phi_S(\pcreg) + \calllen$ reasonable up to $n'''$ steps''.
        Using this along with the register-file safety and memory satisfaction,
        we get
        \[
          \npair[n''']{(\Phi^{(4)}_S,\Phi^{(5)}_T)} \in \lro
        \]
        as desired. We need to show the three things we skipped:

        Show:
        \[
          \npair[n''']{(\Phi_S^{(4)}.\reg,\Phi_T^{(5)}.\reg)} \in \lrr(W'''_R)
        \]
        We will split the world like in $\npair[n''']{(\reg_S^{(3)},\reg_T^{(3)})} \in \lrr(W)$, but where $r_\stk$ also get the ownership of $r_{b_\stk'}\dots r_{e_\stk'}$.
        We need to show the following
        \begin{itemize}
        \item Case $\rdata$, $\rtmp{1}$, $\rtmp{2}$:

          Trivial.
        \item Case $r \not\in \{\rdata,\rtmp{1},\rtmp{2},\pcreg,\rstk\}$:

          We have $\Phi^{(4)}_S.\reg(r) = \reg^{(3)}_S(r)$ and $\Phi^{(5)}_T.\reg(r) = \reg^{(3)}_T(r)$.
          We already know:
          \[\npair[n''']{(\reg^{(3)}_S(r),\reg^{(3)}_T(r))} \in \lrr(W''_{R,r}),\]
          This is true for some $W_{R,r}''$ which does not have the ownership of the regions that are revoked in $W_R'''$, so that we can take the corresponding $W_{R,r}'''$ and have $W_{R,r}''' \future W_{R,r}''$.
          From Lemma~\ref{lem:monotonicity}, we then get
          \[
            \npair[n''']{(\reg^{(3)}_S(r),\reg^{(3)}_T(r))} \in \lrr(W'''_{R,r})
          \]
          as desired.
        \item Case $\rstk$:

          For this case, we need to show
          \[
            \npair[n''']{[\stkb,e_\stk']} \in \stackReadCond{W'''_{R,\rstk}}
          \]
          and
          \[
            \npair[n''']{[\stkb,e_\stk']} \in \stackWriteCond{W'''_{R,\rstk}}
          \]
          For $W'''_{R,\rstk}$ that owns the same as $W''_{R,\rstk}$ as well as $r_{b_\stk'},\dots,r_{e_\stk'}$.

          For the first part, we use $\npair[n''']{(\reg_S^{(3)}(\rstk),\reg_T^{(3)}(\rstk))} \in \lrv(W_{R,\rstk}'')$ and the fact that the stack capability must have $\rw$ permission from which it follows that
          \[
            \npair[n''']{[\stkb,e_\var{ret\_stk}]} \in \stackReadCond{W''_{R,\rstk}}
          \]
          which gives us $S_\free \subseteq \addressable{\linear,\pwfree}$ and $R_\free : S_\free \fun \powerset{\nats}$ such that
          \begin{itemize}
          \item $\forall r \in S_\free \ldotp |R(r)| = 1$
          \item $\biguplus_{r \in S_\free} R(r) \subseteq [\stkb,e_\var{ret\_stk}]$
          \item $\forall r \in S_\free \ldotp \pwfree(r) \nsupeq \stdreg{R(r),\gc}{\spao}$
          \end{itemize}
          Now pick $S_\var{read} = S_\free \cup \{r_{b_\stk'},\dots,r_{e_\stk'}\}$ and
          \[
            R_\var{read}(r) =
            \begin{cases}
              R_\var{read}(r) & r \in S_\var{read} \\
              \{a\} & r \in \{r_{b_\stk'},\dots,r_{e_\stk'}\} \wedge r = r_a
            \end{cases}
          \]
          It is clearly the case that $\forall r \in S_\var{read} \ldotp |R_\var{read}(r)| = 1$ and $\biguplus_{r \in S_\var{read}} R(r) \subseteq [\stkb,e_\var{ret\_stk}]$.
          For $\forall r \in S_\var{free} \ldotp \pwfree(r) \nsupeq \stdreg{R(r),\gc}{\spao}$ it follows from $\forall r \in S_\free \ldotp \pwfree(r) \nsupeq \stdreg{R(r),\gc}{\spao}$ and $\pwfree(r) = \stdreg{R(r),\gc}{\spao}$ for $r \in \{r_{b_\stk'},\dots,r_{e_\stk'}\}$.

          The $\var{stackReadCondition}$ is shown in the same way, but we also use Lemma~\ref{lem:stdreg-singleton-addr-strat} to argue that the new regions are address-stratified.
        \end{itemize}

        Show:
        \begin{equation}
          \memSat[n''']{\ms^{(3)}_S,\Phi_S^{(4)}.\stk,\ms_\stk^{(3)} \uplus \ms_{\priv,S}^{(3)} ,\ms^{(3)}_T}{W'''_M}\label{eq:ftlr:case:last-mem-sat}
        \end{equation}

        From the assumption $\memSat[n''']{\ms^{(3)}_S,\ms^{(3)}_\stk, \stk^{(3)}, \ms^{(3)}_T}{W_M''}$, we know
        \begin{itemize}
        \item $\stk = (\addr + \calllen,\ms_{\priv,S}^{(3)})::(\opc_1,\ms_1):: \dots :: (\opc_m,\ms_m)$
        \item $\ms^{(3)}_S \uplus \ms^{(3)}_\stk \uplus \ms_{\priv,S}^{(3)} \uplus \ms_1 \uplus \dots \uplus \ms_m  \text{ is defined}$
        \item $W''_M = W_{\var{stack}}'' \oplus W_{\var{free\_stack}}'' \oplus W_{\var{heap}}''$
        \item $\ms_\var{T,stack}, \ms_\var{T,free\_stack}, \ms_\var{T,heap}, \ms_{T,f}, \ms_{S,f}, \ms_S',\overline{\sigma}$ such that
          \begin{itemize}
          \item $\ms^{(3)}_S = \ms_{f,S} \uplus \ms_S'$
          \item $\ms^{(3)}_T = \ms_\var{T,stack} \uplus \ms_\var{T,free\_stack} \uplus \ms_\var{T,heap} \uplus \ms_{T,f}$
          \item $\memSatStack[n''']{\stk,\ms_\var{T,stack}}{W_{\var{stack}}''}$
          \item $\memSatFStack[n''']{\ms^{(3)}_\stk,\ms_\var{T,free\_stack}}{W_{\var{free\_stack}}''}$
          \item $\npair[n''']{(\overline{\sigma},\ms_S',\ms_\var{T,heap})} \in \lrheap(\pwheap)(W_{\var{heap}}'')$
          \end{itemize}
        \end{itemize}
        We will pick the same things to show \ref{eq:ftlr:case:last-mem-sat} with a few changes.
        We have to show
        \begin{itemize}
        \item $\Phi_S^{(4)}.\stk = (\opc_1,\ms_1):: \dots :: (\opc_m,\ms_m)$

          By the memory satisfaction assumption and the change to the stack.
        \item $\ms^{(3)}_S \uplus \ms^{(3)}_\stk \uplus \ms_{\priv,S}^{(3)} \uplus \ms_1 \uplus \dots \uplus \ms_m \text{ is defined}$

          By the memory satisfaction assumption.
        \item $W'''_M = W_{\var{stack}}''' \oplus W_{\var{free\_stack}}''' \oplus W_{\var{heap}}'''$

          Define the new worlds to have the ownership of their $W_M''$ counterparts except $W_{\var{stack}}'''$ does not take ownership of the regions used for the safety of addresses $b_\stk',\dots,e_\stk'$.
          This ownership goes to $W_{\var{free\_stack}}'''$ instead.
        \item We need to pick partitions of $\ms^{(3)}_T$ and a frame for $\ms^{(3)}_S$.
          We pick the same as we get from the memory satisfaction assumption except we pick the free stack partition of the target memory to be $\ms_\var{T,free\_stack} \uplus \ms_{\priv,S}^{(3)}$ and the stack partition to be $\ms_\var{T,stack}\setminus \ms_\var{T,stack} | _{\dom(\ms_{\priv,S}^{(3)})}$
        \item $\ms^{(3)}_S = \ms_{f,S} \uplus \ms_S'$

          By assumption.
        \item $\ms^{(3)}_T = \ms_\var{T,stack} \uplus \ms_\var{T,free\_stack} \uplus \ms_\var{T,heap} \uplus \ms_{T,f}$

          By assumption and the fact that the only change is that we moved part of the stack to the free stack.
        \item $\memSatStack[n''']{(\opc_1,\ms_1):: \dots :: (\opc_m,\ms_m),\ms_\var{T,stack}\setminus \ms_\var{T,stack} | _{\dom(\ms_{\priv,S}^{(3)})}}{W_{\var{stack}}'''}$

          Follows easily from the private stack satisfaction assumption.
          The distribution functions from the assumption are simply limited to forget about the now revoked region and extend the world partition to be on $W_M'''$, but with the same ownership as in the one we had in the assumption.
          All the new partitions are future worlds of the old ones as none of them owned the revoked region (it was owned by $W_R''$).
        \item $\memSatFStack[n''']{\ms^{(3)}_\stk \uplus \ms_{\priv,S}^{(3)},\ms_\var{T,free\_stack} \uplus \ms_\var{T,stack} | _{\dom(\ms_{\priv,S}^{(3)})}}{W_{\var{free\_stack}}'''}$

          From the $\memSatFStack[n''']{\ms^{(3)}_\stk,\ms_\var{T,free\_stack}}{W_{\var{free\_stack}}''}$ assumption we get
          \begin{itemize}
          \item $R_\ms : \dom(\activeReg{W_{\var{free\_stack}}''}) \fun \MemSeg \times \MemSeg$ and
          \item $R_W : \dom(\activeReg{W_{\var{free\_stack}}''}.\free) \fun \Worlds$
          \end{itemize}
          for which
          \begin{itemize}
          \item $\ms^{(3)}_\stk = \biguplus_{r \in \dom(\activeReg{W_{\var{free\_stack}}''.\free})} \pi_1(R_\ms(r))$
          \item $\ms_\var{T,free\_stack} = \biguplus_{r \in \dom(\activeReg{W_{\var{free\_stack}}''.\free})} \pi_2(R_\ms(r))$
          \item $\stkb \in \dom(\ms^{(3)}_\stk)$ and $\stkb \in \dom(\ms_\var{T,free\_stack})$
          \item $W_{\var{free\_stack}}'' \bigoplus_{r \in \dom(\activeReg{W_{\var{free\_stack}}''.\free})} R_W(r)$
          \item $\forall r \in \dom(\activeReg{W_{\var{free\_stack}}''.\free}), n'''' < n'''\ldotp \npair[n'''']{R_\ms(r)} \in W_{\var{free\_stack}}''.\free(r).H \; \xi^{-1}(R_W(r))$
          \end{itemize}
          Now pick
          \[
            R_\ms'(r) =
            \begin{cases}
              (\ms_{\priv,S}^{(3)}|_{\{a'\}} , \ms_\var{T,stack}|_{\{a'\}}) & r_{a'} \in \{r_{b_\stk'}\dots r_{e_\stk'}\}\\
              R_\ms(r) & \totherwise
            \end{cases}
          \]
          and
          \[
            R_W'(r) =
            \begin{cases}
              W'''_{a'} & r_{a'} \in \{r_{b_\stk'}\dots r_{e_\stk'}\} \\
              R_W(r)[\priv.r' \mapsto \revoked,\free.r_{b_\stk'}\dots r_{e_\stk'}\mapsto \stdreg{\{b_\stk'\},\gc}{\spa} \dots \stdreg{\{e_\stk'\},\gc}{\spa}]& \totherwise
            \end{cases}
          \]
          for $W'''_{a'}$ constructed as follows: $W_{\var{stack}}''$ is the part of the world given to the stack judgement in the assumption.
          This world is split into a number of parts to satisfy the memory interpretation of each frame.
          Say $W_{top,\var{stack}}''$ is used for the top stack frame in the assumption.
          The top stack frame is governed by a static region, so by definition it is split into parts that satisfy each of the addresses.
          That is for $a' \in \{b_\stk' \dots e_\stk'\}$, $W_{a'}''$ is the part that makes the value in memory satisfy the value relation.
          Now let $W_{a'}'''$ be $W'''$ but with the ownership of $W_{a'}''$

          It is easy to see that the $R_\ms'$ constructs the two memories, and
          from the assumption, we also get that the base stack address is in there.

          It remains to show
          \[
            \forall r \in \dom(\activeReg{W_{\var{free\_stack}}'''.\free}),n'''' < n''' \ldotp \npair[n'''']{R_\ms(r)} \in W_{\var{free\_stack}}'''.\free(r).H \; \xi^{-1}(R_W(r))
          \]
          for $r \in \dom(\activeReg{W_{\var{free\_stack}}'''.\free}) \setminus \{r_{b_\stk'}\dots r_{e_\stk'}\}$, it follows from monotonicity of the $H$ (memory interpretation) function.
          For $r_{a'} \in \{r_{b_\stk'}\dots r_{e_\stk'}\}$ and $n'''' < n'''$, we need to show
          \[
            \npair[n'''']{(\ms_{\priv,S}^{(3)}(a') , \ms_\var{T,stack}|(a'))} \in \stdreg{\{a\},\gc}{\spa}.H (\xi^{-1}R_W(r_{a'}))
          \]
          which amounts to showing $\dom(\ms_{\priv,S}^{(3)}|_{\{a'\}}) = \dom(\ms_\var{T,stack}|_{\{a'\}}) \{a'\}$, which is the case, and
          \[
            \npair[n'''']{(\ms_{\priv,S}^{(3)}(a') , \ms_\var{T,stack}|(a'))} \in \lrv (W_{a'}''')
          \]
          Which follows from Lemma~\ref{lem:monotonicity} and the fact that we have a memory satisfaction assumption in which $\ms_{\priv,S}^{(3)}$ is governed by a standard static region.

        \item $\npair[n''']{(\overline{\sigma},\ms_S',\ms_\var{T,heap})} \in \lrheap(\pwheap)(W_{\var{heap}}''')$

          Follows by Lemma~\ref{lem:monotonicity} and the heap satisfaction assumption.
        \end{itemize}

        Argue:
        \[
          W''' \oplus W'''_R \oplus W'''_M \text{ is defined.}
        \]
        This follows from the assumption $W'' \oplus W''_R \oplus W''_M$ and the fact that each of the new worlds is constructed from one of the past worlds and only one of them claims the ownership of the new regions.
      \end{itemize}
    \end{itemize}

\end{description}
This concludes case \ref{case:ftlr:scall:reg}

  Case \ref{case:ftlr:scall:mem}:
  we need to show:
  \[
    \memSat[n-1]{\Phi_S''.\mem,\Phi_S''.\stk,\Phi_S''.\ms_\stk,\Phi_T''.\mem}{W_M'}
  \]
  which amounts to
  \[
    \memSat[n-1]{\Phi_S.\mem,((\aaddr+\calllen),\ms_{\stk\_\priv,S}) :: \Phi_S,\Phi_S.\ms_\stk - \Phi_S.\ms_\stk|_{[\aaddr_\stk,\eaddr_\stk]},\Phi_T''.\mem}{W_M'}
  \]
  for $\ms_{\stk\_\priv,S} = \Phi_S.\ms_\stk |_{[\aaddr_\stk,\eaddr_\stk]}\update{\aaddr_\stk}{42}$.

  In order to show this, we will first show the following:
  \begin{itemize}
  \item $\baddr_\stk = \stkb$

    We know $\Phi_S$ is reasonable up to $n$ steps.
    Further, from $\callCond{\Phi_S,r_1,r_2,\offpc,\offsigma,a}$ and $b \leq a$ and $a + \calllen -1 \leq e$ and $[b,e] \subseteq \ta$ we can conclude that $\Phi_S$ points to $\src{\scall{\offpc,\offsigma}{r_1}{r_2}}$ in $\ta$.
    By the \emph{Guarantee stack base address before call} we then know $\Phi_S(r_\stk) = \stkptr{\_,\stkb,\_,\_}$.
  \end{itemize}

  By assumption we have
  $\memSat{\Phi_S.\mem,\Phi_S.\stk,\Phi_S.\ms_\stk,\Phi_T.\mem}{W_M}$ which
  gives us $\ms_\var{T,stack}, \ms_\var{T,free\_stack},$ $\ms_\var{T,heap},
  \ms_{T,f}, \ms_{S,f}, \ms_S', W_{M,\var{stack}}, W_{M,\var{free\_stack}},
  W_{M,\var{heap}}$ such that:
  \begin{itemize}
  \item $\Phi_S.\stk = (\opc_0, \ms_0) \dots (\opc_m,\ms_m)$
  \item $\ms_0 \uplus \dots \uplus \ms_m \uplus \Phi_S.\mem \uplus \Phi_S.\ms_\stk$
  \item $W_M = W_{M,\var{stack}} \oplus W_{M,\var{free\_stack}} \oplus W_{M,\var{heap}}$
  \item $\Phi_S.\mem = \ms_{f,S} \uplus \ms_S'$
  \item $\Phi_T.\mem = \ms_\var{T,stack} \uplus \ms_\var{T,free\_stack} \uplus
    \ms_\var{T,heap} \uplus \ms_{T,f}$
  \item $\memSatStack{\Phi_S.\stk,\ms_\var{T,stack}}{W_{M,\var{stack}}}$
  \item $\memSatFStack{\Phi_S.\ms_\stk,\ms_\var{T,free\_stack}}{W_{M,\var{free\_stack}}}$
  \item $\npair{(\overline{\sigma},\ms_S',\ms_\var{T,heap})} \in \lrheap(\pwheap[W_M])(W_{M,\var{heap}})$
  \end{itemize}

  In order to prove the memory satisfaction, we pick the same memories except for the following changes:
  \begin{itemize}
  \item The target free stack partition: $\ms_\var{T,free\_stack} \setminus \ms_\var{T,free\_stack}|_{[\aaddr_\stk,\eaddr_\stk]}$
  \item The target private stack partition: $\ms_\var{T,stack} \uplus \ms_\var{T,free\_stack}|_{[\aaddr_\stk,\eaddr_\stk]}\update{\aaddr_\stk}{42}$
  \end{itemize}
  and the worlds
  \begin{itemize}
  \item Free stack world: $W_{M,\var{free\_stack}}'$ is $W_{R,1}'$ with the ownership of $W_{M,\var{free\_stack}}$ except that it gives up any ownership that we used for the safety of the addresses $[\aaddr_\stk,\eaddr_\stk]$.
  \item Private stack world: $W_{M,\var{stack}}'$ is $W_{R,1}'$ with the ownership of $W_{M,\var{stack}}$ except that it takes the ownership that $W_{M,\var{free\_stack}}$ used for the safety of the addresses $[\aaddr_\stk,\eaddr_\stk]$.
  \item Heap world: $W_{M,\var{heap}}'$ is $W_{R,1}'$ with the ownership of $W_{M,\var{heap}}$ which gives us $W_{M,\var{heap}}' \future W_{M,\var{heap}}$.
  \end{itemize}
  We now need to show:
  \begin{itemize}
  \item $\stk = ((\aaddr+\calllen),\ms_{\stk\_\priv,S}) :: (\opc_0, \ms_0) \dots (\opc_m,\ms_m)$

    Trivial.
  \item $\ms_{\stk\_\priv,S} \uplus \ms_0 \uplus \dots \uplus \ms_m \uplus \Phi_S.\mem \uplus \Phi_S.\ms_\stk - \Phi_S.\ms_\stk|_{[\aaddr_\stk,\eaddr_\stk]}$

    Follows by assumption and the fact that we have shuffled around some memory\footnote{Thanks to the register-file safety assumption we know for sure that the memory we remove from the free stack is actually there.}.
  \item $W_M' = W_{M,\var{stack}}' \oplus W_{M,\var{free\_stack}}' \oplus W_{M,\var{heap}}'$

    This follows by assumption and the fact that all ownership we have added to a world has been removed from another.
  \item $\Phi_S.\mem = \ms_{f,S} \uplus \ms_S'$

    By assumption.
  \item $\Phi_T.\mem = \ms_\var{T,stack} \uplus \ms_\var{T,free\_stack}|_{[\aaddr_\stk,\eaddr_\stk]}\update{\aaddr_\stk}{42} \uplus \ms_\var{T,free\_stack} \setminus \ms_\var{T,free\_stack}|_{[\aaddr_\stk,\eaddr_\stk]} \uplus \ms_\var{T,heap} \uplus \ms_{T,f}$

    Follows by assumption and the fact that we have shuffled around some memory\footnote{Thanks to the register-file safety assumption we know for sure that the memory we remove from the free stack is actually there.}.
  \item $\memSatStack[n-1]{((\aaddr+\calllen),\ms_{\stk\_\priv,S}) :: \Phi_S,\ms_\var{T,stack} \uplus \ms_\var{T,free\_stack}|_{[\aaddr_\stk,\eaddr_\stk]}\update{\aaddr_\stk}{42}}{W_{M,\var{stack}}'}$:

    For most of the conditions, they follow from the previous stack satisfaction assumption.
    The only challenge is to argue that the new stack frame satisfies the conditions.
    First of all, we know $\ms_{\stk\_\priv,S}$ is non-empty as it at least contains address $\aaddr_\stk$.
    Next, we need to argue that
    \[
      \begin{array}{l}
        \forall i \in \{0,\dots,m\} \ldotp \\
        \quad \forall a \in \dom(\ms_{\stk\_\priv,S}) \ldotp \forall a' \in \dom(\ms_j) \ldotp \stkb < a < a') \wedge\\
      \end{array}
    \]
    The first bit, $\stkb < a$ for $a \in \dom(\ms_{\stk\_\priv,S})$ follows from the fact that $a_\stk$ is the smallest address of $a \in \dom(\ms_{\stk\_\priv,S})$ and by assumption $\stkb < a_\stk$.

    Now assume for contradiction that there exists $a' \in \dom(\ms_j)$ for some $j$ such that $a' \le a$ for some $a \in \dom(\ms_{\stk\_\priv,S})$.
    By assumption we have $\stkb < a'$ so $a' \in [\stkb+1,a] \subseteq [\stkb+1,a_\stk] \subseteq [\stkb,e_\stk]$ which means that $a'$ is an address governed by a stack pointer.
    By the register-safety assumption this must mean that it is an address of the free part of the stack.
    At the same time, it must be an address of the private stack because $\ms_j$ is part of the stack from the original configuration.
    This contradicts the initial memory satisfaction assumption as the different parts must be disjointed.

    From the stack satisfaction assumption, we get $R_\ms$ and $R_W$. Pick
    \[
      R_\ms'(r) =
      \begin{cases}
        (\ms_{\stk\_\priv,S}, (\aaddr+\calllen),\ms_\var{T,free\_stack}|_{[\aaddr_\stk,\eaddr_\stk]}\update{\aaddr_\stk}{42})& \text{for $r = r_\var{priv\_stk}$}\\
        R_\ms(r) & \totherwise
      \end{cases}
    \]
    and for $R_W'$ pick
    \[
      R_W'(r) =
      \begin{cases}
        \begin{aligned}
          W_{M,\var{free\_stack},[\aaddr_\stk,\eaddr_\stk]}&[\free.R^{-1}([\aaddr_\stk,\eaddr_\stk]) \mapsto \revoked]\\
          &[\priv.r_\var{priv\_stk} \mapsto (\stareg[(\ms_{\stk\_\priv,S},\Phi_T''.\mem|_{[\aaddr_\stk,\eaddr_\stk]}),\gc]{\spa,\square},\aaddr+\calllen)]
        \end{aligned}

        & \text{for $r = r_\var{priv\_stk}$} \\
        \begin{aligned}
          R_W(r)&[\free.R^{-1}([\aaddr_\stk,\eaddr_\stk]) \mapsto \revoked]\\
          &[\priv.r_\var{priv\_stk} \mapsto (\stareg[(\ms_{\stk\_\priv,S},\Phi_T''.\mem|_{[\aaddr_\stk,\eaddr_\stk]}),\gc]{\spa,\square},\aaddr+\calllen)]
        \end{aligned} &
      \end{cases}
    \]
    where $W_{M,\var{free\_stack},[\aaddr_\stk,\eaddr_\stk]}$ is the world that the free stack assumption uses to satisfy that range of addresses.

    Most of the condition trivially holds. 
    The only one that requires some argumentation is
    \[
      \npair[n']{(\ms_{\stk\_\priv,S},\ms_\var{T,free\_stack}|_{[\aaddr_\stk,\eaddr_\stk]}\update{\aaddr_\stk}{42})}
      \in
      \stareg[(\ms_{\stk\_\priv,S},\Phi_T''.\mem|_{[\aaddr_\stk,\eaddr_\stk]}),\gc]{\spa,\square}.H
      \; \xi^{-1}(R_W'(r_\var{priv\_stk}))
    \]
    for all $n' < n-1$.
    Where $R_W'(r_\var{priv\_stk})$ is the part of $W_{M,\var{stack}}'$ with the ownership used for addresses $[\aaddr_\stk, \eaddr_\stk]$ in the memory satisfaction assumption.

    As we have the safe register-file assumption, we know that the stack capability is safe.
    This means that addresses $[\aaddr_\stk,\eaddr_\stk]$ must be part of the free stack.
    Further, $\Phi''_T.\mem = \Phi_T.\mem\update{\aaddr_\stk}{42}$ and $\ms_\var{T,free\_stack} \subseteq \Phi_T.\mem$ and $\dom(\ms_\var{T,free\_stack}) \supseteq [\aaddr_\stk,\eaddr_\stk]$ from which it follows that the memories are equal to the one of the static region.

    It remains to show
    \[
      \forall \aaddr \in [\aaddr_\stk,\eaddr_\stk] \ldotp \npair[n']{(\ms_{\stk\_\priv,S}(\aaddr),\ms_\var{T,free\_stack}|_{[\aaddr_\stk,\eaddr_\stk]}\update{\aaddr_\stk}{42}(\aaddr))} \in \lrv(W_{R,\aaddr}')
    \]
    for $n' < n-1$.
    For $\aaddr = \aaddr_\stk$ it is trivial as we have to show
    \[
      \npair[n-1]{(42,42)} \in \lrv(W_{R,\aaddr_\stk}')
    \]
    for $\aaddr \in [\aaddr_\stk+1,\eaddr_\stk]$ we need to show
    \[
      \npair[n']{(\ms_{\stk\_\priv,S}(\aaddr),\ms_\var{T,free\_stack}|_{[\aaddr_\stk,\eaddr_\stk]}\update{\aaddr_\stk}{42}(\aaddr))} \in \lrv(W_{R,\aaddr}')
    \]
    for $n' < n-1$.
    If we can show
    \[
      \npair[n']{(\Phi_S.\ms_\stk(\aaddr),\ms_\var{T,free\_stack}(\aaddr))} \in \lrv(W_{R,\aaddr})
    \]
    for $n' < n-1$, then we are done by monotonicity of $\lrv$.

    By assumption we know $\npair{(\Phi_S.\reg,\Phi_T.\reg)} \in \lrr(W_R)$ which entails $\npair{(\Phi_S.\reg(\rstk),\Phi_T.\reg(\rstk))} \in \lrr(W_{R,\rstk})$.
    We know $\Phi_S.\reg(\rstk) = \stkptr{\rw,\stkb,\eaddr_\stk,\aaddr_\stk}$, so by the definition of $\lrv$ we get
    \[
      \npair{[\stkb,\eaddr_\stk]} \in \stackReadCond{W_{R,\rstk}}
    \]
    which in turn gives us $S_{\rstk} \subseteq \addressable{\linear,\pwfree}$ and
    $R_{\rstk} : S_{\rstk} \fun \powerset{\nats}$ for which
    \begin{itemize}
    \item $\forall r \in S_{\rstk} \ldotp |R_{\rstk}(r)| = 1$
    \item $\uplus_{r \in S_{\rstk}} R_{\rstk}(r) \supseteq [\stkb,\eaddr_\stk]$
    \item $\forall r \in S_{\rstk} \ldotp W_{R,\rstk}(r).H \nsubeq \stdreg{R_{\rstk}(r),\gc}{\spao}$
    \end{itemize}
    Further, we know $\memSatFStack{\Phi_S.\ms_\stk,\ms_\var{T,free\_stack}}{W_{M,\var{free\_stack}}.\free}$ which means that we have $R_\ms : \dom(\activeReg{W_{M,\var{free\_stack}}}) \fun MemSeg \times \MemSeg$ and $R_W : \dom(W_{M,\var{free\_stack}}.\free) \fun \Worlds$.
    $R_W$ distributes the ownership of $W_{M,\var{free\_stack}}$ and $R_\ms$ partitions the memories.

    We know that all regions in $S_{\rstk}$ govern singleton memory segments, so $R_\ms$ must map to singleton memory segment pairs for $r \in S_{\rstk}$.
    Further, by definition of the free stack satisfaction for $r \in S_{\rstk}$ we have
    \[
      \npair[n']{R_\ms(r)} \in W_{R,\rstk}(r).H \; \xi^{-1}(R_W(r))
    \]
    for $n' < n$
    which entails
    \[
      \npair[n-1]{R_\ms(r)} \in \stdreg{R_{\rstk}(r),\gc}{\spao}.H \; \xi^{-1}(R_W(r))
    \]
    which entails
    \[
      \npair[n-1]{R_\ms(r)(a)} \in \lrv (R_W(r))
    \]
    for $a \in R_{\rstk(r)}$.

    Now, using Lemma~\ref{lem:downwards-closed}, this is exactly what we wanted to show because $R_W(r)$ is what we picked as $W_{R,\aaddr}$.
  \item $\memSatFStack[n-1]{\Phi_S.\ms_\stk - \Phi_S.\ms_\stk|_{[\aaddr_\stk,\eaddr_\stk]},\ms_\var{T,free\_stack} \setminus \ms_\var{T,free\_stack}|_{[\aaddr_\stk,\eaddr_\stk]}}{W_{M,\var{free\_stack}}'}$:

    From the safety assumption on the stack capability, we can deduce a number of things:
    \begin{itemize}
    \item $[\aaddr_\stk,\eaddr_\stk]$ must have been part of the free stack
    \item for every address in $[\aaddr_\stk,\eaddr_\stk]$ there is a region for that singleton memory segment.
    \end{itemize}
    The first part means that we do indeed remove all of the memory we try to subtract.
    The latter means that we can reuse the same split of the remaining memory and the world ownership as we get from assumption $\memSatFStack{\Phi_S.\ms_\stk,\ms_\var{T,free\_stack}}{W_{M,\var{free\_stack}}}$.
    Using this, the result follows from monotonicity of the $H$ function.

  \item $\npair[n-1]{(\overline{\sigma},\ms_S',\ms_\var{T,heap})} \in \lrheap(\pwheap[W_M'])(W_{M,\var{heap}}')$:

    Follows from Lemma~\ref{lem:heap-sat-unchanged-future}, Lemma~\ref{lem:downwards-closed}, the fact that the heap part of the world remains unchanged and $W_{M,\var{heap}}' \future W_{M,\var{heap}}$.
  \end{itemize}

  Case \ref{case:ftlr:move-cca-etc}:

  First show that
  \[
    \npair[n-1]{(((\rx,\normal),\baddr,\eaddr,\aaddr),((\rx,\normal),\baddr,\eaddr,\aaddr))} \in \lrvg{\trust}(\purePart{W_R})
  \]
  First observe that by Lemma~\ref{lem:purePart-oplus} $\purePart{W_R} = \purePart{W_\pcreg}$.
  Further, by assumption, Lemma~\ref{lem:downwards-closed}, Lemma~\ref{lem:purePart-oplus}, and Lemma~\ref{lem:non-linear-pure}, we have
  \[
    \npair[n-1]{[\baddr,\eaddr]} \in \xReadCond{}(\purePart{W_R})
  \]

  If $\trust = \trusted$, then by assumption we have $[\baddr,\eaddr] \subseteq \ta$ which means that all the conditions for are met for the capability pair to be in the trusted part of the value relation.

  If $\trust = \untrusted$, then we need to show that the capability pair is in the untrusted part of the value relation which means that we need to show:
  \begin{itemize}
  \item $\npair[n-1]{[\baddr,\eaddr]} \in \readCond{}(\normal,\purePart{W_R})$

    This follows by assumption, Lemma~\ref{lem:non-linear-pure}, Lemma~\ref{lem:purePart-oplus}, and Lemma~\ref{lem:downwards-closed}.
  \item $\npair[n-1]{[\baddr,\eaddr]} \in \xReadCond{}(\purePart{W_R})$

    We already showed this.
  \item $\npair[n-1]{[\baddr,\eaddr]} \in \execCond{}(\purePart{W_R})$

    To this end let $W' \future \purePart{W_R}$ and $n' < n-1$ and $\aaddr' \in [\baddr',\eaddr']$ be given, and show
    \[
      \npair[n']{(((\rx,\normal),\baddr',\eaddr',\aaddr'),((\rx,\normal),\baddr',\eaddr',\aaddr'))} \in \lre(W')
    \]
    This follows immediately from the FTLR: we know that $[b,e] \mathrel{\#} \ta$ since $\trust = \untrusted$ and we know that $n' < n - 1 < n$.
  \end{itemize}
  
  Now show
  \[
    \npair[n-1]{(\Phi_S'.\reg,\Phi_T'.\reg)} \in \lrrg{\trust}(W_R)
  \]
  Note that we know that $\Phi_S.\reg(\pcreg)$ is not linear (which will sometimes help to eliminate some cases).

  By assumption we have
  \[
    \npair[n]{(\Phi_S.\reg,\Phi_T.\reg)} \in \lrrg{\trust}(W_R)
  \]
  which gives us $R_R : (\RegName \setminus \{\pcreg\}) \fun \World$ such that $W_R = \bigoplus_{r\in \RegName \setminus \{\pcreg\}} R_R(r)$ and for all $r$ in $\RegName \setminus \{\pcreg\}$ we have $\npair[n]{(\Phi_S.\reg(r),\Phi_T.\reg(r))} \in \lrrg{\trust}(R_R(r))$.
  To this end, we need to consider each of cases
  \ref{case:ftlr:move-cca-etc:cca-normal}-\ref{case:ftlr:move-cca-etc:splice-seal}:

  \lau{TODO: check that all the below cases take care of the trusted case.}
  \begin{itemize}
  \item Case \ref{case:ftlr:move-cca-etc:cca-normal}:

    Pick $R_R$ as the ownership distribution.
    For $r \neq r_1$ it follows by assumption and Lemma~\ref{lem:downwards-closed}.
    For $r = r_1$ it also follows by assumption and Lemma~\ref{lem:downwards-closed}, \ref{lem:cap-in-lrv-regardless-of-addr}, and \ref{lem:cap-in-lrv-mono-perm}.
  \item Case \ref{case:ftlr:move-cca-etc:cca-stack}:

    Pick $R_R$ as the ownership distribution.
    For $r \neq r_1$ it follows by assumption and Lemma~\ref{lem:downwards-closed}.
    For $r = r_1$ it also follows by assumption and Lemma~\ref{lem:downwards-closed}, \ref{lem:stkptr-in-lrv-regardless-of-addr}, and \ref{lem:stkptr-in-lrv-mono-perm}.
  \item Case \ref{case:ftlr:move-cca-etc:cca-seal}:

    Pick $R_R$ as the ownership distribution.
    For $r \neq r_1$ it follows by assumption and Lemma~\ref{lem:downwards-closed}.
    For $r = r_1$ it also follows by assumption and Lemma~\ref{lem:downwards-closed} and \ref{lem:seal-in-lrv-regardless-of-cur-seal}.

  \item Case \ref{case:ftlr:move-cca-etc:move}:
    Pick the ownership distribution based on the linearity of $w_2$: If $\isLinear{w_2}$, then pick
    \[
      R_R'(r) =
      \begin{cases}
        R_R(r_2) \oplus R_R(r_1) & r = r_1 \\
        \purePart{W_R}         & r = r_2 \\
        R_R(r)                 & \totherwise
      \end{cases}
    \]
    if $\neg \isLinear{w_2}$, then pick
    \[
      R_R'(r) = R_R(r)
    \]

    In the case where $\isLinear{w_2}$, we may assume $r_2 \neq \pcreg$. We need to show (assuming $r_1 \neq \pcreg$)
    \[
      \npair[n-1]{(\Phi_S'(r_1),\Phi_T'(r_1))} \in \lrvg{\trust}(R_R(r_2) \oplus R_R(r_1))
    \]
    which is 
    \[
      \npair[n-1]{(\Phi_S(r_2),\Phi_T(r_2))} \in \lrvg{\trust}(R_R(r_2) \oplus R_R(r_1))
    \]
    this follows by assumption and Lemma~\ref{lem:monotonicity} and~\ref{lem:oplus-future}.

    We also need to show
    \[
      \npair[n-1]{(\Phi_S'(r_2),\Phi_T'(r_2))} \in \lrvg{\trust}(\purePart{W_R})
    \]
    which is trivial as $\Phi_S'(r_2) = \Phi_T'(r_2) = 0$. 

    Finally for $r \neq r_1,r_2,\pcreg$ 
    \[
      \npair[n-1]{(\Phi_S'(r),\Phi_T'(r))} \in \lrvg{\trust}(R_R(r))
    \]
    Follows by assumption and Lemma~\ref{lem:downwards-closed}.

    In the case where $\neg \isLinear{w_2}$

    If $r_2 \neq \pcreg$, then for $r \neq r_1$
    \[
      \npair[n-1]{(\Phi_S'(r),\Phi_T'(r))} \in \lrvg{\trust}(R_R(r))
    \]
    Follows by assumption and Lemma~\ref{lem:downwards-closed}.
    
    For $r \neq r_1,r_2$
    \[
      \npair[n-1]{(\Phi_S'(r_1),\Phi_T'(r_1))} \in \lrvg{\trust}(R_R(r_1))
    \]
    amounts to
    \[
      \npair[n-1]{(\Phi_S(r_2),\Phi_T(r_2))} \in \lrvg{\trust}(R_R(r_1))
    \]
    By assumption and Lemma~\ref{lem:purePart-oplus} and \ref{lem:non-linear-pure}.

    If $r_2 = \pcreg$, show
    \[
      \npair[n-1]{(\Phi_S(\pcreg),\Phi_T(\pcreg))} \in \lrvg{\trust}(R_R(r_1))
    \]
    which follows from Lemma~\ref{lem:monotonicity} and \ref{lem:world-fut-purePart} and what we have proven about the pc.
  \item Case \ref{case:ftlr:move-cca-etc:cseal}:

    Pick $R_R' = R_R$.
    For $r \neq r_1,r_2$, we have
    \[
      \npair[n-1]{(\Phi_S(r),\Phi_T(r))} \in \lrv(R_R(r))
    \]
    by assumption and Lemma~\ref{lem:downwards-closed}.

    For $r_1$ use Lemma~\ref{lem:lrv-relates-linearity} and consider the following 2 cases:
    \begin{itemize}
    \item $\ta \# [\baddr,\eaddr]$: by assumption this entails $\trust = \untrusted$.

      By assumption we have $\npair{(\Phi_S(r_2),\Phi_T(r_2))} \in \lrvg{\untrusted}(R_R(r_2))$ which gives us the following facts:
      \begin{itemize}
      \item $[\sigma_\baddr,\sigma_\eaddr] \# (\gsigrets \uplus \gsigcloss)$
      \item $\forall \sigma' \in [\sigma_\baddr,\sigma_\eaddr] \ldotp \exists r \in \pwheap[R_R(r_2)] \ldotp \pwheap[R_R(r_2)](r) = (\pure,\_,H_\sigma) \wedge H_\sigma \; \sigma' \nequal (\lrvg{\untrusted} \circ \xi)$
      \end{itemize}
      
      This means that for $\sigma$ there is a region $r$ for which $\pwheap[R_R(r_2)](r) = (\pure,\_,H_\sigma)$ and $H_\sigma \; \sigma' \nequal (\lrvg{\untrusted} \circ \xi)$.
      Pick this as the region in the sealed case and $\sigrets' = \emptyset$,
      $\sigcloss = [\sigma_\baddr,\sigma_\eaddr]$, and $\mscode = [\baddr,\eaddr] \mapsto 0$.
      We now need to show the following:
      \begin{itemize}
      \item $H_\sigma \; \sigma \nequal H^\mathrm{code,\square}_\sigma \; \sigrets \; \sigcloss \; \mscode \; \gc \; \sigma$

        To this end let $\hat{W}$ be given and show
        \[
          H_\sigma \; \sigma \; \hat{W} \nequal H^\mathrm{code,\square}_\sigma \; \sigrets \; \sigcloss \; \mscode \; \gc \; \sigma \; \hat{W}
        \]
        By transitivity of $n$-equality it suffices to show $H_\sigma \; \sigma' \; \hat{W} \nequal \lrvg{\untrusted} \circ \xi(\hat{W})$, which follows by assumption and $H^\mathrm{code,\square}_\sigma \; \sigrets \; \sigcloss \; \mscode \; \gc \; \sigma \; \hat{W} \nequal \lrvg{\untrusted} \circ \xi(\hat{W})$ which follows by definition of $H^\mathrm{code,\square}_\sigma$ and the fact that $\sigma \in \sigcloss$ and $\ta \# [\baddr,\eaddr]$.

      \item $\npair[n']{(\Phi_S(r_1),\Phi_T(r_1))} \in H_\sigma \; \sigma \; \xi^{-1}(R_R(r_1)) \text{ for all $n' < n$}$
        
        which corresponds to showing $\npair[n']{(\Phi_S(r_1),\Phi_T(r_1))} \in \lrvg{\untrusted}(R_R(r_1))$ which is true by assumption and Lemma~\ref{lem:downwards-closed}.
      \item $(\isLinear{\vsc_S} \Rightarrow \forall W' \future R_R(r_1), W_o, n' < n-1, \npair[n']{(\vsc_S',\vsc'_T)} \in H_\sigma \; \sigma \; \xi^{-1}(W_o) \ldotp \npair[n']{\vsc_S,\vsc_S',\vsc_T,\vsc_T'} \in \lrexj(W'\oplus W_o))$

        Let $W' \future R_R(r_1)$, $W_o$, $n' < n$, $\npair[n']{(\vsc_S',\vsc'_T)} \in \codereg{\sigrets,\sigcloss,\mscode,\gc}.H_\sigma \; \sigma \; \xi^{-1}(W_o)$.
        Further let $n'' \le n'$ be given and assume $\npair[n'']{(\reg_S,\reg_T)} \in \lrrg{\untrusted}(\{\rdata\})(W_R)$, $\memSat{\ms_S,\stk,\ms_\stk,\ms_T}{W_M}$.

        Now consider the following cases:
        \begin{itemize}
        \item $\exec{\vsc_S'}$:

          In this case, pick $\Phi_S' = \Phi_T' = \failed$ (as xjump fails).
          It is trivial to show $\npair{\Phi_S',\Phi_T'} \in \lro$.
        \item $\nonExec{\vsc_S'}$:

          In this case consider what $\Phi_S(r_1)$ is:\\
          If $\Phi_S(r_1) = ((\perm,\lin),\baddr,\eaddr,\aaddr)$ and $\perm \in \{\rwx, \rx\}$:

          If $\perm = \rwx$, then we have a contradiction with $\npair{(\Phi_S(r_1),\Phi_T(r_1))} \in \lrvg{\untrusted}(R_R(r_1))$.

          If $\perm = \rx$, then by Lemma~\ref{lem:safe-cap-exec-is-normal} we have a contradiction with $\isLinear{\Phi_S(r_1)}$.
          \\
          \\
          Otherwise (not $\Phi_S(r_1) = ((\perm,\lin),\baddr,\eaddr,\aaddr)$ and $\perm \in \{\rwx, \rx\}$):\\

          In this case, pick $\Phi_S' = (\ms_S,\reg_S\update{\pcreg}{\Phi_S(r_1)}\update{\rdata}{\vsc_S'},\stk,\ms_\stk)$ and $\Phi_T' = (\ms_T,\reg_T\update{\pcreg}{\Phi_T(r_1)}\update{\rdata}{\vsc_T'})$.
          In this case, the next step of execution fails which makes it trivial
          to show $\npair{\Phi_S',\Phi_T'} \in \lro$.
        \end{itemize}
      \item If $\nonLinear{\src{\vsc_S}}$ then for all $W' \future \purePart{R_R(r_2)}$, $W_o$, $n' < n-1$, $\npair[n']{(\vsc_S',\vsc'_T)} \in H_\sigma \; \sigma \; \xi^{-1}(W_o)$ we have that $\npair[n']{\vsc_S,\vsc_S',\vsc_T,\vsc_T'} \in \lrexj(W'\oplus W_o)$

        Let $W' \future \purePart{R_R(r_1)}$, $W_o$, $n' < n$, $\npair[n']{(\vsc_S',\vsc'_T)} \in \codereg{\sigrets,\sigcloss,\mscode,\gc}.H_\sigma \; \sigma \; \xi^{-1}(W_o)$.
        Further let $n'' \le n'$ be given and assume that
        \begin{itemize}
        \item $\npair[n'']{(\reg_S,\reg_T)} \in
          \lrrg{\untrusted}(\{\rdata\})(W_R)$
        \item $\memSat{\ms_S,\stk,\ms_\stk,\ms_T}{W_M}$
        \end{itemize}

        Now consider the following cases:
        \begin{itemize}
        \item $\exec{\vsc_S'}$:

          In this case, pick $\Phi_S' = \Phi_T' = \failed$ (as xjump fails).
          It is trivial to show $\npair{\Phi_S',\Phi_T'} \in \lro$.
        \item $\nonExec{\vsc_S'}$:

          In this case consider what $\Phi_S(r_1)$ is.\\
          If $\Phi_S(r_1) = ((\perm',\lin'),\baddr',\eaddr',\aaddr')$ and $\perm' \in \{\rwx, \rx\}$:

          If $\perm' = \rwx$, then we have a contradiction with $\npair{(\Phi_S(r_1),\Phi_T(r_1))} \in \lrvg{\untrusted}(R_R(r_1))$.

          If $\perm' = \rx$, then the result follows from \ref{lem:untrusted-sealed-capability}.
        \end{itemize}
      \end{itemize}
    \item $\ta \subseteq [\baddr,\eaddr]$: by assumption this entails $\trust = \trusted$.

      Further, we know $\Phi_S$ points to $\src{\tcseal{r_1}{r_2}}$ in $\ta$, so by the reasonability assumption on $\Phi_S$, we know $\sigma \in \gsigcloss$ and one of the following holds:
      \begin{itemize}
      \item $\exec{\Phi(r_1)}$ and $\Phi(r_1)$ behaves reasonably up to $n - 1$ steps.
      \item $\nonExec{\Phi(r_1)}$ and $\Phi(r_1)$ is reasonable up to $n - 1$ steps in memory $\Phi.\ms$ and free stack $\Phi.\ms_\stk$
      \end{itemize}
      By $\sigma \in \gsigcloss$ we can conclude that $\npair{(\Phi_S(r_2),\Phi_T(r_2))} \not\in \lrvg{\untrusted}(W)$ which means that the assumption $\npair{(\Phi_S(r_2),\Phi_T(r_2))} \in \lrvg{\trusted}(W)$ gives us $r \in \dom(\pwheap)$ such that $\pwheap(r) \nequal \codereg{\sigrets,\sigcloss,\mscode,\gc}$, $[\sigma_\baddr,\sigma_\eaddr] \subseteq (\sigrets \cup \sigcloss)$ and $\sigrets \subseteq \gsigrets$ and $\sigcloss \subseteq \gsigcloss$, $\dom(\mscode) \subseteq \ta$.
      
      Now pick $r$ and show
      \begin{itemize}
      \item $\npair[n']{(\Phi_S(r_1),\Phi_T(r_1))} \in \codereg{\sigrets,\sigcloss,\mscode,\gc}.H_\sigma \; \sigma \; \xi^{-1}(R_R(r_1)) \text{ for all $n' < n-1$}$

        This amounts to showing (1) assuming $\exec{\Phi_S(r_1)}$ show $\npair[n']{(\Phi_S(r_1),\Phi_T(r_1))} \in \lrvg{\trusted}(R_R(r_1))$, which is true by assumption and Lemma~\ref{lem:downwards-closed}.

        (2) Assuming $\nonExec{\Phi_S(r_1)}$ show $\npair[n']{(\Phi_S(r_1),\Phi_T(r_1))} \in \lrvg{\untrusted}(R_R(r_1))$, which follows by assumption, Lemma~\ref{lem:trusted-and-reasonable-is-untrusted} and Lemma~\ref{lem:downwards-closed}.
      \item
        \[
          \begin{array}{l}
            (\isLinear{\Phi_S(r_1)} \Rightarrow \\
            \quad\forall W' \future R_R(r_1), W_o, n' < n, \npair[n']{(\vsc_S',\vsc'_T)} \in \codereg{\sigrets,\sigcloss,\mscode,\gc}.H_\sigma \; \sigma \; \xi^{-1}(W_o) \ldotp \\
            \qquad \npair[n']{\Phi_S(r_1),\vsc_S',\Phi_T(r_1),\vsc_T'} \in \lrexj(W'\oplus W_o))
          \end{array}
        \]
        
        Let $W' \future R_R(r_1)$, $W_o$, $n' < n$, $\npair[n']{(\vsc_S',\vsc'_T)} \in \codereg{\sigrets,\sigcloss,\mscode,\gc}.H_\sigma \; \sigma \; \xi^{-1}(W_o)$.
        Further let $n'' \le n'$ be given and assume $\npair[n'']{(\reg_S,\reg_T)} \in \lrrg{\untrusted}(\{\rdata\})(W_R)$, $\memSat{\ms_S,\stk,\ms_\stk,\ms_T}{W_M}$.

        Now consider the following cases:
        \begin{itemize}
        \item $\exec{\vsc_S'}$:

          In this case, pick $\Phi_S' = \Phi_T' = \failed$ (as xjump fails).
          It is trivial to show $\npair{\Phi_S',\Phi_T'} \in \lro$.
        \item $\nonExec{\vsc_S'}$:

          In this case consider what $\Phi_S(r_1)$ is:\\
          If $\Phi_S(r_1) = ((\perm,\lin),\baddr,\eaddr,\aaddr)$ and $\perm \in \{\rwx, \rx\}$:

          If $\perm = \rwx$, then we have a contradiction with $\npair{(\Phi_S(r_1),\Phi_T(r_1))} \in \lrvg{\trusted}(R_R(r_1))$.

          If $\perm = \rx$, then by Lemma~\ref{lem:safe-cap-exec-is-normal} we have a contradiction with $\isLinear{\Phi_S(r_1)}$.
          \\
          \\
          Otherwise (not $\Phi_S(r_1) = ((\perm,\lin),\baddr,\eaddr,\aaddr)$ and $\perm \in \{\rwx, \rx\}$):\\

          In this case, pick $\Phi_S' = (\ms_S,\reg_S\update{\pcreg}{\Phi_S(r_1)}\update{\rdata}{\vsc_S'},\stk,\ms_\stk)$ and $\Phi_T' = (\ms_T,\reg_T\update{\pcreg}{\Phi_T(r_1)}\update{\rdata}{\vsc_T'})$.
          In this case, the next step of execution fails which makes it trivial to show $\npair{\Phi_S',\Phi_T'} \in \lro$.
        \end{itemize}
      \item
        \[
          \begin{array}{l}
            (\nonLinear{\Phi_S(r_1)} \Rightarrow \\
            \quad \forall W' \future \purePart{R_R(r_1)}, W_o, n' < n, \npair[n']{(\vsc_S',\vsc'_T)} \in \codereg{\sigrets,\sigcloss,\mscode,\gc}.H_\sigma \; \sigma \; \xi^{-1}(W_o) \ldotp \\
            \qquad \npair[n']{\Phi_S(r_1),\vsc_S',\Phi_T(r_1),\vsc_T'} \in \lrexj(W'\oplus W_o))
          \end{array}
        \]

        Let $W' \future \purePart{R_R(r_1)}$, $W_o$, $n' < n$, $\npair[n']{(\vsc_S',\vsc'_T)} \in \codereg{\sigrets,\sigcloss,\mscode,\gc}.H_\sigma \; \sigma \; \xi^{-1}(W_o)$.
        Further let $n'' \le n'$ be given and assume $\npair[n'']{(\reg_S,\reg_T)} \in \lrrg{\untrusted}(\{\rdata\})(W_R)$, $\memSat{\ms_S,\stk,\ms_\stk,\ms_T}{W_M}$.

        Now consider the following cases:
        \begin{itemize}
        \item $\exec{\vsc_S'}$:

          In this case, pick $\Phi_S' = \Phi_T' = \failed$ (as xjump fails).
          It is trivial to show $\npair{\Phi_S',\Phi_T'} \in \lro$.
        \item $\nonExec{\vsc_S'}$:

          In this case consider what $\Phi_S(r_1)$ is.\\
          If $\Phi_S(r_1) = ((\perm',\lin'),\baddr',\eaddr',\aaddr')$ and $\perm' \in \{\rwx, \rx\}$:

          If $\perm' = \rwx$, then we have a contradiction with $\npair{(\Phi_S(r_1),\Phi_T(r_1))} \in \lrvg{\trusted}(R_R(r_1))$.

          If $\perm' = \rx$, then in the case where $\npair{(\Phi_S(r_1),\Phi_T(r_1))} \in \lrvg{\untrusted}(R_R(r_1))$, the result follows from \ref{lem:untrusted-sealed-capability}.

          In the case where $\npair{(\Phi_S(r_1),\Phi_T(r_1))} \not\in \lrvg{\untrusted}(R_R(r_1))$, we want to show $(\ms_S,\reg_S\update{\pcreg}{\Phi_S(r_1)}\update{\rdata}{\vsc_S'},\stk,\ms_\stk)$ behaves reasonable up to $n-1$ steps.

          By assumption we have $\Phi_S(r_1)$ behaves reasonably up to $n-1$ steps, so it SFTS $\reg_S(r)$ is reasonable up to $n-1$ steps in memory $\ms_S$ and free stack $\ms_\stk$ for $r \neq \pcreg$ which follows from Lemma~\ref{lem:untrusted-source-values-are-reasonable}.
          
          By assumption we have $\npair{[\baddr',\eaddr']} \in \xReadCond{R_R(r_1)}$, so using Lemma~\ref{lem:non-expansiveness} and Lemma~\ref{lem:monotonicity} as well as $[\baddr',\eaddr'] \subseteq \ta$.
          Using this with the other assumptions and the IH, we get:
          \[
            \npair[n-1]{
              \begin{multlined}
                ((\ms_S,\reg_S\update{\pcreg}{\Phi_S(r_1)}\update{\rdata}{\vsc_S'},\stk,\ms_\stk),\\
                (\ms_T,\reg_T\update{\pcreg}{\Phi_T(r_1)}\update{\rdata}{\vsc_T'}))
              \end{multlined}
            } \in \lro
          \]

          Otherwise (not $\Phi_S(r_1) = ((\perm',\lin'),\baddr',\eaddr',\aaddr')$ and $\perm' \in \{\rwx, \rx\}$):

          In this case, pick $\Phi_S' = (\ms_S,\reg_S\update{\pcreg}{\Phi_S(r_1)}\update{\rdata}{\vsc_S'},\stk,\ms_\stk)$ and $\Phi_T' = (\ms_T,\reg_T\update{\pcreg}{\Phi_T(r_1)}\update{\rdata}{\vsc_T'})$.
          In this case, the next step of execution fails which makes it trivial to show $\npair{\Phi_S',\Phi_T'} \in \lro$.
        \end{itemize}
      \end{itemize}
    \end{itemize}
  \item Case \ref{case:ftlr:move-cca-etc:split-normal}:
    From $\npair{(\Phi_S(r_3),\Phi_S(r_3))} \in \lrvg{\trust}(R_R(r_3))$, Lemma~\ref{lem:splitting-safety-normal} gives us $W_1$, $W_2$ and $W_3$ such that $R_R(r_3) = W_1 \oplus W_2 \oplus W_3$ and $\npair{(w_1,w_1) \in \lrvg{\trust}(W_1)}$, $\npair{(w_2,w_2) \in \lrvg{\trust}(W_2)}$ and $\npair{(w_3,w_3) \in \lrvg{\trust}(W_3)}$.
    We take $R_R'(r_1) = W_1$, $R_R'(r_2) = W_2$ and $R_R'(r_3) = W_3 \oplus R_R(r_1) \oplus R_R(r_2)$ and $R_R' = R_R(r)$ elsewhere.

    By the above points, by assumption and using Lemma~\ref{lem:downwards-closed}, we then have for all $r$ that:
    \[
      \npair[n-1]{(\Phi_S'(r), \Phi_T'(r))} \in \lrvg{\trust}(R_R'(r))
    \]
  \item Case \ref{case:ftlr:move-cca-etc:split-seal}:
    In this case, we can take $R_R' = R_R$ and use Lemma~\ref{lem:purePart-oplus} to give us that all $\purePart{R_R(r)}$ are equal.
    For $r = r_1, r_2, r_3$, we then get easily by definition that
    \[
      \npair[n-1]{(\Phi_S'(r), \Phi_T'(r))} \in \lrvg{\trust}(R_R'(r))
    \]
    and for other registers, it follows by Lemma~\ref{lem:downwards-closed}.
  \item Case \ref{case:ftlr:move-cca-etc:split-stack}:
    First we only consider registers $r_1,r_2,r_3$.

    From $\npair{(\Phi_S(r_3),\Phi_T(r_3))} \in \lrvg{\trust}(R_R(r_3))$, Lemma~\ref{lem:splitting-safety-stack} gives us $W_1$, $W_2$ and $W_3$ such that $R_R(r_3) = W_1 \oplus W_2 \oplus W_3$ and $\npair{(w_1,w_1) \in \lrvg{\trust}(W_1)}$, $\npair{(w_2,w_2) \in \lrvg{\trust}(W_2)}$ and $\npair{(w_3,w_3) \in \lrvg{\trust}(W_3)}$.
    We take $R_R'(r_1) = W_1$, $R_R'(r_2) = W_2$ and $R_R'(r_3) = W_3 \oplus R_R(r_1) \oplus R_R(r_2)$ and $R_R' = R_R(r)$ elsewhere.

    By the above points, by assumption and using Lemma~\ref{lem:downwards-closed}, we then have for all $r$ that:
    \[
      \npair[n-1]{(\Phi_S'(r), \Phi_T'(r))} \in \lrvg{\trust}(R_R'(r))
    \]

  \item Case \ref{case:ftlr:move-cca-etc:splice-stack}:

    From $\npair{(\Phi_S(r_2),\Phi_T(r_2))} \in \lrvg{\trust}(R_R(r_2))$ and $\npair{(\Phi_S(r_3),\Phi_T(r_3))} \in \lrvg{\trust}(R_R(r_3))$, Lemma~\ref{lem:splicing-safety-stack} tells us that $\npair{(w_1,w_1) \in \lrvg{\trust}(R_R(r_2)\oplus R_R(r_3))}$.
    Since $w_2 = w_2'=w_3=w_3'=0$, it's clear that 
    \begin{itemize}
    \item $\npair{(w_2,w_2') \in \lrvg{\trust}(R_R(r_1))}$ and
    \item $\npair{(w_3,w_3') \in \lrvg{\trust}(\purePart{R_R(r_3)})}$
    \end{itemize}
    We take $R_R'(r_1) = R_R(r_2)\oplus R_R(r_3)$, $R_R'(r_2) = R_R(r_1)$ and $R_R'(r_3) = \purePart{R_R(r_3)}$ and $R_R' = R_R(r)$ elsewhere.

    By the above points, by assumption and using Lemma~\ref{lem:downwards-closed}, we then have for all $r$ that:
    \[
      \npair[n-1]{(\Phi_S'(r), \Phi_T'(r))} \in \lrvg{\trust}(R_R'(r))
    \]

  \item Case \ref{case:ftlr:move-cca-etc:splice-normal}:
    
    We start by arguing the safety of $r_1$:

    From $\npair{(\Phi_S(r_2),\Phi_T(r_2))} \in \lrvg{\trust}(R_R(r_2))$ and $\npair{(\Phi_S(r_3),\Phi_T(r_3))} \in \lrvg{\trust}(R_R(r_3))$, Lemma~\ref{lem:splicing-safety-normal} gives us $W_1',W_2',W_3'$ such that $R_R(r_2)\oplus R_R(r_3) = W_1'\oplus W_2' \oplus W_3'$ and
    $\npair{(w_i,w_i) \in \lrvg{\trust}(W_i')}$ for $i = 1,2,3$.
    We take $R_R'(r_1) = W_1'\oplus R_R(r_1)$ (which is defined because $R_R(r_1) \oplus (R_R(r_2)\oplus R_R(r_3))$ is defined), $R_R'(r_2) = W_2'$ and $R_R'(r_3) = W_3'$ and $R_R' = R_R(r)$ elsewhere.

    By the above points, by assumption and using Lemma~\ref{lem:downwards-closed}, we then have for all $r$ that:
    \[
      \npair[n-1]{(\Phi_S'(r), \Phi_T'(r))} \in \lrvg{\trust}(R_R'(r))
    \]

  \item Case \ref{case:ftlr:move-cca-etc:splice-seal}:

    In this case, we can take $R_R' = R_R$ and use Lemma~\ref{lem:purePart-oplus} to give us that all $\purePart{R_R(r)}$ are equal.
    For $r = r_1, r_2, r_3$, we then get easily by definition that
    \[
      \npair[n-1]{(\Phi_S'(r), \Phi_T'(r))} \in \lrvg{\trust}(R_R'(r))
    \]
    and for other registers, it follows by Lemma~\ref{lem:downwards-closed}.

  \item Case \ref{case:ftlr:move-cca-etc:jnz-z}:

    In this case, we can take $R_R' = R_R$ and for all registers, the result follows by Lemma~\ref{lem:downwards-closed}.
  \end{itemize}

  By Lemma~\ref{lem:lro-anti-red-gen} it suffices to show
  \[
    \npair[n-1]{\Phi_S',\Phi_T'} \in \lro
  \]
  which follows from the induction hypothesis.
  In this case, the IH is applicable because we have the following:
  \begin{itemize}
  \item One of the following sets of requirements holds:
    \begin{itemize}
    \item $\trust = \trusted$, $\Phi_S'$ is reasonable up to $n-1$ steps and $[\baddr,\eaddr] \subseteq \dom(\mscode) = \ta$
    \item $\trust = \untrusted$, $[b,e] \mathrel{\#} \ta$ and $\npair[n-1]{[\baddr,\eaddr]} \in \readCond{\normal,W_\pcreg}$
    \end{itemize}

    We know one of the following holds:
    \begin{itemize}
    \item $\trust = \trusted$, $\Phi_S$ is reasonable up to $n$ steps and $[\baddr,\eaddr] \subseteq \dom(\mscode) = \ta$
    \item $\trust = \untrusted$, $[b,e] \mathrel{\#} \ta$ and $\npair{[\baddr,\eaddr]} \in \readCond{\normal,W_\pcreg}$
    \end{itemize}
    If the latter is the case, then the result follows by Lemma~\ref{lem:downwards-closed}.

    If the former holds, then it follows by definition of execution configuration reasonability, using the fact that $\Phi_S$ does not point to $\scall{\offpc,\offsigma}{r_1}{r_2}$ or $\txjmp{r_1}{r_2}$.

  \item $\Phi_S'(\pcreg) = \Phi_T'(\pcreg) = ((\rx,\normal),\baddr,\eaddr,\_)$:
    Follows by definition of $\updPcAddr{}$ and using the fact that $r_i \neq \pcreg$ for all $i$ and the corresponding assumption of thi slemma.

  \item $\npair[n-1]{[\baddr,\eaddr]} \in \xReadCond{W_\pcreg}$:
    Follows by Lemma~\ref{lem:downwards-closed} from the corresponding assumption of this lemma.

  \item $\npair[n-1]{(\Phi_S'.\reg,\Phi_T'.\reg)} \in \lrrg{\trust}(W_R)$:
    See above.

  \item $\memSat[n-1]{\Phi_S'.\mem,\Phi_S'.\stk,\Phi_S'.\ms_\stk,\Phi_T'.\mem}{W_M}$:
    These components are all unchanged from $\Phi_S$ and $\Phi_T$, so the result follows by Lemma~\ref{lem:downwards-closed} from the corresponding assumption of this lemma.

  \item $W_\pcreg \oplus W_R \oplus W_M$ is defined:
    Follows from the corresponding assumption of this lemma.

  \item Theorem~\ref{thm:ftlr} holds for all $n' < n-1$:
    follows from the corresponding assumption of this lemma.
  \end{itemize}

  Case \ref{case:ftlr:store-load-mem}:

  By Lemma~\ref{lem:lro-anti-red-gen} it suffices to show
  \[
    \npair[n-1]{\Phi_S',\Phi_T'} \in \lro
  \]

  First, we show that for some $W_R'$ and $W_M'$ such that $W_R \oplus W_M = W_R' \oplus W_M'$, we have that $\npair[n-1]{(\Phi_S'.\reg,\Phi_T'.\reg)} \in \lrrg{\trust}(W_R')$ and $\memSat[n-1]{\Phi_S'.\mem,\Phi_S'.\stk,\Phi_S'.\ms_\stk,\Phi_T'.\mem}{W_M'}$.
  We have that $\Phi_S' = \updPcAddr{\Phi_S\updReg{r_1,r_2}{w_1,w_2}\update{\mem.\aaddr}{w}}$ and $\Phi_T' = \updPcAddr{\Phi_T\updReg{r_1',r_2'}{w_1',w_2'}\update{\mem.\aaddr}{w'}}$ and we distinguish the following two cases:
  \begin{itemize}
  \item (store) $w_1 = w_1' = \Phi_S(r_1) = \Phi_T(r_1) =
    ((\perm,\lin),\baddr,\eaddr,\aaddr)$, and $\perm \in \writeAllowed{}$,
    and
    $\withinBounds{w_1}$, and
    $w = \Phi_S(r_2)$, and $w' = \Phi_T(r_2)$, and $w_2 = \linCons{\Phi_S(r_2)}$, $w_2' = \linCons{\Phi_T(r_2)}$,  $r_2 \neq \pcreg$:

    From $\npair{(\Phi_S.\reg,\Phi_T.\reg)} \in \lrrg{\trust}(W_R)$, we know that
    \begin{itemize}
    \item $\npair{(w,w')} \in \lrvg{\trust}(W_{R,2})$
    \item $\npair{(w_2,w_2')} \in \lrvg{\trust}(\purePart{W_{R,2}]})$ and
    \item $\npair{(\Phi_S.\reg\update{r_2}{w_2},\Phi_T.\reg\update{r_2}{w_2'})} \in \lrrg{\trust}(W_R')$
    \end{itemize}
    with $W_R =W_R' \oplus W_{R,2}$ (using Lemma~\ref{lem:purePart-oplus} and~\ref{lem:purePart-duplicable}).

    From reasonability of $\Phi_S$, we know that $\Phi.\reg(r_2)$ is reasonable in memory $\Phi.\mem$ up to $n-1$ steps.
    Lemma~\ref{lem:trusted-and-reasonable-is-untrusted} then tells us that $\npair{(w,w')} \in \lrv(W_{R,2})$.

    The result then follows from Lemma~\ref{lem:store-reg-works}, redistributing ownership in the obvious way.

  \item (load) $w_2 = w_2' = \Phi_T(r_2) = \Phi_S(r_2) =
    ((\perm,\lin),\baddr,\eaddr,\aaddr)$, and $\perm \in \readAllowed{}$,
    $\withinBounds{((\perm,\lin),\baddr,\eaddr,\aaddr)}$, and
    $w_1 = \Phi_S.\mem(\aaddr)$, and $w_1' = \Phi_T.\mem(\aaddr)$, and
    $w = \linCons{w_1}$, $w' = \linCons{w_1'}$, $\linConsPerm{\perm}{w_1}$, $\linConsPerm{\perm}{w_1'}$ and $r_1 \neq \pcreg$

    Follows from Lemma~\ref{lem:readcond-cap-works}, redistributing ownership in the obvious way.

  \end{itemize}

  By the induction hypothesis, it suffices to prove that:
  \begin{itemize}
  \item One of the following sets of requirements holds:
    \begin{itemize}
    \item $\trust = \trusted$, $\Phi_S'$ is reasonable up to $n$ steps and $[\baddr,\eaddr] \subseteq \dom(\mscode) = \ta$
    \item $\trust = \untrusted$ and $[b,e] \mathrel{\#} \ta$ and $\npair{[\baddr,\eaddr]} \in \readCond{\normal,W_\pcreg}$
    \end{itemize}:

    Follows by the same assumption of this lemma, in the first case using the definition of execution configuration reasonability, using the fact that $\Phi_S$ does not point to $\scall{\offpc,\offsigma}{r_1}{r_2}$ or $\txjmp{r_1}{r_2}$.

  \item $\Phi_S'(\pcreg) = \Phi_T'(\pcreg) = ((\rx,\normal),\baddr,\eaddr,\_)$:

    Follows by definition of $\updPcAddr{}$ and the fact that stores from pc into memory and loads into pc are not allowed.

  \item $\npair[n-1]{[\baddr,\eaddr]} \in \xReadCond{W_\pcreg}$:

    Follows by the same assumption of this lemma, using Lemma~\ref{lem:downwards-closed}.

  \item $\npair[n-1]{(\Phi_S'.\reg,\Phi_T'.\reg)} \in \lrrg{\trust}(W_R')$:

    See above.

  \item $\memSat[n-1]{\Phi_S'.\mem,\Phi_S'.\stk,\Phi_S'.\ms_\stk,\Phi_T'.\mem}{W_M'}$:

    See above.

  \item $W_\pcreg \oplus W_R' \oplus W_M'$ is defined.

  \item Theorem~\ref{thm:ftlr} holds for all $n' < n-1$:

    Follows by the same assumption of this lemma.
  \end{itemize}

  Case \ref{case:ftlr:store-load-stack}

  By Lemma~\ref{lem:lro-anti-red-gen} it suffices to show
  \[
    \npair[n-1]{\Phi_S',\Phi_T'} \in \lro
  \]

  First, we show that for some $W_R'$ and $W_M'$ such that $W_R \oplus W_M = W_R' \oplus W_M'$, we have that $\npair[n-1]{(\Phi_S'.\reg,\Phi_T'.\reg)} \in \lrrg{\trust}(W_R')$ and $\memSat[n-1]{\Phi_S'.\mem,\Phi_S'.\stk,\Phi_S'.\ms_\stk,\Phi_T'.\mem}{W_M'}$.
  We have that $\Phi_S' = \updPcAddr{\Phi_S\updReg{r_1,r_2}{w_1,w_2}\update{\ms_\stk.\aaddr}{w}}$, $\Phi_T' = \updPcAddr{\Phi_T\updReg{r_1',r_2'}{w_1',w_2'}\update{\ms_\stk.\aaddr}{w'}}$ and we distinguish the following two cases:
  \begin{itemize}
  \item (store) $w_1 = \Phi_T(r_1) =
        ((\perm,\linear),\baddr,\eaddr,\aaddr)$, $w_1' = \Phi_S(r_1) =
        \stkptr{\perm,\baddr,\eaddr,\aaddr}$, and $\perm \in \writeAllowed{}$, and $\withinBounds{w_1}$, and $w = \Phi_S(r_2)$, and $w' = \Phi_T(r_2)$, and $w_2 = \linCons{\Phi_S(r_2)}$, $w_2' = \linCons{\Phi_T(r_2)}$:

        Follows from Lemma~\ref{lem:store-stack-works}, redistributing ownership in the obvious way.

  \item (load) $ w_2' = \Phi_T(r_2) =
        ((\perm,\linear),\baddr,\eaddr,\aaddr)$, and $w_2 = \Phi_S(r_2) =
        \stkptr{\perm,\baddr,\eaddr,\aaddr}$, and $\perm \in \readAllowed{}$,
        $\withinBounds{((\perm,\lin),\baddr,\eaddr,\aaddr)}$, and $\aaddr \in
        \dom(\Phi.\ms_\stk)$, and $\aaddr \in \dom(\Phi.\ms_\stk)$, and\\
        $w_1 = \Phi_S.\ms_\stk(\aaddr)$, and $w_1' = \Phi_T.\ms_\stk(\aaddr)$, and $w = \linCons{w_1}$, $w' = \linCons{w_1'}$, $\linConsPerm{\perm,w_1}$ and $\linConsPerm{\perm,w_1'}$ and $r_1 \neq \pcreg$:

    Follows from Lemma~\ref{lem:load-stack-cap-works}, redistributing ownership in the obvious way.

  \end{itemize}

  By the induction hypothesis, it suffices to prove that:
  \begin{itemize}
  \item One of the following sets of requirements holds:
    \begin{itemize}
    \item $\trust = \trusted$, $\Phi_S'$ is reasonable up to $n$ steps and $[\baddr,\eaddr] \subseteq \dom(\mscode) = \ta$
    \item $\trust = \untrusted$ and $[b,e] \mathrel{\#} \ta$ and $\npair{[\baddr,\eaddr]} \in \readCond{\normal,W_\pcreg}$
    \end{itemize}:

    Follows by the same assumption of this lemma, in the first case using the definition of execution configuration reasonability, using the fact that $\Phi_S$ does not point to $\scall{\offpc,\offsigma}{r_1}{r_2}$ or $\txjmp{r_1}{r_2}$.

  \item $\Phi_S'(\pcreg) = \Phi_T'(\pcreg) = ((\rx,\normal),\baddr,\eaddr,\_)$:

    Follows by definition of $\updPcAddr{}$ and the fact that stores from pc into memory and loads into pc are not allowed.

  \item $\npair[n-1]{[\baddr,\eaddr]} \in \xReadCond{W_\pcreg}$:

    Follows by the same assumption of this lemma, using Lemma~\ref{lem:downwards-closed}.

  \item $\npair[n-1]{(\Phi_S'.\reg,\Phi_T'.\reg)} \in \lrrg{\trust}(W_R')$:

    See above.

  \item $\memSat[n-1]{\Phi_S'.\mem,\Phi_S'.\stk,\Phi_S'.\ms_\stk,\Phi_T'.\mem}{W_M'}$:

    See above.

  \item $W_\pcreg \oplus W_R' \oplus W_M'$ is defined.

  \item Theorem~\ref{thm:ftlr} holds for all $n' < n-1$:

    Follows by the same assumption of this lemma.
  \end{itemize}

  Case \ref{case:ftlr:jump}

  We have that
  \begin{itemize}
  \item $\Phi_S \step[\gc] \Phi_S'$
  \item $\Phi_T \step \Phi_T'$
  \item $\Phi_S$ does not point to $\scall{\offpc,\offsigma}{r_1}{r_2}$ or $\txjmp{r_1}{r_2}$
  \item One of the following holds
    \begin{enumproof}
    \item(jmp,jnz) $\Phi_S' = \Phi_S\updReg{\pcreg,r_1}{\Phi_S(r_1),w_1}$ and
      $\Phi_T' = \Phi_T\updReg{\pcreg,r_1'}{\Phi_T(r_1),w_1'}$ and
      $\Phi_S(r_1) = \Phi_T(r_1) = ((\perm_1,\lin_1),\baddr_1,\eaddr_1,\aaddr_1)$,
      $\exec{\Phi_S(r_1)}$, $\withinBounds{\Phi_S(r_1)}$,
      $w_1 = \linCons{\Phi_S(r_1)}$ and $w_1' = \linCons{\Phi_T(r_1)}$
    \item(xjmp)
      \begin{itemize}
      \item $\Phi_S(r_1) = \sealed{\sigma,c_1}$ and 
      \item $\Phi_S(r_2) = \sealed{\sigma,c_2}$ and 
      \item $\Phi_T(r_1) = \sealed{\sigma,c_1'}$ and
      \item $\Phi_T(r_2) = \sealed{\sigma,c_2'}$ and
      \item $c_1' \neq \retptrc(\_)$ and 
      \item $c_2' \neq \retptrd(\_)$ and
      \item $\nonExec{\Phi_S(r_2)}$ and 
      \item $\nonExec{\Phi_T(r_2)}$ and
      \item $\Phi_S'' = \Phi_S\updReg{r_1,r_2}{\linCons{c_1},\linCons{c_2}}$ and 
      \item $\Phi_S' = \xjumpResult{c_1}{c_2}{\Phi_S''}$ and
      \item $\Phi_T'' = \Phi_T\updReg{r_1,r_2}{\linCons{c_1'},\linCons{c_2'}}$ and
      \item $\Phi_T' = \xjumpResult{c_1'}{c_2'}{\Phi_T''}$
    \end{itemize}

    \end{enumproof}
  \end{itemize}

  According to Lemma~\ref{lem:lro-anti-red-gen}, it suffices to show:
  \[
    \npair[n-1]{(\Phi_S',\Phi_T')} \in \lro
  \]
  If $n-1 = 0$, then this holds vacuously (by definition of $\lro[\preceq,\gc]$ and $\lro[\preceq,\gc]$), so we assume that $n-1 > 0$.

  In the first case (jmp,jnz), the fact that $\Phi_S$ is reasonable, with $\Phi_S \step[\gc] \Phi_S'$ and $\Phi_S$ does not point to $\scall{\offpc,\offsigma}{r_1}{r_2}$ or $\txjmp{r_1}{r_2}$, gives us that $\perm_1 = \perm$, $\lin_1 = \normal$, $b_1 = b$, $e_1 = e$.
  The fact that $\lin_1 =\normal$ implies that $w_1 =\Phi_S(r_1)$ and $w_1' = \Phi_T(r_1)$.

  The induction hypothesis tells us that it suffices to prove the following:
  \begin{itemize}
  \item One of the following sets of requirements holds:
    \begin{itemize}
    \item $\trust = \trusted$, $\Phi_S'$ is reasonable up to $n-1$ steps and $[\baddr,\eaddr] \subseteq \ta$
    \item $\trust = \untrusted$ and $[b,e] \mathrel{\#} \ta$ and $\npair[n-1]{[\baddr,\eaddr]} \in \readCond{\normal,W_\pcreg}$
    \end{itemize}
    This follows from the corresponding assumption of this lemma, the fact that $\Phi_S \step[\gc] \Phi_S'$ and $\Phi_S$ does not point to $\scall{\offpc,\offsigma}{r_1}{r_2}$ or $\txjmp{r_1}{r_2}$ and \ref{lem:downwards-closed}.
  \item $\Phi_S'(\pcreg) = \Phi_T'(\pcreg) = ((\rx,\normal),\baddr,\eaddr,\_)$:
    See above.
  \item $\npair[n-1]{[\baddr,\eaddr]} \in \xReadCond{W_\pcreg}$:
    Follows by Lemma~\ref{lem:downwards-closed} from the fact that $\npair{[b,e]} \in \xReadCond{W_\pcreg}$.
  \item $\npair[n-1]{(\Phi_S'.\reg,\Phi_T'.\reg)} \in \lrrg{\trust}(W_R)$:

    We have that $\Phi_S'.\reg(r) = \Phi_S.\reg(r)$ for all $r \neq \pcreg$, so the result follows by Lemma~\ref{lem:downwards-closed} from the corresponding assumption of this lemma.

  \item $\memSat[n-1]{\Phi_S'.\mem,\Phi_S'.\stk,\Phi_S'.\ms_\stk,\Phi_T'.\mem}{W_M}$:

    These components of $\Phi_S'$ and $\Phi_T'$ are all unmodified from $\Phi_S$ and $\Phi_T$, so the result follows by Lemma~\ref{lem:downwards-closed} from the corresponding assumption of this lemma.

  \item $W_\pcreg \oplus W_R \oplus W_M$ is defined: by assumption.

  \item Theorem~\ref{thm:ftlr} holds for all $n' < n-1$:
    Follows by the same assumption of this lemma.
  \end{itemize}

  In the second case (xjmp), we have that $\npair{(\Phi_S(r_1),\Phi_T(r_1))} \in \lrvg{\trust}(W_{R,1})$, $\npair{(\Phi_S(r_2),\Phi_T(r_2))} \in \lrvg{\trust}(W_{R,2})$ and $\npair{(\Phi_S'',\Phi_T'')} \in \lrrg{\trust}(\{r_1,r_2\})(W_R')$ for some $W_{R,1}, W_{R,2}, W_R'$ with $W_R = W_{R,1} \oplus W_{R,2} \oplus W_R'$.

  Using the facts that $\Phi_S(r_1) = \sealed{\sigma,c_1}$ and $\Phi_S(r_2) = \sealed{\sigma,c_2}$ and $\Phi_T(r_1) = \sealed{\sigma,c_1'}$ and $\Phi_T(r_2) = \sealed{\sigma,c_2'}$ and $\nonExec{\Phi_S(r_2)}$ and $\nonExec{\Phi_T(r_2)}$, and the above points, Lemma~\ref{lem:sealed-lrv-lrexj} tells us that $\npair[n-1]{(c_1,c_2,c_1',c_2')} \in \lrexj(W_{R,1} \oplus W_{R,2})$. 

  By definition of $\lrexj$, it suffices to prove that
  \begin{itemize}
  \item
    \begin{equation*}
      \npair[n-1]{\left(\array{l}\Phi_S.\reg\update{r_1,r_2}{\linCons{c_1},\linCons{c_2}},\\ \Phi_T.\reg\update{r_1,r_2}{\linCons{c_1'},\linCons{c_2'}}\endarray\right)} \in
      \lrr(\{\rdata\})(W_R')
    \end{equation*}

    It is easy to show that $\npair{\linCons{c_1},\linCons{c_1'}} \in \lrvg{\trust}(\purePart{W_{R,1}})$ and $\npair{\linCons{c_2},\linCons{c_2'}} \in \lrvg{\trust}(\purePart{W_{R,2}})$, by using Lemma~\ref{lem:non-linear-pure} in the non-linear case and the fact that $0$ is always related to itself by definition of $\lrvg{\trust}$ in the linear case.
    We also have that $\purePart{W_R'} = \purePart{W_{R,1}} = \purePart{W_{R,2}}$ by Lemma~\ref{lem:purePart-oplus} and the fact that $W_R = W_{R,1} \oplus W_{R,2} \oplus W_R'$, so it follows that $\npair{\linCons{c_1},\linCons{c_1'}} \in \lrvg{\trust}(\purePart{W_R'})$ and $\npair{\linCons{c_2},\linCons{c_2'}} \in \lrvg{\trust}(\purePart{W_R'})$.
    From this and the fact that $\npair{(\Phi_S'',\Phi_T'')} \in \lrrg{\trust}(\{r_1,r_2\})(W_R')$, it follows easily that
    \begin{equation*}
      \npair[n-1]{\left(\array{l}\Phi_S.\reg\update{r_1,r_2}{\linCons{c_1},\linCons{c_2}},\\ \Phi_T.\reg\update{r_1,r_2}{\linCons{c_1'},\linCons{c_2'}}\endarray\right)} \in
      \lrrg{\trust}(\{\rdata\})(W_R')
    \end{equation*}

    From the fact that $\Phi_S$ is reasonable up to $n$ steps, tells us that $\Phi.\reg(r)$ is reasonable in memory $\Phi.\mem$ and free stack $\Phi.\ms_\stk$ up to $n-1$ steps for all $r \neq \pcreg$.
    Lemma~\ref{lem:trusted-and-reasonable-is-untrusted} then tells us that
    \begin{equation*}
      \npair[n-1]{\left(\array{l}\Phi_S.\reg\update{r_1,r_2}{\linCons{c_1},\linCons{c_2}},\\ \Phi_T.\reg\update{r_1,r_2}{\linCons{c_1'},\linCons{c_2'}}\endarray\right)} \in
      \lrr(\{\rdata\})(W_R')
    \end{equation*}
    using the fact that $n-1 > 0$ (by assumption above), Theorem~\ref{thm:ftlr} holds up to $n-1$ steps (by assumption).

  \item $\memSat[n-1]{\ms_S,\stk,\ms_\stk,\ms_T}{W_M}$
    Follows by Lemma~\ref{lem:downwards-closed} from the corresponding assumption of this lemma.

  \item $W_M \oplus W_R' \oplus W_{R,1} \oplus W_{R,2}$ is defined.

    This follows easily from the facts that $W_R = W_{R,1} \oplus W_{R,2} \oplus W_R'$ (see above) and the assumption that $W_\pcreg \oplus W_R' \oplus W_M'$ is defined.
  \end{itemize}
\end{proof}

\begin{proof}[Proof of Theorem~\ref{thm:ftlr}]
  By complete induction over $n$\footnote{if $n=0$, then we have a contradiction with $\Phi_S\term[i]$ and $\Phi_T\term[i]$ when we get to $\lro$.}.
  Assume
  \begin{itemize}
  \item $\npair{[\baddr,\eaddr]} \in \xReadCond{W}$
  \end{itemize}
  and one of the following sets of requirements holds:
  \begin{enumerate}[label={\roman*)}]
  \item \begin{itemize}
    \item $[\baddr,\eaddr] \subseteq \ta$
    \item ${((\rx,\normal),\baddr,\eaddr,\aaddr)}$ behaves reasonably up to $n$ steps.
    \end{itemize}
  \item
    \begin{itemize}
    \item $[\baddr,\eaddr] \mathrel{\#} \ta$
    \end{itemize}
  \end{enumerate}
and show
\[
\npair{(c,c)} \in \lre(W)
\]
For $c=((\rx,\normal),\baddr,\eaddr,\aaddr)$.\\

Let $n' \leq n$ be given and assume
\begin{enumerate}
\item $\npair[n']{\stpair{\reg}{\reg}} \in \lrr(W_R)$ \label{item:reg-ass}
\item $\memSat[n']{\ms_S,\stk,\ms_\stk,\ms_T}{W_M}$\label{item:mem-ass}
\item $W \oplus W_R \oplus W_M$ is defined\label{item:worlds-def}
\end{enumerate}
Further let
\begin{itemize}
\item $\Phi_S = (\ms_S,\reg_S\update{\pcreg}{c},\stk,\ms_\stk)$
\item $\Phi_T = (\ms_T,\reg_T\update{\pcreg}{c})$
\end{itemize}
and show
\[
  \npair[n']{(\Phi_S,\Phi_T)} \in \lro
\]

By Lemma~\ref{lem:ftlr-internal-lemma}, taking $\trust = \trusted$ iff $[b,e] \subseteq \ta$ and $\trust = \untrusted$ otherwise, it suffices to show that:
\begin{itemize}
\item If $\trust =\trusted$ then $ \Phi_S$ is reasonable up to $n'$ steps.

  We know by assumption that $c$ behaves reasonably up to $n$ steps.

  By definition, it suffices to show that $\reg_S(r)$ is reasonable up to $n$ steps in memory $\ms_S$ and free stack $\ms_\stk$ for $r \nequal \pcreg$ and that $\ms_S$, $\ms_\stk$ and $\stk$ are all disjoint.

  Take an $r \neq \pcreg$ and $\gc = (\ta,\stkb,\gsigrets,\gsigcloss)$.
  By Lemma~\ref{lem:untrusted-source-values-are-reasonable}, it suffices to prove the following:
  \begin{itemize}
  \item $\npair{(w,\_)} \in \lrv(W_w)$: follows from $\npair[n']{\stpair{\reg}{\reg}} \in \lrr(W_R)$.
  \item $\memSat{\ms_S,\stk,\ms_\stk,\_}{W_M}$: by assumption.
  \item $\purePart{W_w} \oplus \purePart{W_M}$ is defined: By Lemma~\ref{lem:purePart-oplus}
  \end{itemize}

\item $\trust = \trusted \vee \npair[n']{[\baddr,\eaddr]} \in \readCond{\lin,W}$:
  If $\trust =\ untrusted$ then we know that $[b,e] \mathrel{\#} \ta$, so that $\npair{[\baddr,\eaddr]} \in \readCond{\lin,W}$ follows by Lemma~\ref{lem:xReadCond-outside-ta-implies-readCond}.

\item $\Phi_S(\pcreg) = \Phi_T(\pcreg) = ((\rx,\lin),\baddr,\eaddr,\aaddr)$:
  We know that $\Phi_S(\pcreg) = \Phi_T(\pcreg) = c = ((\rx,\lin),\baddr,\eaddr,\aaddr)$. 

\item $\npair[n']{[\baddr,\eaddr]} \in \xReadCond{W}$:
  Follows from $\npair{[\baddr,\eaddr]} \in \xReadCond{W}$ using Lemma~\ref{lem:downwards-closed}.

\item $\npair[n']{(\Phi_S.\reg,\Phi_T.\reg)} \in \lrrg{\trust}(W_R)$:
  Follows directly from $\npair[n']{\stpair{\reg}{\reg}} \in \lrr(W_R)$  since $\lrr(W_R) \subseteq \lrrg{\trust}(W_R)$.

\item $\memSat[n']{\Phi_S.\mem,\Phi_S.\ms_\stk,\Phi_S.\stk,\Phi_T.\mem}{W_M}$:
  By assumption.

\item $W \oplus W_R \oplus W_M$ is defined:
  By assumption.

\item Theorem~\ref{thm:ftlr} holds for all $n'' < n'$:
  Follows from our induction hypothesis since $n' \le n$.

\end{itemize}
\end{proof}

\begin{lemma}
  \label{lem:stack-sat-unchanged-future}
  If $\pwfree = \pwfree[W']$ and $W' \future W$ and
  $\memSatFStack{\ms_\stk,\ms_T}{W}$, then $\memSatFStack{\ms_\stk,\ms_T}{W'}$.
\end{lemma}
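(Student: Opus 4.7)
The plan is to unfold the definition of $\memSatFStack{}$ on the assumption, extract the witnesses $R_\ms$ and $R_W$, keep $R_\ms$ unchanged (since the memories $\ms_\stk$ and $\ms_T$ are identical on both sides), and construct a new $R_W'$ using Lemma~\ref{lem:oplus-future-distr} applied to $W'\future W = \bigoplus_{r}R_W(r)$. The crucial observations are that $\pwfree = \pwfree[W']$ implies $\dom(\activeReg{\pwfree}) = \dom(\activeReg{\pwfree[W']})$ and $\pwfree(r) = \pwfree[W'](r)$ for every $r$, so the set of regions whose $H$-predicates must be satisfied, and those predicates themselves, are identical in $W$ and $W'$.

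First I would generalize Lemma~\ref{lem:oplus-future-distr} from binary $\oplus$ to arbitrary finite indexed $\oplus$ by a straightforward induction on the size of the index set: if $W' \future \bigoplus_{r\in S} R_W(r)$, then there exists $R_W' : S \to \World$ with $W' = \bigoplus_{r\in S} R_W'(r)$ and $R_W'(r) \future R_W(r)$ for every $r \in S$. Applying this to our situation yields the desired decomposition of $W'$ over the same finite domain $\dom(\activeReg{\pwfree})$.

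Next I would reuse the same $R_\ms$ and re-establish the pointwise membership condition $\npair[n']{R_\ms(r)} \in \pwfree[W'](r).H\,\xi^{-1}(R_W'(r))$ for every $r$ and every $n' < n$. Since $\pwfree[W'](r) = \pwfree(r)$ by assumption, this reduces to $\npair[n']{R_\ms(r)} \in \pwfree(r).H\,\xi^{-1}(R_W'(r))$. We have the corresponding membership for $R_W(r)$, and $R_W'(r) \future R_W(r)$, so by the fact that $\xi^{-1}$ preserves $\future$ (Theorem~\ref{thm:recursive-domain-eq}) and that $\pwfree(r).H$ is monotone in its world argument (it is a monotone non-expansive function by the definition of $\Regions$), the membership transports to $R_W'(r)$. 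The side conditions $\stkb \in \dom(\ms_\stk)$ and $\stkb \in \dom(\ms_T)$ are preserved trivially since both memories are unchanged.

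No serious obstacle is expected: the result is essentially a monotonicity-and-redistribution argument for the existentials in $\memSatFStack{}$. The only step that requires a small amount of care is the generalized version of Lemma~\ref{lem:oplus-future-distr} for finite indexed families and the verification that $\xi^{-1}$ and the region-level $H$ are both monotone, both of which are immediate from the definitions already in place.
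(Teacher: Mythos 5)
Your proposal is correct and follows essentially the same route as the paper, whose proof is simply ``follows by inspecting the definition of $\memSatFStack{\ms_\stk,\ms_T}{W}$ and by the monotonicity of memory relations in the world'': you keep $R_\ms$, redistribute $W'$ over the unchanged active regions via (an $n$-ary form of) Lemma~\ref{lem:oplus-future-distr}, and transport each membership along $R_W'(r)\future R_W(r)$ using monotonicity of the regions' $H$. The extra care you take over the indexed generalization of Lemma~\ref{lem:oplus-future-distr} and the $\future$-preservation of $\xi^{-1}$ is exactly the detail the paper leaves implicit.
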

\begin{proof}
  Follows by inspecting the definition of $\memSatFStack{\ms_\stk,\ms_T}{W}$ and by the monotonicity of memory relations in the world.
\end{proof}

\begin{lemma}
  \label{lem:heap-sat-unchanged-future}
  If $\pwheap[\purePart{W}] = \pwheap[\purePart{W'}]$ and $W_{h}' \future W_h$ and $\npair{(\overline{\sigma},\ms_S,\ms_T)} \in \lrheap(\pwheap[W])(W_h)$, then $\npair{(\overline{\sigma},\ms_S,\ms_T)} \in\lrheap(\pwheap[W'])(W_h')$.
\end{lemma}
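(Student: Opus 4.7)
The plan is to unfold the hypothesis, reuse the witnesses it provides up to a monotonicity step, and repackage them against the new parameters. Concretely, from $\npair{(\overline{\sigma},\ms_S,\ms_T)} \in \lrheap(\pwheap[W])(W_h)$ I would extract $R_\ms$, $R_W$, and $R_{\var{seal}}$ satisfying the various conjuncts, in particular $W_h = \bigoplus_{r} R_W(r)$ and $\npair[n']{R_\ms(r)} \in \pwheap[W](r).H~\xi^{-1}(R_W(r))$ for all $r \in \dom(\activeReg{\pwheap[W]})$ and $n' < n$.

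Next I would exploit the hypothesis $\pwheap[\purePart{W}] = \pwheap[\purePart{W'}]$. Inspecting the definition of $\purePart{}$ on heap regions, this transformation preserves the $H$ and $H_\sigma$ components of pure and spatial/spatialo regions and preserves revoked regions, so the equality yields (i) $\dom(\activeReg{\pwheap[W]}) = \dom(\activeReg{\pwheap[W']})$, since $\purePart{}$ does not turn active regions into revoked ones or vice versa, and (ii) $\pwheap[W](r).H = \pwheap[W'](r).H$ and $\pwheap[W](r).H_\sigma = \pwheap[W'](r).H_\sigma$ for every $r$ in that common domain. In particular $R_{\var{seal}}$ still satisfies $\dom(\pwheap[W'](r).H_\sigma) = R_{\var{seal}}(r)$.

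Then I would use $W_h' \future W_h = \bigoplus_r R_W(r)$ together with (an indexed generalisation of) Lemma~\ref{lem:oplus-future-distr} to obtain a family $R_W'$ with $W_h' = \bigoplus_r R_W'(r)$ and $R_W'(r) \future R_W(r)$ for every $r$. With this in hand, I reuse $R_\ms$ and $R_{\var{seal}}$ unchanged and take $R_W'$ as the new world partition. The memory equalities and the seal-domain condition are untouched. The remaining satisfaction clause $\npair[n']{R_\ms(r)} \in \pwheap[W'](r).H~\xi^{-1}(R_W'(r))$ follows by rewriting $\pwheap[W'](r).H$ to $\pwheap[W](r).H$ via the equality above and then applying monotonicity of relations (Lemma~\ref{lem:monotonicity}) along $R_W'(r) \future R_W(r)$, combined with Lemma~\ref{lem:n-equality-props} to transport the step-index $n'$ through $\xi^{-1}$.

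The main obstacle I expect is the indexed version of Lemma~\ref{lem:oplus-future-distr}: the existing statement is binary, so I would either lift it to arbitrary partitions by an inductive/fixed-point argument over the (potentially infinite) domain $\dom(\activeReg{\pwheap[W]})$, or equivalently re-derive the decomposition directly from the future-world relation on worlds by tracking how each region in $W_h$ evolves in $W_h'$. Everything else is routine bookkeeping provided the purePart-equality assumption is mined for all the structural consequences listed above.
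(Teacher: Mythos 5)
Your proposal is correct and matches the paper's (much terser) proof, which simply says the result follows by inspecting the definitions of $\lrheap$ and $\purePart{}$ and using monotonicity of the memory relations in the world; your expansion supplies exactly the intended details, including the reliance on Lemma~\ref{lem:oplus-future-distr} that the paper itself acknowledges in its proof of Lemma~\ref{lem:monotonicity}. The arity issue you flag for Lemma~\ref{lem:oplus-future-distr} is real but is equally glossed over by the paper, so it is not a gap relative to the paper's own argument.
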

\begin{proof}
  Follows by inspecting the definition of $\lrheap$ and $\purePart{}$ and using the monotonicity of memory relations in the world.
\end{proof}

\begin{lemma}
  \label{lem:stdreg-singleton-addr-strat}
  $\stdreg{\{*\},\gc}{v}$ is address-stratified.
\end{lemma}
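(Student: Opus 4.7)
The plan is to unfold the definition of address-stratification (Definition~\ref{def:address-stratified}) and specialise it to the singleton case $A = \{*\}$. Concretely, I would fix arbitrary $n$, $\ms_S, \ms_T, \ms_S', \ms_T', \hat{W}$ satisfying
\[
  \npair{(\ms_S,\ms_T)}, \npair{(\ms_S',\ms_T')} \in H_{\{*\}}^{\mathrm{std},\square}\;\gc\;\hat{W}
  \quad\text{and}\quad
  \dom(\ms_S)=\dom(\ms_T)=\dom(\ms_S')=\dom(\ms_T'),
\]
and show that for every $\aaddr$ in this common domain, $(\ms_S\update{\aaddr}{\ms_S'(\aaddr)}, \ms_T\update{\aaddr}{\ms_T'(\aaddr)}) \in H_{\{*\}}^{\mathrm{std},\square}\;\gc\;\hat{W}$.

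The key observation is that the definition of $H_{A}^{\mathrm{std},\square}$ forces $\dom(\ms_S)=\dom(\ms_T)=A$, so in our case this common domain is exactly $\{*\}$, and the only $\aaddr$ to consider is $\aaddr = *$. For a memory segment with domain $\{*\}$, the update $\ms_S\update{*}{\ms_S'(*)}$ is pointwise equal to $\ms_S'$, and similarly on the target side. Therefore the required membership reduces to $\npair{(\ms_S',\ms_T')} \in H_{\{*\}}^{\mathrm{std},\square}\;\gc\;\hat{W}$, which is one of our assumptions.

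There is essentially no obstacle here: the point of the lemma is exactly that address-stratification becomes vacuous for singleton regions, since ``updating'' the single address of $\ms_S$ with the value of $\ms_S'$ at that same address produces $\ms_S'$ itself. The only care needed is to make sure the witness $S : \{*\} \fun \World$ with $\xi(\hat{W}) = S(*)$ that came with $\npair{(\ms_S',\ms_T')}$ is reused (unchanged) for the updated pair, so that the existential in $H_{\{*\}}^{\mathrm{std},\square}$ is discharged by the same $S$ and the required $\npair{(\ms_S'(*),\ms_T'(*))} \in \lrv(S(*))$ is exactly what we already have.
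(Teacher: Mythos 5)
Your proposal is correct and matches the paper, which simply declares the lemma ``Trivial'': for a singleton domain $\{*\}$ the only possible update replaces the sole address with $\ms_S'(*)$, yielding $\ms_S'$ itself, so membership in $H_{\{*\}}^{\mathrm{std},\square}\;\gc\;\hat{W}$ follows from the assumption on $(\ms_S',\ms_T')$ with the same witness $S$. Nothing further is needed.
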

\begin{proof}
  Trivial.
\end{proof}

\begin{lemma}[Unique return seals]
  \label{lem:unique-ret-seals}
  If
  \begin{itemize}
  \item $\mscode([\aaddr..\aaddr+\calllen-1]) = \overline{\scall{\offpc,\offsigma}{r_1}{r_2}}$
  \item $\mscode([\aaddr'..\aaddr'+\calllen-1]) = \overline{\scall{\offpc',\offsigma'}{r_1'}{r_2'}}$
  \item $\mscode(\aaddr + \offpc) = \seal{\sigma_b,\sigma_e,\sigma_b}$ and $\sigma = \sigma_b + \offsigma$
  \item $\mscode(\aaddr' + \offpc') = \seal{\sigma_b',\sigma_e',\sigma_b'}$ and $\sigma = \sigma_b' + \offsigma'$
  \item $\sigrets,\sigcloss \vdash_{\mathrm{comp-code}} \mscode$
  \end{itemize}
  then
  \[
    \aaddr = \aaddr'
  \]
  and $\offpc = \offpc'$ and $\offsigma = \offsigma'$ and $r_1 = r_1'$ and $r_2 = r_2'$ and $\sigma_b = \sigma_b'$ and $\sigma_e = \sigma_e'$.
\end{lemma}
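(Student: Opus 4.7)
The strategy is to exploit the disjointness partition of return seals that is built into the $\vdash_{\mathrm{comp-code}}$ judgment. That judgment forces each address $a \in \dom(\mscode)$ to own a distinct slice $d_\sigma(a)$ of $\sigrets$, with $\sigrets = \biguplus_{a} d_\sigma(a)$; so if two call sites both produce the same return seal $\sigma$, their starting addresses must coincide. Once the addresses coincide, injectivity of $\encInstr{}$ on non-$\tfail$ instructions collapses the remaining data.

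I would proceed in three short steps. First, unfold $\sigrets,\sigcloss \vdash_{\mathrm{comp-code}} \mscode$ to extract $d_\sigma : \dom(\mscode) \to \powerset{\Seal}$ with the disjoint-union property above, and per-address derivations $\sigrets,d_\sigma(a),\sigcloss,\ta \vdash_{\mathrm{comp-code}} \mscode,a$. Second, instantiate the per-address rule at $\aaddr$: since $\mscode(\aaddr) = \scall[0]{\offpc,\offsigma}{r_1}{r_2} \in \ints$ and the call-pattern antecedent holds (using that both call sites lie in the trusted code block, see the obstacle below), the conclusion fires and, combined with the hypothesis $\mscode(\aaddr+\offpc) = \seal{\sigma_b,\sigma_e,\sigma_b}$ and functionality of $\mscode$, yields $\sigma = \sigma_b + \offsigma \in d_\sigma(\aaddr)$. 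The symmetric instantiation at $\aaddr'$ gives $\sigma \in d_\sigma(\aaddr')$; disjointness of the big union forces $\aaddr = \aaddr'$. Third, with $\aaddr = \aaddr'$ the two patterns $\overline{\scall{\offpc,\offsigma}{r_1}{r_2}}$ and $\overline{\scall{\offpc',\offsigma'}{r_1'}{r_2'}}$ occupy the same $\calllen$ cells of $\mscode$; by injectivity of $\encInstr{}$ on each $\scall[i]{\ldots}$ alias this forces $\offpc = \offpc'$, $\offsigma = \offsigma'$, $r_1 = r_1'$, $r_2 = r_2'$. Plugging back into the seal hypotheses, we then read off $\sigma_b = \sigma_b'$ and $\sigma_e = \sigma_e'$.

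The main subtlety, rather than a deep obstacle, is that the per-address rule's conclusion is guarded by $[\aaddr,\aaddr+\calllen-1] \subseteq \ta$, which is not stated among the hypotheses of the lemma. At every intended use site (the $\rretc$ case in the FTLR and Lemma~\ref{lem:ftlr-internal-lemma}) we invoke this lemma with an $\mscode$ coming from a $\codereg{\sigrets,\sigcloss,\code,\gc}$ region whose $H^\mathrm{code}$ branch with $\sigrets \neq \emptyset$ forces $\dom(\mscode) \subseteq \ta$, so the guard is discharged for free; since we have $\sigma \in \sigrets$ here, $\sigrets \neq \emptyset$ and the untrusted branch of $H^\mathrm{code}$ is excluded. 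I would either strengthen the statement with the explicit hypothesis $[\aaddr,\aaddr+\calllen-1],[\aaddr',\aaddr'+\calllen-1] \subseteq \ta$, or keep it implicit and discharge it at each call site as above.
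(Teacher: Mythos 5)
Your proof is correct and follows essentially the same route as the paper's: extract the seal-ownership partition $d_\sigma$ from the $\vdash_{\mathrm{comp-code}}$ judgement, place $\sigma$ in both $d_\sigma(\aaddr)$ and $d_\sigma(\aaddr')$, conclude $\aaddr = \aaddr'$ from disjointness, and read off the remaining equalities from the now-coinciding call patterns and seal cells. Your observation about the $[\aaddr,\aaddr+\calllen-1]\subseteq\ta$ guard on the per-address rule is a fair catch that the paper's own proof silently elides, and your proposed discharge of it (via $\sigma\in\sigrets\neq\emptyset$ forcing the trusted branch of $H^{\mathrm{code}}$ at every use site) is sound.
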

\begin{proof}
  From $\sigrets,\sigcloss \vdash_{\mathrm{comp-code}} \mscode$,
  we get a $d_\sigma : \dom(\mscode) \rightarrow \powerset{\Seal}$ such that the following holds:
  \begin{itemize}
  \item $\mscode$ has no hidden calls
  \item $\sigrets \mathrel{\#} \sigcloss$
  \item $\sigrets = \biguplus_{a \in \dom(\mscode)} d_\sigma(a)$
  \item $\forall a \in \dom(\mscode) \ldotp \sigrets,d_\sigma(a),\sigcloss \vdash_{\mathrm{comp-code}} \mscode,a$
  \item $\exists a \ldotp \mscode(a) = \seal{\sigma_\baddr,\sigma_\eaddr,\_} \wedge [\sigma_\baddr,\sigma_\eaddr] \neq \emptyset$
  \end{itemize}

  Particularly, we have $\sigrets,d_\sigma(a),\sigcloss \vdash_{\mathrm{comp-code}} \mscode,a$ and $\sigrets,d_\sigma(a'),\sigcloss \vdash_{\mathrm{comp-code}} \mscode,a'$.
  From these, and the facts that $\mscode([\aaddr..\aaddr+\calllen-1]) = \overline{\scall{\offpc,\offsigma}{r_1}{r_2}}$ and $\mscode([\aaddr'..\aaddr'+\calllen-1]) = \overline{\scall{\offpc',\offsigma'}{r_1'}{r_2'}}$, it follows that
  \begin{itemize}
  \item $\mscode(a+\offpc) = \seal{\sigma_\baddr,\sigma_\eaddr,\sigma_\baddr}$ and $\sigma_\baddr+\offsigma \in d_\sigma(a)$
  \item $\mscode(a'+\offpc') = \seal{\sigma_\baddr',\sigma_\eaddr',\sigma_\baddr'}$ and $ \sigma_\baddr'+\offsigma' \in d_\sigma(a')$
  \end{itemize}
  From the facts that $\sigma = \sigma_b + \offsigma = \sigma_b' + \offsigma'$, $\sigma_\baddr+\offsigma \in d_\sigma(a)$, $ \sigma_\baddr'+\offsigma' \in d_\sigma(a')$ and the disjointness of the $d_\sigma(a)$, we get that $a = a'$.
  With this fact, the rest of the proof obligations follow directly from the other assumed equations.
\end{proof}

\begin{lemma}
  \label{lem:lrv-relates-linearity}
  For all $W$, $n$, $w_1$, $w_2$ if
  \begin{itemize}
  \item $\npair{(w_1,w_2)} \in \lrvg{\trust}(W)$
  \end{itemize}
  then
  \[
    \isLinear{w_1} \text{ iff } \isLinear{w_2}
  \]
\end{lemma}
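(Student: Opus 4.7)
The plan is to prove this by case analysis on the clauses in the definition of $\lrvg{\trust}(W)$, showing that each clause matches linearities on the two sides. Since $\lrvg{\trust}$ is defined as the union of clauses (and $\lrvtrusted$ extends $\lrv$ with extra clauses), it suffices to check every clause individually.

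First I would dispatch the trivial cases. For $\npair{(i,i)}$ with $i \in \ints$, both sides are integers, hence both satisfy $\nonLinear{\_}$ by definition of $\isLinear{\_}$. For $\npair{(\seal{\sigma_\baddr,\sigma_\eaddr,\sigma},\seal{\sigma_\baddr,\sigma_\eaddr,\sigma})}$ (both in $\lrv$ and the extra trusted clause), the two sides are syntactically identical, and seals are non-linear. For $\npair{(((\perm,\lin),\baddr,\eaddr,\aaddr), ((\perm,\lin),\baddr,\eaddr,\aaddr))}$ in the regular-capabilities clause, the linearity tag $\lin$ is shared, so $\isLinear{\_}$ returns the same boolean on both sides; the same holds for the trusted $((\perm,\normal),\baddr,\eaddr,\aaddr)$ clause where both sides are marked $\normal$.

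Next come the two cases requiring a tiny bit of work. For the stack-pointer clause $\npair{(\stkptr{\perm,\baddr,\eaddr,\aaddr}, ((\perm,\linear),\baddr,\eaddr,\aaddr))}$, the definition of $\isLinear{\_}$ explicitly makes $\stkptr{\_,\_,\_,\_}$ linear on the source side, while the target capability has linearity tag $\linear$; so both are linear. For the sealed-capability clause $\npair{(\sealed{\sigma,\vsc_S},\sealed{\sigma,\vsc_T})}$, the very first conjunct inside the set-comprehension already states $\isLinear{\vsc_S}$ iff $\isLinear{\vsc_T}$. Since $\isLinear{\sealed{\sigma,\vsc}}$ unfolds (by its definition) to $\isLinear{\vsc}$, the desired equivalence on the whole sealed values is immediate.

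There is no substantive obstacle here; the statement essentially holds by construction of $\lrvg{\trust}$, which was designed to match linearities. The only thing to be careful about is not missing a clause: I would enumerate all clauses from the $\lrv$ definition and all the additional clauses from the $\lrvtrusted$ extension, verifying the linearity agreement in each. The stack-pointer and sealed-capability cases are the ones worth writing out explicitly, because they are the only clauses where the source and target words are not syntactically equal; everywhere else equality of the two components makes the conclusion trivial.
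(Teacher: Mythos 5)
Your proposal is correct and follows essentially the same route as the paper: a case analysis over the clauses of $\lrvg{\trust}$, using syntactic equality of the two components in the easy cases, the explicit $\isLinear{\vsc_S} \text{ iff } \isLinear{\vsc_T}$ conjunct in the sealed-capability clause, and the fact that $\stkptr{\_,\_,\_,\_}$ is linear by definition of $\isLinear{}$ in the stack-pointer clause.
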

\begin{proof}
  (Trivial) We consider the possible cases for $w_1$ and $w_2$
  \begin{itemize}
  \item Case $w_1,w_2 \in \ints$: By definition of $\isLinear{}$.
  \item Case $w_1 = \sealed{\sigma,\vsc_1}$ and $w_2 = \sealed{\sigma,\vsc_2}$:
    From the assumption $\npair[n]{(w_1,w_2)} \in \lrv(W)$ we get the desired result.
  \item Case $w_1 = \seal{\_,\_,\_}$ and $w_2 = \seal{\_,\_,\_}$: By definition of $\isLinear{}$.
  \item Case $w_1 = \stkptr{\_,\_,\_,\_}$ and $w_2 = ((\_,\linear),\_,\_,\_)$: By definition of $\isLinear{}$, stack pointers are linear.
  \item Case $w_1 = ((\_,\lin),\_,\_,\_)$ and $w_2 = ((\_,\lin),\_,\_,\_)$: By definition of $\isLinear{}$.
  \end{itemize}
  If $\trust = \trusted$, then there are two more cases to consider, but like the above cases, they are trivial.
\end{proof}

\begin{lemma}
  \label{lem:lincons-lrv}
  For all $W' \future W$, $n$, $w_1$, $w_2$
   if
    \begin{itemize}
    \item $\isLinear{w_1}$ or $\isLinear{w_2}$
    \end{itemize} or
    \begin{itemize}
    \item $\npair{(w_1,w_2)} \in \lrv(W)$
    \item $\nonLinear{w_1}$ or $\nonLinear{w_2}$
    \end{itemize}
    then
    \[
      \npair{(\linCons{w_1},\linCons{w_2})} \in \lrv(\purePart{W'})
    \]
\end{lemma}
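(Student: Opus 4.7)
The plan is to do a case split on the two alternative preconditions of the lemma.

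For the second alternative, where we know $\npair{(w_1,w_2)} \in \lrv(W)$ and $\nonLinear{w_1}$ or $\nonLinear{w_2}$, I would first invoke Lemma~\ref{lem:lrv-relates-linearity} to promote the disjunction ``$\nonLinear{w_1}$ or $\nonLinear{w_2}$'' to the conjunction ``$\nonLinear{w_1}$ and $\nonLinear{w_2}$''. By the definition of $\linCons{\cdot}$, this gives $\linCons{w_i} = w_i$ for $i=1,2$, so it suffices to show $\npair{(w_1,w_2)} \in \lrv(\purePart{W'})$. Now Lemma~\ref{lem:non-linear-pure} moves the relation into the pure world, yielding $\npair{(w_1,w_2)} \in \lrv(\purePart{W})$, and finally Lemma~\ref{lem:purePart-mono} combined with Lemma~\ref{lem:monotonicity} transports this along $\purePart{W'} \future \purePart{W}$ to obtain the required conclusion.

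For the first alternative, where $\isLinear{w_1}$ or $\isLinear{w_2}$, the target reduces by the definition of $\linCons{\cdot}$ to showing that $(0, \linCons{w_2}) \in \lrv(\purePart{W'})$ (or symmetrically), which forces us to argue that both values are in fact linear so that both $\linCons{\cdot}$ projections collapse to $0$. The intended reading, consistent with the lemma's use-sites in the FTLR (for instance the handling of $r_1, r_2$ after $\scall{}{}{}$), is that either we already have a companion $\lrv$ hypothesis or one should be read implicitly, so that Lemma~\ref{lem:lrv-relates-linearity} upgrades the disjunction to the conjunction $\isLinear{w_1} \wedge \isLinear{w_2}$. From that point, $\linCons{w_1} = \linCons{w_2} = 0$ and the conclusion $\npair{(0,0)} \in \lrv(\purePart{W'})$ is immediate from the integer clause of $\lrv$.

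The main obstacle, as the previous paragraph already suggests, is the slightly ambiguous phrasing of the first precondition: taken literally (without any companion $\lrv$ assumption) the lemma would fail whenever one side is linear and the other is not, since $\linCons{\cdot}$ would collapse only one of the two. The remedy is a clarifying reading (matching all actual uses of the lemma in the body of the document) in which both values are guaranteed to share linearity, after which both cases become one-line invocations of the supporting lemmas. Aside from this interpretive point, no new ideas beyond Lemmas~\ref{lem:lrv-relates-linearity},~\ref{lem:non-linear-pure},~\ref{lem:purePart-mono} and~\ref{lem:monotonicity} are needed.
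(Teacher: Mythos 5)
Your proof takes essentially the same route as the paper's: in the non-linear case the paper also upgrades the disjunction via Lemma~\ref{lem:lrv-relates-linearity} and then concludes by Lemmas~\ref{lem:non-linear-pure} and~\ref{lem:monotonicity}, and in the linear case it likewise reduces to $\npair{(0,0)} \in \lrv(\purePart{W'})$. Your observation about the first alternative is accurate but does not separate you from the paper: the paper's own proof also applies Lemma~\ref{lem:lrv-relates-linearity} in that case, so it implicitly assumes the same companion $\npair{(w_1,w_2)} \in \lrvg{\trust}(W)$ hypothesis that you read into the statement.
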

\begin{proof}
  For the first set of assumptions, we can conclude
  $\isLinear{w_1}$ and $\isLinear{w_2}$ by Lemma~\ref{lem:lrv-relates-linearity} which means that we need to argue
  \[
    \npair{(0,0)} \in \lrv(\purePart{W'})
  \]
  which is trivially true.

  For the second set of assumptions, we can conclude $\nonLinear{w_1}$ and
  $\nonLinear{w_2}$ by Lemma~\ref{lem:lrv-relates-linearity} which means that we
  need to show $\npair{(w_1,w_2)} \in \lrv(\purePart{W'})$ which is true by
  assumption and Lemma~\ref{lem:monotonicity} and \ref{lem:non-linear-pure}.
\end{proof}

\begin{lemma}
  \label{lem:safe-cap-exec-is-normal}
  If
  \begin{itemize}
  \item $c = ((\perm,\lin),\baddr,\eaddr,\aaddr)$
  \item $\perm = \{\rx,\rwx\}$
  \item $\npair{(c,\_)} \in \lrv{\trust}(W)$
  \end{itemize}
  then
  \[
    \lin = \normal
  \]
  
\end{lemma}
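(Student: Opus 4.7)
The plan is to do a straightforward case analysis on the definition of $\lrvg{\trust}(W)$ for the capability case, since the required conclusion is explicit in every relevant clause. I would first unfold the assumption $\npair{(c,\_)} \in \lrvg{\trust}(W)$ and split into the two possibilities: either $(c,\_) \in \lrv(W)$, or $\trust = \trusted$ and $(c,\_)$ lies in one of the extra clauses of $\lrvtrusted(W)$ that are not already in $\lrv(W)$.

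In the first sub-case, $c = ((\perm,\lin),\baddr,\eaddr,\aaddr)$ forces us into the capability clause of $\lrv(W)$. That clause explicitly requires $\perm \neq \rwx$, which rules out $\perm = \rwx$ immediately. Together with the hypothesis $\perm \in \{\rx,\rwx\}$, we conclude $\perm = \rx$. But the same clause also requires, in its $\perm = \rx$ sub-condition, that $\lin = \normal$. So the result is immediate.

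In the second sub-case, the only clause of $\lrvtrusted(W)$ that can match a capability of the form $((\perm,\lin),\baddr,\eaddr,\aaddr)$ (rather than a sealing capability) is the trusted code-pointer clause, and that clause is stated only for values of the shape $((\perm,\normal),\baddr,\eaddr,\aaddr)$ with $\perm \sqsubseteq \rx$. Since our $c$ matches this pattern (with $\perm \in \{\rx,\rwx\}$ and, by $\perm \sqsubseteq \rx$, actually $\perm = \rx$), the linearity component is syntactically $\normal$ by the pattern itself.

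I do not expect any real obstacle: the argument is entirely by inspection of the definition of $\lrv$ and $\lrvtrusted$. The only subtlety is making sure to exhaust all the clauses of $\lrvtrusted$ that could apply to a plain memory capability (as opposed to sealed, sealing, or stack-pointer forms), and observing that each such clause hard-codes $\normal$ in the linearity slot.
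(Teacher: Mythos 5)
Your proposal is correct and matches the paper's proof, which simply says the result follows from the definition: the untrusted capability clause of $\lrv$ excludes $\rwx$ and forces $\lin = \normal$ when $\perm = \rx$, and the extra trusted clause for plain capabilities hard-codes $\normal$ in the linearity slot. Your case split is exactly the intended unfolding, just spelled out in more detail than the paper bothers to.
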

\begin{proof}
  Follows from the definition.
\end{proof}

\begin{lemma}
  \label{lem:untrusted-sealed-capability}
  If
  \begin{itemize}
  \item $c_c = ((\rx,\normal),\baddr,\eaddr,\aaddr)$
  \item $\aaddr \in [\baddr,\eaddr]$
  \item $\npair{[\baddr,\eaddr]} \in \execCond{W}$
  \item $\nonExec{c_d}$
  \item $\npair{(c_d,c_d')} \in \lrvg{\untrusted}(W_o)$
  \item $\npair{(\reg_S,\reg_T)} \in \lrrg{\untrusted}(\{\rdata\})(W_R)$
  \item $\memSat{\ms_S,\stk,\ms_\stk,\ms_T}{W_M}$
  \item $\Phi_S = (\ms_S,\reg_S\update{\pcreg}{c_c}\update{\rdata}{c_d},\stk,\ms_\stk)$
  \item $\Phi_T = (\ms_T,\reg_T\update{\pcreg}{c_c'}\update{\rdata}{c_d'})$
  \item $W' \future \purePart{W}$
  \item $W' \oplus W_o \oplus W_R \oplus W_M$ is defined
  \end{itemize}
  then
  \[
    \npair[n-1]{(\Phi_S,\Phi_T)} \in \lro
  \]
\end{lemma}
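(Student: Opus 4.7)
The plan is to reduce the goal to the execute condition that we already have for $[\baddr,\eaddr]$. Concretely, I would first apply $\npair{[\baddr,\eaddr]} \in \execCond{W}$ with the choices $n-1 < n$, $W' \future \purePart{W}$ (from the hypothesis), and $\aaddr \in [\baddr,\eaddr] \subseteq [\baddr,\eaddr]$. By the definition of $\execCond{}$ this gives
\[
  \npair[n-1]{(c_c,c_c)} \in \lre(W')\text{,}
\]
where I read $c_c'$ in the statement as $c_c$ (this is how the lemma is used from case~\ref{case:ftlr:move-cca-etc:cseal}, where the source and target code pointers agree by the untrusted value relation).

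Next I would unfold the definition of $\lre(W')$, which reduces the goal to exhibiting $W_{\lrrs}$ and $W_{\lrm}$ together with the three side conditions
\[
  \npair[n-1]{(\reg_S\update{\rdata}{c_d},\reg_T\update{\rdata}{c_d'})} \in \lrr(W_{\lrrs})\text{,}\quad
  \memSat[n-1]{\ms_S,\stk,\ms_\stk,\ms_T}{W_{\lrm}}\text{,}\quad
  W' \oplus W_{\lrrs} \oplus W_{\lrm}\ \text{defined.}
\]
I would take $W_{\lrrs} \defeq W_R \oplus W_o$ and $W_{\lrm} \defeq W_M$. The memory satisfaction side condition is immediate from the hypothesis by Lemma~\ref{lem:downwards-closed}, and definedness of $W' \oplus (W_R \oplus W_o) \oplus W_M$ is just a reassociation of the assumed definedness of $W' \oplus W_o \oplus W_R \oplus W_M$ via Lemma~\ref{lem:oplus-assoc-comm}. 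Applying $\lre$ then yields $\npair[n-1]{(\Phi_S,\Phi_T)} \in \lro$ directly, since $\Phi_S$ and $\Phi_T$ are precisely the configurations obtained by setting $\pcreg$ to $c_c$ in a register file whose $\rdata$ slot is $c_d$ (resp.\ $c_d'$).

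The main obstacle is the register file combination: from $\npair{(\reg_S,\reg_T)} \in \lrrg{\untrusted}(\{\rdata\})(W_R)$ I get a partition $S : \RegName \setminus \{\pcreg,\rdata\} \to \World$ witnessing $W_R = \bigoplus_r S(r)$ with per-register safety in $\lrv(S(r))$, and I need to extend this to a partition of $W_R \oplus W_o$ over $\RegName \setminus \{\pcreg\}$ whose $\rdata$-component is $W_o$. The extension $S'(r) \defeq S(r)$ for $r \neq \rdata$ and $S'(\rdata) \defeq W_o$ works, but checking this needs Lemma~\ref{lem:downwards-closed} to step $n$ down to $n-1$, and one has to observe that $\npair{(c_d,c_d')} \in \lrvg{\untrusted}(W_o) = \lrv(W_o)$ to land in the untrusted value relation required by $\lrr$. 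With these routine manipulations in place the conclusion follows.
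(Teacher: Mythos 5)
Your proposal is correct and follows essentially the same route as the paper's proof: instantiate $\execCond{W}$ at $n-1$ and $W'$ to obtain $\npair[n-1]{(c_c,c_c)} \in \lre(W')$, extend the register-file relation from $\lrrg{\untrusted}(\{\rdata\})(W_R)$ to $\lrr(W_R \oplus W_o)$ by placing $c_d$/$c_d'$ in $\rdata$ with world $W_o$, and discharge the remaining $\lre$ premises with the given memory satisfaction and a reassociation of the world sum. Your reading of $c_c'$ as $c_c$ matches how the lemma is actually invoked, and the routine appeals to Lemmas~\ref{lem:downwards-closed} and~\ref{lem:oplus-assoc-comm} are exactly the glue the paper leaves implicit.
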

\begin{proof}
  From Lemma~\ref{lem:monotonicity} and Lemma~\ref{lem:non-linear-pure} we get
  \[
    \npair{[\baddr,\eaddr]} \in \execCond{W}
  \]
  from which we get
  \[
    \npair[n-1]{(c_c,c_c')} \in \lre(W')
  \]
  From $\npair{(\reg_S,\reg_T)} \in \lrrg{\untrusted}(\{\rdata\})(W_R)$, $\npair{(c_d,c_d')} \in \lrvg{\untrusted}(W_o)$, and $W' \oplus W_o \oplus W_R \oplus W_M$ is defined, we conclude
  \[
    \npair{(\reg_S\update{\rdata}{c_d},\reg_T\update{\rdata}{c_d})} \in \lrrg{\untrusted}(W_R \oplus W_o)
  \]
  along with $\memSat{\ms_S,\stk,\ms_\stk,\ms_T}{W_M}$ and Lemma~\ref{lem:non-expansiveness} we can now conclude
  \[
    \npair[n-1]{(\Phi_S,\Phi_T)} \in \lro
  \]
\end{proof}

\section{Notes}
\subsection{Notes on linear capabilities}
It seems reasonable to have enough instructions to let any program be able to make sufficient checks that it can verify that its execution won't fail.
With our current instruction set, a load may fail if a linear capability happens to be located at a memory address that on attempts to load from with a capability without write permission.
To make up for this, one could make an instruction that checks the linearity of a capability in memory without loading it.
It may not be practical to make such an instruction if linearity is kept track as a field on each capability, but it may be tractable if linearity tags are kept track of in a table.

\subsection{Calling convention design decisions}
\subsubsection{Returning the full stack}
When a callee return from a call, they must return all of the stack they were passed. If we omit this requirement, then we cannot guarantee well-bracketedness. The following is an example of circumventing well-bracketedness by keeping part of the stack:

\begin{itemize}
\item An adversary calls our trusted code with a call-back. 
\item Our code uses part of the stack and calls the callback with the rest of the stack.
\item The adversary splits that stack in two and saves the part of the stack adjacent to our stack in some persistent memory. The adversary calls us anew with a callback and the part of the stack they did not save. 
\item We use part of the stack we receive and call the callback with the rest of the stack. The adversary can now use the saved stack to return from the first call to the callback breaking well-bracketedness. The reason this is possible is because we do not check the size of the stack.
\end{itemize}
Figure~\ref{fig:ret-full-stk} illustrates the above example. In the figure, $f$ is the function to ``is'' and $g$ is the callback passed to ``us''. In the end, the part of the stack marked \emph{kept by adv} can be used to return from the first call-back (the call with the *). The stack that is used to return lines up with $\var{Us}_\priv$, so the return is successful.
\begin{figure}
  \centering
  \includegraphics[angle=90,width=\textwidth]{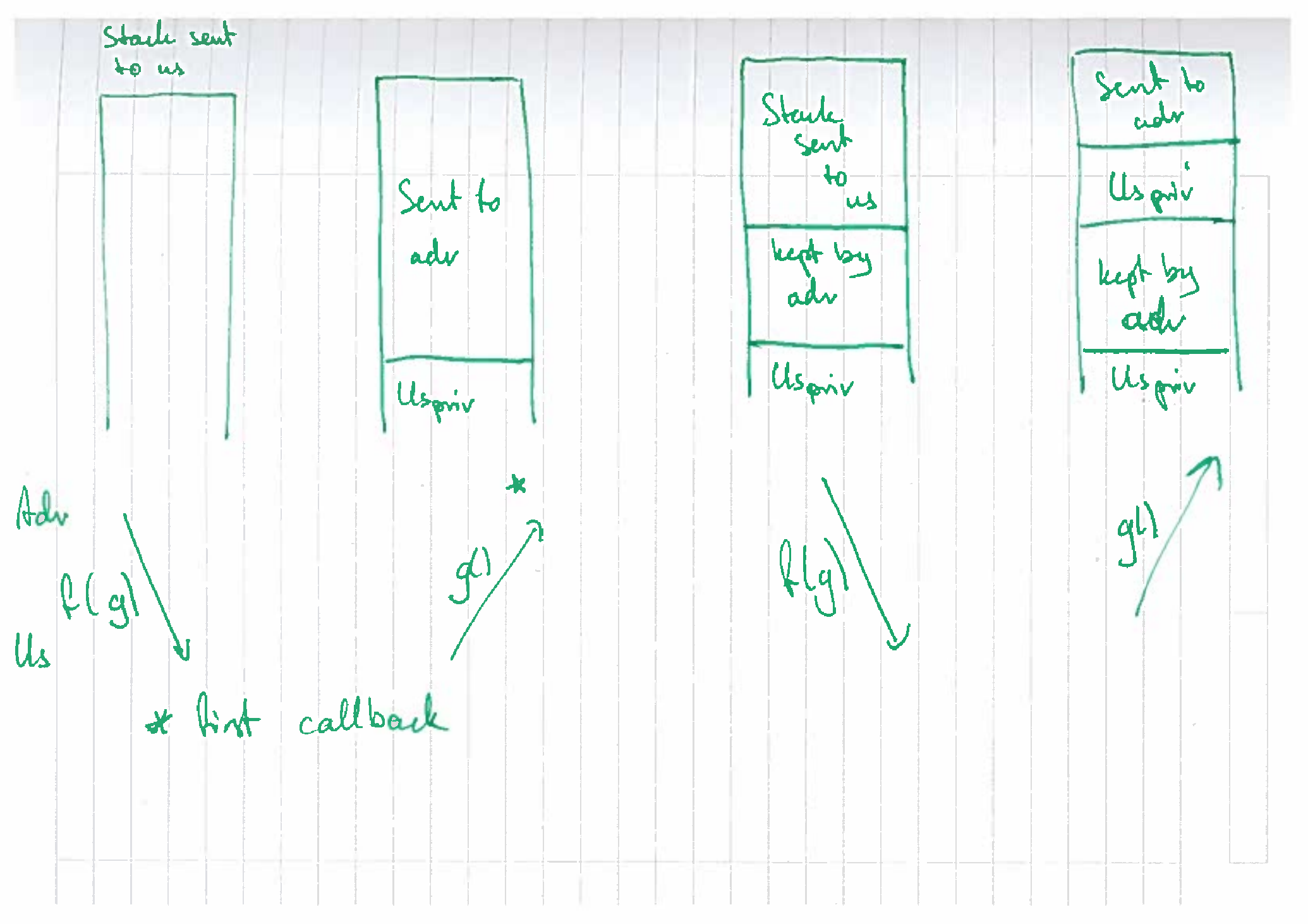}
  \caption{Illustration of the stack in the example that illustrates why the entire stack must be returned.}
  \label{fig:ret-full-stk}
\end{figure}

\subsubsection{Restriction on stack allocation}
We need to somehow make sure that it is always the same stack that is used. If we don't, then an adversary can simply split the stack in two, use one part for one call and the other for another call. At this point, they can return to either of the two calls - in other words, well-bracketedness is not enforced.

\begin{itemize}
  \item An adversary starts the execution. They split the stack in two and call us with one part (say the top part) along with a callback.
  \item We use part of the stack and call the adversary with the rest of the stack.
  \item The adversary calls us again this time using the other part of the stack (here the bottom part of the stack).
  \item Again, we use part of the stack and call the adversary with the rest of this part of the stack.
\end{itemize}
At this point, the adversary can return from either of the two calls. Swapping around the order in which the adversary uses the two parts of the stack changes nothing.

The example is illustrated in Figure~\ref{fig:stk-alloc}.

One way to solve this problem is to make the "top address" of the stack known. There are many ways to do this, but we have chosen to do the following:
\begin{itemize}
\item The stack grows downwards, so the ``last address'' of the stack is the base address of the initial stack capability.
\item The base address of the stack capability is a fixed address, so in the semantics, it will be expressed as a constant that is publicly known.
\item At some point before a call, it must be checked whether the stack we are using actually has the globally known base address (if not we must fail because we cannot trust this stack).
\item To ensure that the check is made, we require it in the semantic condition (at this moment of time not defined, so we are yet to see what it looks like).
\end{itemize}

\begin{figure}
  \centering
  \includegraphics[angle=270,trim={0 0 5cm 0},width=\textwidth]{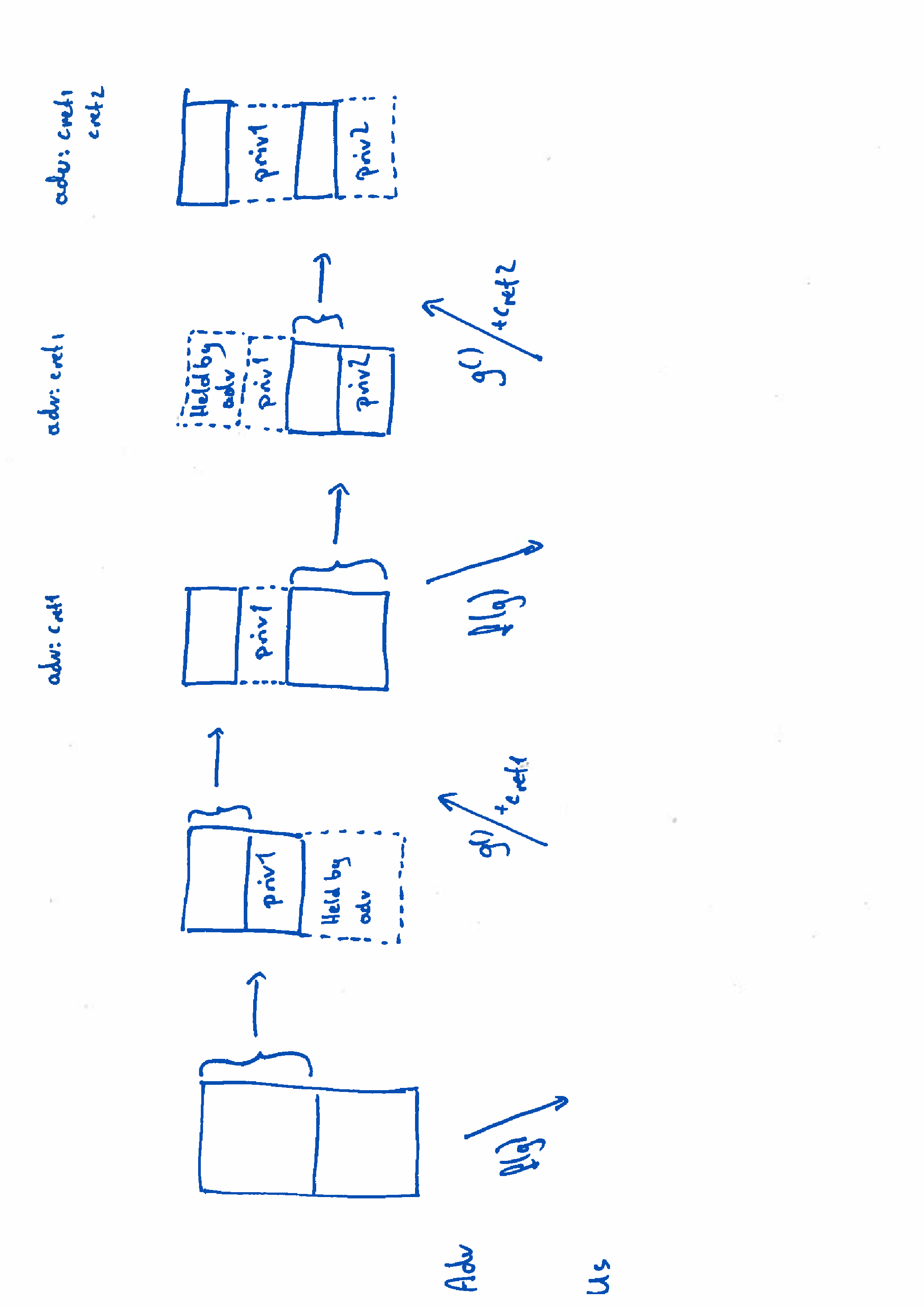}
  \caption{Illustration of the potential stack allocation issue.}
  \label{fig:stk-alloc}
\end{figure}

\section{Related Work}
\subsection{Conditional Full-Abstraction}
The idea of conditional full-abstraction was used by \citet{Juglaret2016} to define full abstraction for unsafe languages. Their definition requires both the programs and the context to be fully defined (i.e.\ not cause undefined behavior). If the programs are not required to be fully-defined, then anything can happen which makes it impossible to reason about.. In our work, undefined behavior marks cases that we do not want to consider because they should be excluded further up in the compilation chain. Further, if we have to take these cases into account, then we need to add checks which protects the trusted code against itself, but properly compiled code should not have to protect itself against itself.
\lau{03-10-2017: We can adjust this when we have actually done something.}

Update: followup paper to the above presented at PriSC 2018, perhaps published elsewhere?
\url{https://popl18.sigplan.org/event/prisc-2018-formally-secure-compilation-of-unsafe-low-level-components}

\bibliography{references}
\end{document}